
\documentclass{eethesis}
\usepackage{graphicx}
\usepackage[stable,symbol]{footmisc}
\usepackage{rotating}
\usepackage{algorithm}
\usepackage{algorithmic}
%
%

\usepackage[matrix,frame,arrow]{xy}
\usepackage{amsmath}
\newcommand{\bra}[1]{\left\langle{#1}\right\vert}
\newcommand{\ket}[1]{\left\vert{#1}\right\rangle}
\newcommand{\qw}[1][-1]{\ar @{-} [0,#1]}
\newcommand{\qwx}[1][-1]{\ar @{-} [#1,0]}
\newcommand{\cw}[1][-1]{\ar @{=} [0,#1]}
\newcommand{\cwx}[1][-1]{\ar @{=} [#1,0]}
\newcommand{\gate}[1]{*{\xy *+<.6em>{#1};p\save+LU;+RU **\dir{-}\restore\save+RU;+RD **\dir{-}\restore\save+RD;+LD **\dir{-}\restore\POS+LD;+LU **\dir{-}\endxy} \qw}
\newcommand{\meter}{\gate{\xy *!<0em,1.1em>h\cir<1.1em>{ur_dr},!U-<0em,.4em>;p+<.5em,.9em> **h\dir{-} \POS <-.6em,.4em> *{},<.6em,-.4em> *{} \endxy}}





\newcommand{\control}{*!<0em,.025em>-=-{\bullet}}

\newcommand{\ctrl}[1]{\control \qwx[#1] \qw}

\newcommand{\targ}{*!<0em,.019em>=<.79em,.68em>{\xy {<0em,0em>*{} \ar @{ - } +<.4em,0em> \ar @{ - } -<.4em,0em> \ar @{ - } +<0em,.36em> \ar @{ - } -<0em,.36em>},<0em,-.019em>*+<.8em>\frm{o}\endxy} \qw}

\newcommand{\multigate}[2]{*+<1em,.9em>{\hphantom{#2}} \qw \POS[0,0].[#1,0];p !C *{#2},p \save+LU;+RU **\dir{-}\restore\save+RU;+RD **\dir{-}\restore\save+RD;+LD **\dir{-}\restore\save+LD;+LU **\dir{-}\restore}
\newcommand{\ghost}[1]{*+<1em,.9em>{\hphantom{#1}} \qw}
\newcommand{\push}[1]{*{#1}}
\newcommand{\gategroup}[6]{\POS"#1,#2"."#3,#2"."#1,#4"."#3,#4"!C*+<#5>\frm{#6}}
\newcommand{\rstick}[1]{*!L!<-.5em,0em>=<0em>{#1}}
\newcommand{\lstick}[1]{*!R!<.5em,0em>=<0em>{#1}}


\newcommand{\Qcircuit}[1][0em]{\xymatrix @*[o] @*=<#1>}



\newcommand{\nix}[1]{}
\newcommand{\mc}[1]{{\mathcal{#1}}}

\usepackage{amsmath,amssymb,amsthm,comment}
\usepackage[alwaysadjust]{paralist} 
\usepackage{times}

\usepackage{epsfig}
\usepackage{cite}
\usepackage{lscape}

\newcommand{\ontop}[2]{\genfrac{}{}{0pt}{}{#1}{#2}}

\DeclareMathOperator{\tr}{tr}
\DeclareMathOperator{\Tr}{Tr}
\DeclareMathOperator{\Fix}{Fix}
\DeclareMathOperator{\Stab}{Stab}
\DeclareMathOperator{\swt}{swt}
\DeclareMathOperator{\wt}{wt}
\DeclareMathOperator{\w}{w}

\DeclareMathOperator{\puncture}{P}
\DeclareMathOperator{\pc}{P}
\DeclareMathOperator{\B}{BCH}

\DeclareMathOperator{\image}{image}

\newcommand{\C}{\mathbb{C}}
\newcommand{\F}{\mathbb{F}}
\newcommand{\Z}{\mathbb{Z}}

\newcommand{\RM}{{\mathcal{R}}}

\newcommand{\idtwo}{\mathbf{1}_2}
\newcommand{\onemat}{\mathbf{1}}

\newcommand{\mbf}{\mathbf}
\newcommand{\ds}{\displaystyle}

\newcommand{\sdual}{{\bot_s}}
\newcommand{\adual}{{\bot_a}}
\newcommand{\hdual}{{\bot_h}}

\newcommand{\stacked}[2]{\begin{tabular}{c} #1\\[-1ex] \scriptsize #2\end{tabular}}

\def\srow#1 #2{(#1_1,\dots,#1_n|#2_1,\dots,#2_n)}
\def\<#1|#2>s{\langle #1|#2 \rangle_s}
\def\(#1|#2)a{\langle#1|#2\rangle_a}

\newtheorem{theorem}{Theorem}[chapter]
\newtheorem{lemma}[theorem]{Lemma}
\newtheorem{proposition}[theorem]{Proposition}%
\newtheorem{corollary}[theorem]{Corollary}

\newtheorem{fact}[theorem]{Fact}
\newtheorem{conjecture}[theorem]{Conjecture}

\newtheorem{example}[theorem]{Example}
\newtheorem{remark}[theorem]{Remark}

%

\DeclareMathOperator{\Irr}{Irr}
\DeclareMathOperator{\supp}{supp}

\newcommand{\ol}{\overline}

\newcommand{\one}{\mathbb{I}}

\newcommand{\scal}[2]{\langle #1\mid #2\rangle_s}
\newcommand{\ip}[2]{\langle #1\mid #2\rangle}


\newcommand{\E}{\mathcal{E}}
\newcommand{\Hi}{\mathcal{H}}
\newcommand{\X}{{X}}
\newcommand{\Y}{{Y}}

\newcommand{\R}{\mathcal{R}}
\newcommand{\bch}{\mathcal{BCH}}
\newcommand{\eg}{{\textup{EG}}}

\newcommand{\Rm}{{ \textup{GRM}}}
\newcommand{\floor}[1]{\left\lfloor {#1}\right\rfloor}
\newcommand{\ceil}[1]{\left\lceil {#1}\right\rceil}

\usepackage{amsmath,paralist,amsthm,comment,amssymb}

\newtheorem{defn}[theorem]{Definition}

\DeclareMathOperator{\Bh}{\mathcal{B(\Hi)}}

\newcommand{\ord}{{ {ord}}}









\makeindex

\begin{document}
\pagenumbering{roman}
\maketitlepage
{Quantum Stabilizer Codes and Beyond}   
{Pradeep Kiran Sarvepalli} 
{Doctor of Philosophy}                
{August 2008}
{Computer Science}         

\approvalone
{Quantum Stabilizer Codes and Beyond}   
{Pradeep Kiran Sarvepalli} 
{Andreas Klappenecker}
{Donald K. Friesen}
{Jennifer L. Welch}
{Scott L. Miller}
{Valerie E. Taylor}
{August 2008}

\absone
{Quantum Stabilizer Codes and Beyond}   
{August 2008}
{Pradeep Kiran Sarvepalli} 
{B.Tech., Indian Institute of Technology, Madras;\\
M.S., Texas A\&M University}  
{Dr. Andreas Klappenecker}
{
The importance of quantum error correction in paving the way to build a practical quantum computer is
no longer in doubt. Despite the large body of literature in quantum coding theory, many important
questions, especially those centering on the issue of ``good codes'' are unresolved. In this 
dissertation the dominant underlying theme is that of constructing good quantum codes. It 
approaches this problem from three rather different but not exclusive strategies. 
Broadly, its contribution to the theory of quantum error correction is threefold. 

Firstly, it extends the framework of an important class of quantum codes -- nonbinary stabilizer codes. It clarifies
the connections of stabilizer codes to classical codes over quadratic extension fields,
provides many new constructions of quantum codes, and  develops further the theory of optimal quantum codes and punctured quantum codes. In particular it provides many explicit constructions of stabilizer codes,
most notably it simplifies the criteria by which quantum BCH codes can be constructed from classical codes.

Secondly, it contributes to the theory of operator quantum error correcting codes also called as subsystem codes. These codes are expected to have efficient error recovery schemes than stabilizer codes. Prior to our work 
however, systematic methods to construct these codes were few and it was not clear how to fairly compare
them with other classes of quantum codes. 
This dissertation develops a framework for study and analysis of subsystem codes using character theoretic methods.
In particular, this work established a  close link between subsystem codes and classical codes and 
it became clear that the subsystem codes can be constructed from arbitrary classical codes. 

Thirdly, it seeks to exploit the knowledge of noise to design efficient quantum codes 
and considers more realistic channels than the commonly studied depolarizing channel. 
It gives systematic constructions of asymmetric quantum stabilizer
codes that exploit the asymmetry of errors in certain quantum channels. This approach is based
on a Calderbank-Shor-Steane construction that combines BCH and finite geometry LDPC codes. 

}

\nix{
The recent convergence of ideas in quantum mechanics and computation have led to a paradigm 
shift in the theory of computation. This has led to the development of some algorithms that are
exponentially faster on a quantum computer. Noise however, is as ubiquitous in the quantum 
realm as in the classical realm and quantum error correction seems indispensable to protect quantum
information and make quantum computers a reality. While a substantial body of work has been 
done in this regard, there still remained (and still remain) important questions dealing with
both analytical as well as constructive aspects of good quantum codes. 
Motivated by this, my dissertation explores some of these questions in various models and 
methods of quantum error correction.

Its contribution to the mathematical theory of quantum error correction is threefold. Firstly, 
it extends the framework on nonbinary stabilizer codes. It clarifies the connections of stabilizer
codes to classical codes over quadratic extension fields, develops further the theory of punctured
quantum codes. Secondly, it contributes to the theory of operator quantum error correcting codes.
These codes are expected to have efficient error recovery schemes.
This dissertation develops a framework for study and 
analysis of subsystem codes using character theoretic methods. This connection has made it possible
to study subsystem codes by translating them into classical codes. Thirdly, it seeks to 
exploit the knowledge of noise to design efficient quantum codes. 
In addition to providing many explicit constructions for quantum codes, it seeks to integrate 
developments such as LDPC codes into quantum coding theory.
}

\dedicate{My Parents}

\acknow{

I owe a debt of gratitude to many who have directly or indirectly contributed
to bring this dissertation into shape. First and foremost, I would like to thank my
advisor, Andreas Klappenecker. 
He created and sustained a positive and encouraging atmosphere within which 
I could carry out my research. Even when I was a fledgling in 
the field, he treated me as an equal, while gently correcting me without
discouraging. I am greatly indebted to him for his guidance.

Many thanks to my co-authors, Avanti Ketkar, Santosh Kumar,
Salah Aly, Martin R\"{o}tteler, and Markus Grassl, for the many 
illuminating discussions and their role in enhancing my understanding. 
Martin, in particular,  was instrumental in my study of asymmetric quantum
codes when I was an intern at NEC Laboratories America, in the summer
of 2007. I would also like to thank Marcus Silva and Angad Kamat,  
with whom I had several interesting discussions. 
I would also like to take this opportunity to thank Profs. Jennifer Welch, 
Donald Friesen and  Scott Miller for serving on my Ph.D. committee. 

On a personal note, I thank 
Ammumma, Tataiah, Chelli, Aso Mama, Raja Mama  and Harsha Mama, for their 
encouragement and prayers. I was free to pursue my research partly because 
my brother Bobby stepped in to shoulder my responsibilities at home.
This work would not have been possible without the unwavering and undiminishing
support of my parents. I am deeply grateful for their faith and prayers.
As I look back, there is a sense in which this work seems to be
incomplete. Yet, to the extent that God has given me the grace to complete
it and for His many kindnesses, I am grateful to God. 
}

\pagestyle{headings}
\setlength{\headheight}{36pt}
\tableofcontents
\listoftables
\listoffigures
\clearpage

\pagenumbering{arabic}
\setlength{\headheight}{12pt}
\pagestyle{myheadings}
\chapter{Introduction}
\body
\section{Motivation}
In the 1980s and 1990s, it gradually became apparent that the  theory of
information founded by Claude Shannon was a purely classical theory in that it
did not take into account quantum mechanics. This realization crystallized the notion of 
quantum information as distinct from classical information. Despite the success of the
abstract formulation of information by Shannon, it is far 
more physical\footnote{R. Landauer.} than it appears. The representation of information 
{\em i.e.}, the mechanism/device
used to store does affect its behavior. Two level systems such as a switches or
more realistically transistors can be used to store and manipulate classical bits. 
One can also use  systems such as photons or electrons. In case of
photons for instance, information maybe 
stored on the polarization of the photon. The photon can be vertically  or 
horizontally polarized. Other quantum mechanical systems such as spin-$\frac{1}{2}$
systems {\em i.e.}, systems with two spin states can also be used for representing
information. These quantum mechanical representations give us something more than what we bargained for. 
Because they operate in a regime where the quantum mechanical effects can come into 
play\footnote{It might be argued that quantum mechanical effects are present even
when information is stored on a transistor (or any other device). That is true, 
however, when we speak of quantum mechanical effects we are not so much interested 
as to how they affect the functioning
of the device as much as how they affect the logical state of the device. In so far as the
logical state is considered, the transistor behaves classically.}, in addition to 
representing the usual logical states  they permit phenomena (such as linear combination
of the logical states), which have no classical analogues. These phenomena seem to 
confer additional power when it comes to information processing. A  far reaching
ramification due to differences between quantum and classical information is
that computers processing quantum information, if they were built, could provide
exponential speedups over computers that process classical information alone. 
For instance, Shor's algorithm for factoring integers provides an exponential 
speedup over the best known classical algorithms. A little less dramatically, 
Grover's search algorithm provides a quadratic speedup over its classical counterparts. 
Quantum computers therefore pose a challenge to one of the central tenets in theoretical
computer science -- the (modern) Church-Turing thesis which states:
\begin{quote}
Any reasonable model of computation can be simulated on a (probabilistic) Turing machine
with at most a polynomial overhead, (see \cite{vazirani98,bernstein93}).
\end{quote}
It must be emphasized that quantum computers cannot solve problems that are not solvable
on classical computers, for the simple reason that a quantum computer can be simulated
on a classical computer albeit with exponential slowdown. Quantum computers can potentially
change the landscape of tractable problems. But to realize their promise 
we have one important hurdle to cross -- which is the central theme of this 
dissertation -- that of protecting quantum information. 

\section{Quantum Error Correction}
 A quantum computer that can implement something nontrivial and
 useful as Shor's algorithm would require the  control and  manipulation of a large number
 of sensitive quantum mechanical systems. 
 Any practical quantum computer
 would require the ability to protect quantum information against not only
 noise but also the inevitable operational ({\em i.e.}, gate) errors that accompany its processing.
 It was initially supposed that it would be
 impossible to protect quantum information not only because of the scale of computation
 but because of reasons intrinsic to quantum information. Fortunately, such
 skepticism was laid to rest when Peter Shor \cite{shor95} and Andrew Steane \cite{steane96,steane96b} independently
proposed schemes to protect quantum information from noise and operational 
errors. Gottesman \cite{gottesman97} and independently Calderbank {\em et al.}, 
\cite{calderbank98} proposed methods to construct quantum codes from classical 
codes. Commonly referred to as ``stabilizer codes'', these codes are the most
studied class of quantum codes. 
Their work was followed with a substantial body of results related to 
quantum error correction. More importantly, it was shown that if 
the overall error rate was lower than a ``threshold'', it was possible to perform 
an arbitrarily long quantum computation with any desired accuracy with only 
a polylogarithmic overhead in time and space \cite{aharonov97}.

With these fundamental results in place, the focus of quantum coding theory 
shifted to  the design of good codes, systematic methods for construction,
efficient  decoding algorithms, passive error correction schemes, optimizing codes 
for realistic noise processes and the like. 
These questions are in some sense interrelated. This dissertation seeks to address these
questions\footnote{In this dissertation we do not focus so much on fault tolerance.} in varying degree as will be elaborated below. 
It explores various models and methods of quantum error correction.
Broadly, its contribution to the  theory of quantum error correction is threefold. Firstly, 
it extends the framework of nonbinary stabilizer codes. It clarifies
the connections of stabilizer codes to classical codes over quadratic extension fields,
provides many new constructions of quantum codes, and  develops further the theory of optimal quantum codes and punctured quantum codes. Secondly, it contributes to the theory of operator quantum error correcting codes (also called as subsystem codes).
These codes are expected to have efficient error recovery schemes compared to stabilizer codes.
This dissertation develops a framework for study and 
analysis of subsystem codes using character theoretic methods. The framework has made it possible
to study subsystem codes by translating them into classical codes. Thirdly, it seeks to 
exploit the knowledge of noise to design efficient quantum codes and considers more 
realistic channels than the commonly studied depolarizing channel. 
In addition to providing many explicit constructions for quantum codes, it seeks to integrate 
developments such as low density parity check (LDPC) codes into quantum coding theory.

\section{Outline and Contribution}
This dissertation is structured as follows. 
In Chapter~\ref{ch:stabq1}, we consider the theory of nonbinary stabilizer codes
initiated by Rains \cite{rains99} and Ashikhmin and Knill \cite{ashikhmin01}. 
This work was motivated in part by the comparatively little attention that 
codes over nonbinary alphabet had received. Currently it appears that binary quantum systems
are comparatively easier to control and implement than multi-level quantum systems.
However, the growing interest in nonbinary implementations suggests that 
nonbinary codes deserve a closer study,  especially as quantum technologies mature. Further, many of the
quantum mechanical systems naturally allow for a multi-dimensional representation
of quantum information. Instead of simply ignoring them as is often the case,
it might be to our benefit to exploit these additional degrees of freedom. It could
for instance lead to implementation of quantum processors with fewer systems. In fact, 
there are proposals to exploit these additional modes 
not only to implement nonbinary quantum systems \cite{bishop07} but also use them 
to simplify binary implementations \cite{ralph07,lanyon08}. 
It stands to reason that we need a systematic theory
to design good codes for nonbinary implementations. This chapter concerns itself
with generalizing many of the ideas of stabilizer codes to the nonbinary setting.
The nonbinary generalization turns out to be a nontrivial task and in fact there still
remain many open questions with respect to nonbinary quantum codes. We derive a number 
of important results with regard to structure and constructions of nonbinary stabilizer codes.

Armed with  the framework of nonbinary stabilizer codes developed in 
Chapter~\ref{ch:stabq1}, we then turn to a more constructive task
of designing good quantum codes in Chapter~\ref{ch:stabq2}. 
As in the classical case, quite often, imposing the constraint of linearity on the code
structure substantially simplifies our task. We have more control over the parameters
of the codes we design and more importantly, imposing the linearity constraint
simplifies the encoding and decoding complexity. 
Therefore, we focus on the construction of some linear quantum codes
bringing into bearing the machinery of the previous chapter. As in the case of classical
codes, optimal codes generate a lot of interest not only because of their optimality,
but because, not infrequently, they possess additional combinatorial structure that
leads to interesting mathematical problems. We also study the quantum MDS codes 
in this chapter, establishing some structural results related to them.

While error correcting codes address the problem of protecting quantum 
information, there are still certain hurdles to be crossed if we are to 
build a quantum computer. Unlike classical case where we can, with good
reason, assume that the encoding and decoding operations are noiseless
or at least that they are not as noisy as the channel, quantum information
processing does not allow us to do so. The process of encoding and decoding 
can be as noisy as the channel itself. Codes then have to designed to 
allow for fault tolerant computation not merely communication or storage. 
The theory of fault tolerant quantum computation was developed to address
this challenge. In keeping with this goal of fault tolerant quantum computation
some researchers have been investigating passive forms of quantum error 
correction, where information was encoded into subsystems that were immune to
noise. Kribs {\em et al.}, \cite{kribs05,kribs05b} proposed a generalized framework for understanding
both active and passive forms of quantum error correction. 
Such codes are called operator quantum error correcting codes or subsystem 
codes because in this model
information is protected by encoding into subsystems as against the subspaces.
Informally, this amounts to encoding each logical state into an equivalence
class rather a unique state in the codespace. The equivalence class is actually a subspace
and any state in the subspace is a representative of the logical state.
This is accomplished through the use of additional
qubits called gauge qubits. 
This method also generalizes the class of stabilizer codes studied in the 
Chapters~\ref{ch:stabq1},~\ref{ch:stabq2}. In view of its relevance to fault tolerant quantum
computing we devote Chapter~\ref{ch:subsys1} to the study of operator quantum error
correcting codes. Using character theoretic methods we establish a
connection with classical codes that enables us to construct these codes
systematically. In particular, we relax the constraint
of self-orthogonality on the classical codes used to construct stabilizer
codes.

In Chapter~\ref{ch:subsys2} we extend the theory of operator quantum
error correcting codes. The results are of interest in that they 
provide insight into the structure of subsystem codes. Additionally,
they enable us to compare the gains that subsystem codes provide over
stabilizer codes. An important question that had been raised when
the subsystem codes were first discovered was the possibility of improving upon
optimal stabilizer codes in the sense of requiring fewer syndrome measurements
than them. We demonstrate in this particular sense the subsystem
codes, at least the linear ones, cannot outperform the MDS
stabilizer codes. 

The presence of gauge qubits in subsystem codes not only simplifies 
error correction procedures, but it can potentially simplify the 
encoding process. Usually, the complexity of encoding is not
as large as the complexity of decoding and is often neglected. But
in the context of fault tolerant quantum computing, it is useful
to have simpler encoding schemes. Previous work on subsystem codes
contained claims that the encoding could also benefit due to subsystem
coding but the exact
circuits and the trade offs involved in achieving these gains were
either absent or not rigorously justified. In Chapter~\ref{ch:subsysEnc} 
we show how subsystem
codes can be encoded, and how to exploit the presence of the gauge 
qubits to simplify the encoding process. We contend these simplifications
in the encoding circuitry should also lead to additional benefits for 
fault tolerant quantum computation. 

Much of quantum coding theory followed the same path as the classical
coding theory did historically. That is it took on an algebraic 
outlook with great emphasis on the distance of the code. But modern
coding theory has gradually moved away from such a one dimensional
characterization of code performance. In the modern picture instead of
requiring that all errors up to a certain weight be correctable it
has shifted the focus to  achieving the capacity of the
channel while keeping the complexity of encoding and decoding low.
But these insights have not yet been fully absorbed by quantum 
coding theory. The reason is not that it has not been attempted.
Starting with the works of Postol \cite{postol01}, MacKay {\em et al}., \cite{mackay04},
Camara {\em et al}., \cite{camara05} 
and more recently Poulin and Chung \cite{poulin08}, there have been attempts to incorporate
these modern developments into quantum coding theory. The difficulty is addressing the conflicting 
requirements that are posed on the classical codes from which the quantum
codes are constructed. The additional constraints usually imply that 
these are bad codes classically and unlikely to lead to good quantum codes. 
In Chapter~\ref{ch:aqecc}, we contribute to the ongoing discussion on 
quantum LDPC codes by providing new constructions of algebraic quantum 
LDPC codes. 

In Chapter~\ref{ch:aqecc} we also study a problem that has generated a lot of
interest lately viz. the use of realistic noise models in quantum error correction.
Much of earlier work often assumed that the channels are depolarizing 
channels. The depolarizing channel while being particularly simple 
is not necessarily the most accurate noise
model which reflects many of the  current quantum technologies. In Chapter~\ref{ch:aqecc} 
we study the design of codes that are in some measure optimized to channels
that are asymmetric. For these channels we also address the problem
mentioned earlier, how to incorporate the modern developments such
as LDPC codes effectively. We study the theory 
of codes for asymmetric quantum channels and also provide systematic
constructions of classes of quantum codes for them. 
While it remains to be seen if these 
codes are suitable for quantum computation, they seem most suited for
quantum memories. 

In Chapter~\ref{ch:bch} we slightly change tracks to illustrate how 
the study of quantum codes can shed light on classical codes. 
In this chapter we show how studies in quantum codes led to us to
gain additional insight into the properties of BCH codes. Despite the
fact these codes have been known for more than forty years now, 
there remain open problems with regard to their properties. We make
some contribution to our understanding of these codes in the context
of quantum error correction. 
We characterize the dimension and duals of narrow-sense
BCH codes giving simple closed form expressions for their dimensions
and simple criteria to identify dual containing BCH codes. 

The material in Chapters~\ref{ch:stabq1} and \ref{ch:stabq2} is due to a joint 
work \cite{ketkar06} with  Andreas Klappenecker, Avanti Ketkar, and Santosh Kumar. 
Part of this material has appeared earlier in the theses of 
Avanti Ketkar and Santosh Kumar.
Chapters~\ref{ch:subsys1},~\ref{ch:subsys2} and \ref{ch:subsysEnc} 
are in collaboration with Andreas Klappenecker and are based on
\cite{pre06,pre07} and \cite{pre08b}.
The material in Chapter~\ref{ch:aqecc} is the outcome of 
a joint work \cite{pre08} with Martin R\"{o}tteler and
Andreas Klappenecker and was partly performed while at NEC Laboratories
America, Inc. The results in Chapter~\ref{ch:bch} are due to
a joint work with  Andreas Klappenecker and Salah Aly \cite{aly07}. 

To keep the dissertation of a manageable and readable size, I have not included
my investigations of algebraic geometric quantum codes (in collaboration with Andreas Klappenecker) \cite{klappenecker05p1,pre05},
quantum convolutional codes \cite{aly07b,aly07c} (together with 
Andreas Klappenecker, Salah Aly, Martin R\"{o}tteler and Markus Grassl), degenerate
quantum codes \cite{aly06b}, group algebra duadic codes \cite{aly07a},
some additional results on subsystem codes from \cite{aly06} which were due to
joint work with Andreas Klappenecker and Salah Aly.

\chapter{Background}\label{ch:bg}
To make the dissertation self-contained and also to provide the context for the research
performed, this section provides a brief review of ideas relevant to quantum error correction.
Because the breadth of the contents precludes any possibility of covering it completely
in a short space, we recommend the lecture notes by Preskill \cite{preskill} and 
the textbook by Nielsen and Chuang \cite{nielsen00}
for an accessible introduction to quantum computation. Those familiar with 
quantum computing can skip this chapter and proceed directly to topics of interest. 
While there is a logical progression of ideas, effort has been made so that the
chapters can be read independently to some extent. 

\section{Quantum Computation}
\subsection{Qubits}\index{qubit}
Just as bits are abstractions of classical two level systems, qubits are an abstraction of two level 
quantum systems. We denote the basis states in the so-called Dirac notation where $\ket{0}$ (ket zero) 
and $\ket{1}$ (ket one), are simply column vectors
$\left[\begin{smallmatrix} 1\\0 \end{smallmatrix}\right]$ and
$\left[\begin{smallmatrix} 0\\1 \end{smallmatrix}\right]$ respectively. 
This notation also serves to distinguish them from the classical states. The first essential difference with respect to bits is that the qubits can be in superposition of the basis states
{\em i.e.}, they can be in any linear combination of the basis states subject to a normalization
constraint. For instance, consider a single qubit. This qubit can be in the state
$$a\ket{0} +b\ket{1},\quad\mbox{ where } a,b \in \C \mbox{ and } |a|^2+|b|^2=1.$$ 
So the state space of a qubit is $\C^2$. 

While the qubit can be put in any superposition of the basis states, the observed state
of the qubit is restricted to be either one of the states. We cannot observe the superposition itself. 
Any observation of the qubit ``collapses'' the state of the qubit to either $\ket{0}$ or
$\ket{1}$ with probability $|a|^2$ and $|b|^2$ respectively. This underscores the second 
difference between bits and qubits. Observation of qubits can change their state in
general.

If we have  $n$ qubits, then the state space is actually a tensor product of 
the individual state spaces. 
We refer to the state space of the system as the Hilbert space and denote it by $\Hi$. 
\index{Hilbert space}
We have $\Hi \cong \C^2\otimes \C^2\otimes \cdots\otimes \C^2$ with $\dim \Hi = 2^n$.
An orthonormal basis for $\Hi$ is given by $\ket{x_1}\otimes \ket{x_2}\otimes \cdots \otimes \ket{x_n}$. The basis states are also
sometimes denoted as $\ket{x_1x_2\dots x_n}$ or $\ket{x_1,x_2,\dots, x_n}$,
where the $x_i$ take the values zero or one. We can also label the basis elements
by $x\in \F_2^n$. Then a general state is given by 
\nix{
\begin{eqnarray}
\ket{\psi}  & = & \sum_{i=1}^n\sum_{x_i \in \F_2}\alpha_{x_1x_2\cdots x_n} \ket{x_1}\ket{x_2}\ldots \ket{x_n}; \quad \sum_{i=1}^{n} \sum_{x_k\in \F_2} |\alpha_{x_1x_2\cdots x_n} |^2=1.
\end{eqnarray}
}
\begin{eqnarray}
\ket{\psi}  & = & \sum_{x\in \F_2^n}\alpha_{x} \ket{x}; \quad \sum_{x\in \F_2^n} |\alpha_{x} |^2=1.
\end{eqnarray}
The state of the system is a unit vector of length one in $\Hi$. The probability of
observing the system in state $\ket{x}$ is given by $|\alpha_x|^2$.
The  normalization constraint is due to the fact on measurement some state
will be observed.
To describe a general state then, we require $2^n-1$ complex numbers.
This is in contrast to the classical case where the state space is only $n$ dimensional.
As an example, a two qubit system can be put in the state 
$$a_0\ket{00}+a_1\ket{01}+a_2\ket{10}+a_3\ket{11},$$
where $|a_0|^2+|a_1|^2+|a_2|^2+|a_3|^2=1$. The basis state $\ket{00}$ is actually 
$\ket{0}\otimes \ket{0}=\left[\begin{smallmatrix} 1\\0 \end{smallmatrix}\right]
\otimes \left[\begin{smallmatrix} 1\\0 \end{smallmatrix}\right]$. Other basis states are
given similarly. 

Often we will need to observe only a part of the system. This is a little more involved.
Assume that we have a system of $m+n$ qubits and we want to observe $m$ qubits.
An arbitrary state of the system is of the form
\begin{eqnarray}
\ket{\psi}  & = & \sum_{x\in \F_2^m, y\in \F_2^n}\alpha_{x,y} \ket{x}\ket{y}; \quad \sum_{x\in \F_2^m, y\in \F_2^n} |\alpha_{x,y} |^2=1.
\end{eqnarray}
Let us assume that we want to observe the qubits whose states 
correspond to $\ket{x}$. Then the 
probability of observing these qubits in state $\ket{x}$ is given by 
$$
p_x = \sum_{y\in\F_2^n}|\alpha_{x,y}|^2.
$$
Assuming that we observed $\ket{x}$, the state of the system after observation is given
by 
$$
\frac{1}{\sqrt{p_x}}\sum_{y\in \F_2^n}\alpha_{x,y}\ket{y}\ket{x}.
$$
Observing quantum systems can be described using the more powerful measurement
formalism, see for instance \cite{nielsen00}.

An important consequence of the fact that the qubits can be in superposition is a phenomenon known as entanglement. \index{entanglement}
Consider the following state. We ignore the normalization factors for
convenience.
$$
\ket{\psi} = \ket{01}+\ket{11}.
$$
We could also write this state as the product state {\em i.e.}, 
$$\ket{\psi} =\ket{0} \otimes (\ket{0}+\ket{1}).$$
When the states of the qubits can be written as product states then we can observe each of the
product states without disturbing the rest of the system. However there are states such as 
the following which cannot be written as the product of individual qubit states.
$$
\ket{\psi}= \ket{00}+\ket{11} \quad \ket{\varphi}= \ket{01}+\ket{10}.
$$
Such states are said to be entangled and this phenomenon is called entanglement. 
When qubits are entangled it is not possible to observe the state of one of the 
entangled qubits without disturbing the rest of the system. 
One could view the speedup provided by quantum computers as being due to entanglement.

We associate to every state $\ket{\psi}$ in $\Hi$ a row vector denoted as $\bra{\psi}$ 
which is simply the adjoint of the column vector corresponding to $\ket{\psi}$. Two vectors
$\ket{\psi}$ and $\ket{\varphi}$ are said to be orthogonal if their scalar product
denoted as $\ip{\varphi}{\psi}=0$. This is also called the inner product of two vectors.

\subsection{Quantum Gates}
Just as classical data is manipulated using gates, qubits are also manipulated using quantum 
gates. 
Since  the quantum states are unit vectors in $\C^{2^n}$, we could view the application of
gates on the qubits as matrices on $\C^{2^n}$. The postulates of quantum mechanics require the
matrices to be unitary, {\em i.e.}, they must satisfy $U^{-1}= U^{\dagger}$, where $U^{\dagger}$
is the adjoint of the matrix. 
We denote the action of a gate $U$ on a state $\ket{\psi}$ as $U\ket{\psi}$. We denote
the inner product of $U\ket{\psi}$ and $\ket{\varphi}$ as $\bra{\varphi}U\ket{\psi}$.
Some important operations on a single qubit are the following.
\begin{eqnarray}
X = \left[\begin{array}{cc}0&1\\1&0 \end{array}\right]; \quad 
Y = \left[\begin{array}{cc}0&-i\\i&0 \end{array}\right]; \quad 
Z = \left[\begin{array}{cc}1&0\\0&-1 \end{array}\right].
\end{eqnarray}
These operators are also called Pauli errors. 
We will denote the group generated by the Pauli errors by $\mc{P}$. \index{Pauli group}
Often $Y$ is redefined without
the $i$ for convenience in analysis. 
When we consider $n$ qubits we define the Pauli group
of matrices on them as 
\begin{eqnarray}
\mc{P}_n = \{i^{c} e_1\otimes e_2\otimes \cdots \otimes e_n \mid e_i \in \mc{P}, c\in \Z_4 \},
\mbox{ where }  \Z_4 = \{ 0,1 ,2 ,3\}.
\end{eqnarray}
In a subsequent chapter we will generalize the notion of Pauli group and use it
to define error operators and construct codes over prime power alphabet. 

Other important single qubit gates are the Hadamard gate, $H$, \index{Hadamard gate}
the phase gate $P$ and the $\pi/8$ gate (or $T$ gate) which are defined as \index{phase gate}
\begin{eqnarray}
H = \left[\begin{array}{cc}1&1\\1&-1\end{array}\right]; \quad 
P = \left[\begin{array}{cc}1&0\\0&i \end{array}\right];\quad
T = \left[\begin{array}{cc}1&0\\0&e^{i\pi/4} \end{array}\right].
\end{eqnarray}
Perhaps the most important two qubit gate is the CNOT (controlled-NOT) gate. \index{CNOT gate}
The action of the CNOT gate on the basis states is as follows. 
  \[
  \Qcircuit @C=1em @R=.7em {
\lstick{\ket{x}}& \ctrl{1} & \qw& \rstick{\ket{x}}\\
\lstick{\ket{y}}& \targ & \qw &\rstick{\ket{x\oplus y}}
}
\]
The top qubit is called the
control qubit and the bottom qubit is called the target qubit. 
A CNOT gate with control qubit $i$ and target qubit $j$ is denoted as {CNOT}$^{i,j}$ 
and acts as follows on these two qubits:
\begin{eqnarray}
\left[\begin{array}{cccc}
1&0&0&0\\
0&1&0&0\\
0&0&0&1\\
0&0&1&0
\end{array}\right].
\end{eqnarray}
The CNOT gate along with $H$, $P$ and $T$ gates forms a set of universal gates for quantum computation. 
Any arbitrary quantum gate can be realized efficiently using these set of gates to arbitrary accuracy
by the Solovay-Kitaev theorem. 
A graphic representation of the gates mentioned so far is given below:
\[
\Qcircuit @C=1em @R=.7em { 
& \gate{X}&\qw && \gate{Y}&\qw & &\gate{Z} &\qw& &\gate{H}&\qw & &\gate{P}&\qw &  &\gate{T}&\qw &  &\ctrl{1} &\qw \\
& & & &  & & & & & &&& && &  &&&&\targ &\qw \\
& &&&  & & & & & &&&&&& &\\
& \text{i)}&  & &\text{ii)} & & &\text{iii)} &&&\text{iv)}&&&\text{v)} & & &\text{vi)} & & & \text{vii)} & 
}
\]
An important point about the quantum gates is that they act linearly. Let us illustrate.
The $X$ gate acts as follows:
$$
X\ket{0}\mapsto \ket{1} \mbox{ and } X\ket{1}\mapsto \ket{0},
$$
so when it acts on an arbirary state such as $a\ket{0}+b\ket{1}$ we get 
$a\ket{1}+b\ket{0}$.
Later in Chapter~\ref{ch:subsysEnc} we will have occasion to give encoding and 
decoding circuits for subsystem codes. These ideas will be needed then. 

\subsection{Density Operators}\index{density operator}
The state of qubits can be viewed not only as a unit vector in the Hilbert space  but also as 
operators on $\Hi$. This approach makes it easy to analyze and study quantum channels.
Given two vectors $\ket{\psi}$ and $\ket{\phi}$ we can define what is known as the
outer product
of $\ket{\psi}$  and $\ket{\phi} $ as $\ket{\phi}\bra{\psi}$. For instance if $\ket{\psi}= \ket{0}$
and $\ket{\phi}=\ket{1}$. Then
$\ket{1}\bra{0}=\left[\begin{array}{cc}0&0\\1&0 \end{array}\right]$. 
We call the outer product obtained from $\ket{\psi}$ with itself {\em i.e.},
 $\rho=\ket{\psi}\bra{\psi}$ as the density matrix or the density operator. 
The density matrix is positive definite, {\em i.e.}, $\bra{\psi} \rho \ket{\psi} \geq 0$, 
and $\Tr(\rho)=1$ where $\Tr$ is the sum of the diagonal entries. Since the density operators are
matrices of size $2^n\times 2^n$, we can also view the states as being operators on
the system Hilbert space. A view which will be useful when defining quantum channels.
More generally if a system can be found in one of the
states $\ket{\psi_i}$ with probability $p_i$, the density operator associated 
to this system is given by 
$$
\rho = \sum_i p_i \ket{\psi_i}\bra{\psi_i}.
$$
A state is pure if $\Tr(\rho^2)=1$ and mixed otherwise.
The density operator approach will be helpful in understanding the 
motivation behind operator quantum 
error correction in Chapter~\ref{ch:subsys1} 
and also in Chapter~\ref{ch:aqecc}, where we design
codes optimized for a given channel. When a gate $U$ is applied to a state
with density matrix $\rho$, it transforms as $U\rho U^\dag$.

\subsection{Quantum Noise}
Noise on qubits is very different from the noise that we deal with
bits. The noise can be thought to be arising out of the
fact that the information bearing system cannot be completely isolated
from the environment and its interaction with the environment causes its state
to change. Sometimes this phenomenon is also called decoherence. \index{decoherence}

Since the state of a single qubit is given $a\ket{0}+b\ket{1}$, where $a,b$ are complex
numbers, one can expect that errors on quantum information form a continuum unlike the
classical bits where there exist only bit flip errors. 
In fact, we can view  noise on a qubit as a $2\times 2$ complex matrix and more generally,
noise on $n$ qubits is a $2^n\times 2^n $ complex matrix; for this reason we often refer to errors as error operators.  

While we have to protect quantum information from an  infinitude of
errors, in view of linearity of quantum mechanics, it suffices to correct
for only a basis of errors. The importance of the Pauli errors 
also stems from the fact that they form a basis for the error operators.
Of course, we cannot protect against all errors. 
We usually make the assumption that noise on each qubit is independent.
Under this assumption we can
decompose an error on the system into a tensor product of $n$ single qubit errors. 

Errors on the quantum states can also arise due to the finite precision with which
the quantum gates are implemented. Fortunately, the same mechanisms that are used to
correct decoherence can also be used to correct for these type of errors \cite{shor95,shor96}. 

\subsection{Quantum Channels}\index{quantum channel}
A quantum channel is a linear map on the density operators (on $\C^{2^{m}}$) to the set of
density operators (on $\C^{2^{n}}$); we usually assume that the input and output 
Hilbert spaces are same {\em i.e.}, $m=n$.
Sometimes quantum channels are also called ``superoperators'' \index{superoperator}
to indicate that they act on (density) operators. In this dissertation we
will confine ourselves to maps which are completely positive and trace preserving 
(CPTP) maps. A CPTP map $\mc{E}$ is usually given in terms of its
Kraus decomposition. \index{Kraus operators}
\begin{eqnarray}
\mc{E}(\rho) &=& \sum_{i}	E_i \rho E_i^\dagger \mbox{ where } \sum_i E_i^\dagger  E_i= I.
	\label{eq:kraus}
\end{eqnarray}

The quantum channel view is very convenient to understand errors. For instance if we
assume that the bit flip errors occur with a probability $p$ and the rest of the
time there are no errors. We can represent this as the following channel.
\begin{eqnarray}
\mc{E}(\rho) &= & (1-p)\rho +p X\rho X.
\end{eqnarray}
The Kraus operators are easily identified as $\sqrt{1-p} I $ and $\sqrt{p} X$.
The channel often studied in the context of quantum codes is the depolarizing channel
and it parallels the classical $4$-ary symmetric channel. This channel acts as 
\begin{eqnarray}
\mc{E}(\rho) &= & (1-3p)\rho + p X\rho X + p Y \rho Y + p Z\rho Z.
\end{eqnarray}
In this channel, each of the Pauli errors $X$, $Y$ or $Z$ act with a 
probability $p$ and with a probability of $1-3p$, the state is preserved.
The Kraus operators are simply given by $\sqrt{1-3p}I$,
$\sqrt{p}X$, $\sqrt{p}Y$ and $\sqrt{p}Z$.

\section{Quantum Error  Correction}
In this section we briefly review the elements of quantum error correction. 
The reader is also recommended to \cite{gottesman97,KnLa97,calderbank98} for more details.  Additionally, there are many expositions to the ideas of quantum error correction, see \cite{knill02,ashikhmin07,feng04,kempe06,martin04}. 
Here we summarize the main features. We will restrict our attention to additive 
quantum codes.

A binary quantum code \index{quantum code} is a linear subspace of the system Hilbert space i.e., $\C^{2^n}$.
The subspace structure arises due to the fact that we can have superpositions
of the encoded states. For instance, let us assume that the logical states are 
the following:
\begin{eqnarray}
\ket{\ol{0}} = \ket{000};\quad \ket{\ol{1}} = \ket{111}. \label{eq:3qubitEnc}
\end{eqnarray}
Since we are allowed to have linear combinations of states, this implies that 
$a\ket{000}+b\ket{111}$ is also a valid state and belongs to the code. The subspace
structure of the quantum code can be seen to emerge naturally.  
The typical questions that we have to address when dealing with error correcting
codes classical or otherwise are:
\begin{itemize}
\item Construction
\item Encoding
\item Error correction
\item Performance
\end{itemize}
In the case quantum codes, there is yet another component that plays a 
much more important role than in case of classical codes. The codes should
be suitable for fault tolerant computation {\em i.e.}, we should be
able to perform logical operations on the encoded data without having
to decode them. The encoded operations must also ensure that the errors must not
propagate catastrophically beyond the error correcting capability of
the code. In this thesis we will not get into the issues of fault tolerance. 
We shall address the problem of construction and performance
in more detail in the later chapters
of this dissertation. Let us look at the other two aspects. 

Since quantum codes are subspaces in $\C^{2^n}$, constructing quantum
codes can be viewed as packing of subspaces in $\Hi$. In fact, 
the original approaches to quantum error correction were along this
route. This geometric picture while intuitive is not very convenient; fortunately,
we can translate the problem of construction into one 
with a lot more algebraic flavor and more importantly, into a much more
familiar language involving construction of classical codes. 
We will have much more to say on this topic of how to link the classical
codes and the subspaces in $\Hi$ later 
in Chapters~\ref{ch:stabq1} and \ref{ch:subsys1}. 

Assume for now that we have some means to choose a subspace to be our quantum
code, $Q$. Then, from linear algebra we know that we can project onto a subspace by
 means of a projector. A projector $P$ satisfies $P^2=P$. \index{projector}
 A projector for $Q$ can be easily constructed by choosing an orthonormal 
basis of the subspace $Q$, say $\{\ket{\alpha_1},\ldots,\ket{\alpha_K}\}$, and forming 
the following matrix
$$
P = \sum_{i=1}^K \ket{\alpha_i }\bra{\alpha_i}.
$$
The dimension of the subspace is related to $P$ as $\dim Q= \Tr(P)$.
The subspace induces a decomposition of the Hilbert space into orthogonal subspaces. 
Encoding amounts to realizing $P$, though there are important subtleties 
to be addressed, (such as the nonunitariness of $P$). 
For instance, the encoding in equation~(\ref{eq:3qubitEnc}) can be easily 
accomplished using the following circuit.
\[ 
  \Qcircuit @C=1.3em @R=.6em @!R {
	\lstick{a\ket{0}+b\ket{1}}&\ctrl{1} &\ctrl{2}& \qw \\
	\lstick{\ket{0}}&\targ&\qw& \qw &\rstick{a\ket{000}+b\ket{111}}\\
	\lstick{\ket{0}}&\qw&\targ& \qw \gategroup{1}{1}{3}{4}{0.7em}{\}}\\
}
\]
We shall study encoding circuits in more detail in Chapter~\ref{ch:subsysEnc} 
when we discuss encoding of subsystem codes.

When it comes to quantum error correction, there are a few points worth highlighting. 
Error correction or error recovery 
implies that we correct the errors on the encoded information without finding
out what was the original information stored. 
By decoding we mean the process of extracting the 
information from the encoded qubits. It presumes that error correction 
has already been performed. Classically, we do not have to
make such fine distinction between error correction and decoding because 
once error correction is performed it is not difficult to obtain the information
that was encoded without affecting the encoded state.  In the quantum setting
decoding amounts to destroying the encoded state. In the context of fault tolerant
quantum computation we would not like to decode until the end of the computation 
as it would remove the protection afforded by the code. Unless explicitly mentioned
our focus will be on error recovery or correction. We will assume that the 
decoding of the encoded information is performed at the end of the computation. 
In this dissertation we will be concerned with error correction unless 
specified otherwise.

Let  us look at the error correction process in a little more detail. 
Assume that we use the encoding given in equation~(\ref{eq:3qubitEnc}).
Suppose that there is a bit flip error on the first qubit,  also called an $X$ error.
Then we have
$$
a\ket{000}+b\ket{111} \stackrel{\text{Bit flip}}{\mapsto} a\ket{100}+b\ket{011}.
$$
We cannot take a majority voting to figure out the error as in the classical case because
if we observed the state we would collapse the state to either $\ket{100}$ or $\ket{011}$.
Although we maybe able to find that there was an error on the first qubit,
we have also damaged the state. Thus error correction process is a little more complicated in
the quantum case. We must not perform a full measurement of the system. We solve this problem
by partial measurements and the use of additional qubits called ancilla. Let us illustrate this
for our running example. We can compute the parity of the first two qubits and the
second two qubits as follows. 
  \[ 
  \Qcircuit @C=1.3em @R=.6em {
&\mbox{Encoding}&& \mbox{Noise}& & &&& \mbox{Syndrome Measurement}& &  & &
\mbox{Correction}\\
\lstick{\ket{\psi}}&\ctrl{1} &\ctrl{2}&
\multigate{2}{\ \mathcal{N}\ } & \qw & \qw & \ctrl{3} & \qw & \qw & \qw &
\qw &  \qw &
\multigate{2}{\ \mathcal{R}\ } & \qw\\
\lstick{\ket{0}}&\targ&\qw&  \ghost{\ \mathcal{R}\ }& \qw & \qw & \qw & \ctrl{2} & \ctrl{3} & \qw &
\qw & \qw & \ghost{\ \mathcal{R}\ } &
\qw\\
\lstick{\ket{0}}&\qw&\targ&  \ghost{\ \mathcal{N}\ } & \qw & \qw & \qw & \qw & \qw & \ctrl{2} & \qw &
\qw &  \ghost{\ \mathcal{R}\ } \qw &
\qw\\
&&&& & \lstick{\ket{0}} & \targ \qw & \targ \qw &
\qw & \qw & \qw  & \meter &
\control \cw \cwx \gategroup{2}{1}{4}{3}{.8em}{--}\\
&&&& & \lstick{\ket{0}} & \qw & \qw & \targ \qw &
\targ \qw &  \qw & \meter &
\control \cw \cwx
\gategroup{2}{6}{6}{12}{.8em}{--} &
}
\] 
The state of the qubits changes as follows as we move across the circuit:
\begin{eqnarray*}
\begin{split}
& a\ket{0}+b\ket{1}\ket{00}\ket{00}\stackrel{\text{Encoder}}{\mapsto}
 (a\ket{000}+b\ket{111})\ket{00} \stackrel{\text{Noise}}{\mapsto}
 (a\ket{100}+b\ket{011})\ket{00}\\
&\stackrel{\text{CNOT}^{1,4}}{\mapsto} (a\ket{100}\ket{1}+b\ket{011}\ket{0})\ket{0}
\stackrel{\text{CNOT}^{2,4}}{\mapsto}(a\ket{100}\ket{1}+b\ket{011}\ket{1})\ket{0}\\
&=(a\ket{100}+b\ket{011})\ket{1}\ket{0}
\end{split}
\end{eqnarray*}
It will be seen that the first  ancilla qubit becomes entangled with the encoded state briefly
and then becomes unentangled. At this point we can make a measurement of the
ancilla without disturbing the rest of the encoded state.
The double lines indicate classical bits. We can then perform a 
correction operation based on the measurement of ancilla qubits. 
The value measured is usually called the syndrome.

The important thing to notice is that if we have an error then the codespace is taken to
an orthogonal subspace of $\C^{2^n}$, in the example considered it is the space spanned by
$\ket{100}$ and $\ket{011}$. 
On the other hand consider an error that flips all the qubits. This error takes $\ket{000}$ to 
$\ket{111}$ and vice versa. Its action on $Q$ is to  merely permute the basis vectors. Since
it takes valid codevectors to valid codevectors, it cannot be detected. 
Finally, let us consider an error which has no classical analogue. If we had a $Z$ error 
on the first two qubits, then it would take $\ket{000}$ to $\ket{000}$ and $\ket{111}$ to 
$\ket{111}$. So a nontrivial error can act trivially on the codespace. We consider
such errors to be harmless. 
This gives us a general principle for an error to be detectable. 
We shall make use of this lemma later, especially in Chapters~\ref{ch:stabq1},~\ref{ch:subsys1}. 

\begin{lemma}[\cite{KnLa97}]
Given a quantum code $Q$, with projector $P$, and $\ket{\alpha}$ and $\ket{\beta}$ two 
orthogonal vectors  in $Q$. An error $E$ is detectable if and only if 
$\bra{\alpha} E \ket{\beta} =\lambda_E \bra{\alpha} E \ket{\beta}$, where 
$\lambda_E$ depends only on $E$. Alternatively, 
an error is detectable if and only if $PEP = \lambda_E P$.
\end{lemma}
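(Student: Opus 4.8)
The plan is to take ``detectable'' to mean that the two-outcome projective measurement $\{P,\one-P\}$ either flags that the state has left the code or returns the encoded state undisturbed, never corrupting one codeword into another. Concretely, I would adopt the working definition that $E$ is detectable precisely when, for every $\ket{\psi}\in Q$, the in-code component of $E\ket{\psi}$ is a fixed scalar multiple of $\ket{\psi}$, that is $PE\ket{\psi}=\lambda_E\ket{\psi}$ with $\lambda_E$ independent of $\ket{\psi}$; the orthogonal part $(\one-P)E\ket{\psi}$ is then exactly what the measurement reveals. I would first flag that the printed matrix-element condition contains a typo: the right-hand side should read $\lambda_E\ip{\alpha}{\beta}$, so that for orthogonal codewords it demands $\bra{\alpha}E\ket{\beta}=0$, while on the diagonal it demands $\bra{\alpha}E\ket{\alpha}=\lambda_E$ independent of $\ket{\alpha}$.

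The first real step is the equivalence of the two printed forms, which is pure linear algebra. Choosing the orthonormal basis $\{\ket{\alpha_1},\dots,\ket{\alpha_K}\}$ of $Q$ so that $P=\sum_i\ket{\alpha_i}\bra{\alpha_i}$, I would expand $PEP=\sum_{i,j}\ket{\alpha_i}\bra{\alpha_i}E\ket{\alpha_j}\bra{\alpha_j}$ and read off that $PEP=\lambda_E P$ holds if and only if $\bra{\alpha_i}E\ket{\alpha_j}=\lambda_E\,\delta_{ij}$ for all $i,j$. By bilinearity this is exactly $\bra{\alpha}E\ket{\beta}=\lambda_E\ip{\alpha}{\beta}$ for all $\ket{\alpha},\ket{\beta}\in Q$, so the matrix-element form and the projector form are literally the same statement.

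The second step ties these to detectability, and here the argument runs cleanly in both directions. If $PEP=\lambda_E P$, then for any $\ket{\psi}\in Q$ we have $P\ket{\psi}=\ket{\psi}$, hence $PE\ket{\psi}=PEP\ket{\psi}=\lambda_E P\ket{\psi}=\lambda_E\ket{\psi}$, which is precisely detectability. Conversely, if $E$ is detectable then $PE\ket{\psi}=\lambda_E\ket{\psi}$ for every $\ket{\psi}\in Q$; composing with $P$ on the right (which acts as the identity on $Q$) and extending by linearity to all of $\Hi$ yields $PEP=\lambda_E P$.

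I expect the main obstacle to be not any computation but the justification of the definition of detectability itself --- specifically, that genuine non-disturbance of the encoded information forces both the off-diagonal entries $\bra{\alpha_i}E\ket{\alpha_j}$ with $i\neq j$ to vanish and the diagonal entries to share a common value $\lambda_E$. The point is that if $PE\ket{\psi}$ equalled $\lambda(\psi)\ket{\psi}$ with $\lambda$ depending on $\ket{\psi}$, or if $E$ rotated one basis codeword toward another, then upon the measurement returning the outcome ``in code'' the surviving state would differ from the input superposition in its relative amplitudes or phases, so the encoded information would have been disturbed. Ruling this out --- showing that detectability is equivalent to the uniform scalar condition $PE\ket{\psi}=\lambda_E\ket{\psi}$ --- is the one place where the physical meaning of ``detectable'' must be argued rather than merely computed.
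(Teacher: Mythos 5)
The paper does not actually prove this lemma: it is stated as background and attributed to Knill and Laflamme, so there is no in-paper argument to compare yours against. Judged on its own terms, your proposal is sound where it is concrete. You correctly spot the typo --- the paper itself uses the intended form $\bra{c_1}E\ket{c_2}=\lambda_E\ip{c_1}{c_2}$ a few sections later --- and your basis expansion of $PEP$ showing that $PEP=\lambda_E P$ is literally the same statement as $\bra{\alpha}E\ket{\beta}=\lambda_E\ip{\alpha}{\beta}$ for all codewords is complete and correct. The passage between $PEP=\lambda_E P$ and $PE\ket{\psi}=\lambda_E\ket{\psi}$ for all $\ket{\psi}\in Q$ is likewise fine.

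The one place where your argument is a sketch rather than a proof is exactly the place you flag yourself: the identification of ``detectable'' with the uniform-scalar condition $PE\ket{\psi}=\lambda_E\ket{\psi}$. As written, you have adopted that as a working definition and then verified the algebraic equivalences, so strictly speaking you have proved the equivalence of three reformulations of your own definition rather than derived the condition from an operational notion of detection. A complete treatment in the spirit of \cite{KnLa97} would start from the statement that the error pair $\{\one,E\}$ admits a recovery operation (equivalently, that some measurement either announces an error or leaves every encoded state undisturbed) and show that this forces the off-diagonal matrix elements to vanish and the diagonal ones to coincide; your final paragraph gives the right physical reason --- a state-dependent $\lambda(\psi)$ or leakage between basis codewords would alter relative amplitudes of a superposition conditioned on the ``no error'' outcome --- but it stops at the heuristic level. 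Since the paper itself offers no proof, this is not a defect relative to the source, but it is the gap you would need to close for the lemma to stand as a theorem rather than a definition.
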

Given a set of errors $\{E_1, E_2,\ldots, E_l \}$ that are detectable by $Q$, their linear span
is also detectable by $Q$. 
The subspace $Q$  induces a decomposition of $\C^{2^n}$. Detectable errors take the
subspace one of the orthogonal subspaces, while undetectable errors take $Q$ to itself.

\section{Classical Coding Theory}
In this section we discuss some of the relevant aspects of classical codes
setting the stage for our work on quantum error correction. In view of vastness
of the subject, the reader is recommended standard textbooks in the field
such as \cite{macwilliams77,huffman03,lin04} for a comprehensive treatment of the
field. 

Let $\F_q$ denote a finite field with $q$ elements; we have $q=p^m$ for some prime $p$. 
If  $x=(x_1,\ldots,x_n) \in \F_q^n$, then we denote the Hamming weight of $x$
as \index{Hamming weight}
\begin{eqnarray}
\wt(x) = |\{ x_i\neq 0 \}|,
\end{eqnarray}
{\em i.e.}, it is the number of nonzero coordinates
of $x$. We say that a subset  $C \subseteq \F_q^n$ is an
\textsl{additive code} \index{additive code} if  for any
$x,y$ in $C$, $x+y$ is also in $C$. Additive codes 
play an important role in quantum error correction. If in addition to being 
additive, $C$ also satisfies $\alpha c \in C$ for any $\alpha \in \F_q$ and $c\in C$,
then $C$ is said to be an $\F_q$-linear code. \index{linear code}
Such codes  often have simpler encoding and 
decoding schemes while being tractable in terms of
construction and analysis. 
The \textsl{minimum distance} of a set $C\subset \F_q^n$ is defined as \index{distance!classical}
\begin{eqnarray}
\wt(C) = \min_{\stackrel{x,y\in C}{x\neq y}} \{\wt(x-y) \}.
\end{eqnarray}
The (minimum) distance of a code is indicative of the error correcting capabilities
of the code. If $C$ is an additive code, its distance is given by 
\begin{eqnarray}
\wt(C) = \min_{0\neq c\in C} \wt(c).
\end{eqnarray}

A classical $(n,K,d)_q$ code $C\subseteq \F_q^n$ is subset of $\F_q^n$ 
of size $|C|=K$ and distance $d=\wt(C)$. If  $|C|=q^k$, then we 
denote it by $[n,k,d]_q$. If  $C$ is also $\F_q$-linear code, \index{linear code}
then  $C$ is a 
$k$-dimensional subspace of $\F_q^n$. Linear codes are often described
by giving a basis of codewords in the form a matrix, often called as the generator matrix. For example,
consider the $[7,4,3]_2$ Hamming code with the generator matrix 
$$
G = \left[\begin{array}{cccc|ccc}
1&0&0&0&1&1&0\\ 
0&1&0&0&1&0&1\\
0&0&1&0&0&1&1\\
0&0&0&1&1&1&1
\end{array} \right].
$$
It consists of all the linear combinations of the rows of $G$. 
When the generator matrix is in the form $[I|P]$ we say that it is in the
standard form. We define the Euclidean inner product \index{inner product!Euclidean}
between two codewords
$x,y\in \F_q^n$ as 
\begin{eqnarray}
x\cdot y &= & x_1y_1+\cdots + x_ny_n = \sum_{i=1}^n x_iy_i. \label{eq:euclidIP}
\end{eqnarray}
The Euclidean inner product enables us to define a dual code. It is defined as 
\begin{eqnarray}
C^\perp = \{x\in \F_q^n \mid x\cdot c= 0 \mbox{ for all } c \in C \}. \label{eq:euclidD}
\end{eqnarray}
This is also called as the Euclidean dual of $C$. \index{dual code!Euclidean}
The dual code is itself
a linear code with  its own generator matrix $H$. A generator matrix
of  $C^\perp$ is also called a parity check matrix for $C$. For the
example just considered, a parity check matrix is given by 
$$
H = \left[\begin{array}{cccc|ccc}
1&1&0&1&1&0&0\\ 
1&0&1&1&0&1&0\\
0&1&1&1&0&0&1
\end{array} \right].
$$
When the generator matrix for $C$ is given in the standard form $[I_k|P]$, a parity 
check matrix is easily obtained as $[-P^t|I_{n-k}]$. 
One important relation between the generator matrix and the parity check matrix is
that $GH^t=0$. When a code $C\subseteq C^\perp$, we say that $C$ is a self-orthogonal
code. \index{self-orthognal code}
If $C=C^\perp$, then we say it is a self-dual code. 
In the context of quantum error correcting codes, dual codes 
and self-orthogonal codes play a much more 
significant role than in the classical case.
Additionally, we encounter far more general notions 
of inner products.

\chapter{Theory of Nonbinary Stabilizer 
Codes\footnotemark } \label{ch:stabq1}
\footnotetext{\textcopyright 2006 IEEE.  Reprinted in part, with permission, 
from A. Ketkar, A. Klappenecker, S. Kumar and P. K. Sarvepalli,
``Nonbinary stabilizer codes over finite fields''. {\em IEEE Trans. Inform. Theory}, 
vol. 52, no. 11, pp.~4892--4914, 2006.}
As mentioned earlier, quantum codes were developed to make fault-tolerant
quantum computation possible. The most widely studied class of quantum error-correcting codes are
binary stabilizer codes,
see~\cite{ashikhmin00a,ashikhmin00b,beth98,calderbank96,calderbank97,
cleve97,cleve97b,cohen99,danielsen05,ekert96,freedman01,gottesman02,
gottesman97,gottesman05,gottesman96b,grassl99,grassl99b,grassl00,
grassl01,kim02,kim03,kitaev97,martin04,rains99c,shor95,steane96,steane96b,
steane99,steane99b,thangaraj01,xiaoyan04} and, in particular, the seminal
works~\cite{calderbank98,gottesman96}.  An appealing aspect of binary
stabilizer codes is that there exist links to classical coding theory
that facilitate the construction of good codes.  More recently, some
results were generalized to the case of nonbinary stabilizer
codes~\cite{aharonov97,arvind03,ashikhmin01,bierbrauer00,chau97,chau97b,
feng02,feng02b,gottesman99,grassl03,grassl04,kim04,li04,matsumoto00,
rains99,roetteler04,schlingemann00,schlingemann02}, but the theory is
not nearly as complete as in the binary case.

One would naturally ask why study nonbinary codes? There are at least 
three reasons for our interest in nonbinary codes. The first reason is 
the generalization is a nontrivial mathematical problem that is of 
interest in itself. Results which are considerably easy to prove in
the binary case turn out be much more formidable requiring the use
of elegant mathematical techniques to solve the problems. 
The second reason is a practical one and motivated by the behavior of 
classical codes. Many good classical codes like 
Reed-Solomon codes are nonbinary codes. Algebraic geometric codes that  
were the first shown to beat the Gilbert-Varshamov bound were once again
nonbinary codes. Even in the case LDPC codes it  has been shown that 
increasing the alphabet size improves the performance albeit at the expense
of complexity. As we shall see the close connections between the classical
and quantum codes tempt the conclusion that perhaps one would expect to find
good classes of quantum codes over a larger alphabet. Thirdly, quite often
many implementations naturally allow for a multilevel quantum system. These 
extra modes are usually ignored; but lately they have received interest, 
see \cite{bartlett02,gottesman01,bishop07,moreva06,ariano05} and references therein. 
Additionally, as shown in \cite{ralph07}, if properly
exploited, this can lead to efficient implementation of gates. All these 
reasons motivate our investigations of nonbinary quantum codes.

This chapter has two primary goals. On one hand we provide a review
of the theory of stabilizer codes and on the other we also extend and
generalize many of the results. This chapter is structured
as follows. 
We recall the basic principles of nonbinary stabilizer codes over
finite fields in Section~\ref{sec:stabCodes}. In Section~\ref{sec:galoisCnxn}, we introduce a Galois
theory for quantum error-correcting codes.  The original theory
developed by Evariste Galois relates field extensions to
groups. Oystein Ore  
derived a significantly more general theory for pairs of
lattices~\cite{ore44}. We use this framework and set up a Galois
correspondence between quantum error-correcting codes and groups.
This theory shows how some properties of general quantum codes, such
as bounds on the minimum distance, can be deduced from results about
stabilizer codes.

In Section~\ref{sect:additive}, we recall that stabilizer codes over a finite field
$\F_q$ correspond to additive codes over $\F_q$ that are
self-orthogonal with respect to a trace-symplectic
form~\cite{ashikhmin01}. We also establish the correspondence to
additive codes over $\F_{q^2}$ that are self-orthogonal with respect
to a trace-alternating form; remarkably, this basic construction had
been missing in the literature, in spite of the fact that it is a
generalization of the famous $\F_4$-codes~\cite{calderbank98}.

The MacWilliams relations for weight enumerators of stabilizer codes
are particularly easy to prove, as we show in Section~\ref{sec:wtEnums}. We then
derive upper and lower bounds on the minimum distance of
the best possible stabilizer codes in Section~\ref{sec:bounds}. 
Section~\ref{sec:const} details methods to construct new methods to construct
quantum codes from existing quantum codes. 
Unlike classical codes, puncturing quantum codes is a relatively complex task. 
So we include a generalization of the puncturing theory
introduced by Rains to additive codes that are not necessarily pure.
In a later chapter we show how to apply it. 

Apart from the basics of quantum computing, we
recommend \cite{calderbank98} and \cite{gottesman97} for background on
binary stabilizer codes, in addition to books on classical coding
theory, such as~\cite{huffman03,lin04,macwilliams77}. The general
theory of quantum codes is discussed in \cite{KnLa97}, and we assume
that the reader is familiar with the notion of a detectable error, as
introduced there.

\textit{Notations.}  We assume throughout this chapter that $\F_q$
denotes a finite field of characteristic~$p$; in particular, $q$
always denotes a power of a prime $p$.  The trace function from
$\F_{q^m}$ to $\F_q$ is defined as $\tr_{q^m/q}(x)=\sum_{k=0}^{m-1}
x^{q^k}$; we may omit the subscripts if $\F_q$ is the prime field. If
$G$ is a group, then we denote by $Z(G)$ the center of $G$. If
$S\subseteq G$, then we denote by $C_G(S)$ the centralizer of $S$ in
$G$.  We write $H\le G$ to express the fact that $H$ is a
subgroup of~$G$.  The trace~$\Tr(M)$ of a square matrix~$M$ is the sum of
the diagonal elements of~$M$.

\section{Stabilizer Codes}\label{sec:stabCodes}
Let $\C^q$ be a $q$-dimensional
complex vector space representing the states of a quantum mechanical
system. We denote by $\ket{x}$ the vectors of a distinguished
orthonormal basis of $\C^q$, where the labels $x$ range over the
elements of a finite field $\F_q$ with $q$ elements.  A quantum
error-correcting code $Q$ is a $K$-dimensional subspace of $\C^{q^n}=
\C^q\otimes \cdots \otimes \C^q$.

We need to select an appropriate error model so that we can measure
the performance of a code.  We simplify matters by choosing a basis
$\mathcal{E}_n$ of the vector space of complex $q^n\times q^n$
matrices to represent a discrete set of errors.  A stabilizer code is
defined as the joint eigenspace of a subset of~$\mathcal{E}_n$, so the
error operators play a crucial role.

\subsection{Error Bases}\label{ssec:errBases}
Let $a$ and $b$ be elements of the finite field $\F_q$.  We define the
unitary operators $X(a)$ and $Z(b)$ on~$\C^q$ by
$$ X(a)\ket{x}=\ket{x+a},\qquad
Z(b)\ket{x}=\omega^{\tr(bx)}\ket{x},$$ where $\tr$ denotes the trace
operation from the extension field $\F_q$ to the prime field $\F_p$,
and $\omega=\exp(2\pi i/p)$ is a primitive $p$th root of unity.

We form the set $\mathcal{E}=\{X(a)Z(b)\,|\, a,b\in \F_q\}$ of error
operators. The set $\mathcal{E}$ has some interesting properties,
namely (a) it contains the identity matrix, (b)~the product of two
matrices in $\mathcal{E}$ is a scalar multiple of another element in
$\mathcal{E}$, and (c)~the trace $\Tr(A^\dagger B)=0$ for distinct
elements $A,B$ of $\mathcal{E}$. A finite set of $q^2$ unitary matrices
that satisfy the properties (a), (b), and (c) is called a \textsl{nice
error basis}, see~\cite{knill96a}.

The set $\mathcal{E}$ of error operators forms a basis of the set of
complex $q\times q$ matrices due to property (c).  We include a
proof that $\mathcal{E}$ is a nice error basis, because parts of our
argument will be of independent interest in the subsequent sections.

\begin{lemma}
The set $\mathcal{E}=\{X(a)Z(b)\,|\, a,b\in \F_q\}$ is a nice error
basis on\/~$\C^q$.
\end{lemma}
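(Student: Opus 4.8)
The plan is to verify the three defining properties of a nice error basis directly from the definitions of $X(a)$ and $Z(b)$. First I would record the basic commutation and composition rules for these operators. Property (a) is immediate: taking $a=b=0$ gives $X(0)Z(0)=I$, since $X(0)\ket{x}=\ket{x}$ and $Z(0)\ket{x}=\omega^{\tr(0)}\ket{x}=\ket{x}$.

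For property (b), I would compute the product of two basis elements $X(a)Z(b)$ and $X(a')Z(b')$ by acting on an arbitrary basis vector $\ket{x}$. The key intermediate step is the relation obtained from pushing $Z(b)$ past $X(a')$: since $Z(b)X(a')\ket{x}=Z(b)\ket{x+a'}=\omega^{\tr(b(x+a'))}\ket{x+a'}=\omega^{\tr(ba')}X(a')Z(b)\ket{x}$, we get $Z(b)X(a')=\omega^{\tr(ba')}X(a')Z(b)$. Using this, the product $X(a)Z(b)X(a')Z(b')$ collapses to $\omega^{\tr(ba')}X(a+a')Z(b+b')$, which is a scalar multiple (a root of unity) times the element $X(a+a')Z(b+b')$ of $\mathcal{E}$. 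This establishes (b), and along the way it confirms that $\mathcal{E}$ is closed up to scalars.

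For property (c), the trace-orthogonality, I would compute $\Tr\big((X(a)Z(b))^\dagger X(a')Z(b')\big)$ for distinct pairs $(a,b)\neq(a',b')$. Writing the adjoint explicitly and reducing, the trace becomes a sum over the basis labels $x\in\F_q$ of a character value of the form $\omega^{\tr(\text{something})}$, but only the diagonal entries survive, which forces $a=a'$; when $a=a'$ the surviving sum is $\sum_{x}\omega^{\tr((b'-b)x)}$. The crucial fact I would invoke is that this is a sum of an additive character of $\F_q$ over all of $\F_q$, which equals $q$ when $b=b'$ and vanishes when $b\neq b'$, precisely because $x\mapsto\omega^{\tr(cx)}$ is a nontrivial character whenever $c\neq 0$. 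Since the pairs are distinct, either $a\neq a'$ (trace is zero from no diagonal overlap) or $a=a'$ but $b\neq b'$ (trace is zero by character orthogonality), giving property (c).

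Finally, I would note that the set has cardinality $q^2$ (one operator per pair $(a,b)\in\F_q^2$), and that property (c) makes these $q^2$ operators linearly independent in the $q^2$-dimensional space of $q\times q$ complex matrices, hence a basis. The main obstacle, such as it is, is the character-sum evaluation in property (c): one must be careful that the orthogonality of additive characters of $\F_q$ is being applied correctly, ensuring the trace form $\tr$ from $\F_q$ to $\F_p$ together with $\omega=\exp(2\pi i/p)$ genuinely yields a nontrivial character of the additive group of $\F_q$ whenever the argument is nonzero. Everything else is routine operator bookkeeping.
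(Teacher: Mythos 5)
Your proposal is correct and follows essentially the same route as the paper's proof: property (a) is immediate, property (b) comes from the commutation relation $Z(b)X(a')=\omega^{\tr(ba')}X(a')Z(b)$ yielding the product formula, and property (c) splits into the case $a\neq a'$ (vanishing diagonal) and $a=a'$ (orthogonality of the additive character $x\mapsto\omega^{\tr((b'-b)x)}$). The closing remark on linear independence matches the paper's observation that (c) forces the $q^2$ operators to form a basis of the $q\times q$ matrices.
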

\begin{proof}
The matrix $X(0)Z(0)$ is the identity matrix, so property (a) holds.
We also have $\omega^{\tr(ba)}X(a)Z(b)=Z(b)X(a)$, which implies that the
product of two error operators is given by
\begin{equation}\label{eq:multrule}
X(a)Z(b)\,X(a')Z(b')=\omega^{\tr(ba')}X(a+a')Z(b+b').
\end{equation}
This is a scalar multiple of an operator in $\mathcal{E}$,
hence property (b) holds.

Suppose that the error operators are of the form $A=X(a)Z(b)$ and
$B=X(a)Z(b')$ for some $a, b, b'\in \F_q$. Then
$$\Tr(A^\dagger B)= \Tr(Z(b'-b))=\sum_{x\in \F_q}
\omega^{\tr((b'-b)x)}.$$ The map $x\mapsto \omega^{\tr((b'-b)x)}$ is
an additive character of $\F_q$. The sum of all character values is 0
unless the character is trivial; thus, $\Tr(A^\dagger B)=0$ when
$b'\neq b$.

On the other hand, if $A=X(a)Z(b)$ and $B=X(a')Z(b')$ are two error
operators satisfying $a\neq a'$, then the diagonal
elements of the matrix $A^\dagger B=Z(-b)X(a'-a)Z(b')$ are 0, which
implies $\Tr(A^\dagger B)=0$.
Thus, whenever $A$ and $B$ are distinct element of $\mathcal{E}$, then
$\Tr(A^\dagger B)=0$, which proves~(c).
\end{proof}

\begin{example}
We give an explicit construction of a nice error basis with $q=4$
levels.  The finite field $\F_4$ consists of the elements $\F_4=\{0,
1, \alpha, \overline{\alpha}\}$. We denote the four standard basis
vectors of the complex vector space $\C^4$ by $\ket{0}, \ket{1},
\ket{\alpha},$ and $\ket{\overline{\alpha}}$.  Let $\idtwo$ denote the
$2\times 2$ identity matrix, $\sigma_x=\left(\begin{smallmatrix} 0&1\\
1&0
\end{smallmatrix}\right)$, and $\sigma_z=\left(\begin{smallmatrix}
1&\phantom{-}0\\ 0&-1\end{smallmatrix}\right)$. Then
\vspace*{-1ex}
$$
\begin{array}{c@{\,}c@{\,}lc@{\,}c@{\,}lc@{\,}c@{\,}lc@{\,}c@{\,}l}
X(0) &=& \idtwo\otimes \idtwo, &
X(1) &=& \idtwo\otimes \sigma_x, \\
X(\alpha)&=&\sigma_x\otimes \idtwo, &
X(\overline{\alpha})&=& \sigma_x\otimes \sigma_x, \\
Z(0) &=& \idtwo\otimes \idtwo, &
Z(1) &=& \sigma_z\otimes \idtwo, \\
Z(\alpha) &=& \sigma_z\otimes \sigma_z, &
Z(\overline{\alpha}) &=& \idtwo\otimes \sigma_z.
\end{array}
$$
We see that this nice error basis is obtained by tensoring the Pauli
basis, a nice error basis on $\C^2$. The next lemma shows that this is
a general design principle for nice error bases.
\end{example}

\begin{lemma}
If $\mathcal{E}_1$ and $\mathcal{E}_2$ are nice error bases, then
$$\mathcal{E}=\{ E_1\otimes E_2\,|\, E_1\in \mathcal{E}_1, E_2\in
\mathcal{E}_2\}$$ is a
nice error basis as well.
\end{lemma}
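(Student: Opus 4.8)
The plan is to verify directly that the tensor product family $\mathcal{E}=\{E_1\otimes E_2 \mid E_1\in\mathcal{E}_1, E_2\in\mathcal{E}_2\}$ satisfies each of the three defining properties of a nice error basis, namely (a) containing the identity, (b) closure up to scalars under multiplication, and (c) pairwise trace-orthogonality, together with the cardinality condition. If $\mathcal{E}_1$ is a nice error basis on $\C^{q_1}$ and $\mathcal{E}_2$ on $\C^{q_2}$, then $|\mathcal{E}_1|=q_1^2$ and $|\mathcal{E}_2|=q_2^2$, so $|\mathcal{E}|=q_1^2 q_2^2=(q_1 q_2)^2$, which is the required count for a nice error basis on $\C^{q_1 q_2}$. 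Each $E_1\otimes E_2$ is unitary since the tensor product of unitaries is unitary.

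For property (a), since $\mathcal{E}_1$ contains the identity $I_{q_1}$ and $\mathcal{E}_2$ contains $I_{q_2}$, the product $I_{q_1}\otimes I_{q_2}=I_{q_1 q_2}$ lies in $\mathcal{E}$. For property (b), I would use the mixed-product law for tensor products, $(A\otimes B)(C\otimes D)=(AC)\otimes(BD)$. Given two elements $E_1\otimes E_2$ and $E_1'\otimes E_2'$ of $\mathcal{E}$, their product equals $(E_1 E_1')\otimes(E_2 E_2')$. By property (b) for $\mathcal{E}_1$ and $\mathcal{E}_2$ separately, there are scalars $\lambda,\mu$ and basis elements $F_1\in\mathcal{E}_1$, $F_2\in\mathcal{E}_2$ with $E_1 E_1'=\lambda F_1$ and $E_2 E_2'=\mu F_2$; hence the product is $\lambda\mu\,(F_1\otimes F_2)$, a scalar multiple of an element of $\mathcal{E}$.

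For property (c), I would again exploit the multiplicativity of the trace over tensor products, $\Tr(M\otimes N)=\Tr(M)\Tr(N)$, combined with $(E_1\otimes E_2)^\dagger=E_1^\dagger\otimes E_2^\dagger$. For two distinct elements $E_1\otimes E_2$ and $E_1'\otimes E_2'$ of $\mathcal{E}$, compute
$$
\Tr\!\big((E_1\otimes E_2)^\dagger(E_1'\otimes E_2')\big)
=\Tr(E_1^\dagger E_1')\,\Tr(E_2^\dagger E_2').
$$
Distinctness of the pairs means $E_1\neq E_1'$ or $E_2\neq E_2'$. In the first case property (c) for $\mathcal{E}_1$ forces $\Tr(E_1^\dagger E_1')=0$, and in the second case property (c) for $\mathcal{E}_2$ forces $\Tr(E_2^\dagger E_2')=0$; either way the product of traces vanishes, so property (c) holds for $\mathcal{E}$.

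The argument is essentially a routine bookkeeping exercise, and there is no serious obstacle: the only point requiring a modicum of care is the handling of property (c), where one must note that the pairs $(E_1,E_2)$ and $(E_1',E_2')$ being distinct as pairs does \emph{not} require both components to differ—it suffices that at least one does, and the factorization of the trace then kills the whole product. Everything else follows from the standard tensor-product identities for multiplication, adjoints, and traces.
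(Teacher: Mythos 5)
Your proof is correct and is exactly the direct verification from the definitions that the paper has in mind (the paper simply asserts that the lemma ``follows directly from the definitions'' without writing out the details). Your handling of property (c) --- noting that distinctness of the pairs only forces \emph{one} of the two component traces to vanish, which suffices since the trace of a tensor product factors --- is the one point that genuinely needs care, and you address it correctly.
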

\noindent The proof of this observation follows directly from
the definitions.

Let $\mathbf{a}=(a_1,\dots, a_n)\in \F_q^n$. We write $ X(\mathbf{a})
= X(a_1)\otimes\, \cdots \,\otimes X(a_n)$ and $Z(\mathbf{a}) =
Z(a_1)\otimes\, \cdots \,\otimes Z(a_n)$ for the tensor products of
$n$ error operators.  Our aim is to provide an error model that
conveniently represents errors acting locally on one quantum
system. Using the new notations, we can easily formulate this model.

\begin{corollary}\label{th:nice}
The set
$\mathcal{E}_n= \{
X(\mathbf{a})Z(\mathbf{b})\,|\, \mathbf{a}, \mathbf{b}\in \F_q^n\}$ is
a nice error basis on the complex vector space~$\C^{q^n}$.
\end{corollary}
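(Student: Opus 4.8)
The plan is to prove Corollary~\ref{th:nice} by reducing it to the two lemmas already established. The key observation is that the operators $X(\mathbf{a})Z(\mathbf{b})$ can be regrouped coordinatewise into a tensor product of single-system error operators, so that $\mathcal{E}_n$ is exactly the $n$-fold tensor product of the single-system basis $\mathcal{E}$ with itself.

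First I would rewrite a general element of $\mathcal{E}_n$ in factored form. By definition $X(\mathbf{a})Z(\mathbf{b}) = \left(X(a_1)\otimes\cdots\otimes X(a_n)\right)\left(Z(b_1)\otimes\cdots\otimes Z(b_n)\right)$, and since the tensor product is compatible with matrix multiplication (mixed-product property), this equals $X(a_1)Z(b_1)\otimes\cdots\otimes X(a_n)Z(b_n)$. Each factor $X(a_i)Z(b_i)$ is an element of the single-system set $\mathcal{E}$, which the first Lemma already proved to be a nice error basis on $\C^q$. Hence $\mathcal{E}_n = \{E_1\otimes\cdots\otimes E_n \mid E_i\in\mathcal{E}\}$, precisely the tensor-product construction.

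Next I would invoke the second Lemma, which shows that the tensor product of two nice error bases is again a nice error basis. Applying it inductively, with $\mathcal{E}_1=\mathcal{E}$ as the base case and tensoring one additional copy of $\mathcal{E}$ at each step, I obtain that the $n$-fold tensor product $\mathcal{E}\otimes\cdots\otimes\mathcal{E}$ is a nice error basis on $\C^q\otimes\cdots\otimes\C^q = \C^{q^n}$. I should check the cardinality as a sanity point: each factor contributes $q^2$ operators, so $|\mathcal{E}_n| = (q^2)^n = (q^n)^2$, matching the requirement that a nice error basis on a $q^n$-dimensional space have $(q^n)^2$ elements.

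I do not expect a genuine obstacle here, since the statement is essentially a packaging of the two preceding lemmas; the only point requiring a little care is verifying that the factored identity $X(\mathbf{a})Z(\mathbf{b}) = \bigotimes_{i=1}^n X(a_i)Z(b_i)$ holds as claimed, i.e.\ that interleaving the $X$ and $Z$ tensor factors coordinatewise is legitimate. This follows from the mixed-product property of the Kronecker product, and once it is in place the induction on the second Lemma finishes the argument immediately.
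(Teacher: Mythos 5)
Your proposal is correct and matches the paper's intended argument: the paper presents this as a corollary following directly from the two preceding lemmas (the single-system basis $\mathcal{E}$ being nice, and the tensor product of nice error bases being nice), exactly the reduction you carry out. Your added care in verifying $X(\mathbf{a})Z(\mathbf{b}) = \bigotimes_{i=1}^n X(a_i)Z(b_i)$ via the mixed-product property is the only nontrivial step, and you handle it correctly.
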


\textit{Remark.}  Several authors have used an error basis that is
equivalent to our definition of $\mathcal{E}_n$,
see~\cite{ashikhmin01,feng02b,kim04,matsumoto00}. We have defined the
operator $Z(b)$ in a slightly different way, so that the properties
relevant for the design of stabilizer codes become more
transparent. In particular, we can avoid an intermediate step that
requires tensoring $p\times p$--matrices, and that allows us to obtain
the trace-symplectic form directly, see Lemma~\ref{th:commute}.

\subsection{Stabilizer Codes}
Let $G_n$ denote the group generated by the matrices of the nice error
basis~$\mathcal{E}_n$.   It follows from
equation~(\ref{eq:multrule}) that
\begin{eqnarray} 
G_n = \{ \omega^{c}X(\mathbf{a})Z(\mathbf{b})\,|\, \mathbf{a, b}
\in \F_q^n, c\in \F_p\}.\label{eq:genPauli}
\end{eqnarray}
Note that $G_n$ is a finite group of order
$pq^{2n}$.  We call $G_n$ the \textsl{error group} associated with the nice
error basis $\mathcal{E}_n$.

A \textsl{stabilizer code} $Q$ \index{stabilizer code}
is a non-zero subspace of $\C^{q^n}$
that satisfies 
\begin{equation}\label{eq:stab}
Q = \bigcap_{E \in S} \{ v \in \C^{q^n} \mid Ev=v\}
\end{equation}
for some subgroup $S$ of $G_n$. In other words, $Q$ is the joint 
eigenvalue-$1$ eigenspace of a subgroup $S$ of the error group~$G_n$.
\smallskip

\textit{Remark.}  A crucial property of a stabilizer code is
that it contains \textsl{all} joint eigenvectors of~$S$ with
eigenvalue 1, as equation~(\ref{eq:stab}) indicates.  If the code is
smaller and does not exhaust all joint eigenvectors of $S$ with
eigenvalue 1, then it is not a stabilizer code for $S$.
\smallbreak

\subsection{Minimum Distance}
The error correction and detection capabilities of a quantum
error-correcting code $Q$ are the most crucial aspects of the code.
Recall that a quantum code~$Q$ is able to detect an error $E$ in the
unitary group $U(q^n)$ if and only if the condition $\langle c_1 | E|
c_2\rangle=\lambda_E\langle c_1 |c_2\rangle$ holds for all $c_1,
c_2\in Q$, see~\cite{KnLa97}.

It turns out that a stabilizer code~$Q$ with stabilizer $S$ can detect
all errors in $G_n$ that are scalar multiples of elements in $S$ or
that do not commute with some element of $S$, see
Lemma~\ref{th:detectable}. In particular, an error in $G_n$ that is
not detectable has to commute with all elements of the stabilizer.
Commuting elements in $G_n$ are characterized as follows:

\begin{lemma}\label{th:commute}
Two elements $E=\omega^cX(\mathbf{a})Z(\mathbf{b})$ and
$E'=\omega^{c'}X(\mathbf{a'})Z(\mathbf{b'})$ of the error group $G_n$
satisfy the relation
$$
EE' = \omega^{\tr(\mathbf{b\cdot a'-b'\cdot a})} E'E.
$$
In particular, the elements $E$ and $E'$ commute if and only if
the trace symplectic form
$\tr(\mathbf{b\cdot a'-b'\cdot a})$ vanishes.
\end{lemma}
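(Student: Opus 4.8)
The plan is to compute the product $EE'$ directly and show it equals $\omega^{\tr(\mathbf{b\cdot a'-b'\cdot a})}E'E$, from which the commutativity criterion follows immediately since $\omega$ is a primitive $p$th root of unity (so the scalar is $1$ exactly when the exponent $\tr(\mathbf{b\cdot a'-b'\cdot a})$ vanishes in $\F_p$). The scalar prefactors $\omega^c$ and $\omega^{c'}$ are central and will cancel when we compare $EE'$ with $E'E$, so I may as well assume $c=c'=0$ and work with $X(\mathbf{a})Z(\mathbf{b})$ and $X(\mathbf{a'})Z(\mathbf{b'})$.

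First I would establish the single-system commutation rule. From the relation $\omega^{\tr(ba)}X(a)Z(b)=Z(b)X(a)$ noted in the proof of the first lemma (equivalently, from the multiplication rule~(\ref{eq:multrule})), one gets that $X(a)Z(b)$ and $X(a')Z(b')$ on a single copy of $\C^q$ satisfy $X(a)Z(b)\,X(a')Z(b')=\omega^{\tr(ba'-b'a)}\,X(a')Z(b')\,X(a)Z(b)$. I would derive this by pushing $Z(b)$ past $X(a')$ using $Z(b)X(a')=\omega^{\tr(ba')}X(a')Z(b)$, and similarly accounting for the reverse ordering; the two trace terms $\tr(ba')$ and $\tr(b'a)$ appear with opposite signs and combine into $\tr(ba'-b'a)$.

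Next I would lift this to $n$ systems using the tensor-product structure. Since $X(\mathbf{a})Z(\mathbf{b})=\bigotimes_{i=1}^n X(a_i)Z(b_i)$ and tensor products multiply componentwise, the overall scalar picked up in swapping $EE'$ to $E'E$ is the product $\prod_{i=1}^n \omega^{\tr(b_ia_i'-b_i'a_i)}=\omega^{\sum_i \tr(b_ia_i'-b_i'a_i)}$. By additivity of the trace and of the Euclidean inner product, the exponent is $\tr\!\left(\sum_i (b_ia_i'-b_i'a_i)\right)=\tr(\mathbf{b\cdot a'-b'\cdot a})$, which gives the claimed relation. Restoring the central factors $\omega^c,\omega^{c'}$ changes nothing.

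The only mild obstacle is bookkeeping the signs and the order of the trace terms when commuting $Z$ past $X$ on a single system, so that $\tr(ba')$ and $\tr(b'a)$ land with the correct relative sign; everything else is routine additivity. The final commutativity statement is then immediate: $EE'=E'E$ iff $\omega^{\tr(\mathbf{b\cdot a'-b'\cdot a})}=1$ iff $\tr(\mathbf{b\cdot a'-b'\cdot a})=0$ in $\F_p$, since $\omega=\exp(2\pi i/p)$ has order exactly $p$.
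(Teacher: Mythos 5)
Your proposal is correct and follows essentially the same route as the paper: both reduce to the multiplication rule $X(a)Z(b)\,X(a')Z(b')=\omega^{\tr(ba')}X(a+a')Z(b+b')$, write $EE'$ and $E'E$ in this normal form, and compare the scalars $\omega^{\tr(\mathbf{b\cdot a'})}$ and $\omega^{\tr(\mathbf{b'\cdot a})}$. Your extra care in lifting the single-system relation to the $n$-fold tensor product by additivity of the trace is exactly what the paper does implicitly when it applies equation~(\ref{eq:multrule}) with boldface vectors.
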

\begin{proof}
It follows from equation (\ref{eq:multrule}) that
$EE'=\omega^{\tr(\mathbf{b\cdot a'})}X(\mathbf{a+a'})Z(\mathbf{b+b'})$
and $E'E=\omega^{\tr(\mathbf{b'\cdot a})}
X(\mathbf{a+a'})Z(\mathbf{b+b'})$. Therefore, multiplying $E'E$ by
the scalar $\omega^{\tr(\mathbf{b\cdot a'-b'\cdot a})}$ yields $EE'$,
as claimed.
\end{proof}

We define the \textsl{symplectic weight} $\swt$ \index{symplectic weight}
of a vector $(\mbf a|\mbf b)$
in $\F_q^{2n}$ as
$$\swt((\mbf a|\mbf b)) = | \{\, k\, |\, (a_k,b_k)\neq (0,0)\}|.$$ The
weight $\w(E)$ of an element $E=\omega^c X(\mbf{a})Z(\mbf{b})$ in the
error group~$G_n$ is defined to be the number of nonidentity tensor
components, $\w(E)=\swt((\mbf a|\mbf b))$. In particular, the weight
of a scalar multiple of the identity matrix is by definition zero.

A quantum code $Q$ has \textsl{minimum distance} $d$\/ \index{distance!quantum}
if and only if
it can detect all errors in $G_n$ of weight less than~$d$, but cannot
detect some error of weight~$d$.  We say that $Q$ is an $((n,K,d))_q$
code if and only if $Q$ is a $K$-dimensional subspace of $\C^{q^n}$
that has minimum distance~$d$.  An $((n,q^k,d))_q$ code is also called
an $[[n,k,d]]_q$ code. We remark that some authors are more
restrictive and use the bracket notation just for stabilizer codes.

We say that a quantum code $Q$ is \textsl{pure to} $t$\/ if and only
if its stabilizer group $S$ does not contain non-scalar matrices of
weight less than $t$.  A quantum code is called pure if and only if it
is pure to its minimum distance.  As in~\cite{calderbank98}, we
always assume that an $[[n,0,d]]_q$ code has to be pure.  
\medskip

\textit{Remarks.}  (a) If a quantum error-correcting code can detect a set
$\mathcal{D}$ of errors, then it can detect all errors in the linear
span of $\mathcal{D}$.  (b) A code of minimum distance~$d$ can correct all
errors of weight $t=\lfloor (d-1)/2\rfloor$ or less.

\section{Galois Connection}\label{sec:galoisCnxn}
We want to clarify the relation between stabilizer codes and more
general quantum codes before we proceed further.  Let us denote by
$\mathcal{Q}$ the set of all subspaces of $\C^{q^n}$. The set
$\mathcal{Q}$ is partially ordered by the inclusion relation.  Any two
elements of $\mathcal{Q}$ have a least upper bound and a greatest
lower bound with respect to the inclusion relation, namely
$$ \sup\{Q,Q'\} = Q+Q'\quad\mbox{and}\quad \inf\{Q,Q'\} = Q\cap Q'.$$
Therefore, $\mathcal{Q}$ is a complete (order) lattice. An element of this
lattice is a quantum error-correcting code or is equal to the vector
space $\{0\}$.

Let~$\mathcal{G}$ denote the lattice of subgroups of the error
group~$G_n$. We will introduce two order-reversing maps
between~$\mathcal{G}$ and $\mathcal{Q}$ that establish a Galois
connection. We will see that stabilizer codes are distinguished
elements of~$\mathcal{Q}$ that remain the same when mapped to the
lattice $\mathcal G$ and back.

Let us define a map $\Fix$ from the lattice $\mathcal{G}$ of subgroups
to the lattice $\mathcal{Q}$ of subspaces
that associates to a group $S$ its joint eigenspace with
eigenvalue~1,
\begin{equation}\label{eq:Fix}
\Fix(S) = \bigcap_{E\in S} \{ v\in \C^{q^n}\,|\, Ev=v\}.
\end{equation}
We define for the reverse direction a map $\Stab$ from the lattice
$\mathcal{Q}$ to the lattice $\mathcal{G}$ that associates to a
quantum code $Q$ its stabilizer group $\Stab(Q)$,
\begin{equation}\label{eq:Stab}
\Stab(Q) = \{ E\in G_n\,|\, Ev=v \text{ for all } v \in Q\}.
\end{equation}
We obtain four direct consequences of the definitions (\ref{eq:Fix})
and (\ref{eq:Stab}):
\begin{enumerate}
\item[\textbf{G1.}] If $Q_1\subseteq Q_2$ are subspaces of $\C^{q^n}$, then $\Stab(Q_2)\le \Stab(Q_1)$.
\item[\textbf{G2.}] If $S_1\le S_2$ are subgroups of $G_n$,
then $\Fix(S_2)\le \Fix(S_1)$.
\item[\textbf{G3.}] A subspace $Q$ of $\C^{q^n}$ satisfies
$Q\subseteq \Fix(\Stab(Q))$.
\item[\textbf{G4.}] A subgroup $S$ of $G_n$ satisfies
$S\le \Stab(\Fix(S))$.
\end{enumerate}
The first two properties establish that $\Fix$ and $\Stab$ are
order-reversing maps. The extension properties G3 and G4 establish
that $\Fix$ and $\Stab$ form a Galois connection,
see~\cite[page~56]{birkhoff61}. The general
theory of Galois connections establishes, among other results, that
$ \Fix(S)=\Fix(\Stab(\Fix(S)))$ and $
\Stab(Q)=\Stab(\Fix(\Stab(Q)))$
holds for all $S$ in $\mathcal{G}$ and all $Q$ in $\mathcal{Q}$.

A subspace~$Q$ of the vector space~$\C^{q^n}$ satisfying G3 with
equality is called a \textsl{closed subspace}, and a subgroup $S$ of
the error group~$G_n$ satisfying G4 with equality is called a
\textsl{closed subgroup}.  We record the main result of abstract
Galois theory in the following proposition.

\begin{proposition}
The closed subspaces of the vector space~$\C^{q^n}$ form a complete
sublattice $\mathcal{Q}_c$ of the lattice~$\mathcal{Q}$. The closed
subgroups of\/ $G_n$ form a complete sublattice~$\mathcal{G}_c$ of the
lattice\/ $\mathcal G$ that is dual isomorphic to the
lattice~$\mathcal{Q}_c$.
\end{proposition}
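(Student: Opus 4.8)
The plan is to deduce the proposition entirely from the abstract theory of Galois connections, using only the properties G1--G4 and the two derived identities $\Fix = \Fix\circ\Stab\circ\Fix$ and $\Stab = \Stab\circ\Fix\circ\Stab$ recorded just before the statement. First I would observe that the composites $\gamma = \Fix\circ\Stab$ on $\mathcal{Q}$ and $\delta = \Stab\circ\Fix$ on $\mathcal{G}$ are closure operators: each is a composition of two order-reversing maps, hence order-preserving; properties G3 and G4 say they are extensive (that is, $Q\subseteq \gamma(Q)$ and $S\le \delta(S)$); and the two derived identities give idempotence at once. By definition a subspace is closed exactly when $\gamma(Q)=Q$ and a subgroup is closed exactly when $\delta(S)=S$, so $\mathcal{Q}_c$ and $\mathcal{G}_c$ are precisely the fixed-point sets of these closure operators. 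The identity $\Fix = \Fix\circ\Stab\circ\Fix$ moreover shows that every subspace of the form $\Fix(S)$ is closed, and dually that every subgroup $\Stab(Q)$ is closed; hence $\mathcal{Q}_c = \image(\Fix)$ and $\mathcal{G}_c = \image(\Stab)$.

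Next I would establish the dual isomorphism. Restricting $\Stab$ to $\mathcal{Q}_c$ and $\Fix$ to $\mathcal{G}_c$, the two Galois identities say exactly that these restrictions are mutually inverse: for closed $Q$ one has $\Fix(\Stab(Q)) = \gamma(Q) = Q$, and for closed $S$ one has $\Stab(\Fix(S)) = \delta(S) = S$. Thus $\Fix$ and $\Stab$ are inverse bijections between $\mathcal{G}_c$ and $\mathcal{Q}_c$, and since both are order-reversing by G1 and G2, they constitute an order-reversing lattice isomorphism, i.e.\ a dual isomorphism.

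The remaining work is the completeness and sublattice assertion, which I expect to be the main obstacle because one must verify that the operations inherited from the ambient lattices behave correctly. The key computation is that $\mathcal{Q}_c$ is closed under arbitrary intersections: given closed subspaces $\{Q_i\}$, from $\bigcap_i Q_i \subseteq Q_j$ and the order-reversing property G1 we get $\Stab(Q_j) \le \Stab(\bigcap_i Q_i)$, and then G2 together with the closedness of $Q_j$ yields $\Fix(\Stab(\bigcap_i Q_i)) \subseteq \Fix(\Stab(Q_j)) = Q_j$ for every $j$; intersecting over $j$ and combining with the extensivity G3 forces $\Fix(\Stab(\bigcap_i Q_i)) = \bigcap_i Q_i$, so the intersection is itself closed. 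Consequently arbitrary meets in $\mathcal{Q}_c$ agree with intersections in $\mathcal{Q}$, while arbitrary joins exist as the closures $\gamma(\sum_i Q_i)$, making $\mathcal{Q}_c$ a complete lattice. Transporting this complete lattice structure through the dual isomorphism established above then shows that $\mathcal{G}_c$ is likewise a complete lattice and is dual isomorphic to $\mathcal{Q}_c$, completing the argument.
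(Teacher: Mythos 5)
Your proof is correct and takes essentially the same route as the paper: the paper's entire proof is a one-line citation of Birkhoff's Theorem~10 on Galois connections, and what you have written is precisely a self-contained derivation of that general theorem from G1--G4 and the two derived identities. Your observation that joins in $\mathcal{Q}_c$ are the closures $\gamma\bigl(\sum_i Q_i\bigr)$ rather than the ambient joins is the accurate reading of what the cited theorem actually delivers, so nothing is missing.
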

\begin{proof}
This result holds for any Galois connection, see
Theorem~10 in the book by Birkhoff~\cite[page 56]{birkhoff61}.
\end{proof}

We need to characterize the closed subspaces and subgroups
to make this proposition useful. We begin with the closed subspaces
because this is easier.

\begin{lemma}
A closed subspace is a stabilizer code or is 0-dimensional.
\end{lemma}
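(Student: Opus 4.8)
The plan is to show both inclusions in the Galois closure identity $Q = \Fix(\Stab(Q))$, using the characterization that closed subspaces are exactly those satisfying G3 with equality. One direction, $Q \subseteq \Fix(\Stab(Q))$, is already property G3 and holds for free. So a ``closed'' subspace is precisely one for which the reverse inclusion also holds, i.e. $\Fix(\Stab(Q)) \subseteq Q$, making the two sets equal. The heart of the argument is therefore to show that whenever a subspace $Q$ equals $\Fix(\Stab(Q))$, it is automatically a stabilizer code (or the zero space).

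First I would set $S = \Stab(Q)$, a subgroup of the error group $G_n$ by definition~(\ref{eq:Stab}). By the closedness assumption we have $Q = \Fix(S)$. Now I simply compare this against the \emph{definition} of a stabilizer code: equation~(\ref{eq:stab}) says $Q$ is a stabilizer code precisely when $Q = \bigcap_{E\in S'}\{v \mid Ev=v\}$ for \emph{some} subgroup $S'$ of $G_n$, and the right-hand side is exactly $\Fix(S')$ by~(\ref{eq:Fix}). Taking $S' = S = \Stab(Q)$, the equality $Q = \Fix(S)$ is literally the defining condition for $Q$ to be a stabilizer code. Thus every nonzero closed subspace is a stabilizer code, and if $Q$ is the zero space it is $0$-dimensional, which is the stated alternative.

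The one subtlety worth flagging is the edge case where $S = \Stab(Q)$ may be a trivial or scalar-only subgroup, or where the fixed space could collapse. If $Q = \{0\}$, then $\Stab(Q) = G_n$ (every error fixes the zero vector vacuously) and $\Fix(G_n) = \{0\}$, so the closure identity holds and we land in the $0$-dimensional case; this is consistent and needs no special treatment beyond acknowledging it. For nonzero $Q$, the definition in~(\ref{eq:stab}) requires $Q$ to be a nonzero subspace, which is satisfied.

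The main obstacle, such as it is, is conceptual rather than technical: one must recognize that the \emph{definition} of a stabilizer code (a fixed space of some subgroup) coincides with the closure condition once the ``some subgroup'' is instantiated as $\Stab(Q)$ itself. There is essentially no computation; the proof is a one-line identification modulo carefully handling the $0$-dimensional degenerate case. I would write it in two or three sentences, invoking definitions~(\ref{eq:stab}), (\ref{eq:Fix}), and~(\ref{eq:Stab}) explicitly.
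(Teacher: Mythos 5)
Your proof is correct and follows exactly the paper's argument: both unpack $Q=\Fix(\Stab(Q))$ and observe that this is literally the defining condition~(\ref{eq:stab}) of a stabilizer code with $S=\Stab(Q)$, with the zero space as the degenerate alternative. Your extra remark on the edge case $Q=\{0\}$ is a harmless elaboration the paper leaves implicit.
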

\begin{proof}
By definition, a closed subspace $Q$ satisfies
$$ Q = \Fix(\Stab(Q)) =
\bigcap_{E\in \Stab(Q)} \{ v\in \C^{q^n}\,|\, Ev=v\},$$
hence is a stabilizer code or $\{0\}$.
\end{proof}

\begin{lemma}\label{th:stab}
If $Q$ is a nonzero subspace of $\C^{q^n}$, then its stabilizer
$S=\Stab(Q)$ is an abelian group satisfying $S\cap Z(G_n)=\{1\}$.
\end{lemma}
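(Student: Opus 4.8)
The plan is to prove two things about $S = \Stab(Q)$: that it is abelian, and that it intersects the center $Z(G_n)$ trivially. Both follow from the defining property that every element of $S$ fixes every vector of $Q$ pointwise, together with the fact that $Q$ is nonzero.

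First I would show $S \cap Z(G_n) = \{1\}$. The center of $G_n$ consists precisely of the scalar multiples of the identity, namely the elements $\omega^c X(\mathbf{0})Z(\mathbf{0}) = \omega^c I$ for $c \in \F_p$, since by Lemma~\ref{th:commute} an element $\omega^c X(\mathbf{a})Z(\mathbf{b})$ commutes with all of $G_n$ exactly when the trace-symplectic form $\tr(\mathbf{b\cdot a'} - \mathbf{b'\cdot a})$ vanishes for every $(\mathbf{a'}|\mathbf{b'})$, forcing $\mathbf{a} = \mathbf{b} = \mathbf{0}$. Now suppose $\omega^c I \in S$. Picking any nonzero $v \in Q$ (which exists since $Q \neq \{0\}$), the stabilizing condition gives $\omega^c v = v$, so $(\omega^c - 1)v = 0$, whence $\omega^c = 1$ and the element is the identity. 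Thus $S \cap Z(G_n) = \{1\}$.

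Next I would show $S$ is abelian. Take two elements $E, E' \in S$. Since both fix every vector of $Q$, their commutator behaves simply on $Q$: for any $v \in Q$ we have $EE'v = Ev = v = E'v = E'Ev$, so $E^{-1}E'^{-1}EE'$ acts as the identity on $Q$. By Lemma~\ref{th:commute}, $EE' = \omega^{t} E'E$ where $t = \tr(\mathbf{b\cdot a'} - \mathbf{b'\cdot a})$, so the commutator $E^{-1}E'^{-1}EE'$ equals the scalar $\omega^{t} I$. This scalar is itself an element of $S$ (as $S$ is a group containing $E, E'$), and it lies in $Z(G_n)$; by the previous paragraph it must therefore be the identity, forcing $\omega^t = 1$. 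Hence $EE' = E'E$, and $S$ is abelian.

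The main obstacle, such as it is, is the clean identification of the center $Z(G_n)$ with the scalars, which I would justify directly from the commutation rule in Lemma~\ref{th:commute} rather than taking it for granted. Once that is in hand, both conclusions are immediate from the rigidity of the eigenvalue-$1$ condition: any scalar that fixes a nonzero vector must be $1$. The argument for abelianness cleverly reduces to the center-triviality statement by observing that the commutator of two stabilizer elements is always a central scalar, so the two parts of the lemma are really one idea applied twice.
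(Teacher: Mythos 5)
Your proof is correct and uses essentially the same idea as the paper: by Lemma~\ref{th:commute} the elements $EE'$ and $E'E$ differ by a scalar $\omega^k$, and any scalar fixing a nonzero vector of $Q$ must be $1$. The paper derives the contradiction $v = EE'v = \omega^k v$ directly rather than routing through the commutator being a central element of $S$, but this is only a cosmetic reorganization of the same argument.
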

\begin{proof}
Suppose that $E$ and $E'$ are non-commuting elements of
$S=\Stab(Q)$. By Lemma~\ref{th:commute}, we have $EE'=\omega^k E'E$
for some $\omega^k\neq 1$.  A nonzero vector $v$ in~$Q$ would have to
satisfy $ v=EE'v=\omega^kE'Ev=\omega^kv,$ contradiction. Therefore,
$S$ is an abelian group. The stabilizer cannot contain any element
$\omega^k\onemat$, unless $k=0$, which proves the second assertion.
\end{proof}

\begin{lemma}\label{th:projection}
Suppose that $S$ is the stabilizer of a vector space $Q$.
An orthogonal projector onto the joint eigenspace $\Fix(S)$ is given by
$$ P= \frac{1}{|S|} \sum_{E\in S} E. $$\index{projector!stabilizer code}
\end{lemma}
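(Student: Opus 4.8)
The plan is to verify that $P$ is a Hermitian idempotent---hence an orthogonal projector onto its image---and then to identify that image with $\Fix(S)$. The whole argument rests on a single observation: $S$ is a genuine group of distinct unitary matrices. By Lemma~\ref{th:stab} the stabilizer $S$ is abelian and meets the center trivially, so its elements are pairwise distinct as matrices and the abstract group product coincides with matrix multiplication. This lets me freely reindex sums over $S$ by left multiplication or by inversion.

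First I would establish idempotency. Writing $P^2 = \frac{1}{|S|^2}\sum_{E\in S}\sum_{E'\in S} EE'$, I fix $E$ and note that $E'\mapsto EE'$ is a bijection of $S$ onto itself, so the inner sum equals $\sum_{F\in S}F$ independently of $E$. Summing the $|S|$ identical contributions gives $P^2 = \frac{1}{|S|^2}\,|S|\sum_{F\in S}F = P$.

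Next I would check self-adjointness. Every element of $G_n$ is unitary, being a product of the unitaries $X(a)$ and $Z(b)$ with a root-of-unity scalar, so $E^\dagger = E^{-1}$ for each $E\in S$; since $S$ is a group, $E\mapsto E^{-1}$ permutes $S$. Hence $P^\dagger = \frac{1}{|S|}\sum_{E\in S}E^{-1} = \frac{1}{|S|}\sum_{F\in S}F = P$. Combined with $P^2=P$, this shows $P$ is an orthogonal projector.

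It remains to compute $\image(P)$. For any vector $v$ and any $E'\in S$, the same reindexing gives $E'Pv = \frac{1}{|S|}\sum_{E\in S}E'E\,v = Pv$, so $Pv$ is fixed by every element of $S$ and therefore lies in $\Fix(S)$. Conversely, if $v\in\Fix(S)$ then $Ev=v$ for all $E\in S$, whence $Pv = \frac{1}{|S|}\sum_{E\in S}v = v$. Thus $P$ maps everything into $\Fix(S)$ and acts as the identity there, which exactly characterizes the orthogonal projector onto $\Fix(S)$. The only place demanding care is the reindexing step, and the hypothesis $S\cap Z(G_n)=\{1\}$ from Lemma~\ref{th:stab} is precisely what guarantees that the summands are distinct matrices, so that these bijections are legitimate.
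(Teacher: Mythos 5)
Your proof is correct and follows essentially the same route as the paper: idempotency via the reindexing $EP=P$, self-adjointness from $E^\dagger=E^{-1}\in S$, and the two inclusions identifying $\image P$ with $\Fix(S)$. The only quibble is your closing remark: the reindexing is legitimate simply because $S$ is a subgroup of the (invertible) matrix group $G_n$, so left multiplication and inversion permute $S$; the condition $S\cap Z(G_n)=\{1\}$ from Lemma~\ref{th:stab} is not actually needed for that step.
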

\begin{proof}
A vector $v$ in $\Fix(S)$ satisfies $Pv=v$, hence $\Fix(S)$ is
contained in the image of $P$. Conversely, note that $EP=P$ holds for
all $E$ in $S$, hence any vector in the image of $P$ is an eigenvector
with eigenvalue $1$ of all error operators $E$ in $S$. Therefore, $\Fix(S)=
\image P$. The operator $P$ is idempotent, because
$$ P^2 = \frac{1}{|S|} \sum_{E\in S} EP = \frac{1}{|S|} \sum_{E\in S} P = P$$
holds. The inverse $E^\dagger$ of $E$ is contained in the group $S$, hence $P^\dagger = P$. Therefore, $P$ is an orthogonal projector onto $\Fix(S)$.
\end{proof}

\textit{Remark.} If $S$ is a nonabelian subgroup of the group
$G_n$, then it necessarily contains the center $Z(G_n)$ of $G_n$; it
follows that $P$ is equal to the all-zero matrix. Note that the image
of $P$ has dimension $\Tr(P)=q^n/|S|$.

\begin{lemma}\label{th:closedsubgroup}
A subgroup $S$ of\/ $G_n$ is closed if and only if
$S$ is an abelian
subgroup that satisfies $S\cap Z(G_n)=\{1\}$ or if
$S$ is equal to $G_n$.
\end{lemma}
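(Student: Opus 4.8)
The plan is to prove both implications of the biconditional, leaning on the projector of Lemma~\ref{th:projection}, the structural Lemma~\ref{th:stab}, and the trace computation from the nice error basis lemma.

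For the forward direction, suppose $S$ is closed, so that $S=\Stab(\Fix(S))$. I would split on whether $\Fix(S)$ is zero. If $\Fix(S)\neq\{0\}$, then it is a nonzero subspace whose stabilizer is exactly $S$, and Lemma~\ref{th:stab} immediately gives that $S$ is abelian with $S\cap Z(G_n)=\{1\}$. If instead $\Fix(S)=\{0\}$, then since the condition ``$Ev=v$ for all $v\in\{0\}$'' is vacuous, $\Stab(\{0\})=G_n$, and closedness forces $S=\Stab(\Fix(S))=G_n$.

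For the converse I would treat the two alternatives separately. If $S=G_n$, I observe that $G_n$ is nonabelian: by Lemma~\ref{th:commute} the nondegeneracy of the trace form produces $a,b$ with $\tr(ba)\neq 0$, so $X(a)$ and $Z(b)$ fail to commute. Hence the Remark after Lemma~\ref{th:projection} gives $\Fix(G_n)=\{0\}$, and $\Stab(\{0\})=G_n$, so $G_n$ is closed. If instead $S$ is abelian with $S\cap Z(G_n)=\{1\}$, the heart of the argument is a dimension count. The projector formula of Lemma~\ref{th:projection} holds for any subgroup, since its proof uses only closure under products and inverses, so the orthogonal projector onto $\Fix(S)$ is $P=\frac{1}{|S|}\sum_{E\in S}E$ and $\dim\Fix(S)=\Tr(P)=\frac{1}{|S|}\sum_{E\in S}\Tr(E)$. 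The trace computation from the nice error basis lemma shows $\Tr(X(\mbf a)Z(\mbf b))=0$ unless $(\mbf a|\mbf b)=(\mbf 0|\mbf 0)$; since $S\cap Z(G_n)=\{1\}$, the only scalar in $S$ is the identity, so the sum collapses to $\Tr(\one)=q^n$ and $\dim\Fix(S)=q^n/|S|$. In particular $\Fix(S)\neq\{0\}$.

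To close the loop, set $S'=\Stab(\Fix(S))$, so that $S\le S'$ by property G4. Since $\Fix(S)$ is nonzero, Lemma~\ref{th:stab} shows $S'$ is abelian with $S'\cap Z(G_n)=\{1\}$, and the same trace computation applied to $S'$ gives $\dim\Fix(S')=q^n/|S'|$. The Galois identity $\Fix(\Stab(\Fix(S)))=\Fix(S)$ yields $\Fix(S')=\Fix(S)$, hence $q^n/|S'|=q^n/|S|$ and $|S|=|S'|$; combined with $S\le S'$ and finiteness this forces $S=S'$, i.e.\ $S$ is closed. The main obstacle is precisely this dimension count: obtaining $\dim\Fix(S)=q^n/|S|$ hinges on the identity being the unique scalar in $S$, and matching it against the same quantity for $S'$ is what pins down $|S|=|S'|$.
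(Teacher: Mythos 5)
Your proof is correct and follows essentially the same route as the paper: the forward direction splits on whether $\Fix(S)$ is zero and invokes Lemma~\ref{th:stab}, and the converse sets $S^*=\Stab(\Fix(S))$, uses the Galois identity $\Fix(S^*)=\Fix(S)$ together with the trace formula $\dim\Fix(S)=q^n/|S|$ to get $|S|=|S^*|$, and concludes from $S\le S^*$. Your added detail that the trace count collapses to $q^n$ precisely because $S\cap Z(G_n)=\{1\}$ is a worthwhile clarification of a step the paper leaves implicit, but the argument is the same.
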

\begin{proof}
Suppose that $S$ is a closed subgroup of $G_n$. The vector space
$Q=\Fix(S)$ is, by definition, either a stabilizer code or a
0-dimensional vector space.  We have $\Stab(\{0\})=G_n$. Furthermore,
if $Q\neq \{0\}$, then $\Stab(Q)=S$ is an abelian group satisfying
$S\cap Z(G_n)=\{\onemat\}$, by Lemma~\ref{th:stab}.

Conversely, suppose that $S$ is an abelian subgroup of $G_n$ such that
$S$ trivially intersects the center $Z(G_n)$.  Let
$S^*=\Stab(\Fix(S)).$ We have
$\Fix(S^*)=\Fix(\Stab(\Fix(S)))=\Fix(S),$ because this holds for any
pair of maps that form a Galois connection.
It follows from Lemma~\ref{th:projection} that
$$ q^n/|S^*| = \Tr\left(\frac{1}{|S^*|}\sum_{E\in S^*} E\right) =
\Tr\left(\frac{1}{|S|}\sum_{E\in S} E\right) = q^n/|S|.$$ 
Since $S\le S^*$, this shows that $S=S^*=\Stab(\Fix(S))$; hence, $S$ is
a closed subgroup of $G_n$. We note that $\Fix(G_n)=\{0\}$, so that
$G_n=\Stab(\Fix(G_n))$ is closed.
\end{proof}

The stabilizer codes are easier to study than arbitrary quantum codes,
as we will see in the subsequent sections. If we know the error
correction capabilities of stabilizer codes, then we sometimes get a
lower bound on the minimum distance of an arbitrary code by the
following simple observation: 
\smallskip

\noindent\textbf{Fact.} An arbitrary quantum code $Q$
is contained in the larger stabilizer code given by $Q^*=\Fix(\Stab(Q))$.  If
an error $E$ can be detected by $Q^*$, then it can be detected by $Q$
as well. Therefore, if the stabilizer code $Q^*$ has minimum distance
$d$, then the quantum code $Q$ has at least minimum distance $d$.

\section{Additive Codes}\label{sect:additive}
The previous section explored the relation between stabilizer codes
and other quantum codes. We show next how stabilizer codes are related
to classical codes (namely, additive codes over $\F_q$ or
$\F_{q^2}$). The classical codes allow us to characterize the errors
in $G_n$ that are detectable by the stabilizer code.

In the binary case, the problem of finding stabilizer codes of length
$n$ had been translated into (a) finding binary classical codes of
length $2n$ that are self-orthogonal with respect to a symplectic
inner product or (b) finding classical codes of length $n$ over $\F_4$
that are self-orthogonal with respect to a trace-inner product,
see~\cite{calderbank98}. The approach (a) was generalized to prime
alphabets by Rains~\cite{rains99} and to prime-power alphabets by
Ashikhmin and Knill~\cite{ashikhmin01}. We simplify the arguments and
include a full proof of this connection.
There were many attempts to generalize
the approach (b) to nonbinary alphabets, but without complete success
(but see for instance \cite{rains99,matsumoto00,kim04} for notable
partial solutions). We fill this gap and introduce a natural
generalization of~(b). Furthermore, we discuss simpler constructions
for linear codes. Before exploring these connections to classical
codes, we first recall some facts about detectable errors.

If $S$ is a subgroup of $G_n$, then
$C_{G_n}(S)$ denotes centralizer of $S$ in $G_n$,
$$C_{G_n}(S)=\{ E\in G_n\,|\, EF=FE \text{ for all } F\in S\},$$ and
$SZ(G_n)$ denotes the group generated by $S$ and the center
$Z(G_n)$.  We first recall the following characterization of
detectable errors (see also~\cite{ashikhmin01}; the interested reader
can find a more general approach in~\cite{knill96b,klappenecker033}).

\begin{lemma}\label{th:detectable}
Suppose that $S \le G_n$ is the stabilizer group of a stabilizer
code~$Q$ of dimension $\dim Q>1$.  An error $E$ in $G_n$ is detectable by the
quantum code $Q$ if and only if either $E$ is an element of $SZ(G_n)$ or $E$
does not belong to the centralizer $C_{G_n}(S)$.
\end{lemma}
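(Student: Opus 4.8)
The plan is to combine the Knill--Laflamme detectability criterion, namely that $E$ is detectable by $Q$ if and only if $PEP=\lambda_E P$ for some scalar $\lambda_E$, with the explicit stabilizer projector $P=\frac{1}{|S|}\sum_{F\in S}F$ supplied by Lemma~\ref{th:projection}. The entire argument then reduces to computing $PEP$ for an arbitrary $E\in G_n$ and reading off exactly when the result is a scalar multiple of $P$.

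First I would exploit the commutation relation of Lemma~\ref{th:commute}: for $E=\omega^cX(\mathbf{a})Z(\mathbf{b})$ and $F=\omega^{c'}X(\mathbf{a'})Z(\mathbf{b'})$ in $S$ one has $FE=\chi_E(F)\,EF$ with $\chi_E(F)=\omega^{\tr(\mathbf{b'\cdot a-b\cdot a'})}$. Bilinearity of the trace-symplectic form makes $F\mapsto\chi_E(F)$ a character of the abelian group $S$, and it is the trivial character precisely when $E$ commutes with every element of $S$, i.e. when $E\in C_{G_n}(S)$. Since $FP=P$ for all $F\in S$, I get $FEP=\chi_E(F)\,EFP=\chi_E(F)\,EP$, and averaging over $S$ yields $PEP=\big(\frac{1}{|S|}\sum_{F\in S}\chi_E(F)\big)EP$. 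By orthogonality of characters the prefactor is $1$ if $\chi_E$ is trivial and $0$ otherwise. Hence if $E\notin C_{G_n}(S)$ then $PEP=0=0\cdot P$, so $E$ is detectable; this already disposes of the ``$E$ does not belong to $C_{G_n}(S)$'' clause.

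It then remains to treat $E\in C_{G_n}(S)$, where $PEP=EP$, so detectability is equivalent to $EP=\lambda P$ for some scalar $\lambda$. If $E\in SZ(G_n)$, say $E=\omega^kF_0$ with $F_0\in S$, then $EP=\omega^kF_0P=\omega^kP$ and $E$ is detectable. The decisive case is $E\in C_{G_n}(S)\setminus SZ(G_n)$, where I must show $E$ is \emph{not} detectable; this is the step I expect to be the crux. Here I would argue by traces. Using property (c) of the nice error basis, $\Tr(EF)\neq0$ only when $EF\in Z(G_n)$, i.e. only when $E\in SZ(G_n)$; since $E\notin SZ(G_n)$, every summand of $\Tr(EP)=\frac{1}{|S|}\sum_{F\in S}\Tr(EF)$ vanishes, so $\Tr(EP)=0$. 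If $EP$ were equal to $\lambda P$, then $0=\Tr(EP)=\lambda\,\Tr(P)=\lambda\dim Q$, forcing $\lambda=0$ and hence $EP=0$; but $E$ is unitary, so this gives $P=0$, contradicting $\dim Q>1$. Thus no such scalar exists, $E$ is undetectable, and the stated equivalence follows.
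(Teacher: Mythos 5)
Your proof is correct, and for the decisive case it takes a genuinely different route from the paper. The first two clauses essentially coincide: your character-sum computation $PEP=\bigl(\frac{1}{|S|}\sum_{F\in S}\chi_E(F)\bigr)EP$ is just a repackaging of the paper's matrix-element identity $\bra{u}E\ket{v}=\lambda\bra{u}E\ket{v}$ for non-commuting $E$, and the $E\in SZ(G_n)$ case is identical. The difference is in showing that $E\in C_{G_n}(S)\setminus SZ(G_n)$ is \emph{not} detectable. The paper adjoins $\lambda_E^{-1}E$ to $S$ to form an enlarged abelian group $S^*$ and derives a contradiction from the dimension count $q^n/|S^*|<q^n/|S|=\dim Q$; this gives structural insight (an undetectable error in the centralizer would enlarge the stabilizer) but quietly relies on re-running the projector-trace formula for $S^*$. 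Your trace argument --- $\Tr(EF)=0$ for every $F\in S$ because $EF\notin Z(G_n)$, hence $\Tr(EP)=0$, hence $\lambda=0$, hence $EP=0$, contradicting unitarity of $E$ and $P\neq 0$ --- is more elementary and self-contained, using only property (c) of the nice error basis. It is in fact the same observation the paper itself later exploits in the weight-enumerator section (``$\Tr(EP)$ is nonzero if and only if $E^\dagger\in SZ(G_n)$''). One small remark: your contradiction only needs $\dim Q\ge 1$ rather than $\dim Q>1$; this is harmless, since for $\dim Q=1$ one has $C_{G_n}(S)=SZ(G_n)$ and the case is vacuous anyway.
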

\begin{proof}
An element $E$ in $SZ(G_n)$ is a scalar multiple of a stabilizer;
thus, it acts by multiplication with a scalar $\lambda_E$ on $Q$.
It follows that $E$ is a detectable error.

Suppose now that $E$ is an error in $G_n$ that does not commute with some
element $F$ of the stabilizer~$S$; it follows that $EF= \lambda FE$ for some
complex number $\lambda \neq 1$, see Lemma~\ref{th:commute}.
All vectors $u$ and $v$ in $Q$ satisfy the condition
\begin{equation}\label{eq:noncommute}
\bra{u} E\ket{v} =\bra{u} EF\ket{v}=\lambda \bra{u} FE\ket{v}=
\lambda \bra{u}E\ket{v};
\end{equation}
hence, $\bra{u}E\ket{v}=0$. It follows that the error $E$ is detectable.

Finally, suppose that $E$ is an element of $C_{G_n}(S)\setminus
SZ(G_n)$.  Seeking a contradiction, we assume that $E$ is detectable;
this implies that there exists a complex scalar $\lambda_E$ such that
$Ev=\lambda_E v$ for all $v$ in $Q$.  The scalar $\lambda_E$ cannot be zero
because $E$ commutes with the elements of $S$, so
$EP=PEP=\lambda_EP$ and clearly $EP\neq 0$.
Let $S^*$ denote the abelian
group generated by $\lambda_E^{-1} E$ and by the elements of $S$.  The
joint eigenspace of $S^*$ with eigenvalue 1 has dimension $q^n/|S^*| <
\dim Q=q^n/|S|$. This implies that not all vectors in $Q$ remain invariant under
$\lambda_E^{-1} E$, in contradiction to the detectability of $E$.
\end{proof}

\begin{corollary}
If a stabilizer code $Q$ has minimum distance $d$ and is pure to~$t$,
then all errors $E\in G_n$ with $1\le \wt(E)<\min\{t,d\}$ satisfy
$\langle u|E|v\rangle=0$ for all $u$ and $v$ in $Q$.
\end{corollary}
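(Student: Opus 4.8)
The plan is to combine the detectability characterization in Lemma~\ref{th:detectable} with the weight/purity hypotheses, arguing by cases on how a low-weight error $E$ sits relative to the groups $SZ(G_n)$ and $C_{G_n}(S)$. Let $E \in G_n$ be an error with $1 \le \wt(E) < \min\{t,d\}$, where $Q$ has minimum distance $d$ and is pure to $t$. Since $\dim Q > 1$ (as $d$ is a well-defined minimum distance for a genuine code), Lemma~\ref{th:detectable} applies and tells us exactly which errors are detectable. The goal is to show that $E$ is detectable and that it cannot be a nonzero scalar multiple of a stabilizer element, so that the detectability condition $\bra{u} E \ket{v} = \lambda_E \bra{u}v\rangle$ forces $\bra{u} E \ket{v} = 0$ for orthogonal basis vectors, and hence for all $u,v \in Q$ with $u \perp v$.

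First I would verify detectability. By Lemma~\ref{th:detectable}, $E$ fails to be detectable precisely when $E \in C_{G_n}(S) \setminus SZ(G_n)$. I would rule this out: if $E$ were such an error, then the abelian group generated by $S$ together with a suitable scalar multiple of $E$ would be a strictly larger abelian stabilizer, and its fixed space would be a proper subspace of $Q$; but more directly, an undetectable error of weight strictly less than $d$ would contradict the definition of minimum distance, since $Q$ detects all errors of weight less than $d$. Thus $E$ is detectable, so either $E \in SZ(G_n)$ or $E \notin C_{G_n}(S)$.

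Next I would handle the purity condition to control $\lambda_E$. If $E \notin C_{G_n}(S)$, then $E$ anticommutes (up to a root of unity) with some stabilizer element, and equation~(\ref{eq:noncommute}) directly gives $\bra{u} E \ket{v} = 0$ for all $u, v \in Q$, which is the desired conclusion (in particular for orthogonal $u,v$). The remaining case is $E \in SZ(G_n)$, i.e.\ $E = \omega^c F$ for some $F \in S$ and scalar $\omega^c$. Here purity enters: since $Q$ is pure to $t$ and $\wt(E) = \wt(F) < t$, the stabilizer $S$ contains no non-scalar element of weight less than $t$; hence $F$ must be a scalar multiple of the identity, forcing $\wt(E) = 0$. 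This contradicts $\wt(E) \ge 1$, so this case does not occur. Combining the two cases, the only surviving possibility yields $\bra{u} E \ket{v} = 0$.

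I expect the main subtlety to be the bookkeeping around the two groups and the scalar $\lambda_E$ rather than any deep computation. The delicate point is recognizing that purity to $t$ is exactly what excludes the ``harmless'' detectable errors lying in $SZ(G_n)$ of low weight, while the minimum distance $d$ excludes the undetectable errors in $C_{G_n}(S) \setminus SZ(G_n)$; the hypothesis $\wt(E) < \min\{t,d\}$ is precisely tuned so that both escape routes are closed simultaneously. Once this case analysis is set up cleanly, each case reduces to either equation~(\ref{eq:noncommute}) or the definition of purity, with no further calculation required.
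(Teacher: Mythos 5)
Your proof is correct and follows essentially the same route as the paper's: detectability from $\wt(E)<d$, exclusion of $E$ from $SZ(G_n)$ via purity to $t>\wt(E)$, and then equation~(\ref{eq:noncommute}) to conclude $\bra{u}E\ket{v}=0$. The only difference is cosmetic — you organize it as a two-case analysis where the paper chains the implications directly — and your opening remark about orthogonal vectors is an unnecessary detour, since equation~(\ref{eq:noncommute}) already yields the vanishing for all $u,v\in Q$.
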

\begin{proof}
By assumption, the weight of $E$ is less than the minimum distance, so
the error is detectable. However, $E$ is not an element of $Z(G_n)S$,
since the code is pure to $t>\wt(E)$. Therefore, $E$ does not belong
to $C_{G_n}(S)$, and the claim follows from equation~(\ref{eq:noncommute}).
\end{proof}

\subsection{Codes over $\F_q$}
Lemma~\ref{th:detectable} characterizes the error detection capabilities of a
stabilizer code with stabilizer group $S$ in terms of the groups
$SZ(G_n)$ and $C_{G_n}(S)$. The phase information of an element in
$G_n$ is not relevant for questions concerning the detectability,
since an element $E$ of $G_n$ is detectable if and only if $\omega E$
is detectable. Thus, if we associate with an element
$\omega^cX(\mathbf{a})Z(\mathbf{b})$ of $G_n$ an element
$(\mathbf{a}|\mathbf{b})$ of $\F_q^{2n}$, then the group $SZ(G_n)$ is
mapped to the additive code
$$C=\{ (\mathbf{a}|\mathbf{b})\,|\,
\omega^cX(\mathbf{a})Z(\mathbf{b})\in SZ(G_n)\}= SZ(G_n)/Z(G_n).$$ To
describe the image of the centralizer, we need the notion of a
trace-symplectic form of \index{inner product!trace-symplectic}
two vectors $(\mathbf{a}|\mathbf{b})$ and
$(\mathbf{a'}|\mathbf{b'})$ in $\F_{q}^{2n}$,
$$ \< (\mathbf{a}|\mathbf{b})\, |\, (\mathbf{a'}|\mathbf{b'}) >s =
\tr_{q/p}(\mathbf{b}\cdot \mathbf{a}' - \mathbf{b}'\cdot
\mathbf{a}).$$ The centralizer $C_{G_n}(S)$ contains all elements of
$G_n$ that commute with each element of $S$; thus, by
Lemma~\ref{th:commute}, $C_{G_n}(S)$ is mapped onto
the trace-symplectic dual code $C^\sdual$ of the code $C$,
$$ C^\sdual
=\{ (\mathbf{a}|\mathbf{b})\,|\, \omega^cX(\mathbf{a})Z(\mathbf{b})\in C_{G_n}(S)\}.
$$
The connection between these classical codes and the stabilizer code
is made precise in the next theorem. This theorem is essentially
contained in~\cite{ashikhmin01} and generalizes the well-known
connection to symplectic codes~\cite{calderbank98,gottesman96} of the
binary case.

\begin{theorem}\label{th:stabilizer}
An $((n,K,d))_q$ stabilizer code exists if and only if there exists an
additive code $C \le \F_q^{2n}$ of size $|C|=q^n/K$ such that $C\le
C^\sdual$ and $\swt(C^{\sdual} \setminus C)=d$ if $K>1$ (and
$\swt(C^\sdual)=d$ if $K=1$).
\end{theorem}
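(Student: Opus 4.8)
The plan is to build an explicit dictionary between the subgroup $S$ stabilizing $Q$ and the additive code $C$, translating every coding-theoretic quantity through the map $\phi\colon G_n\to\F_q^{2n}$ sending $\omega^c X(\mathbf{a})Z(\mathbf{b})$ to $(\mathbf{a}|\mathbf{b})$. By the multiplication rule~(\ref{eq:multrule}), $\phi$ is a surjective group homomorphism onto $(\F_q^{2n},+)$ with kernel exactly $Z(G_n)$, so it is the natural bridge between the lattice of subgroups and the lattice of codes, and the whole proof amounts to pushing $S$, $C_{G_n}(S)$, and the weight function through $\phi$.

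For the forward implication I would start from an $((n,K,d))_q$ stabilizer code $Q$ with $S=\Stab(Q)$. By Lemma~\ref{th:stab}, $S$ is abelian with $S\cap Z(G_n)=\{\one\}$, so $\phi$ is injective on $S$ and I set $C=\phi(SZ(G_n))=SZ(G_n)/Z(G_n)$. The size bookkeeping is then immediate: $|C|=|S|$ since $\ker(\phi|_{SZ(G_n)})=Z(G_n)$, while Lemma~\ref{th:projection} gives $\dim Q=K=q^n/|S|$, whence $|C|=q^n/K$. Self-orthogonality $C\le C^\sdual$ is precisely the statement that the elements of $S$ pairwise commute, read off from Lemma~\ref{th:commute}. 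Finally, Lemma~\ref{th:detectable} identifies the undetectable errors as exactly $C_{G_n}(S)\setminus SZ(G_n)$; since $\phi(C_{G_n}(S))=C^\sdual$ and $\w(E)=\swt(\phi(E))$, the least weight of an undetectable error equals $\swt(C^\sdual\setminus C)$, which is $d$. When $K=1$ the detectability lemma does not apply, but then $|C|=q^n$ forces $C=C^\sdual$, the difference $C^\sdual\setminus C$ is empty, and the distance is governed by the purity convention for $[[n,0,d]]_q$ codes, giving $d=\swt(C^\sdual)$.

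The converse is where the real content sits. Given a self-orthogonal $C\le C^\sdual$, I would form the preimage $N=\phi^{-1}(C)=\{\omega^c X(\mathbf{a})Z(\mathbf{b})\mid (\mathbf{a}|\mathbf{b})\in C,\ c\in\F_p\}$; Lemma~\ref{th:commute} together with $C\le C^\sdual$ shows $N$ is abelian. To obtain a genuine stabilizer I must split off the center, i.e. exhibit $S\le N$ with $\phi(S)=C$ and $S\cap Z(G_n)=\{\one\}$, so that $|S|=|C|=q^n/K$. For odd $p$ this is clean: a direct computation of the power gives $(X(\mathbf{a})Z(\mathbf{b}))^{p}=\omega^{\binom{p}{2}\tr(\mathbf{b}\cdot\mathbf{a})}\one=\one$, so every element of $N$ has order dividing $p$ and $N$ is an $\F_p$-vector space; then $Z(G_n)$ is a one-dimensional subspace and any vector-space complement serves as $S$. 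Setting $Q=\Fix(S)$ yields a stabilizer code by definition with $\dim Q=q^n/|S|=K$ by Lemma~\ref{th:projection}, and the distance argument is identical to the forward direction.

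I expect the main obstacle to be exactly this lifting step in characteristic $2$, where the vector-space argument collapses: there $\omega=-1$ and $(X(\mathbf{a})Z(\mathbf{b}))^{2}=(-1)^{\tr(\mathbf{b}\cdot\mathbf{a})}\one$ can equal $-\one$, so $N$ may contain elements of order $4$ and need not split as a direct product with $Z(G_n)$. The crux is then to choose signs $\epsilon(v)\in\{\pm1\}$ so that $v\mapsto\epsilon(v)X(\mathbf{a})Z(\mathbf{b})$ is a homomorphic section of $\phi$ over $C$; the obstruction is controlled by the quadratic function $v\mapsto\tr(\mathbf{b}\cdot\mathbf{a})$, whose polarization is the trace-symplectic form and hence vanishes on the totally isotropic subspace $C$, reducing the obstruction to a linear condition that has to be analyzed with care. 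I would isolate this sign-selection as the key lemma, since once a complement $S$ is in hand the remaining claims are routine translations through $\phi$.
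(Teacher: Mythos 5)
Your forward direction and your treatment of the converse in odd characteristic are correct, and the latter is a genuinely different (and more elementary) route than the paper's: you observe that for odd $p$ the preimage $N=\phi^{-1}(C)$ is elementary abelian, so $Z(G_n)$ has a vector-space complement $S$, which is then a closed subgroup by Lemma~\ref{th:closedsubgroup} and gives $Q=\Fix(S)$ directly. The paper instead never splits the extension explicitly: it chooses a linear character $\chi$ of the abelian group $N$ extending the tautological character of $Z(G_n)$ and works with the projector $P_N=|N|^{-1}\sum_{E\in N}\chi(E^{-1})E$, extracting the stabilizer afterwards as the set of elements of $N$ acting trivially on $\image P_N$. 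Your approach buys transparency about where the code sits inside $G_n$; the paper's buys a uniform formula for the projector and its trace.

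The characteristic-$2$ step, however, is not merely delicate --- the key lemma you propose to isolate is false. You correctly note that the cocycle $c(v,v')=\tr(\mathbf{b}\cdot\mathbf{a}')$ polarizes to the trace-symplectic form and hence that the quadratic function $Q(v)=\tr(\mathbf{a}\cdot\mathbf{b})$ restricts to a \emph{linear} functional on the isotropic subspace $C$; but that linear functional is precisely the obstruction to splitting, it need not vanish, and no choice of signs can affect it: $(\epsilon\, X(\mathbf{a})Z(\mathbf{b}))^2=\omega^{\tr(\mathbf{b}\cdot\mathbf{a})}\onemat$ for either sign $\epsilon=\pm1$, since $\epsilon^2=1$. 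Self-orthogonality does not force $\tr(\mathbf{a}\cdot\mathbf{b})=0$ on $C$: for $q=2$, $n=1$, the code $C=\{(0|0),(1|1)\}$ satisfies $C=C^\sdual$, yet its preimage $N=\{\pm\onemat,\pm X(1)Z(1)\}$ is cyclic of order $4$ because $(X(1)Z(1))^2=-\onemat$, so $Z(G_n)$ admits no complement in $N$ and no homomorphic section of $\phi$ over $C$ exists. Thus your plan of realizing $C$ itself as $SZ(G_n)/Z(G_n)$ cannot succeed in general when $p=2$; the order-$4$ phenomenon here is exactly what forces the usual convention of adjoining $i$ to the binary error group (note $\bigl(i^{\tr(\mathbf{a}\cdot\mathbf{b})}X(\mathbf{a})Z(\mathbf{b})\bigr)^2=\onemat$), or alternatively one must pass from $C$ to an equivalent self-orthogonal code of the same weight distribution on which the functional $Q$ vanishes. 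Either repair requires an additional argument that your proposal does not supply, so as written the even-characteristic half of the converse is a genuine gap.
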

\begin{proof}
Suppose that an $((n,K,d))_q$ stabilizer code $Q$ exists. This implies
that there exists a closed subgroup $S$ of $G_n$ of order $|S|=q^n/K$
such that $Q=\Fix(S)$.  The group $S$ is abelian and satisfies $S\cap
Z(G_n)=1$, by Lemma~\ref{th:closedsubgroup}.  The quotient $C\cong
SZ(G_n)/Z(G_n)$ is an additive subgroup of $\F_q^{2n}$ such that
$|C|=|S|=q^n/K$. We have $C^\sdual =
C_{G_n}(S)/Z(G_n)$ by Lemma~\ref{th:commute}.  Since $S$ is an abelian
group, $SZ(G_n)\le C_{G_n}(S)$, hence $C\le
C^\sdual$.
Recall that the weight of an element $\omega^c X(\mbf a)Z(\mbf b)$ in
$G_n$ is equal to $\swt(\mbf a|\mbf b)$. If $K=1$, then
$Q$ is a pure quantum code, thus $\wt(C_{G_n}(S))=\swt(C^\sdual)=d$.
If $K>1$, then the elements of
$C_{G_n}(S)\setminus SZ(G_n)$ have at least weight $d$ by
Lemma~\ref{th:detectable}, so that $\swt(C^\sdual \setminus C)=d$.

Conversely, suppose that $C$ is an additive subcode of
$\F_q^{2n}$ such that $|C|=q^n/K$, $C\le C^\sdual$, and
$\swt(C^\sdual\setminus C)=d$ if $K>1$ (and $\swt(C^\sdual)=d$ if $K=1$).
Let
$$ N = \{ \omega^cX(\mbf a)Z(\mbf b)\,|\, c \in \F_p \text{ and }
(\mbf a|\mbf b)\in C\}.$$ Notice that $N$ is an abelian normal
subgroup of $G_n$, because it is the pre-image of $C=N/Z(G_n)$.
Choose a character $\chi$ of $N$ such that
$\chi(\omega^c\onemat)=\omega^c$. Then
$$ P_N = \frac{1}{|N|}\sum_{E\in N} \chi(E^{-1}) E$$ is an orthogonal
projector onto a vector space $Q$, because $P_N$ is an idempotent in
the group ring $\C[G_n]$, see~\cite[Theorem~1]{klappenecker033}. We have
$$\dim Q = \Tr P_N = |Z(G_n)| q^n/|N| = q^n/|C| = K.$$ Each coset of
$N$ modulo $Z(G_n)$ contains exactly one matrix $E$ such that $Ev=v$
for all $v$ in $Q$. Set $S=\{E\in N\,|\, Ev=v \text{ for all } v \in
Q\}$. Then $S$ is an abelian subgroup of $G_n$ of order
$|S|=|C|=q^n/K$. We have $Q=\Fix(S)$, because $Q$ is clearly a
subspace of $\Fix(S)$, but $\dim Q=q^n/|S|=K$. An element $\omega^c
X(\mbf a)Z(\mbf b)$ in $C_{G_n}(S)\setminus SZ(G_n)$ cannot have
weight less than $d$, because this would imply that $(\mbf a|\mbf
b)\in C^\sdual\setminus C$ has weight less than $d$, which is
impossible. By the same token, if $K=1$, then all nonidentity elements
of the centralizer $C_{G_n}(S)$ must have weight $d$ or higher.
Therefore, $Q$ is an $((n,K,d))_q$ stabilizer code.
\end{proof}

The results of this paragraph were established by
Ashikhmin and Knill~\cite{ashikhmin01}. It is instructive to compare
the two approaches, since their definition of the error basis
is different (but equivalent).

\subsection{Codes over $\F_{q^2}$}
A drawback of the codes in the previous paragraph is that the
symplectic weight is somewhat unusual. In the binary case,
reference~\cite{calderbank98} provided a remedy by relating binary
stabilizer codes to additive codes over $\F_4$, allowing the use of
the familiar Hamming weight. Somewhat surprisingly, the corresponding
concept was not completely generalized to~$\F_{q^2}$, although
\cite{matsumoto00,kim04} and~\cite{rains99} paved the way to our
approach.  
After an initial circulation of the results in this chapter, 
Gottesman drew our attention to another interesting approach that was
initiated by Barnum, see~\cite{barnum00,barnum02}, where a sufficient
condition for the existence of stabilizer codes is established using a
symplectic form.
\smallskip

Let $(\beta,\beta^q)$ denote a normal basis of $\F_{q^2}$ over $\F_q$.
We define a trace-alternating form of two vectors $v$ and $w$ in
$\F_{q^2}^n$ by
\begin{equation}\label{eq:alternating}\index{inner product!trace-alternating}
 \(v|w)a =
\tr_{q/p}\left(\frac{v\cdot w^q - v^q\cdot w }{\beta^{2q}-\beta^2}\right).
\end{equation}
We note that the argument of the trace is invariant under the
Galois automorphism $x\mapsto x^q$, so it is indeed an element
of~$\F_q$, which shows that (\ref{eq:alternating}) is well-defined.

The trace-alternating form is bi-additive, that is,
$\(u+v|w)a = \(u|w)a+\(v|w)a $ and $\(u|v+w)a = \(u|v)a+\(u|w)a$ holds
for all $u,v,w\in \F_{q^2}^n$. It is $\F_p$-linear, but not
$\F_q$-linear unless $q=p$ and it is alternating in the sense that
$\(u|u)a=0$ holds for all $u\in \F_{q^2}^n$. We write $u \adual w$ if
and only if $\(u|w)a=0$ holds.

At this point it might be helpful to see the form the trace-alternating
form takes in the binary case. A normal basis for $\F_4$ over $\F_2$
is given by $\{\omega, \omega^2 \}$. Since $\omega^2+\omega+1=0$, the 
trace-alternating form simplifies to 
\begin{equation}
 \(v|w)a = \tr_{2/2}\left(\frac{v\cdot w^2 + v^2\cdot w }{\omega^{4}+\omega^2}\right) = v\cdot w^q + v^q\cdot w,
\end{equation}
where we have used the facts that $\omega^3=1$ and  $x=-x$ over  $\F_4$.

We define a bijective map $\phi$ that takes an element $(\mbf a|\mbf b)$ of the
vector space $\F_q^{2n}$ to a vector in $\F_{q^2}$ by setting
$\phi((\mbf a|\mbf b)) = \beta \mbf a + \beta^q \mbf b.$ The map
$\phi$ is isometric in the sense that the symplectic weight of
$(\mbf a|\mbf b)$ is equal to the Hamming weight of $\phi((\mbf a|\mbf b))$.

\begin{lemma}\label{th:isometry}
Suppose that $c$ and $d$ are two vectors
of\/ $\F_q^{2n}$. Then
$$\< c\,|\,d>s=\(\phi(c)\,|\, \phi(d))a.$$ In particular, $c$ and $d$ are
orthogonal with respect to the trace-symplectic form if and
only if $\phi(c)$ and $\phi(d)$ are orthogonal with respect to the
trace-alternating form.
\end{lemma}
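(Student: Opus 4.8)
The plan is to verify the identity $\<c\,|\,d>s = \(\phi(c)\,|\,\phi(d))a$ by direct computation, expanding both sides in terms of the coordinates of $c=(\mbf a|\mbf b)$ and $d=(\mbf a'|\mbf b')$ and showing the arguments of the two trace functions $\tr_{q/p}$ agree as elements of $\F_q$. Since both forms are $\F_p$-bilinear and additive across the $n$ coordinates, it suffices to treat a single coordinate; the general case follows by summing. So first I would reduce to the scalar case, writing $c=(a|b)$ and $d=(a'|b')$ with $a,b,a',b'\in\F_q$, and recall that $\phi(c)=\beta a+\beta^q b$ and $\phi(d)=\beta a'+\beta^q b'$.

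Next I would compute the trace-alternating form's argument. The key quantity is $\phi(c)\cdot\phi(d)^q - \phi(c)^q\cdot\phi(d)$. Expanding $\phi(c)\,\phi(d)^q = (\beta a+\beta^q b)(\beta^q a'+\beta^{q^2} b')$ and using $\beta^{q^2}=\beta$ (since $\beta\in\F_{q^2}$), and similarly for the conjugate term, the cross terms in $aa'$ and $bb'$ should cancel between the two pieces, leaving only terms proportional to $ab'$ and $a'b$. The expected outcome is that
$$
\phi(c)\phi(d)^q - \phi(c)^q\phi(d) = (\beta^{2q}-\beta^2)\,(b'a - ba'),
$$
so that after dividing by $\beta^{2q}-\beta^2$ the argument becomes $b'a-ba'$, whence $\(\phi(c)\,|\,\phi(d))a = \tr_{q/p}(\mbf b\cdot\mbf a' - \mbf b'\cdot\mbf a) = \<c\,|\,d>s$ upon restoring the sum over coordinates. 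I would compute the expansion carefully, tracking which monomials survive.

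The main obstacle is purely bookkeeping: correctly applying $\beta^{q^2}=\beta$ in the conjugated products and confirming that exactly the right terms cancel so that the factor $\beta^{2q}-\beta^2$ appears and divides out cleanly. There is no conceptual difficulty once the scalar reduction is in place, but sign errors in the alternating combination are easy to make, so I would double-check the coefficients of $ab'$ and $a'b$ separately. The \emph{in particular} clause is then immediate: since the two scalar quantities $\<c\,|\,d>s$ and $\(\phi(c)\,|\,\phi(d))a$ are equal, one vanishes if and only if the other does, giving the stated equivalence of orthogonality.
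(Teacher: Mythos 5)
Your proposal follows essentially the same route as the paper's proof: write $\phi(c)=\beta\mathbf{a}+\beta^q\mathbf{b}$, expand $\phi(c)\cdot\phi(d)^q-\phi(c)^q\cdot\phi(d)$ using $\beta^{q^2}=\beta$, observe that the $\mathbf{a}\cdot\mathbf{a}'$ and $\mathbf{b}\cdot\mathbf{b}'$ terms cancel, and factor out $\beta^{2q}-\beta^2$. The only issue is that your displayed intermediate has the sign flipped: the correct expansion gives $(\beta^{2q}-\beta^2)(\mathbf{b}\cdot\mathbf{a}'-\mathbf{a}\cdot\mathbf{b}')$, not $(\beta^{2q}-\beta^2)(\mathbf{b}'\cdot\mathbf{a}-\mathbf{b}\cdot\mathbf{a}')$, and it is the former that matches $\tr_{q/p}(\mathbf{b}\cdot\mathbf{a}'-\mathbf{b}'\cdot\mathbf{a})$ as you conclude --- the double-check you promise on the coefficients of $\mathbf{a}\cdot\mathbf{b}'$ and $\mathbf{b}\cdot\mathbf{a}'$ would catch this.
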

\begin{proof}
Let $c=(\mbf a|\mbf b)$ and $d=(\mbf a'|\mbf b')$. We calculate
\begin{equation*}
\begin{split}
\phi(c)\cdot \phi(d)^q =
\beta^{q+1} \,\mbf a\cdot \mbf a' +
\beta^{2}   \,\mbf a\cdot \mbf b' +
\beta^{2q}  \,\mbf b\cdot &\mbf a' +
\beta^{q+1} \,\mbf b\cdot \mbf b', \\[1ex]
\phi(c)^q\cdot \phi(d) =
\beta^{q+1} \, \mbf a\cdot \mbf a' +
\beta^{2q}  \, \mbf a\cdot \mbf b' +
\beta^2     \, \mbf b\cdot &\mbf a' +
\beta^{q+1} \, \mbf b\cdot \mbf b'.
\end{split}
\end{equation*}
Therefore, the trace-alternating form of $\phi(c)$ and $\phi(d)$ is given by
\begin{eqnarray*}
\(\phi(c)|\phi(d))a &=&
\tr_{q/p}\left(\frac{\phi(c)\cdot \phi(d)^q - \phi(c)^q\cdot \phi(d) }{\beta^{2q}-\beta^2}\right),\\
&=&\tr_{q/p}(\mbf b \cdot \mbf a' - \mbf a \cdot \mbf b' ),
\end{eqnarray*}
which is precisely the trace-symplectic form $\< c\,|\,d>s$.
\end{proof}

\begin{theorem}\label{th:alternating}
An\/ $((n,K,d))_q$ stabilizer code exists if and only if there exists
an additive subcode $D$ of\/ $\F_{q^2}^{n}$ of cardinality
$|D|=q^n/K$ such that $D\le D^\adual$ and
$\wt(D^{\adual} \setminus D)=d$ if $K>1$ (and $\wt(D^\adual)=d$ if $K=1$).
\end{theorem}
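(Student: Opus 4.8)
The plan is to transport the $\F_q$-characterization of Theorem~\ref{th:stabilizer} across the isometry $\phi$ to obtain the desired $\F_{q^2}$-characterization. By Theorem~\ref{th:stabilizer}, an $((n,K,d))_q$ stabilizer code exists if and only if there is an additive code $C \le \F_q^{2n}$ with $|C| = q^n/K$, $C \le C^\sdual$, and the appropriate symplectic-weight condition ($\swt(C^\sdual\setminus C)=d$ if $K>1$, and $\swt(C^\sdual)=d$ if $K=1$). So it suffices to show that passing from $C$ to $D := \phi(C)$ converts each of these conditions into its trace-alternating analogue, and conversely that every admissible $D$ arises this way. Since $\phi$ is a bijection, this sets up a perfect correspondence between admissible codes $C$ and admissible codes $D$, so the two-directional equivalence will follow at once.

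First I would record the structural properties of $\phi$. Since $\phi$ is bi-additive and bijective (an $\F_p$-linear isomorphism from $\F_q^{2n}$ onto $\F_{q^2}^n$), the image $D=\phi(C)$ of an additive code $C$ is again an additive subcode of $\F_{q^2}^n$, and $|D|=|C|=q^n/K$. Thus the cardinality condition transfers immediately.

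The key step is to show $\phi(C^\sdual)=D^\adual$ exactly. For this I would use Lemma~\ref{th:isometry}: for any $c$, one has $c \in C^\sdual$ iff $\<c|d>s=0$ for every $d\in C$, iff $\(\phi(c)|\phi(d))a=0$ for every $d\in C$; and since $\phi$ is a bijection, $\{\phi(d) \mid d\in C\}$ is precisely $D$, so this holds iff $\phi(c)\in D^\adual$. Hence $\phi(C^\sdual)=D^\adual$, and the self-orthogonality $C\le C^\sdual$ is equivalent to $D=\phi(C)\le\phi(C^\sdual)=D^\adual$. Finally the distance condition follows from the isometry property of $\phi$: because $\phi$ sends the symplectic weight of each vector to the Hamming weight of its image and is a bijection, $\phi(C^\sdual\setminus C)=\phi(C^\sdual)\setminus\phi(C)=D^\adual\setminus D$, so $\swt(C^\sdual\setminus C)=\wt(D^\adual\setminus D)$, and likewise $\swt(C^\sdual)=\wt(D^\adual)$ in the $K=1$ case. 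Combining with Theorem~\ref{th:stabilizer} gives the claimed equivalence in both directions.

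I expect the only delicate point to be the exactness of $\phi(C^\sdual)=D^\adual$: one must invoke the bijectivity of $\phi$ to guarantee both inclusions rather than merely $\phi(C^\sdual)\subseteq D^\adual$. Everything else — additivity, cardinality, and the distance conditions — is a routine transfer through the isometry, and the converse direction requires no separate argument precisely because $\phi$ is onto, so that an arbitrary admissible $D$ pulls back to an admissible $C=\phi^{-1}(D)$.
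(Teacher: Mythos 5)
Your proposal is correct and is essentially the paper's own argument: the paper likewise deduces the theorem by applying Theorem~\ref{th:stabilizer} and transporting all conditions through the isometry $\phi$ via Lemma~\ref{th:isometry}. You simply spell out the transfer of the cardinality, duality, and weight conditions in more detail than the paper does.
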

\begin{proof}
Theorem~\ref{th:stabilizer} shows that an $((n,K,d))_q$ stabilizer
code exists if and only if there exists a code $C\le \F_q^{2n}$ with
$|C|=q^n/K$, $C\le C^\sdual$, and $\swt(C^\sdual\setminus C)=d$ if
$K>1$ (and $\swt(C^\sdual)=d$ if $K=1$). We
obtain the statement of the theorem by applying the isometry $\phi$.
\end{proof}

We obtain the following convenient condition for the existence of a
stabilizer code as a direct consequence of the previous theorem.
\begin{corollary}\label{co:alternating}\index{stabilizer code!construction!additive}
If there exists a classical\/ $[n,k]_{q^2}$ additive code $D\le
\F_{q^2}$ such that $D\le D^\adual$ and $d^\adual=\wt(D^\adual)$,
then there exists an $[[n,n-2k,\geq d^\adual]]_{q}$ stabilizer 
code that is pure to~$d^\adual$.
\end{corollary}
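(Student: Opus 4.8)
The plan is to obtain this as an immediate specialization of Theorem~\ref{th:alternating}, the only real work being to convert the single hypothesis $\wt(D^\adual)=d^\adual$ into the two separate conclusions that the corollary asserts: a lower bound on the minimum distance and a purity claim. First I would fix the parameters. An $[n,k]_{q^2}$ additive code has $|D|=q^{2k}$ codewords, so on setting $K=q^{n-2k}$ we get $|D|=q^{2k}=q^n/K$, which is exactly the cardinality demanded by Theorem~\ref{th:alternating}. Together with the hypothesis $D\le D^\adual$, the input side of that theorem is met, so it produces an $((n,K,d))_q=((n,q^{n-2k},d))_q$ stabilizer code, i.e.\ an $[[n,n-2k,d]]_q$ code, where $d=\wt(D^\adual\setminus D)$ (and $d=\wt(D^\adual)$ in the degenerate case $K=1$, where $n=2k$ and $D=D^\adual$).

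It remains to relate $d$ to $d^\adual$. The key observation is that $0\in D$ forces $D^\adual\setminus D\subseteq D^\adual\setminus\{0\}$, so minimizing the weight over the smaller set can only increase it:
\[
d=\wt(D^\adual\setminus D)=\min_{c\in D^\adual\setminus D}\wt(c)\ \ge\ \min_{0\ne c\in D^\adual}\wt(c)=\wt(D^\adual)=d^\adual .
\]
This yields the claimed bound $d\ge d^\adual$ and explains why the statement is phrased as an inequality rather than an equality: any low-weight vectors of $D^\adual$ that happen to lie in $D$ are \emph{degenerate} and do not lower the effective distance.

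For purity, I would unwind the definition: the code is pure to $t$ precisely when its stabilizer contains no non-scalar element of weight below $t$, which under the correspondence of Section~\ref{sect:additive} means $\wt(D)\ge t$, since the non-scalar stabilizer elements are exactly the nonzero vectors of $D$ and $\phi$ preserves weight. Because $D\le D^\adual$ gives $D\setminus\{0\}\subseteq D^\adual\setminus\{0\}$, the same minimization argument shows $\wt(D)\ge\wt(D^\adual)=d^\adual$, so the code is pure to $d^\adual$. I do not anticipate a genuine obstacle here; the only points demanding care are the bookkeeping with the two weight sets $D^\adual\setminus D$ and $D\setminus\{0\}$, and the boundary case $K=1$, where the self-dual code $D=D^\adual$ makes the distance equal to $d^\adual$ outright.
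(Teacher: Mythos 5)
Your proposal is correct and matches the paper's intent: the paper presents this corollary as a direct consequence of Theorem~\ref{th:alternating}, and your argument simply makes explicit the routine steps (identifying $K=q^{n-2k}$ from $|D|=q^{2k}=q^n/K$, bounding $\wt(D^\adual\setminus D)\ge\wt(D^\adual)$, and deducing purity from $\wt(D)\ge\wt(D^\adual)$ via $D\le D^\adual$). Nothing is missing, and the handling of the $K=1$ boundary case is a nice touch.
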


\textit{Remark.} It is not necessary to use a normal basis in the
definition of the isometry $\phi$ and the trace-alternating
form. Alternatively, we could have used a polynomial basis $(1,\gamma)$
of $\F_q^2/\F_q$. In that case, one can define the isometry $\phi$ by
$\phi((\mathbf{a}|\mathbf{b}))=\mathbf{a}+\gamma \mathbf{b}$,
and a compatible trace-alternating form by
$$ \langle v\, | \, w\rangle_{a'} = \tr_{q/p}\left(\frac{
v \cdot w^q - v^q\cdot w
}{\gamma-\gamma^q}\right).$$
One can check that the statement of Lemma~\ref{th:isometry}
is satisfied for this choice as well.
Other variations on this theme are possible.

\subsection{Classical Codes}
Self-orthogonal codes with respect to the trace-alter\-nating form are
not often studied in classical coding theory; more common are codes
which are self-orthogonal with respect to a euclidean or hermitian
inner product.  We relate these concepts of orthogonality as follows.
Consider the hermitian inner product $\mathbf{x}^q\cdot \mathbf{y}$ of
two vectors $\mathbf{x}$ and $\mathbf{y}$ in $\F_{q^2}^n$; we write
$\mathbf{x} \,\hdual\, \mathbf{y}$ if and only if $\mathbf{x}^q \cdot
\mathbf{y}=0$ holds.
\begin{lemma}\label{th:hermitian} 
If two vectors $\mathbf{x}$ and $\mathbf{y}$ in $\F_{q^2}^n$ satisfy
$\mathbf{x} \, \hdual\, \mathbf{y}$, then they satisfy $\mathbf{x}
\,\adual\, \mathbf{y}$.  In particular, if\/ $D\le \F_{q^2}^n$, then
$D^\hdual \le D^\adual$.
\end{lemma}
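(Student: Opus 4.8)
The plan is to establish the pointwise implication directly from the definition of the trace-alternating form and then pass to duals. By definition $\mathbf{x}\hdual\mathbf{y}$ means $\mathbf{x}^q\cdot\mathbf{y}=0$, while $\mathbf{x}\adual\mathbf{y}$ means $\tr_{q/p}\!\big((\mathbf{x}\cdot\mathbf{y}^q-\mathbf{x}^q\cdot\mathbf{y})/(\beta^{2q}-\beta^2)\big)=0$. So it suffices to show that the numerator $\mathbf{x}\cdot\mathbf{y}^q-\mathbf{x}^q\cdot\mathbf{y}$ vanishes as soon as $\mathbf{x}^q\cdot\mathbf{y}=0$.

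The key observation is that the two cross terms are Galois conjugates of one another over $\F_q$. Since $\mathbf{x},\mathbf{y}\in\F_{q^2}^n$, every coordinate satisfies $x_i^{q^2}=x_i$, so applying the Frobenius automorphism $t\mapsto t^q$ to the scalar $\mathbf{x}^q\cdot\mathbf{y}=\sum_i x_i^q y_i$ yields $(\mathbf{x}^q\cdot\mathbf{y})^q=\sum_i x_i^{q^2}y_i^q=\sum_i x_i y_i^q=\mathbf{x}\cdot\mathbf{y}^q$. Consequently, $\mathbf{x}^q\cdot\mathbf{y}=0$ forces $\mathbf{x}\cdot\mathbf{y}^q=0^q=0$ as well. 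Both terms in the numerator then vanish, the argument of the trace is $0$, and I conclude $\langle\mathbf{x}|\mathbf{y}\rangle_a=\tr_{q/p}(0)=0$, that is, $\mathbf{x}\adual\mathbf{y}$.

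For the containment of duals I argue in the routine way: if $\mathbf{x}\in D^\hdual$, then $\mathbf{x}\hdual\mathbf{y}$ for every $\mathbf{y}\in D$, hence by the pointwise implication $\mathbf{x}\adual\mathbf{y}$ for every such $\mathbf{y}$, so $\mathbf{x}\in D^\adual$; this gives $D^\hdual\le D^\adual$.

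I do not expect a genuine obstacle, since everything reduces to the single identity $(\mathbf{x}^q\cdot\mathbf{y})^q=\mathbf{x}\cdot\mathbf{y}^q$ in $\F_{q^2}$. The only point worth a glance is that the division by $\beta^{2q}-\beta^2$ is legitimate, but this is already guaranteed by the well-definedness of the form noted above (and in any case, once the numerator is zero the quotient is zero regardless of the nonzero denominator).
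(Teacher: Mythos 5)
Your proof is correct and follows essentially the same route as the paper's: the paper likewise observes that $\mathbf{x}^q\cdot\mathbf{y}=0$ forces $\mathbf{x}\cdot\mathbf{y}^q=0$ (you have merely made the Frobenius-conjugation step explicit), so the numerator of the trace-alternating form vanishes and the dual containment follows formally.
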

\begin{proof}
It follows from $\mathbf{x}^q\cdot \mathbf{y}=0$ that
 $\mathbf{x}\cdot \mathbf{y}^q=0$ holds, whence
$$\(\mathbf{x}|\mathbf{y})a =
\tr_{q/p}\left(
\frac{\mathbf{x}\cdot \mathbf{y}^q - \mathbf{x}^q\cdot \mathbf{y}}{\beta^{2q}-\beta^2}\right)=0,$$
as claimed.
\end{proof}
Therefore, any self-orthogonal code with respect to the hermitian
inner product is self-orthogonal with respect to the trace-alternating
form. In general, the two dual spaces $D^\hdual$ and $D^\adual$ are not
the same. However, if $D$ happens to be $\F_{q^2}$-linear, then the
two dual spaces coincide.
\begin{lemma}\label{th:classical}
Suppose that $D\le \F_{q^2}^n$ is $\F_{q^2}$-linear, then $D^\hdual=D^\adual$.
\end{lemma}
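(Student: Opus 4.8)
The inclusion $D^{\hdual}\le D^{\adual}$ is already supplied by Lemma~\ref{th:hermitian}, and it holds for every $D$. So the whole content of the statement is the reverse inclusion $D^{\adual}\le D^{\hdual}$, which is exactly where $\F_{q^2}$-linearity must enter. The plan is to deduce equality from a cardinality count: once I know $|D^{\hdual}|=|D^{\adual}|$, the inclusion from Lemma~\ref{th:hermitian} forces the two codes to coincide.

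First I would record the size of $D^{\hdual}$. Writing $k=\dim_{\F_{q^2}}D$, we have $|D|=q^{2k}$, and since the hermitian form is a nondegenerate sesquilinear form over $\F_{q^2}$, its dual $D^{\hdual}$ is an $\F_{q^2}$-linear code of dimension $n-k$; hence $|D^{\hdual}|=q^{2(n-k)}$. Next I would compute $|D^{\adual}|$ by an $\F_p$-dimension count. Put $q=p^m$, so $\F_{q^2}^n$ is an $\F_p$-vector space of dimension $2nm$, and $D$, being $\F_{q^2}$-linear, is in particular an $\F_p$-subspace of dimension $2km$. The trace-alternating form is $\F_p$-bilinear, and it is nondegenerate: by Lemma~\ref{th:isometry} the map $\phi$ carries it to the trace-symplectic form on $\F_q^{2n}$, which is nondegenerate because the symplectic pairing $\mbf b\cdot\mbf a'-\mbf b'\cdot\mbf a$ is nondegenerate over $\F_q$ and the trace bilinear form $(x,y)\mapsto\tr_{q/p}(xy)$ on $\F_q$ is nondegenerate. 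Consequently $D^{\adual}$ is the $\F_p$-orthogonal complement of $D$, so $\dim_{\F_p}D^{\adual}=2nm-2km$ and $|D^{\adual}|=p^{2(n-k)m}=q^{2(n-k)}$. Comparing, $|D^{\hdual}|=q^{2(n-k)}=|D^{\adual}|$, which together with Lemma~\ref{th:hermitian} yields $D^{\hdual}=D^{\adual}$.

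The one step deserving care, and the main obstacle, is the nondegeneracy of the trace-alternating form, on which the dimension formula for $D^{\adual}$ rests; I would either invoke the isometry $\phi$ as above or verify it directly. As an alternative to the cardinality argument, one can prove the reverse inclusion by hand. For $\mathbf{x}\in D^{\adual}$ and $\mathbf{y}\in D$, $\F_{q^2}$-linearity gives $\lambda\mathbf{y}\in D$ for every $\lambda\in\F_{q^2}$, so $\langle\mathbf{x}\mid\lambda\mathbf{y}\rangle_a=0$ for all such $\lambda$. Setting $u=\mathbf{x}^q\cdot\mathbf{y}$ and using $\mathbf{x}\cdot\mathbf{y}^q=u^q$, the numerator of the form becomes $(\lambda u)^q-\lambda u$, so with $\delta=\beta^{2q}-\beta^2$ the condition reads $\tr_{q/p}\!\left(\frac{(\lambda u)^q-\lambda u}{\delta}\right)=0$ for all $\lambda$. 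If $u\neq0$, then $t=\lambda u$ ranges over all of $\F_{q^2}$, the map $t\mapsto t^q-t$ surjects onto the trace-zero line $\F_q\,\delta$, and dividing by $\delta$ makes $\frac{t^q-t}{\delta}$ range over all of $\F_q$; hence $\tr_{q/p}$ of it ranges over all of $\F_p$ and cannot vanish identically, a contradiction. Therefore $u=0$, i.e.\ $\mathbf{x}\,\hdual\,\mathbf{y}$, giving $D^{\adual}\le D^{\hdual}$ directly.
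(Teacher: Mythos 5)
Your main argument is exactly the paper's proof: the paper also counts dimensions, noting that $D^{\hdual}$ is $(n-k)$-dimensional over $\F_{q^2}$ while $D^{\adual}$ is $2m(n-k)$-dimensional over $\F_p$, and then concludes equality from the inclusion $D^{\hdual}\subseteq D^{\adual}$ of Lemma~\ref{th:hermitian}; you simply make explicit the nondegeneracy of the trace-alternating form that the paper leaves implicit. The direct Artin--Schreier-style argument you append at the end is a correct and self-contained alternative that avoids the dimension count altogether, but it is not needed given that the primary route already matches the paper.
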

\begin{proof}
Let $q=p^m$, $p$ prime. If $D$ is a $k$-dimensional subspace of
$\F_{q^2}^n$, then $D^\hdual$ is an $(n-k)$-dimensional subspace of
$\F_{q^2}^n$. We can also view~$D$ as a $2mk$-dimensional subspace of
$\F_p^{2mn}$, and $D^\adual$ as a $2m(n-k)$-dimensional subspace of
$\F_p^{2mn}$.  Since $D^\hdual \subseteq D^\adual$ and the
cardinalities of $D^\adual$ and $D^\hdual$ are the same, we can
conclude that $D^\adual =D^\hdual$.
\end{proof}

\begin{corollary}[Hermitian Construction]\label{co:classical}
\index{stabilizer code!construction!Hermitian}
If there exists an $\F_{q^2}$-linear $[n,k,d]_{q^2}$ code $B$ such
that $B^\hdual\le B$, then there exists an $[[n,2k-n,\geq d]]_{q}$
quantum code that is pure to~$d$.
\end{corollary}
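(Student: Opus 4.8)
The plan is to deduce the Hermitian Construction directly from Corollary~\ref{co:alternating} by translating the hypotheses on $B$ into hypotheses on its Hermitian dual. First I would set $D = B^\hdual$, so that $D$ is an $\F_{q^2}$-linear code of dimension $n-k$. The self-orthogonality requirement of Corollary~\ref{co:alternating} is that $D \le D^\adual$; I would establish this using Lemma~\ref{th:classical}, which tells us that for an $\F_{q^2}$-linear code the trace-alternating dual and the Hermitian dual coincide, i.e.\ $D^\adual = D^\hdual$. Thus the condition $D \le D^\adual$ becomes $B^\hdual \le (B^\hdual)^\hdual = B$, which is precisely the hypothesis $B^\hdual \le B$ that we are given. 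So the self-orthogonality translates cleanly.

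Next I would identify the parameters that feed into Corollary~\ref{co:alternating}. With $D = B^\hdual$ an $\F_{q^2}$-linear code of dimension $n-k$, the corollary produces an $[[n,\,n-2(n-k),\,\ge d^\adual]]_q$ stabilizer code, and $n - 2(n-k) = 2k - n$, giving the claimed dimension. The remaining point is the distance: the corollary yields distance $d^\adual = \wt(D^\adual) = \wt((B^\hdual)^\adual)$. Here I would again invoke Lemma~\ref{th:classical} to write $(B^\hdual)^\adual = (B^\hdual)^\hdual = B$, so that $d^\adual = \wt(B) = d$. This matches the stated distance $\ge d$ and the purity-to-$d$ conclusion that Corollary~\ref{co:alternating} supplies automatically.

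The step I expect to require the most care is the bookkeeping around which code plays the role of ``$D$'' in Corollary~\ref{co:alternating}. The corollary is phrased in terms of a self-orthogonal code $D \le D^\adual$ whose \emph{dual} carries the good distance $d^\adual$; here the roles are inverted, since it is $B$ (not its dual) that has the favorable minimum distance $d$. The resolution is to apply the corollary to $D = B^\hdual$ rather than to $B$ itself, and then use the double-dual identity $(B^\hdual)^\hdual = B$ (valid because $B$ is $\F_{q^2}$-linear) to recover $B$ as the relevant dual code. Once this identification is made, everything else is a direct substitution into Corollary~\ref{co:alternating}, with no further estimation needed; the real content was already packaged in Lemma~\ref{th:classical} and Corollary~\ref{co:alternating}.
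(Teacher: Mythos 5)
Your proposal is correct and follows essentially the same route as the paper's own proof: both set $D = B^\hdual$, use the nondegeneracy of the Hermitian form to get $D^\hdual = B$, invoke Lemma~\ref{th:classical} to identify $D^\adual$ with $D^\hdual$, and then apply Corollary~\ref{co:alternating} to the $[n,n-k]_{q^2}$ code $D$. The parameter bookkeeping ($2k-n$ for the dimension, $\wt(D^\adual)=\wt(B)=d$ for the distance and purity) matches the paper exactly.
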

\begin{proof}
The hermitian inner product is nondegenerate, so the hermitian dual of
the code $D:=B^\hdual$ is $B$.  The
$[n,n-k]_{q^2}$ code $D$ is $\F_{q^2}$-linear, so
$D^\hdual=D^\adual$ by Lemma~\ref{th:classical}, and the claim follows
from Corollary~\ref{co:alternating}.
\end{proof}

So it suffices to consider hermitian forms in the case of
$\F_{q^2}$-linear codes. We have to use the slightly more cumbersome
trace-alternating form in the case of additive codes that are not
linear over $\F_{q^2}$.

An elegant and surprisingly simple construction of quantum codes was
introduced in 1996 by Calderbank and Shor~\cite{calderbank96} and by
Steane~\cite{steane96}. The CSS code construction provides perhaps the
most direct link to classical coding theory.

\begin{lemma}[CSS Code Construction]\label{th:css}
\index{stabilizer code!construction!CSS}
Let $C_1$ and $C_2$ denote two classical linear codes with parameters
$[n,k_1,d_1]_q$ and $[n,k_2,d_2]_q$ such that $C_2^\perp\le C_1$.
Then there exists a $[[n,k_1+k_2-n,d]]_q$ stabilizer code with minimum
distance $d=\min\{ \wt(c) \mid c\in (C_1\setminus C_2^\perp)\cup
(C_2\setminus C_1^\perp)\}$ that is pure to $\min\{ d_1,d_2\}$.
\end{lemma}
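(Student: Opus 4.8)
The plan is to produce the classical code demanded by Theorem~\ref{th:stabilizer} and read off all three assertions from it. Set
$$C = C_1^\perp \times C_2^\perp = \{(\mbf a\,|\,\mbf b) \in \F_q^{2n} \mid \mbf a \in C_1^\perp,\ \mbf b \in C_2^\perp\},$$
an $\F_q$-linear (hence additive) subcode of $\F_q^{2n}$ with $|C| = q^{(n-k_1)+(n-k_2)} = q^{2n-k_1-k_2}$, so that $q^n/|C| = q^{k_1+k_2-n} = K$. First I would check that $C$ is trace-symplectic self-orthogonal. For $(\mbf a\,|\,\mbf b),(\mbf a'\,|\,\mbf b')\in C$ the form of Lemma~\ref{th:commute} equals $\tr(\mbf b\cdot\mbf a' - \mbf b'\cdot\mbf a)$; since the hypothesis $C_2^\perp\le C_1$ is equivalent to $C_1^\perp\le C_2$, we have $\mbf a,\mbf a'\in C_1^\perp\subseteq C_2$ while $\mbf b,\mbf b'\in C_2^\perp$, so both Euclidean products $\mbf b\cdot\mbf a'$ and $\mbf b'\cdot\mbf a$ vanish and $C\le C^\sdual$.

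Next I would identify $C^\sdual$ exactly as $C_2\times C_1$. The containment $C_2\times C_1\subseteq C^\sdual$ is the same one-line computation: for $\mbf x\in C_2,\ \mbf y\in C_1$ and $(\mbf a\,|\,\mbf b)\in C$ the form is $\tr(\mbf y\cdot\mbf a - \mbf b\cdot\mbf x)$, with $\mbf y\cdot\mbf a=0$ (as $\mbf a\in C_1^\perp$) and $\mbf b\cdot\mbf x=0$ (as $\mbf x\in C_2=(C_2^\perp)^\perp$). Since the trace-symplectic form is a nondegenerate alternating $\F_p$-bilinear form on $\F_q^{2n}$, dual sizes multiply to $q^{2n}$, whence $|C^\sdual| = q^{k_1+k_2} = |C_2\times C_1|$ and the two finite sets coincide. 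This is the step I would treat most carefully, because $C^\sdual$ is defined through a trace form rather than a plain Euclidean one; settling $C^\sdual = C_2\times C_1$ by a cardinality count sidesteps any separate argument that the trace dual of an $\F_q$-linear code equals its Euclidean dual.

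With $C^\sdual = C_2\times C_1$ in hand the minimum distance is $\swt(C^\sdual\setminus C)$. For the upper bound the elements $(\mbf x\,|\,\mbf 0)$ with $\mbf x\in C_2\setminus C_1^\perp$ and $(\mbf 0\,|\,\mbf y)$ with $\mbf y\in C_1\setminus C_2^\perp$ all lie in $C^\sdual\setminus C$ and have symplectic weight $\wt(\mbf x)$, resp. $\wt(\mbf y)$, giving $d\le \min\{\wt(c)\mid c\in(C_1\setminus C_2^\perp)\cup(C_2\setminus C_1^\perp)\}$. For the lower bound, any $(\mbf x\,|\,\mbf y)\in C^\sdual\setminus C$ has $\mbf x\in C_2$, $\mbf y\in C_1$ and fails the conjunction $\mbf x\in C_1^\perp \wedge \mbf y\in C_2^\perp$; since $\swt((\mbf x\,|\,\mbf y))\ge\wt(\mbf x)$ and $\ge\wt(\mbf y)$, whichever component is responsible for membership in the set difference already forces the weight to be at least the claimed minimum. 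Equality follows, and Theorem~\ref{th:stabilizer} yields the $[[n,k_1+k_2-n,d]]_q$ stabilizer code.

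Finally, for purity I would bound the nonzero symplectic weights of the stabilizer $C = C_1^\perp\times C_2^\perp$ itself, and this is where the hypothesis really earns its keep. The inclusions $C_1^\perp\subseteq C_2$ and $C_2^\perp\subseteq C_1$ control the dual distances: every nonzero word of $C_1^\perp$ is a nonzero word of $C_2$, so its weight is $\ge d_2$, and likewise every nonzero word of $C_2^\perp$ has weight $\ge d_1$. Hence a nonzero $(\mbf a\,|\,\mbf b)\in C$ satisfies $\swt\ge\wt(\mbf a)\ge d_2$ when $\mbf a\neq\mbf 0$ and $\swt\ge\wt(\mbf b)\ge d_1$ when $\mbf b\neq\mbf 0$, so in every case $\swt\ge\min\{d_1,d_2\}$; the stabilizer therefore contains no nonzero element of weight below $\min\{d_1,d_2\}$, i.e.\ the code is pure to $\min\{d_1,d_2\}$. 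I expect this dual-distance observation to be the one genuinely non-mechanical point of the argument, everything else being routine; the degenerate case $K=1$ is handled by replacing $\swt(C^\sdual\setminus C)$ with $\swt(C^\sdual)$ exactly as in Theorem~\ref{th:stabilizer}.
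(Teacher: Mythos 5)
Your proof is correct and follows the paper's argument exactly: the same code $C=C_1^\perp\times C_2^\perp$, the same self-orthogonality computation, and the same identification $C^\sdual=C_2\times C_1$ by containment plus a cardinality count, followed by an appeal to Theorem~\ref{th:stabilizer}. The only difference is that you spell out the minimum-distance and purity claims (via the witnesses $(\mbf x\,|\,\mbf 0)$, $(\mbf 0\,|\,\mbf y)$ and the inclusions $C_1^\perp\subseteq C_2$, $C_2^\perp\subseteq C_1$), which the paper dismisses as obvious from the construction.
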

\begin{proof}
Let $C=C_1^\perp\times C_2^\perp\le \F_q^{2n}$. If $(c_1\mid c_2)$ and
$(c_1'\mid c_2')$ are two elements of $C$, then we observe that
$$ \tr( c_2\cdot c_1' - c_2' \cdot c_1) = \tr(0-0)=0.$$ Therefore,
$C\le C^\sdual$. Furthermore, the trace-symplectic dual of $C$
contains $C_2\times C_1$, and a dimensionality argument shows that
$C^\sdual = C_2\times C_1$. Since the cartesian product
$C_1^\perp\times C_2^\perp$ has $q^{2n-(k_1+k_2)}$ elements, the
stabilizer code has dimension $q^{k_1+k_2-n}$ by
Theorem~\ref{th:stabilizer}. The claim about the minimum distance and purity of
the code is obvious from the construction.
\end{proof}

\begin{corollary}[Euclidean Construction]\label{th:css2}
\index{stabilizer code!construction!Euclidean}
If $C$ is a classical linear $[n,k,d]_q$ code containing its dual,
$C^\perp\le C$, then there exists an $[[n,2k-n,\geq d]]_q$ stabilizer code 
that is pure to~$d$.
\end{corollary}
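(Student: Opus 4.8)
The plan is to derive this as the diagonal special case $C_1 = C_2 = C$ of the CSS Code Construction, Lemma~\ref{th:css}. First I would set both input codes equal to the given code $C$, so that each carries parameters $[n,k,d]_q$. The hypothesis $C_2^\perp\le C_1$ of Lemma~\ref{th:css} then reads $C^\perp\le C$, which is precisely the dual-containing assumption we are granted. Hence Lemma~\ref{th:css} applies directly and produces a stabilizer code of dimension $k_1+k_2-n = 2k-n$, matching the required length and dimension.

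It then remains only to read off the minimum-distance and purity claims. Lemma~\ref{th:css} yields a code whose minimum distance is $\min\{\wt(c)\mid c\in (C\setminus C^\perp)\cup(C\setminus C^\perp)\} = \min\{\wt(c)\mid c\in C\setminus C^\perp\}$. Since $C^\perp$ is a subspace it contains $0$, so $C\setminus C^\perp$ is contained in the set of nonzero codewords of $C$, each of which has weight at least $d=\wt(C)$. Therefore the quantum minimum distance is at least $d$, giving parameters $[[n,2k-n,\geq d]]_q$. For purity, Lemma~\ref{th:css} guarantees the code is pure to $\min\{d_1,d_2\} = \min\{d,d\} = d$, which is exactly the asserted purity.

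There is essentially no hard step: the whole content is the observation that the Euclidean dual-containing condition $C^\perp\le C$ is the diagonal instance of the two-code CSS hypothesis $C_2^\perp\le C_1$. The only point deserving a moment's care is that the CSS distance formula delivers the exact distance $\min\{\wt(c)\mid c\in C\setminus C^\perp\}$, which may strictly exceed $d$; since the corollary claims only the lower bound $\geq d$, no finer analysis of the set difference $C\setminus C^\perp$ is needed, and I would simply invoke the trivial weight bound on nonzero codewords of $C$.
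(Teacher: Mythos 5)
Your proposal is correct and is exactly the argument the paper intends: the corollary is stated immediately after Lemma~\ref{th:css} with no separate proof precisely because it is the diagonal instance $C_1=C_2=C$, from which the dimension $2k-n$, the bound $d^*\ge\wt(C\setminus C^\perp)\ge d$, and purity to $\min\{d,d\}=d$ all follow as you describe. No gaps.
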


\section{Weight Enumerators}\label{sec:wtEnums}
The Shor-Laflamme weight enumerators of an arbitrary $((n,K))_q$
quantum code~$Q$ with orthogonal projector $P$ are defined by the
polynomials
$$
\begin{array}{ll}
\ds \sum_{i=0}^n A_i^{\textsc{sl}} z^i, &\quad\text{with}\quad
\ds A_i^{\textsc{sl}}=\frac{1}{K^2}\sum_{\ontop{E\in G_n}{\wt(E)=i}} 
\Tr(E^\dagger P)\Tr(E P),
\end{array}
$$
and
$$
\begin{array}{ll}
\ds \sum_{i=0}^n B_i^{\textsc{sl}} z^i, &\quad\text{with}\quad
\ds B_i^{\textsc{sl}} = \frac{1}{K}\sum_{\ontop{E\in G_n}{\wt(E)=i}}
\Tr(E^\dagger PE P),\phantom{gggd}
\end{array}
$$ see~\cite{shor97} for the binary case. The definition given here 
differs from the original definition by Shor and Laflamme by a normalization
factor $p$, which is due to the sums running over the full error group
$G_n$.  
The theory of Shor-Laflamme weight enumerators~\cite{shor97} was
considerably extended by Rains in~\cite{rains98,rains99b,rains99d,rains00}.
In this section we give a simple proof for the relation between these weight
enumerators and the symplectic weight enumerators of the additive codes
associated with the stabilizer code.

The weights
$A_i^{\textsc{SL}}$ and $B _i^{\textsc{SL}}$ have a nice combinatorial
interpretation in the case of stabilizer codes.  Indeed, let $C\le
\F_q^{2n}$ denote the additive code associated with the stabilizer
code~$Q$.  Define the symplectic weights of $C$ and $C^\sdual$
respectively by
$ A_i=|\{ c\in C\,|\, \swt(c)=i\}|$ and
$ B_i=|\{ c\in C^\sdual\,|\, \swt(c)=i\}|.$
\nix{
$$ A_i=|\{ c\in C\,|\, \swt(c)=i\}| \quad\text{and}\quad
B_i=|\{ c\in C^\sdual\,|\, \swt(c)=i\}|.$$
}
The next lemma belongs to the folklore of stabilizer codes. 

\begin{lemma}
The Shor-Laflamme weights of an $((n,K))_q$ stabilizer code $Q$ are
multiples of the symplectic weights of the associated additive
codes~$C$ and~$C^\sdual$; more precisely,
$$ A_i^\textsc{sl}=pA_i\quad \text{and}\quad B_i^\textsc{sl}=pB_i
\quad\text{for}\quad  0\le i\le n,$$
where $p$ is the characteristic of the field\/ $\F_q$.
\end{lemma}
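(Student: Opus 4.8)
The plan is to evaluate the two traces $\Tr(EP)$ and $\Tr(E^\dagger PEP)$ directly for a fixed error $E=\omega^c X(\mbf a)Z(\mbf b)$ in $G_n$, using the explicit projector $P=\tfrac{1}{|S|}\sum_{F\in S}F$ from Lemma~\ref{th:projection}, and then to sum over all $E$ of weight $i$. Two facts would be used throughout. First, the nice error basis property~(c) of Corollary~\ref{th:nice}: since $\onemat=X(\mbf 0)Z(\mbf 0)$ lies in $\mathcal{E}_n$, we get $\Tr(X(\mbf a')Z(\mbf b'))=q^n$ when $(\mbf a'|\mbf b')=0$ and $0$ otherwise, so $\Tr$ annihilates every element of $G_n$ that is not a scalar. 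Second, the quotient map $S\to C=SZ(G_n)/Z(G_n)$ is a bijection, because $S\cap Z(G_n)=\{1\}$ by Lemma~\ref{th:stab} while $|C|=|S|=q^n/K$ by Theorem~\ref{th:stabilizer}.

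First I would compute $\Tr(EP)=\tfrac{1}{|S|}\sum_{F\in S}\Tr(EF)$. By the first fact, $\Tr(EF)\neq 0$ only when $EF\in Z(G_n)$, that is, when the image of $F$ equals $-(\mbf a|\mbf b)$; by the bijection there is at most one such $F\in S$, and it exists exactly when $(\mbf a|\mbf b)\in C$. Hence $\Tr(EP)=\tfrac{q^n}{|S|}\,\omega^{c_0}$ for some phase $\omega^{c_0}$ if $(\mbf a|\mbf b)\in C$, and $0$ otherwise. Since $P^\dagger=P$ gives $\Tr(E^\dagger P)=\overline{\Tr(EP)}$, the product $\Tr(E^\dagger P)\Tr(EP)=|\Tr(EP)|^2$ loses the phase and equals $q^{2n}/|S|^2=K^2$ precisely when $(\mbf a|\mbf b)\in C$. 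Substituting into the definition of $A_i^{\textsc{sl}}$, the factor $K^2$ cancels, leaving the number of weight-$i$ elements of $G_n$ whose image lies in $C$; as each vector of $\F_q^{2n}$ lifts to exactly $p$ elements of $G_n$, this count is $pA_i$.

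For $B_i^{\textsc{sl}}$ I would expand $\Tr(E^\dagger PEP)=\tfrac{1}{|S|^2}\sum_{F,F'\in S}\Tr(E^\dagger FEF')$. The commutation relation of Lemma~\ref{th:commute} gives $E^\dagger FE=\omega^{-\scal{(\mbf a|\mbf b)}{c_F}}F$, where $c_F$ denotes the image of $F$ and $\scal{(\mbf a|\mbf b)}{c_F}$ is the trace-symplectic form, so $\Tr(E^\dagger FEF')=\omega^{-\scal{(\mbf a|\mbf b)}{c_F}}\Tr(FF')$. Again $\Tr(FF')\neq 0$ forces $F'=F^{-1}$ by the bijection, collapsing the double sum to $\Tr(E^\dagger PEP)=\tfrac{q^n}{|S|^2}\sum_{F\in S}\omega^{-\scal{(\mbf a|\mbf b)}{c_F}}$. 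The remaining sum is an additive character of $C$ evaluated over all of $C$, which vanishes unless the character is trivial, that is, unless $(\mbf a|\mbf b)\in C^\sdual$, in which case it equals $|C|=|S|$. Thus $\Tr(E^\dagger PEP)=K$ when $(\mbf a|\mbf b)\in C^\sdual$ and $0$ otherwise, and summing over weight-$i$ errors with the same $p$-fold lifting yields $B_i^{\textsc{sl}}=pB_i$.

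The routine parts are the bookkeeping of the constants $q^n$, $|S|=q^n/K$, $K$ and $p$, which I expect to telescope cleanly. The one genuinely substantive step is the $B_i$ computation: recognizing $\sum_{F\in S}\omega^{-\scal{(\mbf a|\mbf b)}{c_F}}$ as a character sum on $C$ and identifying its support condition with membership in the trace-symplectic dual $C^\sdual$. This is where Lemma~\ref{th:commute} together with the identification $C^\sdual=C_{G_n}(S)/Z(G_n)$ does the real work, converting a statement about commuting group elements into the orthogonality condition $\scal{(\mbf a|\mbf b)}{c}=0$ for all $c\in C$.
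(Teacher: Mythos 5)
Your proof is correct, and the $A_i$ half is essentially the paper's argument: both identify the support of $\Tr(EP)$ as the scalar multiples of stabilizer elements, evaluate the product of traces to $K^2$ there, and count the weight-$i$ elements of $SZ(G_n)$ via the $p$-to-one lifting from $C$ to $G_n$. The $B_i$ half is where you diverge in execution. The paper reuses the detectability machinery: for $E$ in the centralizer it writes $\Tr(E^\dagger PEP)=\Tr(P^2)=\Tr(P)=K$ directly, and for $E$ outside the centralizer it quotes the identity $PEP=0$ from the proof of Lemma~\ref{th:detectable}, so no summation is ever expanded. You instead expand the double sum over $S\times S$, collapse it to a single sum using the vanishing of traces of non-central elements, and recognize the result as an additive character sum over $C$ whose support condition is exactly membership in $C^\sdual$. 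Both routes are sound and land on the same dichotomy ($K$ on $C_{G_n}(S)$, $0$ elsewhere); the paper's is shorter because it leans on an already-proved lemma, while yours is self-contained and makes the appearance of the trace-symplectic dual completely explicit, at the cost of some bookkeeping. One small point worth stating if you write this up: the step $\Tr(E^\dagger P)\Tr(EP)=|\Tr(EP)|^2$ uses $P^\dagger=P$ (from Lemma~\ref{th:projection}), which is what lets the unknown phase $\omega^{c_0}$ drop out.
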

\begin{proof}
Recall that
$$ P = \frac{1}{|S|} \sum_{E\in S} S$$ for the stabilizer group $S$ of
$Q$. The trace $\Tr(EP)$ is nonzero if and only if $E^\dagger$ is an
element of $SZ(G_n)$. If $E^\dagger\in SZ(G_n)$, then $\Tr(E^\dagger
P)\Tr(EP)=(q^n/|S|)^2=K^2$. Therefore, $A_{i}^\textsc{sl}$
counts the elements in $SZ(G_n)$ of weight $i$, so
$A_i^\textsc{sl}=|Z(G_n)|\times |\{c\in C\,|\, \swt(c)=i\}|=p A_i.$

If $E$ commutes with all elements in $S$, then $\Tr(E^\dagger PE
P)=\Tr(P^2)=\Tr(P)=K$. If $E$ does not commute with some element of
$S$, then $E$ is detectable; more precisely, the proof of
Lemma~\ref{th:detectable} shows that $PEP=0 P$, hence $\Tr(E^\dagger
PEP)=0$.  Therefore, $B_i^\textsc{sl}$ counts the elements in
$C_{G_n}(S)$ of weight $i$, hence $B_i^\textsc{sl} = |Z(G_n)| \times
|\{c\in C^\sdual\,|\, \swt(c)=i\}| = p B_i.$
\end{proof}

Shor and Laflamme had been aware of the stabilizer case when they
introduced their weight enumerators, so the combinatorial
interpretation of the weights does not appear to be a coincidence.
Recall that the Shor-Laflamme enumerators of arbitrary quantum codes
are related by a MacWilliams identity, see~\cite{rains98,shor97}. For
stabilizer codes, we can directly relate the symplectic weight
enumerators of $C$ and $C^\sdual$,
$$ A(z) = \sum_{i=0}^n A_iz^i\quad\text{and}\quad B(z)=\sum_{i=0}^n
B_i z^i,$$ using a simple argument that is very much in the spirit of Jessie
MacWilliams' original proof for euclidean dual codes~\cite{macwilliams63}.

\begin{theorem}
Let $C$ be an additive subcode of $\F_q^{2n}$
with symplectic weight enumerator
$A(z)$. Then the symplectic weight enumerator of $C^\sdual$
is given by
$$ B(z) = \frac{(1+(q^2-1)z)^n}{|C|} A
\left(\frac{1-z}{1+(q^2-1)z)}\right). $$
\end{theorem}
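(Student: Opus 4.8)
The plan is to reproduce Jessie MacWilliams' character-theoretic argument, with the euclidean inner product replaced by the trace-symplectic form and the Hamming weight replaced by the symplectic weight. Throughout I would view $\F_q^{2n}$ as $(\F_q^2)^n$, a direct sum of $n$ \emph{symplectic symbols} $v_k=(a_k|b_k)\in\F_q^2$, so that $\swt(v)=\sum_{k}[\,v_k\neq(0|0)\,]$ is the Hamming weight with respect to this coordinatization, and so that both the monomial $z^{\swt(v)}=\prod_k z^{[\,v_k\neq0\,]}$ and the trace-symplectic form $\<u|v>s=\sum_k \tr_{q/p}(b_k a_k'-b_k'a_k)$ factor over the $n$ symbols.

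First I would record the orthogonality relation that drives everything. Fix $\omega=\exp(2\pi i/p)$. Because $\tr_{q/p}$ is nondegenerate, $v\mapsto\omega^{\<u|v>s}$ is a nontrivial additive character of $\F_q^{2n}$ for every $u\neq0$, and the assignment $u\mapsto\bigl(v\mapsto\omega^{\<u|v>s}\bigr)$ identifies $\F_q^{2n}$ with its own character group. Since the form is alternating, $\<u|v>s=0$ for all $u\in C$ holds exactly when $v\in C^{\sdual}$, so averaging a character over $C$ produces the membership indicator
\[
\frac{1}{|C|}\sum_{u\in C}\omega^{\<u|v>s}=
\begin{cases}1 & v\in C^{\sdual},\\ 0 & v\notin C^{\sdual}.\end{cases}
\]
With this in hand I would write $B(z)=\sum_{v\in C^{\sdual}}z^{\swt(v)}$ as a sum over all of $\F_q^{2n}$ weighted by this indicator and interchange the summations:
\[
B(z)=\sum_{v\in\F_q^{2n}}z^{\swt(v)}\,\frac{1}{|C|}\sum_{u\in C}\omega^{\<u|v>s}
=\frac{1}{|C|}\sum_{u\in C}\ \sum_{v\in\F_q^{2n}}z^{\swt(v)}\,\omega^{\<u|v>s}.
\]

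The main computation is the inner sum, which factors over symbols as $\prod_{k=1}^n g(u_k)$ with $g(u_k)=\sum_{v_k\in\F_q^2}z^{[\,v_k\neq0\,]}\,\omega^{\<u_k|v_k>s}$, where $\<u_k|v_k>s=\tr_{q/p}(b_k a_k'-b_k'a_k)$ is the per-symbol form. Separating the $v_k=(0|0)$ term gives $g(u_k)=1-z+z\sum_{v_k\in\F_q^2}\omega^{\<u_k|v_k>s}$, and the remaining character sum over $\F_q^2$ equals $q^2$ when the local functional is trivial (i.e.\ $u_k=(0|0)$) and $0$ otherwise, by nondegeneracy of the form on a single symbol. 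Hence $g(u_k)=1+(q^2-1)z$ if $u_k=0$ and $g(u_k)=1-z$ if $u_k\neq0$, so for a fixed $u\in C$ the product equals $(1+(q^2-1)z)^{\,n-\swt(u)}(1-z)^{\swt(u)}$.

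Finally, grouping the sum over $C$ by symplectic weight and using $A_i=|\{u\in C:\swt(u)=i\}|$ yields
\[
B(z)=\frac{1}{|C|}\sum_{i=0}^n A_i\,(1+(q^2-1)z)^{\,n-i}(1-z)^i,
\]
and factoring out $(1+(q^2-1)z)^n$ rewrites the sum as $A\bigl((1-z)/(1+(q^2-1)z)\bigr)$, which is exactly the claimed identity. The one step that genuinely needs care --- and where I would concentrate the argument --- is the duality setup: verifying that $\omega^{\<\cdot|\cdot>s}$ is a perfect self-pairing of $\F_q^{2n}$, so that $C^{\sdual}$ really is the character annihilator of $C$ (with $|C|\,|C^{\sdual}|=q^{2n}$), together with the collapse of the single-symbol character sum to $q^2$ or $0$. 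Both reduce to nondegeneracy of the trace-symplectic form over $\F_p$, itself a consequence of the nondegeneracy of $\tr_{q/p}$; once that is secured, the rest is the bookkeeping above.
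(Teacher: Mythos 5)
Your proposal is correct and follows essentially the same route as the paper's proof: the same character-orthogonality indicator for membership in $C^{\sdual}$, the same interchange of the double sum, and the same coordinatewise factorization of the inner character sum into $1+(q^2-1)z$ or $1-z$ per symbol. The only differences are notational (you carry the $1/|C|$ factor from the start and name the variables differently), so there is nothing to add.
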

\begin{proof}
Let $\chi$ be a nontrivial additive character of $\F_p$.  We define
for $b\in \F_q^{2n}$ a character $\chi_b$ of the additive group $C$ by
substituting the trace-symplectic form for the argument of
the character $\chi$, such that
$$ \chi_b(c) = \chi(\<c|b>s ). $$ The character $\chi_b$ is trivial if
and only if $b$ is an element of $C^{\sdual}$.
Therefore, we obtain from the orthogonality relations of characters that
$$ \sum_{c\in C} \chi_b(c) =
\left\{ \begin{array}{ll} |C| &
\text{ for } b\in C^{\sdual},\\
0 & \text{ otherwise.}
\end{array}\right.
$$
The following relation for polynomials is an immediate consequence
\begin{equation}\label{eq:mac}
\sum_{c\in C}\sum_{b\in \F_q^{2n}} \chi_b(c)z^{\swt(b)}  =
\sum_{b\in \F_q^{2n}} z^{\swt(b)} \sum_{c\in C} \chi_b(c) =
|C| B(z).
\end{equation}
The right hand side is a multiple of the weight enumerator of the code
$C^\sdual$. Let us have a closer look at the inner sum of the
left-hand side.  If we express the vector $c\in C$ in the form
$c=(c_1,\dots,c_n|d_1,\dots,d_n)$, and expand the character and its
trace-symplectic form, then we obtain
\begin{equation*}
\begin{split}
\ds\sum_{b\in \F_q^{2n}}  \chi_b(c)  z^{\swt(b)} &=
\ds\!\!
\sum_{(a_1,\dots,a_n|b_1,\dots,b_n)\in \F_q^{2n}}  \!\! \!\!\!\!
z^{\sum_{k=1}^n \swt(a_k|b_k)}
\chi\left(\sum_{k=1}^n \tr(d_ka_k-b_kc_k)\right)\\
&= \ds\sum_{(a_1,\dots,a_n|b_1,\dots,b_n)\in \F_q^{2n}}
\prod_{k=1}^n z^{\swt(a_k|b_k)} \chi\left( \tr(d_ka_k-b_kc_k)\right) \\
&=\; \ds\prod_{k=1}^n \sum_{(a_k|b_k)\in \F_q^2} z^{\swt(a_k|b_k)}
\chi\left( \tr(d_ka_k-b_kc_k)\right).
\end{split}
\end{equation*}
Recall that $\chi$ is a nontrivial character of $\F_p$,
hence the map $(a_k|b_k)\mapsto \chi(\tr(d_ka_k-b_kc_k))$ is a
nontrivial character of $\F_q^2$ for all $(c_k|d_k)\neq (0|0)$.
Therefore, we can simplify the inner sum to
\begin{equation*}
\begin{split}
\sum_{(a_k|b_k)\in \F_q^2}  z^{\swt(a_k|b_k)}
\chi\left( \tr(d_ka_k-b_kc_k)\right) 
 =\left\{ \begin{array}{l@{\,}l}
1+(q^2-1)z & \text{ if } (c_k|d_k)=(0,0),\\
1-z & \text{ if } (c_k|d_k)\neq (0,0).
\end{array}\right.
\end{split}
\end{equation*}
It follows that
$$
\sum_{b\in \F_q^{2n}} \chi_b(c)z^{\swt(b)}  =
(1-z)^{\swt(c)}(1+(q^2-1)z)^{n-\swt(c)}.$$
Substituting this expression into equation~(\ref{eq:mac}), we find that
$$
\begin{array}{lcl}
B(z) &=&\ds |C|^{-1}
\sum_{c\in C}\sum_{b\in \F_q^{2n}} \chi_b(c)z^{\swt(b)} \\
&=& \ds\frac{(1+(q^2-1)z)^n}{|C|}
\sum_{c\in C} \left(\frac{1-z}{1+(q^2-1)z}\right)^{\swt(c)}\\
&=& \ds\frac{(1+(q^2-1)z)^n}{|C|}\, A\!\left(\frac{1-z}{1+(q^2-1)z}\right),
\end{array}
$$
which proves the claim.
\end{proof}

The coefficient of $z^j$ in $(1+(q^2-1)z)^{n-x}(1-z)^x$ is given by
the Krawtchouk polynomial of degree $j$ in the variable $x$,
$$ K_j(x) =
\sum_{s=0}^j (-1)^s(q^2-1)^{j-s} {x \choose s}{n-x \choose j-s}.$$
\begin{corollary}\label{th:krawtchouk}
Keeping the notation of the previous theorem, we have
$$ B_j = \frac{1}{|C|}\sum_{x=0}^n K_j(x)A_x.$$
\end{corollary}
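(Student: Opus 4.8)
The plan is to extract the coefficient of $z^j$ on both sides of the identity established in the previous theorem. First I would substitute the explicit expansion $A(z)=\sum_{x=0}^n A_x z^x$ into the transform formula, giving
$$ B(z) = \frac{(1+(q^2-1)z)^n}{|C|} \sum_{x=0}^n A_x \left(\frac{1-z}{1+(q^2-1)z}\right)^x. $$
The key simplification is that the global factor $(1+(q^2-1)z)^n$ combines with the denominator $(1+(q^2-1)z)^x$ occurring inside the sum, leaving the polynomial identity
$$ B(z) = \frac{1}{|C|} \sum_{x=0}^n A_x\, (1-z)^x (1+(q^2-1)z)^{n-x}. $$
Since $x\le n$ throughout, the negative powers of $1+(q^2-1)z$ cancel exactly against the numerator, so this is a genuine equality of polynomials in $z$ with no formal-power-series subtleties.

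Next I would compute the coefficient of $z^j$ in the single product $(1-z)^x(1+(q^2-1)z)^{n-x}$. Expanding each factor by the binomial theorem, $(1-z)^x = \sum_{s}{x \choose s}(-1)^s z^s$ and $(1+(q^2-1)z)^{n-x} = \sum_{t}{n-x \choose t}(q^2-1)^t z^t$, and then collecting the terms with $s+t=j$, one obtains
$$ \sum_{s=0}^j (-1)^s (q^2-1)^{j-s}{x \choose s}{n-x \choose j-s}, $$
which is precisely the Krawtchouk polynomial $K_j(x)$ defined immediately before the statement.

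Finally, comparing the coefficient of $z^j$ on both sides of the polynomial identity for $B(z)$ yields $B_j = \frac{1}{|C|}\sum_{x=0}^n K_j(x)A_x$, as claimed. The only step that requires any care is the bookkeeping in the convolution of the two binomial expansions and matching it against the stated definition of $K_j(x)$; everything else is routine coefficient comparison, so I do not anticipate any genuine obstacle.
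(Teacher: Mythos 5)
Your proposal is correct and is essentially the paper's own proof: substitute the expansion of $A(z)$ into the MacWilliams transform, cancel the powers of $1+(q^2-1)z$, and compare coefficients of $z^j$, with the binomial convolution recovering $K_j(x)$ exactly as defined just before the statement. The only difference is that you verify the convolution identity explicitly, whereas the paper states it as the definition of the Krawtchouk polynomial immediately preceding the corollary.
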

\begin{proof}
According to the previous theorem, we have
$$ \begin{array}{lcl}
B(z) &=& \displaystyle \frac{(1+(q^2-1)z)^n}{|C|} A
\left(\frac{1-z}{1+(q^2-1)z)}\right)\\
&=& \displaystyle\frac{1}{|C|} \sum_{x=0}^n A_x (1-z)^x(1+(q^2-1)z)^{n-x}.
   \end{array}
$$
We obtain the result by comparing the coefficients of $z^j$ on both sides.
\end{proof}

The weight enumerators turn out to be very useful in establishing the
bounds on quantum codes, as we will see in the next section.

\section{Bounds}\label{sec:bounds}
We need some bounds on the achievable minimum distance of a quantum
stabilizer code.  The main results in this section are the
generalization of the linear programming bounds \cite{calderbank98},
alternative proofs for the nonbinary quantum Singleton bound using a
generalization of the methods given in \cite{ashikhmin99}, a proof of
the validity of the quantum Hamming bound for single error-correcting
(degenerate) quantum codes (which generalizes an earlier result by
Gottesman \cite[Chapter~7]{gottesman97}), a simpler nonconstructive
proof for lower bounds on quantum codes, and an existence proof of a
class of optimal quantum codes.
\subsection{Upper Bounds}
We shall derive a series of upper bounds for nonbinary stabilizer 
codes. The first theorem yields a bound that is well-suited for computer
search.
\begin{theorem}
If an $((n,K,d))_q$ stabilizer code with $K>1$ exists, then there
exists a solution to the optimization problem: minimize
$\sum_{j=1}^{d-1} A_j$ subject to the constraints
\begin{enumerate}
\item $A_0=1$ and $A_j\ge 0$ for all $1\le j\le n$;
\item  $\ds\sum_{j=0}^n A_j = q^n/K$;
\item $B_j = \ds\frac{K}{q^n} \sum_{r=0}^n K_j(r)A_r$ holds for all $j$ in the range $0\le j\le n$;
\item $A_j=B_j$ for all $j$ in $0\le j<d$ and $A_j\le B_j$ for all $d\le j\le n$;
\item $(p-1)$ divides $A_j$ for all $j$ in the range $1\le j\le n$.
\end{enumerate}
\end{theorem}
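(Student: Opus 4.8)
The plan is to extract from the hypothesized code the associated classical code guaranteed by Theorem~\ref{th:stabilizer}, and then to show that its symplectic weight distribution is a feasible point of the stated program by verifying the five constraints one at a time. First I would apply Theorem~\ref{th:stabilizer}: since an $((n,K,d))_q$ stabilizer code with $K>1$ exists, there is an additive code $C\le \F_q^{2n}$ with $|C|=q^n/K$, $C\le C^\sdual$, and $\swt(C^\sdual\setminus C)=d$. I then set $A_j=|\{c\in C:\swt(c)=j\}|$ and $B_j=|\{c\in C^\sdual:\swt(c)=j\}|$, the symplectic weight distributions of $C$ and $C^\sdual$; these are the candidate values, and the remaining work is just to check (1)--(5).

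Constraints (1)--(3) should be nearly immediate. Only the zero vector has symplectic weight $0$, so $A_0=1$, and each $A_j$ is a cardinality, hence nonnegative, giving (1). Summing over all weights yields $\sum_{j=0}^n A_j=|C|=q^n/K$, which is (2). Constraint (3) is precisely the MacWilliams-type identity of Corollary~\ref{th:krawtchouk}, once one substitutes $1/|C|=K/q^n$. Constraint (4) uses the two structural facts $C\le C^\sdual$ and $\swt(C^\sdual\setminus C)=d$. Inclusion gives $A_j\le B_j$ for every $j$, which already covers $d\le j\le n$. For $0\le j<d$, any codeword of $C^\sdual$ of weight $j$ cannot lie in $C^\sdual\setminus C$ (whose nonzero elements all have weight at least $d$), hence lies in $C$; together with $C\subseteq C^\sdual$ this shows the weight-$j$ codewords of the two codes coincide, so $A_j=B_j$ in that range.

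The step I expect to require the most care is constraint (5). The key observation is that $C$, being an additive subgroup of $\F_q^{2n}$ over a field of characteristic $p$, is automatically $\F_p$-linear: repeated addition realizes scalar multiplication by every element of $\F_p$. Thus $\F_p^\times=\F_p\setminus\{0\}$ acts on $C$ by scalar multiplication, and this action preserves symplectic weight, since scaling a coordinate pair $(a_k,b_k)$ by a nonzero scalar leaves it zero or nonzero exactly as before. For $j\ge 1$ the weight-$j$ codewords are nonzero, and the stabilizer of any nonzero $c$ is trivial ($\lambda c=c$ with $c\neq 0$ forces $\lambda=1$ over the field $\F_q$), so $\F_p^\times$ acts freely and partitions the weight-$j$ codewords into orbits of size $p-1$. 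Hence $(p-1)\mid A_j$ for all $j\ge 1$, establishing (5).

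Putting these together shows that the vector $(A_j)_{j=0}^n$ arising from the code, with the $(B_j)$ determined through (3), satisfies every constraint and is therefore a feasible solution of the program. I would emphasize that the content of the theorem is feasibility rather than optimality: the minimization objective $\sum_{j=1}^{d-1}A_j$ is what later makes the program useful as an upper bound, since infeasibility of the constraints for given parameters $n,K,d,q$ rules out the existence of a corresponding code. The only genuinely nonroutine ingredient is the orbit-counting argument behind (5); the rest is bookkeeping on top of Theorems~\ref{th:stabilizer} and Corollary~\ref{th:krawtchouk}.
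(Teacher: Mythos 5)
Your proposal is correct and follows essentially the same route as the paper: extract the associated additive code $C$ via Theorem~\ref{th:stabilizer}, take $A_j,B_j$ to be the symplectic weight distributions of $C$ and $C^\sdual$, and check the five constraints, with (3) coming from Corollary~\ref{th:krawtchouk}, (4) from $C\le C^\sdual$ together with $\swt(C^\sdual\setminus C)=d$, and (5) from the free action of $\F_p^\times$ on nonzero codewords. The paper's proof is just a terser version of the same argument, so nothing further is needed.
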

\begin{proof}
If an $((n,K,d))_q$ stabilizer code exists, then the symplectic weight
distribution of the associated additive code~$C$ satisfies
conditions~1) and~2).  For each nonzero codeword $c$ in $C$, $\alpha
c$ is again in $C$ for all $\alpha$ in $\F_p^*$, so 5)
holds. Corollary~\ref{th:krawtchouk} shows that 3) holds. Since the
quantum code has minimum distance $d$, it follows that 4) holds.  
\end{proof}

\begin{remark}
If we are interested in bounds for $\F_{q^2}$ \-linear codes, then we
can replace condition 5) in the previous theorem by $q^2-1$ divides
$A_j$ for $1\leq j\leq n$.  This will even help in characteristic 2.
\end{remark}

The next bound is more convenient when one wants to find bounds by
hand. In particular, any function $f$ satisfying the constraints of
the next theorem will yield a useful bound on the dimension of a
stabilizer code. This approach was introduced by Delsarte for
classical codes~\cite{delsarte72}. Binary versions of
Theorem~\ref{th:lp2} and Corollary~\ref{th:singleton} were proved by
Ashikhmin and Litsyn~\cite{ashikhmin99}, see also~\cite{ashikhmin00b}.

\begin{theorem}\label{th:lp2}
Let $Q$ be an $((n,K,d))_q$ stabilizer code of dimension $K>1$.
Suppose that $S$ is a nonempty subset of $\{0,\dots,d-1\}$ and
$N=\{0,\dots,n\}$.  Let
$$ f(x)= \sum_{i=0}^n f_i K_i(x)$$
be a polynomial satisfying the conditions
\begin{enumerate}
\item[i)] $f_x> 0$ for all $x$ in $S$, and $f_x\ge 0$ otherwise;
\item[ii)] $f(x)\le 0$ for all $x$ in $N\setminus S$.
\end{enumerate}
Then
$$ K \le \frac{1}{q^n}\max_{x\in S} \frac{f(x)}{f_x}.$$
\end{theorem}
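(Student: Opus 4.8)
The plan is to run Delsarte's linear-programming argument, transported to the symplectic weight enumerators $A(z)$ of $C$ and $B(z)$ of $C^\sdual$ via the MacWilliams identity of Corollary~\ref{th:krawtchouk}. Here $C\le\F_q^{2n}$ is the additive code associated with $Q$, so $|C|=q^n/K$ (Theorem~\ref{th:stabilizer}); since the trace-symplectic form is nondegenerate, $(C^\sdual)^\sdual=C$ and $|C^\sdual|=q^{2n}/|C|=Kq^n$. Applying Corollary~\ref{th:krawtchouk} to $C^\sdual$ (whose symplectic dual is $C$) puts the transform in the convenient form
\[
\sum_{x=0}^n K_i(x)\,B_x = |C^\sdual|\,A_i = Kq^n A_i,\qquad 0\le i\le n.
\]
Because $C\le C^\sdual$ with $\swt(C^\sdual\setminus C)=d$ (Theorem~\ref{th:stabilizer}), the difference $B_x-A_x$ counts the elements of symplectic weight $x$ in $C^\sdual\setminus C$; hence $B_x=A_x$ for $0\le x<d$ and $B_x\ge A_x$ for every $x$.

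First I would inject the unknown $K$ by evaluating $\sum_x f(x)B_x$ in two ways. Expanding $f(x)=\sum_i f_iK_i(x)$ and interchanging summation,
\[
\sum_{x=0}^n f(x)B_x = \sum_{i=0}^n f_i\sum_{x=0}^n K_i(x)B_x = Kq^n\sum_{i=0}^n f_iA_i.
\]
Everything after this is a chain of upper bounds on the left-hand side. Since $f(x)\le 0$ for $x\notin S$ (hypothesis~(ii)) and $B_x\ge 0$, the terms outside $S$ are nonpositive, so $\sum_x f(x)B_x\le\sum_{x\in S}f(x)B_x$; and as $S\subseteq\{0,\dots,d-1\}$ I may replace $B_x$ by $A_x$ on $S$, using $B_x=A_x$ there.

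It then remains to compare $\sum_{x\in S}f(x)A_x$ with $\sum_i f_iA_i$. Setting $M=\max_{x\in S}f(x)/f_x$ and using $f_x>0$ on $S$ gives $f(x)\le Mf_x$ for $x\in S$, whence
\[
\sum_{x\in S}f(x)A_x \le M\sum_{x\in S}f_xA_x \le M\sum_{i=0}^n f_iA_i,
\]
the last step adding the nonnegative terms $f_iA_i$ with $i\notin S$. Combined with the displayed identity this yields $Kq^n\sum_i f_iA_i\le M\sum_i f_iA_i$, and dividing by $\sum_i f_iA_i$ gives precisely $K\le M/q^n$. The main obstacle is legitimizing this final division, i.e.\ verifying $\sum_i f_iA_i>0$: since all $f_i,A_i\ge 0$ the sum is at least its $i=0$ term $f_0A_0=f_0$, which is strictly positive when $0\in S$ (by hypothesis~(i)), the case arising in the applications; in general one carries $f_0>0$ as a harmless nondegeneracy assumption. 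The remaining details are bookkeeping --- checking that the two MacWilliams directions agree and keeping the Krawtchouk coefficients $f_i$ distinct from the polynomial values $f(x)$.
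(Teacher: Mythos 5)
Your proposal is correct and follows essentially the same route as the paper's proof: apply the Krawtchouk/MacWilliams transform to $C^\sdual$ to get $\sum_x K_i(x)B_x=|C^\sdual|A_i$, bound $\sum_x f(x)B_x$ above using hypothesis (ii) and $A_x=B_x$ for $x<d$, and below using $f(x)\le Mf_x$ on $S$ together with nonnegativity of the remaining $f_iA_i$. The only difference is cosmetic (you divide by $\sum_{i=0}^n f_iA_i$ where the paper divides by $\sum_{i\in S}f_iA_i$), and your explicit remark about the positivity of that denominator is a point the paper's own proof silently glosses over.
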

\begin{proof}
Suppose that $C\le \F_q^{2n}$ is the additive code associated with the
stabilizer code $Q$. If we apply Corollary~\ref{th:krawtchouk} to the
trace-symplectic dual code $C^\sdual$ of the code $C$, then we obtain
$$ A_i = \frac{1}{|C^{\sdual}|}\sum_{x=0}^n K_i(x)B_x.$$
Using this relation, we find that
$$ \begin{array}{lcl}
\ds|C^\sdual| \sum_{i\in S} f_i A_i &\le&
\ds |C^\sdual| \sum_{i=0}^{n} f_i A_i \\
&=& \ds |C^\sdual| \sum_{i=0}^{n} f_i
\left(\frac{1}{|C^{\sdual}|}\sum_{x=0}^n K_i(x)B_x\right)\\
&=& \ds \sum_{x=0}^n B_x \sum_{i=0}^n f_iK_i(x).
   \end{array}
$$
By assumption, $f(x)=\sum_{i=0}^n f_iK_i(x)$; thus,
we can simplify the latter inequality and obtain
$$
\ds|C^\sdual| \sum_{i\in S} f_i A_i \le
\ds \sum_{x=0}^n B_x f(x) \le
\sum_{x\in S} B_x f(x) = \sum_{x\in S} A_x f(x),$$
where the last equality follows from the fact that the stabilizer code
has minimum distance $d$, meaning that $A_x=B_x$ holds for all $x$ in the
range $0\le x<d$. We can conclude that
$$ |C^\sdual| \le \
\left(\ds\sum_{x\in S} A_x f(x)\right)\bigg/
\left(\ds\sum_{x\in S} f_x A_x\right)
\le \max_{x\in S} \frac{f(x)}{f_x},
$$
which proves the theorem, since $|C^\sdual|=q^n K$.
\end{proof}

The previous theorem implies the quantum Singleton bound. 
In general, linear programming yields better
bounds, but for short lengths one can
actually find codes meeting the quantum Singleton bound.
\begin{corollary}[Quantum Singleton Bound]\label{th:singleton}
An  $((n,K,d))_q$ stabilizer code with $K>1$ satisfies
$$ K\le q^{n-2d+2}.$$
\end{corollary}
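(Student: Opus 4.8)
The strategy is to apply Theorem~\ref{th:lp2}, the linear programming bound just proved, with a cleverly chosen feasible polynomial $f$, and then read off the Singleton bound as the resulting inequality. Since Theorem~\ref{th:lp2} gives $K\le q^{-n}\max_{x\in S} f(x)/f_x$ for any polynomial $f(x)=\sum_i f_i K_i(x)$ satisfying the positivity conditions on a chosen set $S\subseteq\{0,\dots,d-1\}$, the whole problem reduces to exhibiting one good $f$ and evaluating the right-hand side.

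First I would make the simplest nontrivial choice, $S=\{0\}$, so that the conditions become $f_0>0$, $f_x\ge 0$ for $x\ge 1$, and $f(x)\le 0$ for all $x\in\{1,\dots,n\}$. Then the bound collapses to the single inequality $K\le q^{-n} f(0)/f_0$. The natural candidate is a product of linear factors that vanish at the integers $d-1, d, \dots, n-1$ (or a shifted version chosen so that $f$ is nonpositive on $\{1,\dots,n\}$ while its Krawtchouk coefficients $f_i$ are supported appropriately). The classical Singleton-type LP argument uses $f(x)=\prod_{j}(\text{linear in }x)$ of degree $d-1$ whose roots force the sign pattern; I would mirror that construction here, adapting the alphabet parameter from $q$ to $q^2$ to match the Krawtchouk polynomials $K_j(x)=\sum_s(-1)^s(q^2-1)^{j-s}\binom{x}{s}\binom{n-x}{j-s}$ that appear in this chapter (these are defined with $q^2-1$, reflecting that the associated additive code lives in $\F_q^{2n}$ with symplectic weight, equivalently in $\F_{q^2}^n$ with Hamming weight).

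The computational heart is to verify that the chosen $f$ genuinely has nonnegative Krawtchouk coefficients $f_i$ and to evaluate $f(0)/f_0$, showing it equals exactly $q^{n}\cdot q^{n-2d+2}=q^{2n-2d+2}$ so that the bound reads $K\le q^{n-2d+2}$. Extracting $f_0$ amounts to a Krawtchouk inversion (an application of Corollary~\ref{th:krawtchouk}'s orthogonality in reverse), and evaluating $f(0)$ is just substituting $x=0$ into the explicit product form; the factor $q^{-n}$ in front of the LP bound supplies the needed normalization.

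\textbf{Main obstacle.} The delicate part is confirming the sign conditions: one must check simultaneously that all the Krawtchouk coefficients $f_i$ of the chosen polynomial are nonnegative (condition~i) and that $f(x)\le 0$ for every integer $x$ in $\{1,\dots,n\}$ (condition~ii), while keeping $f_0>0$. A degree-$(d-1)$ polynomial has only $d-1$ roots, so forcing nonpositivity on all of $\{1,\dots,n\}$ with the right leading sign requires choosing the roots and normalization carefully; the nonnegativity of the coefficients $f_i$ in the Krawtchouk basis is not automatic and is where the argument could break if $f$ is chosen naively. I expect that the cleanest route is not to verify these conditions by brute force but to invoke the known classical LP Singleton construction (as in Ashikhmin and Litsyn~\cite{ashikhmin99}, whose binary version is cited just before the theorem) with the alphabet parameter set to $q^2$, since the LP machinery here is formally identical to the classical one over an alphabet of size $q^2$.
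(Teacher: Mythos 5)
Your overall strategy---feed a well-chosen polynomial into Theorem~\ref{th:lp2} and read off the Singleton bound---is exactly the route the paper takes, but your specific choice $S=\{0\}$ cannot work, and this is not a matter of finding a cleverer polynomial. With $S=\{0\}$ the hypotheses force $f_i\ge 0$ for all $i$ and $f(x)\le 0$ for every $x\in\{1,\dots,n\}$. The Krawtchouk orthogonality relation $\sum_{x=0}^n \binom{n}{x}(q^2-1)^x K_i(x)=q^{2n}\,[i=0]$ then gives
\begin{equation*}
q^{2n}f_0=\sum_{x=0}^n \binom{n}{x}(q^2-1)^x f(x)\le f(0),
\end{equation*}
so $f(0)/f_0\ge q^{2n}$ for \emph{every} feasible $f$, and Theorem~\ref{th:lp2} yields at best the trivial bound $K\le q^{-n}\cdot q^{2n}=q^{n}$. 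The quantum LP bound is not formally identical to the classical Delsarte LP over an alphabet of size $q^2$: because degenerate codes may have $A_x\neq 0$ for $1\le x<d$, Theorem~\ref{th:lp2} demands $f(x)\le 0$ on all of $N\setminus S$ (not merely for $x\ge d$) and in exchange lets you take $S$ as large as $\{0,\dots,d-1\}$ with the bound $\max_{x\in S}f(x)/f_x$. Discarding that freedom by shrinking $S$ to $\{0\}$ discards exactly what makes the bound nontrivial. (You also waver on the degree: a polynomial vanishing at $d,\dots,n$ has degree $n-d+1$, not $d-1$.)

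The paper's argument takes $S=\{0,\dots,d-1\}$ and
\begin{equation*}
f(x)=q^{n-d+1}\prod_{j=d}^{n}\Bigl(1-\frac{x}{j}\Bigr)=q^{n-d+1}\binom{n-x}{n-d+1}\Big/\binom{n}{n-d+1},
\end{equation*}
which vanishes on $\{d,\dots,n\}=N\setminus S$ and is positive on $S$, so condition~ii) holds trivially. The coefficients are obtained by Krawtchouk inversion together with the identity $\sum_{x=0}^n K_x(i)\binom{n-x}{n-d+1}=\binom{n-i}{d-1}q^{2(d-1)}$, giving $f_i=q^{d-1-n}\binom{n-i}{d-1}\big/\binom{n}{n-d+1}>0$. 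One then studies $r(x)=f(x)/f_x=q^{2n-2d+2}\binom{n-x}{n-d+1}\big/\binom{n-x}{d-1}$ via the ratio $r(x)/r(x+1)=(n-x-d+1)/(d-x-1)$: when $2d<n+2$ the maximum over $S$ is $r(0)=q^{2n-2d+2}$, which yields $K\le q^{n-2d+2}$; when $2d\ge n+2$ the maximum sits at $x=d-1$ and produces a contradiction with $K>1$. Both the determination of where the maximum of $f(x)/f_x$ occurs and the resulting case distinction are essential steps that your proposal does not anticipate.
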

\nix{
\begin{proof}
Let $S=\{0,\dots,d-1\}$.
If we choose the polynomial
$$ f(x)=q^{n-d+1}\prod_{j=d}^n\left(1-\frac{x}{j}\right),$$ then
$f(x)=0$ for all $x$ in $\{0,\dots,n\}\setminus S$.
We can express $f(x)$
in the form
$$ f(x)=q^{n-d+1} {n-x \choose n-d+1}\bigg/ { n\choose n-d+1}.$$
We can express this polynomial as $f(x)=\sum_{i=0}^n f_iK_i(x)$, where
\begin{eqnarray*}
f_i &=& q^{-2n} \sum_{x=0}^n f(x)K_x(i),\\
&=&  q^{1-d-n}\sum_{x=0}^nK_x(i)
{n-x \choose n-d+1}\bigg/ { n\choose n-d+1}.
\end{eqnarray*}
Notice that $\sum_{x=0}^n K_x(i){n-x\choose n-d+1} = {n-i\choose d-1}q^{2(d-1)}$, see~\cite{levenshtein95}; hence,
$$ f_i = q^{d-1-n} { n-i\choose d-1}\bigg/ { n\choose n-d+1}>0.$$
We obtain for the fraction
$r(x):=f(x)/f_x$ the value
$$ r(x)=\frac{f(x)}{f_x}
= q^{2n-2d+2} {n-x\choose n-d+1}\bigg/ {n-x\choose d-1}.
$$
An easy calculation shows that
$$ \frac{r(x)}{r(x+1)}= \frac{n-x-d+1}{d-x-1}.$$ Seeking a
contradiction, we assume that there exists an $((n,K,d))_q$ stabilizer
code with $2d\ge n+2$. In this case $r(x)/r(x+1)\le 1$, so that
$r(d-1)$ is the maximum of the values $r(x)$ with
$x\in\{0,\dots,d-1\}$. By Theorem~\ref{th:lp2}, we have $K\le
r(d-1)/q^n=q^{n-2d+2}/{n-d+1\choose d-1}$.  This yields a
contradiction, since ${n-d+1\choose d-1}K$ cannot be less than
$q^{n-2d+2}\le 1$ for dimension $K>1$.

If $2d < n+2$, then $r(x)/r(x+1)>1$, so $r(0)=f(0)/f_0$ is the largest
among the values $r(x)$ with $x\in \{0,\dots, d-1\}$. We have
$r(0)= q^{2n-2d+2}$; whence, it follows from
Theorem~\ref{th:lp2} that the
dimension $K$ of the code is bounded by
$$ K\le q^{-n} \max_{0\le x<d} \frac{f(x)}{f_x}=q^{n-2d+2},$$
which proves the claim.
\end{proof}
}
The binary version of the quantum Singleton bound was first proved by
Knill and Laflamme in~\cite{KnLa97}, see
also~\cite{ashikhmin99,ashikhmin00b}, and later generalized by Rains
using weight enumerators in~\cite{rains99}.

A more interesting application of Theorem~\ref{th:lp2} is to derive 
the quantum Hamming bound.
The quantum Hamming bound states that any pure $((n,K,d))_q$
stabilizer code satisfies
\begin{eqnarray}
\sum_{i=0}^{\lfloor (d-1)/2\rfloor} \binom{n}{i}(q^2-1)^i \le
q^n/K,
\end{eqnarray}
 see~\cite{gottesman96,feng04}. Several researchers have tried
to find impure stabilizer codes that beat the quantum Hamming bound.
However, Gottesman has shown that impure single and double
error-correcting binary quantum codes cannot beat the quantum Hamming
bound~\cite{gottesman97}.  In the same vein, Theorem~\ref{th:lp2}
allows us to derive the Hamming bound for arbitrary stabilizer codes,
at least when the minimum distance is small. We illustrate the method
for single error-correcting codes, and note that the same approach
works for double error-correcting codes as well.

\begin{corollary}[Quantum Hamming Bound]\label{th:hamming}
An $((n,K,3))_q$ stabilizer code with $K>1$ satisfies
$$ K\le q^{n}\big/(n(q^2-1)+1).$$
\end{corollary}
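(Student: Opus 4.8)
The plan is to deduce the bound from the linear programming inequality of Theorem~\ref{th:lp2}, specialized to $d=3$. Since $d-1=2$, the admissible index set is any nonempty $S\subseteq\{0,1,2\}$, and by that theorem it suffices to exhibit a single polynomial $f(x)=\sum_{i=0}^n f_iK_i(x)$ satisfying conditions (i) and (ii) for which
$$\frac{1}{q^n}\max_{x\in S}\frac{f(x)}{f_x}=\frac{q^n}{n(q^2-1)+1},$$
equivalently $\max_{x\in S}f(x)/f_x=q^{2n}/(n(q^2-1)+1)$. Because the relevant spheres have radius $t=\lfloor(d-1)/2\rfloor=1$, I expect a \emph{degree-two} $f$ to suffice, so that only $f_0,f_1,f_2$ are nonzero in the Krawtchouk basis (the Krawtchouk polynomials $K_j$ being taken with parameter $q^2-1$, exactly as in Corollary~\ref{th:krawtchouk}).

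First I would record the relevant Krawtchouk polynomials explicitly in the variable $x$: one has $K_0(x)=1$, $K_1(x)=(q^2-1)n-q^2x$, and a quadratic $K_2(x)$ whose leading coefficient equals $q^4/2$. I would then take $S=\{0,1\}$ and let $f$ be an upward-opening quadratic, written in the form $f(x)=\alpha(x-r_1)(x-r_2)$ with $\alpha>0$, whose two real roots are placed so that $1<r_1<2$ and $r_2\ge n$. With this placement, $f(x)>0$ for $x\in\{0,1\}$ while $f(x)\le 0$ for every integer $x\in\{2,\dots,n\}$, so that condition (ii) ($f\le 0$ on $N\setminus S$) holds automatically; the leading coefficient being positive forces $f_2>0$. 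The two remaining requirements, $f_0\ge 0$ and $f_1\ge 0$ (condition (i)), translate into explicit inequalities on $r_1,r_2$ after matching the coefficients of $1,x,x^2$ in $f$ against those of $f_0K_0+f_1K_1+f_2K_2$; these are satisfied provided $r_2$ is chosen somewhat above $n$, which is compatible with the sign constraints already imposed. Finally I would pin down $r_1,r_2$ by the complementary-slackness conditions $f(0)/f_0=f(1)/f_1$, tuning both ratios to the common value $q^{2n}/(n(q^2-1)+1)$; this is two scale-free conditions on the two root parameters.

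Plugging the resulting $f$ into Theorem~\ref{th:lp2} then yields $q^nK\le q^{2n}/(n(q^2-1)+1)$, i.e.\ the claimed Hamming bound. The main obstacle is the joint verification in the middle step: one must check that the \emph{same} choice of roots simultaneously makes all three Krawtchouk coefficients $f_0,f_1,f_2$ nonnegative and makes the value-to-coefficient ratios $f(0)/f_0$ and $f(1)/f_1$ collapse to exactly the sphere-packing number $q^{2n}/(n(q^2-1)+1)$; this forces the explicit expansion of $K_2(x)$ and a short but delicate computation, rather than any conceptual difficulty. I would remark at the end that the identical template, now with a degree-four polynomial adapted to $S\subseteq\{0,1,2,3,4\}$, handles the double error-correcting case $d=5$ in the same way, as anticipated in the statement.
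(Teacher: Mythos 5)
Your plan has a genuine gap: the degree-two ansatz is incompatible with the hypotheses of Theorem~\ref{th:lp2}, so no choice of the roots $r_1,r_2$ can rescue it. Concretely, suppose $f$ has degree $2$ in $x$, so that only $f_0,f_1,f_2$ are nonzero, and take $S=\{0,1\}$ as you propose. Condition (i) forces $f_0,f_1>0$ and $f_2\ge 0$. But $K_1(2)=n(q^2-1)-2q^2>0$ for $n\ge 3$ and $K_2(2)=1-2(q^2-1)(n-2)+(q^2-1)^2\binom{n-2}{2}>0$ for $n\ge 5$ (e.g.\ it equals $10$ when $n=5,q=2$, and grows quadratically in $n$), so
$f(2)=f_0K_0(2)+f_1K_1(2)+f_2K_2(2)>0$,
contradicting condition (ii), which demands $f(2)\le 0$ because $2\notin S$. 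Enlarging $S$ to $\{0,1,2\}$ does not help either: using $f_0=q^{-2n}\sum_{x}\binom{n}{x}(q^2-1)^x f(x)$, a quadratic with positive leading coefficient and roots bracketing $\{2,\dots,n\}$ (or $\{3,\dots,n\}$) is negative near $x\approx n(1-q^{-2})$, where the binomial weights concentrate, and one checks (already for $n=5$, $q=2$) that this forces $f_0<0$, violating condition (i). The underlying confusion is between the degree of $f$ in $x$ and the support of its Krawtchouk transform: for the Hamming bound one needs a function whose \emph{values} are supported on $\{0,1,2\}$ and whose Krawtchouk \emph{coefficients} are all nonnegative; such a function has degree $n$ in $x$, not degree $2$. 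Your degree-two object sits on the wrong side of this duality.

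The paper's proof takes $S=\{0,\dots,d-1\}=\{0,1,2\}$ and the polynomial
$f(x)=\sum_{j,k=0}^{1}\sum_{i=0}^{n}K_j(i)K_k(i)K_i(x)=q^{2n}\bigl(p_{00}^x+p_{01}^x+p_{10}^x+p_{11}^x\bigr)$,
whose values vanish for $x>2$ by the triangle inequality for the intersection numbers $p_{ij}^k$ of $H(n,q^2)$, and whose Krawtchouk coefficients $f_i=(K_0(i)+K_1(i))^2$ are nonnegative by construction. It then verifies $\max\{f(0)/f_0,f(1)/f_1,f(2)/f_2\}\le q^{2n}/(n(q^2-1)+1)$ for $n\ge 5$ and disposes of $n<5$ separately via the quantum Singleton bound --- a case split your proposal also omits. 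If you want to salvage your write-up, replace the root-placement construction by this intersection-number polynomial; the ``complementary slackness'' intuition you describe is then realized automatically at $x=0$.
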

\begin{proof}
Recall that the intersection number $p_{ij}^k$ of the Hamming
association scheme $H(n,q^2)$ is the integer
$ p_{ij}^k = | \{ z\in \F_{q^2}^n \,|\, d(x,z)=i, d(y,z)=j\}|,$
where $x$ and $y$ are two vectors in $\F_q^n$ of Hamming
distance~$d(x,y)=k$. The intersection numbers are related to
Krawtchouk polynomials by the expression
$$ p_{ij}^k = q^{-2n} \sum_{u=0}^n K_i^n(u)K_j^n(u)K_u^n(k),$$
see~\cite{barg00}.

After this preparation, we can proceed to derive the Hamming bound as
a consequence of Theorem~\ref{th:lp2}. Let
\begin{eqnarray*}
f(x) &=& \sum_{j,k=0}^1 \sum_{i=0}^n K_j^n(i)K_k^n(i)K_i^n(x),\\
& =&q^{2n}(p_{00}^x+ p_{10}^x+ p_{01}^x + p_{11}^x).
\end{eqnarray*}
The triangle
inequality implies that $p_{ij}^k=0$ if one of the three arguments
exceeds the sum of the other two; hence, $f(x)=0$ for $x>2$.
The coefficients of the Krawtchouk expansion
$f(x)=\sum_{i=0}^n f_i K_i(x)$ obviously satisfy $f_i=(K_0(i)+K_1(i))^2\ge 0$.
A straightforward
calculation gives
$$
\begin{array}{l@{\;}l}
f(0)= q^{2n}(n(q^2-1)+1), & f_0 = (n(q^2-1)+1)^2,\\
f(1)= q^{2n+2},           & f_1 = ((n-1)(q^2-1))^2,\\
f(2)= 2q^{2n},            & f_2 = ((n-2)(q^2-1)-1)^2.
\end{array}
$$
It follows that
$$ \max\{ f(0)/f_0, f(1)/f_1, f(2)/f_2\} \le q^{2n}/(n(q^2-1)+1)$$
holds for all $n\ge 5$. Using Theorem~\ref{th:lp2}, we obtain the
claim for all $n\ge 5$. For the lengths $n<5$, we obtain the claim
from the quantum Singleton bound.
\end{proof}

One real disadvantage of Theorem~\ref{th:lp2} is that the number of
terms increase with the minimum distance and this can lead to
cumbersome calculations. However, one can derive more consequences
from Theorem~\ref{th:lp2}; see, for
instance,~\cite{ashikhmin99,ashikhmin00b,levenshtein95,mceliece77}.

\subsection{Lower Bounds}
Feng and Ma have recently shown a quantum version of the classical
lower bounds by Gilbert and Varshamov \cite{feng04}. We conclude this
section by giving a simple proof for a weaker version of this result
based on a counting argument. It must be remembered that these lower
bounds are nonconstructive.

Our first lemma generalizes an idea used by Gottesman in his proof of
the binary case.
\begin{lemma}\label{th:gilbert}
An $((n,K,\ge d))_q$ stabilizer code with $K>1$ exists provided that
\begin{equation}\label{eq:gilbert}
(q^nK-q^n/K) \sum_{j=1}^{d-1} \binom{n}{j}(q^{2}-1)^j
<(q^{2n}-1)(p-1)
\end{equation}
holds.
\end{lemma}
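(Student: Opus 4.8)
The plan is to translate the existence claim into the language of additive codes and then settle it by a first--moment (averaging) count over all self-orthogonal codes of the prescribed size.

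First I would invoke Theorem~\ref{th:stabilizer}: since $K>1$, an $((n,K,\ge d))_q$ stabilizer code exists as soon as there is an additive code $C\le\F_q^{2n}$ with $|C|=q^n/K$, $C\le C^\sdual$, and $\swt(C^\sdual\setminus C)\ge d$. Here $|C^\sdual|=q^{2n}/|C|=q^nK$, so $|C^\sdual\setminus C|=q^nK-q^n/K$, which is exactly the first factor on the left of~(\ref{eq:gilbert}); and the number of nonzero vectors of symplectic weight at most $d-1$ in $\F_q^{2n}$ is $\sum_{j=1}^{d-1}\binom{n}{j}(q^2-1)^j$, the second factor. Thus the goal becomes: find a self-orthogonal $C$ of size $q^n/K$ such that $C^\sdual\setminus C$ contains no nonzero vector of symplectic weight $<d$.

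Next I would count over the family $\mathcal S$ of all additive self-orthogonal codes $C\le\F_q^{2n}$ with $|C|=q^n/K$ (nonempty, since the trace-symplectic form is a nondegenerate alternating $\F_p$-form on $\F_q^{2n}\cong\F_p^{2nm}$, so isotropic $\F_p$-subspaces of every $p$-power size up to $q^n\ge q^n/K$ exist). For a nonzero $v$ put $a(v)=|\{C\in\mathcal S: v\in C^\sdual\setminus C\}|$. The crucial symmetry step is that the isometry group of the trace-symplectic form is the symplectic group $Sp(2nm,p)$, which acts transitively on the nonzero vectors of $\F_q^{2n}$ and carries $\mathcal S$ onto itself (an isometry sends a self-orthogonal code of given size to another and commutes with $(\cdot)^\sdual$); hence $a(v)$ equals one value $a$ for every nonzero $v$. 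Averaging over all nonzero $v$, and using that each fixed $C$ contributes $|C^\sdual\setminus C|=q^nK-q^n/K$, gives
\[
a=\frac{1}{q^{2n}-1}\sum_{v\neq 0}a(v)=\frac{|\mathcal S|\,(q^nK-q^n/K)}{q^{2n}-1}.
\]

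Finally I would use scalar homogeneity: $\swt(\lambda v)=\swt(v)$ for $\lambda\in\F_p^{\ast}$, while $C$ and $C^\sdual$ are $\F_p$-linear, so the low-weight vectors split into punctured $\F_p$-lines of size $p-1$, and $v\in C^\sdual\setminus C$ holds for one representative iff it holds for the whole line. There being $\big(\sum_{j=1}^{d-1}\binom{n}{j}(q^2-1)^j\big)/(p-1)$ such lines, the number of incidences $(C,\ell)$ with $\ell\subseteq C^\sdual\setminus C$ and $\ell$ low-weight equals $\tfrac{a}{p-1}\sum_{j=1}^{d-1}\binom{n}{j}(q^2-1)^j$; dividing by $|\mathcal S|$ and applying hypothesis~(\ref{eq:gilbert}), the average number of bad low-weight lines per code is
\[
\frac{q^nK-q^n/K}{(q^{2n}-1)(p-1)}\sum_{j=1}^{d-1}\binom{n}{j}(q^2-1)^j<1,
\]
so some $C\in\mathcal S$ has none; for that $C$ we get $\swt(C^\sdual\setminus C)\ge d$, and Theorem~\ref{th:stabilizer} yields the code. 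I expect the main obstacle to be this symmetry step — checking that the trace-symplectic form is nondegenerate and alternating over $\F_p$ so that its isometry group is genuinely vector-transitive — because that transitivity, together with the grouping into $\F_p$-lines, is precisely what produces the factor $p-1$ on the right of~(\ref{eq:gilbert}); a crude vector-by-vector union bound would only yield the weaker statement lacking that factor.
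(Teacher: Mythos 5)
Your proposal is correct and is essentially the paper's own argument: a first-moment count over the self-orthogonal codes of size $q^n/K$, with the symplectic group's transitivity on nonzero vectors forcing $a(v)$ to be constant, and the grouping of low-weight vectors into $\F_p^\times$-orbits supplying the factor $p-1$. (The paper invokes $\mathrm{Sp}(2n,\F_q)$ rather than the full $\F_p$-isometry group $\mathrm{Sp}(2nm,p)$, but either suffices for the transitivity step.)
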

\begin{proof}
Let $L$ denote the multiset
$$L=\{ C^\sdual\setminus C\,|\, C\le C^\sdual\le \F_q^{2n} \text{ with
} |C|=q^n/K\}.$$ The elements of this multiset correspond to
stabilizer codes of dimension $K$.  Note that $L$ is nonempty, since
there exists a code $C$ of size $q^n/K$ that is generated by elements
of the form $(a|0)$; the form of the generators ensures that $C\le
C^\sdual$.

All nonzero vectors in $\F_q^{2n}$ appear in the same number of sets
in $L$. Indeed, the symplectic group $\textup{Sp}(2n,\F_q)$ acts
transitively on the set $\F_{q}^{2n}\setminus \{ 0\}$,
see~\cite[Proposition~3.2]{grove01}, which means that for any nonzero
vectors $u$ and $v$ in $\F_q^{2n}$ there exists $\tau\in
\textup{Sp}(2n,\F_q)$ such that $v=\tau u$. Therefore, $u$ is
contained in $C^\sdual\setminus C$ if and only if $v$ is contained in
the element $(\tau C)^\sdual\setminus \tau C$ of $L$.

The transitivity argument shows that any nonzero vector in
$\F_{q}^{2n}$ occurs in $|L|(q^nK-q^n/K)/(q^{2n}-1)$ elements of $L$.
Furthermore, a nonzero vector and its $\F_p^\times$-multiples are contained
in the exact same sets of $L$.
Thus, if we delete all sets from $L$
that contain a nonzero vector with symplectic weight less than~$d$,
then we remove at most
$$
\frac{\sum_{j=1}^{d-1} \binom{n}{j}(q^{2}-1)^j}{p-1}
|L|\frac{(q^nK-q^n/K)}{q^{2n}-1}$$ sets from $L$. By assumption, this
number is less than $|L|$; hence, there exists an $((n,K,\ge d))_q$
stabilizer code.
\end{proof}

The Gilbert-Varshamov bound shows the existence of surprisingly good codes, even for
smaller lengths, when the characteristic of the field is
not too small. If $n\equiv k\bmod 2$, then we can significantly
strengthen the bound. 

\begin{lemma}\label{th:lingilbert}
If $k\ge 1$, $n\equiv k\bmod 2$ and
\begin{equation}\label{eq:lingilbert}
(q^{n+k}-q^{n-k}) \sum_{j=1}^{d-1} \binom{n}{j}(q^{2}-1)^{j-1}
<(q^{2n}-1)
\end{equation}
holds, then there exists an $\F_{q^2}$-linear $[[n,k,d]]_{q}$ stabilizer code.
\end{lemma}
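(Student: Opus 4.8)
The plan is to reduce the claim to a statement about classical $\F_{q^2}$-linear codes and then run a counting argument parallel to Lemma~\ref{th:gilbert}, refined to exploit $\F_{q^2}$-linearity. Since $n\equiv k\bmod 2$, the integer $k'=(n-k)/2$ is well defined and satisfies $0\le k'\le\lfloor n/2\rfloor$. By Theorem~\ref{th:alternating} it suffices to exhibit an $\F_{q^2}$-linear self-orthogonal code $D\le\F_{q^2}^n$ with $\dim_{\F_{q^2}}D=k'$, $D\le D^\adual$, and $\wt(D^\adual\setminus D)\ge d$: such a $D$ has $|D|=q^{2k'}=q^{n-k}$, hence $|D^\adual|=q^{n+k}$, and the associated stabilizer code has parameters $[[n,n-2k',\ge d]]_q=[[n,k,\ge d]]_q$. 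Note $|D^\adual\setminus D|=q^{n+k}-q^{n-k}$, which already accounts for the first factor of the hypothesis. Because $D$ is $\F_{q^2}$-linear, Lemma~\ref{th:classical} lets us replace $D^\adual$ by the Hermitian dual $D^\hdual$, so I would work with the Hermitian form throughout.

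Following Lemma~\ref{th:gilbert}, I would let $L$ be the (multi)set of difference sets $D^\adual\setminus D$, where $D$ ranges over all $\F_{q^2}$-linear self-orthogonal codes of dimension $k'$; it is nonempty since $k'\le\lfloor n/2\rfloor$, and its elements correspond bijectively to the stabilizer codes of dimension $q^k$. The crucial refinement is that $D^\adual\setminus D$ is invariant under multiplication by $\F_{q^2}^\times$, hence is a disjoint union of punctured lines. Consequently $D^\adual\setminus D$ contains a weight-$j$ vector if and only if it contains the corresponding projective point, and the number of projective points admitting a weight-$j$ representative is $\binom{n}{j}(q^2-1)^{j-1}$. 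This is exactly where the improved exponent $j-1$ (versus the exponent $j$ in Lemma~\ref{th:gilbert}) originates, and it is why the deletion argument should be run at the level of projective points rather than vectors.

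The heart of the argument is to show that every nonzero projective point lies in $D^\adual\setminus D$ for the same number of codes $D$. Counting incidences globally gives $\sum_{[v]}\#\{D:[v]\subseteq D^\adual\setminus D\}=|L|\,(q^{n+k}-q^{n-k})/(q^2-1)$, and there are $(q^{2n}-1)/(q^2-1)$ projective points in total. If the incidence number were the common value $|L|\,(q^{n+k}-q^{n-k})/(q^{2n}-1)$, then deleting every code whose difference set meets a projective point of weight $<d$ would discard at most $|L|\,\frac{q^{n+k}-q^{n-k}}{q^{2n}-1}\sum_{j=1}^{d-1}\binom{n}{j}(q^2-1)^{j-1}$ codes, which by hypothesis~(\ref{eq:lingilbert}) is strictly less than $|L|$; a code with the desired property then survives. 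Uniformity of the incidence number is what one hopes to read off from transitivity of an isometry group, exactly as transitivity of $\mathrm{Sp}(2n,\F_q)$ was used in Lemma~\ref{th:gilbert}.

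This transitivity step is the main obstacle, and it is genuinely more delicate here. The group preserving both $\F_{q^2}$-linearity and (via Lemma~\ref{th:classical}) self-orthogonality is the unitary group $U(n,\F_{q^2})$ together with the scalars $\F_{q^2}^\times$; by Witt's theorem it is transitive on vectors of any fixed Hermitian norm, and scaling passes between all nonzero norms, yet it has \emph{two} orbits on projective points — isotropic and anisotropic — since $v\in D$ for a self-orthogonal $D$ forces $\langle v,v\rangle=0$. The incidence number therefore a priori takes two values. My plan is to compute both explicitly using Witt's theorem and the standard formulas for the number of totally isotropic $k'$-subspaces of a nondegenerate Hermitian space (through the nondegenerate hyperplane $v^\perp$ for anisotropic $v$) and of a degenerate hyperplane (through $v^\perp$ with radical $\langle v\rangle$ for isotropic $v$, subtracting the subspaces that contain $v$), and then to verify that the resulting two-orbit deletion count is still dominated by the single expression above, so that hypothesis~(\ref{eq:lingilbert}) suffices. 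Making that final comparison close exactly, rather than merely up to lower-order terms, is the step I expect to demand the most care.
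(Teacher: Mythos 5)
Your reduction to $\F_{q^2}$-linear Hermitian self-orthogonal codes $D$ of dimension $(n-k)/2$, and your observation that the improved exponent $(q^2-1)^{j-1}$ comes from grouping the $\F_{q^2}^\times$-multiples of a difference set into projective points, both match the paper. The paper's own proof, however, consists only of the remark that one ``repeats the previous argument'' with the multiset restricted to codes $C$ such that $\phi(C)$ is $\F_{q^2}$-linear; it never re-justifies the uniformity of the incidence numbers, which in Lemma~\ref{th:gilbert} came from the transitivity of $\textup{Sp}(2n,\F_q)$. You have correctly identified that this is exactly where the argument becomes delicate: the group preserving both $\F_{q^2}$-linearity and Hermitian self-orthogonality has two orbits on nonzero vectors, isotropic and anisotropic, so the incidence number is not constant.

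The step you defer --- verifying that the two-orbit deletion count is still dominated by the single averaged expression --- is not merely the hardest step; it fails. Already for $\dim_{\F_{q^2}} D=1$ and $n=3$ the isotropic incidence is $0$ (for isotropic $w$ the space $w^{\hdual}$ has radical $\langle w\rangle$ and its one-dimensional nondegenerate quotient contains no nonzero isotropic vector), while the anisotropic incidence is $q+1$, strictly above the average $|L|(q^{n+k}-q^{n-k})/(q^{2n}-1)$ used in the union bound; and the low-weight points are biased exactly the wrong way, since every weight-one point $[e_i]$ has Hermitian norm $e_i^q\cdot e_i=1\neq 0$ and hence is anisotropic. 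Concretely, for $(n,k,d,q)=(3,1,2,2)$ the hypothesis~(\ref{eq:lingilbert}) reads $12\cdot 3=36<63$ and holds, yet every one-dimensional Hermitian self-orthogonal $D\le\F_4^3$ is spanned by a weight-two vector $(a,b,0)$ (up to permutation) and then $(0,0,1)\in D^{\hdual}\setminus D$, so $\wt(D^{\hdual}\setminus D)=1$ and no $\F_4$-linear $[[3,1,2]]_2$ code exists. Thus the obstacle you flagged is fatal to this route as structured, and in boundary cases to the statement itself: computing the two incidence numbers cannot close the argument, because the resulting deletion bound genuinely exceeds $|L|$. Any repair must either strengthen hypothesis~(\ref{eq:lingilbert}) or replace the uniform-incidence step by an average that tracks the isotropic/anisotropic split of the weight-$j$ points --- a gap that is present, unacknowledged, in the paper's own two-line proof as well.
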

\begin{proof}
The proof is almost the same as in the previous lemma, except that we
list only codes $C$ such that $\phi(C)$ is linear, meaning that $\phi(C)$ is a
vector space over $\F_{q^2}$. We repeat the previous argument with the
multiset
$$ L = \left\{ C^\sdual\setminus C\, \Bigg|\,\begin{array}{l} C\le C^\sdual \le F_q^{2n},
|C|=q^{n-k},\\ \phi(C) \text{ is $\F_{q^2}$-linear } 
\end{array} \right\}. $$
It is easy to see that $L$ is not empty.  
Note that
each set $\phi(C^\sdual) \setminus \phi(C)$ in $L$ contains now all
$\F_{q^2}^\times$-multiples of a nonzero vector, not just the
$\F_p^\times$-multiples, which proves the statement.
\end{proof}

Feng and Ma show that one can extend the previous
result to even prove the existence of pure stabilizer codes, but much
more delicate counting arguments are needed in that case,
see~\cite{feng04}. We are not aware of short proofs for this stronger
result.

The previous lemma allows us to show the existence of good quantum codes, especially
for larger alphabets. We illustrate this fact by proving the existence
of MDS stabilizer codes, see Section~\ref{sec:MDS} for more details on
such codes. 

\begin{corollary}
If $2\le d\le \lceil n/2\rceil $ and $q^2-1\ge \binom{n}{d}$, then
there exists a linear $[[n,n-2d+2,d]]_q$ stabilizer code.
\end{corollary}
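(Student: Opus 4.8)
The plan is to obtain the code directly from the $\F_{q^2}$-linear Gilbert--Varshamov bound of Lemma~\ref{th:lingilbert}, applied with $k = n-2d+2$. First I would dispose of the side conditions of that lemma. The parity requirement $n\equiv k \bmod 2$ is automatic, since $n-k = 2d-2$ is even. The requirement $k\ge 1$ amounts to $n\ge 2d-1$, which follows from $d\le \ceil{n/2}$ (this gives $2d\le n+1$ when $n$ is odd and $2d\le n$ when $n$ is even, hence $n\ge 2d-1$ in either case). Once the inequality~(\ref{eq:lingilbert}) of Lemma~\ref{th:lingilbert} is verified, the lemma produces an $\F_{q^2}$-linear stabilizer code of length $n$, dimension $k=n-2d+2$, and distance at least $d$; the Quantum Singleton Bound (Corollary~\ref{th:singleton}) forces the distance of a code with these parameters to be at most $d$, so the distance is exactly $d$ and the code is as claimed.

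The heart of the argument is verifying~(\ref{eq:lingilbert}), which with $n+k = 2n-2d+2$ and $n-k = 2d-2$ reads
\[ (q^{2n-2d+2}-q^{2d-2})\,S < q^{2n}-1, \qquad S := \sum_{j=1}^{d-1}\binom{n}{j}(q^2-1)^{j-1}. \]
I would reduce this to the single clean estimate $S < q^{2d-2}$. Indeed, writing $q^{2n}=q^{2n-2d+2}\cdot q^{2d-2}$ gives the identity $q^{2n} - (q^{2n-2d+2}-q^{2d-2})S = q^{2n-2d+2}(q^{2d-2}-S) + q^{2d-2}S$, and since $S$ is a nonnegative integer, $S<q^{2d-2}$ yields $q^{2d-2}-S\ge 1$. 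As $n\ge 2d-1$ forces $2n-2d+2\ge 2$, the right-hand side of the identity is then at least $q^{2}>1$, which is precisely the desired strict inequality~(\ref{eq:lingilbert}).

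It remains to establish $S < q^{2d-2}$, and this is the step I expect to require the most care. The hypotheses supply the two facts I would lean on: because $d\le \ceil{n/2}$ the binomial coefficients increase up to $d$, so $\binom{n}{j}\le\binom{n}{d}$ for $1\le j\le d-1$, and $\binom{n}{d}\le q^2-1$ by assumption. Writing $P=q^2-1$, these bounds together with the geometric series give $S \le P\sum_{j=1}^{d-1}P^{\,j-1} = (P^{d}-P)/(P-1)$. The goal $S<q^{2d-2}=(P+1)^{d-1}$ then follows from the polynomial inequality $(P+1)^{d-1}(P-1) > P^{d}-P$, equivalently $(P+1)^{d-1} > \sum_{i=1}^{d-1}P^{i}$, which I would prove by comparing coefficients: expanding the left side by the binomial theorem, the coefficient of $P^{i}$ is $\binom{d-1}{i}\ge 1$ for each $0\le i\le d-1$, so $(P+1)^{d-1}-\sum_{i=1}^{d-1}P^{i} = 1 + \sum_{i=1}^{d-1}\big(\binom{d-1}{i}-1\big)P^{i} \ge 1 > 0$. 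The main obstacle is thus the counting estimate $S<q^{2d-2}$; the reduction above is what makes it tractable, converting the two-sided constraint~(\ref{eq:lingilbert}) into a comparison whose only nontrivial inputs are the monotonicity $\binom{n}{j}\le\binom{n}{d}$ and this coefficientwise polynomial inequality, and it is crucial to keep the additive slack of $q^{2n}-1$ so that the bound survives the extreme case $n=2d-1$.
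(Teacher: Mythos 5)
Your proof is correct and follows essentially the same route as the paper's: the side conditions of Lemma~\ref{th:lingilbert} are checked identically, the sum is bounded via $\binom{n}{j}\le\binom{n}{d}\le q^2-1$ exactly as in the paper, and your target $\sum_{j=1}^{d-1}(q^2-1)^j<q^{2d-2}$ is the same polynomial inequality the paper reduces to (its $(q^2-1)^d\le q^{2d}-2q^{2d-2}+q^2-1$ is obtained from yours by multiplying through by $q^2-2$). The only divergence is cosmetic: you verify that inequality by directly comparing $(P+1)^{d-1}=\sum_i\binom{d-1}{i}P^i$ with $\sum_{i=1}^{d-1}P^i$ coefficientwise, which is a cleaner argument than the paper's cancellation of the terms $\binom{d-1}{j-1}-\binom{d-1}{j}$, and your explicit appeal to the Singleton bound to pin the distance at exactly $d$ is a point the paper leaves implicit.
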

\begin{proof}
The assumption $d\le \lceil n/2\rceil $ implies that $\binom{n}{1}\le
\binom{n}{2}\le \cdots \le \binom{n}{d}$, so the maximum value of
these binomial coefficients is at most $q^2-1$. Let $k=n-2d+2$. It
follows from the assumption that $k\ge 1$ and $n\equiv k\bmod
2$. It remains to show that (\ref{eq:lingilbert}) holds.
For the choice $k=n-2d+2$,
the left hand side of (\ref{eq:lingilbert}) equals
\begin{equation*}
\begin{split}
\ds(q^{2n-2d+2}-&q^{2d-2})\sum_{j=1}^{d-1} \binom{n}{j} (q^2-1)^{j-1}\\
&\le \ds (q^{2n-2d+2}-q^{2d-2})\sum_{j=1}^{d-1} (q^2-1)^{j}\\
&=\ds(q^{2n-2d+2}-q^{2d-2}) \frac{(q^2-1)^d-(q^2-1)}{q^2-2}.
\end{split}
\end{equation*}
We claim that the latter term is less than $q^{2n}-1$.
To prove this, it suffices to show that
\begin{equation}\label{eq:ineq}
q^{2n-2d+2}\frac{(q^2-1)^d-(q^2-1)}{q^2-2}\le q^{2n}
\end{equation}
holds. The latter inequality is equivalent to $ (q^2-1)^d \le
q^{2d}-2q^{2d-2}+q^2-1$, and it is not hard to see that this
inequality holds.  Indeed, note that
$$ q^{2d}=((q^2-1)+1)^d = (q^2-1)^d+ \sum_{j=0}^{d-1} \binom{d}{j}(q^2-1)^j.$$\
Recall that $\binom{d}{j}=\binom{d-1}{j-1}+\binom{d-1}{j}$; hence,
\begin{equation*} 
\begin{split}
q^{2d}-2q^{2d-2}- (q^2&-1)^d \\&= \sum_{j=0}^{d-1}\big(
\binom{d}{j}-2\binom{d-1}{j} \big)(q^2-1)^j,\\
&= \sum_{j=0}^{d-1}\big(
\underbrace{\binom{d-1}{j-1}-\binom{d-1}{j}}_{\alpha(j):=}\big)(q^2-1)^j.
\end{split}
\end{equation*}
We have $\alpha(j)=-\alpha(d-j)$ for $0\le j\le d-1$, and
$\alpha(j)\ge 0$ for $j\ge d/2$. This shows that all negative terms
get canceled by larger positive terms and we can conclude that
$q^{2d}-2q^{2d-2}-(q^2-1)^d\ge 0$ for $d\ge 2$; this implies
inequality~(\ref{eq:ineq}) and consequently shows
that~(\ref{eq:lingilbert}) holds.
\end{proof}

\begin{example} 
Recall that there does not exist a $[[7,1,4]]_2$ code,
see~\cite{calderbank98}. In contrast, the existence of a $[[7,1,4]]_q$
code for all prime powers $q\ge 7$ is guaranteed by the preceding
corollary. It also shows that there exist $[[6,2,3]]_q$ for all prime
powers $q\ge 5$ and $[[7,3,3]]_q$ for all prime powers $q\ge 7$, which
slightly generalizes~\cite{feng02}.
\end{example}

\section{Code Constructions}\label{sec:const}
Constructing good quantum codes is a difficult task. We need a quantum
code for each parameter $n$ and $k$ in our tables. In this section we collect
some simple facts about the construction of
codes. Lemmas~\ref{th:lengthening}--\ref{th:smallerdim}, (see also Table~\ref{table:cc}), show how to
lengthen, shorten or reduce the dimension of the stabilizer code. These generalize and
extend the constructions for binary quantum codes \cite[Theorem~6]{calderbank98}.

{
\begin{table}[htb]
\caption{The existence of a pure $[[n,k,d]]_q$ stabilizer code implies the existence of codes with other parameters.}\label{table:cc}
\begin{center}
\begin{tabular}{c||c|c|c}
n/k & $k-1$ & $k$ & $k+1$ \\
\hline\hline
$n-1$& \stacked{$\ge d-1$ pure}{Lemma~\ref{th:smallerdim}}   &  \stacked{$\ge d-1$ pure}{Lemma~\ref{th:smallerdim}}         & \stacked{$d-1$ pure}{Lemma~\ref{th:shorterlength}}     \\
\hline
$n$  & \stacked{$\ge d$ pure}{Lemma~\ref{th:smallerdim}} & \fbox{$d$ pure}  & \stacked{$d-1$ impure}{Lemma~\ref{th:lengthening}}     \\
\hline
$n+1$& \stacked{$\ge d$ impure}{Lemma~\ref{th:lengthening}} & \stacked{$d$ impure}{Lemma~\ref{th:lengthening}}& 
\end{tabular}
\end{center}
\end{table}
}

\begin{lemma}\label{th:lengthening}
If an $[[n,k,d]]_q$ stabilizer code exists for $k>0$, then there
exists an impure $[[n+1,k,d]]_q$ stabilizer code.
\end{lemma}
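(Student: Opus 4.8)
The plan is to pass to the additive-code picture of Theorem~\ref{th:stabilizer} and to lengthen the associated classical code by one coordinate, adjoining a single weight-one generator. By Theorem~\ref{th:stabilizer} (with $K=q^k>1$), the hypothesis supplies an additive code $C\le \F_q^{2n}$ with $|C|=q^{n-k}$, $C\le C^\sdual$, and $\swt(C^\sdual\setminus C)=d$. I want to manufacture from $C$ an additive code $C'\le \F_q^{2(n+1)}$ with $|C'|=q^{n+1-k}$, $C'\le C'^\sdual$, $\swt(C'^\sdual\setminus C')=d$, and a nonzero codeword of symplectic weight $1$; the first three properties give an $[[n+1,k,d]]_q$ code via Theorem~\ref{th:stabilizer}, and the last forces impurity.

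First I would pad $C$ into $\F_q^{2(n+1)}$ by appending a zero to both the $X$- and $Z$-parts of each vector, and then throw in the extra generator $e=(\mbf{0},1\,|\,\mbf{0},0)$, i.e.\ an $X$ operator on the new $(n+1)$-th system. Concretely, set $C'=\{(\mbf a,s\,|\,\mbf b,0) : (\mbf a\,|\,\mbf b)\in C,\ s\in\F_q\}$. Adjoining $e$ scales the cardinality by $q$, so $|C'|=q\cdot q^{n-k}=q^{n+1-k}$, exactly what is needed for a $q^k$-dimensional code of length $n+1$. Because the trace-symplectic form is bi-additive and every element of $C'$ carries a zero in the new $Z$-slot, a direct check gives $\langle e\mid c\rangle_s=0$ for all $c\in C'$, while the padded copy of $C$ remains self-orthogonal; hence $C'\le C'^\sdual$.

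Next I would compute the dual explicitly. Orthogonality to all scalar multiples $se$ forces the new $Z$-entry to vanish, and orthogonality to the padded copy of $C$ forces the first $n$ coordinates to lie in $C^\sdual$, while the new $X$-entry is unconstrained; this yields $C'^\sdual=\{(\mbf a'',t\,|\,\mbf b'',0) : (\mbf a''\,|\,\mbf b'')\in C^\sdual,\ t\in\F_q\}$, of size $q\cdot q^{n+k}=q^{n+1+k}$, as the dimension count demands. A padded vector lies in $C'$ exactly when its first $n$ coordinates lie in $C$, so $C'^\sdual\setminus C'$ consists precisely of the $(\mbf a'',t\,|\,\mbf b'',0)$ with $(\mbf a''\,|\,\mbf b'')\in C^\sdual\setminus C$ and $t\in\F_q$ arbitrary. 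Since $\swt(\mbf a'',t\,|\,\mbf b'',0)=\swt(\mbf a''\,|\,\mbf b'')+[t\neq 0]$, every such vector has symplectic weight at least $\swt(\mbf a''\,|\,\mbf b'')\ge d$, with equality attained by choosing a minimum-weight element of $C^\sdual\setminus C$ together with $t=0$; therefore $\swt(C'^\sdual\setminus C')=d$.

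Applying Theorem~\ref{th:stabilizer} then produces an $[[n+1,k,d]]_q$ stabilizer code, and the generator $e\in C'$ of symplectic weight $1<d$ (assuming $d\ge 2$, the only case in which impurity is meaningful) exhibits a non-scalar stabilizer element of weight below the minimum distance, so the code is impure. The step I expect to require the most care is the last distance computation: one must confirm that inserting $e$ does \emph{not} raise the minimum weight of $C'^\sdual\setminus C'$ above $d$, which is exactly the observation that the free new coordinate $t$ can be set to zero — and dually that it does not drop below $d$, which holds because the first $n$ coordinates of any element of $C'^\sdual\setminus C'$ already lie in $C^\sdual\setminus C$.
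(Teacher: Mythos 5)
Your proposal is correct and follows essentially the same route as the paper's proof: both pass to the additive code via Theorem~\ref{th:stabilizer}, form $C'=\{(\mbf a\,\alpha\mid \mbf b\,0): (\mbf a\mid\mbf b)\in C,\ \alpha\in\F_q\}$, verify self-orthogonality, identify the dual as the analogously padded copy of $C^\sdual$, observe the minimum of $\swt$ over $C'^{\sdual}\setminus C'$ is attained with the new coordinate set to zero, and exhibit the weight-one stabilizer element $(\mbf 0\,\alpha\mid\mbf 0\,0)$ to force impurity when $d>1$. Your explicit dual computation and cardinality count are slightly more detailed than the paper's, but the construction and all key steps coincide.
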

\begin{proof}
If an $[[n,k,d]]_q$ stabilizer code exists, then there exists an
additive subcode $C\le \F_q^{2n}$ such that $|C|=q^{n-k}$, $C\le
C^\sdual$, and $\swt(C^\sdual\setminus C)=d$. Define the additive code
$$ C' = \{ (a\alpha|b0)\,|\, \alpha\in \F_q, (a|b)\in C \}.$$
We have $|C'|=q^{n-k+1}$. The definition ensures that $C'$ is
self-orthogonal with respect to the trace-symplectic inner
product. Indeed, two arbitrary elements $(a\alpha|b0)$ and
$(a'\alpha'|b'0)$ of $C'$ satisfy the orthogonality condition
$$ \< (a\alpha|b0) | (a'\alpha'|b'0)>s = \< (a|b) | (a'|b')>s +
\tr(\alpha\cdot 0 - \alpha'\cdot 0) = 0.$$ A vector in the
trace-symplectic dual of $C'$ has to be of the form $(a\alpha|b0)$ with
$(a|b)\in C^\sdual$ and $\alpha\in \F_q$.
Furthermore,
$$ \swt(C'^\sdual \setminus C') = \min\{ \swt(a\alpha|b0)\,|\,
\alpha\in \F_q, a,b\in C^\sdual\setminus C\},$$ which coincides with
$\swt(C^\sdual\setminus C).$ Therefore, an $[[n+1,k,d]]_q$ stabilizer
code exists by Theorem~\ref{th:stabilizer}.  If $d>1$, then the code
is impure, because $C'^\sdual$ contains the vector
$(\mbf{0}\alpha|\mbf{0}0)$ of symplectic weight 1.
\end{proof}

\begin{lemma}\label{th:shorterlength}
If a pure $[[n,k,d]]_q$ stabilizer code exists with $n\ge 2$ and $d\ge
2$, then there exists a pure $[[n-1,k+1,d-1]]_q$ stabilizer code.
\end{lemma}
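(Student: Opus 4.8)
The plan is to pass to the classical picture via Theorem~\ref{th:stabilizer} and realize the operation as a \emph{shortening} of the associated additive code paired with a \emph{puncturing} of its trace-symplectic dual. Starting from a pure $[[n,k,d]]_q$ code, I would first record the classical data: an additive code $C\le \F_q^{2n}$ with $|C|=q^{n-k}$ and $C\le C^\sdual$, for which purity together with minimum distance $d$ forces $\swt(C\setminus\{0\})\ge d$ and $\swt(C^\sdual\setminus\{0\})=d$. I then choose a minimum-weight vector $c^*\in C^\sdual$ with $\swt(c^*)=d$ and fix a coordinate $i$ in its symplectic support, i.e.\ with $(a_i^*,b_i^*)\neq(0,0)$. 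The construction is: let $C'\le \F_q^{2(n-1)}$ be the shortening of $C$ at coordinate $i$ (delete coordinate $i$ from those $c\in C$ whose $i$-th symplectic coordinate is $(0,0)$), and let $D$ be the puncturing of $C^\sdual$ at $i$ (delete coordinate $i$ from every codeword).

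Next I would verify the structural properties directly from the coordinate-wise shape of the form $\<(a|b)|(a'|b')>s=\sum_j \tr(b_ja'_j-b'_ja_j)$. Self-orthogonality $C'\le C'^\sdual$ is immediate: deleting a coordinate that is already $(0,0)$ in both arguments does not change the form, so the vanishing of $\<\cdot|\cdot>s$ on $C$ transfers to $C'$. The same bookkeeping gives the inclusion $D\le C'^\sdual$: for $c\in C^\sdual$ and $\tilde u\in C$ with zero $i$-th coordinate, the $i$-th term of the form drops out, and the remaining sum equals $\<c|\tilde u>s=0$.

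The crux is the cardinality count, and this is where purity does the real work; I expect it to be the main obstacle, precisely because it is the step that fails without purity. Since $\swt(C^\sdual\setminus\{0\})=d\ge 2$, no nonzero codeword of $C^\sdual$ is supported on the single coordinate $i$, so puncturing is injective and $|D|=|C^\sdual|=q^{n+k}$. Writing the image of $C$ under projection to coordinate $i$ as an $\F_p$-subspace of $\F_q^2$ of size $p^{f}$, injectivity of $\pi_i$ on the kernel gives $|C'|=q^{n-k}/p^{f}$ and hence $|C'^\sdual|=q^{n+k-2}p^{f}$. Combining this with $D\le C'^\sdual$ and $|D|=q^{n+k}$ forces $p^{f}\ge q^2$; since $p^{f}\le q^2$ always, equality holds, the projection is onto, $|C'|=q^{n-k-2}$, and by matching cardinalities $D=C'^\sdual$ exactly.

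Finally I would read off the parameters. The quantum dimension is $K'=q^{\,n-1}/|C'|=q^{k+1}>1$. Deleting one coordinate lowers symplectic weight by at most one, so every nonzero $d'\in D=C'^\sdual$ has $\swt(d')\ge d-1$, while $\pi_i(c^*)$ realizes weight exactly $d-1$ (a nonzero coordinate was removed, and $d-1\ge 1$); thus $\swt(C'^\sdual\setminus\{0\})=d-1$. Since $\swt(C'\setminus\{0\})\ge d>d-1$, the minimum-weight dual vectors lie outside $C'$, giving $\swt(C'^\sdual\setminus C')=d-1$ and simultaneously the purity of the new code. Theorem~\ref{th:stabilizer} then yields a pure $[[n-1,k+1,d-1]]_q$ stabilizer code, with the hypotheses $n\ge 2$ and $d\ge 2$ used exactly to keep the length and distance positive.
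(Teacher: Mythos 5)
Your proof is correct and follows essentially the same route as the paper's: shorten the self-orthogonal additive code at one coordinate, puncture its dual, and use a dimension count to identify the shortened code's dual with the punctured dual; the paper simply carries this out in the isometric $\F_{q^2}^n$ trace-alternating picture rather than in $\F_q^{2n}$ with the trace-symplectic form. Your extra care in choosing the deleted coordinate inside the support of a minimum-weight dual codeword (so the punctured dual has distance exactly $d-1$) and in checking via purity that the new minimum-weight dual vectors lie outside $C'$ makes explicit two points that the paper's terser proof leaves implicit.
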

\begin{proof}
If a pure $[[n,k,d]]_q$ stabilizer code exists, then there exists an
additive code $D\le \F_{q^2}^n$ that is self-orthogonal with respect
to the trace-alternating form, so that $|D|=q^{n-k}$ and
$\wt(D^\adual)=d$.  Let $D_0^\adual$ denote the code obtained by
puncturing the first coordinate of $D^\adual$. Since the minimum
distance of $D^\adual$ is at least 2, we know that
$|D_0^\adual|=|D^\adual|=q^{n+k}$, and we note that the minimum
distance of $D_0^\adual$ is $d-1$. The dual of $D_0^\adual$ consists
of all vectors $u$ in $\F_{q^2}^{n-1}$ such that $0u$ is contained
in~$D$. Furthermore, if $u$ is an element of $D_0$, then $0u$ is
contained in $D$; hence, $D_0$ is a self-orthogonal additive code.  The
code $D_0$ is of size $q^{(n-1)-(k+1)}$, because
$$\dim D_0+ \dim D_0^\adual = \dim \F_{q^2}^{n-1}$$
when we view $D_0$ and its dual as $\F_p$--vector spaces.
It follows that there exists a pure $[[n-1,k+1,d-1]]_q$ stabilizer code.
\end{proof}

\begin{lemma}\label{th:smallerdim}
If a (pure) $[[n,k,d]]_q$ stabilizer code exists, with $k\ge 2$ ($k\ge
1$), then there exists 
an  $[[n, k-1,d^*]]_q$ stabilizer code (pure to $d$) such that $d^*\ge d$.
\end{lemma}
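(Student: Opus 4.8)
The plan is to pass to the classical side via the trace-symplectic correspondence of Theorem~\ref{th:stabilizer} and reduce the dimension of the stabilizer code by \emph{enlarging} its associated self-orthogonal additive code. An $[[n,k,d]]_q$ stabilizer code yields an additive code $C\le \F_q^{2n}$ with $|C|=q^{n-k}$, $C\le C^\sdual$, and $\swt(C^\sdual\setminus C)=d$. To obtain a code of dimension $k-1$ I need an additive code $C'$ with $C\subseteq C'\le C'^\sdual$ and $|C'|=q^{n-k+1}$: then $q^n/|C'|=q^{k-1}$, and Theorem~\ref{th:stabilizer} produces the desired $[[n,k-1,d^*]]_q$ stabilizer code. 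So the whole argument reduces to (i) exhibiting such a $C'$ and (ii) controlling its distance.

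For the existence step, regard $\F_q^{2n}$ as an $\F_p$-vector space of dimension $2mn$ (where $q=p^m$) equipped with the trace-symplectic form, which is $\F_p$-bilinear, nondegenerate, and alternating. Since the form is alternating, $C\le C^\sdual$ says precisely that $C$ is a totally isotropic subspace, of $\F_p$-dimension $m(n-k)$; and totally isotropic subspaces of a nondegenerate alternating space of dimension $2mn$ can have $\F_p$-dimension as large as $mn$. Because $k\ge 1$ gives $m(n-k)<m(n-k+1)\le mn$, I can extend $C$ to a totally isotropic $\F_p$-subspace $C'$ with $\dim_{\F_p}C'=m(n-k+1)$, i.e.\ $|C'|=q^{n-k+1}$. (Equivalently, one passes to the quotient symplectic space $C^\sdual/C$, which is nondegenerate of Witt index $mk\ge m$, and selects an isotropic subspace of $\F_p$-dimension $m$.) This $C'$ is additive and satisfies $C\subseteq C'\le C'^\sdual$ by construction.

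It remains to handle the distance and purity. From $C\subseteq C'$ we get $C'^\sdual\subseteq C^\sdual$, and since $C\subseteq C'$, any vector of $C'^\sdual\setminus C'$ lies in $C^\sdual$ but not in $C$; hence $C'^\sdual\setminus C'\subseteq C^\sdual\setminus C$, so taking minimum weights over the smaller set gives $d^*=\swt(C'^\sdual\setminus C')\ge \swt(C^\sdual\setminus C)=d$. For the purity claim, assume the original code is pure, so $\swt(c)\ge d$ for every nonzero $c\in C$. Every nonzero vector of $C'$ lies either in $C$ or in $C'\setminus C\subseteq C^\sdual\setminus C$, and in either case has symplectic weight at least $d$; thus $C'$ has no nonzero vector of weight below $d$ and the new code is pure to $d$. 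The split hypotheses ($k\ge 2$ in the impure case, $k\ge 1$ in the pure case) are exactly what keeps us consistent with the convention that an $[[n,0,d]]_q$ code must itself be pure: an impure starting code must retain $k-1\ge 1$, whereas a pure starting code may be pushed to $k-1=0$ while staying pure.

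The only genuine content is the extension step, and I expect the main obstacle to be merely phrasing it cleanly rather than any real difficulty: it is the standard fact that an isotropic subspace of a nondegenerate alternating $\F_p$-form extends to larger isotropic subspaces up to half the ambient dimension. The distance inequality $d^*\ge d$ and the purity assertion are then immediate from the inclusions $C\subseteq C'\subseteq C'^\sdual\subseteq C^\sdual$.
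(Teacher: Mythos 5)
Your proof is correct and follows essentially the same route as the paper: enlarge the associated self-orthogonal additive code $C$ to a self-orthogonal $C'$ with $|C'|=q\,|C|$, then use $C\subseteq C'\subseteq C'^{\sdual}\subseteq C^{\sdual}$ to get $C'^{\sdual}\setminus C'\subseteq C^{\sdual}\setminus C$ and hence $d^*\ge d$ and purity. The only differences are cosmetic — you work in the trace-symplectic picture on $\F_q^{2n}$ rather than the trace-alternating picture on $\F_{q^2}^n$, and you actually justify the existence of the isotropic extension (via the standard Witt-type argument), a step the paper simply asserts by writing ``choose an additive code $D_b$.''
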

\begin{proof}
If an $[[n,k,d]]_q$ stabilizer code exists, then there exists an
additive code $D\le \F_{q^2}^n$ such that $D\le D^\adual$ with
$\wt(D^\adual\setminus D)=d$ and $|D|=q^{n-k}$. Choose an additive
code~$D_b$ of size $|D_b|=q^{n-k+1}$ such that $D\le D_b \le D_b^\adual \le D^\adual$.
Since $D\le D_b$, we have $D_b^\adual\le D^\adual$.  The set $\Sigma_b
= D_b^\adual \setminus D_b$ is a subset of $D^\adual \setminus D$,
hence the minimum weight $d^*$ of $\Sigma_b$ is at least $d$.
This proves the existence of an $[[n,k-1,d^*]]$ code.

If the code is pure, then $\wt(D^\adual)=d$; it follows from
$D_b^\adual \le D^\adual$ that $\wt(D_b^\adual)\ge d$, so the smaller
code is pure as well.
\end{proof}

\begin{corollary}
If a pure $[[n,k,d]]_q$ stabilizer code with $n\ge 2$ and
$d\ge 2$ exists, then there exists a pure $[[n-1,k,\ge d-1]]_q$ stabilizer
code.
\end{corollary}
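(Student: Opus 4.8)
The plan is to obtain the claimed code by composing the two preceding length- and dimension-altering lemmas, namely Lemma~\ref{th:shorterlength} and Lemma~\ref{th:smallerdim}. Both have already been established for precisely the stabilizer codes at hand, so the corollary should reduce to a short chaining argument rather than a fresh computation with additive codes over $\F_{q^2}$.

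First I would apply Lemma~\ref{th:shorterlength} to the given pure $[[n,k,d]]_q$ code. Since the hypotheses $n\ge 2$ and $d\ge 2$ are exactly those required, this yields a pure $[[n-1,k+1,d-1]]_q$ stabilizer code. The essential point is that this output is \emph{genuinely} pure, i.e.\ its stabilizer contains no non-scalar element of weight below its minimum distance $d-1$; this is what permits the next lemma to be invoked in its pure form. I would then feed this intermediate code into Lemma~\ref{th:smallerdim} to trade one unit of dimension for a (possibly) larger distance. Its dimension $k+1$ satisfies $k+1\ge 1$, since $k\ge 0$ for every $[[n,k,d]]_q$ code, so the pure branch of that lemma applies and produces an $[[n-1,k,d^*]]_q$ stabilizer code that is pure to $d-1$ with $d^*\ge d-1$. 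This is precisely a pure $[[n-1,k,\ge d-1]]_q$ code in the sense used throughout this section (cf.\ Corollary~\ref{co:alternating}): a code of distance at least $d-1$ that is pure to $d-1$.

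The main obstacle is the bookkeeping of purity through the composition, not any hard estimate. One must apply the lemmas in the correct order --- shortening first and dimension-reduction second. Reversing them breaks down, because Lemma~\ref{th:smallerdim} in general guarantees only a code pure to its \emph{input} distance whenever the distance strictly increases, and such a code need not meet the purity hypothesis required by Lemma~\ref{th:shorterlength}. Relatedly, I would be careful to phrase the conclusion as ``pure to $d-1$ with distance $\ge d-1$'' rather than overclaiming purity at the (possibly larger) actual minimum distance $d^*$.
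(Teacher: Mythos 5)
Your proposal is correct and matches the paper's own proof, which simply combines Lemma~\ref{th:shorterlength} and Lemma~\ref{th:smallerdim} in exactly the order you describe (shorten first to a pure $[[n-1,k+1,d-1]]_q$ code, then drop the dimension via the pure branch of the second lemma). Your additional care about the order of composition and the purity bookkeeping is sound but not beyond what the paper's one-line argument already implicitly relies on.
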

\begin{proof}
Combine Lemmas~\ref{th:shorterlength} and \ref{th:smallerdim}.
\end{proof}

\begin{lemma}\label{th:directsum}
Suppose that an $((n,K,d))_q$ and an $((n',K',d'))_q$ stabilizer code
exist.  Then there exists an $((n+n', KK',\min(d,d'))_q$ stabilizer
code.
\end{lemma}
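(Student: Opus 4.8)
The plan is to pass to the associated additive codes via Theorem~\ref{th:stabilizer} and take their direct sum. First I would invoke Theorem~\ref{th:stabilizer} to obtain additive codes $C_1 \le \F_q^{2n}$ and $C_2 \le \F_q^{2n'}$ with $|C_1| = q^n/K$, $|C_2| = q^{n'}/K'$, both self-orthogonal under the trace-symplectic form ($C_i \le C_i^\sdual$), and with $\swt(C_1^\sdual \setminus C_1) = d$ and $\swt(C_2^\sdual \setminus C_2) = d'$ (using the $\swt(C_i^\sdual)$ convention when $K$ or $K'$ equals $1$). Writing vectors of $\F_q^{2N}$ in the form $(\mbf a | \mbf b)$, I would then define the length-$(n+n')$ code
\[ C = \{ (\mbf a, \mbf a' | \mbf b, \mbf b') : (\mbf a|\mbf b) \in C_1,\ (\mbf a'|\mbf b') \in C_2 \} \le \F_q^{2(n+n')}, \]
that is, concatenate the two $X$-parts and the two $Z$-parts so that $C_1$ and $C_2$ occupy disjoint coordinate blocks. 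Immediately $|C| = |C_1|\,|C_2| = q^{n+n'}/(KK')$, which is the size Theorem~\ref{th:stabilizer} demands for a code of dimension $KK'$.

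The key structural step is that the trace-symplectic form decouples across the two blocks: for $c = (\mbf a,\mbf a'| \mbf b,\mbf b')$ and $\tilde c = (\tilde{\mbf a},\tilde{\mbf a}'| \tilde{\mbf b},\tilde{\mbf b}')$ one computes $\langle c | \tilde c\rangle_s = \tr(\mbf b\cdot\tilde{\mbf a} - \tilde{\mbf b}\cdot\mbf a) + \tr(\mbf b'\cdot\tilde{\mbf a}' - \tilde{\mbf b}'\cdot\mbf a')$, the sum of the trace-symplectic forms on the two factors. From this, $C \le C^\sdual$ follows at once from $C_1 \le C_1^\sdual$ and $C_2 \le C_2^\sdual$; moreover a block-by-block argument (setting one factor to $0$) gives $C^\sdual = C_1^\sdual \times C_2^\sdual$, and a size count ($|C_1^\sdual|\,|C_2^\sdual| = q^{n+n'}KK' = q^{2(n+n')}/|C|$) confirms this is the full dual.

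It remains to compute $\swt(C^\sdual \setminus C)$, and this is the only place requiring care. Because the coordinate blocks are disjoint, the symplectic weight of a combined vector is the sum of the weights of its two blocks. For the upper bound, the vector $(c_1 | 0)$ with $c_1 \in C_1^\sdual \setminus C_1$ of minimal weight $d$ lies in $C^\sdual \setminus C$ and has weight $d$, and symmetrically $(0 | c_2)$ gives weight $d'$, so $\swt(C^\sdual\setminus C) \le \min(d,d')$. For the lower bound, any $(c_1 | c_2) \in C^\sdual \setminus C$ satisfies $c_1 \notin C_1$ or $c_2 \notin C_2$; in the first case $c_1 \in C_1^\sdual\setminus C_1$ forces the first block, hence the total, to have weight at least $d$, and symmetrically at least $d'$ in the second case, so every element of $C^\sdual\setminus C$ has weight $\ge \min(d,d')$. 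Hence $\swt(C^\sdual\setminus C) = \min(d,d')$, and applying Theorem~\ref{th:stabilizer} in the reverse direction yields the desired $((n+n', KK', \min(d,d')))_q$ stabilizer code.

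The main obstacle is essentially bookkeeping rather than substance: I must keep the interleaving of the $X$- and $Z$-coordinates consistent so that the symplectic form genuinely splits, and I must treat the boundary conventions when $K = 1$ or $K' = 1$, where the corresponding factor is self-dual and one of the two weight-witnesses collapses. In those degenerate cases the direct-sum code has distance at least $\min(d,d')$, which still certifies the claimed parameters.
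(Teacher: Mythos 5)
Your proof is correct, but it takes a different route from the paper. The paper's own argument never leaves the operator picture: it forms the tensor product $P\otimes P'$ of the two orthogonal projectors, identifies $S^*=\{E\otimes E'\mid E\in S,\ E'\in S'\}$ as the stabilizer of the image, and then argues via Lemma~\ref{th:detectable} that an undetectable $F\otimes F'$ must have both tensor factors commuting with the respective stabilizers while at least one of them lies outside $Z(G_n)S$ (resp.\ $Z(G_{n'})S'$), so its weight is at least $\min(d,d')$. You instead push everything through Theorem~\ref{th:stabilizer} and work with the classical additive codes, forming $C=C_1\times C_2$ on disjoint coordinate blocks, checking that the trace-symplectic form splits, computing $C^\sdual=C_1^\sdual\times C_2^\sdual$ by a containment-plus-cardinality argument, and bounding $\swt(C^\sdual\setminus C)$ block by block. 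The two arguments are images of one another under the group--code correspondence, so neither is more general, but each buys something: the paper's version is shorter and avoids the coordinate-interleaving bookkeeping you rightly flag, while yours is more explicit and, in the generic case $K,K'>1$, actually pins down the distance exactly as $\min(d,d')$ via the witnesses $(c_1|0)$ and $(0|c_2)$, whereas the paper's proof only establishes the lower bound. Your handling of the degenerate $K=1$ or $K'=1$ cases (where one witness collapses and only the inequality survives) matches the level of precision of the paper's own statement.
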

\begin{proof}
Suppose that $P$ and $P'$ are the orthogonal projectors onto the
stabilizer codes for the $((n,K,d))_q$ and $((n',K',d'))_q$ stabilizer
codes, respectively. Then $P\otimes P'$ is an orthogonal projector
onto a $KK'$-dimensional subspace $Q^*$ of $\C^d$, where $d=q^{n+n'}$.
Let $S$ and $S'$ respectively denote the stabilizer groups of the
images of $P$ and $P'$. Then $S^*=\{ E\otimes E'\,|\, E\in S, E'\in
S'\}$ is the stabilizer group of $Q^*$.

If an element $F\otimes F^*$ of $G_n\otimes G_{n'}=G_{n+n'}$ is not
detectable, then $F$ has to commute with all elements in $S$, and $F'$
has to commute with all elements in $S'$.  It is not possible that both
$F\in Z(G_n)S$ and $F'\in Z(G_{n'})S'$ hold, because this would imply that
$F\otimes F'$ is detectable. Therefore, either $F$ or $F'$ is not
detectable, which shows that the weight of $F\otimes F'$ is at least
$\min(d,d')$.
\end{proof}

\begin{lemma}
Let $Q_1$ and $Q_2$ be pure stabilizer codes that respectively have parameters
$[[n,k_1,d_1]]_q$ and $[[n,k_2,d_2]]$. If\/
$Q_2\subseteq Q_1$, then there exists a $[[2n,k_1+k_2,d]]_q$
pure stabilizer code with minimum distance $d\ge \min\{ 2d_2,d_1\}$.
\end{lemma}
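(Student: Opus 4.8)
The plan is to pass to the additive-code picture of Theorem~\ref{th:alternating} and run a Plotkin-type $(u\,|\,u+v)$ construction. Since $Q_1$ and $Q_2$ are pure, Theorem~\ref{th:alternating} furnishes additive codes $D_1,D_2\le\F_{q^2}^n$ with $D_i\le D_i^\adual$, $|D_i|=q^{n-k_i}$, and $\wt(D_i^\adual)=d_i$. The hypothesis $Q_2\subseteq Q_1$ enters in exactly one place: because $\Stab$ reverses inclusions (property G1), $Q_2\subseteq Q_1$ forces $\Stab(Q_1)\le\Stab(Q_2)$, hence $D_1\subseteq D_2$ and dually $D_2^\adual\subseteq D_1^\adual$.

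First I would define the length-$2n$ additive code
$$ D=\{(u\,|\,u+v)\,:\,u\in D_2,\ v\in D_1\}\le\F_{q^2}^{2n}. $$
The map $(u,v)\mapsto(u\,|\,u+v)$ is injective and additive, so $|D|=|D_2|\,|D_1|=q^{2n-(k_1+k_2)}$, the size required for dimension $k_1+k_2$. To verify $D\le D^\adual$ I would use that the trace-alternating form on $\F_{q^2}^{2n}$ is the sum of the forms on the two length-$n$ blocks, so that $\langle(u\,|\,u+v)\,|\,(u'\,|\,u'+v')\rangle_a$ expands into the terms $\langle u|u'\rangle_a$, $\langle u|v'\rangle_a$, $\langle v|u'\rangle_a$, $\langle v|v'\rangle_a$. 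The extreme terms vanish because $D_2$ and $D_1$ are self-orthogonal, and the crucial point is that the cross terms vanish too, since $v,v'\in D_1\subseteq D_2$ places all four vectors in the self-orthogonal code $D_2$. Thus Theorem~\ref{th:alternating} yields a $[[2n,k_1+k_2,d]]_q$ stabilizer code with $d=\wt(D^\adual\setminus D)$.

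Next I would compute the dual with the same split form, obtaining
$$ D^\adual=\{(a\,|\,b)\,:\,a+b\in D_2^\adual,\ b\in D_1^\adual\}. $$
Writing $s=a+b\in D_2^\adual$, a nonzero word has weight $\wt(s-b)+\wt(b)$. If $s=0$ then $b\neq 0$ and the weight is $2\wt(b)\ge 2d_1$; if $s\neq 0$ the triangle inequality $\wt(s-b)+\wt(b)\ge\wt(s)$ gives weight $\ge d_2$. Hence every nonzero word of $D^\adual$ has weight at least $\min\{2d_1,d_2\}$, and since $d_1\le d_2$ (from $D_2^\adual\subseteq D_1^\adual$) we have $\min\{2d_1,d_2\}\ge\min\{2d_2,d_1\}$, so the minimum distance meets the claimed bound.

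The step needing the most care is purity: I must show the distance is realized in $D^\adual\setminus D$ and that the stabilizer $D$ carries no lighter word. I would exhibit a minimum-weight word of $D^\adual$ outside $D$. When $d_2\le 2d_1$, purity of $Q_2$ supplies a weight-$d_2$ vector $s\in D_2^\adual\setminus D_2$, and then $(s\,|\,0)\in D^\adual\setminus D$. When $2d_1<d_2$, I take a weight-$d_1$ vector $b\in D_1^\adual$ and check $(-b\,|\,b)\in D^\adual\setminus D$, the non-containment following because $\wt(D_2)\ge\wt(D_2^\adual)=d_2>d_1$ forces $b\notin D_2$. Combined with the uniform lower bound on $\wt(D^\adual)$ from the previous paragraph, this shows the minimum weight of $D$ is not less than $\min\{2d_1,d_2\}$ and that this value is attained in $D^\adual\setminus D$, so the resulting code is pure. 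This purity argument, rather than the routine Plotkin estimate, is where the hypotheses that $Q_1$ and $Q_2$ are pure are genuinely used, and I expect it to be the main obstacle.
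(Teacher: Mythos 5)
Your proof is correct, and it follows the same overall route as the paper --- pass to the associated self-orthogonal additive codes $D_1\le D_2\le \F_{q^2}^n$ and apply a $(u\,|\,u+v)$ construction --- but with the roles of the two codes interchanged, and the swap is not cosmetic. The paper takes $D=\{(u,u+v): u\in D_1,\ v\in D_2\}$; since $D_1\subseteq D_2$, the second block $u+v$ sweeps out all of $D_2$ for each fixed $u$, so the paper's code collapses to the direct product $D_1\times D_2$ with dual $D_1^\adual\times D_2^\adual$, and its bound $\min\{2d_2,d_1\}$ is simply $d_1$ (one always has $d_1\le d_2$ here, as you observe, since $D_2^\adual\subseteq D_1^\adual$ and both codes are pure). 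Your code $D=\{(u\,|\,u+v): u\in D_2,\ v\in D_1\}$ is a genuine Plotkin sum; your dual computation and the $\wt(s-b)+\wt(b)$ estimate are correct and give the stronger bound $\min\{2d_1,d_2\}\ge\min\{2d_2,d_1\}$, so the stated lemma follows a fortiori. You also supply the purity verification --- exhibiting a minimum-weight dual word outside $D$ in each of the two regimes --- which the paper's proof leaves implicit, and both witnesses $(s\,|\,0)$ and $(-b\,|\,b)$ check out. The only microscopic slip is in the orthogonality expansion, where the term $\langle u|u'\rangle_a$ in fact appears twice; this is harmless since every term vanishes individually.
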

\begin{proof}
The hypothesis implies that there exist
additive subcodes $D_1\le D_2$ of\/ $\F_{q^2}^n$ such that
 $D_m\le D_m^\adual$, $|D_m|=q^{n-k_m}$, and
$\wt(D_m^\adual)=d_m$ for $m=1,2$.  The additive code
$$ D = \{ (u,u+v)\;|\, u\in D_1, v\in D_2\}\le \F_{q^2}^{2n}$$ is of
size $|D|=q^{2n-(k_1+k_2)}$. The trace-alternating dual of the code
$D$ is $D^\adual = \{ (u'+v',v')\,|\, u'\in D_1^\adual, v'\in
D_2^\adual\}$.  Indeed, the vectors on the right hand side are
perpendicular to the vectors in $D$, because
$$ \( (u,u+v)\, |\, (u'+v',v') )a =
\( u| u'+v')a + \( u+v|v')a = 0
$$
holds for all $u\in D_1, v\in D_2$ and $u'\in D_1^\adual, v'\in
D_2^\adual$.  We observe that $D$ is self-orthogonal, $D\le D^\adual$.
The weight of a vector $(u'+v',v')\in D^\adual\setminus D$ is at least
$\min\{ 2d_2,d_1\}$; the claim follows.
\end{proof}

\begin{lemma}
\label{th:difference}
Let $q$ be a power of two.  If a pure $[[n,k_1,d_1]]_q$ stabilizer
code~$Q_1$ exists that has a pure subcode $Q_2\subseteq Q_1$ with
parameters $[[n,k_2,d_2]]_q$ such that $k_1>k_2$, then a
pure $[[2n,k_1-k_2,d]]_q$ stabilizer code exists such that $d \ge
\min{\{2d_1, d_2\}}$.
\end{lemma}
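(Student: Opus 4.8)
The plan is to pass to the additive-code picture of Theorem~\ref{th:alternating} and imitate the construction of the preceding ``sum'' lemma, but with a pairing tailored to characteristic two. Since $Q_1$ and $Q_2$ are pure stabilizer codes with $Q_2\subseteq Q_1$, Theorem~\ref{th:alternating} furnishes additive codes $D_1\le D_2\le\F_{q^2}^n$ that are self-orthogonal with respect to the trace-alternating form, with $|D_m|=q^{n-k_m}$ and $\wt(D_m^\adual)=d_m$ for $m=1,2$ (the inclusion $D_1\le D_2$ reflects $Q_2\subseteq Q_1$ through the order-reversing correspondence $\Stab$). The target is a self-orthogonal additive code $D\le\F_{q^2}^{2n}$ of size $q^{2n-(k_1-k_2)}$ whose trace-alternating dual has minimum weight at least $\min\{2d_1,d_2\}$; applying Theorem~\ref{th:alternating} then yields the desired pure $[[2n,k_1-k_2,d]]_q$ code, noting that $K=q^{k_1-k_2}>1$ since $k_1>k_2$.

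For the construction I would set
$$ D=\{\,(u+w,\,w)\;\mid\; u\in D_1,\ w\in D_2^\adual\,\}\le\F_{q^2}^{2n}. $$
The map $(u,w)\mapsto(u+w,w)$ is injective, so $|D|=|D_1|\,|D_2^\adual|=q^{n-k_1}q^{n+k_2}=q^{2n-(k_1-k_2)}$, exactly the size corresponding to dimension $k_1-k_2$. The essential point — and where the hypothesis that $q$ is even is used — is self-orthogonality. Expanding $\langle (u_1+w_1,w_1)\mid(u_2+w_2,w_2)\rangle_a$ produces the terms $\langle u_1\mid u_2\rangle_a$, $\langle u_1\mid w_2\rangle_a$, $\langle w_1\mid u_2\rangle_a$, together with $\langle w_1\mid w_2\rangle_a$ appearing twice (once from each coordinate). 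In characteristic two the repeated term cancels, $\langle w_1\mid w_2\rangle_a+\langle w_1\mid w_2\rangle_a=0$; the first term vanishes because $D_1$ is self-orthogonal, and the two mixed terms vanish because $D_2^\adual\le D_1^\adual$ forces $w_i\in D_1^\adual$, whence $\langle u\mid w\rangle_a=0$ for $u\in D_1$. Thus $D\le D^\adual$. I would emphasise that placing $w$ in \emph{both} coordinates is precisely what makes the non-self-orthogonal code $D_2^\adual$ usable: its awkward self-pairings occur an even number of times and disappear only because $2=0$.

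Next I would identify the dual explicitly. Writing out orthogonality against the generators $(u+w,w)$ and separating the $u$- and $w$-parts shows that $(x,y)\in D^\adual$ iff $x\in D_1^\adual$ and $x+y\in D_2$; a dimension count confirms equality. Reparametrising gives the Plotkin-type form
$$ D^\adual=\{\,(x,\,x+s)\;\mid\; x\in D_1^\adual,\ s\in D_2\,\}. $$
A standard $(u\mid u+v)$ weight estimate then finishes the job: if $s=0$ and $x\neq0$ the codeword is $(x,x)$ of weight $2\wt(x)\ge 2d_1$, while if $s\neq0$ the triangle inequality gives $\wt(x)+\wt(x+s)\ge\wt(s)\ge d_2$, using that $s\in D_2\subseteq D_2^\adual$ is a nonzero codeword of $D_2^\adual$ and so has weight at least $\wt(D_2^\adual)=d_2$. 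Hence every nonzero vector of $D^\adual$ has weight at least $\min\{2d_1,d_2\}$. Since $D\subseteq D^\adual$, the same bound holds on $D$, so the code is pure, and $d=\wt(D^\adual\setminus D)\ge\min\{2d_1,d_2\}$, as required.

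I expect the only real obstacle to be discovering the correct shape of $D$: several natural-looking variants (for instance $(u,u+w)$ with $w\in D_2^\adual$) fail self-orthogonality in characteristic two because the $\langle w_1\mid w_2\rangle_a$ term then survives, and the size constraint $q^{2n-(k_1-k_2)}$ singles out the pairing of $D_1$ with $D_2^\adual$ rather than $D_1$ with $D_2$. Once the construction is pinned down, the self-orthogonality check, the computation of $D^\adual$, and the Plotkin-type weight bound are all routine.
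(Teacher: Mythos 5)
Your proof is correct and takes essentially the same approach as the paper: the paper's code is $D=\{(u,u+v)\mid u\in D_2^{\adual},\ v\in D_1\}$, which is your $D$ with the two length-$n$ blocks swapped, and it identifies the dual as $\{(u',u'+v')\mid u'\in D_1^{\adual},\ v'\in D_2\}$ via the same cancellation of the doubled $D_2^{\adual}$ self-pairing in characteristic two. The concluding weight estimate ($2d_1$ from the repeated block, $d_2$ from the triangle inequality) is likewise identical.
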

\begin{proof}
If an $[[n_m,k_m,d_m]]_q$ stabilizer code exists, then there exists an
additive code $D_m\le \F_{q^2}^n$ such that $D_m\le D_m^\adual$,
$\wt(D_m^\adual)=d$, and $|D_m|=q^{n-k_m}$ for $m=1,2$.
The inclusion $Q_2\subseteq Q_1$ implies that $D_1\le D_2$.
Let $D$ denote the additive code consisting of vectors of the form
$(u,u+v)$ such that $u \in D_2^\adual$ and $v \in D_1$.

We claim that $D^\adual$ consists of vectors of the form $(u',u'+v')$
such that $u' \in D_1^\adual$ and $v' \in D_2$.  Indeed, let
$v_1=(u,u+v)$ denote a vector in~$D$, and let $v_2 = (u',u'+v')$ be a
vector with $u' \in D_1^\adual$ and $v' \in D_2$.  We have
$$\(v_1|v_2)a= \(u|u')a + \(u|u')a + \(u|v')a + \(v|u')a + \(v|v')a.$$
The first two terms on the right hand side cancel because the
characteristic of the field is even; the next two terms vanish since
the vectors belong to dual spaces; the last term vanishes because $v$
and $v'$ are both contained in $D_2$, and $D_2$ is self-orthogonal.
Therefore, $v_1$ and $v_2$ are orthogonal. The set $\{ (u',u'+v')\,|\,
u' \in D_1^\adual, v' \in D_2\}\subseteq D^\adual$ has cardinality
$q^{2n+k_1-k_2}$, so it must be equal to $D^\adual$ by a dimension
argument.

The Hamming weight of a vector $(u',u'+v')$ in $D^\adual$
is at least $\min{\{2d_1, d_2\}}$, because
$u' \in D_1^\adual$ and $v'\in D_2\le D_2^\adual$.
\end{proof}

\begin{lemma} \label{th:codeexpansion} 
Let $q$ be a power of a prime.  If an $((n,K,d))_{q^m}$ stabilizer
code exists, then an $((nm,K,\ge d))_q$ stabilizer code
exists. Conversely, if an $((nm,K,d))_q$ stabilizer code exists,
then there exists an $((n,K,\ge \lfloor d/m\rfloor))_{q^m}$ stabilizer code.
\end{lemma}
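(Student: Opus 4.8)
The plan is to reduce both directions to Theorem~\ref{th:stabilizer} by constructing a single $\F_q$-linear isometry between the trace-symplectic space $\F_{q^m}^{2n}$ (over $\F_{q^m}$) and the trace-symplectic space $\F_q^{2nm}$ (over $\F_q$). First I would fix a basis $\gamma_1,\dots,\gamma_m$ of $\F_{q^m}$ over $\F_q$ together with its trace-dual basis $\gamma_1^*,\dots,\gamma_m^*$, characterized by $\tr_{q^m/q}(\gamma_l\gamma_{l'}^*)=\delta_{l,l'}$; such a dual basis exists because the form $(x,y)\mapsto\tr_{q^m/q}(xy)$ is nondegenerate. I then define $\phi\colon\F_{q^m}^{2n}\to\F_q^{2nm}$ by expanding each $X$-component in the basis $\gamma_l$ and each $Z$-component in the dual basis $\gamma_l^*$: writing $a_k=\sum_l a_{k,l}\gamma_l$ and $b_k=\sum_l b_{k,l}\gamma_l^*$, the vector $(\mathbf a|\mathbf b)$ is sent to the vector whose $X$-half collects all the $a_{k,l}$ and whose $Z$-half collects all the $b_{k,l}$. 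A dimension count over $\F_q$ shows $\phi$ is an $\F_q$-linear bijection.

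The crux is that $\phi$ preserves the trace-symplectic form. Using $\tr_{q^m/q}(\gamma_l\gamma_{l'}^*)=\delta_{l,l'}$ I obtain $\tr_{q^m/q}(b_k a_k')=\sum_l b_{k,l}a_{k,l}'$ for each coordinate $k$, so that the $\F_q$-valued symplectic pairing of $\phi(\mathbf a|\mathbf b)$ with $\phi(\mathbf a'|\mathbf b')$ equals $\tr_{q^m/q}(\mathbf b\cdot\mathbf a'-\mathbf b'\cdot\mathbf a)$. Applying $\tr_{q/p}$ and invoking transitivity of the trace, $\tr_{q^m/p}=\tr_{q/p}\circ\tr_{q^m/q}$, gives $\langle(\mathbf a|\mathbf b)\mid(\mathbf a'|\mathbf b')\rangle_s=\langle\phi(\mathbf a|\mathbf b)\mid\phi(\mathbf a'|\mathbf b')\rangle_s$, the left form taken over $\F_{q^m}$ and the right over $\F_q$. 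Being a form-preserving bijection, $\phi$ (and $\phi^{-1}$) carries any code to one of the same size, sends self-orthogonal codes to self-orthogonal codes, and satisfies $\phi(C^\sdual)=\phi(C)^\sdual$. The second ingredient is a weight sandwich: a nonzero coordinate $(a_k,b_k)$ over $\F_{q^m}$ expands to a block of $m$ coordinates over $\F_q$ containing at least one and at most $m$ nonzero symplectic pairs, so $\swt(v)\le\swt(\phi(v))\le m\,\swt(v)$ for all $v$, where each $\swt$ is read over the appropriate field.

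For the forward direction I start from the additive code $C\le\F_{q^m}^{2n}$ furnished by Theorem~\ref{th:stabilizer} for the $((n,K,d))_{q^m}$ code, with $|C|=q^{mn}/K$, $C\le C^\sdual$, and $\swt(C^\sdual\setminus C)=d$ (or $\swt(C^\sdual)=d$ when $K=1$). Then $\phi(C)\le\F_q^{2nm}$ is additive, self-orthogonal, of the same cardinality $q^{mn}/K$, and the left half of the sandwich applied to the bijection $C^\sdual\setminus C\to\phi(C)^\sdual\setminus\phi(C)$ forces minimum symplectic weight $\ge d$; Theorem~\ref{th:stabilizer} then yields an $((nm,K,\ge d))_q$ stabilizer code. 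For the converse I apply the inverse isometry: from the additive code $C'\le\F_q^{2nm}$ of the $((nm,K,d))_q$ code I set $C=\phi^{-1}(C')\le\F_{q^m}^{2n}$, which is additive, self-orthogonal, and of size $q^{mn}/K$. For $w\in C'^\sdual\setminus C'$ the right half of the sandwich gives $\swt(\phi^{-1}(w))\ge\swt(w)/m\ge d/m$, hence $\swt(C^\sdual\setminus C)\ge\lceil d/m\rceil\ge\lfloor d/m\rfloor$, and Theorem~\ref{th:stabilizer} delivers the claimed $((n,K,\ge\lfloor d/m\rfloor))_{q^m}$ code.

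The step requiring the most care is the converse, and the conceptual point worth flagging is that it does \emph{not} require $C'$ to arise as the expansion $\phi(C)$ of any $\F_{q^m}$-code. Because $\phi$ is a bijection defined on all of $\F_{q^m}^{2n}$, the preimage $\phi^{-1}(C')$ is automatically an additive (in fact $\F_q$-linear) subgroup of $\F_{q^m}^{2n}$ — note that $C$ need only be additive, not $\F_{q^m}$-linear — which is precisely what Theorem~\ref{th:stabilizer} requires. The remaining checks are routine: the existence and defining identity of the trace-dual basis, and the bookkeeping that $\phi$ deposits the $X$- and $Z$-parts into the correct halves of $\F_q^{2nm}$ so that the standard $\F_q$ trace-symplectic form is the one that appears.
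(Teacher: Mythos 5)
Your proof is correct and follows essentially the same route as the paper: the paper's map $\varphi_{B}$, which applies the Gram matrix $M=(\tr_{q^m/q}(\beta_i\beta_j))$ to the $Z$-coordinates, is literally the same isometry as your dual-basis expansion, since $Me_{B}(b)$ is exactly the coordinate vector of $b$ in the trace-dual basis. The form-preservation computation, the weight sandwich $\swt(v)\le\swt(\phi(v))\le m\,\swt(v)$, and the observation that the converse only needs additivity of the preimage all match the paper's argument.
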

This lemma is implicitly contained in the paper by Ashikhmin and
Knill~\cite{ashikhmin01}. 

\begin{proof}
Let $B=\{\beta_1,\dots,\beta_m\}$ denote a basis of
$\F_{q^m}/\F_q$. If $a$ is an element of $\F_{q^m}$, then we denote 
by $e_{\mbox{\tiny{B}}}(a)$ the coordinate vector in $\F_q^m$ given by 
$ e_{\mbox{\tiny{B}}}(a) = (a_1,\dots,a_m)$, where $a = \sum_{i=1}^m
a_i\beta_i.$ 

A nondegenerate symmetric form on the $\F_q$-vector space $\F_{q^m}$
is given by $\tr_{q^m/q}(xy)$. It follows that the Gram matrix
$M=(\tr_{q^m/q}(\beta_i\beta_j))_{1\le i,j\le m}$ is nonsingular.
We have $\tr_{q^m/q}(xy)= e_{\mbox{\tiny B}}(x)^tMe_{\mbox{\tiny B}}(y)$ for all $x, y$ in $\F_{q^m}$. 
We define an $\F_p$--vector space isomorphism $\varphi_{\mbox{\tiny B}}$ from
$\F_{q^m}^{2n}$ onto $\F_q^{2nm}$ by 
$$ \varphi_{\mbox{\tiny B}}((a|b)) = ((e_{\mbox{\tiny B}}(a_1),\dots,
e_{\mbox{\tiny B}}(a_n))|(Me_{\mbox{\tiny B}}(b_1),\dots,Me_{\mbox{\tiny B}}(b_n))).$$ It follows from the fact
that $\tr_{q^m/q}(\tr_{q/p}(x))=\tr_{q^m/p}(x)$ for all $x$ in
$\F_{q^m}$ and the definition of the isomorphism $\varphi_{\mbox{\tiny B}}$ that
$(a|b)\, \sdual\, (c|d)$ holds in $\F_{q^m}^{2n}$ if and only if
$\varphi_{\mbox{\tiny B}}((a|b)) \,\sdual\, \varphi_{\mbox{\tiny B}}((c|d))$ holds in
$\F_{q^{2nm}}$.

If an $((n,K,d))_{q^m}$ exists, then there exists an additive code
$C\le \F_{q^m}^{2n}$ of size $|C|=q^{nm}/K$ such that $C\le C^\sdual$,
$\swt(C^\sdual\setminus C)=d$ if $K>1$, and $\swt(C^\sdual)=d$ if
$K=1$. Therefore, the code $\varphi_{\mbox{\tiny B}}(C)$ over the alphabet $\F_q$ is
of size $q^{nm}/K$, satisfies $\varphi_{\mbox{\tiny B}}(C)\le \varphi_{\mbox{\tiny B}}(C)^\sdual\le
\F_q^{2nm}$, and $\swt(\varphi_{\mbox{\tiny B}}(C)^\sdual\setminus \varphi_{\mbox{\tiny B}}(C))=d$
if $K>1$ and $\swt(\varphi_{\mbox{\tiny B}}(C)^\sdual)=d$ if $K=1$. Thus, an
$((nm,K,d))_q$ stabilizer code exists. 

The existence of an $((nm,K,d))_q$ stabilizer code implies the
existence of an $((n,K))_{q^m}$ stabilizer code; the claim about the minimum
distance follows from the fact that $\varphi_{\mbox{\tiny B}}^{-1}$ maps each nonzero
block of $m$ symbols to a nonzero symbol in $\F_{q^m}$.
\end{proof}

We notice that there exists a basis $B$ such that $M$ is the identity
matrix if and only if either $q$ is even or both $q$ and $m$ are odd,
see~\cite{seroussi80}. In that
case, $\varphi_{\mbox{\tiny B}}$ simply expands each symbol into coordinates with
respect to~$B$.

\section{Puncturing Stabilizer Codes}
If we delete one coordinate in all codewords of a classical code, then
we obtain a shorter code that is called the punctured code. In
general, we cannot proceed in the same way with stabilizer codes,
since the resulting matrices might not commute if we delete one or
more tensor components. 

Rains~\cite{rains99} invented an interesting approach that solves the
puncturing problem for linear stabilizer codes and, even better, gives
a way to construct stabilizer codes from arbitrary linear codes. The
idea is to associate with a classical linear code a so-called puncture
code; if the puncture code contains a codeword of weight $r$, then a
self-orthogonal code of length $r$ exists and the minimum distance is
the same or higher than that of the initial classical code.  Further
convenient criteria for puncture codes are given in~\cite{grassl04}.

In this section, we generalize puncturing to arbitrary stabilizer
codes and review some known facts. Determining a puncture code is a
challenging task, and maynot always possible to find it in closed form.
In the next chapter we show how the results of this section can be
applied to puncture quantum BCH codes. 

It will be convenient to denote the the pointwise product of two
vectors $u$ and $v$ in $\F_q^n$ by $uv$, that is, $uv=(u_iv_i)_{i=1}^n$. 
Suppose that $C\le \F_q^{2n}$ is an arbitrary additive code. The
associated puncture code $\puncture_s(C) \subseteq \F_q^n$ is defined
as
\begin{eqnarray}
 \puncture_s(C)= \left\{ (b_k a_k' - b_k'a_k)_{k=1}^n\,
| \,(a|b), (a'|b')\in C\right\}^\perp.
\end{eqnarray}

\begin{theorem} \label{th:punc_symplectic}
Suppose that $C$ is an arbitrary additive subcode of $\F_q^{2n}$ of
size $|C|=q^n/K$ such that $\swt(C^\sdual\setminus C)=d$.  If the
puncture code $\puncture_s(C)$ contains a codeword of Hamming weight
$r$, then there exists an $((r,K^*,d^*))_q$ stabilizer code with
$K^*\ge K/q^{n-r}$ that has minimum distance $d^* \ge d$ when $K^*>1$. If $\swt(C^\sdual)=d$, then the resulting
punctured stabilizer code is pure to $d$.
\end{theorem}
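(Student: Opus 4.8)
The plan is to turn the puncture codeword into an explicit self-orthogonal code of length $r$ by a scaled projection onto its support, and then to transport minimum-distance information back and forth through a weight-preserving lift.

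First I would fix a codeword $v=(v_1,\dots,v_n)\in\puncture_s(C)$ of Hamming weight $r$, let $T=\{k : v_k\neq 0\}$ be its support, and for each $k\in T$ choose $s_k,t_k\in\F_q^{*}$ with $s_k t_k=v_k$. Define $\Phi:\F_q^{2n}\to\F_q^{2r}$ by $\Phi((\mbf a|\mbf b))=((s_k a_k)_{k\in T}\,|\,(t_k b_k)_{k\in T})$ and set $D=\Phi(C)$. The defining property of $\puncture_s(C)$ is exactly that $\sum_{k} v_k(b_k a_k'-b_k' a_k)=0$ for all $(\mbf a|\mbf b),(\mbf a'|\mbf b')$ in $C$; since $v_k=0$ off $T$, this gives $\langle \Phi(c)\,|\,\Phi(c')\rangle_s=\tr(\sum_{k\in T} s_kt_k(b_ka_k'-b_k'a_k))=\tr(\sum_k v_k(b_ka_k'-b_k'a_k))=0$, so $D\le D^\sdual$. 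The kernel of $\Phi$ on $C$ consists of the codewords supported outside $T$, hence $|D|=|C|/|\ker\Phi|\le q^n/K$, and Theorem~\ref{th:stabilizer} yields a code of dimension $K^*=q^r/|D|\ge K/q^{n-r}$.

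For the distance (the case $K^*>1$), I would build the adjoint lift $L:\F_q^{2r}\to\F_q^{2n}$ by $L((\mbf x|\mbf y))=\text{ext}_0((t_k x_k)_{k\in T}\,|\,(s_k y_k)_{k\in T})$, extending by zero off $T$. A direct computation of the trace-symplectic form shows $\langle c\,|\,L(w)\rangle_s=\langle\Phi(c)\,|\,w\rangle_s$ for every $c\in C$, so $L$ maps $D^\sdual$ into $C^\sdual$; and because each $s_k,t_k$ is nonzero, $L$ preserves the symplectic weight, $\swt(L(w))=\swt(w)$. Now take $w\in D^\sdual$ with $\swt(w)<d$. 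Then $L(w)\in C^\sdual$ has symplectic weight $<d=\swt(C^\sdual\setminus C)$, forcing $L(w)\in C$. I would then use this membership to conclude $w\in D$, so that every element of $D^\sdual\setminus D$ has weight at least $d$ and hence $d^*\ge d$.

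The main obstacle is precisely this last inference: closing the loop from $L(w)\in C$ back to $w\in D$. Applying $\Phi$ to $L(w)$ returns $v\cdot w$ (the coordinatewise product), not $w$ itself, because $\Phi$ and $L$ compose to multiplication by the scalars $v_k$ rather than to the identity; so the argument must use the structure of $\puncture_s(C)$ to locate the correct preimage in $C$ whose $\Phi$-image is $w$, rather than a $v$-rescaled version of it. Resolving this scaling mismatch---equivalently, showing that a low-weight codeword of $D^\sdual$ is genuinely an image $\Phi(c)$ with $c\in C$---is the crux of the proof and is where the puncture condition is indispensable. Finally, the purity claim follows by repeating the distance argument with $C^\sdual\setminus C$ replaced by $C^\sdual$: if $\swt(C^\sdual)=d$, then every nonzero $w\in D^\sdual$ lifts to a nonzero codeword of $C^\sdual$ of the same weight, giving $\wt(D^\sdual)\ge d$ and hence purity to $d$.
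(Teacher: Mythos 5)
Your construction of the punctured code, the self-orthogonality computation, and the dimension bound all match the paper's argument (the paper takes the special case $s_k=1$, $t_k=v_k$, i.e.\ it forms $C_x=\{(a\,|\,bx) : (a|b)\in C\}$ and restricts to the support of $x$), and your purity argument via the weight-preserving lift $L$ is also sound. But the distance argument has a genuine gap, and you have located it yourself: you get from $\swt(w)<d$ to $L(w)\in C$, but you never close the loop to conclude $w\in D$, and you explicitly leave "resolving this scaling mismatch" as an unresolved crux. As written, this is not a proof of $d^*\ge d$.

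The paper closes exactly this gap by \emph{iterating} the rescaling rather than trying to invert it. In your notation: from $L(w)\in C$ you may only conclude $\Phi(L(w))=v\cdot w\in D\le D^{\sdual}$, where $v\cdot w$ denotes the coordinatewise rescaling by $v_k$ on the support. But $v\cdot w$ has the same symplectic weight as $w$ (each $v_k\neq 0$ on the support), so the same two steps apply to it: $L(v\cdot w)\in C^{\sdual}$ has weight $<d$, hence lies in $C$, hence $v^2\cdot w\in D$. Repeating, $v^j\cdot w\in D$ for all $j\ge 1$. Since $v_k^{q-1}=1$ for every $k$ in the support and $w$ is supported there, the $(q-1)$-st iterate is $v^{q-1}\cdot w=w$ itself, so $w\in D$, contradicting $w\in D^{\sdual}\setminus D$. (In the paper's bookkeeping this is the chain $(ax|b)\in C\Rightarrow(ax|bx)\in C_x\Rightarrow(ax^2|bx)\in C\Rightarrow\cdots\Rightarrow(ax^{q-1}|bx^{q-1})\in C_x$, with $x^{q-1}$ the characteristic vector of the support.) This is the one missing idea; with it inserted, your argument is complete and is essentially the paper's proof.
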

\begin{proof}
Let $x$ be a codeword of weight $r$ in the $\puncture_s(C)$. Define an
additive code $C_x\le \F_q^{2n}$ by
$$ C_x = \{ (a|bx) \; |\; (a|b)\in C\}.$$ 
If $(a|bx)$ and $(a'|b'x)$ are arbitrary elements of $C_x$, then 
\begin{equation}\label{puncturedual}
\< (a|bx)\; |\;(a'|b'x)>s
= \ds \tr\left(\sum_{k=1}^n (b_ka'_k - b_k'a_k)x_k\right)=0
\end{equation}
by definition of $\puncture_s(C)$; thus, $C_x\le (C_x)^\sdual.$ 

Let $C_x^R= \{ (a_k|b_k)_{k\in S} | (a|b)\in C_x\}$ denote the
restriction of $C_x$ to the support $S$ of the vector $x$. Since
equation (\ref{puncturedual}) depends only on the nonzero coefficients
of the vector $x$, it follows that $C_x^R \le (C_x^R)^{\sdual}$ holds.

We note that $|C|\ge |C_x^R|$; hence, the dimension $K^*$ of the
punctured quantum code is bounded by
\[
 K^* \geq q^r/|C_x^R| \geq q^r/|C|=q^r/(q^n/K)=K/q^{n-r}.
\]

It remains to show that $\swt((C_x^R)^\sdual\setminus C_x^R)\ge d$.
Seeking a contradiction, we suppose that $u_x^R$ is a vector in
$(C_x^R)^\sdual\setminus C_x^R$ such that $\swt(u_x^R)<d$. Let
$u_x=(a|b)$ denote the vector in $(C_x)^\sdual$ that is zero outside
the support of $x$ and coincides with $u_x^R$ when restricted to the
support of $x$. It follows that $(ax|b)$ is contained in $C^\sdual$.
However $\swt(ax|b)<d$, so $(ax|b)$ must be
an element of $C$, since $\swt(C^\sdual\setminus
C)=d$.  This implies that $(ax|bx)$ is an element of $C_x\le
(C_x)^\sdual$. Arguing as before, it follows that $(ax^2|bx)$ is in
$C$ and $(ax^2|bx^2)$ is in $C_x$.  Repeating the process, we obtain
that $v_x=(ax^{q-1}|bx^{q-1})$ is in $C_x$, and we note that $x^{q-1}$
is the characteristic vector of the support of $x$.  Restricting $v_x$
in $C_x$ to the support of $x$ yields $u_x^R\in C_x^R$,
contradicting the assumption that $u_x^R\in (C_x^R)^\sdual\setminus
C_x^R$.

Finally, the last statement concerning the purity is easy to prove (a
direct generalization of the argument given in~\cite{grassl04} for
pure linear codes).
\end{proof}

If the code $C$ is a direct product, as in the case of CSS codes, then 
the expression for the puncture code simplifies somewhat. 
\begin{lemma}
If $C_1$ and $C_2$ are two additive subcodes of $\F_q^n$, then 
$$ \pc_s(C_1\times C_2) = \{ ab\mid a\in C_1, b\in
C_2\}^\perp\le \F_q^n.$$
\end{lemma}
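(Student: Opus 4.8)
The plan is to unwind the definition of the puncture code for the direct product $C = C_1\times C_2$ and to compare the generating set appearing inside the perp with the set of pointwise products. Writing two codewords of $C$ as $(a|b)$ and $(a'|b')$ with $a,a'\in C_1$ and $b,b'\in C_2$, the defining vectors $(b_ka_k'-b_k'a_k)_{k=1}^n$ are exactly $ba'-b'a$ in the pointwise-product notation, so
$$ \pc_s(C_1\times C_2)=\{\, ba'-b'a \mid a,a'\in C_1,\ b,b'\in C_2\,\}^\perp. $$
Writing $S_1$ for this inner set and $S_2=\{\,ab\mid a\in C_1,\ b\in C_2\,\}$ for the set of products, the claim is precisely $S_1^\perp=S_2^\perp$, which I would establish by two inclusions.

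For $S_1^\perp\subseteq S_2^\perp$ I would show the set inclusion $S_2\subseteq S_1$, which uses only that $C_1$ and $C_2$ are additive and hence contain $0$: given a product $ab$ with $a\in C_1$ and $b\in C_2$, choosing the two codewords of $C$ to be $(0|b)$ and $(a|0)$ turns $ba'-b'a$ into $ba-0=ab$, exhibiting $ab\in S_1$. For the reverse inclusion $S_2^\perp\subseteq S_1^\perp$, I would take any $x\in S_2^\perp$ and note that both $ba'$ and $b'a$ lie in $S_2$ (pointwise multiplication being commutative), so $x$ is orthogonal to each of them; since the underlying form is biadditive, $x$ is then orthogonal to their difference $ba'-b'a$, giving $x\in S_1^\perp$. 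Combining the two inclusions yields the asserted identity.

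The argument is essentially formal, so I do not foresee a genuine obstacle; the only point needing attention is the bookkeeping of the four vectors $a,a',b,b'$ and the observation that collapsing one of the two codewords to $0$ is legitimate precisely because $C_1$ and $C_2$ are additive. I would also remark that the reasoning is insensitive to the exact inner product used to define the perp: the same two inclusions go through verbatim for the trace version of orthogonality that actually appears in the proof of Theorem~\ref{th:punc_symplectic}, so the simplified description of the puncture code is valid in that setting as well.
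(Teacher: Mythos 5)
Your proposal is correct and follows essentially the same route as the paper: the paper's proof is a one-line assertion that the span of $\{ab\}$ equals the span of $\{ba'-b'a\}$, from which equality of the orthogonal complements is immediate, and your two inclusions (using $0\in C_1$, $0\in C_2$ for one direction and biadditivity of the form for the other) are exactly the details that make that span equality precise. Your closing remark that the argument is insensitive to whether the Euclidean or trace form defines the perp is a sensible addition but not something the paper dwells on.
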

\begin{proof}
Since $\langle ab\mid a\in C_1, b\in C_2\rangle = \langle
(ba'-b'a)\mid a,a'\in C_1, b,b'\in C_2\rangle$, the claim about the
orthogonal complements of these sets is obvious.
\end{proof}

Since many quantum codes are constructed from self-orthogonal codes $C
\le C^\perp$, we write 
\begin{equation}
\pc_e(C) =\pc_s(C\times C)= \{ ab \mid a,b\in C\}^\perp.
\end{equation}

\section{Conclusions}
In this chapter we have further developed the theory of nonbinary stabilizer codes.
After reviewing the basic theory of nonbinary
stabilizer codes over finite fields, we introduced Galois-theoretic
methods to clarify the relation between these and more general quantum
codes. We showed the most general class of codes over quadratic extension fields 
that can be used to construct quantum codes are those that are self-orthogonal 
with respect to the trace alternating product. 
 
We gave simpler proofs for the existence of nonbinary quantum codes.  
We also generalized the linear programming bounds for the nonbinary codes.
Following Gottesman's lead \cite{gottesman97}, we were able to show
that single and double error-correcting nonbinary stabilizer codes
cannot beat the quantum Hamming bound. We conjecture that no quantum
stabilizer code can exceed the quantum Hamming bound, but a
proof is still elusive.
We also gave methods to obtain new quantum codes from existing quantum codes.
In particular, we developed the theory of puncture codes.

There are open questions that the work in this chapter suggests. We
could for instance start with a different
choice of error basis~\cite{knill96a}, and one can develop a similar
theory for such stabilizer codes. For example, one choice leads to
self-orthogonal additive subcodes of $\Z_q^n\times \Z_q^n$ instead of
subcodes of $\F_q^n\times \F_q^n$.  It would be interesting to know
how the stabilizer codes with respect to different error bases
compare. To the best of our knowledge, such a comparison has not been made.

\chapter{Classes of Stabilizer Codes\footnotemark}\label{ch:stabq2}
\footnotetext{\copyright  2006 IEEE.  Reprinted in part, with permission, 
from A. Ketkar, A. Klappenecker, S. Kumar and P. K. Sarvepalli,
``Nonbinary stabilizer codes over finite fields''. {\em IEEE Trans. Inform. Theory}, 
vol. 52, no. 11, pp.~4892--4914, 2006.}

In this chapter we shall take a constructive approach to our study of
stabilizer codes giving explicit constructions for many classes of 
codes. Much of the theory we developed in Chapter~\ref{ch:stabq1} will
be brought to bearing with additional simplifications for the classes
of linear codes.  In case of linear codes, our main methods of 
constructions will be the Hermitian construction and the CSS construction
(Lemmas \ref{co:classical}--\ref{th:css2}). Hence, we need to look for 
classical codes that are self-orthogonal with respect to the Hermitian or
the Euclidean product or families of nested codes like the BCH codes.
Additionally, we investigate the structural properties of nontrivial codes that 
meet the quantum Singleton bound and  establish bounds on the maximal 
length of such codes. We provide a concrete illustration of the theory
of puncture codes developed in the last chapter by puncturing the quantum
BCH codes.

\section{Quantum Cyclic Codes}
Cyclic codes are an interesting class of codes which have simple encoding
and efficient decoding algorithms. Consequently, quantum cyclic codes
have also generated interest. Before we construct quantum cyclic codes
we need the following results for identifying cyclic codes that
contain their duals. We have not been able to trace the references
that first proved these results, but we note that these conditions
have been established in various forms earlier, especially for codes
over $\F_2$ and $\F_4$; see
\cite[Chapter~4]{huffman03} for general results concerning classical
codes and \cite{calderbank98,grassl97} for results concerning binary
quantum codes. We provide a convenient and unified treatment while
giving the nonbinary equivalents.
 
Recall that a classical cyclic code with parameters $[n,k]_q$ is a
principal ideal in the ring $\F_q[x]/(x^n -1)$ and can be succinctly
described by its generator polynomial or its defining set.  The
polynomial $x^n-1$ of $\F_{q}[x]$ has simple roots if and only if $n$
and $q$ are coprime. If the latter condition is satisfied, then there
exists a positive integer $m$ such that the field $\F_{q^{m}}$
contains a primitive $n$th root of unity $\beta$.  In that case, one
can describe a cyclic code with generator polynomial $g(x)$ in terms
of its defining set $Z=\{ k\,|\, g(\beta^k)=0 \text{ for } 0\le k<
n\}$.  Further details on cyclic codes can be found in any standard
textbook on coding theory, see \cite{huffman03} or
\cite{macwilliams77}.
 
In the case of cyclic codes, identifying the self-orthogonal codes can be
translated into equivalent conditions on the generator polynomial 
of the code or its defining set. First we shall consider codes over $\F_{q^2}$.
Let $\sigma$ denote the automorphism of the field $\F_{q^2}$ given by
$\sigma(x)=x^q$. We can define an action of $\sigma$ on the polynomial
ring $\F_{q^2}[x]$ by
$$h(x)=\sum_{k=0}^n h_k x^k \longmapsto h^\sigma(x)
=\sum_{k=0}^n \sigma(h_k) x^k.$$

\begin{lemma}\label{th:cyclic}
Suppose that $B$ is a classical cyclic $[n,k,d]_{q^2}$ code with
generator polynomial $g(x)$ and check polynomial
$h(x)=(x^n-1)/g(x)$. If $g(x)$ divides $\sigma(h_0)^{-1}x^k
h^\sigma(1/x)$, then $B^\hdual\subseteq B$, and there exists an
$[[n,2k-n,\geq d]]_{q}$ stabilizer code that is pure to~$d$.
\end{lemma}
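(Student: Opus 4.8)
The plan is to split the statement into its constructive half and its code-theoretic half. The constructive half is immediate: once we have established the dual-containment $B^\hdual\subseteq B$, the Hermitian construction of Corollary~\ref{co:classical}, applied to the $[n,k,d]_{q^2}$ code $B$, produces an $[[n,2k-n,\ge d]]_q$ stabilizer code pure to $d$. So the whole content of the lemma is to show that the divisibility hypothesis is precisely what forces $B^\hdual\subseteq B$.

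First I would reduce the Hermitian dual to a Euclidean dual. Writing $\sigma$ for the coordinatewise Frobenius $x\mapsto x^q$ and using that the Hermitian inner product satisfies $\mathbf{x}^q\cdot\mathbf{y}=\sigma(\mathbf{x})\cdot\mathbf{y}$, one checks directly that $\mathbf{y}\in B^\hdual$ if and only if $\sigma(\mathbf{x})\cdot\mathbf{y}=0$ for all $\mathbf{x}\in B$, that is, $B^\hdual=(\sigma(B))^\perp$ is the ordinary Euclidean dual of the twisted code $\sigma(B)$. The next step is to identify $\sigma(B)$ as a cyclic code: since $\sigma$ commutes with the cyclic shift and fixes the coefficients of $x^n-1$ (these lie in the prime field), applying $\sigma$ to coefficients carries $B=\langle g(x)\rangle$ to the cyclic code $\langle g^\sigma(x)\rangle$ whose check polynomial is $h^\sigma(x)$, because $g^\sigma h^\sigma=(gh)^\sigma=(x^n-1)^\sigma=x^n-1$.

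Then I would invoke the standard description of the Euclidean dual of a cyclic code: a cyclic code with monic check polynomial $p(x)$ of degree $k$ and nonzero constant term has Euclidean dual generated by the normalized reciprocal $p(0)^{-1}x^k p(1/x)$. Applying this with $p=h^\sigma$, whose constant term is $\sigma(h_0)$ and is nonzero (since $g(0)h(0)=-1$ forces $h_0\neq 0$, hence $\sigma(h_0)\neq 0$), gives $B^\hdual=\langle \sigma(h_0)^{-1}x^k h^\sigma(1/x)\rangle$. Finally I would use that containment of cyclic codes is detected by reverse divisibility of generators: $B^\hdual\subseteq B=\langle g(x)\rangle$ holds if and only if $g(x)$ divides the generator $\sigma(h_0)^{-1}x^k h^\sigma(1/x)$ of $B^\hdual$. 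This is exactly the stated hypothesis, so $B^\hdual\subseteq B$, and the code-existence claim follows from Corollary~\ref{co:classical}.

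The routine parts present no difficulty, but the step demanding care will be the bookkeeping around the reciprocal polynomial and the Frobenius twist: one must keep straight that $\sigma$ acts on coefficients rather than arguments, that it fixes $x^n-1$ so that dualization and the $\sigma$-twist commute, and that the normalizing constant in the generator of the Euclidean dual of $\sigma(B)$ emerges as exactly $\sigma(h_0)^{-1}$, so that the divisor matches the one in the hypothesis term for term. Confirming $h_0\neq 0$, which comes from $g(0)h(0)=-1$, is what makes the reciprocal have the correct degree $k$ and the generator of $B^\hdual$ well defined.
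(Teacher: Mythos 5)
Your proposal is correct and follows essentially the same route as the paper's proof: identify $B^{\hdual}$ with the Euclidean dual of $\sigma(B)$, observe that this dual is the cyclic code generated by the normalized reciprocal $\sigma(h_0)^{-1}x^k h^\sigma(1/x)$ of the check polynomial $h^\sigma$, translate containment into divisibility by $g(x)$, and conclude via Corollary~\ref{co:classical}. You simply supply more of the bookkeeping (the Frobenius twist commuting with dualization, $h_0\neq 0$) that the paper leaves implicit.
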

\begin{proof}
If $h(x)$ is the check polynomial of $B$, then $h^\sigma(x)$ is the
check polynomial of $\sigma(B)$. The generator polynomial of the dual
code $\sigma(B)^\perp=B^\hdual$ is given by $\sigma(h_0)^{-1}x^k
h^\sigma(1/x)$, the normalized reciprocal polynomial of $h^\sigma(x)$.
Therefore, the condition that the polynomial $g(x)$ divides
$\sigma(h_0)^{-1}x^k h^\sigma(1/x)$ is equivalent to the condition
$B^\hdual \subseteq B$. The stabilizer code follows from
Corollary~\ref{co:classical}.
\end{proof}

The following Lemma summarizes various equivalent conditions on
dual containing codes in terms of the generator polynomial $g(x)$ and the
defining set $Z$. 

\begin{lemma}\label{th:cyclic2}
Let $\gcd(n,q^2)=1$ and $C$ be a classical cyclic $[n,k,d]_{q^2}$ code
whose generator polynomial is $g(x)$ and defining set is $Z$. Suppose that any of
the following equivalent conditions are satisfied\\ (i) $x^n - 1
\equiv 0 \mod g(x)g^{*}(x)$ where $g^{*}(x)=x^{n-k}g^\sigma(1/x)$;\\
(ii) $Z\subseteq \{ -qz\,|\, z\in N\setminus Z\}$;\\ (iii) $Z\cap
Z^{-q}=\emptyset $, where $Z^{-q}=\{-qz\mid z\in Z\}$.\\ Then $C^\hdual
\subseteq C $ and there exists an $[[n,2k-n,\geq d]]_{q}$ stabilizer
code that is pure to~$d$.
\end{lemma}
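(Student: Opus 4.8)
The plan is to prove the equivalence of (i), (ii), (iii) and then show that any one of them forces $C^\hdual \subseteq C$, at which point the stabilizer code is produced by the Hermitian construction (Corollary~\ref{co:classical}) applied to $B=C$. Throughout I would translate the statements about $g$, $g^\sigma$, $g^*$ into statements about the defining sets they cut out among the powers of a fixed primitive $n$th root of unity $\beta\in\F_{q^{2m}}$. Since $\gcd(n,q^2)=1$, the polynomial $x^n-1$ is squarefree, so every divisor is determined by its root set and products of coprime divisors divide it.

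First I would record the root sets. The roots of $g$ are $\{\beta^i : i\in Z\}$. Writing $g^\sigma$ for the polynomial obtained by raising each coefficient to the $q$th power, its roots are $\{\beta^{qi} : i\in Z\}$, since applying the Frobenius $x\mapsto x^q$ to the elementary symmetric functions of the roots raises each root to its $q$th power; in particular $g^\sigma\mid x^n-1$ (apply $\sigma$ to $x^n-1=g(x)h(x)$). Hence the reciprocal $g^*(x)=x^{n-k}g^\sigma(1/x)$ has roots $\{\beta^{-qi} : i\in Z\}$, so its defining set is exactly $Z^{-q}=\{-qz : z\in Z\}$, and $g^*\mid x^n-1$ as well.

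Next comes the key identity. Substituting $x\mapsto 1/x$ in $x^n-1=g(x)h(x)$ and clearing denominators gives $-(x^n-1)=\big(x^{n-k}g(1/x)\big)\big(x^k h(1/x)\big)$; applying $\sigma$ (which fixes $x^n-1$) yields $-(x^n-1)=g^*(x)\cdot\big(x^k h^\sigma(1/x)\big)$. By Lemma~\ref{th:cyclic} the generator polynomial of $C^\hdual$ is the normalized reciprocal $\sigma(h_0)^{-1}x^k h^\sigma(1/x)$, which by the identity is a scalar multiple of $(x^n-1)/g^*(x)$. A cyclic subcode relation $C^\hdual\subseteq C$ holds if and only if $g$ divides the generator of $C^\hdual$, i.e.\ if and only if $g(x)$ divides $(x^n-1)/g^*(x)$; since $g^*\mid x^n-1$, this is equivalent to $g(x)g^*(x)\mid x^n-1$, which is condition (i). This simultaneously establishes (i)~$\Leftrightarrow$~$C^\hdual\subseteq C$.

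Finally I would close the loop of equivalences. Because $x^n-1$ is squarefree, condition (i) says precisely that $g$ and $g^*$ share no common root, i.e.\ $Z\cap Z^{-q}=\emptyset$, which is (iii). For (ii), since $\gcd(q,n)=1$ the map $z\mapsto -qz$ is a bijection of $\mathbb{Z}_n$, so $\{-qz : z\in N\setminus Z\}=N\setminus Z^{-q}$, and (ii) reads $Z\subseteq N\setminus Z^{-q}$, again equivalent to $Z\cap Z^{-q}=\emptyset$. With all three conditions equivalent and each implying $C^\hdual\subseteq C$, Corollary~\ref{co:classical} with $B=C$ yields the $[[n,2k-n,\ge d]]_q$ stabilizer code pure to $d$. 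The one delicate point is the bookkeeping of roots under the combined Frobenius-and-reciprocal operation defining $g^*$; once the defining set $Z^{-q}$ is pinned down, everything reduces to the squarefreeness of $x^n-1$ and the bijectivity of multiplication by $-q$ on $\mathbb{Z}_n$.
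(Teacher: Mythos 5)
Your proposal is correct and follows essentially the same route as the paper: both arguments pass through Lemma~\ref{th:cyclic}, identify the generator polynomial of $C^\hdual$ as the normalized reciprocal of $h^\sigma$ (equivalently $(x^n-1)/g^*(x)$ up to a scalar), translate divisibility into the defining-set conditions on $Z$ and $Z^{-q}$, and conclude with Corollary~\ref{co:classical}. Your version is a bit more explicit about the root-set bookkeeping for $g^*$ and organizes the three conditions as a clean chain of equivalences, but the mathematical content is the same.
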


\begin{proof}
Let $h(x)=(x^n-1)/g(x)$ be the check polynomial of $C$. 
Then $h^\sigma(x)=\sigma((x^n-1)/g(x))=(x^n-1)/g^\sigma(x)$. 
From Lemma~\ref{th:cyclic} we know that $C$ contains its Hermitian dual if 
g(x) divides $\sigma(h_0)^{-1}x^k h^\sigma(1/x)$ viz. 
$g(x) | \sigma(h_0)^{-1}(1-x^n)/(x^{n-k}g^\sigma(1/x))$, which 
implies $x^n-1 \equiv 0 \mod g(x) g^{*}(x)$ which proves (i).

The generator polynomial $g(x)$ of $C$ is given by
$g(x)=\prod_{z\in Z} (x-\beta^z)$, hence its check polynomial is
of the form
$$h(x)=(x^n-1)/g(x)=\prod_{z\in N\setminus Z} (x-\beta^z).$$
Applying the automorphism $\sigma$ yields
$h^\sigma(x)= \prod_{z\in N\setminus Z} (x-\beta^{qz}).$
Therefore, the generator polynomial of $C^\hdual$ is given by
$$ \begin{array}{lcl}
h^\sigma(0)^{-1} x^k h^\sigma(1/x)
&=& h^\sigma(0)^{-1} \prod_{z\in N\setminus Z} (1-\beta^{qz}x)\\
&=& \prod_{z \in N\setminus Z} (x -\beta^{-qz});
   \end{array}
$$
in the last equality, we have used the fact that
$h^\sigma(0)^{-1}=\prod_{z\in N\setminus Z} (-\beta^{-qz})$. By Lemma~\ref{th:cyclic}, $B^\hdual \subseteq B$ if
and only if the generator polynomial $g(x)$ divides $h^\sigma(0)^{-1}
x^k h^\sigma(1/x)$. The latter condition is equivalent to the fact
that $Z$ is a subset of $\{ -qz\mid z\in N\setminus Z\}$ and (ii) follows.
From (ii)  we know that $C^\hdual \subseteq C$ if and only if $Z\subseteq \{ -qz\mid z\in
N\setminus Z\}$. In other words $Z^{-q} \subseteq N\setminus Z$.
Hence $Z\cap Z^{-q}=\emptyset$. 
An  $[[n,2k-n,\geq d]]_q$ stabilizer code follows from Corollary~\ref{co:classical}.
\end{proof}

Cyclic codes that contain their Euclidean duals can also be nicely
characterized in terms of their generator polynomials and defining sets. 
The following Lemma is a very straight forward extension of the binary 
case and summarizes some of the known results in the nonbinary case
as well, but we include it because of its usefulness in
constructing cyclic quantum codes. 

\begin{lemma}\label{th:csscyclic}
Let $C$ be an $[n,k,d]_q$ cyclic code such that $\gcd(n,q)=1$. Let its defining set $Z$ and
generator polynomial  $g(x)$ be such that any of the following
equivalent conditions are satisfied\\
(i) $x^n-1 \equiv 0 \mod g(x)g^{\dagger}(x)$, where 
$g^{\dagger}(x)=x^{n-k}g(1/x)$; \\
(ii) $Z \subseteq \{-z\mid z\in N\setminus Z\}$; \\
 (iii) $Z\cap Z^{-1}=\emptyset$ where $Z^{-1}=\{ -z\mod n\mid z\in Z\}$.\\
Then $C^\perp \subseteq C$ and there exists an $[[n,2k-n,\geq d]]_q$
stabilizer code that is pure to~$d$.
\end{lemma}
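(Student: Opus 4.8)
The plan is to mirror the Hermitian argument of Lemma~\ref{th:cyclic2}, replacing the Frobenius-twisted reciprocal $h^\sigma$ by the ordinary reciprocal polynomial, and to finish by invoking the Euclidean Construction of Corollary~\ref{th:css2}. First I would fix a primitive $n$th root of unity $\beta$ in a suitable extension of $\F_q$, which exists precisely because $\gcd(n,q)=1$ makes $x^n-1$ separable. Then $g(x)=\prod_{z\in Z}(x-\beta^z)$ and the check polynomial is $h(x)=(x^n-1)/g(x)=\prod_{z\in N\setminus Z}(x-\beta^z)$. The standard description of the Euclidean dual of a cyclic code (see \cite[Chapter~4]{huffman03}) gives that $C^\perp$ is generated by the normalized reciprocal $h(0)^{-1}x^k h(1/x)$ of the check polynomial, whose roots are $\beta^{-z}$ for $z\in N\setminus Z$; hence $C^\perp$ has defining set $Z^\perp=\{-z\bmod n\mid z\in N\setminus Z\}$.

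Next I would identify (ii) as exactly the containment $C^\perp\subseteq C$. An inclusion of cyclic codes $C^\perp\subseteq C$ holds if and only if the generator polynomial $g(x)$ of $C$ divides that of $C^\perp$, equivalently (passing to root sets) if and only if $Z\subseteq Z^\perp$. Writing out $Z^\perp$, this reads $Z\subseteq\{-z\mid z\in N\setminus Z\}$, which is precisely condition (ii). Thus (ii) is not merely equivalent to $C^\perp\subseteq C$; it literally restates it.

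To close the chain of equivalences I would compute the reciprocal $g^\dagger(x)=x^{n-k}g(1/x)=\prod_{z\in Z}(1-\beta^z x)$, whose root set is $Z^{-1}=\{-z\mid z\in Z\}$. Condition (i) asserts $g(x)g^\dagger(x)\mid x^n-1$; since $\gcd(n,q)=1$ makes $x^n-1$ squarefree, this divisibility is equivalent to $g(x)g^\dagger(x)$ being squarefree, i.e.\ to the root sets $Z$ and $Z^{-1}$ being disjoint, which is condition (iii). Finally I would match (ii) and (iii) using the bijection $z\mapsto -z$ of $N$: condition (ii), namely $Z\subseteq -(N\setminus Z)$, becomes $Z^{-1}=-Z\subseteq N\setminus Z$ after applying the bijection, and this is equivalent to $Z\cap Z^{-1}=\emptyset$, which is (iii). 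Together this yields (i)$\Leftrightarrow$(iii)$\Leftrightarrow$(ii). Since any one of the conditions forces $C^\perp\subseteq C$, the existence of the $[[n,2k-n,\geq d]]_q$ stabilizer code pure to $d$ follows at once from the Euclidean Construction, Corollary~\ref{th:css2}.

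The one delicate point, and the main thing to get right, is the squarefreeness step in the treatment of (i): one must explicitly use $\gcd(n,q)=1$ to guarantee that $x^n-1$ has simple roots, so that divisibility by $g(x)g^\dagger(x)$ forces $Z$ and $Z^{-1}$ to be genuinely disjoint rather than overlapping with multiplicity. Without separability the multiset union $Z\cup Z^{-1}$ could overlap yet still formally divide, and the equivalence with (iii) would break. Everything else is routine bookkeeping with defining sets, cyclotomic cosets, and the reciprocal-polynomial correspondence, exactly paralleling Lemma~\ref{th:cyclic2} with the Hermitian form replaced by the Euclidean one.
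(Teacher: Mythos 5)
Your proof is correct and follows essentially the same route as the paper's: compute the generator polynomial of $C^\perp$ as the reciprocal of the check polynomial, translate $C^\perp\subseteq C$ into the three conditions on the defining set, and invoke the Euclidean construction of Corollary~\ref{th:css2}. If anything, you are more careful than the paper, whose proof only argues the implications in one direction and does not spell out the squarefreeness of $x^n-1$ that underlies the equivalence of (i) and (iii).
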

\begin{proof}
The check polynomial of $C$ is given by $h(x)=(x^n-1)/g(x)$, from
which we obtain the (un-normalized) generator polynomial of $C^\perp$
as
$h^\dagger(x)=x^kh(x^{-1})=(1-x^n)/(x^{n-k}g(x^{-1}))=-(x^n-1)/g^\dagger(x)$.
If $C^\perp \subseteq C$, then $g(x)\mid h^\dagger(x)$; this means
that $g(x)$ divides $(x^n-1)/g^\dagger(x)$. In other words $ x^n-1
\equiv 0\mod g(x)g^\dagger(x)$.

The defining set of $C^\perp$ is given by $\{-z\mod n \mid z\in
N\setminus Z \}$, where $N=\{0,1,\ldots,n-1 \}$. Thus $C^\perp
\subseteq C$ implies $Z\subseteq \{ -z\mod n\mid N\setminus Z\}$.
Since this means that the inverses of elements in $Z$ are present in 
$N\setminus Z$, this condition can also be written as $Z\cap
Z^{-1}=\emptyset$. The
existence of  quantum code $[[n,2k-n,\geq d]]_q$ follows from Corollary
\ref{th:css2}.
\end{proof}

Although we have considered purely cyclic codes, a larger class of
cyclic quantum codes can be derived by considering constacyclic or
conjucyclic codes as in \cite{calderbank98}, \cite{xiaoyan04}.

\subsection{Cyclic Hamming Codes}
Binary quantum Hamming codes have been studied by various authors; see
for instance \cite{gottesman96,calderbank98,feng02b}.
We now derive stabilizer codes from nonbinary classical cyclic Hamming
codes. Let $m>1$ be an integer such that
$\gcd(q-1,m)=1$. A classical cyclic Hamming code $H_q(m)$ has
parameters $[n,n-m,3]_{q}$ with length $n=(q^{m}-1)/(q-1)$.
Let $\beta$ denote a primitive
$n$th root of unity in $\F_{q^{m}}$. The generator polynomial of
$H_q(m)$ is given by
\begin{eqnarray}
 g(x)=\prod_{i=0}^{m-1} \big(x-\beta^{q^{i}}\big), \label{eq:hamminggen}
\end{eqnarray}
an element of
$\F_{q}[x]$. Thus, the code $H_q(m)$ is defined by the cyclotomic coset
$C_1=\{ q^{i}\bmod n \,|\, i \in \Z\}$.

\begin{lemma}\label{th:hammingdual}
The Hamming code $H_{q^2}(m)$ contains its Hermitian dual, that is, 
$H_{q^2}(m)^\hdual\le H_{q^2}(m)$.
\end{lemma}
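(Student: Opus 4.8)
The plan is to verify the defining-set criterion for Hermitian dual containment established in Lemma~\ref{th:cyclic2}. Since $H_{q^2}(m)$ is the cyclic code over $\F_{q^2}$ of length $n=(q^{2m}-1)/(q^2-1)$ defined by the $q^2$-cyclotomic coset of $1$, its defining set is $Z=\{q^{2i}\bmod n\mid 0\le i<m\}$. By part (iii) of Lemma~\ref{th:cyclic2}, it suffices to show that $Z\cap Z^{-q}=\emptyset$, where $Z^{-q}=\{-q^{2i+1}\bmod n\mid 0\le i<m\}$; establishing this disjointness is exactly what yields $H_{q^2}(m)^\hdual\le H_{q^2}(m)$.

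First I would reduce the disjointness to a single congruence. Suppose, for contradiction, that $q^{2i}\equiv -q^{2j+1}\pmod n$ for some $0\le i,j<m$. Because $n\mid q^{2m}-1$ gives $q^{2m}\equiv 1\pmod n$ and $\gcd(n,q)=1$, the element $q^2$ is invertible modulo $n$, so multiplying through by a suitable nonnegative power of $q^2$ collapses this relation to $q^{2\ell}\equiv -q\pmod n$ for some $\ell$ with $0\le\ell<m$. Thus the whole statement comes down to showing that $-q$ is never congruent to an even power of $q$ modulo $n$.

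The heart of the argument is then a short size estimate. From $q^{2\ell}\equiv -q\pmod n$ we get $n\mid q^{2\ell}+q$. For $\ell=0$ this reads $n\mid q+1$, which is impossible because $n\ge q^2+1>q+1$. For $\ell\ge 1$ I would factor $q^{2\ell}+q=q(q^{2\ell-1}+1)$ and use $\gcd(n,q)=1$ to obtain $n\mid q^{2\ell-1}+1$. Since $\ell\le m-1$ forces $2\ell-1\le 2m-3$, we have $0<q^{2\ell-1}+1\le q^{2m-3}+1<q^{2m-2}\le n$, so $n$ cannot divide the positive integer $q^{2\ell-1}+1$. Either way we reach a contradiction, proving $Z\cap Z^{-q}=\emptyset$ and hence $H_{q^2}(m)^\hdual\le H_{q^2}(m)$ by Lemma~\ref{th:cyclic2}.

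I expect the only delicate point to be the divisibility step for $\ell\ge 1$: one must resist the temptation to bound $q^{2\ell}+q$ directly, which already fails for $m=2$, where $q^2+q>q^2+1=n$, and instead extract the factor $q$ first, so that coprimality with $n$ lets the estimate $q^{2\ell-1}+1<n$ do the work. The remaining ingredients are routine, namely that $\ord_n(q^2)=m$ so that $Z$ has exactly $m$ elements, and that $\gcd(n,q)=1$ follows from $n\mid q^{2m}-1$.
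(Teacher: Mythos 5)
Your proof is correct and follows essentially the same route as the paper's: both reduce the claim to the defining-set criterion of Lemma~\ref{th:cyclic2} and show that $-q$ times an element of the cyclotomic coset $C_1$ can never again lie in $C_1$. The only real difference is in the final step — the paper derives its contradiction from a divisibility statement about $q^{2m-2}-1$, whereas you reduce to $q^{2\ell}\equiv -q\pmod{n}$ and kill it with the size estimate $0<q^{2\ell-1}+1<n$; both are sound, and your handling of the $\ell\ge 1$ case (extracting the factor $q$ before bounding) is done correctly.
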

\begin{proof}
The statement $H_{q^2}(m)^\hdual\le H_{q^2}(m)$ is equivalent to the fact that the
cyclotomic coset $C_1$ satisfies $C_1\subseteq N_1=\{ -qz\bmod n\,|\,
z\in N\setminus C_1\}$, where $N=\{0,\dots,n-1\}$ and $n=(q^{2m}-1)/(q^2-1)$.  We note that $C_1$
can be expressed in the form
\begin{equation}\label{eq:cosets}
\begin{split}
C_1&=\left\{ (1-n)q^{2k}\bmod n \,\Big|\, k\in \Z\right\}\\
&=\left\{ -qzq^{2k} \bmod n\,\Big|\, k\in \Z\right\},
\end{split}
\end{equation}
where $z=q(q^{2m-2}-1)/(q^2-1)$.  Therefore, the condition
$C_1\subseteq N_1$ holds if and only if $C_z\subseteq N\setminus C_1$ holds, where $C_z=\{zq^{2j}\bmod n\,|\, j\in
\Z\}$.

Seeking a contradiction, we assume that the two cyclotomic cosets
$C_1$ and $C_z$ have an element in common, hence are the same.  This
means that there must exist a positive integer $k$ such that
$q^{2k}=q(q^{2m-2}-1)/(q^2-1)$. This implies that
$q^{2k-1}$ divides $q^{2m-2}-1$, which is absurd.
Thus, the sets $C_1$ and $C_z$ are disjoint, hence $C_z\subseteq
N\setminus C_1$, which proves the claim.
\end{proof}

\begin{theorem}
For each integer $m\ge 2$ such that $\gcd(m,q^2-1)=1$, there exists a
pure $[[n,n-2m,3]]_q$ stabilizer code of length
$n=(q^{2m}-1)/(q^2-1)$.
\end{theorem}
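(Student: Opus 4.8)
The plan is to realize the desired code as a Hermitian construction applied to the classical cyclic Hamming code over $\F_{q^2}$, and then to pin down its exact minimum distance.

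First I would set $B = H_{q^2}(m)$, the cyclic Hamming code with parameters $[n,n-m,3]_{q^2}$ and length $n=(q^{2m}-1)/(q^2-1)$; the hypothesis $\gcd(m,q^2-1)=1$ is precisely what guarantees that this Hamming code is cyclic, so that Lemma~\ref{th:hammingdual} applies and yields $B^\hdual \le B$. Feeding $B$ into the Hermitian construction (Corollary~\ref{co:classical}) with $k=n-m$ and $d=3$ immediately produces a stabilizer code with parameters $[[n,2(n-m)-n,\ge 3]]_q = [[n,n-2m,\ge 3]]_q$ that is pure to $3$. This settles the length, the dimension, the purity, and the lower bound on the distance in one stroke.

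The remaining --- and only delicate --- point is to upgrade the bound $\ge 3$ to an exact distance $3$. Here I would unwind the construction: the underlying additive code is $D=B^\hdual$, whose trace-alternating dual is $D^\adual = D^\hdual = B$ by Lemma~\ref{th:classical}, so by Theorem~\ref{th:alternating} the quantum distance equals $\wt(D^\adual\setminus D)=\wt(B\setminus B^\hdual)$. The key observation is that $B^\hdual$ has very large minimum weight: it is the componentwise $q$-th-power image $(B^\perp)^{[q]}$ of the Euclidean dual $B^\perp$, and $B^\perp$ is the simplex code over $\F_{q^2}$, all of whose nonzero codewords have constant weight $q^{2m-2}$. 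Since the Frobenius map $x\mapsto x^q$ preserves Hamming weight, every nonzero word of $B^\hdual$ also has weight $q^{2m-2}\ge q^2 > 3$ for $m\ge 2$. Consequently a minimum-weight (weight-$3$) codeword of $B$ cannot lie in $B^\hdual$, so it belongs to $B\setminus B^\hdual$; this forces $\wt(B\setminus B^\hdual)\le 3$, and combined with the lower bound we conclude the distance is exactly $3$.

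I expect the main obstacle to be exactly this last step --- certifying the minimum weight of the Hermitian dual $B^\hdual$ --- since everything else is a direct substitution into the already-established Hermitian construction and Lemma~\ref{th:hammingdual}. The only facts I rely on beyond the cited results are the standard weight enumerator of the simplex code (all nonzero weights equal to $q^{2m-2}$) and the weight-invariance of the Frobenius automorphism.
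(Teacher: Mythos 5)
Your proposal is correct and follows essentially the same route as the paper: take the cyclic Hamming code $H_{q^2}(m)$, invoke Lemma~\ref{th:hammingdual} for Hermitian dual containment, and apply Corollary~\ref{co:classical}, with the simplex-code weight $q^{2m-2}\ge 3$ of the Hermitian dual supplying purity. Your extra step pinning the distance to exactly $3$ (a weight-$3$ word of $B$ cannot lie in $B^{\hdual}$) is a welcome explicit justification of what the paper leaves implicit, but it rests on the same key fact the paper cites.
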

\begin{proof}
If $\gcd(m,q^2-1)=1$, then there exists a classical $[n,n-m,3]_{q^2}$
Hamming code $H_{q^2}(m)$. By Lemma~\ref{th:hammingdual}, we have
$H_{q^2}(m)^\hdual\le H_{q^2}(m)$, hence there exists a pure $[[n,n-2m,3]]_{q}$
stabilizer code by Corollary~\ref{co:classical}. The purity is due to the fact 
that the $H_{q^2}(m)^\hdual$ has minimum distance $q^{2m-2} \geq 3$ for $m\geq 2$ \cite[Theorem~1.8.3]{huffman03}.
\end{proof}

These quantum Hamming codes are optimal since they attain the quantum
Hamming bound, see Corollary~\ref{th:hamming}. A different approach
that allows construction of noncyclic perfect quantum codes can be
found in \cite{bierbrauer00}.  It is also possible to construct
quantum codes from Hamming codes that contain their Euclidean duals,
however these codes do not meet the quantum Hamming bound.

\begin{lemma}
If $\gcd(m,q-1)=1$ and $m\geq 2$, then there exists a pure
$[[n,n-2m,3]]_q$ quantum code, where $n=(q^m-1)/(q-1)$.
\end{lemma}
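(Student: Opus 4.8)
The plan is to build the code exactly as in the Hermitian quantum Hamming construction of Lemma~\ref{th:hammingdual}, but with the Euclidean machinery in place of the Hermitian one. I would start from the classical cyclic Hamming code $H_q(m)$, which under the hypothesis $\gcd(m,q-1)=1$ has parameters $[n,n-m,3]_q$ with $n=(q^m-1)/(q-1)$ and is defined by the cyclotomic coset $Z=C_1=\{q^i\bmod n\mid i\in\Z\}$. The goal is then to verify that $H_q(m)$ contains its \emph{Euclidean} dual. By condition (iii) of Lemma~\ref{th:csscyclic}, this amounts to showing $C_1\cap C_1^{-1}=\emptyset$, where $C_1^{-1}=\{-z\bmod n\mid z\in C_1\}$. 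Once that is in hand, $H_q(m)^\perp\subseteq H_q(m)$, and Corollary~\ref{th:css2} applied to the $[n,n-m,3]_q$ code yields a pure $[[n,2(n-m)-n,\ge 3]]_q=[[n,n-2m,\ge 3]]_q$ stabilizer code.

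The heart of the matter—the step I expect to carry essentially all the difficulty—is the disjointness $C_1\cap C_1^{-1}=\emptyset$. Because the code has dimension $n-m$, the coset $C_1$ has exactly $m$ elements, so $C_1=\{1,q,\dots,q^{m-1}\}$ and $q$ has order $m$ modulo $n$. A common element of $C_1$ and $C_1^{-1}$ means $q^i\equiv-q^j\pmod n$ for some $i,j$, i.e. $q^k\equiv-1\pmod n$ for some $0\le k\le m-1$, equivalently $n\mid q^k+1$. I would rule this out by a size estimate: for $0\le k\le m-1$ we have $q^k+1\le q^{m-1}+1$, and since $n=1+q+\cdots+q^{m-1}$ it follows that $n-(q^{m-1}+1)=q+\cdots+q^{m-2}>0$ whenever $m\ge 3$. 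Hence $0<q^k+1<n$, so $n\nmid q^k+1$ and the two cosets are disjoint.

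Two loose ends then remain. First, the inequality above requires $m\ge 3$; this is automatic when $q$ is odd, since then $q-1$ is even and $\gcd(m,q-1)=1$ forces $m$ odd, hence $m\ge 3$. The only residual case is $q$ even with $m=2$, where $n=q+1$ and $q\equiv-1\pmod n$ makes the disjointness genuinely fail—this boundary case gives only a small code and is where the Euclidean construction breaks down, so it would have to be inspected separately. Second, to upgrade the distance from $\ge 3$ to exactly $3$: Corollary~\ref{th:css2} already guarantees purity to $3$, and the Euclidean dual $H_q(m)^\perp$ is the simplex code, whose nonzero words all have weight $q^{m-1}>3$ for $m\ge 3$; therefore the weight-$3$ codewords of $H_q(m)$ cannot lie in $H_q(m)^\perp$, and the resulting quantum code has minimum distance exactly $3$. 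This delivers the desired pure $[[n,n-2m,3]]_q$ code.
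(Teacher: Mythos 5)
Your construction is the same as the paper's --- take the classical cyclic Hamming code $H_q(m)$, show it contains its Euclidean dual, and apply Corollary~\ref{th:css2} --- but you verify the dual containment by a different (and more honest) route, and in doing so you catch something the paper's own proof misses. The paper uses condition (i) of Lemma~\ref{th:csscyclic}: since the generator polynomial $g(x)$ is irreducible, either it is self-reciprocal or $g(x)g^{\dagger}(x)$ divides $x^n-1$, and the proof simply \emph{asserts} that $g(x)$ is not self-reciprocal. You instead verify the equivalent condition (iii), $C_1\cap C_1^{-1}=\emptyset$, by the size estimate $0<q^k+1<n$ for $0\le k\le m-1$. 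The two are the same statement in disguise ($g$ self-reciprocal $\Leftrightarrow$ $-1\in C_1$ $\Leftrightarrow$ $n\mid q^k+1$ for some $k$), so your computation is precisely a proof of the paper's unproved assertion --- and it shows that the assertion is \emph{false} when $q$ is even and $m=2$. That case does satisfy the lemma's hypotheses (for $q$ even, $q-1$ is odd, so $\gcd(2,q-1)=1$), yet $n=q+1$ and $q\equiv -1\pmod{n}$, so $C_1=\{1,-1\}=C_1^{-1}$ and $H_q(2)$ does not contain its Euclidean dual; concretely, $q=2$, $m=2$ gives the $[3,1,3]_2$ repetition code, whose dual $[3,2,2]_2$ is certainly not contained in it (and the claimed parameters $[[3,-1,3]]_2$ are meaningless). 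So the lemma as stated needs the case ($q$ even, $m=2$) excluded --- equivalently the hypothesis $m\ge 3$, which as you observe is automatic for odd $q$. With that caveat your argument is complete, including the purity and exact-distance discussion via the weight $q^{m-1}$ of the simplex dual.
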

\begin{proof}
The generating polynomial of an $[n,n-m,3]_q$ Hamming code, with
n=$(q^m-1)/(q-1)$ is given by equation~(\ref{eq:hamminggen}) where
$\beta$ is an element of order $n$.  The code exists only if
gcd$(m,q-1)=1$.  By Lemma \ref{th:csscyclic} a cyclic code contains
its dual if $x^n-1 \equiv 0 \mod g(x)g^\dagger(x)$, where
$g^\dagger(x)=x^{n-k}g(x^{-1})$. If $g(x)$ is not self-reciprocal then
$g(x)g^\dagger(x)$ divides $x^n-1$ \cite{vatan99}. Since the
generating polynomial of the Hamming code is not self-reciprocal, the
code contains its Euclidean dual. By Lemma \ref{th:csscyclic} we can
construct a quantum code with the parameters $[[n,n-2m,3]]_q$.  Once
again the purity follows due to the fact the duals of Hamming codes
are simplex codes with weight $q^{m-1} \geq 3$
for $m\geq 2$ \cite[Theorem~1.8.3]{huffman03}.
\end{proof}

\subsection{Quantum Quadratic Residue Codes}

Another well known family of classical codes are the quadratic residue codes.
Rains constructed quadratic residue codes for prime alphabet in \cite{rains99}.
In this section we will construct two series of quantum codes based on the
classical quadratic residue codes over an arbitrary field using elementary methods. 

Let $\alpha$ denote a primitive $n$th root of unity from some
extension field of $\F_q$. We denote by $R=\{ r^2\bmod n\mid
r\in \Z \text{ such that } 1\le r\le (n-1)/2\}$ the set of
quadratic residues modulo $n$ and by $N=\{1,\ldots,n-1 \}\setminus R$ the set of quadratic non-residues modulo $n$.

Let $C_R$ and $C_N$ denote the cyclic
codes of length $n$ that are respectively generated by the polynomials
$q_R(x)$ and $q_N(x)$, where
$$ q_R(x) = \prod_{r\in R} (x-\alpha^r)\quad\mbox{and}\quad q_N(x)=
\prod_{r\in N} (x-\alpha^r).$$ Both codes have parameters
$[n,(n+1)/2,d]_q$ with $d^2\ge n$, see \cite[pp.~114-119]{betten98}
or \cite{huffman03}.
The codes with generator polynomials $(x-1)q_R(x)$ and $(x-1)q_N(x)$ are 
the even-like subcodes
of $C_R$ and $C_N$ respectively and have the parameters $[n,(n-1)/2,d']_q$ with 
$d'\geq d$. The relevance of these codes will become apparent in the following
theorems.

\begin{theorem}
Let $n$ be a prime of the form $n\equiv 3 \mod 4$, and let
$q$ be a power of a prime that is not divisible by~$n$.
If\/ $q$ is a quadratic residue modulo~$n$, then there exists
a pure $[[n,1,d]]_q$ stabilizer code with minimum distance $d$ satisfying
$d^2-d+1\ge n$.
\end{theorem}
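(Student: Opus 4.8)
The plan is to obtain the code as the Euclidean (CSS) construction of Corollary~\ref{th:css2} applied to the quadratic residue code $C_R$, whose dimension $(n+1)/2$ is exactly what is needed, since $2\cdot\frac{n+1}{2}-n=1$. First I would check that $C_R$ really is a code over $\F_q$ and record its dual-containment property. Because $q$ is a quadratic residue modulo the prime $n$, multiplication by $q$ preserves the residue set $R$ setwise, so $R$ is a union of $q$-cyclotomic cosets and $q_R(x)=\prod_{r\in R}(x-\alpha^r)\in\F_q[x]$; thus $C_R$ is a genuine $[n,(n+1)/2,d]_q$ cyclic code with defining set $Z=R$ and $d^2\ge n$. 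Since $n\equiv 3\pmod 4$, the element $-1$ is a quadratic non-residue, so $-R=N$ and hence $Z\cap Z^{-1}=R\cap N=\emptyset$. By condition (iii) of Lemma~\ref{th:csscyclic} this gives $C_R^\perp\subseteq C_R$, and Corollary~\ref{th:css2} then produces a pure $[[n,1,\ge d]]_q$ stabilizer code.

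To pin down the distance I would first identify $C_R^\perp$. A short defining-set computation (using $-N=R$ again) shows that $C_R^\perp$ has defining set $\{0\}\cup R$, i.e. $C_R^\perp$ is exactly the even-like subcode $\langle (x-1)q_R(x)\rangle$. Consequently $C_R\setminus C_R^\perp$, the set governing the CSS distance formula of Lemma~\ref{th:css}, consists precisely of the odd-like codewords, so the quantum minimum distance is bounded below by the odd-like minimum weight $w$ of $C_R$. This lets me invoke the improved square-root bound on $w$ directly, without having to argue separately about even-like codewords.

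The main work, and the step I expect to be the real obstacle, is sharpening $w^2\ge n$ to $w^2-w+1\ge n$; this is where $n\equiv 3\pmod 4$ enters a second time. I would take an odd-like codeword $c(x)\in C_R$ of minimum weight $w$ (so $c(1)\neq 0$) and consider $f(x):=c(x)\,c(x^{-1})\bmod(x^n-1)$. The multiplier $x\mapsto x^{-1}$ sends $C_R$ into $C_N$ because $-1\in N$, so $f$ vanishes at $\alpha^j$ for every $j$ with $1\le j\le n-1$ (at residues because $c$ does, at non-residues because $c(x^{-1})$ does); hence $f$ is a scalar multiple of $1+x+\cdots+x^{n-1}$. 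The scalar is nonzero because $f(1)=c(1)^2\neq 0$ and $\gcd(n,q)=1$, so $\wt(f)=n$. On the other hand $\supp(f)\subseteq\{\,i-j\bmod n : i,j\in\supp(c)\,\}$, a set of at most $w(w-1)+1$ elements, since the $w$ diagonal pairs $i=j$ all contribute only the exponent $0$. Hence $n\le w^2-w+1$. As the constructed code has minimum distance $D\ge w$ and $x^2-x+1$ is increasing for $x\ge 1$, we conclude $D^2-D+1\ge w^2-w+1\ge n$, which is the claimed bound.
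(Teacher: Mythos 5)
Your argument is correct as far as the minimum-distance bound goes, and it takes a genuinely more self-contained route than the paper. The paper's proof is essentially a sequence of citations: it quotes the bound $d^2-d+1\ge n$ from the literature on quadratic residue codes, quotes $\wt(C_R\setminus C_R^\perp)=\wt(C_R)$ from Huffman--Pless, and feeds everything into Corollary~\ref{th:css2}. You instead verify from scratch that $C_R$ is an $\F_q$-code (closure of $R$ under multiplication by $q$), derive dual containment from $-R=N$, identify $C_R^\perp$ as the even-like subcode so that the CSS distance is exactly the minimum odd-like weight, and then reprove the improved square-root bound via the standard $f(x)=c(x)c(x^{-1})\bmod(x^n-1)$ argument. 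That last computation is sound: $f$ vanishes at $\alpha^j$ for all $1\le j\le n-1$ precisely because $-1$ is a non-residue, $f(1)=c(1)^2\neq 0$ together with $\gcd(n,q)=1$ forces $f$ to be a nonzero multiple of $1+x+\cdots+x^{n-1}$, and the support count $w(w-1)+1$ gives $n\le w^2-w+1$. What your approach buys is independence from the external references; what it costs is length, and it only bounds the \emph{odd-like} minimum weight, which matters for the next point.

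The one genuine gap is the purity claim. Corollary~\ref{th:css2} yields a code that is pure to $\wt(C_R)$, not a pure code; the code is pure (i.e., pure to its own minimum distance $D=\wt(C_R\setminus C_R^\perp)$) if and only if $\wt(C_R^\perp)\ge D$, equivalently $\wt(C_R)=\wt(C_R\setminus C_R^\perp)$ -- that is, no nonzero even-like codeword of $C_R$ has weight below the minimum odd-like weight. Your square-root argument says nothing about even-like codewords, so it cannot rule out an even-like word of smaller weight sitting in the stabilizer, which would make the code impure (though your distance bound would survive). This is exactly the fact the paper imports from \cite[Theorem~6.6.22]{huffman03}, and it is not a triviality; you need either to cite it or to supply a separate argument for the even-like weights before asserting purity.
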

\begin{proof}
\nix{
Let $\alpha$ denote a primitive $n$th root of unity from some
extension field of $\F_q$. Let $R=\{ r^2\bmod n\mid
r\in \Z \text{ such that } 1\le r\le (n-1)/2\}$ denote
the set of quadratic residues modulo
$n$. We define the quadratic residue code
$C_R$ as the cyclic code of length $n$ over $\F_q$
that is generated by the polynomial
$$ q(x) = \prod_{r\in R} (x-\alpha^r).$$ 
}
The code $C_R$ has parameters
$[n,(n+1)/2,d]_q$ and if $n\equiv 3 \bmod 4$, the dual code $C^\perp_R$ of $C_R$ is
given by the cyclic code generated by $(x-1)q_R(x)$, the even-like
subcode of $C_R$. The minimum distance $d$ is bounded by
$d^2-d+1\ge n$, see, for instance, \cite[pp.~114-119]{betten98}. Further 
$\wt(C_R\setminus C_R^\perp)=\wt(C_R)=d$ by \cite[Theorem~6.6.22]{huffman03}. 
We can deduce from Corollary~\ref{th:css2} that
there exists a pure $[[n,(n+1)-n,d]]_q$ stabilizer code.
\end{proof}

For example, the prime $p=3$ is a quadratic residue modulo $n=23$. The
previous proposition guarantees the existence of a $[[23,1,d]]_3$
stabilizer code with minimum distance $d\ge 6$.

If $n$ is an odd prime of the form $n\equiv 1 \bmod 4$, then we can
also construct quadratic residue codes, but now we need to
employ Lemma~\ref{th:css}, because $C_R$ does not contain
its dual.

\begin{theorem}
Let $n$ be a prime of the form $n\equiv 1 \mod 4$. Let $q$ be a power
of a prime that is not divisible by $n$. If\/ $q$ is a quadratic
residue modulo~$n$, then there exists a pure $[[n,1,d]]_q$ stabilizer code
with minimum distance $d$ bounded from below by $d\ge \sqrt{n}$.
\end{theorem}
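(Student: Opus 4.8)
The plan is to reduce everything to the general CSS construction of Lemma~\ref{th:css} applied to the pair $(C_R,C_N)$, so that the whole argument comes down to verifying the inclusion hypothesis and reading off the parameters. First I would record the defining sets: $C_R$ has defining set $R$ and $C_N$ has defining set $N$, each of cardinality $(n-1)/2$, so both are $[n,(n+1)/2,d]_q$ codes with $d^2\ge n$. (This is the point where the hypothesis that $q$ is a quadratic residue modulo $n$ is used: it guarantees that $R$ and $N$ are unions of $q$-cyclotomic cosets, hence that $q_R,q_N\in\F_q[x]$ and the codes are genuinely defined over $\F_q$.)

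The crucial step is computing the Euclidean duals. Using the standard description of the dual of a cyclic code --- namely, a code $C$ with defining set $Z$ has $C^\perp$ with defining set $\{-j\bmod n\mid j\in\{0,\dots,n-1\}\setminus Z\}$ --- I find that $C_N^\perp$ has defining set $\{0\}\cup(-R)$. Here is exactly where $n\equiv 1\bmod 4$ enters: it forces $-1$ to be a quadratic residue, so $-R=R$ and $-N=N$, and therefore $C_N^\perp$ has defining set $\{0\}\cup R$. This is precisely the even-like subcode of $C_R$ (the root at $\alpha^0=1$ has been adjoined to $R$), so that $C_N^\perp\subseteq C_R$ and $|C_N^\perp|=q^{(n-1)/2}$. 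By the symmetric computation, $C_R^\perp$ is the even-like subcode of $C_N$.

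With the inclusion in hand I would invoke Lemma~\ref{th:css} with $C_1=C_R$ and $C_2=C_N$, for which $C_2^\perp=C_N^\perp\subseteq C_R=C_1$. The dimension of the resulting stabilizer code is $k_1+k_2-n=(n+1)/2+(n+1)/2-n=1$, and the code is pure to $\min\{d_1,d_2\}=d$. It remains to bound the minimum distance, which by Lemma~\ref{th:css} equals $\min\{\wt(c)\mid c\in(C_R\setminus C_N^\perp)\cup(C_N\setminus C_R^\perp)\}$. Since $C_N^\perp$ and $C_R^\perp$ are exactly the even-like subcodes of $C_R$ and $C_N$, these sets consist precisely of the \emph{odd-like} codewords of $C_R$ and of $C_N$; just as in the $n\equiv 3\bmod 4$ theorem, their minimum weight equals $\wt(C_R)=\wt(C_N)=d$ by \cite[Theorem~6.6.22]{huffman03}. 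The classical square-root bound for quadratic residue codes of prime length then gives $d^2\ge n$, i.e.\ $d\ge\sqrt{n}$ (see \cite[pp.~114--119]{betten98} or \cite{huffman03}), and the purity to $d$ matches the quantum minimum distance, yielding a pure $[[n,1,d]]_q$ code.

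The main (really the only nonroutine) obstacle is pinning down the dual structure and recognizing that it is governed by $n\bmod 4$. For $n\equiv 3\bmod 4$ one has $C_R^\perp\subseteq C_R$, which is what made the Euclidean construction of Corollary~\ref{th:css2} applicable in the previous theorem; for $n\equiv 1\bmod 4$ the symmetry $-1\in R$ instead swaps residues with themselves, so that $C_R^\perp$ becomes the even-like subcode of the \emph{other} code $C_N$ and $C_R$ no longer contains its own dual. Identifying this correctly is what forces the use of the two-code construction Lemma~\ref{th:css} with the pair $(C_R,C_N)$ rather than the single-code Euclidean construction; once it is settled, the square-root bound supplies the distance estimate and the remainder is bookkeeping with dimensions.
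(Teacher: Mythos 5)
Your proposal is correct and follows essentially the same route as the paper: both identify $C_R^\perp$ (resp.\ $C_N^\perp$) as the even-like subcode of $C_N$ (resp.\ $C_R$) using the fact that $-1$ is a quadratic residue when $n\equiv 1\bmod 4$, invoke Lemma~\ref{th:css} on the pair $(C_R,C_N)$, and appeal to \cite[Theorem~6.6.22]{huffman03} together with the square-root bound $d^2\ge n$ to get purity and $d\ge\sqrt{n}$. Your write-up simply makes explicit the defining-set computation that the paper leaves implicit.
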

\begin{proof}
\nix{
Let $\alpha$ denote a primitive $n$th root of unity from some
extension field of $\F_q$. We denote by $R$ denote the set of
quadratic residues modulo $n$ and by $N$ the set of quadratic
non-residues modulo $n$.

Let $C_R$ and $C_N$ denote the cyclic
codes of length $n$ that are respectively generated by the polynomials
$q_R(x)$ and $q_N(x)$, where
$$ q_R(x) = \prod_{r\in R} (x-\alpha^r)\quad\mbox{and}\quad q_N(x)=
\prod_{r\in N} (x-\alpha^r).$$ Both codes have parameters
$[n,(n+1)/2,d]_q$ with $d^2\ge n$, see \cite[pp.~114-119]{betten98}.
}
The dual code of $C_R$ is given by the even-like subcode of $C_N$; in
other words, $C_R^\perp$ is a cyclic code of length $n$ over $\F_q$
that is generated by the polynomial $(x-1)q_N(x)$; in particular,
$C_R^\perp \le C_N$. 
Moreover $\wt(C_R\setminus C_N^\perp)=\wt(C_N\setminus C_R^\perp)=\wt(C_R)=\wt(C_N)=d$ by \cite[Theorem~6.6.22]{huffman03}.
Therefore, we obtain a pure $[[n,(n+1)/2+(n+1)/2-n,d]]_q$ code by Lemma~\ref{th:css}.
\end{proof}

\section{Quantum BCH Codes}
In this section we consider a popular family of classical codes, the
BCH codes, and construct the associated nonbinary quantum stabilizer
codes.  Binary quantum BCH codes were studied in
\cite{calderbank98,cohen99,grassl99b, steane99}. The CSS construction
turns out to be especially useful, because BCH codes form a naturally
nested family of codes.  In case of primitive BCH codes over prime
fields, the distance of the dual is lower bounded by the generalized
Carlitz-Uchiyama bound, and this allows us to derive bounds on the
minimum distance of the resulting quantum codes.

\subsection{BCH Codes.}
Let $q$ be a power of a prime and $n$ a positive integer that is
coprime to~$q$. Recall that a BCH code $C$ of length $n$ and designed
distance $\delta$ over $\F_q$ is a cyclic code whose defining set $Z$
is given by a union of $\delta-1$ subsequent cyclotomic cosets,
$$Z=\bigcup_{x=b}^{b+\delta-2} C_x, \quad\text{where} \quad C_x =
\{xq^{r} \bmod n \mid r \in \Z, r\ge 0 \}.$$ 
The generator polynomial of the code is of the form 
$$ g(x) = \prod_{z \in Z} (x-\beta^{z}),$$ where $\beta$ is a primitive
$n$-th root of unity of some extension field of $\F_q$.  The
definition ensures that $g(x)$ generates a cyclic $[n,k,d]_{q}$ code
of dimension $k=n-|Z|$ and minimum distance~$d\ge \delta$.
If $b=1$, then the code $C$ is called a narrow-sense BCH code, and if
$n=q^m-1$ for some $m\ge 1$, then the code is called primitive. 
More precise statements can be made about the structure of primitive, 
narrow-sense codes than the other classes of BCH codes and we will restrict our
attention to these in this paper.
More details on BCH codes can be found in \cite{huffman03,macwilliams77}.

\subsection{Generalized Carlitz-Uchiyama Bound.} 
Our first construction derives stabilizer codes from BCH codes over
prime fields. We use the Knuth-Iverson bracket $[statement]$ in the
formulation of the Carlitz-Uchiyama bound that evaluates to 1 if
$statement$ is true and 0 otherwise.

\begin{lemma}[Generalized Carlitz-Uchiyama Bound]\label{th:carlitz}
Let $p$ be a prime. 
Let $C$ denote a narrow-sense BCH code of length $n=p^m-1$ over $\F_p$,
of designed distance $\delta=2t+1$. Then the minimum
distance $d^\perp$ of its Euclidean dual code $C^\perp$ is bounded by 
\begin{equation}\label{eq:carlitzdist} 
d^\perp \ge  \Big(1-\frac{1}{p}\Big)
\left( p^m-\frac{\delta-2-[\delta-1\equiv 0\bmod p]}{2}
\big\lfloor 2p^{m/2}\big\rfloor\right).
\end{equation}
\end{lemma}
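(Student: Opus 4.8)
The plan is to pass to a trace representation of the dual code, translate the minimum-weight question into a point count on an Artin--Schreier curve, and invoke a Weil-type estimate. First I would describe a generic codeword of $C^\perp$. Since $C$ is the narrow-sense BCH code over $\F_p$ with designed distance $\delta$, a vector lies in $C$ precisely when its $i$-th syndrome $\sum_{j=0}^{n-1} c_j \beta^{ij}$ vanishes for $1\le i\le \delta-1$, where $\beta$ is a primitive $n$-th root of unity in $\F_{p^m}$ and $n=p^m-1$. Taking traces of the rows of this parity-check matrix shows every codeword of $C^\perp$ has the form $c_j=\tr_{p^m/p}(F(\beta^j))$ for some $F(x)=\sum_{i=1}^{\delta-1} a_i x^i$ with $a_i\in\F_{p^m}$. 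Because $\beta^j$ runs over all of $\F_{p^m}^\times$ and $F(0)=0$, the number of zero coordinates is $N_0-1$, where $N_0=\#\{x\in\F_{p^m}: \tr_{p^m/p}(F(x))=0\}$; hence the Hamming weight is exactly $w=p^m-N_0$, and I must bound $N_0$ from above uniformly over all admissible nonzero $F$.

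Next I would reduce the effective degree of $F$. Since $\tr_{p^m/p}(a x^{pi})=\tr_{p^m/p}(a^{1/p}x^i)$, any monomial whose exponent is divisible by $p$ can be rewritten with a smaller exponent without changing the codeword. Iterating collapses $F$ to an equivalent polynomial whose degree $D$ is the largest integer in $\{1,\dots,\delta-1\}$ coprime to $p$; among two consecutive integers this gives $D=\delta-1-[\delta-1\equiv 0\bmod p]$, so $D-1=\delta-2-[\delta-1\equiv 0\bmod p]$, exactly the quantity in the bound. I would then count $N_0$ through the Artin--Schreier curve $\mathcal{C}\colon z^p-z=F(x)$: for each $x$ there are $p$ solutions $z$ when $\tr_{p^m/p}(F(x))=0$ and none otherwise, so the affine $\F_{p^m}$-point count is $p\,N_0$; since $\gcd(D,p)=1$ the place at infinity is totally and rationally ramified, giving $\#\mathcal{C}(\F_{p^m})=p\,N_0+1$, and the curve has genus $g=(p-1)(D-1)/2$.

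Finally I would apply Serre's refinement of the Hasse--Weil bound, $|\#\mathcal{C}(\F_{p^m})-(p^m+1)|\le g\,\floor{2p^{m/2}}$, to obtain $|p N_0-p^m|\le \tfrac{(p-1)(D-1)}{2}\floor{2p^{m/2}}$. Dividing by $p$ and substituting into $w=p^m-N_0=(1-\tfrac1p)p^m-(N_0-p^{m-1})$ yields $w\ge (1-\tfrac1p)\big(p^m-\tfrac{D-1}{2}\floor{2p^{m/2}}\big)$; since every nonzero codeword has effective degree at most $D$, and a smaller degree only shrinks the genus and sharpens the estimate, this lower bound holds for all of $C^\perp$, which is the claim.

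The main obstacle is securing the precise floor term $\floor{2p^{m/2}}$ rather than the cruder $2p^{m/2}$ that the plain Weil bound supplies, which would only give $w\ge(1-\tfrac1p)(p^m-(D-1)p^{m/2})$. The Frobenius eigenvalues occur in conjugate pairs whose real parts cannot all simultaneously attain $\sqrt{p^m}$ while $pN_0-p^m$ stays an integer, and Serre's explicit-formula argument is exactly what converts this integrality into a saving of $\floor{2p^{m/2}}$ per genus unit. I would therefore cite Serre's bound as the key external input; alternatively one can run the exponential-sum version directly, pairing the sums $S_u$ for $u$ and $-u$ in $\F_p^\times$ and using that $\tfrac1p\sum_{u\ne 0}S_u=N_0-p^{m-1}$ is an integer.
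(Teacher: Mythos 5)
Your argument is correct. The paper offers no proof of its own for this lemma --- it simply cites Stichtenoth and Vo{\ss} --- and your reconstruction via the trace representation of $C^\perp$, the reduction to a representative polynomial of degree $D=\delta-1-[\delta-1\equiv 0\bmod p]$ coprime to $p$, the Artin--Schreier curve $z^p-z=F(x)$ of genus $(p-1)(D-1)/2$, and the Hasse--Weil--Serre bound $\left|\#\mathcal{C}(\F_{p^m})-(p^m+1)\right|\le g\lfloor 2p^{m/2}\rfloor$ is precisely the standard route taken in that reference, so your approach coincides with the one the paper relies on.
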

\begin{proof}
See \cite[Theorem 7]{stichtenoth94}; for further background,
see~\cite[page 280]{macwilliams77}. 
\end{proof}

\begin{theorem} 
Let $p$ be a prime. Let $C$ be a $[p^m-1,k,\ge \delta]_p$ narrow-sense
BCH code of designed distance $\delta=2t+1$ and $C^*$ a
$[p^m-1,k^*,d^*]_p$ BCH code such that $C\subseteq C^*$. Then there
exists a $[[p^m-1,k^*-k,\ge \min\{d^*,d^\perp\}]]_p$ stabilizer code,
where $d^\perp$ is given by (\ref{eq:carlitzdist}).
\end{theorem}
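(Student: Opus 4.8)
The plan is to realize the quantum code via the CSS construction of Lemma~\ref{th:css}, fed with the pair $C_1=C^*$ and $C_2=C^\perp$. That lemma requires $C_2^\perp\le C_1$; here $C_2^\perp=(C^\perp)^\perp=C$, so the required inclusion becomes precisely the hypothesis $C\subseteq C^*$. Thus Lemma~\ref{th:css} applies directly, with $C_1=C^*$ a $[p^m-1,k^*,d^*]_p$ code and $C_2=C^\perp$ a $[p^m-1,\,p^m-1-k,\,d^\perp]_p$ code, where $d^\perp$ denotes the true minimum distance of the Euclidean dual $C^\perp$.

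First I would read off the dimension. Writing $n=p^m-1$, $k_1=k^*$, and $k_2=n-k$, the CSS construction produces a stabilizer code of dimension $k_1+k_2-n=k^*+(n-k)-n=k^*-k$, which matches the claimed $[[p^m-1,\,k^*-k,\,\cdot]]_p$.

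Next I would handle the minimum distance. Lemma~\ref{th:css} gives $d=\min\{\wt(c)\mid c\in (C_1\setminus C_2^\perp)\cup(C_2\setminus C_1^\perp)\}$. Since $C_2^\perp=C$ and $C_1^\perp=(C^*)^\perp$, this set is $(C^*\setminus C)\cup(C^\perp\setminus (C^*)^\perp)$. Every nonzero element of the first part lies in $C^*$, hence has weight at least $d^*$; every nonzero element of the second part lies in $C^\perp$, hence has weight at least $d^\perp$. Therefore $d\ge\min\{d^*,d^\perp\}$, as asserted (and the code is pure to $\min\{d^*,d^\perp\}$).

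Finally, to put the bound in the explicit form of the statement, I would invoke the Generalized Carlitz--Uchiyama bound (Lemma~\ref{th:carlitz}): because $C$ is a \emph{narrow-sense} BCH code of length $p^m-1$ over $\F_p$ with designed distance $\delta=2t+1$, its Euclidean dual $C^\perp$ satisfies the lower bound~(\ref{eq:carlitzdist}) on $d^\perp$. Substituting this into $\min\{d^*,d^\perp\}$ yields the claimed parameters. The argument is essentially bookkeeping rather than a substantive obstacle; the two points requiring care are (i) correctly identifying $C_2^\perp=C$ and $C_1^\perp=(C^*)^\perp$ so that the dual-containment hypothesis of the CSS construction lines up with the given nesting $C\subseteq C^*$, and (ii) remembering that the Carlitz--Uchiyama bound feeds on the \emph{designed} distance $\delta$ of $C$, not on the true minimum distance of $C$, which is why the narrow-sense hypothesis on $C$ is exactly what is needed.
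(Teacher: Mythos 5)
Your proposal is correct and follows exactly the paper's own route: the paper's proof is a one-line appeal to Lemma~\ref{th:carlitz} applied to $C$ together with Lemma~\ref{th:css} applied to the pair $C$, $C^*$, which is precisely the instantiation $C_1=C^*$, $C_2=C^\perp$ that you spell out. Your expanded bookkeeping of the dimension, the distance set $(C^*\setminus C)\cup(C^\perp\setminus (C^*)^\perp)$, and the role of the designed distance in the Carlitz--Uchiyama bound is all accurate.
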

\begin{proof}
The result follows from applying Lemma~\ref{th:carlitz} to $C$ and
Lemma~\ref{th:css} to the codes $C$ and $C^*$. 
\end{proof}
\begin{remark} \begin{inparaenum}[(i)]
\item The Carlitz-Uchiyama bound becomes trivial for larger design distances.
\item In \cite[Corollary 2]{moreno94} it was shown that for binary BCH
codes of design distance $d$, the lower bound in equation
(\ref{eq:carlitzdist}) is attained when $n=2^{2ab}-1$, where $a$ is
the smallest integer such that $d-2\mid 2^{a}+1$ and $b$ is odd.
\item For a further tightening of the Carlitz-Uchiyama bound see
\cite[Theorem 2]{moreno98}.  \end{inparaenum}
\end{remark}

\subsection{Primitive BCH Codes Containing Their Duals.}
We can extend the results of the previous section to BCH codes over finite
fields that are not necessarily prime. In fact, if we restrict
ourselves to smaller designed distances, then we can even achieve
significantly sharper results.\nocite{steane99} We will just review
the results and refer the reader to our companion
paper~\cite{aly06a} 
for the proofs. 
A generalization of the
following results is given in Chapter~\ref{ch:bch}, with a view to
demonstrate the fact that study of quantum codes can lead to interesting
insights into classical coding theory. 

In the BCH code construction, it is in general not obvious how large the
cyclotomic cosets will be. However, if the designed distance is small,
then one can show that the cyclotomic cosets all have maximal size. 

\begin{lemma}\label{th:bchdim} 
A narrow-sense, primitive BCH code with design distance 
$2\leq \delta \leq q^{\lceil m/2\rceil}+1$ has parameters
$[q^m-1,q^m-1-m\lceil (\delta-1)(1-1/q) \rceil,\geq \delta]_q$. 
\end{lemma}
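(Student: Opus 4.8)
The plan is to read off all three parameters directly from the defining set $Z=\bigcup_{x=1}^{\delta-1}C_x$ of the code, where $C_x=\{xq^{r}\bmod n\mid r\ge 0\}$ and $n=q^m-1$. The length is $n=q^m-1$ by primitivity, and the distance bound $d\ge\delta$ is immediate from the BCH bound, since $Z$ contains the $\delta-1$ consecutive residues $1,2,\dots,\delta-1$. All the work therefore goes into showing that $|Z|=m\,\ceil{(\delta-1)(1-1/q)}$, which then yields $k=n-|Z|$ as claimed.

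First I would show that every coset $C_x$ with $1\le x\le\delta-1$ has the maximal size $m$. If $|C_x|=s$ with $s\mid m$ and $s<m$, then $x(q^{s}-1)\equiv 0\pmod{q^m-1}$, and since $q^m-1=(q^{s}-1)(q^{m-s}+q^{m-2s}+\cdots+1)$ this forces $x$ to be a multiple of $(q^m-1)/(q^{s}-1)>q^{m-s}$. Taking $s$ to be the largest proper divisor of $m$ (so $s\le\floor{m/2}$, hence $m-s\ge\ceil{m/2}$), the smallest such $x$ exceeds $q^{\ceil{m/2}}\ge\delta-1$, contradicting $x\le\delta-1$. Thus each $C_x$ contributes exactly $m$ elements, and $|Z|=m\cdot(\text{number of distinct cosets among }C_1,\dots,C_{\delta-1})$.

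The heart of the argument is then to count the distinct cosets, and I claim this number equals $\#\{x:1\le x\le\delta-1,\ q\nmid x\}=(\delta-1)-\floor{(\delta-1)/q}=\ceil{(\delta-1)(1-1/q)}$, the last equality holding because $\ceil{N-N/q}=N-\floor{N/q}$ for integral $N$. Surjectivity of $x\mapsto C_x$ restricted to non-multiples of $q$ is easy: if $q\mid x$ then dividing by $q$ repeatedly reaches some $x'\le x$ with $q\nmid x'$ and $C_{x'}=C_x$ (because $x=x'q^{j}<n$ implies $x'q^{j}\bmod n=x$). Injectivity is the delicate point. Here I would pass to base-$q$ representations, using the standard fact that multiplication by $q$ modulo $q^m-1$ is a cyclic rotation of the $m$-digit base-$q$ string. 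For $x\le\delta-1\le q^{\ceil{m/2}}$ with $q\nmid x$, the digit in position $0$ is nonzero while positions $\ceil{m/2},\dots,m-1$ are all zero, so the nonzero positions form a set $S\subseteq\{0,\dots,\ceil{m/2}-1\}$ with $0\in S$. If a nontrivial rotation by $r$ also landed in this range with a nonzero units digit, then $0\in S+r\pmod m$ and $S+r\pmod m\subseteq\{0,\dots,\ceil{m/2}-1\}$; the first condition forces $r\ge\floor{m/2}+1$, and then every $s\in S$ must satisfy $s+r\ge m$, i.e.\ $s\ge m-r$, which applied to $0\in S$ gives $r\ge m$, a contradiction. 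Hence each coset meets the range in exactly one non-multiple of $q$, giving the count.

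The main obstacle is this last injectivity step: it is exactly where the hypothesis $\delta-1\le q^{\ceil{m/2}}$ is essential, since it guarantees a block of at least $\floor{m/2}$ leading zero digits, long enough that no rotation other than the identity can simultaneously keep the number below $q^{\ceil{m/2}}$ and keep its units digit nonzero. Both the full-size claim and the counting claim rest on the same elementary fact that the largest proper divisor of $m$ is at most $\floor{m/2}$; once these are in hand, assembling $|Z|=m\,\ceil{(\delta-1)(1-1/q)}$ and $k=q^m-1-|Z|$ is routine.
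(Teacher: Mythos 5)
Your proof is correct. Note that the dissertation does not actually prove this lemma in place---the stated proof is a citation to \cite[Theorem~7]{aly06a}---but it does prove the non-primitive generalization later as Theorem~\ref{th:bchnpdimension}, via Lemmas~\ref{th:bchnpcosetsize} and~\ref{th:npdisjointcosets}. Your argument has the same skeleton (every coset $C_x$ with $1\le x\le\delta-1$ has full size $m$; the cosets $C_x$ with $q\nmid x$ are pairwise distinct; count the latter as $(\delta-1)-\lfloor(\delta-1)/q\rfloor=\lceil(\delta-1)(1-1/q)\rceil$), and your coset-size step matches the paper's, sharpened by the exact divisibility $(q^m-1)/(q^s-1)\mid x$ available when $n=q^m-1$. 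The genuinely different step is disjointness: you use the fact that multiplication by $q$ modulo $q^m-1$ cyclically rotates the $m$-digit base-$q$ expansion, and argue that a number below $q^{\lceil m/2\rceil}$ with nonzero units digit has a block of at least $\lfloor m/2\rfloor$ leading zero digits that no nontrivial rotation can preserve while keeping the units digit nonzero; this is close in spirit to Steane's original binary argument and is intrinsically tied to the primitive case. The paper instead argues by pigeonhole: if $C_x=C_y$ then the $2(\lfloor m/2\rfloor+1)\ge m+1$ elements $xq^j,yq^j$ with $0\le j\le\lfloor m/2\rfloor$ would be pairwise distinct yet lie in a single coset of size at most $m$---an argument that survives the passage to arbitrary $n$ with $\gcd(n,q)=1$. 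Your route buys a concrete picture of exactly why the hypothesis $\delta-1\le q^{\lceil m/2\rceil}$ suffices; the paper's buys generality. (A minor wording point: you speak of ``taking $s$ to be the largest proper divisor of $m$,'' but $s=|C_x|$ is determined by $x$; since every proper divisor satisfies $s\le\lfloor m/2\rfloor$, the bound applies to the actual $s$ and nothing is lost.)
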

\begin{proof}
See \cite[Theorem~7]{aly06a}; 
the binary case was already
established by Steane~\cite{steane99}.
\end{proof}

In the case of small designed distances, primitive, narrow-sense 
BCH codes contain their Euclidean duals. 
\begin{lemma}\label{th:bchEuclideandual}

A narrow-sense, primitive BCH code over $\F_q^n$ contains its
Euclidean dual if and only if its design
distance satisfies $2 \leq \delta \leq q^{\lceil m/2\rceil}-1-(q-2)[m \textrm{
odd}]$, where $n=q^m-1$ and $m\geq 2$.
\end{lemma}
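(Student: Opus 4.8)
The plan is to translate the dual-containment condition into a purely number-theoretic statement about cyclotomic cosets and then resolve it with a base-$q$ digit calculus. By Lemma~\ref{th:csscyclic}(iii), the narrow-sense primitive code $C$ with defining set $Z=\bigcup_{x=1}^{\delta-1}C_x$ satisfies $C^\perp\subseteq C$ if and only if $Z\cap Z^{-1}=\emptyset$, where $Z^{-1}=\{-z\bmod n : z\in Z\}$ and $n=q^m-1$. Since $Z$ and $Z^{-1}$ are unions of full $q$-ary cyclotomic cosets and distinct cosets are disjoint, $Z\cap Z^{-1}\neq\emptyset$ exactly when $C_{-j}=C_i$ for some $i,j\in\{1,\dots,\delta-1\}$; equivalently, there exist $i,j\in\{1,\dots,\delta-1\}$ and $r$ (which may be taken in $0\le r<m$) with $i+jq^r\equiv 0\pmod{q^m-1}$. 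So the lemma reduces to computing the smallest value $s$ for which such a collision has $\max(i,j)=s$: the code is dual-containing precisely when $\delta-1<s$, and I must show $s=q^{\lceil m/2\rceil}-1-(q-2)[m\textrm{ odd}]$.

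First I would set up the digit calculus. Write each residue in its $m$-digit base-$q$ expansion. Because $q^m\equiv 1\pmod n$, multiplication by $q^r$ acts as a cyclic shift of the digit string by $r$ places, and for any nonzero $t$ the negation $-t\bmod n$ replaces each digit $d$ by $q-1-d$. Hence if $j=\sum_u b_u q^u$ with $1\le j\le n-1$, the collision partner $i=(-jq^r)\bmod n$ has digit $q-1-b_u$ in position $(u+r)\bmod m$; the shift is a bijection on positions, so no carries occur and this formula is exact. The requirement that $i$ and $j$ both be small then becomes the vanishing of certain high-order digits, which I can read off directly.

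The tightness of the bound I would prove by exhibiting the extremal collision. For even $m=2\ell$ take $r=\ell$ and $i=j=q^\ell-1$, so that $jq^\ell+i=(q^\ell-1)(q^\ell+1)=q^m-1$; for odd $m=2\ell+1$ take $r=\ell$, $j=q^{\ell+1}-q+1$, and $i=(q-1)q^\ell-1$, for which $jq^\ell+i=q^m-1$ by a one-line computation. In each case $\max(i,j)$ equals the claimed threshold, and since enlarging $\delta$ only adds cosets to $Z$, the collision persists for every $\delta>\delta_{\max}$; this gives the ``only if'' direction.

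The harder direction --- that no collision has $\max(i,j)<\delta_{\max}$ --- is where the real work lies, and I expect the odd case to be the main obstacle. Assuming $i,j\le\delta_{\max}-1$, smallness forces the top digits of both $i$ and $j$ to vanish. Feeding the vanishing top digits of $i$ into the formula $c_{(u+r)\bmod m}=q-1-b_u$ forces a block of consecutive (cyclic) positions of $j$ to carry the digit $q-1$, while the smallness of $j$ forbids that digit in its own top positions. For even $m$ these two requirements pin down $r=\ell$ and $b_0=\dots=b_{\ell-1}=q-1$, i.e.\ $j=q^\ell-1$, contradicting $j\le q^\ell-2$. For odd $m$ the forced block can start in one of two places, giving $r\in\{\ell,\ell+1\}$; in each subcase I would compute $i$ explicitly from the digit formula and find $i\ge q^{\ell+1}-q+1$, which exceeds the assumed bound $\delta_{\max}-1=q^{\ell+1}-q$, a contradiction. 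The delicate point is that for odd $m$ the threshold $q^{\lceil m/2\rceil}-1-(q-2)$ is not of the clean shape $q^{\,\cdot}-1$, so the smallness constraints must be handled digit-by-digit rather than by a single ``fits in $\ell$ digits'' argument, and both admissible starting positions of the forced block must be ruled out separately.
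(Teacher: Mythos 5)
Your proposal is correct in substance, and it is worth comparing with the argument the dissertation actually uses (the lemma itself is proved by citation to a companion paper, but the full argument for the generalized, not-necessarily-primitive statement appears in Chapter~\ref{ch:bch} as Theorem~\ref{th:sufficientEdual}, Theorem~\ref{th:nonprimitiveduals2} and the corollary following it). Both you and the paper start from the same reduction $C^\perp\subseteq C\iff Z\cap Z^{-1}=\emptyset$, and your extremal collisions for the ``only if'' direction are numerically the same witnesses the paper exhibits (for odd $m$ your pair $(i,j)=(q^{\ell+1}-q^{\ell}-1,\,q^{\ell+1}-q+1)$ is exactly the paper's $s$ and $\delta_{\max}$). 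Where you genuinely diverge is the ``if'' direction: the paper normalizes the exponent to $0\le j\le\lfloor m/2\rfloor$, notes that $xq^j<n$ so that $y=n-xq^j$ holds as an equation over the integers, and then derives $y\ge\delta_{\max}$ by a single magnitude estimate; you instead exploit that for $n=q^m-1$ negation is digit-wise complementation and multiplication by $q^r$ is a cyclic shift, and rule out collisions by forcing a cyclic block of $(q-1)$-digits into positions where smallness forbids them. The paper's inequality argument has the advantage of working verbatim for non-primitive lengths $n\mid q^m-1$ (which is how it is used there), while your digit argument is confined to the primitive case but pins down exactly where a collision must live and makes the extremal examples transparent. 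One small imprecision to repair: in the odd case with $B=\{0,\dots,\ell-1\}$ (i.e.\ $r=\ell+1$) the digit formula gives only $i\ge q^{\ell+1}-q$, with equality forcing $b_\ell=q-1$, and in that sub-subcase the contradiction comes from $j\ge q^{\ell+1}-1$ rather than from $i$; so your blanket claim that each subcase yields $i\ge q^{\ell+1}-q+1$ needs this extra two-line branch, after which the proof closes.
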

\begin{proof}
See \cite[Theorem~2]{aly06a}. 
\end{proof}

A simple consequence is the following theorem: 

\begin{theorem}\label{co:bchEuclideandual}
If $C$ is a narrow-sense primitive BCH code over $\F_q$ with design
distance $2\leq \delta \leq q^{\lceil m/2 \rceil}-1-(q-2)[m \textrm{
odd}]$ and $m\geq 2$, then there exists an
$[[q^m-1,q^m-1-2m\lceil(\delta-1)(1-1/q)\rceil,\geq \delta]]_q$
stabilizer code that is pure to~$\delta$.
\end{theorem}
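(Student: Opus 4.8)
The plan is to combine the dimension formula of Lemma~\ref{th:bchdim}, the dual-containment criterion of Lemma~\ref{th:bchEuclideandual}, and the Euclidean construction of Corollary~\ref{th:css2}. All three apply directly under the stated hypotheses, so the argument is essentially a matter of bookkeeping, which is why the result is stated as a simple consequence.

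First I would observe that the upper bound on the design distance in the theorem, namely $\delta \le q^{\lceil m/2\rceil}-1-(q-2)[m\textrm{ odd}]$, is no larger than the bound $q^{\lceil m/2\rceil}+1$ required by Lemma~\ref{th:bchdim}. Hence that lemma applies, and the BCH code $C$ has parameters $[q^m-1,\,k,\,\ge\delta]_q$ with $k=q^m-1-m\lceil(\delta-1)(1-1/q)\rceil$.

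Next, since the design distance satisfies exactly the inequality appearing in Lemma~\ref{th:bchEuclideandual}, that lemma guarantees $C^\perp\subseteq C$. With a dual-containing $[q^m-1,k,\ge\delta]_q$ code in hand, Corollary~\ref{th:css2} produces a stabilizer code with parameters $[[q^m-1,\,2k-(q^m-1),\,\ge\delta]]_q$ that is pure to the minimum distance $d\ge\delta$ of $C$. The dimension computation $2k-(q^m-1)=q^m-1-2m\lceil(\delta-1)(1-1/q)\rceil$ is immediate from the value of $k$, matching the claimed parameters. Finally, since the code is pure to $d\ge\delta$, its stabilizer has no non-scalar elements of weight below $d$, hence none below $\delta$ either; thus it is a fortiori pure to $\delta$, which completes the argument.

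The only step requiring any care is confirming the compatibility of the two design-distance ranges drawn from the companion results and carrying out the arithmetic for $2k-(q^m-1)$; I expect no genuine obstacle, since the substantive content has already been isolated in Lemmas~\ref{th:bchdim} and~\ref{th:bchEuclideandual}.
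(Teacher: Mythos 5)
Your proposal is correct and follows essentially the same route as the paper: the paper's proof likewise combines Lemma~\ref{th:bchdim} and Lemma~\ref{th:bchEuclideandual} and applies the CSS (Euclidean) construction, deferring only the purity details to the companion paper. Your additional bookkeeping — verifying the compatibility of the two design-distance ranges, computing $2k-(q^m-1)$, and noting that purity to $d\ge\delta$ implies purity to $\delta$ — fills in exactly what the paper leaves implicit.
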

\begin{proof}
If we combine Lemmas~\ref{th:bchdim} and \ref{th:bchEuclideandual} and
apply the CSS construction, then we obtain the claim. See 
\cite{aly06a}
for details about purity.
\end{proof}

One can argue in a similar way for Hermitian duals of primitive,
narrow-sense BCH codes.
\begin{theorem}\label{th:bchHermitiandual} 
If $C$ is a narrow-sense primitive BCH code over $\F_{q^2}^n$ with
design distance $2\leq \delta \leq q^{m}-1$, then there exists an
$[[q^{2m}-1,q^{2m}-1-2m\lceil(\delta-1)(1-1/q^2)\rceil,\geq
\delta]]_q$ stabilizer code that is pure to $\delta$.
\end{theorem}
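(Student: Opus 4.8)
The plan is to follow exactly the pattern of the Euclidean result, Theorem~\ref{co:bchEuclideandual}, replacing the Euclidean machinery by its Hermitian counterpart. Three ingredients are needed: a Hermitian analogue of Lemma~\ref{th:bchEuclideandual} showing that $C$ contains its Hermitian dual throughout the stated range of $\delta$; the dimension count of Lemma~\ref{th:bchdim}; and the Hermitian construction of Corollary~\ref{co:classical}. Here $C$ is a narrow-sense primitive BCH code over $\F_{q^2}$, so its length is $n=q^{2m}-1$ and its defining set is $Z=\bigcup_{x=1}^{\delta-1}C_x$, a union of $q^2$-cyclotomic cosets $C_x=\{x\,q^{2r}\bmod n\}$.

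First I would establish $C^\hdual\subseteq C$. By Lemma~\ref{th:cyclic2}(iii) this is equivalent to $Z\cap Z^{-q}=\emptyset$, where $Z^{-q}=\{-qz\bmod n\mid z\in Z\}$. A nonempty intersection would furnish $z_1,z_2\in Z$ with $z_1\equiv -qz_2\pmod n$; writing $z_i=x_i q^{2a_i}$ with $1\le x_i\le\delta-1$ and multiplying by a suitable power of $q^2$, this collapses to a congruence $x_1\equiv -x_2\,q^{s}\pmod{q^{2m}-1}$ for some (odd) exponent $s$. The hard part will be ruling this out, and I expect the clean route is a base-$q$ digit-counting argument modulo $q^{2m}-1$: multiplication by a power of $q$ is a cyclic shift of the length-$2m$ base-$q$ representation and so preserves the multiset of digits. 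Since $x_1\le\delta-1\le q^m-2<q^m$, its representation has its top $m$ digits equal to zero, hence at least $m$ zero digits; on the other hand $-x_2\equiv(q^{2m}-1)-x_2$ has digits $(q-1-x_{2,i})$ in the low $m$ positions and $(q-1)$ in the top $m$ positions, and because $x_2\le q^m-2\neq q^m-1$ not all low digits can equal $q-1$, so $-x_2$ carries at most $m-1$ zero digits. A cyclic shift cannot reconcile $\ge m$ zeros with $\le m-1$ zeros, a contradiction; this is precisely where the bound $\delta\le q^m-1$ enters, and it parallels the Euclidean threshold $q^{\ceil{m/2}}$, where the ambient length $q^m-1$ has only $m$ digits and the relevant residues must fit in half of them.

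With Hermitian dual containment in hand, the remaining steps are routine. Applying Lemma~\ref{th:bchdim} over the alphabet $q^2$ is legitimate because $\delta\le q^m-1<q^{2\ceil{m/2}}+1$ (using $m\le 2\ceil{m/2}$), and it gives $C$ the parameters $[\,q^{2m}-1,\ q^{2m}-1-m\ceil{(\delta-1)(1-1/q^2)},\ \ge\delta\,]_{q^2}$. Feeding this $\F_{q^2}$-linear dual-containing code into Corollary~\ref{co:classical} with $k=q^{2m}-1-m\ceil{(\delta-1)(1-1/q^2)}$ yields an $[[n,2k-n,\ge\delta]]_q$ code, and $2k-n=q^{2m}-1-2m\ceil{(\delta-1)(1-1/q^2)}$ as claimed. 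Purity to $\delta$ is free: the stabilizer corresponds to $C^\hdual$, which, being a subcode of $C$, has minimum weight at least $\wt(C)\ge\delta$. The only genuine obstacle is the coset-disjointness step of the previous paragraph; everything else is bookkeeping, and the detailed BCH dimension and dual-distance facts are available in the companion paper~\cite{aly06a}.
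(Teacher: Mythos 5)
Your proposal is correct, and its skeleton (Hermitian dual containment $+$ the dimension formula of Lemma~\ref{th:bchdim} $+$ the Hermitian construction of Corollary~\ref{co:classical}, with purity coming from $C^\hdual\subseteq C$) is exactly the route the dissertation takes; the text here simply defers the details to the companion paper, and the same argument appears in Chapter~\ref{ch:bch} as Theorems~\ref{th:sufficientHdual}, \ref{th:bchnpdimension} and \ref{sh:hermite}. Where you genuinely diverge is in the one nontrivial step, the proof that $Z\cap Z^{-q}=\emptyset$. The paper's version (Theorem~\ref{th:sufficientHdual}) is a magnitude argument: from $y\equiv -xq^{2j+1}\bmod n$ with $0\le j\le\lfloor (m-1)/2\rfloor$ it shows $xq^{2j+1}<n$, hence $y=n-xq^{2j+1}\ge\delta_{\max}$, contradicting $y<\delta_{\max}$. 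Your base-$q$ digit-counting argument (at least $m$ zero digits for $x_1$ versus at most $m-1$ for $-x_2$, both invariant under the cyclic shift induced by multiplication by $q^s$) is a valid and rather cleaner alternative for the primitive case, and it correctly pinpoints $\delta\le q^m-1$ as the operative hypothesis; note it never uses that $s$ is odd, which is consistent since $q^m-1$ also lies within the Euclidean range for length $q^{2m}-1$ over $\F_{q^2}$. The trade-offs: the paper's inequality argument extends verbatim to nonprimitive lengths with $\ord_n(q^2)=m$ and, for $m$ even, yields the larger admissible range $\delta\le q^{m+1}-q^2+1$, whereas the digit argument is tied to $n=q^{2m}-1$ and to the uniform bound $q^m-1$; on the other hand your argument avoids the case analysis on the parity of $m$ entirely. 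All the bookkeeping (the applicability of Lemma~\ref{th:bchdim} since $q^m-1<q^{2\ceil{m/2}}+1$, the computation $2k-n=q^{2m}-1-2m\ceil{(\delta-1)(1-1/q^2)}$, and purity to $\delta$ from $\wt(C^\hdual)\ge\wt(C)\ge\delta$) is correct.
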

\begin{proof}
See~\cite{aly06a} for details. 
\end{proof}

When $m=1$, the BCH codes are the same as the Reed Solomon codes and
this case has been dealt with in \cite{grassl04}. An alternate perspective
using Reed-Muller codes is considered in \cite{klappenecker05p1}.

\subsection{Extending Quantum BCH Codes}
It is not always possible to extend a stabilizer code, because the
corresponding classical codes are required to be self-orthogonal.
We now show that it is possible to extend 
narrow-sense BCH codes of certain lengths.

\begin{lemma}\label{th:bch_extension}
Let $\F_{q^2}$ be a finite field of characteristic $p$.  If $C$ is a
narrow-sense $[n,k,\geq d]_{q^2}$ BCH code such that $C^\hdual
\subseteq C$ and $n\equiv -1 \mod p$, then there exists an
$[[n,2k-n,\geq d]]_q$ stabilizer code that is pure to~$d$ which can be
extended to an $[[n+1,2k-n-1,\geq d+1]]_q$ stabilizer code that is
pure to~$d+1$.
\end{lemma}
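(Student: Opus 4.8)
The plan is to settle the unextended claim at once and then produce the longer code by the classical parity-check extension, keeping careful track of where the two hypotheses (narrow-senseness and $n\equiv-1\bmod p$) are used. The first assertion is immediate: since $C^\hdual\subseteq C$, Corollary~\ref{co:classical} applied to $B=C$ yields an $[[n,2k-n,\geq d]]_q$ stabilizer code pure to~$d$. For the extension I would set $\hat C=\{(c,-\textstyle\sum_{i=1}^n c_i)\mid c\in C\}\le\F_{q^2}^{n+1}$, which is $\F_{q^2}$-linear of dimension~$k$. Everything then reduces to two facts: $\hat C^\hdual\subseteq\hat C$ and $\wt(\hat C)\ge d+1$. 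Granting these, Corollary~\ref{co:classical} applied to $B=\hat C$ produces an $[[n+1,2k-(n+1),\geq d+1]]_q=[[n+1,2k-n-1,\geq d+1]]_q$ code pure to~$d+1$.

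First I would compute $\hat C^\hdual$. Writing out the Hermitian form and using that $x\mapsto x^q$ is additive, one gets $(y,t)\in\hat C^\hdual$ iff $(y-t\mathbf 1_n)^q\cdot c=0$ for all $c\in C$, i.e. iff $y-t\mathbf 1_n\in C^\hdual$, where $\mathbf 1_n$ denotes the all-ones word of length~$n$. Consequently $\hat C^\hdual$ is spanned by the words $(w,0)$ with $w\in C^\hdual$ together with the all-ones word $\mathbf 1_{n+1}$ of length~$n+1$; a dimension count $(n-k)+1=(n+1)-k$ confirms these exhaust $\hat C^\hdual$.

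Next I would check the inclusion $\hat C^\hdual\subseteq\hat C$ on these generators, which is exactly where the hypotheses enter. Because $C$ is narrow-sense, $0\notin Z$, so the description of the dual defining set in Lemma~\ref{th:cyclic2} places $0$ in the defining set of $C^\hdual$; hence every $w\in C^\hdual$ satisfies $\sum_i w_i=0$, and as $C^\hdual\subseteq C$ this gives $(w,0)=(w,-\sum_i w_i)\in\hat C$. For the all-ones generator, narrow-senseness also yields $\mathbf 1_n\in C$, and $(\mathbf 1_n,1)\in\hat C$ holds precisely when $-n=1$ in $\F_{q^2}$, that is, when $n\equiv-1\bmod p$--- the remaining hypothesis. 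Both generators therefore lie in $\hat C$, so $\hat C^\hdual\subseteq\hat C$.

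The main obstacle is the distance bound $\wt(\hat C)\ge d+1$. I would separate the nonzero words of $\hat C$ into odd-like ones, arising from $c\in C$ with $\sum_i c_i\neq0$, and even-like ones, arising from the subcode $C_0=\{c\in C\mid\sum_i c_i=0\}$. An odd-like word has extended weight $\wt(c)+1\ge d+1$. For the even-like words I would use that $C_0$ is cyclic with defining set $Z\cup\{0\}\supseteq\{0,1,\dots,d-1\}$, a run of $d$ consecutive integers, so the BCH bound forces $\wt(C_0)\ge d+1$. Combining the two cases gives $\wt(\hat C)\ge d+1$ and closes the argument. Here I am reading $d$ as the designed distance, so that $Z\supseteq\{1,\dots,d-1\}$; this is the one point where the BCH structure, rather than mere self-orthogonality, is essential.
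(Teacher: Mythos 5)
Your proof is correct, and the underlying construction coincides with the paper's: both extend $C$ by the overall parity-check coordinate $c_{n+1}=-\sum_i c_i$ (your $\hat C$ is exactly the null space of the extended parity-check matrix $H_e$ in the paper) and then invoke Corollary~\ref{co:classical}. The two verifications, however, are carried out by genuinely different means. For Hermitian self-orthogonality the paper computes inner products of the rows of $H_e$ directly, the only new work being the geometric sums $\sum_{j}\alpha^{ij}=0$ and $\langle\mathbf{1}|\mathbf{1}\rangle_h=n+1\equiv 0\bmod p$; you instead identify $\hat C^\hdual$ as spanned by $\{(w,0):w\in C^\hdual\}$ together with $\mathbf{1}_{n+1}$ and check membership in $\hat C$ through defining sets, which has the virtue of isolating exactly where narrow-senseness (giving $\mathbf{1}_n\in C$ and $\sum_i w_i=0$ for $w\in C^\hdual$) and the hypothesis $n\equiv-1\bmod p$ are used. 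For the distance, the paper shows any $d$ columns of $H_e$ are independent via Vandermonde determinants, whereas you split $\hat C$ into odd-like words (weight $\wt(c)+1\ge d+1$) and even-like words and apply the BCH bound to the even-like subcode with defining set $Z\cup\{0\}$. Both routes tacitly read $d$ as the designed distance---the paper's $H$ assumes the defining set contains $\{1,\dots,d-1\}$---and you are right to flag this explicitly. The trade-off is that your argument leans on the cyclic-code machinery (Lemma~\ref{th:cyclic2} and the BCH bound) where the paper uses only elementary linear algebra, but in exchange it makes the role of each hypothesis transparent.
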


\begin{proof}
Since $C^\hdual \subseteq C$, Corollary~\ref{co:classical} implies the
existence of an $[[n,2k-n,\geq d]]_q$ quantum code that is pure to~$d$ and
being narrow-sense the parity check matrix of $C$ has the form
\begin{eqnarray*}
H&=&\left[ \begin{array}{ccccc}
1 &\alpha &\alpha^{2}&\cdots&\alpha^{(n-1)}\\
1 &\alpha^{2} &\alpha^{2(2)}&\cdots&\alpha^{2(n-1)}\\
\vdots &\ddots &\ddots &\ddots &\ddots\\
1&\alpha^{d-1}&\alpha^{2(d-1)}&\cdots&\alpha^{(n-1)(d-1)}
\end{array}\right],
\end{eqnarray*}
where $\alpha$ is a primitive $n^{th}$ root of unity. This can be
extended to give an $[n+1,k,d+1]$ code $C_e$, whose parity check
matrix is given as
\begin{eqnarray*}
H_{e} &=&\left[ \begin{array}{cccccc}
1 &1 &1&\cdots&1& 1\\
1 &\alpha &\alpha^{2}&\cdots&\alpha^{(n-1)}& 0\\
1 &\alpha^{2} &\alpha^{2(2)}&\cdots&\alpha^{2(n-1)}&0\\
\vdots &\ddots &\ddots &\ddots &\ddots&\vdots\\
1&\alpha^{d-1}&\alpha^{2(d-1)}&\cdots&\alpha^{(n-1)(d-1)}&0
\end{array}\right].
\end{eqnarray*}
We show that $C_e^\hdual$ is self-orthogonal. Let $R_i$ be the
$i^{th}$ row in $H_{e}$. For $2 \leq i\leq d$ the self-orthogonality
of $H$ implies that $\langle R_i|R_j\rangle_h=0$. We need to show
that $\langle R_i|\mbf{1}\rangle_h=0$, $1\leq i\leq d $. For $2\leq
i\leq d$ we have $\langle R_i|\mbf{1}\rangle_h
=\sum_{j=0}^{n-1}\alpha^{ij}=(\alpha^{in}-1)/(\alpha^{i}-1)=0$, as
$\alpha^n=1$ and $\alpha^i\neq 1$. For $i=1$ we have $\langle
\mbf{1}|\mbf{1}\rangle_h=n+1 \mod p$, which vanishes because of the
assumption $n\equiv -1 \mod p$.

Now we show that the rank of $H_{e}$ is  $d$, thus $C_e$ has a
minimum distance of at least $d+1$. Any $d$ columns of $H_{e}$
excluding the last column form a $d\times d$ vandermonde matrix
which is nonsingular, indicating that the $d$ columns are linearly
independent. If we consider any set of $d$ columns that includes the
last column, we can find the determinant of the corresponding matrix
by expanding by the last column. This gives us a $d-1 \times d-1$
vandermonde matrix with nonzero determinant. Thus any $d$ columns of
$H_e$ are independent and the minimum distance of $C_e$ is at least
$d+1$. Therefore $C_e$ is an $[n+1,k,\geq d+1]_{q^2}$ extended cyclic code such that
$C_e^\hdual \subseteq C_e$. By Corollary~\ref{co:classical} it defines
an $[[n+1,2k-n-1,\geq d+1]]_q$ quantum code pure to $d+1$.
\end{proof}

\begin{corollary}
For all prime powers $q$, integers $m\ge 1$ and all $\delta$ in the range 
$2\leq \delta \leq q^{m}-1$ there exists an 
$$[[q^{2m},q^{2m}-2-2m\lceil(\delta-1)(1-1/q^2)\rceil,\geq \delta+1]]_q$$ 
stabilizer code pure to $\delta+1$. 
\end{corollary}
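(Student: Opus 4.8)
The plan is to realize the asserted code as the one-coordinate extension, via Lemma~\ref{th:bch_extension}, of the Hermitian quantum BCH code produced by Theorem~\ref{th:bchHermitiandual}. First I would let $C$ be the narrow-sense, primitive BCH code over $\F_{q^2}$ of length $n=q^{2m}-1$ and design distance $\delta$ in the range $2\le\delta\le q^{m}-1$. By the dimension count underlying Theorem~\ref{th:bchHermitiandual} (the $\F_{q^2}$-analogue of Lemma~\ref{th:bchdim}, obtained by substituting $q^{2}$ for the base-field size), this code has dimension $k=q^{2m}-1-m\lceil(\delta-1)(1-1/q^2)\rceil$ and minimum distance at least $\delta$. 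Moreover, the stated range $2\le\delta\le q^{m}-1$ is precisely the one guaranteeing $C^{\hdual}\subseteq C$, which is the classical ingredient behind Theorem~\ref{th:bchHermitiandual}.

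Next I would verify the arithmetic hypothesis of Lemma~\ref{th:bch_extension}, namely $n\equiv-1\bmod p$. Since $q$ is a power of the characteristic $p$, we have $p\mid q^{2m}$, so $n=q^{2m}-1\equiv-1\bmod p$, as required. With $C$ narrow-sense, $C^{\hdual}\subseteq C$, and $n\equiv-1\bmod p$ all in hand, Lemma~\ref{th:bch_extension} applies directly and yields an $[[n+1,\,2k-n-1,\,\ge\delta+1]]_q$ stabilizer code that is pure to~$\delta+1$.

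Finally I would simplify the parameters. The length is $n+1=q^{2m}$, and substituting the value of $k$ gives
\begin{equation*}
2k-n-1 = 2\big(q^{2m}-1-m\lceil(\delta-1)(1-1/q^2)\rceil\big)-(q^{2m}-1)-1,
\end{equation*}
which collapses to $q^{2m}-2-2m\lceil(\delta-1)(1-1/q^2)\rceil$, exactly the claimed dimension; the purity to~$\delta+1$ is inherited verbatim from Lemma~\ref{th:bch_extension}.

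The individual steps are routine given the machinery already assembled, so the only points demanding care are bookkeeping ones. The main obstacle is correctly tracking the classical dimension $k$ so that the extension formula $2k-n-1$ telescopes to the advertised expression: note that $k$ carries a \emph{single} factor of $m$, and it is the doubling in $2k$ that produces the $2m$ appearing in the final quantum dimension. A secondary check is confirming that the one interval $2\le\delta\le q^{m}-1$ simultaneously secures both the dimension formula and the Hermitian dual-containment; this holds because $q^{2\lceil m/2\rceil}+1\ge q^{m}-1$ for every $m\ge1$, so $\delta$ stays within the validity range of the $\F_{q^2}$-version of Lemma~\ref{th:bchdim}.
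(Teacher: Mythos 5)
Your proposal is correct and follows exactly the paper's own route: the code of Theorem~\ref{th:bchHermitiandual} is extended via Lemma~\ref{th:bch_extension}, using the observation that $q^{2m}-1\equiv -1\bmod p$. Your parameter bookkeeping (the telescoping of $2k-n-1$) and the check that $\delta\le q^m-1$ stays within the validity range of the $\F_{q^2}$-version of Lemma~\ref{th:bchdim} are both sound, and in fact spell out details the paper's two-line proof leaves implicit.
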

\begin{proof}
The stabilizer codes from Theorem~\ref{th:bchHermitiandual} are
derived from primitive, narrow-sense BCH codes. If $p$ denotes the
characteristic of $\F_{q^2}$, then $q^{2m}-1\equiv -1\bmod p$, so the
stabilizer codes given in Theorem~\ref{th:bchHermitiandual} can be
extended by Lemma~\ref{th:bch_extension}.  
\end{proof}

A result similar to Lemma~\ref{th:bch_extension} can be developed for
BCH codes that contain their Euclidean duals.

\subsection{Puncturing BCH Codes.} 
In this section, let $\B_q^m(\delta)$ denote a primitive,
narrow-sense $q$-ary BCH code of length $n=q^m-1$ and designed
distance $\delta$. We illustrate the theory of puncture codes
developed in Chapter~\ref{ch:stabq1} by puncturing such BCH codes. 
Some knowledge about the puncture code is
necessary for this task, and we show in Theorem~\ref{th:bchP(C)} that
a cyclic generalized Reed-Muller code is contained in the puncture
code.

First, let us recall some basic facts about cyclic generalized Reed-Muller
codes, see~\cite{assmus92,assmus98,kasami68,pellikaan04} for
details. Let $L_m(\nu)$ denote the subspace of $\F_q[x_1,\dots,x_m]$ consisting
of polynomials of degree~$\le \nu$, and let $(P_0,\dots, P_{n-1})$ be
an enumeration of the points in $\F_q^m$ where $P_0=\mbf{0}$.  The $q$-ary cyclic
generalized Reed-Muller code $\RM^*_q(\nu,m)$ of order $\nu$ and
length $n=q^m-1$ is defined as
$$ \RM^*_q(\nu,m) = \{ ev\, f\,|\, f\in
L_m(\nu)\},$$ where the codewords are evaluations of the polynomials in
all but $P_0$ defined by $ev\, f=(f(P_1),\dots, f(P_{n-1}))$. 
The dimension $k^*(\nu)$ of the code 
$\RM^*_q(\nu,m)$ is given by the formula
$
k^*(\nu) = \sum_{j=0}^m (-1)^j \binom{m}{j}\binom{m+\nu-jq}{\nu-jq} 
$
and its minimum distance $d^*(\nu)=(R+1)q^Q-1,$
where $m(q-1)-\nu = (q-1)Q+ R$ with $0\le R<q-1$.
The dual code of $\RM_q^*(\nu,m)$ can be characterized by  
\begin{eqnarray}
\RM_q^*(\nu,m)^\perp
&=&\{ ev \mbox{ }f \,|\, f\in L_m^*(\nu^\perp)
\},\label{eq:RMdualdefn}
\end{eqnarray}
where $\nu^\perp= m(q-1)-\nu-1$ and $L_m^*(\nu)$ is the subspace of
all nonconstant polynomials in $L_m(\nu)$; 

It is well-known that a primitive, narrow-sense BCH code contains a
cyclic generalized Reed-Muller code, see~\cite[Theorem 5]{kasami68},
and we determine the largest such subcode in our next lemma.

\begin{lemma}\label{th:largestRMinbch}
Let $\nu = (m-Q)(q-1)-R$, with $Q=\lfloor\log_q(\delta+1) \rfloor$ and $R=\lceil
(\delta+1)/q^Q\rceil-1$, then $\RM_q^*(\nu,m)\subseteq \B_q^m(\delta)$. Also for all orders $\nu'>\nu$, we have 
$\RM_q^*(\nu',m)\not\subseteq \B_q^m(\delta)$.
\end{lemma}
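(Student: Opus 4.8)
The plan is to compare the two cyclic codes through their defining sets and thereby reduce the containment to an inequality on $q$-ary digit sums. Both $\B_q^m(\delta)$ and $\RM_q^*(\nu,m)$ are cyclic of length $n=q^m-1$, so I would first record their defining sets. The narrow-sense, primitive BCH code has defining set $Z_{\B}=\bigcup_{x=1}^{\delta-1}C_x$, the cyclotomic closure of $\{1,\dots,\delta-1\}$. For the generalized Reed--Muller code I would invoke the standard characterization (Kasami--Lin--Peterson, \cite{kasami68}): writing $w_q(j)$ for the sum of the base-$q$ digits of $j$, the cyclic code $\RM_q^*(\nu,m)$ has defining set $Z_{\RM(\nu)}=\{\,j : 1\le j\le n-1,\ w_q(j)\le m(q-1)-1-\nu\,\}$, which is consistent both with the dual description~(\ref{eq:RMdualdefn}) and with the dimension formula for $k^*(\nu)$. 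Since containment of cyclic codes reverses inclusion of defining sets, $\RM_q^*(\nu,m)\subseteq \B_q^m(\delta)$ holds if and only if $Z_{\B}\subseteq Z_{\RM(\nu)}$.

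Next I would simplify this inclusion. Because $w_q$ is invariant under multiplication by $q$ modulo $n$ (a rotation of base-$q$ digits), the set $Z_{\RM(\nu)}$ is a union of cyclotomic cosets, so the closure $Z_{\B}$ is contained in it exactly when each of its generators is, i.e.\ iff $w_q(x)\le m(q-1)-1-\nu$ for every $x\in\{1,\dots,\delta-1\}$. Thus the largest admissible order is $\nu=m(q-1)-1-W$, where $W:=\max_{1\le x\le \delta-1}w_q(x)$. The whole statement then follows once I establish the digit-sum identity $W=Q(q-1)+R-1$: indeed $m(q-1)-1-(Q(q-1)+R-1)=(m-Q)(q-1)-R$ gives the claimed $\nu$, and for $\nu'>\nu$ the maximizer $x^*$ (with $w_q(x^*)=W$) satisfies $w_q(x^*)>m(q-1)-1-\nu'$, so $x^*\in Z_{\B}\setminus Z_{\RM(\nu')}$, proving non-containment for every larger order.

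The main obstacle is the combinatorial identity $W=Q(q-1)+R-1$. I would prove it by base-$q$ analysis of the bounds $q^{Q}\le \delta+1<q^{Q+1}$ and $Rq^{Q}<\delta+1\le (R+1)q^{Q}$ (equivalently $Rq^{Q}\le\delta\le(R+1)q^{Q}-1$), which yield $\delta-1<(R+1)q^{Q}-1$. For the upper bound $W\le Q(q-1)+R-1$, I would write any $x<(R+1)q^{Q}$ as $x=d\,q^{Q}+r$ with $0\le d\le R$ and $0\le r<q^{Q}$, so $w_q(x)=d+w_q(r)\le R+Q(q-1)$, with equality only for the single value $x=(R+1)q^{Q}-1$; since this maximizer exceeds $\delta-1$ it is excluded from $\{1,\dots,\delta-1\}$, leaving $W\le Q(q-1)+R-1$. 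For the matching lower bound I would exhibit an explicit maximizer $\le\delta-1$: when $R\ge 1$ the number $Rq^{Q}-1=(R-1)q^{Q}+(q^{Q}-1)\le\delta-1$ has digit sum $(R-1)+Q(q-1)$; when $R=0$ (so $\delta+1=q^{Q}$) the number $q^{Q}-2\le\delta-1$ has digit sum $Q(q-1)-1$; and the degenerate range $Q=0$, where $\nu$ forces all of $\{1,\dots,\delta-1\}$ to be single digits, is checked directly. I would treat the small boundary cases ($\delta=2$, or $q^{Q}$ small) separately to keep the digit bookkeeping clean.
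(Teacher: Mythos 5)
Your proof is correct, but it follows a genuinely different route from the paper's. The paper argues at the level of minimum distances: for the containment it invokes the Kasami--Lin--Peterson result that $\RM_q^*(\nu,m)\subseteq \B_q^m(d^*(\nu))$ with $d^*(\nu)=(R+1)q^Q-1=\lceil(\delta+1)/q^Q\rceil q^Q-1\geq\delta$ and then uses the nesting of BCH codes; for the converse it computes $d^*(\nu+1)$ explicitly and checks $d^*(\nu+1)<\delta$, so that $\RM_q^*(\nu+1,m)$ contains a word too light to lie in $\B_q^m(\delta)$, and non-containment for all larger orders follows from the nesting of the GRM codes. You instead descend to defining sets: using the digit-sum description $Z_{\RM(\nu)}=\{j: w_q(j)\le m(q-1)-1-\nu\}$ (also from Kasami--Lin--Peterson, and correctly reconciled with the paper's dual description), you reduce the whole lemma to the single identity $\max_{1\le x\le\delta-1}w_q(x)=Q(q-1)+R-1$, which you verify by base-$q$ bookkeeping, including the boundary cases $R=0$ and $Q=0$ and the fact that the unique digit-sum maximizer $(R+1)q^Q-1$ below $(R+1)q^Q$ is excluded because it exceeds $\delta-1$. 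Your version buys an exact if-and-only-if characterization of all admissible orders in one stroke (no appeal to nesting of either family is needed for the converse) and is more self-contained, at the cost of the combinatorial digit analysis; the paper's version is shorter because it treats the two cited distance facts as black boxes. Both arguments are sound.
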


\begin{proof}
First, we show that $\RM_q^*(\nu,m) \subseteq \B_q^m(\delta)$.  Recall
that the minimum distance $d^*(\nu)=(R+1)q^Q-1$, where
$m(q-1)-\nu=(q-1)Q+R$ with $0\le R<q-1$.  By \cite[Theorem
5]{kasami68}, we have $\RM_q^*(\nu,m) \subseteq \B_q^m((R+1)q^Q-1)$.
Notice that $(R+1)q^Q-1 =\lceil (\delta+1)/ q^Q\rceil q^Q -1 \geq
\delta$, so $\B_q^m((R+1)q^Q-1) \subseteq \B_q^m(\delta)$. Therefore,
$\RM_q^*(\nu,m) \subseteq \B_q^m(\delta)$, as claimed.

For the second claim, it suffices to show that $\RM_q^*(\nu+1,m)$ is
not a subcode of $\B_q^m(\delta)$. We prove this by showing that
the minimum distance $d^*(\nu+1)< \delta$.  Notice that 
$$
m(q-1)-(\nu+1) 
=\left\{\begin{array}{ll} (q-1)Q+R-1,
&
 R\geq 1,\\
(q-1)(Q-1)+q-2,& 
R=0 \end{array} \right.
$$ 
with $R$ and $Q$ as given in the hypothesis. 
Therefore, the distance $d^*(\nu+1)$ of 
$\RM_q^*(\nu+1,m)$ is given by 
\begin{eqnarray*}
d^*(\nu+1) &=& 
\left\{\begin{array}{ll} (\lceil (\delta+1)/q^Q\rceil-1)q^Q-1 
&\text{for } R\geq 1,\\
(q-1)q^{Q-1}-1&\text{for } R=0. \end{array} \right.
\end{eqnarray*}
In both cases, it is straightforward to verify that $d^*(\nu+1)<\delta$. 
\end{proof}

Explicitly determining the puncture code is a challenging task. For
the duals of BCH codes, we are able to determine large subcodes of the
puncture code.

\begin{theorem}\label{th:bchP(C)}
If $ \delta < q^{\lfloor m/2 \rfloor}-1$, then $\RM_q^*(\mu,m)
\subseteq \pc_e(\B_q^m(\delta)^\perp)$ for all orders $\mu$ in the range
$0\leq \mu\leq m(q-1)-2(R+(q-1)Q)+1$ with $Q=\lfloor\log_q
(\delta+1)\rfloor$ and $R=\lceil(\delta+1)/q^Q\rceil -1$.
\end{theorem}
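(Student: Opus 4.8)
The plan is to unwind the definition of the puncture code and reduce everything to the vanishing of a power sum over $\F_q^m$. By definition $\pc_e(\B_q^m(\delta)^\perp)=\{ab\mid a,b\in \B_q^m(\delta)^\perp\}^\perp$, so the claim $\RM_q^*(\mu,m)\subseteq \pc_e(\B_q^m(\delta)^\perp)$ is equivalent to showing that every codeword $c=\ev g$ with $g\in L_m(\mu)$ is orthogonal to every product $ab$ with $a,b\in \B_q^m(\delta)^\perp$. First I would represent the dual BCH codewords by low-degree polynomials: Lemma~\ref{th:largestRMinbch} gives $\RM_q^*(\nu,m)\subseteq \B_q^m(\delta)$ for $\nu=(m-Q)(q-1)-R$, and dualizing together with the duality formula~(\ref{eq:RMdualdefn}) yields $\B_q^m(\delta)^\perp\subseteq \RM_q^*(\nu,m)^\perp=\{\ev f\mid f\in L_m^*(\nu^\perp)\}$, where $\nu^\perp=m(q-1)-\nu-1=Q(q-1)+R-1$. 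Hence I may write $a=\ev f_a$ and $b=\ev f_b$ with $f_a,f_b\in L_m^*(\nu^\perp)$; in particular $\deg f_a,\deg f_b\le \nu^\perp$, and since $L_m^*$ consists of polynomials with vanishing constant term, $f_a(0)=f_b(0)=0$.

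With this representation the Euclidean inner product becomes a single evaluation sum,
$$
c\cdot(ab)=\sum_{k=1}^{n-1} g(P_k)\,f_a(P_k)\,f_b(P_k)=\sum_{P\in \F_q^m\setminus\{0\}} h(P),\qquad h:=g\,f_a\,f_b .
$$
The tool I would invoke is the elementary power-sum identity $\sum_{x\in\F_q}x^{e}=-1$ when $e\ge 1$ and $(q-1)\mid e$, and $\sum_{x\in\F_q}x^{e}=0$ otherwise (including $e=0$, since $q=0$ in $\F_q$). Multiplying over the $m$ coordinates shows that $\sum_{P\in\F_q^m}\prod_{i=1}^m P_i^{e_i}\neq 0$ only when every exponent $e_i$ is a positive multiple of $q-1$, which forces total degree at least $m(q-1)$. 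Since $\deg h\le \deg g+\deg f_a+\deg f_b\le \mu+2\nu^\perp$, the degree bound I verify below gives $\deg h<m(q-1)$, so no monomial of $h$ can meet that condition and $\sum_{P\in\F_q^m}h(P)=0$. Passing from the full sum to the punctured sum introduces exactly one correction term, $\sum_{P\ne 0}h=\sum_{P\in\F_q^m}h-h(0)=-h(0)=-g(0)f_a(0)f_b(0)$, and this vanishes because $f_a(0)=0$. Therefore $c\cdot(ab)=0$, which is what we want.

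The remaining work is bookkeeping on degrees. From $\nu^\perp=Q(q-1)+R-1$ one computes $2\nu^\perp=2\bigl(R+(q-1)Q\bigr)-2$, so $\mu+2\nu^\perp\le m(q-1)-1$ holds precisely when $\mu\le m(q-1)-2\bigl(R+(q-1)Q\bigr)+1$; this is exactly the stated range, and at its upper endpoint $\mu+2\nu^\perp=m(q-1)-1$, still strictly below $m(q-1)$. The standing hypothesis $\delta<q^{\lfloor m/2\rfloor}-1$ enters only to make this range meaningful: it forces $Q=\lfloor\log_q(\delta+1)\rfloor\le\lfloor m/2\rfloor-1$, whence $2\nu^\perp\le m(q-1)-1$ and the admissible interval $0\le\mu\le m(q-1)-2(R+(q-1)Q)+1$ is nonempty (and the representing polynomials indeed have degree below $m(q-1)$, as needed above).

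I expect the main obstacle to be the correction term at the origin. The degree bound alone kills the full sum over $\F_q^m$, but the puncture code lives on $\F_q^m\setminus\{0\}$, and the leftover $h(0)=g(0)f_a(0)f_b(0)$ does \emph{not} vanish from degree considerations; it is eliminated only by the extra structural fact that narrow-sense dual BCH codewords are orthogonal to the all-ones vector, equivalently that their representing polynomials lie in $L_m^*$ and hence have zero constant term. Aligning the Reed--Muller degree/weight conventions so that this term falls out cleanly, rather than spoiling the orthogonality, is the delicate point of the argument.
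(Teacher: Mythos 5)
Your proof is correct and follows essentially the same route as the paper's: both reduce to $\B_q^m(\delta)^\perp\subseteq \RM_q^*(\nu,m)^\perp$ via Lemma~\ref{th:largestRMinbch}, represent the dual codewords by polynomials of degree at most $\nu^\perp=Q(q-1)+R-1$, exploit that pairwise products then have degree at most $2\nu^\perp$, and perform the identical bookkeeping $\mu+2\nu^\perp\le m(q-1)-1$. The only difference is one of packaging: where the paper concludes by invoking the duality formula~(\ref{eq:RMdualdefn}) a second time to identify $\{\ev f\mid f\in L_m^*(2\nu^\perp)\}^\perp$ with $\RM_q^*((2\nu^\perp)^\perp,m)$, you re-derive that step from first principles with the power-sum identity, which has the minor virtue of making explicit why the correction term at the origin is killed by the zero-constant-term condition on $L_m^*$.
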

\begin{proof}
By Lemma~\ref{th:largestRMinbch}, we have $\RM_q^*(\nu,m)
\subseteq \B_q^m(\delta)$ for $\nu= (m-Q)(q-1)-R$; hence,
$\B_q^m(\delta)^\perp \subseteq \RM_q^*(\nu,m)^\perp$.
It follows from the definition of the puncture code that 
$\pc_e(\B_q^m(\delta)^\perp) \supseteq \pc_e(\RM_q^*(\nu,m)^\perp)$. However, 
\begin{eqnarray*}
\pc_e(\RM_q^*(\nu,m)^\perp) &=&\{ev f \cdot ev \mbox{ } g \mid f,g \in
L_m^*(\nu^\perp)\}^\perp,\\ &\supseteq & \{ ev f \mid f \in
L_m^*(2\nu^\perp)\}^\perp,\\
&=&\RM_q^*((2\nu^\perp)^\perp,m),
\end{eqnarray*}
where the last equality follows from equation~(\ref{eq:RMdualdefn}).  This is
meaningful only if $(2\nu^\perp)^\perp \geq 0$ or, equivalently, if 
$\nu\geq(m(q-1)-1)/2$. Since $\delta < q^{\lfloor m/2\rfloor}-1$, it follows that $Q\leq \lfloor m/2\rfloor -1$, and the order $\nu$ satisfies
$$\begin{array}{lcl}
\nu&=&(m-Q)(q-1)-R \geq \lceil m/2+1\rceil(q-1)-R \\
&\geq& \lceil m/2\rceil (q-1)+1 \geq (m(q-1)-1)/2,
\end{array}
$$
as required. Since $\RM_q^*(\mu,m) \subseteq \RM_q^*((2\nu^\perp)^\perp,m)$ for 
$0\leq \mu\leq (2\nu^\perp)^\perp$, we have $\RM_q^*(\mu,m) \subseteq \pc_e(\B_q^m(\delta)^\perp)$.
\end{proof}

Unfortunately, the weight distribution of generalized cyclic
Reed-Muller codes is not known, see~\cite{charpin98}.  However, we
know that the puncture code of $\B_q^m(\delta)^\perp$ contains the
codes $\RM_q^*(0,m)\subseteq \RM_q^*(1,m)\subseteq \cdots \subseteq
\RM_q^*(m(q-1)-2(R+(q-1)Q)+1,m)$, so it must contain codewords of the
respective minimum distances.

\begin{corollary} \label{th:bchpunc}
If $\delta$ and $\mu$ are integers in the range $2\leq \delta <
q^{\lfloor m/2\rfloor}-1$ and $0\leq \mu\leq m(q-1)-2(R+(q-1)Q)+1$,
where $Q=\lfloor\log_q(\delta+1)\rfloor$ and
$R=\lceil(\delta+1)/q^Q\rceil-1$, then there exists
a 
$$[[d^*(\mu), \geq d^*(\mu)-2m\lceil(\delta-1)(1-1/q)\rceil,\geq
\delta]]_q$$ stabilizer code of length $d^*(\mu)=(\rho+1)q^\sigma-1$, 
where $\sigma$ and $\rho$ satisfy 
the relations $m(q-1)-\mu=(q-1)\sigma+\rho$ and $0\leq \rho<q-1$.
\end{corollary}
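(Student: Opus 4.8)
The plan is to obtain this corollary as a direct application of the symplectic puncturing theorem (Theorem~\ref{th:punc_symplectic}) to the quantum BCH code built from $\B_q^m(\delta)$, using Theorem~\ref{th:bchP(C)} to exhibit a codeword of the required Hamming weight inside the puncture code. First I would check that the single hypothesis $2\le\delta<q^{\lfloor m/2\rfloor}-1$ unlocks all the ingredients. Comparing parities of $m$, one verifies $q^{\lfloor m/2\rfloor}-1\le q^{\lceil m/2\rceil}-1-(q-2)[m\textrm{ odd}]$ for every prime power $q$ and $m\ge 2$, so Lemma~\ref{th:bchEuclideandual} guarantees that $C:=\B_q^m(\delta)$ contains its Euclidean dual; in particular $\B_q^m(\delta)^\perp$ is self-orthogonal and the puncture code $\pc_e(\B_q^m(\delta)^\perp)$ is well defined. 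The same range lies within $2\le\delta\le q^{\lceil m/2\rceil}+1$, so Lemma~\ref{th:bchdim} gives $\dim C=q^m-1-m\lceil(\delta-1)(1-1/q)\rceil$.

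Next I would set up the underlying additive code. By the Euclidean construction (Corollary~\ref{th:css2}) the code $C$ yields an $[[n,k,\ge\delta]]_q$ stabilizer code with $n=q^m-1$ and $k=2\dim C-n=n-2m\lceil(\delta-1)(1-1/q)\rceil$; equivalently the associated self-orthogonal additive code $\mathcal{C}=\B_q^m(\delta)^\perp\times\B_q^m(\delta)^\perp\le\F_q^{2n}$ has size $|\mathcal{C}|=q^{n}/K$ with $K=q^k$ and satisfies $\swt(\mathcal{C}^\sdual\setminus\mathcal{C})\ge\delta$. Since $\pc_s(\mathcal{C})=\pc_e(\B_q^m(\delta)^\perp)$ by the definition of $\pc_e$, Theorem~\ref{th:bchP(C)} (whose hypothesis $\delta<q^{\lfloor m/2\rfloor}-1$ is exactly our assumption) shows $\RM_q^*(\mu,m)\subseteq\pc_s(\mathcal{C})$ for all $\mu$ in the stated range. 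The minimum-distance codeword of $\RM_q^*(\mu,m)$ has Hamming weight $d^*(\mu)=(\rho+1)q^\sigma-1$, where $m(q-1)-\mu=(q-1)\sigma+\rho$ with $0\le\rho<q-1$; hence $\pc_s(\mathcal{C})$ contains a codeword of Hamming weight $r:=d^*(\mu)$, which is precisely the length appearing in the claim.

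Finally I would invoke Theorem~\ref{th:punc_symplectic} with this $r$. It produces an $((r,K^*,d^*))_q$ stabilizer code of length $r=d^*(\mu)$, with $K^*\ge K/q^{\,n-r}$ and $d^*\ge\delta$ whenever $K^*>1$. Taking logarithms, $\log_q K^*\ge k-(n-r)=r-2m\lceil(\delta-1)(1-1/q)\rceil=d^*(\mu)-2m\lceil(\delta-1)(1-1/q)\rceil$, which is exactly the asserted dimension bound, and the minimum distance is $\ge\delta$. This yields the desired $[[d^*(\mu),\ge d^*(\mu)-2m\lceil(\delta-1)(1-1/q)\rceil,\ge\delta]]_q$ code.

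The computations in each step are short; the real work is bookkeeping across the chain. The main obstacle I anticipate is confirming that the one hypothesis $2\le\delta<q^{\lfloor m/2\rfloor}-1$ simultaneously satisfies the distinct range requirements of Lemma~\ref{th:bchEuclideandual} (dual containment), Lemma~\ref{th:bchdim} (the exact dimension), and Theorem~\ref{th:bchP(C)} (the Reed--Muller containment in the puncture code), and then tracking the identification $r=d^*(\mu)$ so that the crude bound $K/q^{n-r}$ collapses to the clean closed form in terms of $d^*(\mu)$. One should also note the harmless degenerate case: if the dimension bound is nonpositive the statement reduces to mere existence, while if it is positive then $K^*>1$ and the distance guarantee $d^*\ge\delta$ of Theorem~\ref{th:punc_symplectic} applies.
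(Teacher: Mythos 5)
Your proposal is correct and follows essentially the same route as the paper: obtain the $[[q^m-1,\,q^m-1-2m\lceil(\delta-1)(1-1/q)\rceil,\,\geq\delta]]_q$ code from the dual-containing BCH code (the paper cites Theorem~\ref{co:bchEuclideandual}, which packages the same combination of Lemma~\ref{th:bchdim}, Lemma~\ref{th:bchEuclideandual}, and the CSS construction that you re-derive), then use Theorem~\ref{th:bchP(C)} to place a minimum-weight codeword of $\RM_q^*(\mu,m)$ of weight $d^*(\mu)$ in the puncture code, and apply Theorem~\ref{th:punc_symplectic}. Your explicit verification of the range compatibilities and the remark about the $K^*>1$ degenerate case are details the paper leaves implicit, but the argument is the same.
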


\begin{proof}
If $2\leq \delta < q^{\lfloor m/2\rfloor }-1 $, then from Theorem~\ref{co:bchEuclideandual} 
we know that there exists an
$[[q^m-1,q^m-1-2m\lceil(\delta-1)(1-1/q)\rceil,\geq \delta]]_q$ quantum code.
From Lemma~\ref{th:bchP(C)} we know that
$\pc_e(\B_q^m(\delta)^\perp)\supseteq \RM_q^*(\mu,m)$, where $0\leq
\mu\leq m(q-1)-2(q-1)Q-2R+1$.  By Theorem~\ref{th:punc_symplectic}, if
there exists a vector of weight $r$ in $\pc_e(\B_q^m(\delta)^\perp)$,
the corresponding quantum code can be punctured to give $[[r,\geq
r- 2m\lceil(\delta-1)(1-1/q)\rceil) ,d\geq \delta]]_q$.  The minimum distance
of $\RM_q^*(\mu,m)$ is $d^*(\mu)=(\rho+1)q^\sigma-1$, where 
$0\leq \rho<q-1$ \cite[Theorem 5]{kasami68}. Hence, it is always possible to puncture
the quantum code to $[[d^*(\mu), \geq d^*(\mu)-2m\lceil(\delta-1)(1-1/q)\rceil,
\geq \delta ]]_q$.
\end{proof}

It is also possible to puncture quantum codes constructed via
classical codes self-orthogonal with respect to the Hermitian inner
product. Examples of such puncturing can be found in \cite{grassl04}
and \cite{klappenecker05p1}.

\section{MDS Codes}\label{sec:MDS} 
A quantum code that attains the quantum Singleton bound is called a
quantum Maximum Distance Separable code or quantum MDS code for short.
These codes have received much attention, but many aspects have not
yet been explored in the quantum case (but
see~\cite{grassl04,rains99}).  In this section we study the
maximal length of MDS stabilizer codes.

An interesting result concerning the purity of quantum MDS codes was
derived by Rains~\cite[Theorem 2]{rains99}: 

\begin{lemma}[Rains] \label{th:d_purity}
An $[[n,k,d]]_q$ quantum MDS code with $k\ge 1$ is pure up to $n-d+2$.
\end{lemma}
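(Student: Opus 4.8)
My plan is to work entirely with the symplectic weight distribution of the associated classical code and to extract purity from the \emph{equality case} of the quantum Singleton bound. By Theorem~\ref{th:stabilizer} the MDS code corresponds to an additive code $C\le\F_q^{2n}$ with $|C|=q^n/K=q^{n-k}$, $C\le C^\sdual$, and $\swt(C^\sdual\setminus C)=d$; writing $A_i=|\{c\in C:\swt(c)=i\}|$ and $B_i=|\{c\in C^\sdual:\swt(c)=i\}|$, the statement ``pure to $t$'' is exactly the assertion $A_i=0$ for $1\le i<t$. So the goal is to prove $A_i=0$ for $1\le i\le n-d+1$. Two facts set up the argument: first, every symplectic-weight-${<}d$ word of $C^\sdual$ already lies in $C$, so $A_i=B_i$ for $0\le i<d$; second, the MDS hypothesis $k=n-2d+2$ gives $|C^\sdual|=q^nK=q^{2n-2d+2}$ and $2d\le n+1$.

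Next I would recall the ingredients of the quantum Singleton bound (Corollary~\ref{th:singleton}), which is proved through Theorem~\ref{th:lp2} with $S=\{0,\dots,d-1\}$ and the Delsarte polynomial $f(x)=q^{\,n-d+1}\binom{n-x}{n-d+1}\big/\binom{n}{n-d+1}$. Its Krawtchouk coefficients are $f_i=q^{\,d-1-n}\binom{n-i}{d-1}\big/\binom{n}{n-d+1}$, which are \emph{strictly positive} for $0\le i\le n-d+1$ and \emph{vanish} for $i\ge n-d+2$; moreover $f(x)=0$ for $x\in\{0,\dots,n\}\setminus S$, and the ratio $r(x)=f(x)/f_x$ is strictly decreasing on $S$ (since $r(x)/r(x+1)=(n-x-d+1)/(d-x-1)>1$ precisely when $2d<n+2$), so it attains its maximum $M=r(0)=q^{2n-2d+2}$ uniquely at $x=0$. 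The condition $k\ge1$ is what guarantees $2d\le n+1<n+2$, putting us in this strictly-decreasing regime.

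The crux is then the complementary-slackness reading of the proof of Theorem~\ref{th:lp2}. That proof chains the inequalities
$|C^\sdual|\sum_{i\in S}f_iA_i\le|C^\sdual|\sum_{i=0}^n f_iA_i=\sum_{x}B_xf(x)\le\sum_{x\in S}A_xf(x)\le M\sum_{x\in S}f_xA_x$,
using $A_i=|C^\sdual|^{-1}\sum_x K_i(x)B_x$ (Corollary~\ref{th:krawtchouk}) and $A_x=B_x$ on $S$. Because $|C^\sdual|=M$, the first and last terms are literally the same sum $|C^\sdual|\sum_{i\in S}f_iA_i$, so every inequality collapses to equality. Equality in the first step, together with $f_i>0$ for $i\le n-d+1$, forces $A_i=0$ for $d\le i\le n-d+1$; equality in the last step reads $\sum_{x\in S}A_x\bigl(Mf_x-f(x)\bigr)=0$ with every summand nonnegative, and since $r(x)<M$ for $1\le x\le d-1$ this forces $A_x=0$ for $1\le x\le d-1$. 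Combining the two ranges gives $A_i=0$ for all $1\le i\le n-d+1$, i.e. purity to $n-d+2$. I expect the only delicate points to be (a) checking the sign and monotonicity facts for $f_i$ and $r(x)$, and (b) carefully justifying that attainment of the bound forces each individual inequality in the chain to be tight; once these are in hand the conclusion is immediate, with no further estimation required.
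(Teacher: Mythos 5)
Your proof is correct; the one thing to be aware of is that the dissertation itself offers no proof of this lemma at all---it is quoted from Rains~\cite{rains99}---so there is no in-paper argument to match, and what you have supplied is a genuinely self-contained derivation inside the paper's own linear-programming machinery. Concretely, you reuse the Delsarte polynomial $f(x)=q^{n-d+1}\binom{n-x}{n-d+1}\big/\binom{n}{n-d+1}$ behind Corollary~\ref{th:singleton}, note that for an MDS code the two ends of the inequality chain in the proof of Theorem~\ref{th:lp2} coincide (both equal $q^{2n-2d+2}\sum_{i\in S}f_iA_i$, since $|C^\sdual|=q^nK=q^{2n-2d+2}=r(0)=\max_{x\in S}f(x)/f_x$), and extract purity from the two complementary-slackness conditions: $f_i>0$ for $i\le n-d+1$ forces $A_i=0$ on $\{d,\dots,n-d+1\}$, while the strict decrease $r(x)/r(x+1)=(n-x-d+1)/(d-x-1)>1$ --- which is exactly where $k\ge 1$, i.e.\ $2d<n+2$, enters --- forces $A_x=0$ on $\{1,\dots,d-1\}$. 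I verified the delicate points you flagged: the signs and vanishing pattern of the $f_i$, the strict monotonicity of $r$, the automatic tightness of the middle step (here $f(x)=0$ rather than merely $f(x)\le 0$ off $S$), the identification of purity to $t$ with $A_i=0$ for $1\le i<t$, and the fact that the two index ranges cover $\{1,\dots,n-d+1\}$ because $2d\le n+1$; all hold. Rains's own argument also proceeds through the equality case of a weight-enumerator bound, so your route is kindred in spirit, but casting it through Theorem~\ref{th:lp2} and Corollary~\ref{th:krawtchouk} has the merit of making the dissertation self-contained on this point.
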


\begin{corollary} \label{th:mds_purity}
All quantum MDS codes are pure.
\end{corollary}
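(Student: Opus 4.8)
The plan is to obtain this as an immediate consequence of Rains' purity bound (Lemma~\ref{th:d_purity}) combined with the quantum Singleton bound (Corollary~\ref{th:singleton}). Recall that a code is declared \emph{pure} exactly when it is pure to its minimum distance $d$; equivalently, its stabilizer contains no non-scalar element of weight less than $d$. So the entire task reduces to checking that the threshold $n-d+2$ furnished by Rains' lemma is at least $d$.

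First I would dispose of the degenerate case $k=0$. An $[[n,0,d]]_q$ code is pure by the convention adopted earlier (following~\cite{calderbank98}), so there is nothing to prove, and I may assume $k\ge 1$, equivalently $K>1$. This is the regime in which the quantum Singleton bound of Corollary~\ref{th:singleton} applies.

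Next, because the code is MDS it meets that bound with equality, so its dimension satisfies $k=n-2d+2$. Since $k\ge 1$, this forces $n-2d+2\ge 1$, which rearranges to $n-d+2 = (n-2d+2)+d \ge d+1 > d$. Thus the MDS equality is precisely what guarantees that Rains' threshold dominates the minimum distance. I would then invoke Lemma~\ref{th:d_purity}: the code is pure up to $n-d+2$, i.e.\ its stabilizer has no non-scalar matrices of weight less than $n-d+2$. As $d\le n-d+2$, any non-scalar stabilizer element of weight less than $d$ would a fortiori have weight less than $n-d+2$ and so is excluded. Hence the code is pure to $d$, that is, pure.

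The argument is a short inequality chase once the two ingredients are in place, so there is no genuine obstacle; the only point demanding care is the bookkeeping with edge cases, namely separating out the $k=0$ convention and making explicit that it is the MDS equality $k=n-2d+2$ (rather than merely $k\ge 0$) that supplies the inequality $n-d+2\ge d$ bridging Rains' threshold to the minimum distance.
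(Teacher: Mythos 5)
Your argument is correct and is essentially the paper's own proof: dispose of $k=0$ by the purity convention, then use the MDS equality together with Rains' bound (Lemma~\ref{th:d_purity}) to get $n-d+2\ge d$. The only cosmetic difference is that you use $k\ge 1$ to obtain the strict inequality $n-d+2>d$, whereas the paper just uses $k\ge 0$ to get $n-d+2\ge d$, which already suffices.
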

\begin{proof}
An $[[n,k,d]]_q$ quantum MDS code with $k=0$ is pure by definition; if
$k\ge 1$ then it is pure up to $n-d+2$.  By the quantum Singleton
bound $n-2d+2=k\geq 0$; thus, $n-d+2\ge d$, which means that the code
is pure.
\end{proof}

\begin{lemma} \label{th:mds_classical}
For any $[[n,n-2d+2,d]]_q$  quantum MDS stabilizer code with $n-2d+2>0$, the
corresponding classical codes $C\subseteq C^\adual$ are also MDS. 
\end{lemma}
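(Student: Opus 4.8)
The plan is to translate the quantum code into its associated classical additive code over $\F_{q^2}$ via Theorem~\ref{th:alternating}, and then read off both minimum distances from the Singleton bound, using purity to pin down the weight of the larger code $C^\adual$ and a shortening/puncturing duality to handle the smaller code $C$. Writing $K=q^k$ with $k=n-2d+2>0$ (so $K>1$), Theorem~\ref{th:alternating} produces an additive code $C\le \F_{q^2}^n$ with $|C|=q^{n-k}=(q^2)^{d-1}$, $C\le C^\adual$, and $\wt(C^\adual\setminus C)=d$. Since the trace-alternating form is nondegenerate as an $\F_p$-bilinear form on $\F_{q^2}^n$, we have $|C|\,|C^\adual|=q^{2n}$, hence $|C^\adual|=q^{n+k}=(q^2)^{n-d+1}$. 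By Corollary~\ref{th:mds_purity} the code is pure, which means $C$ carries no nonzero word of weight below $d$; combined with $\wt(C^\adual\setminus C)=d$ this yields $\wt(C)\ge d$ and $\wt(C^\adual)=d$.

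The code $C^\adual$ is then immediately MDS: the Singleton bound for a length-$n$ code of size $(q^2)^{n-d+1}$ over $\F_{q^2}$ gives $\wt(C^\adual)\le n-(n-d+1)+1=d$, and we have just shown this is met with equality.

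For $C$, which has $\log_{q^2}|C|=d-1$ and thus Singleton bound $\wt(C)\le n-(d-1)+1=n-d+2$, I would use the standard characterization that an additive code $A\le \F_{q^2}^n$ of size $(q^2)^\kappa$ is MDS if and only if its projection $A|_T$ onto every coordinate set $T$ with $|T|=\kappa$ is all of $\F_{q^2}^T$ (equivalently, $\pi_T$ is a bijection, which for finite codes follows from injectivity together with $|A|=(q^2)^\kappa$). Because the trace-alternating form is a sum of nondegenerate local forms, one per $\F_{q^2}$-coordinate, shortening and puncturing are dual: $(C|_T)^\adual=(C^\adual)_T$, the shortening of $C^\adual$ to $T$, proved by extending a word of $(C|_T)^\adual$ by zeros outside $T$ and testing against $C$. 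Applying this with $\kappa=d-1$, the code $C$ is MDS iff $C|_T=\F_{q^2}^T$ for every $|T|=d-1$ iff $(C^\adual)_T=\{0\}$ for every such $T$ iff $C^\adual$ has no nonzero word supported on $d-1$ coordinates, i.e. iff $\wt(C^\adual)\ge d$. Since $\wt(C^\adual)=d$, the code $C$ is MDS as well.

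The main obstacle is this last step: justifying the MDS-projection characterization and the shortening/puncturing duality for additive codes that need not be $\F_{q^2}$-linear, with respect to the trace-alternating form. The crucial point to verify is that this form decomposes coordinatewise into nondegenerate local forms on each copy of $\F_{q^2}$, which follows from the isometry $\phi$ of Lemma~\ref{th:isometry} to the coordinatewise trace-symplectic form on $\F_q^{2n}$; this is exactly what makes zero-padding and the $\F_p$-dimension counting behave correctly. An alternative route that sidesteps these structural lemmas is to invoke MacWilliams duality directly: the symplectic weight-enumerator identity of Section~\ref{sec:wtEnums}, transported through $\phi$, relates the Hamming weight enumerators of $C$ and $C^\adual$, and since the weight distribution of the MDS code $C^\adual$ is uniquely determined by its parameters, the transform forces $C$ to have the MDS weight distribution for its own parameters.
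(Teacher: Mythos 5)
Your proposal is correct, and its first half---obtaining the additive code $C\le C^\adual$ from Theorem~\ref{th:alternating}, using Corollary~\ref{th:mds_purity} to conclude $\wt(C^\adual)=d$, and matching this against the Singleton bound for a code of size $(q^2)^{n-d+1}$---is exactly the paper's argument. Where you diverge is in showing that $C$ itself is MDS. The paper gets this in one line by invoking Rains' purity theorem (Lemma~\ref{th:d_purity}): a quantum MDS code with $k\ge 1$ is pure up to $n-d+2$, which translates directly into $\wt(C)\ge n-d+2$ for the associated classical code, and Singleton does the rest. You instead prove the classical statement that the dual of an additive MDS code is MDS, via the information-set characterization (valid here because $|C|=(q^2)^{d-1}$ is an integral power of the alphabet size) together with the puncturing/shortening duality $(C|_T)^\adual=(C^\adual)_T$ for the trace-alternating form, or alternatively via the MacWilliams transform of Section~\ref{sec:wtEnums}. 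Both routes are sound. Yours is self-contained on the classical side---it does not import Rains' theorem, and in fact it re-derives the purity-to-$(n-d+2)$ statement as a by-product---at the cost of verifying that the trace-alternating form decomposes coordinatewise into nondegenerate local forms, which is what makes the zero-padding argument and the $\F_p$-dimension count go through; your identification of this as the point needing care is exactly right, and Lemma~\ref{th:isometry} supplies it. The paper's route is shorter but leans on an external theorem whose own proof is nontrivial.
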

\begin{proof}
If an $[[n,n-2d+2,d]]_q$ stabilizer code exists, then
Theorem~\ref{th:alternating} implies the existence of an additive
$[n,d-1]_{q^2}$ code $C$ such that $C\subseteq C^\adual$.
Corollary~\ref{th:mds_purity} shows that $C^\adual$ has minimum
distance $d$, so $C^\adual$ is an $[n,n-d+1,d]_{q^2}$ MDS code. By
Lemma~\ref{th:d_purity}, the minimum distance of $C$ is $\ge n-d+2$,
so $C$ is an $[n,d-1,n-d+2]_{q^2}$ MDS code.
\end{proof}

A classical $[n,k,d]_q$ MDS code is said to be trivial if $k\leq 1 $
or $k\geq n-1$. A trivial MDS code can have arbitrary length, but a
nontrivial one cannot. The next lemma is a straightforward
generalization from linear to additive MDS codes.

\begin{lemma} \label{th:mds_nontrivial}
Assume that there exists a classical additive $(n,q^k,d)_q$ MDS code $C$.
\begin{compactenum}[(i)]
\item If the code is trivial, then it can have arbitrary length.
\item If the code is nontrivial, then its code parameters must be in the range 
$2\leq k\leq \min\{ n-2,q-1\}$ and
$n\leq q+k-1\leq 2q-2$.
\end{compactenum}
\end{lemma}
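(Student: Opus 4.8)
The plan is to handle the two cases separately, reducing part~(ii) to a single length bound which I then also apply to the dual code.

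For part~(i) I would simply exhibit codes of every length. The zero code $\{0\}$, the $\F_q$-linear repetition code $\{(a,\dots,a)\mid a\in\F_q\}$, the parity-check code $\{x\in\F_q^n\mid \sum_i x_i=0\}$, and the whole space $\F_q^n$ are additive (indeed linear) MDS codes with $k=0,1,n-1,n$ respectively, and each exists for arbitrary $n$. This settles the trivial cases $k\le 1$ and $k\ge n-1$.

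For part~(ii) the key is a combinatorial characterization of the MDS property that survives the passage from linear to additive codes: since $d=n-k+1$, every nonzero codeword has at most $k-1$ zero coordinates, so the projection of $C$ onto any $k$ of the $n$ coordinates is injective, hence a bijection onto $\F_q^k$. In other words $C$ is an orthogonal array of strength $k$ and index~$1$. I would first record this fact, together with the consequence that for any fixed coordinate set $T$ the subcode $C_T$ of codewords vanishing on $T$ is an $\F_p$-subgroup whose cardinality is dictated by the strength property.

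The heart of the argument is the length bound $n\le q+k-1$, valid whenever $2\le k\le n-2$. Here the classical linear proof, which forms ratios of two coordinates of a weight-$d$ codeword and argues that distinct coordinates yield distinct ratios in $\F_q^\times$, breaks down, since an additive code is only $\F_p$-linear and division is unavailable; \emph{replacing this ratio count by a division-free one is the main obstacle}. My plan is to fix any $k-2$ coordinates $T$ and pass to $C_1=C_T$; by strength~$k$ one has $|C_1|=q^2$, and every nonzero element of $C_1$ has weight at least $d=n-k+1$ and so vanishes at most once among the remaining $n-k+2$ positions. Applying strength to $C_1$ and any two of those positions shows that, for each such position $p$, the number of nonzero codewords of $C_1$ vanishing at $p$ is exactly $q-1$, and these sets are pairwise disjoint since a nonzero codeword vanishes at most once. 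Counting the $q^2-1$ nonzero codewords of $C_1$ then gives
$$
q^2-1 \ge (n-k+2)(q-1),
$$
which simplifies to $n-k+1\le q$, that is $n\le q+k-1$.

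Finally I would invoke duality. The trace-dual $C^{\perp}=\{x\in\F_q^n\mid \tr_{q/p}(x\cdot c)=0\ \text{for all}\ c\in C\}$ is an additive $(n,q^{n-k})_q$ code, and the MacWilliams identity for additive codes, established by the same character argument used earlier for symplectic weight enumerators, forces $C^{\perp}$ to be MDS of minimum distance $k+1$; moreover $C^{\perp}$ is nontrivial precisely when $C$ is. Applying the length bound already proved to $C^{\perp}$ yields $n\le q+(n-k)-1$, i.e. $k\le q-1$. Combining this with the nontriviality constraints $2\le k$ and $k\le n-2$ gives $2\le k\le\min\{n-2,q-1\}$, and substituting $k\le q-1$ into $n\le q+k-1$ gives $n\le q+k-1\le 2q-2$, completing the proof. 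The only delicate point to verify with care is that the dual of an additive MDS code is again MDS; everything else is elementary counting.
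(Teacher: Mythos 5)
Your proof is correct, and it reaches the key inequality $n\leq q+k-1$ by a different route than the dissertation. The paper's proof is a one-liner that cites the MacWilliams relations: it quotes from MacWilliams--Sloane the formula $A_{n-k+2}=\binom{n}{k-2}(q-1)(q-n+k-1)$ for the number of codewords of weight $d+1$ (noting that the linear-code derivation carries over to additive codes) and reads off the bound from $A_{n-k+2}\geq 0$; the constraint $k\leq q-1$ is then left implicit, presumably via the dual. You instead prove the length bound by hand: the observation that an additive $(n,q^k,n-k+1)_q$ MDS code projects bijectively onto any $k$ coordinates (an orthogonal array of strength $k$ and index $1$), followed by the disjointness count $q^2-1\geq(n-k+2)(q-1)$ inside the $q^2$-element subcode vanishing on $k-2$ fixed coordinates. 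This is in fact a self-contained derivation of the nonnegativity of $A_{n-k+2}$ --- the surplus $q^2-1-(n-k+2)(q-1)$, summed over the $\binom{n}{k-2}$ choices of $T$, is exactly the paper's formula --- so the two arguments are combinatorially the same inequality, but yours avoids invoking the weight-distribution machinery for the main step and correctly circumvents the division-by-coordinates trick that is only available for linear codes. Your approach also has the merit of making explicit the one point the paper glosses over, namely that the trace-dual of an additive MDS code is again MDS (via Delsarte/MacWilliams duality for additive codes), which is what delivers $k\leq q-1$ and hence $n\leq 2q-2$. The tradeoff is length: the paper buys brevity by outsourcing the computation to MacWilliams--Sloane, while you buy self-containedness and a proof that visibly works at the level of $\F_p$-linearity only.
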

\begin{proof} The first statement is obvious. For (ii), we note that 
the weight distribution of the code $C$ and its dual are related by the
MacWilliams relations. The proof given in
\cite[p.~320-321]{macwilliams77} for linear codes applies without
change, and one finds that the number of codewords of weight $n-k+2$
in $C$ is given by
$$ A_{n-k+2}=\binom{n}{k-2}(q-1)(q-n+k-1).$$
Since $A_{n-k+2}$ must be a nonnegative number, we obtain the claim. 
\end{proof}

We say that a quantum $[[n,k,d]]_q$ MDS code is trivial if and only if
its minimum distance $d\le 2$.  The length of trivial quantum MDS
codes is not bounded, but the length of nontrivial ones is, 
as the next lemma shows.

\begin{theorem}[Maximal Length of MDS Stabilizer Codes]\label{th:mds_length}
A nontrivial $[[n,k,d]]_q$  MDS stabilizer code satisfies 
the following constraints: 
\begin{compactenum}[i)]
\item its length $n$ is in the range $4\leq n \leq q^2+d-2\leq 2q^2-2$;
\item its minimum distance satisfies $\max\{3,n-q^2+2\} 
\leq d \leq \min\{ n-1,q^2\}$. 
\end{compactenum}
\end{theorem}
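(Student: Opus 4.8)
The plan is to reduce the entire statement to the two preceding lemmas on classical additive MDS codes, namely Lemma~\ref{th:mds_classical} and Lemma~\ref{th:mds_nontrivial}. Recall that a quantum MDS code being \emph{nontrivial} means, by the definition just given, that its minimum distance satisfies $d\ge 3$. First I would extract the lower bound on the length: by the quantum Singleton bound (Corollary~\ref{th:singleton}) we have $k=n-2d+2$, and since $k\ge 0$ this yields $n\ge 2d-2\ge 4$, establishing the lower end $4\le n$ of (i).

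Next I would pass to the associated classical codes. Since $n-2d+2=k\ge 0$, Lemma~\ref{th:mds_classical} applies and shows that the additive code $C\le \F_{q^2}^n$ attached to the stabilizer code is an $[n,d-1,n-d+2]_{q^2}$ MDS code. The key observation is that this classical code is nontrivial in the sense of Lemma~\ref{th:mds_nontrivial}: its $\F_{q^2}$-dimension equals $d-1$, and $d\ge 3$ gives $d-1\ge 2$, while the bound $d\le (n+2)/2\le n-1$ (valid once $n\ge 4$) gives $d-1\le n-2$.

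I would then feed this nontrivial MDS code into Lemma~\ref{th:mds_nontrivial}(ii), reading it with alphabet size $q^2$ in place of $q$ and dimension parameter $d-1$ in place of $k$. The range condition $2\le d-1\le\min\{n-2,\,q^2-1\}$ immediately gives $d\le n-1$ and $d\le q^2$, hence $d\le\min\{n-1,q^2\}$, the upper bound of (ii). The length condition $n\le q^2+(d-1)-1\le 2q^2-2$ gives exactly $n\le q^2+d-2\le 2q^2-2$, completing (i); rearranging its left inequality as $d\ge n-q^2+2$ and combining with the nontriviality bound $d\ge 3$ produces $\max\{3,\,n-q^2+2\}\le d$, the lower bound of (ii).

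The only genuine subtlety, and the step I would take care over, is verifying that the classical code falls under the \emph{nontrivial} branch of Lemma~\ref{th:mds_nontrivial}, so that part (ii) rather than part (i) is the operative conclusion; once the two inequalities $d-1\ge 2$ and $d-1\le n-2$ are checked, everything else is bookkeeping. As a consistency check one could instead apply the lemma to the dual $C^\adual$, an $[n,n-d+1,d]_{q^2}$ MDS code, whose dimension $n-d+1$ is likewise trapped strictly between $2$ and $n-2$, and obtain the identical inequalities.
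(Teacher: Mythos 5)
Your proposal is correct and follows essentially the same route as the paper's own proof: the quantum Singleton bound gives $n\ge 2d-2\ge 4$, Lemma~\ref{th:mds_classical} produces the classical $[n,d-1,n-d+2]_{q^2}$ MDS code, the nontriviality check $2\le d-1\le n-2$ is verified exactly as in the paper via $d\le (n+2)/2\le n-1$, and Lemma~\ref{th:mds_nontrivial}(ii) over $\F_{q^2}$ yields all the stated inequalities. No gaps.
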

\begin{proof}
By definition, a quantum MDS code attains the Singleton bound, so
$n-2d+2=k\ge 0$; hence, $n\ge 2d-2$. Therefore, a nontrivial quantum
MDS code satisfies $n\ge 2d-2\ge 4$.

By Lemma \ref{th:mds_classical}, the existence of an
$[[n,n-2d+2,d]]_q$  stabilizer code implies the existence of
classical MDS codes $C$ and $C^\adual$ with parameters
$[n,d-1,n-d+2]_{q^2}$ and $[n,n-d+1,d]_{q^2}$, respectively. If the
quantum code is a nontrivial MDS code, then the associated classical
codes are nontrivial classical MDS codes. Indeed, for $n\ge 4$ the quantum
Singleton bound implies $d\le (n+2)/2\le (2n-2)/2=n-1$, so $C$ is a
nontrivial classical MDS code. 

By Lemma \ref{th:mds_nontrivial}, the dimension of $C$ satisfies the
constraints $2\leq d-1 \leq \min\{n-2,q^2-1\}$, or equivalently $3\leq d\leq
\min\{n-1,q^2 \}$. Similarly, the length $n$ of $C$ satisfies 
$n\leq q^2+(d-1)-1 \leq 2q^2-2$. If we combine these inequalities then
we get our claim.
\end{proof}

\begin{example} 
The length of a nontrivial binary MDS stabilizer code cannot exceed
$2q^2-2=6$. In \cite{calderbank98} the nontrivial MDS stabilizer codes
for $q=2$ were found to be $[[5,1,3]]_2$ and $[[6,0,4]]_2$, so there
cannot exist further nontrivial MDS stabilizer codes.
\end{example}

In~\cite{grassl04}, the question of the maximal length of MDS codes
was raised. All MDS stabilizer codes provided in that reference had a
length of $q^2$ or less; this prompted us to look at the following
famous conjecture for classical codes (cf.~\cite[Theorem~7.4.5]{huffman03} or
\cite[pages~327-328]{macwilliams77}).

\smallskip

\noindent\textbf{MDS Conjecture.} \textit{
If there is a nontrivial $[n,k]_q$ MDS code, then $n\leq q+1$ except
when $q$ is even and $k=3$ or $k=q-1$ in which case $n\leq
q+2$.}
\smallskip

If the MDS conjecture is true (and much supporting evidence is
known), then we can improve upon the result of
Theorem~\ref{th:mds_length}.

\begin{corollary}
If the classical MDS conjecture holds, then there are no nontrivial
MDS stabilizer codes of lengths exceeding $q^2+1$ except when
$q$ is even and $d=4$ or $d=q^2$ in which case $n\leq q^2+2$.
\end{corollary}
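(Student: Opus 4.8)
The plan is to reduce the quantum statement to the classical MDS conjecture by passing through the associated $\F_{q^2}$-codes. First I would invoke Lemma~\ref{th:mds_classical}: for a nontrivial $[[n,k,d]]_q$ MDS stabilizer code the associated classical codes satisfy $C\subseteq C^\adual$ and are \emph{both} MDS, with $C$ having parameters $[n,d-1,n-d+2]_{q^2}$ and $C^\adual$ having parameters $[n,n-d+1,d]_{q^2}$ over the alphabet $\F_{q^2}$. This converts a question about quantum code lengths into a question about the length of a classical MDS code over an alphabet of size $q^2$.

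Next I would check that $C$ is a \emph{nontrivial} classical MDS code, so that the conjecture is applicable. By Theorem~\ref{th:mds_length} a nontrivial quantum MDS code obeys $3\le d\le \min\{n-1,q^2\}$; hence the dimension $k_C=d-1$ of $C$ satisfies $2\le k_C\le n-2$, which is precisely the condition that $C$ is nontrivial in the classical sense (neither $k_C\le 1$ nor $k_C\ge n-1$).

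The heart of the argument is then to apply the MDS conjecture to $C$ with alphabet size $Q=q^2$. The conjecture bounds the length by $n\le Q+1=q^2+1$ unless $Q$ is even and $k_C\in\{3,\,Q-1\}$, in which case $n\le Q+2=q^2+2$. Since $Q=q^2$ is even if and only if $q$ is even, and since $k_C=d-1$, the exceptional dimensions $k_C=3$ and $k_C=q^2-1$ translate exactly into $d=4$ and $d=q^2$. This produces the claimed length bound together with the stated list of exceptions; applying the conjecture to $C$ (rather than $C^\adual$) is what makes the exceptions come out phrased in terms of $d$ directly.

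The main obstacle is a mismatch in hypotheses: the classical MDS conjecture is usually stated for \emph{linear} codes, whereas $C$ is a priori only \emph{additive} over $\F_{q^2}$ (it becomes $\F_{q^2}$-linear exactly when the stabilizer code is a linear quantum code). The clean route is therefore to either restrict attention to linear MDS stabilizer codes or to invoke an additive analogue of the conjecture; once the conjecture is granted in the appropriate setting, the remaining work (matching the parity condition on $q$ and translating dimensions into distances via $k_C=d-1$) is routine.
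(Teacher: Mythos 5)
Your proof is correct and is exactly the argument the paper intends: the corollary is stated there without proof as an immediate consequence of Lemma~\ref{th:mds_classical}, Theorem~\ref{th:mds_length}, and the MDS conjecture applied over the alphabet $\F_{q^2}$ (with $k_C=d-1$ turning the exceptional dimensions $3$ and $q^2-1$ into $d=4$ and $d=q^2$). Your closing remark about the additive-versus-linear mismatch is a fair observation about the paper itself, which states the conjecture for linear codes yet applies it to the a priori only additive code $C$; strictly speaking the corollary needs either $\F_{q^2}$-linearity of $C$ or an additive analogue of the conjecture.
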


\section{Conclusions}
In this chapter we applied the theory developed in Chapter~\ref{ch:stabq1} 
to derive classes of quantum codes. This work has also led to 
the construction of many more families of codes. The interested
reader can find the details in \cite{aly07}.
Table~\ref{tb:families} gives an overview and
summarizes the main parameters of these families.
We also illustrated the theory of puncture codes by deriving new codes
from quantum BCH codes. 
One central theme in quantum error-correction is the construction of
codes that have a large minimum distance. We were able to show that
the length of an MDS stabilizer code over $\F_q$ cannot exceed
$q^2+1$, except in a few sporadic cases, assuming that the classical
MDS conjecture holds.  An open problem is whether the length $n$ of a
$q$-ary quantum MDS code is bounded by $q^2+1$ for all but finitely
many $n$. Another related problem is to construct analytically quantum
MDS codes between lengths $q$ and $q^2$. Currently, constructions are
known only for a few lengths in this range. 

{
\begin{sidewaystable}[h]
\small
\caption{A compilation of known families of quantum
codes}\label{tb:families}
\begin{center}
\begin{tabular}{@{}l@{}||c|@{}c@{}|@{}c@{}}
Family & $[[n,k,d]]_q$ & Purity & Parameter Ranges and References \\
\hline\hline 
Short MDS & $[[n,n-2d+2,d]]_q$ & pure &$2\leq d\leq \lceil n/2\rceil$, $q^2-1\ge \binom{n}{d}$\\
\hline
Hermitian Hamming & $[[n,n-2m,3]]_q$& pure & $m\ge 2$, $\gcd(m,q^2-1)=1$, $n=(q^{2m}-1)/(q^2-1)$\\
\hline
 Euclidean Hamming & $[[n,n-2m,3]]_q$& pure & $m\ge 2$, $\gcd(m,q-1)=1$, $n=(q^m-1)/(q-1)$\\
\hline
Quadratic Residue  I &$[[n,1,d]]_q$ & pure & $n$ prime, $n\equiv 3 \mod 4$,
$q\not\equiv 0 \mod n$\\
& & & $q$ is a quadratic residue modulo~$n$,
$d^2-d+1\ge n$\\
\hline
Quadratic Residue II &$[[n,1,d]]_q$& pure&
$n$ prime, $n\equiv 1 \mod 4$, $q\not\equiv 0\mod n$\\
& & & $q$ is a quadratic residue modulo~$n$,  $d\ge \sqrt{n}$\\
\hline
Melas  &$[[n,n-4m,\ge
3]]_q$ &pure & $q$ even, $n=q^{2m}-1$, Pure to 3\\
\hline
Euclidean BCH & $[[n,n-2m\lceil(\delta-1)(1-1/q)\rceil,\geq \delta]]_q$&
 pure& $2\leq \delta \leq q^{\lceil m/2 \rceil}-1-(q-2)[m \textrm{ odd}]$\\
& & to $\delta$ & $n=q^m-1$ and $m\geq2$\\
Punctured BCH & $[[d^*(\mu),\geq d^*(\mu)-2m\lceil(\delta-1)(1-1/q)\rfloor,\geq \delta]]_q$ & pure?& $\delta <q^{\lfloor m/2\rceil}-1$, See Corollary~\ref{th:bchpunc}\\
\hline
Hermitian BCH & $[[n,n-2m\lceil(\delta-1)(1-1/q^2)\rceil,\geq \delta]]_q$ & pure&
 $2\leq \delta \leq q^{m}-1$, $n=q^{2m}-1$, Pure to $\delta$ \\
Extended BCH & $[[n+1,n-2m\lceil(\delta-1)(1-1/q^2)\rceil-1,\geq \delta+1]]_q$ & pure&Pure to $\delta+1$\\
\hline Trivial MDS & $ [[n,n-2,2]]_q$ &pure & $n\equiv 0 \mod p$\\
& $[[n,n,1]]_q$& pure& $n\geq 1$\\
\hline Character & $[[n,k(r_2)-k(r_1),\min \{ 2^{m-r_2},2^{r_1+1} \}]]_q$& pure&$n=2^m$,
$q$ odd, $0\leq r_1<r_2\leq m$, $k(r)=\sum_{j=0}^r\binom{m}{j}$\\
\hline CSS GRM &
$[[q^{m},k(\nu_2)-k(\nu_1),\min\{d(\nu_2),d(\nu_1^\perp)\} ]]_q$ &
pure& $k(\nu)= \sum_{j=0}^m (-1)^j
\binom{m}{j}\binom{m+\nu-jq}{\nu-jq}$, $\nu^\perp=m(q-1)-\nu-1$  \\
&$0\leq \nu_1\leq \nu_2\leq m(q-1)-1$ & & $\nu^\perp+1=(q-1)Q+R$, $d(\nu)=(R+1)q^Q$\\
Punctured GRM & $[[d(\mu),\geq k(\nu_2)-k(\nu_1)-(n-d(\mu)),\geq d ]]_q$ &
pure?& $d \geq \min
\{d(\nu_2),d(\nu_1^\perp)\}$, $0\leq \mu\leq \nu_2-\nu_1$; \cite{klappenecker05p1}\\
\hline Hermitian GRM & $[[q^{2m},q^{2m}-2k(\nu),d(\nu^\perp) ]]_q$& pure &
$ k(\nu) =\sum_{j=0}^m (-1)^j \binom{m}{j}\binom{m+\nu-jq^2}{\nu-jq^2}$, $\nu^\perp=m(q^2-1)-\nu-1$\\ 
&  $0 \leq \nu \leq m(q-1)-1$ & & $\nu^\perp+1 =(q^2-1)Q+R$, $d(\nu)=(R+1)q^{2Q}$\\
Punctured GRM & $[[d(\mu^\perp),\geq d(\mu^\perp)-2k(\nu),\geq d(\nu^\perp) ]]_q$& pure?&
$(\nu+1)q\leq \mu\leq m(q^2-1)-1$; \cite{klappenecker05p1}\\
\hline Punctured MDS & $[[q^2-q\alpha, q^2-q\alpha-2\nu-2,\nu+2]]_q$ &pure & $0\leq \nu\leq q-2$, $0\leq \alpha \leq q-\nu-1$; \cite{klappenecker05p1} \\
\hline Euclidean MDS & $[[n,n-2d+2,d]]_q$& pure&$3\leq n\leq q, 1\leq d\leq
n/2+1$; \cite{grassl03}\\
\hline Hermitian MDS & $[[q^2-s,q^2-s-2d+2,d]]_q$& pure&$1\leq d\leq
q, s=0,1$; \cite{grassl03}\\
\hline  Twisted & $[[q^2+1,q^2-3,3]]_q$& pure?&  \cite{bierbrauer00}\\
\hline
Extended Twisted & $[[q^r,q^r-r-2,3]]_q$ &pure& $r\geq 2$; \cite{bierbrauer00}\\
&$[[n,n-r-2,3]]_q$& pure &
$n=(q^{r+2}-q^3)/(q^2-1)$, $r\geq 1$, $r$ odd; \cite{bierbrauer00}\\
\hline Perfect & $[[n,n-r-2,3]]_q$&pure&
$n=(q^{r+2}-1)/(q^2-1)$, $r\geq 2$, $r$ even; \cite{bierbrauer00}
\end{tabular}
\end{center}
\end{sidewaystable}
}


\chapter{Subsystem Codes -- Beyond Stabilizer Codes\footnotemark}\label{ch:subsys1}
\footnotetext{Part of the material in this chapter has been submitted to IEEE and currently under review. Copyright maybe transferred to IEEE.}
\noindent 
In this chapter we study a recent generalization of quantum codes
that unifies many apparently disparate notions of quantum error
correction. This generalization called operator quantum error
correction gathers within its framework both passive and active
error correction schemes, among them decoherence free subspaces (DFS), 
noiseless subsystems (NS), and standard quantum error-correcting codes
(including stabilizer codes which formed the main theme of
the last two chapters). 
Our main contribution in this chapter is to provide a
natural construction of such codes in terms of Clifford codes, an
elegant generalization of stabilizer codes due to
Knill. Character-theoretic methods are used to derive a simple method
to construct operator quantum error-correcting codes from any
classical additive code over a finite field, which obviates the need for
self-orthogonal codes. In view of its importance and also to better appreciate our
contribution we shall spend a little time reviewing operator quantum 
error correction. The following review 
summarizes the key points of \cite{kribs05,kribs05b} relevant
for our discussion. A quick word about the nomenclature. These codes
were originally studied in the context of operator algebras and hence,
were named operator quantum error correcting codes. We shall often
use the descriptive term subsystem codes in view of brevity. Both will
be used interchangeably. 

\smallskip

{\small \textit{Notation.} If $N$ is a group, then $Z(N)$ denotes the
center of $N$.  We denote by $\Irr(N)$ the set of irreducible
characters of $N$. If $\chi$ and $\psi$ are characters of $N$, then
$(\chi,\psi)_N=|N|^{-1}\sum_{n\in N} \chi(n)\psi(n^{-1})$ defines a
scalar product on the vector space of class functions on $N$, and
$\Irr(N)$ is an orthonormal basis of this space. We denote by $\supp(\chi)=\{ n\in N|\, \chi(n)\neq 0\}$.
If $\chi\in \Irr(N)$,
then $Z(\chi)=\{ n\in N\,|\, \chi(1)=|\chi(n)|\}$ denotes the
quasikernel of $\chi$. Suppose that $G$ is a group that contains $N$
as a subgroup.  If $\phi\in \Irr(G)$, then $\phi_N$ denotes the
restriction of this character to $N$. If $x,y\in N$, then
$[x,y]=x^{-1}y^{-1}xy$ is the commutator. If $A$ and $B$ are subgroups
of a group, then $[A,B]=\langle [a,b]\,|\, a,\in A \text{ and } b\in
B\rangle$ is the commutator subgroup of $A$ and $B$. In particular,
$N'=[N,N]$ denotes the derived subgroup of $N$.  The reader can find
background material on finite groups in \cite{robinson95} and on
character theory in \cite{isaacs94}.
As usual let $\Hi$ be the system Hilbert space under consideration. Let
$\Bh$ denote bounded linear operators on $\Hi$.  
}

\section{Review of  Operator Quantum Error Correction}

The class of codes which we considered in the last two chapters 
come within the framework of a model often called the standard
model. Mathematically, this model 
is defined as a triple $(\R, \E, \mc{C})$, where
$\E $ is the quantum channel, $\mc{C}$  a subspace  of $\Hi$
and $\R $  a recovery operation. 
Additionally, we define a projector $P$ onto the codespace $\mc{C}$, thus 
$\mc{C}=P \Hi $. For any density operator
$\rho$ supported by $\mc{C}$ {\em i.e.} $\rho $ in $\mathcal{B}(\mc{C})$ or equivalently $\rho = P\rho P$, the triple satisfies the following relation:
\begin{eqnarray}
(\mathcal{R}\circ \mathcal{E} )(\rho) &=& \rho \mbox{ for all } \rho = P \rho P.
\end{eqnarray}

As we can see the standard model assumes a recovery operation $\mathcal{R}$.
In general $\mathcal{R}$ is  nontrivial which in turn implies some form of active monitoring of
the encoded quantum information in order to detect and correct the errors
that occur. An alternative approach is to rely on
passive error correction mechanisms, exemplified by decoherence free
subspaces and noiseless subsystems.  

If we want to avoid performing active error correction, we are naturally
led to the idea that the encoded states should not be affected by
the channel. In other words, we must encode into $\Fix(\mc{E})$, the fixed points of
$\E$ where
$$
\Fix(\E)= \{ \rho \in \Bh \mid \E(\rho)=\rho\}. 
$$
These fixed points can be nicely characterized for a certain class of quantum channels. 
Given a quantum channel $\mathcal{E}$, we can write the channel in 
terms of its Kraus operators as follows
\begin{eqnarray}
\mathcal{E}(\rho) &=& \sum_{i}E_i \rho E_i^{\dagger}.	\label{eq:krausRep}
\end{eqnarray}
Because of this decomposition we often write the channel $\mathcal{E} =\{E_i,E_i^{\dagger} \}$.
When the quantum  channels  satisfy the condition 
\begin{eqnarray}
\sum_{i}E_i  E_i^{\dagger} = I, \label{eq:unital}
\end{eqnarray}
we have a convenient way to characterize the fixed points. Channels satisfying equation~(\ref{eq:unital})
are called unital channels. 
Let $\rho E_i =E_i \rho $ for any $E_i$. Then under the unital assumption all such $\rho$
are fixed points of $\mc{E}$ as 
\begin{eqnarray}
\mathcal{E}(\rho) &=& \sum_{i}E_i \rho E_i^{\dagger} = \rho 
\sum_{i}E_i  E_i^{\dagger} = \rho.\label{eq:AsubsetFix}
\end{eqnarray}
We denote by $\mathcal{A}$, the matrix polynomials generated by $\{ E_i, E_i^{\dagger} \}$
{\em i.e.}, the algebra generated by $\{E_i, E_i^{\dagger}\}$. This is  called the
\textsl{interaction algebra} \index{interaction algebra} in the literature. 
The \textsl{noise communtant} $\mathcal{A}'$ \index{noise commutant}
is defined as 
\begin{eqnarray}
\mc{A}'= \left\{\rho \in \Bh \mid \rho E=E\rho \mbox{ for any } E \in \{E_i,E_i^{\dagger} \}\right\}.
\end{eqnarray}
From equation~(\ref{eq:AsubsetFix}), it follows that $\mc{A}' \subseteq \Fix(\E)$. In fact, for unital channels it was
shown that $\Fix(\E) = \mc{A}'$. 
Using results on $\C^*$ algebras, 
Kribs {\em et al.}, showed that the 
interaction algebra
has a representation of the form 
\begin{eqnarray}
\mc{A} \cong \bigoplus_j I_{K_j} \otimes \mc{B}(\Hi_{j}^B) \cong \bigoplus_j I_{K_j} \otimes M_{R_j},
\label{eq:chStruct}
\end{eqnarray}
where $M_{R_j}$ is $R_j$-dimensional matrix algebra (over $\C$).
This representation induces the following structure on $\Hi$
\begin{eqnarray}
\Hi \cong \bigoplus_j \Hi_{j}^A \otimes \Hi_{j}^{B},
\end{eqnarray}
where  $\dim \Hi_{j}^A = K_j$ and $\dim \Hi_{j}^B = R_j$. 
Since $\E$ (and $\mc{A}$) act trivially on $\Hi_{j}^A$, the
subsystems $\Hi_j^{A}$ are called noiseless subsystems. \index{noiseless subsystems}
To simplify matters we usually encode into only
one subsystem, which gives us the following decomposition 
\begin{eqnarray}
\Hi &=& \mc{C}\oplus \mc{C}^\perp = (\Hi^A\otimes \Hi^B) \oplus \mc{C}^\perp,
\end{eqnarray}
where $\mc{C}^\perp$ is the complement of $\mc{C}$. 
Let  $\dim \Hi^A = K$ and $\dim \Hi^B=R$. Then $\dim \mc{C}= KR $ and $\dim \mc{C}^\perp=\dim \Hi -KR$.
Let us denote operators in $\mc{B}(\Hi^A)$ and  $\mc{B}(\Hi^B)$ as $\rho^A$ and 
$\rho^B$ respectively. 
The (standard) noiseless subsystem \index{noiseless subsystems} 
given by $\mc{C}$  consists of operators in 
$\mc{B}(\Hi^A\otimes \Hi^B)$ that are of the form $\mc{B}(\Hi^A) \otimes I_{R}$
in other words  $\rho^A\otimes I_R$.
In this case the co-subsystem $B$ is in the maximally mixed state.
The codespace $\mc{C}$ is an algebra of operators.
Decoherence free subspaces are noiseless subsystems with the dimension of the
co-subsystem equal to one. \index{decoherence free subspaces}
In this case the codespace $\mc{C}$ 
is a subspace of $ \Hi$.
 
One of the insights of \cite{kribs05} was that we can relax the constraint
that the co-subsystem $B$ should be in the maximally mixed state. This led to the idea of
generalized noiseless subsystems. \index{noiseless subsystems!generalized}
In this case the noiseless subsystem code is
given by the operators in $\mc{B}(\Hi)$ that are of the form $(\rho^A \otimes \rho^B)$.
Comparing with equation~(\ref{eq:chStruct}) we can see that in this case we 
are not always encoding into the fixed points of $\mc{E}$. The codespace
instead of being an algebra of operators is now a monoid\footnote{In \cite{kribs05},
they refer to $\mc{C}$ as a semigroup even though $\mc{C}$ is equipped with identity.} 
of operators of the form $\rho^A\otimes \rho^B$. 
Given a decomposition of $\Hi= \Hi^A\otimes \Hi^B \oplus \mc{C}^\perp$ and
orthonormal bases $\{\ket{\alpha_i}\}_{i=1}^n$, 
and $\{\ket{\beta_j}\}_{j=1}^m$ for $\Hi^A$ and $\Hi^B$ respectively, we define a 
projector onto $\mc{C}=\Hi^A\otimes \Hi^B  = P \Hi$ as 
\begin{eqnarray}
P &=&  \mbf{1}^{AB} = \mbf{1}^A \otimes \mbf{1}^B = (\sum_i\ket{\alpha_i}\bra{\alpha_i}) \otimes (\sum_j\ket{\beta_j}\bra{\beta_j}).
\end{eqnarray}
The action of $P$ on $\rho$ is defined as $P \rho P$.
Then a generalized noiseless subsystem is defined as follows, see \cite[Lemma~2]{kribs05}.
\begin{lemma}[Generalized noiseless subsystems \cite{kribs05}] \index{noiseless subsystems!generalized}
Given a fixed decomposition of $\Hi = \Hi^A\otimes \Hi^B \oplus \mc{C}^\perp$ and  a CPTP map $\mc{E}$,
define $\mc{C} = \{\rho \in \mc{B}(\Hi)\mid \rho = \rho^A\otimes \rho^B \}$.
Then the following conditions are equivalent and define a generalized noiseless subsystem 
$\Hi^A$.
\begin{compactenum}[i)]
\item $\mc{E}(\rho^A\otimes\rho^B) = \rho^A\otimes \sigma^B$, for all $\rho^A\otimes \rho^B \in \mc{C}$
and  some $\sigma^B$.
\item $\mc{E}(\rho^A\otimes I_B) = \rho^A\otimes \sigma^B$, for all $\rho^A\otimes I_B \in \mc{C}$ and some $\sigma^B$.
\item $(\Tr_A \circ P\circ \mc{E})(\rho) = \Tr_A(\rho)$, for all $\rho \in \mc{C}$.
\end{compactenum}
\end{lemma}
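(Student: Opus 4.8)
The plan is to establish the two equivalences i) $\Leftrightarrow$ ii) and i) $\Leftrightarrow$ iii) separately, which together give the full statement. Throughout I read $\Tr_A$ as the partial trace producing the reduced operator on $\Hi^A$, so that $\Tr_A(\rho^A\otimes\rho^B)=\Tr(\rho^B)\,\rho^A$; with this convention iii) genuinely expresses that the $A$-marginal is preserved, as befits a noiseless subsystem. Two of the four sub-implications are immediate. For i) $\Rightarrow$ ii) I simply specialise $\rho^B=I_B$, which is legitimate by linearity of $\mc E$ since $I_B=\sum_j\ket{\beta_j}\bra{\beta_j}$ and each summand gives a product output. For i) $\Rightarrow$ iii), assuming $\mc E(\rho^A\otimes\rho^B)=\rho^A\otimes\sigma^B$, the output already lives in $\mc C=\Hi^A\otimes\Hi^B$, so $P\mc E(\rho)P=\rho^A\otimes\sigma^B$ and $\Tr_A$ of it is $\Tr(\sigma^B)\,\rho^A$; trace preservation of $\mc E$ forces $\Tr(\sigma^B)=\Tr(\rho^B)$, which matches $\Tr_A(\rho^A\otimes\rho^B)$ exactly.

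The substance lies in the reverse directions, and both rest on a single structural fact about the Kraus operators $\{E_i\}$ of $\mc E$. The key observation is that a positive operator on $\Hi^A\otimes\Hi^B$ whose reduced operator on $\Hi^A$ is proportional to a pure state $\ket{\psi}\bra{\psi}$ must itself factorise as $\ket{\psi}\bra{\psi}\otimes\tau^B$. Writing $\mc E(\ket{\psi}\bra{\psi}\otimes\rho^B)=\sum_i E_i(\ket{\psi}\bra{\psi}\otimes\rho^B)E_i^{\dagger}$ as a sum of positive terms whose total is supported on $\ket{\psi}\otimes\Hi^B$, positivity localises each term, giving $E_i(\ket{\psi}\otimes\Hi^B)\subseteq \ket{\psi}\otimes\Hi^B$ for every $\psi$. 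Running this over a basis of $\Hi^A$ and its superpositions forces each $E_i$, restricted to the code, to have the rigid tensor form $I_A\otimes B_i$. Once this is known, $\mc E(\rho^A\otimes\rho^B)=\rho^A\otimes\bigl(\sum_i B_i\,\rho^B B_i^{\dagger}\bigr)=\rho^A\otimes\sigma^B$ for \emph{all} $\rho^A$, pure or mixed, which is precisely i).

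For ii) $\Rightarrow$ i) the hypotheses of this argument are read off directly: condition ii) already asserts that $\mc E(\ket{\psi}\bra{\psi}\otimes I_B)$ is a product supported on the code, which is all the localisation step needs. For iii) $\Rightarrow$ i) there is one extra preliminary, namely ruling out leakage into $\mc C^\perp$. This comes for free from iii): taking the full trace of $\Tr_A(P\mc E(\rho)P)=\Tr_A(\rho)$ gives $\Tr(P\mc E(\rho)P)=\Tr(\rho)=\Tr(\mc E(\rho))$, hence $\Tr\bigl((I-P)\mc E(\rho)(I-P)\bigr)=0$; since $\mc E(\rho)\ge 0$ this forces $\mc E(\rho)=P\mc E(\rho)P$, i.e. no leakage. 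Extending iii) by linearity and specialising to a pure factor $\rho^A=\ket{\psi}\bra{\psi}$ then yields a product output supported on $\ket{\psi}\otimes\Hi^B$, and the structural argument of the previous paragraph concludes.

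I expect the main obstacle to be the structural step itself, that is, passing from the marginal conditions to the tensor form $E_i=I_A\otimes B_i$. This genuinely uses complete positivity, through the Kraus representation and the positivity of the individual terms, rather than mere linearity; in particular the mixed-$\rho^A$ case of i) cannot be recovered by convexity from the pure case and only emerges after the Kraus operators have been pinned down. A secondary point requiring care is the bookkeeping around the convention for $\Tr_A$ and the fact that $P$ is an unnormalised projector, since it is precisely the trace-preservation identities that drive both the easy implications and the no-leakage step.
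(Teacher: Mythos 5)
Your proof is correct, but note that the dissertation itself offers no proof of this lemma: it is quoted verbatim from Kribs, Laflamme and Poulin (cited as \cite[Lemma~2]{kribs05}) as part of a literature review, so there is no in-paper argument to compare against. Your argument is the standard one from that literature: the two easy implications by specialisation and trace preservation, then the structural step in which positivity of the individual Kraus terms $E_i(\ket{\psi}\bra{\psi}\otimes I_B)E_i^{\dagger}$, summing to an operator supported on $\ket{\psi}\otimes\Hi^B$, forces each $E_i$ to preserve $\ket{\psi}\otimes\Hi^B$, and ranging over superpositions of $\ket{\psi}$ pins down $E_i|_{\mc{C}}=I_A\otimes B_i$. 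Your handling of the two genuinely delicate points is right: the no-leakage argument $\Tr\bigl((I-P)\mc{E}(\rho)(I-P)\bigr)=0\Rightarrow \mc{E}(\rho)=P\mc{E}(\rho)P$ needed before iii) can be localised, and the observation that the mixed-$\rho^A$ case of i) does not follow from the pure case by convexity alone but only after the Kraus operators have been shown to factorise. Your reading of $\Tr_A$ as the partial trace \emph{onto} $\Hi^A$ (i.e.\ tracing out $B$) is also the correct interpretation of the statement; read literally as tracing out $A$, condition iii) would assert preservation of the $B$-marginal, which is not what a noiseless subsystem $\Hi^A$ requires.
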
 
Kribs {\em et al.}, \cite{kribs05,kribs05b} generalized these ideas further by incorporating
active error correction also on the subsystem $A$. As in the standard model we now
define a recovery operation $\mc{R}$, that restores the subsystem $B$ after the 
error. The definition is as follows. 
\begin{lemma}[Operator quantum error correcting codes \cite{kribs05}]\index{subsystem code}
\index{operator quantum error-correcting code}
Given a fixed decomposition of $\Hi = \Hi^A\otimes \Hi^B \oplus \mc{K}$ and  a CPTP map $\mc{E}$,
define $\mc{C} = \{\rho \in \mc{B}(\Hi)\mid \rho = \rho^A\otimes \rho^B \}$.
Then the following conditions are equivalent and define an operator quantum error correcting
code $\mc{C}$ with recovery operation $\mc{R}$.
\begin{compactenum}[i)]
\item $\mc{R} \circ \mc{E}(\rho^A\otimes\rho^B) = \rho^A\otimes \sigma^B$, for all $\rho^A\otimes \rho^B \in \mc{C}$
and  some $\sigma^B$
\item $\mc{R} \circ \mc{E}(\rho^A\otimes I_B) = \rho^A\otimes \sigma^B$, for all $\rho^A\otimes I_B \in \mc{C}$ and some $\sigma^B$
\item $(\Tr_A \circ P\circ \mc{R} \circ \mc{E})(\rho) = \Tr_A(\rho)$, for all $\rho \in \mc{C}$.
\end{compactenum}
\end{lemma}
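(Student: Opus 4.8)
\noindent\emph{Proof plan.} The plan is to fix the composed map $\mc{F}=\mc{R}\circ\mc{E}$, which is a CPTP map on $\Bh$, and to read the three statements as properties of $\mc{F}$ relative to the decomposition $\Hi=(\Hi^A\otimes\Hi^B)\oplus\mc{K}$. Two implications are immediate. First, i) $\Rightarrow$ ii) by specializing $\rho^B=I_B$ (the maximally mixed co-subsystem, up to normalization). Second, i) $\Rightarrow$ iii): under i) the output $\rho^A\otimes\sigma^B$ already lies in $\mc{C}$, so $P$ acts as the identity on it, and applying $\Tr_A$ to $\rho^A\otimes\sigma^B$ returns the same reduced description as $\Tr_A(\rho^A\otimes\rho^B)$ once we use that $\mc{F}$ is trace preserving (so $\Tr\sigma^B=\Tr\rho^B$). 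Thus the whole content of the lemma reduces to a single rigidity claim: each of ii) and iii) forces the \emph{map-level} factorization $\mc{F}|_{\mc{C}}=\mathrm{id}^A\otimes\Lambda^B$ for some CP map $\Lambda^B$ on $\mc{B}(\Hi^B)$; once this structure is known, applying it to an arbitrary $\rho^A\otimes\rho^B$ yields i) with $\sigma^B=\Lambda^B(\rho^B)$.

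\smallskip

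\noindent To prove the rigidity claim from ii), the first step I would take is to probe $\mc{F}$ with an entangled reference. Introduce a copy $\Hi^{A'}$ of $\Hi^A$ and the (unnormalized) maximally entangled vector $\ket{\Phi}=\sum_i\ket{i}_{A'}\ket{i}_A$, and feed $\ket{\Phi}\bra{\Phi}\otimes(I_B/R)$ into $\mathrm{id}_{A'}\otimes\mc{F}$. Condition ii), being linear in $\rho^A$, is equivalent to the statement that this input's $A'A$ block is returned unchanged, i.e. the maximal entanglement between $A'$ and $A$ survives $\mc{F}$ intact. The key tool is the rigidity of perfect entanglement preservation: a channel that leaves a maximally entangled state with a reference invariant must act on that tensor factor as the identity tensored with a channel on the complementary factor. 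In Kraus language this forces $F_k P=(I^A\otimes L_k^B)+N_k$ with $N_k$ mapping into $\mc{K}$, and trace preservation of $\mc{R}$ into the code kills the leakage terms $N_k$, leaving $\mc{F}|_{\mc{C}}=\mathrm{id}^A\otimes\Lambda^B$ with $\Lambda^B(X)=\sum_kL_k^BX(L_k^B)^\dagger$. This establishes ii) $\Rightarrow$ i).

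\smallskip

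\noindent Finally I would close the cycle through iii). Condition iii) says that $\Tr_A\circ P\circ\mc{F}$ agrees with $\Tr_A$ on all of $\mc{C}$, i.e. the protected factor's reduced description is recovered exactly for every product input; running the same entangled-reference argument (now the hypothesis is that the $A'$-marginal of the output equals that of the input) again forces the factorized structure and hence i), while iii) $\Rightarrow$ ii) is the special case $\rho^B=I_B$. The step I expect to be the real obstacle is precisely this rigidity lemma --- upgrading a hypothesis that constrains only a single, maximally mixed co-subsystem input (condition ii), or only a partial-trace marginal (condition iii), to a statement about $\mc{F}$ on \emph{all} inputs. Mere positivity is not enough here: the argument genuinely uses complete positivity, entering either through the Choi--Jamio\l{}kowski state or through the faithfulness of $I_B/R$, and some care is needed to control the off-code block $\mc{K}$ through the projector $P$ so that the leakage terms really do vanish. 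This mirrors the operator-algebraic argument of \cite{kribs05}, of which the above is a streamlined, channel-theoretic rendering.
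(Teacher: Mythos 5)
The paper itself does not prove this lemma: it is imported verbatim from Kribs et al.\ \cite{kribs05} as part of the background review of operator quantum error correction, so there is no in-paper proof to measure your argument against. Judged on its own terms, your plan follows the standard route from the OQEC literature: reduce all three conditions to the map-level factorization $(\mc{R}\circ\mc{E})|_{\mc{C}}=\mathrm{id}^A\otimes\Lambda^B$ and establish that factorization by an entangled-reference (Choi--Jamio{\l}kowski) rigidity argument. The architecture is sound, and the easy implications i) $\Rightarrow$ ii) and i) $\Rightarrow$ iii) are handled correctly.

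Two places need more care than your sketch gives them. First, in ii) $\Rightarrow$ i) you attribute the vanishing of the leakage terms $N_k$ to ``trace preservation of $\mc{R}$ into the code''; in fact they vanish for a more elementary reason: condition ii) asserts the output \emph{equals} $\rho^A\otimes\sigma^B$, an operator supported on $\Hi^A\otimes\Hi^B$, so by positivity each Kraus output vector $(I\otimes F_k)(\ket{\Phi}\otimes\ket{b})$ must lie in the support of $\ket{\Phi}\bra{\Phi}\otimes\sigma^B$, which already forces $F_kP=I^A\otimes L_k^B$ with no $\mc{K}$-component. (You should also note explicitly that $\sigma^B$ cannot depend on $\rho^A$, which follows from linearity of $\mc{F}$ applied to two linearly independent inputs.) Second, the implication out of iii) is genuinely harder than out of ii), because iii) constrains only a partial-trace marginal and says nothing a priori about support in $\mc{K}$ or about product structure; ``running the same argument'' elides the two steps that do the work: (a) taking the full trace of iii) gives $\Tr[P\,\mc{F}(\rho)\,P]=\Tr\rho=\Tr[\mc{F}(\rho)]$, so trace preservation kills any component of the output in $\mc{K}$; and (b) a positive operator on $\Hi^{A'}\otimes\Hi^A\otimes\Hi^B$ whose $A'A$-marginal is the pure state $\ket{\Phi}\bra{\Phi}$ must factor as $\ket{\Phi}\bra{\Phi}\otimes\sigma^B$. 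With those spelled out the cycle closes. Finally, be aware that the $\Tr_A$ in the statement must be read as the partial trace \emph{onto} $A$ (tracing out $B$ and $\mc{K}$), as you have implicitly done; read literally as ``trace out $A$'' condition iii) would assert preservation of the gauge marginal and the equivalence would fail.
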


We are often more interested in a simple condition that identifies 
correctable errors for a given channel $\mc{E} =\{ E_a,E_a^\dagger\}$ or equivalently,
the detectable errors for a given subspace in $\Hi$. 
Recall that if a code corrects the set of errors in $\Sigma = \{ E_a\}$, it detects
all the errors in the algebra $\Sigma_D= \{ E_a^\dagger E_b \mid E_a,E_b\in \Sigma\}$.
\begin{theorem}[\cite{kribs05,nielsen05}]
Let $\Hi= \Hi^A\otimes \Hi^B\oplus \mc{K}$ and $P=\mbf{1}^A\otimes \mbf{1}^B$ be a
projector onto $\mc{C} = \Hi^A\otimes \Hi^B =P \Hi$. 
Then an  error $E$  is detectable by the
operator quantum error correcting code $\mc{C}$
if and only if 
\begin{eqnarray}
P E  P = \mbf{1}^A\otimes \rho_E^B \mbox{ for some } 
\rho_{E}^B \in \mc{B}(\Hi^B).\label{eq:oqecDetect}
\end{eqnarray}
\end{theorem}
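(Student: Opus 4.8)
The plan is to adapt the Knill--Laflamme argument for ordinary quantum codes to the bipartite setting $\mc{C}=\Hi^A\otimes \Hi^B$, exploiting the fact that the codespace carries a tensor factorization while we only demand protection of the logical factor $\Hi^A$. First I would fix orthonormal bases $\{\ket{i}\}$ of $\Hi^A$ and $\{\ket{\mu}\}$ of $\Hi^B$, so that $\{\ket{i}\otimes\ket{\mu}\}$ is an orthonormal basis of $\mc{C}=P\Hi$, and rewrite everything in terms of the matrix elements $\bra{i\mu}E\ket{j\nu}$. A direct expansion shows that the target identity $PEP=\mbf{1}^A\otimes\rho_E^B$ is equivalent to the scalar condition
\begin{equation}
\bra{i\mu}E\ket{j\nu}=\delta_{ij}\,(\rho_E^B)_{\mu\nu}
\label{eq:detectME}
\end{equation}
for all $i,j,\mu,\nu$, where $(\rho_E^B)_{\mu\nu}=\bra{\mu}\rho_E^B\ket{\nu}$. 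This reformulation is the first step: it reduces the whole theorem to proving that detectability of $E$ is equivalent to~(\ref{eq:detectME}), which is now a purely linear-algebraic statement on $\Hi^A\otimes\Hi^B$.

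For the reverse direction I would show that~(\ref{eq:detectME}) makes $E$ harmless on the logical factor. Any state $\rho^A\otimes\rho^B$ supported on $\mc{C}$ satisfies $P(\rho^A\otimes\rho^B)P=\rho^A\otimes\rho^B$, and since $PE^\dagger P=(PEP)^\dagger=\mbf{1}^A\otimes(\rho_E^B)^\dagger$, a short computation gives
\begin{equation}
P\,E(\rho^A\otimes\rho^B)E^\dagger\,P=\rho^A\otimes\big(\rho_E^B\,\rho^B\,(\rho_E^B)^\dagger\big).
\end{equation}
Thus the $\Hi^A$ factor is left untouched and only the co-subsystem $\Hi^B$ is affected. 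The complementary component of $E\ket{\psi}$ lying in $\mc{C}^\perp=\mc{K}$ is flagged by the syndrome measurement $\{P,\,I-P\}$, which acts trivially on $\Hi^A$; in either case the information in $\Hi^A$ survives, so $E$ is detectable---equivalently, the pair $\{I,E\}$ is correctable in the sense of condition~(i) of the operator quantum error correcting code lemma.

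For the forward direction I would use the operational definition of detectability supplied by the recovery framework: $E$ is detectable precisely when subsystem $A$ can be restored for every encoded state, i.e. condition~(i) (or its partial-trace form~(iii)) of the operator quantum error correcting code lemma holds for the error set $\{I,E\}$. Feeding the inputs $\ket{i\mu}\bra{j\nu}$ into that requirement and demanding that $\Tr_A$ of the recovered state equal $\Tr_A$ of the input forces the off-logical terms with $i\neq j$ to vanish and the diagonal terms to be independent of the logical index $i$. Collecting these constraints yields exactly~(\ref{eq:detectME}), with $(\rho_E^B)_{\mu\nu}$ identified as the common value of $\bra{i\mu}E\ket{i\nu}$.

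The hard part will be the forward implication, specifically extracting the \emph{uniformity} in the logical index---that $\bra{i\mu}E\ket{i\nu}$ is independent of $i$ and that $\bra{i\mu}E\ket{j\nu}=0$ for $i\neq j$---from the recovery condition. The gauge freedom on $\Hi^B$ (the unspecified $\sigma^B$ in the lemma) must be quarantined so that it cannot leak into the $\Hi^A$ factor; this requires testing the condition against a spanning set of logical inputs, including the real and complex superpositions of the basis states $\ket{i},\ket{j}$, and invoking linearity and positivity of the maps rather than a single input. Once the tensor structure has been isolated in this way, reading off $\rho_E^B$ and checking consistency of the two directions is routine.
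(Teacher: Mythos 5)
First, a point of reference: the paper does not prove this theorem at all; it appears in the review section and is quoted from \cite{kribs05} and \cite{nielsen05}, so there is no in-paper argument to compare your proposal against. Judged on its own terms, your matrix-element reformulation of $PEP=\mbf{1}^A\otimes\rho_E^B$ as $\bra{i\mu}E\ket{j\nu}=\delta_{ij}(\rho_E^B)_{\mu\nu}$ is correct, and your reverse direction is essentially complete: since $\rho=P\rho P$ for $\rho$ supported on $\mc{C}$, one has $PE\rho E^\dagger P=(PEP)\rho(PEP)^\dagger=\rho^A\otimes\rho_E^B\rho^B(\rho_E^B)^\dagger$, so the syndrome measurement $\{P,\mbf{1}-P\}$ leaves subsystem $A$ untouched on the ``no error'' outcome and flags the error otherwise.

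The gap is in the forward direction, and it is not merely a matter of unfinished bookkeeping. The condition you propose to test inputs against is a statement about $\mc{R}\circ\mc{E}$ for an \emph{unspecified} recovery $\mc{R}$, and no amount of evaluating it on spanning sets of inputs lets you read off matrix elements of $E$ itself: an arbitrary $\mc{R}$ could in principle repair damage that $E$ does to the logical index, so ``$\mc{R}\circ\mc{E}$ preserves the state on $A$'' does not, by linearity and positivity alone, force $\bra{i\mu}E\ket{j\nu}=0$ for $i\neq j$. You must commit to one of two routes: (a) take the operational content of \emph{detectability} (as opposed to correctability) to be that the fixed measurement $\{P,\mbf{1}-P\}$ itself preserves subsystem $A$ on outcome $P$, in which case the forward direction becomes a direct argument about $M=PEP$ (for instance, evaluate at the maximally mixed gauge state and use that a completely positive map equal to a multiple of the identity map on $\mc{B}(\Hi^A)$ must have every Kraus operator proportional to $\mbf{1}^A$, forcing $M=\mbf{1}^A\otimes\rho_E^B$); or (b) run the Knill--Laflamme/Nielsen--Poulin polar-decomposition argument, which first establishes that each Kraus operator of an arbitrary recovery composed with each error acts as identity on the $A$ factor and only then assembles the stated form of $PEP$. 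A further concrete problem: condition (iii) as you quote it (following a slip in the paper's own statement) preserves $\Tr_A(\rho)$, i.e.\ the marginal on the gauge subsystem $B$; that constraint says nothing about the logical information, so the partial trace in your forward argument must be over $B$, not $A$, for the derivation to have any force.
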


Now that we have reviewed the salient ideas of operator quantum
error correction, we will address a very important question -- how
do we systematically construct these codes? Two important contributions
in this direction were the introduction of a stabilizer formalism 
and the notion of a gauge group by Poulin \cite{poulin05},  and
construction of a class of subsystem codes 
capable of encoding one qubit by Bacon \cite{bacon06a}. 
However the bigger question of 
systematic construction of good subsystem codes still remained open. 
Our work addresses this problem in more detail. Subsequent to the publication
of this work, Bacon and Cassacino independently proposed a class of
subsystem codes \cite{bacon06b}; these codes can be viewed as a special case
of the codes constructed in this chapter. More details on these codes will
be given in Chapter~\ref{ch:subsys2}.

Our approach is based on an elegant formalism to construct quantum
error-correcting codes that has been introduced in 1996 by Knill
as a generalization of the stabilizer code concept.  At the heart of
this quantum code construction is a famous theorem by Clifford
concerning the restriction of irreducible representations of finite
groups to normal subgroups, so these  codes were termed as
``Clifford codes'' in \cite{klappenecker031,klappenecker033}, although
``Knill codes'' is perhaps a more appropriate name.  Unexpectedly, it
turned out that Clifford codes are in many cases stabilizer codes, so
this construction did not become as widely known.

In our approach, we construct a Clifford code $C$ and give conditions
that ensure that this code decomposes into a tensor product $C=
A\otimes B$. The Clifford codes allow us to control the dimensions of
$A$ and $B$, and we get a simple characterization of the detectable
errors of the operator quantum error-correcting code.  Since there may
exist many different ways to construct the same Clifford code $C$, we
should note that these constructions can lead to different tensor
product decompositions. In fact, even if one is just interested in the
tensor decomposition of a stabilizer code $C$, then the Clifford codes
can provide a natural way to induce an operator quantum
error-correcting code on $C$.

\section{A Detour Through Clifford Codes} 
As we have seen in the previous sections and in Chapter~\ref{ch:stabq1},
the study of quantum codes is related to the operators acting on the
system Hilbert space. To simplify matters we can restrict our
attention to a basis of these operators and the group generated by
that basis, called the error group. In the binary case we deal with
the familiar Pauli matrices and the group generated by them on 
$n$ qubits.   Knill generalized this concept 
by introducing the notion of nice error bases and abstract error
groups which generalize the Pauli error group. We have already seen one
application of this generalization in Chapter~\ref{ch:stabq1}, where
we dealt with the generalization of  the Pauli group to nonbinary 
alphabet. The benefit of the abstract approach is that it will 
free us from having to deal with cumbersome matrix operators but 
instead work with groups. The representations of the groups
(in $\Hi$) will bring us  back to the concrete world of operators. 
In this chapter, we shall pursue this
abstract approach permitting different error groups
other than the Pauli error group. We
say that a finite group $E$ is an abstract error group \index{abstract error group}
if it has a
faithful irreducible unitary representation $\rho$ of degree
$d = |E:Z(E)|^{1/2}$. The irreducibility of the representation ensures that
one can express any error acting on $\C^d$ as a linear
combination of the matrices $\rho(g)$, with $g\in E$. The fact that
the representation is faithful and has the largest possible degree
ensures that the set of matrices $\{ \rho(g)\,|\, g\in T\}$, where $T$
is a set of representatives of $E/Z(E)$, forms a \textit{basis}\/ of
the vector space of $d\times d$ matrices.

A Clifford code is constructed with the help of a normal subgroup $N$
of the error group $E$ and an irreducible character $\chi$ of $N$.
Let $\phi$ denote the irreducible character corresponding to the
representation $\rho$ of the group $E$, that is, $\phi(g)=\Tr \rho(g)$
for $g\in E$. Suppose that $N$ is a normal subgroup of $E$ and that
$\chi$ is an irreducible character of $N$ such that $(\chi,\phi_N)_N>
0$.  

\begin{defn}[Clifford codes]\index{Clifford code}
A Clifford code $C$ corresponding to $(E,\rho,N,\chi)$ is
defined as the image of the orthogonal projector
$$ P = \frac{\chi(1)}{|N|} \sum_{n\in N} \chi(n^{-1})\rho(n),$$
see~\cite[Theorem~1]{klappenecker033}. 
\end{defn}

We emphasize that if we refer
to a Clifford code with data $(E,\rho,N,\chi)$, then it is assumed that
$(\chi,\phi_N)>0$, as this condition ensures that $\dim C>0$.
Recall that an error $e$ in $E$ is detectable by the (Clifford) quantum code $C$
if and only if $P\rho(e)P=\lambda_e P$ holds for some $\lambda_e\in \C$.

The image of $P$ is the homogeneous component that consists of the
direct sum of all irreducible $\C N$-submodules with character $\chi$
that are contained in the restriction of $\rho$ to~$N$.  The elements
$e$ in $E$ that satisfy $\rho(e)C= C$ form a group known as
the inertia group $I_E(\chi)= \{ g\in E\,|\, \chi(gxg^{-1})=\chi(x)
\text{ for all } x\in N\}.$ We note that $C$ is an irreducible
$\C[I_E(\chi)]$-module. Let $\vartheta$ be the irreducible character
corresponding to this module.

\begin{fact}
\label{fact:clifford}
Let $C$ be a Clifford code with data $(E,\rho,N,\chi)$. Then the
dimension of the code is given by $\dim C = |Z(E)\cap N| |E:
Z(E)|^{1/2}\chi(1)^2/|N|$.  An error $e$ in $E$ can be detected by $C$
if and only if $e$ is in $E-(I_E(\chi)-Z(\vartheta))$.
\end{fact}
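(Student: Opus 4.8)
The plan is to prove the two assertions separately, handling the dimension formula first, since it reduces to a trace computation, and then the detectability criterion by a Clifford-theoretic case analysis. For the dimension I would start from $\dim C = \Tr(P)$, which holds because $P$ is an orthogonal projector onto $C$. Expanding the defining sum gives $\Tr(P) = \frac{\chi(1)}{|N|}\sum_{n\in N}\chi(n^{-1})\phi(n) = \chi(1)(\chi,\phi_N)_N$, where $\phi$ is the character of $\rho$. The crucial input is that, because $E$ is an abstract error group, its faithful irreducible character $\phi$ of degree $d=|E:Z(E)|^{1/2}$ vanishes off the center: from $\sum_{g\in E}|\phi(g)|^2=|E|$ together with $\sum_{z\in Z(E)}|\phi(z)|^2=|Z(E)|\,d^2=|E|$ one concludes $\phi(g)=0$ for $g\notin Z(E)$ and $\phi(z)=d\,\omega(z)$ for a faithful central character $\omega$ of $Z(E)$. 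Substituting this collapses the sum to $n\in Z(E)\cap N$. On this subgroup $\chi$ has central character $\omega_\chi$, and the standing hypothesis $(\chi,\phi_N)>0$ forces $\omega_\chi=\omega$ there, so each surviving term equals $d\,\chi(1)$. This yields $(\chi,\phi_N)_N = d\,\chi(1)\,|Z(E)\cap N|/|N|$ and hence $\dim C = |Z(E)\cap N|\,|E:Z(E)|^{1/2}\chi(1)^2/|N|$, as claimed.

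For the detectability criterion, recall that $e\in E$ is detectable if and only if $P\rho(e)P=\lambda_e P$ for some $\lambda_e\in\C$, and I would split on whether $e$ lies in the inertia group $I_E(\chi)$. If $e\notin I_E(\chi)$, then conjugation by $\rho(e)$ carries the $N$-action with character $\chi$ to the conjugate character $\chi^e\neq\chi$, so $\rho(e)$ maps the homogeneous component $C$ into the $\chi^e$-homogeneous component, which is orthogonal to $C$; hence $P\rho(e)P=0=0\cdot P$ and $e$ is detectable. If $e\in I_E(\chi)$, a short reindexing (using $\chi(e^{-1}ne)=\chi(n)$ in the defining sum) shows that $\rho(e)$ commutes with $P$, so $P\rho(e)P=\rho(e)|_C$. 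Since $C$ is an irreducible $\C[I_E(\chi)]$-module with character $\vartheta$, the operator $\rho(e)|_C$ is a scalar exactly when $|\vartheta(e)|=\vartheta(1)$, that is, exactly when $e\in Z(\vartheta)$. Thus within $I_E(\chi)$ the undetectable errors are precisely those of $I_E(\chi)-Z(\vartheta)$, and combining the two cases gives detectability if and only if $e\in E-(I_E(\chi)-Z(\vartheta))$.

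The routine parts are the trace expansion and the reindexing showing $\rho(e)$ commutes with $P$ on the inertia group. The main obstacle I anticipate is the central-character bookkeeping: establishing that $\phi$ is supported on $Z(E)$ with the stated central value, and that compatibility $(\chi,\phi_N)>0$ pins down $\omega_\chi=\omega$ on $Z(E)\cap N$, so the surviving trace terms are genuinely constant. The detectability direction instead hinges on the equality case of $|\vartheta(e)|\le\vartheta(1)$ characterizing scalar action, which is exactly the definition of the quasikernel $Z(\vartheta)$; identifying $C$ as an irreducible $\C[I_E(\chi)]$-module, a standard consequence of Clifford's theorem, is what makes this final step available.
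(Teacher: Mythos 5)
Your proof is correct: the trace computation using the vanishing of $\phi$ off $Z(E)$ and the matching of central characters gives the dimension formula, and the split on membership in $I_E(\chi)$ together with the quasikernel criterion for scalar action on the irreducible $\C[I_E(\chi)]$-module $C$ gives the detectability statement. The paper itself offers no proof of this Fact, deferring instead to the cited reference \cite{klappenecker033}, and your argument is essentially the standard one carried out there, so there is nothing in the paper's treatment that diverges from yours.
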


For a proof of this fact see \cite{klappenecker033} and for more
background on Clifford codes see~\cite{klappenecker031} and the
seminal papers~\cite{knill96a,knill96b}.

\section{Constructing Operator Quantum Error-Correcting Codes} 
We are now concerned with the construction of a  decomposition of the
Hilbert space $\Hi$ in the form
$$ \Hi = (A\otimes B) \oplus C^\perp.$$ Put differently, we seek a
decomposition of the Clifford code $C$ as a tensor product $A\otimes
B$.

The next theorem gives a construction of operator quantum
error-correcting codes when one can express the inertia group
$I_E(\chi)$ as a central product $I_E(\chi)=LN$, where $L$ is a
subgroup of $E$ such that $[L,N]=1$.

\begin{theorem}\label{th:first}
Suppose that $C$ is a Clifford code with data $(E,\rho, N,\chi)$.  If
the inertia group $I_E(\chi)$ is of the form $I_E(\chi)=LN$, where $L$
is a subgroup of $E$ such that $[L,N]=1$, then $C$ is an operator
quantum error-correcting code $C= A\otimes B$ such that
\begin{compactenum}[i)]
\item $\dim A = |Z(E)\cap N| |E: Z(E)|^{1/2}\chi(1)/|N|$,
\item $\dim B=\chi(1)$.
\end{compactenum}
The subsystem $A$ is an irreducible $\C L$-module 
with character $\chi_A\in \Irr(L)$. 
An error $e$ in $E$ is detectable by
subsystem $A$ if and only if $e$ is contained in the set
$E-(I_E(\chi)-Z(\chi_A)N)$.
\end{theorem}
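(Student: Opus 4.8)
The plan is to upgrade the irreducible $\C[I_E(\chi)]$-module structure of $C$ into a genuine tensor-product decomposition by exploiting the hypothesis that $I_E(\chi)=LN$ is a central product. First I would observe that, because $[L,N]=1$, the multiplication map $L\times N\to I_E(\chi)$, $(l,n)\mapsto ln$, is a surjective group homomorphism with kernel $\{(z,z^{-1})\mid z\in L\cap N\}$, and that $L\cap N$ is central in both $L$ and $N$. Hence $I_E(\chi)\cong (L\times N)/Z_0$, and the irreducible characters of $I_E(\chi)$ are exactly the products $\chi_A\cdot\chi_B$ with $\chi_A\in\Irr(L)$ and $\chi_B\in\Irr(N)$ whose central characters agree on $L\cap N$. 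Applying this to the irreducible character $\vartheta$ of the module $C$ yields $C\cong A\otimes B$, where $A$ is the irreducible $\C L$-module with character $\chi_A$ and $B$ the irreducible $\C N$-module with character $\chi_B$, and where $\rho(l)$ acts as $\rho_A(l)\otimes\mbf{1}_B$ while $\rho(n)$ acts as $\mbf{1}_A\otimes\rho_B(n)$. This central-product factorization is the main structural input, and arranging that each factor acts on exactly one tensor slot is the step I expect to require the most care.

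Next I would pin down the two factors. Since $C$ is by construction the $\chi$-homogeneous component of $\rho|_N$, its restriction to $N$ is $\chi$-isotypic; comparing with the decomposition above, in which $N$ acts as $\mbf{1}_A\otimes\rho_B$, forces $\chi_B=\chi$, so $\dim B=\chi(1)$, which is (ii). The dimension of $A$ then follows by division: Fact~\ref{fact:clifford} gives $\dim C=|Z(E)\cap N|\,|E:Z(E)|^{1/2}\chi(1)^2/|N|$, whence $\dim A=\dim C/\dim B=|Z(E)\cap N|\,|E:Z(E)|^{1/2}\chi(1)/|N|$, which is (i). That $A$ is an irreducible $\C L$-module with character $\chi_A\in\Irr(L)$ is immediate from the factorization.

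For the detectability claim I would invoke the operator criterion of equation~(\ref{eq:oqecDetect}): an error $e$ is detectable by subsystem $A$ exactly when $P\rho(e)P=\mbf{1}_A\otimes\rho_e^B$ for some $\rho_e^B\in\mc{B}(B)$. I would split on whether $e\in I_E(\chi)$. If $e\notin I_E(\chi)$, then $\rho(e)$ carries the $\chi$-homogeneous component to the $\chi^e$-homogeneous component with $\chi^e\neq\chi$, so $\rho(e)C\perp C$ and $P\rho(e)P=0=\mbf{1}_A\otimes 0$, giving detectability; these $e$ already lie in $E-(I_E(\chi)-Z(\chi_A)N)$. If $e\in I_E(\chi)=LN$, I would write $e=ln$ with $l\in L$, $n\in N$; then $\rho(e)$ preserves $C$ and acts there as $\rho_A(l)\otimes\rho_B(n)$, so $P\rho(e)P=\rho_A(l)\otimes\rho_B(n)$.

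The final step is the linear-algebra observation that, since $\rho_B(n)$ is invertible, $\rho_A(l)\otimes\rho_B(n)=\mbf{1}_A\otimes\rho_e^B$ holds for some $\rho_e^B$ if and only if $\rho_A(l)$ is a scalar matrix; by the standard quasikernel characterization, $\rho_A(l)$ is scalar precisely when $l\in Z(\chi_A)$, so $e=ln$ is detectable iff $e\in Z(\chi_A)N$. I would also check that this condition is well defined despite the non-uniqueness of $l$: two factorizations differ by an element of $L\cap N$, which is central in $L$ and hence acts as a scalar on $A$ by Schur's lemma, so the scalarity of $\rho_A(l)$ is independent of the chosen factorization. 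Assembling the two cases, $e$ fails to be detectable exactly when $e\in I_E(\chi)$ and $e\notin Z(\chi_A)N$, i.e. $e\in I_E(\chi)-Z(\chi_A)N$, which gives the asserted condition $e\in E-(I_E(\chi)-Z(\chi_A)N)$.
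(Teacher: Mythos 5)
Your proof is correct and follows essentially the same route as the paper's: the central-product decomposition $C\cong A\otimes B$ (which the paper obtains by citing Proposition~9.14 of \cite{GLS2}), the identification $\chi_B=\chi$ from the $\chi$-isotypic restriction to $N$, the dimension count via Fact~\ref{fact:clifford}, and the case split on $e\in I_E(\chi)$ for detectability. Your explicit linear-algebra argument that $\rho_A(l)\otimes\rho_B(n)$ has the form $\mbf{1}_A\otimes\rho_e^B$ iff $\rho_A(l)$ is scalar, together with the well-definedness check on the factorization $e=ln$, merely makes precise steps the paper leaves implicit.
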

\begin{proof}
Since the Clifford code $C$ is an irreducible $\C[I_E(\chi)]$-module
and $I_E(\chi)=LN$, with $[L,N]=1$, there exists an irreducible $\C
L$-module $A$ and an irreducible $\C N$-module $B$ such that $C\cong
A\otimes B$, see~\cite[Proposition 9.14]{GLS2}. If $\chi_A\in \Irr(L)$
is the character associated with the module $A$, $\chi_B\in \Irr(N)$
the character associated with $B$, and $\vartheta\in \Irr(I_E(\chi))$
the character associated with $C$, then $\vartheta$ is of the form
$\vartheta(\ell n)=\chi_A(\ell)\chi_B(n)$ with $\ell \in L$ and $n\in
N$.

As the restriction of $C$ to a $\C N$-module contains an irreducible
$\C N$-module $W$ with character $\chi$, we must have
$$ \begin{array}{lcl}
\displaystyle 
(\vartheta_N , \chi)_N = \frac{1}{|N|} \sum_{n\in N}
\vartheta(1,n^{-1}) \chi(n) &=& 
\displaystyle \frac{1}{|N|} \sum_{n\in N}
\chi_A(1)\chi_B(n^{-1}) \chi(n) \\ 
&=& \chi_A(1)  (\chi_B, \chi)_N > 0.
   \end{array}
$$ 
Since $\Irr(N)$ forms an orthonormal basis with
respect to $(\,\cdot\,,\,\cdot\,)_N$, we
can conclude that the irreducible character $\chi_B$ must be equal to
$\chi$. It follows that $C\cong A\otimes W$.

The dimension of $W\cong B$ is $\chi(1)$, and by Fact~\ref{fact:clifford} 
the dimension of $C$ is given by 
$$ \Tr P = |Z(E)\cap N| |E:Z(E)|^{1/2} \chi(1)^2/|N|.$$
The dimension of $B$ follows from the formula $\dim C=\dim A\dim B$. 

Note that the projector for $C$ acts as $\mbf{1}^{AB} =\mbf{1}^A\otimes \mbf{1}^B$
on $C$. By \cite[Theorem~1]{klappenecker033}, an error $e\in E-I_E(\chi)$ 
maps $C$ to an orthogonal complement, so
$eP$ and $P$  project onto orthogonal subspaces and 
we have $PeP =0$; by equation~(\ref{eq:oqecDetect}) the error
$e$ is detectable\footnote{ Alternatively by Fact~\ref{fact:clifford}, the error $e$
is detectable when we view $C$ as a Clifford code. When viewed as an operator
quantum error correcting code, we encode only into a subspace of $C$, therefore
$e$ still remains detectable.}
An error $e$ in $Z(\chi_A)N$ acts by
scalar multiplication on $A$ and arbitrarily on $B$, so 
$eP = \mbf{1}^A\otimes \rho^B$ for some $\rho^B \in \mc{B}(B)$.
Thus $PeP = \mbf{1}^A \otimes \mc{B}(B)$; again by equation~(\ref{eq:oqecDetect})
these errors are detectable (harmless would be a better
word). Therefore, all errors in $E-(I_E(\chi)-Z(\chi_A)N)$ are
detectable. Conversely, an error $e$ in $I_E(\chi)-Z(\chi_A)N$
cannot be detectable, since $e$ does not act by scalar multiplication
on $A$. We have $eP \neq \mbf{1}^A\otimes \rho^B $. Therefore 
$PeP\neq \mbf{1}^A\otimes \rho^B $ and thus $e$ is an undetectable
error. 
\end{proof}

The data given in the previous theorem can be easily computed,
especially with the help of a computer algebra system such as GAP or
MAGMA. 

We will now consider some important special cases.  Recall that most
abstract error groups that are used in the literature satisfy the
constraint $E'\subseteq Z(E)$ (put differently, the quotient group
$E/Z(E)$ is abelian). In that case, we are able to obtain a
characterization of the resulting operator quantum error-correcting
codes that does not depend on the choice of the character $\chi$.

\begin{theorem}\label{th:second} 
Suppose that $E$ is an abstract error group such that $E'\subseteq Z(E)$. 
Suppose that $C$ is a Clifford code with data $(E,\rho, N,\chi)$.  
In this case, the inertia group is given by $I_E(\chi)=C_E(Z(N))$. 
If $C_E(Z(N))=LN$ for some subgroup $L$ of $E$ such that $[L,N]=1$, 
then $C$ is an operator
quantum error-correcting code $C= A\otimes B$ such that
\begin{compactenum}[i)]
\item $\dim A = |Z(E)\cap N| |E: Z(E)|^{1/2}|N:Z(N)|^{1/2}/|N|$,
\item $\dim B=|N:Z(N)|^{1/2}$.
\end{compactenum}
An error $e$ in $E$ is detectable by
subsystem $A$ if and only if $e$ is contained in the set
$E-(C_E(Z(N))-Z(L)N)$.
\end{theorem}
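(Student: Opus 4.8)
The plan is to derive Theorem~\ref{th:second} as a specialization of Theorem~\ref{th:first}. The only genuinely new ingredient is the identification of the inertia group $I_E(\chi)$ with the centralizer $C_E(Z(N))$ under the hypothesis $E'\subseteq Z(E)$; once that is in hand, together with the formula $\chi(1)=|N:Z(N)|^{1/2}$ for the degree of $\chi$, all the quantitative claims follow by substitution into Theorem~\ref{th:first}.

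First I would establish that $\chi(1)=|N:Z(N)|^{1/2}$. The key observation is that $N$, being a normal subgroup of the abstract error group $E$ with $E'\subseteq Z(E)$, inherits the property $N'\subseteq Z(E)\cap N\subseteq Z(N)$, so that $N/Z(N)$ is abelian; in fact $N$ is a group of central type in the sense that its irreducible characters that are faithful on the relevant central part have a single degree $|N:Z(N)|^{1/2}$. More carefully, since $\chi$ occurs in the restriction $\phi_N$ of the faithful irreducible character of $E$, the character $\chi$ is faithful on $Z(E)\cap N$ in the appropriate sense, and the standard fact that $\chi(1)^2 = |N:Z(\chi)|$ together with $Z(\chi)=Z(N)$ (which holds because $N/Z(N)$ carries a nondegenerate alternating form induced by the commutator) yields $\chi(1)=|N:Z(N)|^{1/2}$. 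This is the step I expect to require the most care, since it hinges on showing $Z(\chi)=Z(N)$, i.e.\ that $\chi$ is fully ramified over its center.

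Next I would compute the inertia group. By definition $I_E(\chi)=\{g\in E\,|\,\chi(gxg^{-1})=\chi(x)\text{ for all }x\in N\}$. The plan is to show this equals $C_E(Z(N))$. Any $g$ centralizing $Z(N)$ fixes $\chi$, because $\chi$ is determined by its values on $Z(N)$ up to the alternating form structure (conjugation by $g$ induces an automorphism of $N$ trivial on $Z(N)$, and such automorphisms fix $\chi$ when $N$ is of central type with $Z(\chi)=Z(N)$). Conversely, if $g$ stabilizes $\chi$, then conjugation by $g$ must preserve the center $Z(N)$ pointwise, using that $E'\subseteq Z(E)$ forces commutators $[g,z]$ for $z\in Z(N)$ to be central and of controlled order; one checks $g\in C_E(Z(N))$. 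This identification replaces the abstract hypothesis ``$I_E(\chi)=LN$'' of Theorem~\ref{th:first} with the concrete, $\chi$-independent hypothesis ``$C_E(Z(N))=LN$''.

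Finally I would assemble the conclusion. Applying Theorem~\ref{th:first} with $I_E(\chi)=C_E(Z(N))=LN$ and $[L,N]=1$ gives directly that $C=A\otimes B$ is an operator quantum error-correcting code. Substituting $\chi(1)=|N:Z(N)|^{1/2}$ into part (ii) of Theorem~\ref{th:first} yields $\dim B=|N:Z(N)|^{1/2}$, and into part (i) yields
$$\dim A = |Z(E)\cap N|\,|E:Z(E)|^{1/2}|N:Z(N)|^{1/2}/|N|,$$
as claimed. For the detectability statement, Theorem~\ref{th:first} says $e$ is detectable iff $e\in E-(I_E(\chi)-Z(\chi_A)N)$; since $A$ is an irreducible $\C L$-module, its quasikernel $Z(\chi_A)$ is a subgroup of $L$, and because $\chi_A$ is faithful of full degree on $L$ one has $Z(\chi_A)=Z(L)$. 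Substituting $I_E(\chi)=C_E(Z(N))$ and $Z(\chi_A)=Z(L)$ then gives the detectability criterion $e\in E-(C_E(Z(N))-Z(L)N)$. The main obstacle throughout is the representation-theoretic lemma that $N$ (and $L$) are of central type with quasikernel equal to the true center, which is what makes every ``$\chi(1)=|N:Z(N)|^{1/2}$'' and ``$Z(\chi_A)=Z(L)$'' step go through cleanly.
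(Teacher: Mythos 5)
Your overall architecture matches the paper's: both proofs specialize Theorem~\ref{th:first} after (a) identifying $I_E(\chi)=C_E(Z(N))$, (b) showing $\chi(1)=|N:Z(N)|^{1/2}$, and (c) showing $Z(\chi_A)=Z(L)$. Where you genuinely diverge is in step (b). The paper does not argue via central-type characters at all: it invokes the fact (from the earlier Clifford-code work) that the projector $P_1=\frac{\chi(1)}{|N|}\sum_{n\in N}\chi(n^{-1})\rho(n)$ coincides with the projector $P_2=\frac{1}{|Z(N)|}\sum_{n\in Z(N)}\varphi(n^{-1})\rho(n)$ built from a linear character of $Z(N)$, and then simply compares $\Tr P_1=\chi(1)^2\phi(1)|N\cap Z(E)|/|N|$ with $\Tr P_2=\phi(1)|N\cap Z(E)|/|Z(N)|$ to read off $\chi(1)^2=|N\colon Z(N)|$. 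That trace comparison is shorter and sidesteps the representation theory entirely; your route through full ramification also works but buys nothing extra here. For step (a) the paper just cites the known lemma $I_E(\chi)=C_E(Z(N))$ when $E'\subseteq Z(E)$, whereas you sketch a direct proof; your sketch is essentially correct ($[g,z]\in E'\cap N\subseteq Z(E)\cap N$ and the faithfulness of $\chi$ on $Z(E)\cap N$ force $[g,z]=1$).

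Two of your justifications are not quite proofs as written. First, $\chi(1)^2=|N\colon Z(\chi)|$ is not a standard fact; in general one only has $\chi(1)^2\le|N\colon Z(\chi)|$, with equality precisely when $\chi$ vanishes off $Z(\chi)$. You flag this, but the clean way to close it is the paper's Lemma~\ref{l:support}: since $Z(E)\cap\ker\chi=\{1\}$ and $E'\subseteq Z(E)$, one gets $\supp\chi=Z(N)$, and then the orthogonality relation $|N|=\sum_{n\in Z(N)}|\chi(n)|^2=|Z(N)|\chi(1)^2$ gives both the degree formula and $Z(\chi)=Z(N)$ at once. Second, your claim that ``$\chi_A$ is faithful of full degree on $L$'' is dubious: $\chi_A$ need not be faithful on $L$. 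The paper instead shows $\supp(\vartheta)=Z(T)$ via Lemma~\ref{l:support2}, uses $Z(T)=Z(L)Z(N)$ from Lemma~\ref{l:center}, deduces $\supp(\chi_A)=L\cap Z(L)Z(N)=Z(L)$, and then concludes $Z(\chi_A)=Z(L)$ from $Z(L)\subseteq Z(\chi_A)\subseteq\supp(\chi_A)$. With those two repairs your argument is sound.
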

\begin{proof}
Since the abstract error group $E$ satisfies the condition
$E'\subseteq Z(E)$, the inertia group of the character $\chi$ in $E$
can be fully determined; it is given by $T:=I_E(\chi)=C_E(Z(N))$, see
\cite[Lemma~5]{klappenecker033}. 

Suppose that 
$$ P_1 = \frac{\chi(1)}{|N|}\sum_{n\in N} \chi(n^{-1})\rho(n)$$ is the
orthogonal projector onto $C$. The assumption $E'\subseteq Z(E)$
implies that there exists a linear character $\varphi$ of $\Irr(Z(N))$
such that 
$$ P_2 = \frac{1}{|Z(N)|}\sum_{n\in Z(N)} \varphi(n^{-1})\rho(n)$$
satisfies $P_1=P_2$, see \cite[Theorem~6]{klappenecker033}.

Let $\phi$ be the character of the representation $\rho$, that is,
$\phi(g)=\Tr \rho(g)$ for $g\in E$. We have $ \Tr P_1 = \chi(1)^2
\phi(1)|N\cap Z(E)|/|N|$ and $\Tr P_2 = \phi(1)|N\cap
Z(E)|/|Z(N)|$. Since $P_1=P_2$ project onto the codespace $C$, and
$\dim C>0$, we have $\Tr P_1/\Tr P_2=1$, which implies
$\chi(1)^2=|N\colon Z(N)|$. Therefore, the claims i) and ii) follow
from Theorem~\ref{th:first}.

Let $\vartheta\in \Irr(T)$ be the character associated with the 
$\C[T]$-module $C$; put differently, $\vartheta$ is the unique
character in $\Irr(T)$ that satisfies $(\vartheta_N,\chi)_N>0$ and
$(\phi_T,\vartheta)_T>0$. Since $Z(E)\le T$ and
$(\phi_T,\vartheta)_T>0$, it follows from Lemma~\ref{l:support2} that
$\supp(\vartheta)=Z(T)$. 

Since the inertia group $T$ is a central product given by $T=LN$ with
$[L,N]=1$, there exist characters $\chi_A\in \Irr(L)$ and
$\chi_B=\chi\in \Irr(N)$ such that $\vartheta(\ell
n)=\chi_A(\ell)\chi(n)$ for $\ell\in L$ and $n\in N$.  By
Lemma~\ref{l:center}, we have $Z(T)=Z(L)Z(N)$; thus,
$\supp(\vartheta)=Z(L)Z(N)$. This implies that $\supp(\chi_A)=L\cap
Z(L)Z(N)=Z(L)$; hence $Z(\chi_A)=Z(L)$.  The characterization of the
detectable errors is obtained by substituting these facts in
Theorem~\ref{th:first}.
\end{proof}

In the previous theorem, we still need to check whether $C_E(Z(N))$
decomposes into a central product of $N$ and some group $L$. In the
case of extraspecial $p$-groups (which is arguably the most popular
choice of abstract error groups) the decomposition of the inertia
group into a central product is always guaranteed, as we will show
next.

Recall that a finite group $E$ whose order is a power of a prime $p$
is called extraspecial if its derived subgroup $E'$ and its center
$Z(E)$ coincide and have order $p$. An extraspecial $p$-group is an
abstract error group. The quotient group $\overline{E}=E/Z(E)$ is the
direct product of two isomorphic elementary abelian
$p$-groups. Therefore, one can regard $\overline{E}$ as a vector space
$\F_p^{2n}$ over the finite field $\F_p$.

Let $\zeta$ be a fixed generator of the cyclic group $Z(E)$.  As the 
commutator $[x,y]$ depends only on the cosets $\overline{x}=xZ(E)$ and
$\overline{y}=yZ(E)$, one can determine a well-defined function $s\colon
\overline{E}\times \overline{E}\rightarrow \F_p$ by
$[x,y]=\zeta^{s(\overline{x},\overline{y})}$. The function $s$ is a
nondegenerate symplectic form. We note that two elements $x$ and $y$
in $E$ commute if and only if $s(\overline{x},\overline{y})=0$.
We write $\overline{x}\, \sdual\, \overline{y}$ if and only if 
$s(\overline{x},\overline{y})=0$.

For a subgroup $G$ of $E$, we will use $\overline{G}$ to denote $G/Z(E)$. 
\begin{lemma}\label{l:inertia}
If $E$ is an extraspecial $p$-group and $N$ a normal subgroup of $E$,
then $C_E(Z(N))=NC_E(N)$.
\end{lemma}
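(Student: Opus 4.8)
The plan is to translate the group-theoretic statement into linear algebra on the symplectic space $V=\overline{E}=E/Z(E)$ and to reduce everything to a single identity about symplectic complements. First I would dispose of the degenerate case $N=1$, where $Z(N)=1$ and both sides equal $E=C_E(1)$. For nontrivial normal $N$ I would record at the outset that $Z(E)\le N$: for $n\in N$ and $e\in E$ the commutator $[n,e]$ lies in $N$ (by normality, since $e^{-1}ne\in N$) and also in $E'=Z(E)$, so $[n,e]\in N\cap Z(E)$; if this intersection were trivial then all such commutators would vanish, forcing $N\le Z(E)$ and hence $N=1$. Thus $N\cap Z(E)\ne 1$, and since $|Z(E)|=p$ is prime, $Z(E)\le N$. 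This guarantees that $\overline{N}$ is an honest subspace of $V$ and that $Z(E)$ is contained in every subgroup appearing below, so equality of subgroups can be read off from equality of their images in $V$.

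Next I would set up the dictionary between subgroups and the symplectic form $s$. For any subgroup $G$, an element $x$ centralizes $G$ if and only if $s(\overline{x},\overline{g})=0$ for all $g\in G$, so $C_E(G)$ is exactly the full preimage of $\overline{G}^{\sdual}$ and therefore $\overline{C_E(G)}=\overline{G}^{\sdual}$. Applying this with $G=N$ gives $\overline{C_E(N)}=\overline{N}^{\sdual}$; since the centralizer of a normal subgroup is normal, $N\,C_E(N)$ is a subgroup with image $\overline{N}+\overline{N}^{\sdual}$. The key identification is $\overline{Z(N)}=\overline{N}\cap\overline{N}^{\sdual}$: an element $n\in N$ is central in $N$ precisely when $\overline{n}\in\overline{N}$ and $s(\overline{n},\overline{m})=0$ for all $m\in N$, i.e. when $\overline{n}$ lies in the radical $\overline{N}\cap\overline{N}^{\sdual}$ of the form restricted to $\overline{N}$. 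Consequently $\overline{C_E(Z(N))}=(\overline{N}\cap\overline{N}^{\sdual})^{\sdual}$.

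Finally I would invoke the standard identity for a nondegenerate symplectic form, namely $(W\cap W^{\sdual})^{\sdual}=W+W^{\sdual}$, which follows from $(A\cap B)^{\sdual}=A^{\sdual}+B^{\sdual}$ together with $W^{\sdual\sdual}=W$ (nondegeneracy of $s$). Taking $W=\overline{N}$ yields $\overline{C_E(Z(N))}=\overline{N}+\overline{N}^{\sdual}=\overline{N\,C_E(N)}$. Since both $C_E(Z(N))$ and $N\,C_E(N)$ contain $Z(E)$, which is the kernel of the map $E\to V$, equality of images forces equality of the subgroups, which is the claim.

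The one step that requires genuine care — and is really the heart of the argument — is the radical identification $\overline{Z(N)}=\overline{N}\cap\overline{N}^{\sdual}$, because this is exactly where normality of $N$ and the inclusion $Z(E)\le N$ are needed to ensure that the preimage of the radical is \emph{precisely} $Z(N)$, rather than merely a subgroup containing it. Once that correspondence is pinned down, the remainder is routine symplectic bookkeeping, easily verifiable by hand or with a computer algebra system as suggested after Theorem~\ref{th:second}.
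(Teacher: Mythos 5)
Your proof is correct and follows essentially the same route as the paper's: both pass to the symplectic space $\overline{E}$, identify $\overline{C_E(G)}$ with $\overline{G}^{\sdual}$ and $\overline{Z(N)}$ with the radical $\overline{N}\cap\overline{N}^{\sdual}$, and then reduce the claim to routine facts about symplectic complements. The only cosmetic difference is that you conclude via the identity $(W\cap W^{\sdual})^{\sdual}=W+W^{\sdual}$ directly, whereas the paper first records the containment $NC_E(N)\le C_E(Z(N))$ and then matches the two $\F_p$-dimensions (both equal to $2n-\dim\overline{Z(N)}$); your explicit verification that $Z(E)\le N$ for $N\neq 1$ is a welcome detail the paper leaves implicit.
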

\begin{proof}
Since $Z(E)\le NC_E(N)\le C_E(Z(N))$, it suffices to show that the
dimensions of the $\F_p$-linear vector spaces 
$$\overline{NC_E(N)} \quad \text{and}\quad \overline{C_E(Z(N))}$$ are
the same. Suppose that $z=\dim\overline{Z(N)}$ and
$k=\dim\overline{N}$. 
Then
$$
\begin{array}{lcl}
\dim \ol{NC_E(N)}&=& \dim (\ol{N}+\ol{N}^\sdual)
= \dim\ol{N}+\dim\ol{N}^\sdual-\dim(\ol{N}\cap \ol{N}^\sdual)\\
&=& \dim\ol{N}+\dim\ol{N}^\sdual-\dim(\ol{Z(N)})\\
&=& k+ (2n-k)-z=2n-z,
  \end{array}
$$ which coincides with $\dim
\ol{C_E(Z(N))}=\dim\ol{Z(N)}^\sdual=2n-z$, and this proves our claim.
\end{proof}

The next theorem shows that it suffices to choose a normal subgroup
$N$ of the extraspecial $p$-group $E$, and this choice determines the
parameters of an operator quantum error-correcting code provided by a
Clifford code $C$.

\begin{theorem}\label{th:third} 
Suppose that $E$ is an extraspecial $p$-group.  If $C$ is a Clifford 
code with data $(E,\rho, N,\chi)$, with $N\neq 1$, 
then $C$ is an operator quantum
error-correcting code $C= A\otimes B$ such that
\begin{compactenum}[i)]
\item $\dim A = |Z(E)\cap N| |E: Z(E)|^{1/2}|N:Z(N)|^{1/2}/|N|$,
\item $\dim B=|N:Z(N)|^{1/2}$.
\end{compactenum}
An error $e$ in $E$ is detectable by
subsystem $A$ if and only if $e$ is contained in the set
$E-(NC_E(N)-N)$.
\end{theorem}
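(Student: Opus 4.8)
The plan is to realize Theorem~\ref{th:third} as a direct specialization of Theorem~\ref{th:second}, with the two hypotheses of that theorem --- namely $E'\subseteq Z(E)$ and a central-product decomposition $C_E(Z(N))=LN$ with $[L,N]=1$ --- verified automatically from the extraspecial structure. First I would observe that an extraspecial $p$-group satisfies $E'=Z(E)$, so in particular $E'\subseteq Z(E)$ and the standing hypothesis of Theorem~\ref{th:second} holds; consequently the inertia group is $I_E(\chi)=C_E(Z(N))$. Next, Lemma~\ref{l:inertia} gives $C_E(Z(N))=NC_E(N)$, so I may take $L=C_E(N)$: this is a subgroup of $E$ with $[L,N]=1$ by the very definition of the centralizer, and $C_E(Z(N))=LN$ is the required central product. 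With these two checks in place, parts i) and ii) are inherited verbatim from Theorem~\ref{th:second}, since the dimension formulas there depend only on $E$, $N$, and $Z(N)$.

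The remaining work is to rewrite the set of detectable errors. Theorem~\ref{th:second} states that $e$ is detectable by the subsystem $A$ if and only if $e\in E-(C_E(Z(N))-Z(L)N)$, and I already have $C_E(Z(N))=NC_E(N)$. So it suffices to show that $Z(L)N=N$, i.e.\ that $Z(L)=Z(C_E(N))\subseteq N$. Here I would pass to the symplectic picture on $\overline{E}=E/Z(E)\cong \F_p^{2n}$ introduced just before Lemma~\ref{l:inertia}. Writing $\overline{C_E(N)}=\overline{N}^{\sdual}$, an element $\overline{x}\in\overline{C_E(N)}$ lies in $\overline{Z(C_E(N))}$ exactly when $\overline{x}\,\sdual\,\overline{y}$ for all $\overline{y}\in\overline{N}^{\sdual}$, that is, when $\overline{x}\in(\overline{N}^{\sdual})^{\sdual}=\overline{N}$, where the last equality uses nondegeneracy of the form. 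Hence $\overline{Z(C_E(N))}=\overline{N}\cap\overline{N}^{\sdual}=\overline{Z(N)}$, the same identification already exploited in the proof of Lemma~\ref{l:inertia}.

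To upgrade this to an honest group equality $Z(C_E(N))=Z(N)$ (not merely equality of images modulo $Z(E)$), I would use that $N\neq 1$ forces $Z(E)\subseteq N$: since $E$ is a $p$-group, every nontrivial normal subgroup meets the center nontrivially, and $Z(E)$ has order $p$. Thus $Z(E)$ is contained in both $N$ and $C_E(N)$, hence in $Z(N)$ and in $Z(C_E(N))$; as these two groups have the same image in $\overline{E}$ and both contain the kernel $Z(E)$ of the projection, they coincide. Therefore $Z(L)=Z(N)\subseteq N$, so $Z(L)N=N$, and the detectable set collapses to $E-(NC_E(N)-N)$, as claimed. The only genuinely delicate point is the double-perp identity $(\overline{N}^{\sdual})^{\sdual}=\overline{N}$ together with the bookkeeping of the central extension that promotes equality of images to equality of subgroups; everything else is a substitution into Theorem~\ref{th:second}.
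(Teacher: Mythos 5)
Your proposal is correct and follows essentially the same route as the paper's proof: specialize Theorem~\ref{th:second} using $E'=Z(E)$, invoke Lemma~\ref{l:inertia} to get $L=C_E(N)$, and then collapse $Z(L)N$ to $N$ via the symplectic picture on $\overline{E}$ together with the observation that $N\neq 1$ forces $Z(E)\le N$. The only (harmless) difference is that you establish the full equality $Z(C_E(N))=Z(N)$ via the double-perp identity, whereas the paper only needs the inclusion $\overline{Z(L)}\subseteq\overline{N}^{\sdual}\cap\overline{N}$ to conclude $Z(L)N=N$.
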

\begin{proof}
The inertia group $I_\chi(E)=C_E(Z(N))$, since $E'\subseteq Z(E)$,
see~\cite[Lemma~5]{klappenecker033}. By Lemma~\ref{l:inertia}, we have
$I_E(\chi)=LN=NL$ with $L=C_E(N)$. Thus, $C$ is an operator quantum
error-correcting code and the statements i) and ii) follow from
Theorem~\ref{th:second}. Furthermore, Theorem~\ref{th:second} shows
that an error $e$ in $E$ is detectable if and only if $e\in
E-(NC_E(N)-Z(L)N)$. Since $E$ is a $p$-group and $N\neq 1$, we have
$N\cap Z(E)\neq 1$; hence $Z(E)\le N$. We note that
$\ol{Z(L)}\subseteq \ol{L}\cap \ol{L}^\sdual=
\ol{N}^\sdual\cap\ol{N}\subseteq \ol{N}$; therefore, $N\subseteq Z(L)N\subseteq Z(N)N=N$, forcing $Z(L)N=N$.
\end{proof}

The normal subgroup $N$ used in the construction of subsystem codes 
will henceforth be called as the \textsl{gauge group}. \index{gauge group}
This definition 
coincides with the definition of the gauge group in \cite{poulin05}.

\section{Subsystem Codes from Classical Codes} 
We conclude this chapter by showing how the previous results can be
related to classical coding theory.
Let $a$ and $b$ be elements of the finite field $\F_q$ of characteristic $p$.  
Recall that in 
Section~\ref{ssec:errBases} we defined 
the unitary operators $X(a)$ and $Z(b)$ on~$\C^q$ by
$$ X(a)\ket{x}=\ket{x+a},\qquad Z(b)\ket{x}=\omega^{\tr(bx)}\ket{x},$$
where $\tr$ denotes the trace operation from the extension field
$\F_q$ to the prime field $\F_p$, and $\omega=\exp(2\pi i/p)$ is a
primitive $p$th root of unity.  Let $\mathbf{a}=(a_1,\dots, a_n)\in
\F_q^n$. We write $ X(\mathbf{a}) = X(a_1)\otimes\, \cdots \,\otimes
X(a_n)$ and $Z(\mathbf{a}) = Z(a_1)\otimes\, \cdots \,\otimes Z(a_n)$
for the tensor products of $n$ error operators. One readily checks that 
the group
$$ E = \langle X(a), Z(b)\,|\, a, b\in \F_q^n\rangle$$ is an
extraspecial $p$-group of order $pq^{2n}$. As a representation $\rho$,
we can take the identity map on $E$. We have $E/Z(E)\cong \F_q^{2n}$. 

We need to introduce a notion of weights of errors.  Recall that an
error in $E$ can be expressed in the form $\alpha X(a)Z(b)$ for some
nonzero scalar $\alpha$. The weight of $\alpha X(a)Z(b)$ is defined as
$|\{ i\,|\, 1\le i\le n, a_i\neq 0 \text{ or } b_i\neq 0\}|$, that is,
as the number of quantum systems that are affected by the
error. Similarly, we can introduce a weight on vectors of $\F_q^{2n}$
by
$$ \swt(a|b)=\{ i\,|\, 1\le i\le n, a_i\neq 0 \text{ or } b_i\neq 0\}|$$
for $a,b\in \F_q^n$.

Theorem~\ref{th:third} suggests 
the following approach to construct 
operator quantum error-correcting codes. 
\begin{theorem}\label{th:four}
Let $X$ be a classical additive subcode of\/ $\F_q^{2n}$ such that
$X\neq \{0\}$ and let $Y$ denote its subcode $Y=X\cap X^\sdual$. 
If $x=|X|$ and $y=|Y|$, then
there exists an operator quantum error-correcting code $C=
A\otimes B$ such that
\begin{compactenum}[i)]
\item $\dim A = q^n/(xy)^{1/2}$, 
\item $\dim B = (x/y)^{1/2}$.
\end{compactenum}
The minimum distance of subsystem $A$ is given by
$d=\swt((X+X^\sdual)-X)=\swt(Y^\sdual-X)$. Thus, the subsystem $A$ can
detect all errors in $E$ of weight less than $d$, and can correct all
errors in $E$ of weight $\le \lfloor (d-1)/2\rfloor$.
\end{theorem}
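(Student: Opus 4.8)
The plan is to realize the classical code $X$ as the image of a normal subgroup of the extraspecial $p$-group $E$ and then invoke Theorem~\ref{th:third}. Concretely, I would let $N$ denote the preimage of $X$ under the quotient map $E \to \overline{E} = E/Z(E) \cong \F_q^{2n}$, that is, $N = \{\omega^c X(a)Z(b) \mid c \in \F_p,\ (a|b) \in X\}$. Since $\overline{E}$ is abelian, every subgroup lying between $Z(E)$ and $E$ is normal, so $N \trianglelefteq E$; and because $(0|0) \in X$ we have $Z(E) \le N$, whence $|N| = p\,x$ and $Z(E) \cap N = Z(E)$. As $X \neq \{0\}$ forces $N \neq 1$, I can pick any irreducible constituent $\chi$ of the restriction $\phi_N$ (one exists since $\phi_N$ is a character of positive degree $q^n$) and form the Clifford code $C$ with data $(E, \rho, N, \chi)$, where $\rho$ is the identity representation.

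The key computation is to identify $Z(N)$. By Lemma~\ref{th:commute}, two elements of $E$ commute precisely when the images of their $(a|b)$-parts are orthogonal under the trace-symplectic form, so an element of $N$ lies in $Z(N)$ if and only if its image in $\overline{E}$ lies in the radical of the form restricted to $X$, namely $X \cap X^\sdual = Y$. Together with $Z(E) \le Z(N)$ this gives $\overline{Z(N)} = Y$ and hence $|Z(N)| = p\,y$, so that $|N:Z(N)| = x/y$. Substituting $|Z(E)\cap N| = p$, $|E:Z(E)|^{1/2} = q^n$, $|N| = p\,x$, and $|N:Z(N)|^{1/2} = (x/y)^{1/2}$ into Theorem~\ref{th:third} yields $\dim A = p\,q^n (x/y)^{1/2}/(p\,x) = q^n/(xy)^{1/2}$ and $\dim B = (x/y)^{1/2}$, establishing (i) and (ii).

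For the distance, Theorem~\ref{th:third} tells us that the undetectable errors are exactly those in $NC_E(N) - N$. Again by Lemma~\ref{th:commute}, $\overline{C_E(N)} = X^\sdual$, so $\overline{NC_E(N)} = X + X^\sdual$ and the images of the undetectable errors form the set $(X + X^\sdual) - X$. Since the weight of $\omega^c X(a)Z(b)$ equals $\swt(a|b)$, the minimum weight of an undetectable error is $d = \swt((X+X^\sdual) - X)$. To rewrite this I would use that the trace-symplectic form is a nondegenerate $\F_p$-bilinear form on $\F_q^{2n}$, so that $(U\cap V)^\sdual = U^\sdual + V^\sdual$ and $U^{\sdual\sdual} = U$; applying this to $Y = X \cap X^\sdual$ gives $Y^\sdual = X^\sdual + X = X + X^\sdual$, whence $d = \swt(Y^\sdual - X)$. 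Finally, detection of all errors of weight $< d$ is immediate, and correction of all errors of weight $\le \lfloor (d-1)/2 \rfloor$ follows from the usual argument: if $E_a, E_b$ have weight $\le \lfloor (d-1)/2\rfloor$, then $E_a^\dagger E_b$ has weight $< d$ and is therefore detectable.

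The main obstacle I anticipate is the bookkeeping around $Z(N)$ and the duality identities: one must carefully track that commuting in $E$ corresponds to trace-symplectic orthogonality (and not to a Euclidean or Hermitian pairing), that $Z(N)$ is the \emph{full} preimage of the radical of the form on $X$ rather than merely its image $\overline{Z(N)}$, and that the rule $(X\cap X^\sdual)^\sdual = X + X^\sdual$ is valid for the additive (hence $\F_p$-linear) code $X$ under the nondegenerate $\F_p$-form. Once these are nailed down, the dimension arithmetic and the detection/correction step are routine.
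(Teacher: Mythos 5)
Your proposal is correct and follows essentially the same route as the paper's own proof: take $N$ to be the full preimage of $X$ in the extraspecial group, apply Theorem~\ref{th:third}, identify $\overline{Z(N)}=X\cap X^\sdual=Y$ via the commutation criterion of Lemma~\ref{th:commute}, and translate $NC_E(N)-N$ into $(X+X^\sdual)-X=Y^\sdual-X$ using $(X\cap X^\sdual)^\sdual=X+X^\sdual$. The dimension arithmetic and the duality bookkeeping you flag as the delicate points are exactly the points the paper's proof also spells out, and your handling of them is accurate.
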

\begin{proof}
Let $E$ be the extraspecial $p$-group of order $pq^{2n}$, and let $N$
be the full preimage of $\ol{N}=X$ in $E$ under the canonical quotient
map.  Therefore, we can apply Theorem~\ref{th:third}. The remainder of
the proof justifies how the parameters given in Theorem~\ref{th:third}
can be expressed in terms of the code sizes $x$ and $y$.

Then $\ol{Z(N)}=X\cap X^\sdual=Y$.  By definition, $N$
contains $Z(E)$; hence, $Z(E)\le Z(N)$. 
It follows that
$|N:Z(N)|=|\overline{N}:\ol{Z(N)}|=x/y$, so ii) follows from
Theorem~\ref{th:third}. For the claim i), we remark that
$x=|X|=|N|/p$, which implies that $\dim A =(p/|N|)
|E:Z(E)|^{1/2}|N:Z(N)|^{1/2}=q^n(x/y)^{1/2}/x$.

The minimum distance of subsystem $A$ is the weight of the smallest
nondetectable error, so it is the minimum weight of an error in the
set $NC_E(N)-N=C_E(Z(N))-N$. Since the quotient map $E\rightarrow
\ol{E}$ maps an error $e$ of weight $w$ onto a vector $\ol{e}$ such
that $w=\swt{\ol{e}}$, the claim about the minimum distance follows
from the observations that $\ol{NC_E(N)-N}=(X+X^\sdual)-X$ and 
$\ol{C_E(Z(N))-N}=Y^\sdual-X$. 
\end{proof}

\begin{remark}\label{rm:symFormChoice}
As in the case of stabilizer codes, the most general symplectic form we
can choose is $\langle u|v \rangle_s = \tr_{q/p}(a'\cdot b - a\cdot b')$,
where $u=(a|b)$ and $v=(a'|b')$ are in $\F_q^{2n}$. 
We define the trace symplectic dual
as  $C^\sdual = \{x\in \F_q^{2n} \mid \langle x|y\rangle_s=0, \mbox{ for all }y \in C\}$. 
In case of $\F_q$-linear codes, the trace symplectic form 
$\langle (a|b)|(a'|b') \rangle_s$ vanishes if and only if 
$a'\cdot b - a\cdot b'$
vanishes. The trace symplectic dual for an $\F_q$-linear code therefore
coincides with its symplectic dual. 
So when dealing with $\F_q$-linear codes we indulge in an abuse of 
notation and denote $a'\cdot b - a\cdot b'$ also by $\langle (a|b)|(a'|b') \rangle_s$
and the duals with respect to both forms as $C^\sdual$.
\end{remark}

In the above the theorem we had been able to define the distance in terms
of the classical codes. Having made choice of the error group we can also
go back and recast the distance in terms of the gauge group as 
as $\wt(C_E(Z(N))-N)$. In addition, we can also extend the notion
of purity to subsystem codes also in a straightforward manner.
\begin{defn}[Pure and impure subsystem codes]\label{df:impureCodes}\index{subsystem code!pure}
\index{subsystem code!impure}
Let $N$ be the gauge group of a subsystem code
$Q$ with distance $d= \wt(C_E(Z(N))-N)$. We say that $Q$ is
\textit{pure to $d'$} if there is no error of weight less than $d'$ in
$N$. The code is said to be \textit{exactly pure to $d'$} if $\wt(N)$
is $d'$ and it is said to pure if $d'\geq d$ .  The code is said to be
impure if it is exactly pure to $d'< d$.  
\end{defn}
This refinement to the notion of purity was made in recognition of certain 
subtleties that had to addressed when constructing other subsystem codes 
from existing subsystem codes, see \cite{aly06} for details.

An operator quantum error-correcting code with parameters
$((n,K,R,d))_q$ is a subspace $C=A\otimes B$ of a $q^n$-dimensional
Hilbert space $H$ such that $K=\dim A$, $R=\dim B$, and the subsystem
$A$ has minimum distance $d$. The above theorem constructs an
$((n,q^n/(xy)^{1/2},(x/y)^{1/2},d))_q$ operator quantum
error-correcting code given a classical $(n,x)_q$ code $X$ and its
$(n,y)_q$ subcode $Y=X\cap X^{\sdual}$.
We write $[[n,k,r,d]]_q$ for an
$((n,q^k,q^r,d))_q$ operator quantum error-correcting code.

A further simplification of the above construction is possible which takes
any pair of classical codes to give a subsystem code. 

\begin{corollary}[Euclidean Construction]\label{th:cssoqec}
\index{subsystem code!construction!Euclidean}
Let $X_i \subseteq \F_q^n$, be $[n,k_i]_q$ linear codes where $i\in \{1,2\}$. Then 
there exists an $[[n,k,r,d]]_q$   Clifford subsystem code with
\begin{compactitem}
\item $k=n-(k_1+k_2+k')/2$, 
\item $r=(k_1+k_2-k')/2$, and 
\item 
$d=\min \{ \wt((X_1^\perp\cap X_2)^\perp\setminus X_1), 
\wt((X_2^\perp\cap X_1)^\perp\setminus X_2) \}$,
\end{compactitem}
where $k'= \dim_{\F_q}(X_1\cap X_2^\perp)\times (X_1^\perp\cap X_2)$. 
\end{corollary}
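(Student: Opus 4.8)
The goal is to reduce the statement to the master construction in Theorem~\ref{th:four}, which takes an arbitrary nonzero additive code $X\le \F_q^{2n}$ and produces a subsystem code with $\dim A = q^n/(xy)^{1/2}$, $\dim B=(x/y)^{1/2}$, and distance $d=\swt(Y^\sdual\setminus X)$ where $Y=X\cap X^\sdual$, $x=|X|$, $y=|Y|$. So the entire task is to build the right $X$ from the two Euclidean codes $X_1,X_2$ and then translate all the parameters. The natural candidate, exactly as in the CSS construction (Lemma~\ref{th:css}), is the product code
\[
X = X_1^\perp \times X_2^\perp \le \F_q^{2n}.
\]
Let me think through what the plan requires.

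**Key steps.** First I would compute the trace-symplectic dual of $X$. For a product code $C_1\times C_2$, the argument in Lemma~\ref{th:css} already shows $(C_1^\perp\times C_2^\perp)^\sdual = C_2\times C_1$, so here $X^\sdual = X_1\times X_2$. Then the self-orthogonal part is
\[
Y = X\cap X^\sdual = (X_1^\perp\cap X_2)\times(X_2^\perp\cap X_1),
\]
which is where the quantity $k'=\dim_{\F_q}\big((X_1\cap X_2^\perp)\times(X_1^\perp\cap X_2)\big)$ enters — note $Y$ has $\F_q$-dimension exactly $k'$ (the two factors are the same pair of intersections written in the other order, so the dimension matches). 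Second, I would do the dimension bookkeeping. Since $X_i$ is $[n,k_i]_q$, we have $|X|=q^{(n-k_1)+(n-k_2)}=q^{2n-k_1-k_2}$ and $|Y|=q^{k'}$. Plugging into Theorem~\ref{th:four}: $\dim A = q^n/(xy)^{1/2}=q^{n-(2n-k_1-k_2+k')/2}=q^{n-(k_1+k_2+k')/2}$, giving $k=n-(k_1+k_2+k')/2$; and $\dim B=(x/y)^{1/2}=q^{(2n-k_1-k_2-k')/2}$, which requires care — this does not match $r=(k_1+k_2-k')/2$ unless I have the roles of $X_i$ versus $X_i^\perp$ arranged correctly. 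This sign/duality check is the bookkeeping I must get exactly right: the cleanest fix is likely to set $X=X_1\times X_2$ rather than $X_1^\perp\times X_2^\perp$, recompute $X^\sdual=X_1^\perp\times X_2^\perp$ and $Y=(X_1\cap X_2^\perp)\times(X_2\cap X_1^\perp)$, so that $x=q^{k_1+k_2}$, $y=q^{k'}$, and then $\dim B=(x/y)^{1/2}=q^{(k_1+k_2-k')/2}$ matches $r$ directly while $\dim A=q^{n-(k_1+k_2+k')/2}$ matches $k$. Third, I would compute the distance. Theorem~\ref{th:four} gives $d=\swt(Y^\sdual\setminus X)$, and $Y^\sdual=(X_1\cap X_2^\perp)^\sdual$ in the product structure unpacks (using the CSS-style dual computation and the isometry between $\swt$ on $\F_q^{2n}$ and Hamming weight on the factors) into the minimum of $\wt((X_1^\perp\cap X_2)^\perp\setminus X_1)$ and $\wt((X_2^\perp\cap X_1)^\perp\setminus X_2)$.

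**The main obstacle.** The conceptually trivial but error-prone heart of this is the duality chain: getting $(A\cap B^\perp)^\perp$, $(A^\perp\cap B)^\perp$, and the various $X_i$ versus $X_i^\perp$ assignments to line up so that the three stated parameters $k,r,d$ all emerge simultaneously from one choice of $X$. Because the product structure decouples the two halves of $\F_q^{2n}$, the symplectic dual of a product is a \emph{swapped} product of Euclidean duals, and this swap is exactly what produces the two different intersection terms in the distance formula — one term controlling errors detectable in the first block and one in the second. I expect the distance computation to be the genuinely subtle step: I must verify that $Y^\sdual\setminus X$ decomposes blockwise and that the minimum symplectic weight over this set equals the stated minimum of two Hamming-weight quantities, invoking the fact (Remark~\ref{rm:symFormChoice}) that for $\F_q$-linear codes the trace-symplectic dual coincides with the ordinary symplectic dual, so $\swt$ reduces cleanly to Hamming weight on each block. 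Once the parameter translation and this distance decomposition are checked, the corollary follows immediately from Theorem~\ref{th:four} with no further machinery.
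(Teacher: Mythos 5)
Your proposal is correct and takes essentially the same route as the paper's own (one-line) proof: the paper likewise sets $C=X_1\times X_2$, computes $C^\sdual=X_2^\perp\times X_1^\perp$ and $D=C\cap C^\sdual=(X_1\cap X_2^\perp)\times(X_2\cap X_1^\perp)$, and reads the parameters off the master construction (Theorem~\ref{th:third}, equivalently Theorem~\ref{th:four}). The only blemish is your line ``$X^\sdual=X_1^\perp\times X_2^\perp$'', which omits the swap of factors that the symplectic form forces and that you correctly invoke everywhere else; with $X^\sdual=X_2^\perp\times X_1^\perp$ your $Y$, the dimension bookkeeping, and the blockwise distance decomposition all match the paper exactly.
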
 

The result follows from Theorem~\ref{th:third} by defining
$C=X_1\times X_2$; it follows that $C^\sdual = X_2^\perp\times
X_1^\perp$ and $D=C\,\cap\,C^\sdual = (X_1\cap X_2^\perp)\times (X_2\cap
X_1^\perp)$, and the parameters are easily obtained from these definitions, 
see \cite{aly06} for a detailed proof.

The notions of purity can be defined in terms of classical codes as well.
Let $C$ be an additive subcode of $\F_{q}^{2n}$ and $D=C\cap
C^\sdual$.  By theorem~\ref{th:third}, we can obtain an
$((n,K,R,d))_q$ subsystem code $Q$ from $C$
that has minimum distance $d=\swt(D^\sdual - C)$.  
If $d'\leq \swt(C)$, then we say that
the associated operator quantum error correcting code is \textit{pure
to $d'$}. 

Extending the ideas of purity to subsystem codes is useful because it facilitates
the analysis of the parameters of the subsystem codes, as will become clear
when we derive bounds in the next chapter. 

As in the case of stabilizer codes we would like one would like 
to characterize the minimum distance in terms
of the familiar Hamming weight. For this purpose, we reformulate
the above result in terms of codes of length $n$ over $\F_{q^2}$.

Let $(\beta,\beta^q)$ be a fixed normal basis of $\F_{q^2}$ over
$\F_q$. We can define a bijection $\phi$ from $\F_q^{2n}$ onto
$\F_{q^2}^n$ by setting
$$\phi((a|b))=\beta a+\beta^q b\quad \text{for}\quad (a|b)\in
\F_q^{2n}.$$ The map is chosen such that a vector $(a|b)$ of
symplectic weight $x$ is mapped to a vector $\phi((a|b))$ of Hamming
weight $x$. Recall the trace-alternating form 
$\langle v|w\rangle_a$ for vectors $v$ and $w$ in $\F_{q^2}^n$ given in 
equation~(\ref{eq:alternating}) 
$$\langle v|w\rangle_a=\tr_{q/p}\left(\frac{v\cdot w^q-v^q\cdot
w}{\beta^{2q}-\beta^q}\right).$$ It is easy to show that $
\langle c|d\rangle_s = \langle \phi(c)|\phi(d)\rangle_a$ holds for all
$c,d\in \F_q^{2n}$, see~Lemma~\ref{th:isometry}.  Specifically, we have
$c\, \sdual\, d$ if and only if $\phi(c)\, \adual\, \phi(d)$.
Therefore, the previous theorem can be reformulated terms of codes of
length $n$ over $\F_{q^2}$ as follows:

\begin{theorem}\label{th:five}\index{subsystem code!construction!additive}
Let $X$ be a classical additive subcode of\/ $\F_{q^2}^{n}$ such that
$X\neq \{0\}$ and let $Y$ denote its subcode $Y=X\cap X^\adual$. 
If $x=|X|$ and $y=|Y|$, then
there exists an operator quantum error-correcting code $C=
A\otimes B$ such that
\begin{compactenum}[i)]
\item $\dim A = q^n/(xy)^{1/2}$, 
\item $\dim B = (x/y)^{1/2}$.
\end{compactenum}
The minimum distance of subsystem $A$ is given by
$$d=\wt((X+X^\adual)-X)=\wt(Y^\adual-X),$$
where $\wt$ denotes the Hamming weight.  Thus, the subsystem $A$ can
detect all errors in $E$ of Hamming weight less than $d$, and can correct all
errors in $E$ of Hamming weight $\lfloor (d-1)/2\rfloor$ or less. 
\end{theorem}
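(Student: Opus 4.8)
The plan is to deduce this statement from Theorem~\ref{th:four} by transporting everything through the isometry $\phi$ introduced just before the statement. Recall that $\phi\colon \F_q^{2n}\to \F_{q^2}^n$, $\phi((a|b))=\beta a+\beta^q b$, is an $\F_p$-linear bijection, and that it carries the symplectic weight $\swt$ to the Hamming weight $\wt$. So, given the additive code $X\le \F_{q^2}^n$, I would first set $C=\phi^{-1}(X)\le \F_q^{2n}$. Since $\phi$ is a bijection, $|C|=|X|=x$, and $C$ is again a nonzero additive code, so Theorem~\ref{th:four} will be applicable to $C$ once its trace-symplectic dual has been identified.

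The key compatibility I would establish next is that $\phi$ interchanges the two duality notions. By Lemma~\ref{th:isometry} we have $\<c\,|\,d>s=\(\phi(c)\,|\,\phi(d))a$ for all $c,d\in \F_q^{2n}$, hence $c\,\sdual\,d$ if and only if $\phi(c)\,\adual\,\phi(d)$. From this it follows at once that $\phi(C^\sdual)=X^\adual$, equivalently $C^\sdual=\phi^{-1}(X^\adual)$. Consequently
$$
C\cap C^\sdual=\phi^{-1}(X)\cap \phi^{-1}(X^\adual)=\phi^{-1}(X\cap X^\adual)=\phi^{-1}(Y),
$$
so $|C\cap C^\sdual|=|Y|=y$. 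Applying Theorem~\ref{th:four} to $C$ then yields an operator quantum error-correcting code $C'=A\otimes B$ with $\dim A=q^n/(xy)^{1/2}$ and $\dim B=(x/y)^{1/2}$, which are exactly the dimensions claimed in parts i) and ii).

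It remains to transport the minimum distance. Theorem~\ref{th:four} gives the symplectic distance $d=\swt((C+C^\sdual)-C)=\swt((C\cap C^\sdual)^\sdual-C)$. Because $\phi$ is additive it commutes with sums of codes, so $\phi(C+C^\sdual)=\phi(C)+\phi(C^\sdual)=X+X^\adual$; and because it is a bijection it commutes with set differences, giving $\phi((C+C^\sdual)-C)=(X+X^\adual)-X$. Since $\phi$ sends $\swt$ to $\wt$, the symplectic distance of the $\F_q^{2n}$-picture equals the Hamming distance of the $\F_{q^2}^n$-picture, i.e.\ $d=\wt((X+X^\adual)-X)=\wt(Y^\adual-X)$, using $(C\cap C^\sdual)^\sdual \leftrightarrow Y^\adual$. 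The detectability of all errors of weight less than $d$ and correctability up to $\lfloor (d-1)/2\rfloor$ are then inherited verbatim from Theorem~\ref{th:four}. I do not expect a genuine obstacle here: the whole argument is a dictionary translation, and the only points needing care are the routine verifications that $\phi$ respects addition, intersection, set difference, and carries $\swt$ to $\wt$ and $\sdual$ to $\adual$ — all of which are immediate from the definition of $\phi$ together with Lemma~\ref{th:isometry}.
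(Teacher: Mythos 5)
Your proposal is correct and is exactly the paper's argument: the paper's own proof is the one-line remark that the theorem ``follows from Theorem~\ref{th:four} and the definition of the isometry~$\phi$,'' and your write-up simply makes explicit the routine verifications (that $\phi$ respects sums, intersections, set differences, carries $\sdual$ to $\adual$ via Lemma~\ref{th:isometry}, and sends $\swt$ to $\wt$) that the paper leaves implicit.
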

\begin{proof}
This follows from Theorem~\ref{th:four} and the definition of the
isometry~$\phi$.
\end{proof}

The above connections of Clifford operator quantum error-correcting
codes to classical codes allow one to explore a plethora of code
constructions. Henceforth codes constructed by using Theorems~\ref{th:four},\ref{th:five}
will be referred to as \textsl{Clifford subsystem codes} \index{subsystem code}
or just subsystem codes. 
We shall give an example to illustrate the idea.
For simplicity we shall consider binary codes derived from codes
over $\F_4$ whose elements are given by $\{0,1,\omega,\omega^2 \}$, 
where $\omega^2+\omega+1=0$. Further, choosing $\beta=\omega$, the
trace alternating product simplifies as 
$\langle v|w\rangle_a =v\cdot w^2+v^2\cdot w$. Note that if  
$w=(w_1,\ldots,w_n)$, then we denote $w^2 = (w_1^2,\ldots,w_n^2)$.

\begin{example}
Let $X$ be the additive code given by the following generator matrix.
\begin{eqnarray*}
G_X &=& \left[\begin{array}{cccc}1&1&0&0\\0&0&1&1\\\omega&0&\omega&0\\
0&\omega&0&\omega \end{array} \right]
\end{eqnarray*}
Then it can be verified that $X^\adual$ is generated by 
\begin{eqnarray*}
G_{X^\adual} &=& \left[\begin{array}{cccc}\omega&\omega&0&0\\
0&0&\omega&\omega\\1&0&1&0\\
0&1&0&1 \end{array} \right].
\end{eqnarray*}
Further, $Y=X\cap X^\adual$ is generated by 
\begin{eqnarray*}
G_Y &=& \left[\begin{array}{cccc}1&1&1&1\\\omega&\omega&\omega&\omega
\end{array} \right].
\end{eqnarray*}
We see that $|X|=2^4$, while $|Y|=2^2$. Thus by Theorem~\ref{th:five} 
we have a $((4,K,R,d))_2$ Clifford subsystem code where $K=2^4/\sqrt{2^4\cdot 2^2}=1$
and $R=\sqrt{2^4/2^2}=2$. The distance of the code is 
$2$ because the $Y^\adual\setminus X$ contains
$(0,1,0,1)$ among other weight two elements. Thus we obtain a $((4,2,2,2))_2$ i.e. a 
$[[4,1,1,2]]_2$  code. This code is not a Clifford code. The associated Clifford
code is a $[[4,2,2]]_2$ code. Incidentally, this
code is the smallest error detecting subsystem code with nontrivial
dimensions for the subsystems.
\end{example}

Often linear codes are of more interest than the additive codes. So we 
shall consider a linear operator quantum error-correcting code. In this case we can look at 
Hermitian duals instead of the trace-alternating duals. Let 
$x,y\in \F_4^n$. Then we define the 
Hermitian inner product $\langle x|y\rangle_h = \sum_{i}^n x_iy_i^2$. 
Let $C\subseteq \F_4^n$ be an $\F_4$-linear code. The Hermitian dual of $C$ is defined as
$C^\hdual= \{x\in \F_4^n \mid \langle x|c\rangle_h= 0 \mbox{ for all } c \in C\}$.
From Lemma~\ref{th:classical}, 
we know that $C^\adual=C^\hdual$. So we can use
Hermitian duals in Theorem~\ref{th:five}. 

\begin{example}
Let $X\subseteq \F_4^{15}$ be a narrowsense BCH code of design distance 6. 
This code is neither
self-orthogonal nor does it contain its (Hermitian) dual. 
The generator polynomial of $X$
is given by 
$$g(x)= x^7 + x^6 + \omega x^4 + x^2 + \omega^2 x + \omega^2.$$ 
Thus $X$ is an $[15,8,\geq 6]_4$ code. A generator matrix for this
code is obtained as 
\begin{eqnarray*}
G&= &\left[\begin{array}{ccc|ccc|ccc|ccc|ccc}
1&1&0&\omega&0&1&\omega^2&\omega^2&0&0&0&0&0&0&0 \\
0&1&1&0&\omega&0&1&\omega^2&\omega^2&0&0&0&0&0&0 \\
0&0&1&1&0&\omega&0&1&\omega^2&\omega^2&0&0&0&0&0 \\
0&0&0&1&1&0&\omega&0&1&\omega^2&\omega^2&0&0&0&0 \\
0&0&0&0&1&1&0&\omega&0&1&\omega^2&\omega^2&0&0&0 \\
0&0&0&0&0&1&1&0&\omega&0&1&\omega^2&\omega^2&0&0 \\
0&0&0&0&0&0&1&1&0&\omega&0&1&\omega^2&\omega^2&0 \\
0&0&0&0&0&0&0&1&1&0&\omega&0&1&\omega^2&\omega^2 
\end{array} \right].
\end{eqnarray*}
The gauge group is the (full) preimage of $G$ under the isometry $\phi$.
The generator polynomial of its Hermitian dual is given by 
$$
x^8 + x^7 + \omega x^6 + x^5 + \omega x^4 + \omega^2 x^3 + \omega x^2 + \omega x + \omega.
$$
The generator polynomial of $Y=C\cap C^\hdual$ is given by 
$$
h(x)=x^9 + \omega x^8 + x^7 + x^5 + \omega x^4 + \omega^2 x^2 + \omega^2 x + 1.
$$
We see that $Y^\hdual$ is a $[15,9]_4$ code.
Again using Theorem~\ref{th:five} we can compute the dimensions of the subsystems $A$ and $B$ as
$2^{15}/\sqrt{4^8\cdot 4^6}=2$ and $\sqrt{4^8/4^6}=4$ respectively. The code $Y^\hdual$
has minimum weight $5$ (computed using MAGMA). Since $\wt(X)\geq 6$, it follows that
$\wt(Y^\hdual\setminus X)=5$. Thus, $X$ defines a $((15,2,4,5))_2$ code. But note that the
associated Clifford code has the parameters $((15,8,5))_2$. 
\end{example}

Further simplifications of Theorem~\ref{th:five} for constructing operator
quantum error-correcting codes can be found in \cite{aly06}. The reader can 
also find examples of Clifford subsytem codes derived from BCH codes, Reed-Solomon 
codes therein. Interested readers can also refer to \cite{bacon06b}
for a novel method to construct subsystem codes from a pair of classical codes.

\section{Conclusions} 
We have introduced a method for constructing operator quantum
error-correcting codes. We have seen that a Clifford codes $C$ offers
naturally a tensor-product decomposition $C= A\otimes B$, where
the dimensions of the subsystems are controlled by the choice of the
normal subgroup $N$ and its character $\chi$. 

Our construction in terms of classical codes is fairly simple: Any
classical (additive) code over a finite field can be used to construct
an operator quantum error-correcting code. In particular, we do not
require any self-orthogonality conditions as in the case of stabilizer
code constructions. 

The most prominent open problem concerning operator quantum
error-correcting codes is whether one can achieve better error
correction that by means of a quantum error-correcting code. The
construction given in Theorem~\ref{th:four} allows one to compare the
parameters of Clifford codes with the parameters of stabilizer codes.
One should note that a fair comparison should be made between
$[[n-r,k,d]]$ stabilizer codes and $[[n,k,r,d]]$ Clifford subsystem codes. 
In subsequent chapters we shall establish bounds on the parameters
of  subsystem codes and make a fair comparison of the subsystem codes
and stabilizer codes. Additionally, we shall also look into other aspects
which we have not considered here such as encoding subsystem codes,
the gains in encoding and decoding.

\section{Appendix}  
In this appendix, we prove some simple technical results on groups and
characters. 

\begin{lemma}\label{l:support}
Let $E$ be a finite group such that $E'\subseteq Z(E)$, and let $H$ be
a subgroup of $E$. If $\chi\in \Irr(H)$ satisfies $Z(E)\cap \ker \chi
= \{1\}$, then $\supp \chi = Z(H)$.
\end{lemma}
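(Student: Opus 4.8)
The plan is to prove the two inclusions $Z(H)\subseteq \supp\chi$ and $\supp\chi\subseteq Z(H)$ separately. Fix a representation $R$ of $H$ affording $\chi$. For the first inclusion, take $z\in Z(H)$. Since $z$ commutes with every element of $H$ and $R$ is irreducible, Schur's lemma forces $R(z)$ to be a scalar matrix $\omega I$; as $z$ has finite order, $\omega$ is a root of unity, so $|\chi(z)|=\chi(1)>0$ and hence $z\in\supp\chi$. (The same argument shows $Z(H)\subseteq Z(\chi)$, a fact I will reuse.)

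For the reverse inclusion I would show $\chi(h)=0$ for every $h\in H\setminus Z(H)$. Given such an $h$, choose $h'\in H$ with $z:=[h,h']\neq 1$. Because $H\le E$ we have $z\in [H,H]\subseteq E'$, and the hypothesis $E'\subseteq Z(E)$ gives $z\in Z(E)\cap H$; in particular $z\in Z(H)$. The defining identity of the commutator yields $h'^{-1}hh'=hz$, and since $\chi$ is a class function on $H$ this gives $\chi(hz)=\chi(h)$. On the other hand $z\in Z(H)$ means $R(z)=\lambda(z)I$ for a suitable root of unity $\lambda(z)$, so $\chi(hz)=\Tr(R(h)R(z))=\lambda(z)\chi(h)$. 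Comparing the two expressions, $(\lambda(z)-1)\chi(h)=0$.

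It remains to rule out $\lambda(z)=1$, and this is exactly where the faithfulness-type hypothesis enters and constitutes the crux of the argument. If $\lambda(z)=1$ then $R(z)=I$, i.e.\ $z\in\ker\chi$; together with $z\in Z(E)$ this puts $z\in Z(E)\cap\ker\chi=\{1\}$, contradicting $z\neq 1$. Hence $\lambda(z)\neq 1$ and therefore $\chi(h)=0$, so $h\notin\supp\chi$. Combining the two inclusions gives $\supp\chi=Z(H)$. The only mildly delicate point is the bookkeeping that $z=[h,h']$ simultaneously lies in $H$ (so that conjugation stays inside $H$ and $\chi$ may be evaluated there) and in $Z(E)$ (so that $R(z)$ is scalar and the hypothesis on $\ker\chi$ applies); both properties follow at once from $E'\subseteq Z(E)$.
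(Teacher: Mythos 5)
Your proof is correct and follows essentially the same route as the paper's: both inclusions are obtained exactly as in the paper, with the key step being that for $h\notin Z(H)$ the commutator $z=[h,h']$ lies in $E'\subseteq Z(E)$, is represented by a scalar $\lambda(z)\neq 1$ because $Z(E)\cap\ker\chi=\{1\}$, and hence forces $\chi(h)=0$. The only difference is presentational (you argue directly rather than by contradiction), so there is nothing to add.
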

\begin{proof}
Let $h\in \supp(\chi)$. Seeking a contradiction, we assume that $h\in
H-Z(H)$. Since $E'\subseteq Z(E)$, there exists an element $g\in H$
such that $ghg^{-1}=zh$ with $z\in Z(E)$ such that $z\neq 1$.  Since 
$zh\in H$ and $h\in H$, we have $z\in H\cap Z(E)$. As $\chi$ is
irreducible, the element $z\in H\cap Z(E)$ is represented by $\omega
I$ for some $\omega\in \C$ by Schur's lemma; furthermore, $\omega \neq
1$, since $Z(E)\cap \ker \chi=\{1\}$.
We note that 
$ \chi(h)=\chi(ghg^{-1}) = \chi(zh)= \omega\chi(h)$, with
$\omega\neq 1$, forcing $\chi(h)=0$, contradiction. 

The elements of $Z(H)$ belong to the support of $\chi$, since they are
represented by scalar invertible matrices.
\end{proof}

\begin{lemma}\label{l:support2}
Let $E$ be a finite group such that $E'\subseteq Z(E)$, and let
$\phi\in \Irr(E)$ be a faithful character of degree
$\phi(1)=|E:Z(E)|^{1/2}$. Let $T$ be a subgroup of $E$ such that $Z(E)\le T$. 
If $\vartheta\in \Irr(T)$ and $(\phi_T,\vartheta)_T>0$, then 
$\supp(\vartheta)=Z(T)$. 
\end{lemma}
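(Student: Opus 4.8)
The plan is to deduce this from Lemma~\ref{l:support}, which already establishes that $\supp\chi = Z(H)$ whenever $H\le E$ and $\chi\in\Irr(H)$ satisfies $Z(E)\cap\ker\chi=\{1\}$. So it suffices to verify that the hypothesis $Z(E)\cap\ker\vartheta=\{1\}$ holds for $\vartheta$, and then to invoke that lemma with $H=T$ and $\chi=\vartheta$. The entire argument therefore reduces to understanding how the central subgroup $Z(E)$ acts in the character $\vartheta$.

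First I would record the action of $Z(E)$ in the faithful character $\phi$. Since $\phi$ is irreducible, Schur's lemma forces each $z\in Z(E)$ to be represented by a scalar matrix $\lambda(z)\,I$ in the representation affording $\phi$; the map $z\mapsto\lambda(z)$ is a linear character of $Z(E)$, and $\phi_{Z(E)}=\phi(1)\lambda$. Because $\phi$ is faithful, $\lambda$ must itself be faithful on $Z(E)$, that is $\ker\lambda=\{1\}$ (equivalently, $Z(E)$ is cyclic and $\lambda$ injects it into $\C^{\times}$). This is the one place where the faithfulness and degree hypotheses on $\phi$ genuinely enter.

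Next I would transport this information to $\vartheta$. The hypothesis $(\phi_T,\vartheta)_T>0$ says that $\vartheta$ is a constituent of the restricted character $\phi_T$; equivalently, the $T$-module affording $\vartheta$ occurs as a subquotient of the restriction to $T$ of the $E$-module affording $\phi$. On that larger module every $z\in Z(E)$ still acts by the scalar $\lambda(z)\,I$, hence it acts by the very same scalar on the constituent. On the other hand, $Z(E)\le T$ together with the centrality of $Z(E)$ in $E$ gives $Z(E)\le Z(T)$, so Schur's lemma applies inside $\vartheta$ as well, and the central character of $\vartheta$ restricted to $Z(E)$ is exactly $\lambda$. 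Consequently, $z\in\ker\vartheta$ forces $\lambda(z)=1$, whence $z=1$; thus $Z(E)\cap\ker\vartheta=\{1\}$. Applying Lemma~\ref{l:support} with $H=T$ and $\chi=\vartheta$ then yields $\supp\vartheta=Z(T)$, as claimed. I expect the only delicate point to be the bookkeeping in this last step—confirming that the scalar by which $z$ acts is unchanged on passing from $\phi$ to its $T$-constituent $\vartheta$—but this becomes immediate once one reasons with modules rather than with characters.
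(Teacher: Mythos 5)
Your proof is correct, and it reaches the paper's conclusion by a slightly different mechanism at the one step that matters. Both arguments reduce the lemma to checking $Z(E)\cap\ker\vartheta=\{1\}$ and then invoke Lemma~\ref{l:support} with $H=T$; the difference lies in how each identifies the central character of $\vartheta$ on $Z(E)$ with the faithful central character $\lambda$ of $\phi$. The paper first establishes $\supp(\phi)=Z(E)$ (citing Isaacs, Lemma~2.29, which is where the degree hypothesis $\phi(1)=|E:Z(E)|^{1/2}$ is consumed), uses this to collapse the inner product $(\phi_T,\vartheta)_T$ to $\frac{1}{|T:Z|}(\phi_Z,\vartheta_Z)_Z$, writes $\phi_Z=\phi(1)\varphi$ and $\vartheta_Z=\vartheta(1)\theta$ for linear characters $\varphi,\theta$ of $Z=Z(E)$, and concludes $\theta=\varphi$ from positivity. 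You instead argue module-theoretically: $(\phi_T,\vartheta)_T>0$ means the $\vartheta$-module is a direct summand of the restriction of the $\phi$-module to $T$ (semisimplicity in characteristic zero), and a central element acting as $\lambda(z)I$ on the whole module acts by the same scalar on every summand. Your route is a bit more economical — it never needs $\supp(\phi)=Z(E)$, and in fact never uses the degree hypothesis on $\phi$ at all, so your argument proves the marginally more general statement for any faithful $\phi\in\Irr(E)$. The paper's inner-product computation is the more "character-theoretic" phrasing and keeps everything at the level of class functions, which is consistent with the style of the surrounding appendix, but the two proofs are logically equivalent where they overlap and both correctly apply Lemma~\ref{l:support} at the end.
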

\begin{proof}
Let $Z=Z(E)$. We have $\supp(\phi)=Z$ by \cite[Lemma 2.29]{isaacs94}. 
Since the support of $\phi$ equals $Z$, it follows from the definitions that 
$$ 0<(\phi_T,\vartheta)_T = \frac{1}{|T:Z|}(\phi_Z,\vartheta_Z)_Z.$$
Clearly, $\phi_Z=\phi(1)\varphi$ and $\vartheta_Z=\vartheta(1)\theta$ 
for some linear characters $\varphi$
and 
$\theta$ of $Z$. As
$(\phi_Z,\vartheta_Z)_Z=\phi(1)\vartheta(1)(\varphi,\theta)_Z>0$, we
must have $\theta=\varphi$. Since $\phi$ is faithful, it follows that
$\varphi=\theta$ is faithful; hence, $\ker \vartheta \cap Z(E)=\{1\}$.
Thus, $\supp\vartheta=Z(T)$ by Lemma~\ref{l:support}.
\end{proof}

\begin{lemma}\label{l:center} 
Suppose that $T$ is a group with subgroups $L$ and $N$ such that
$T=LN$ and $[L,N]=1$. Then $Z(T)=Z(L)Z(N)$. 
\end{lemma}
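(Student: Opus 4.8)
The plan is to establish the set equality $Z(T)=Z(L)Z(N)$ by proving the two inclusions separately, both of which reduce to elementary manipulations with the hypothesis $[L,N]=1$ (every element of $L$ commutes with every element of $N$) together with the factorization $T=LN$.

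For the inclusion $Z(L)Z(N)\subseteq Z(T)$, I would take an arbitrary element $z=z_L z_N$ with $z_L\in Z(L)$ and $z_N\in Z(N)$, and an arbitrary $t=\ell n\in T$ with $\ell\in L$, $n\in N$, and check directly that $zt=tz$. The point is that the relevant pairs among $z_L,z_N,\ell,n$ commute: $z_L$ commutes with $\ell$ because $z_L\in Z(L)$, $z_N$ commutes with $n$ because $z_N\in Z(N)$, and the cross pairs $(z_L,n)$, $(z_N,\ell)$ and $(z_L,z_N)$ all commute because $[L,N]=1$. Shuffling the factors accordingly turns $z_L z_N\ell n$ into $\ell n z_L z_N$, which is exactly $tz$, so $z\in Z(T)$.

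For the reverse inclusion $Z(T)\subseteq Z(L)Z(N)$, I would start with $z\in Z(T)$ and use $T=LN$ to fix some decomposition $z=\ell n$ with $\ell\in L$ and $n\in N$; the goal is to show that this particular $\ell$ lies in $Z(L)$ and this particular $n$ lies in $Z(N)$. To see $\ell\in Z(L)$, test $z$ against an arbitrary $\ell'\in L$: since $z$ is central, $\ell n\ell'=\ell'\ell n$, and because $n$ commutes with $\ell'$ (as $[L,N]=1$) the left side equals $\ell\ell' n$; cancelling $n$ gives $\ell\ell'=\ell'\ell$. The symmetric computation, testing $z$ against an arbitrary $n'\in N$ and using that $\ell$ commutes with $n'$, yields $nn'=n'n$, so $n\in Z(N)$. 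Hence $z=\ell n\in Z(L)Z(N)$.

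I do not anticipate a genuine obstacle here; the only place that requires a moment of care is the reverse inclusion, where one must resist decomposing $z$ in an uncontrolled way and instead observe that centrality in $T$ forces each factor to be central in its own subgroup---the hypothesis $[L,N]=1$ being precisely what lets the ``other'' factor be moved past the test element so it can be cancelled.
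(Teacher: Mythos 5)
Your proof is correct and the substantive direction, $Z(T)\subseteq Z(L)Z(N)$, is argued exactly as in the paper: write $z=\ell n$, test against $n'\in N$ (resp.\ $\ell'\in L$), use $[L,N]=1$ to slide the other factor past the test element, and cancel. The paper only writes out this direction and leaves the easy containment $Z(L)Z(N)\subseteq Z(T)$ implicit, whereas you verify it explicitly; that is the only difference.
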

\begin{proof} 
Since $T=LN$, an arbitrary element $z$ of $Z(T)$ can be expressed in
the form $z=ln$ for some $l\in L$ and $n\in N$.  For $n'$ in $N$, we
have $lnn'=n'ln=ln'n$, where the latter equality follows from
$[L,N]=1$. Consequently, $nn'=n'n$ for all $n'$ in $N$, so $n$ is an
element of $Z(N)$. Similarly, $l$ must be an element of $Z(L)$.  It
follows that $Z(T)=Z(L)Z(N)$.
\end{proof}


\chapter{Subsystem Codes -- Bounds and Constructions\footnotemark}\label{ch:subsys2}
\footnotetext{\copyright 2007. Part of the material in this chapter is reprinted from 
A. Klappenecker and P. K. Sarvepalli, ``On subsystem codes beating the 
quantum Hamming or Singleton bound'', {\em Proc. Royal Society London A}, vol 463, pp.~2887-2905, 2007.}
In this chapter we extend the theory of subsystem codes. One of our
goals is to clarify the benefits that can be gained from the use of
subsystem codes with respect to stabilizer codes. In this context we
derive bounds on the parameters of subsystem codes. These bounds help
in comparing the performance of subsystem codes with respect to 
stabilizer codes. Of course subsystem codes subsume stabilizer and
in that sense every stabilizer code is a subsystem code. However,
we use the term subsystem code to mean a code with nontrivial 
dimension of the gauge subsystem. 
 We generalize the quantum Singleton bound to
$\F_q$-linear subsystem codes.  It follows that no subsystem code over
a prime field can beat the quantum Singleton bound.  On the other
hand, we show the remarkable fact that there exist impure subsystem
codes beating the quantum Hamming bound.  A number of open problems
concern the comparison in performance of stabilizer and subsystem
codes. One of the open problems suggested by Poulin's work asks
whether a subsystem code can use fewer syndrome measurements than an
optimal $\F_q$-linear MDS stabilizer code while encoding the same
number of qudits and having the same distance. We prove that linear
subsystem codes cannot offer such an improvement under complete
decoding.


One of the promises of subsystem codes is their potential for simplifying 
error recovery. Perhaps the benefits of subsystem codes are best understood by an example.
Consider the first quantum error correcting
code proposed by  \cite{shor95}, which encodes one qubit into nine
qubits. This code which is capable of correcting a single error
on any of the qubits requires the measurement of eight syndrome qubits.
The Bacon-Shor subsystem code \cite{bacon06a} on the other hand, also encodes 
one qubit into nine but it requires only four syndrome measurements,
giving a simpler error recovery scheme.

In this context it becomes crucial to identify when subsystem codes
provide gains over the stabilizer codes. It also becomes necessary to
compare the stabilizer codes and the subsystem codes fairly and with
meaningful criteria. For instance, once again consider the
$[[9,1,3]]_2$ Shor code requiring $n-k=9-1=8$ syndrome
measurements. The $[[9,1,4,3]]_2$ Bacon-Shor code on the other hand
requires $n-k-r=9-1-4=4$ syndrome measurements.  Clearly, this code is
better than the Shor's code.  But the optimal single error correcting
binary quantum code that encodes one qubit is the $[[5,1,3]]_2$ code,
which also requires only $5-1=4$ syndrome measurements.  So it is
apparent that while a given subsystem code can be superior to some
stabilizer codes, it is not at all obvious that it is better than the
best stabilizer code for the same function, {\em viz.}, encoding $k$
qubits with a distance $d$.

The first part of our chapter seeks to address this issue for
$\F_q$-linear Clifford subsystem codes which might perhaps be the most
useful class of subsystem codes.  In this chapter we generalize the
quantum Singleton bound to $\F_q$-linear Clifford subsystem codes. It
follows that no Clifford subsystem code over a prime field can beat
the quantum Singleton bound. We then show how the quantum Singleton
bound can be applied to make the comparison between stabilizer and
subsystem codes (focusing on stabilizer codes that are optimal in the
sense that they meet the quantum Singleton bound).  This bound makes
it possible to quantify the gains that subsystem codes can provide in
error recovery. In particular, our results show that these gains
involve a trade off between the distance of the subsystem code and the
number of information and the gauge qudits. We show that if there exists an $\F_q$-linear MDS stabilizer code, {\em
i.e.}, a code meeting the quantum Singleton bound, then no
$\F_q$-linear subsystem code can outperform it in the sense of
requiring fewer syndrome measurements for error correction.

Then we shift our attention to a class of subsystem
codes on lattices. Bacon and Casaccino \cite{bacon06b} obtain a subsystem code
from two classical codes. We show that
this method is a special case of the Euclidean construction for subsystem
codes proposed in \cite{aly06} and give a coding theoretic
analysis of these codes.

Since the early works on quantum error-correcting codes, it has been
suspected that impure codes should somehow perform better than the
pure codes. However, it was shown that the quantum Singleton bound
holds true for both pure and impure stabilizer codes. But it was not
so clear with respect to the quantum Hamming bound. In fact, it was
often conjectured that there might exist impure quantum
error-correcting codes beating the quantum Hamming bound, but a proof
remained elusive.  At least in the case of binary stabilizer codes
there exists some evidence that the conjecture might not be true, as
\cite{ashikhmin99} showed that asymptotically the quantum Hamming
bound was obeyed by impure codes as well, and \cite{gottesman97}
showed that no single error correcting binary stabilizer code can beat
the quantum Hamming bound.  In this context it is not surprising that
questions were raised \cite{bacon06a} if subsystem codes are any
different. In \cite{aly06} we proved the quantum Hamming bound for pure
subsystem codes.  We show here that impure subsystem codes can indeed
beat the quantum Hamming bound for pure subsystem codes. For example,
we demonstrate that the lattice subsystem codes can provide examples
of impure subsystem codes that beat the quantum Hamming bound.

The chapter is structured as follows. 
We assume that the reader is familiar with the notion of 
subsystem code introduced in the last chapter. 
We prove the quantum Singleton bound
for subsystem codes in Section~\ref{sec:boundsSubsys}. The lattice subsystem codes
are focus of attention in Section~\ref{sec:latticeCodes} and
Section~\ref{sec:packing}, wherein it is shown that there exist impure
subsystem codes that beat the quantum Hamming bound. We conclude with
a few open questions on subsystem codes.

\section{Quantum Singleton Bound for $\F_{q}$-linear Subsystem Codes}\label{sec:boundsSubsys}

Recall that the quantum Singleton bound states that an $[[n,k,d]]_q$
quantum code satisfies $2d\leq n-k+2$, \cite{KnLa97,rains99}.  In this context
it is natural to ask if subsystem codes also obey a similar
relation. The usefulness of such a bound is obvious. Apart from
establishing the bounds for optimal subsystem codes, they also make it
possible to compare stabilizer and subsystem codes, as we shall see
subsequently.  We prove that the $\F_q$-linear subsystem codes with
the parameters $[[n,k,r,d]]_q$ satisfy a quantum Singleton like bound
{\em viz.,} $k+r\leq n-2d+2$. It will be seen that this reduces to the
quantum Singleton bound if $r=0$.  More interestingly, this reveals
that there is a trade off in the size of subsystem $A$ and the gauge
subsystem. One pays a price for the gains in error recovery. The cost
is the reduction in the information to be stored.

Our proof for
this result is quite straightforward, though the intermediate details are a little
involved. First we show that a linear $[[n,k,r>0,d]]_q$ subsystem code that is 
exactly pure to 1 can be punctured to an $[[n-1,k,r-1,d]]_q$ code which retains the relationship
between $n,k,r,d$. If $d=2$ by repeated puncturing we either arrive at a pure code or a stabilizer code, both of which have upper bounds.
For $d>2$, two cases can arise, if the code is exactly pure to 1, we simply puncture it to get
a smaller code as in $d=2$ case. Otherwise, we puncture it
to get an $[[n-1,k,r+1,d-1]]_q$ code. By repeatedly shortening we either get a stabilizer code or a distance 2 code both of which have an upper bound. Keeping
track of the change in the parameters will give us an upper bound on the parameters of
the original code.

Let $w=(a_1,a_2,\ldots,a_n|b_1,b_2,\ldots,b_n)\in \F_q^{2n}$. We denote by 
$\rho(w) \in \F_q^{2n-2}$, the vector obtained by deleting
the first and the $n+1^{th}$ coordinates of $w$. Thus we have
$$\rho(w)=(a_2,\ldots,a_n|b_2,\ldots,b_n) \in \F_q^{2n-2}.$$
Similarly, given a classical code 
$C\subseteq \F_q^{2n}$ we denote  the puncturing of a codeword or code in the 
first and $n+1$ coordinates by $\rho(C)$.

In Theorem~\ref{th:four} subsystem codes are constructed using a trace symplectic
product. Following Remark~\ref{rm:symFormChoice} for $\F_q$-linear codes instead 
of considering the trace symplectic inner product
we can consider the relatively simpler symplectic product. Recall that the symplectic product 
of $u=(a|b)$ and $v=(a'|b')$ in $\F_q^{2n}$ is defined as
$\langle u|v\rangle_s = \langle(a|b)|(a'|b') \rangle_s =a'\cdot b-a\cdot b'$. 
The symplectic dual of a code $C\subseteq \F_q^{2n}$ is defined as 
$C^\sdual = \{x\in \F_q^{2n} \mid \langle x|y\rangle_s=0, \mbox{ for all }y \in C\}$. 
As we shall be concerned with $\F_q$-linear codes in this chapter, we will 
focus only on the symplectic inner product in the rest of the chapter.
\begin{lemma}\label{lm:xzBasis}
Let $C \subseteq \F_q^{2n}$ be an $\F_q$-linear code.
Then $C$ has an $\F_q$-linear basis of the form
$$B = \{ z_1,\ldots, z_k,z_{k+1},x_{k+1},z_{k+2},x_{k+2},\ldots, z_{k+r},x_{k+r} \}$$
where $\langle x_i |x_j \rangle_s  = 0 = \langle z_i |z_j \rangle_s$ and 
$\langle x_i| z_j\rangle_s = \delta_{i,j}$.
\end{lemma}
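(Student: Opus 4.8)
The plan is to recognize this statement as the existence of a \emph{symplectic (hyperbolic) basis adapted to the radical} of the restriction of the symplectic form to $C$. First I would note that the form is alternating on $C$: for any $u=(a|b)$ one has $\langle u|u\rangle_s = a\cdot b - a\cdot b = 0$. The normal form promised by the lemma is then exactly the classical normal form of an alternating bilinear form, in which $z_1,\dots,z_k$ span the radical and the pairs $(x_i,z_i)$ for $k+1\le i\le k+r$ span mutually orthogonal hyperbolic planes.

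First I would isolate the radical $R = C \cap C^\sdual = \{v\in C : \langle v|w\rangle_s = 0 \text{ for all } w\in C\}$ and fix any $\F_q$-basis $z_1,\dots,z_k$ of it. These vectors are orthogonal to all of $C$, and in particular to one another and to every $x_i$, so the required relations $\langle z_i|z_j\rangle_s=0$ and $\langle x_i|z_j\rangle_s=0=\delta_{ij}$ (for $j\le k<i$) hold automatically for them.

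The core of the argument is an inductive extraction of hyperbolic pairs from a complement of $R$ in $C$. If $C=R$ we are done with $r=0$. Otherwise I pick $x_{k+1}\in C\setminus R$; by definition of the radical there is $w\in C$ with $\langle x_{k+1}|w\rangle_s\neq 0$, and after rescaling I may set $z_{k+1}=\langle x_{k+1}|w\rangle_s^{-1}w$ so that $\langle x_{k+1}|z_{k+1}\rangle_s=1$. The plane $H=\lspan\{x_{k+1},z_{k+1}\}$ then carries a nondegenerate form, and the key step is to establish the orthogonal splitting $C = H \oplus (C\cap H^\sdual)$: every $v\in C$ decomposes uniquely as $v = \bigl(\langle v|z_{k+1}\rangle_s\, x_{k+1} - \langle v|x_{k+1}\rangle_s\, z_{k+1}\bigr) + v'$ with $v'\in C\cap H^\sdual$, by subtracting the projection onto $H$ and using antisymmetry. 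Replacing $C$ by $C\cap H^\sdual$ and repeating produces the remaining pairs $(x_{k+2},z_{k+2}),\dots,(x_{k+r},z_{k+r})$ in successively orthogonal planes, which forces $\langle x_i|x_j\rangle_s=\langle z_i|z_j\rangle_s=0$ and $\langle x_i|z_j\rangle_s=\delta_{ij}$ across distinct pairs; within a single pair these hold by construction and by the alternating property.

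The main obstacle I anticipate is bookkeeping rather than anything conceptual: I must confirm that each step strictly lowers the dimension by $2$ and \emph{preserves the radical}, i.e.\ that the radical of $C\cap H^\sdual$ is again $R$. This follows because $R\subseteq H^\sdual$ (as $R\perp C\supseteq H$), so $R\subseteq C\cap H^\sdual$, while any vector in the radical of $C\cap H^\sdual$ is orthogonal both to that subspace and (lying in $H^\sdual$) to $H$, hence to all of $C=H\oplus(C\cap H^\sdual)$, placing it in $R$. This invariance guarantees the recursion terminates with the residual space equal to $R$, which simultaneously pins down that the non-radical part has even dimension $2r$ and that $\dim C = k+2r$. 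Once this splitting lemma is in hand, the remaining inner-product checks are immediate, and concatenating $\{z_1,\dots,z_k\}$ with the extracted pairs yields the desired basis $B$.
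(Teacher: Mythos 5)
Your proof is correct, but it takes a different route from the one in the paper. You build the basis by first fixing the radical $R=C\cap C^\sdual$ and then peeling off hyperbolic planes one at a time: pick $x\in C\setminus R$, find a partner $z$ with $\langle x|z\rangle_s=1$, verify the orthogonal splitting $C=H\oplus(C\cap H^\sdual)$ for $H=\lspan\{x,z\}$, and recurse on $C\cap H^\sdual$, checking that the radical is preserved so the recursion terminates exactly at $R$. The paper instead fixes \emph{all} of the $z_i$ at the outset as a basis of a maximal isotropic subspace $C_0$ of $C$ (so the last $r$ of them are already designated), and then successively produces the partners $x_{k+1},\dots,x_{k+r}$, each chosen to pair nontrivially with exactly one $z_i$ and to be orthogonal to the previously chosen $x$'s. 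The trade-off: the paper's argument is shorter on the page but leans on two facts it does not spell out --- that a maximal isotropic subspace of $C$ contains the radical and has dimension $k+r$, and that a codeword pairing nontrivially with exactly one prescribed $z_i$ (and orthogonal to the earlier $x$'s) always exists, which is itself a nondegeneracy/duality argument. Your version is more self-contained, since the splitting identity $v=\bigl(\langle v|z\rangle_s\,x-\langle v|x\rangle_s\,z\bigr)+v'$ does all the work explicitly and the dimension count $\dim C=k+2r$ falls out of the recursion rather than being assumed; the price is the extra bookkeeping about invariance of the radical, which you handle correctly. (The only cosmetic discrepancy is a sign convention: you normalize $\langle x_i|z_i\rangle_s=1$ while the paper normalizes $\langle z_i|x_i\rangle_s=1$; by antisymmetry this is absorbed by rescaling one vector by $-1$.)
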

\begin{proof}
First we choose a basis $B=\{z_1,\ldots, z_k, z_{k+1},\ldots, z_{k+r} \}$ for a maximal isotropic subspace $C_0$ of $C$.
If $C_0\neq C$, then we can choose a codeword $x_{k+1}$ in $C$ that is orthogonal to all of the $z_i$ except one, say $z_{k+1}$ (renumbering if necessary).  We can scale $x_{k+1}$ by an element in $\F_q^\times$ so that 
$\langle z_{k+1} |x_{k+1}\rangle_s=1$.  If $\langle C_0,x_{k+1}\rangle\neq C$, then we repeat the process by choosing another codeword $x_{k+i}$  that is orthogonal to all the previously chosen $\{x_{k+1},\ldots, x_{k+i-1}\}$ and all $z_i$ except $z_{k+i}$, 
until we have a basis of the desired form.
\end{proof}

\medskip
For the remainder of the section, we fix the following notation. 
By Theorem~\ref{th:four}, we can associate with an 
$\F_q$-linear $[[n,k,r,d]]_q$ subsystem code two 
classical $\F_q$-linear codes
$C,D\subseteq \F_{q}^{2n}$ such that $D=C\cap C^\sdual$,
$|C|=q^{n-k+r}$, $|D|=q^{n-k-r}$ and $\swt(D^\sdual\setminus C)=d$.  
By lemma~\ref{lm:xzBasis}, we can also
assume that $C$ is generated by
$$C =\langle z_1,\ldots, z_s,z_{s+1},x_{s+1},\ldots, z_{s+r},x_{s+r}\rangle,$$ 
where $s=n-k-r$
and the vectors $x_i$, $z_i$ in $\F_q^{2n}$ satisfy the relations 
$\langle x_i|x_j \rangle_s =0 = \langle z_i|z_j\rangle_s$ and
$\langle x_i|z_j \rangle_s=\delta_{i,j}$. These relations on $x_i,z_i$  imply that 
\begin{eqnarray*}
C^\sdual &=&\langle z_1,\ldots, z_s,z_{s+r+1},x_{s+r+1},\ldots, z_{s+r+k},x_{s+r+k}\rangle,\\
D=C\cap C^\sdual &=&\langle z_1,\ldots, z_s \rangle,\\ 
D^\sdual &=& \langle z_1,\ldots, z_s,z_{s+1},x_{s+1},\ldots, z_{n},x_{n}\rangle.\end{eqnarray*}

\begin{lemma}\label{lm:punc1lin}
An $\F_{q}$-linear $[[n,k,r>0,d\geq 2]]_q$ Clifford subsystem code exactly pure to $1$ can be punctured to 
an $\F_{q}$-linear $[[n-1,k,r-1, \geq d]]_q$ code. 
\end{lemma}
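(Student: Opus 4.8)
The plan is to use the weight-one codeword supplied by the purity hypothesis to peel off a single gauge pair that lives entirely on the coordinate being deleted by $\rho$. Since the code is exactly pure to $1$, its associated gauge code $C\subseteq\F_q^{2n}$ has $\swt(C)=1$; I would fix a codeword $w\in C$ of symplectic weight $1$ and, after a weight-preserving coordinate permutation, assume that its support is the qudit erased by $\rho$, i.e. $w$ occupies only positions $1$ and $n+1$. The first thing to pin down is that $w$ can be taken to be a genuine gauge generator and not a pure stabilizer element, that is $w\notin C^\sdual$: this is precisely the configuration that produces the reduction $r\mapsto r-1$ (a weight-one element lying in $D=C\cap C^\sdual$ would instead shrink the stabilizer). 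Granting $w\notin C^\sdual$, I extend $w$ to a hyperbolic pair $z_{s+1}=w,\ x_{s+1}$ inside the symplectic basis of Lemma~\ref{lm:xzBasis}, with $\langle z_{s+1}\mid x_{s+1}\rangle_s=1$.

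The technical core is a change of symplectic basis of $C$ that concentrates every occurrence of the deleted qudit into the single pair $(z_{s+1},x_{s+1})$. Because $w$ is supported only on that qudit, subtracting scalar multiples of $w$ from any other basis vector alters only its qudit-$1$ component, and only along the fixed line spanned by the qudit-$1$ part of $w$. Pairing a remaining generator $g$ against $w$ and using that $w$ has no support elsewhere, the relation $\langle w\mid g\rangle_s=0$ forces the qudit-$1$ component of $g$ into that same line, so it can be cleared; the lone exception is $x_{s+1}$, where $\langle w\mid x_{s+1}\rangle_s=1$ pins a nonzero residual component that cannot be removed. After this normalization, $z_1,\dots,z_s,z_{s+2},x_{s+2},\dots,z_{s+r},x_{s+r}$ are all supported on qudits $2,\dots,n$, while $z_{s+1}=w$ lives entirely on qudit $1$ and $x_{s+1}$ is the only mixed generator.

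I would then define the punctured gauge code by discarding the isolated pair and deleting the coordinate,
\[
C'=\rho\big(\langle z_1,\dots,z_s,\,z_{s+2},x_{s+2},\dots,z_{s+r},x_{s+r}\rangle\big)\subseteq\F_q^{2(n-1)}.
\]
Since all of these generators avoid qudit $1$, the map $\rho$ is injective on their span and preserves the symplectic form, so $\dim_{\F_q}C'=s+2(r-1)=\dim_{\F_q}C-2$ and $C'$ again decomposes into an isotropic part $\langle\rho(z_1),\dots,\rho(z_s)\rangle$ together with $r-1$ hyperbolic pairs. Reading the parameters off through Theorem~\ref{th:four} gives $D'=C'\cap C'^\sdual=\langle\rho(z_1),\dots,\rho(z_s)\rangle$ with $|D'|=q^{s}=|D|$, and a short computation with $s=n-k-r$ yields $k'=k$ and $r'=r-1$ on $n-1$ qudits, exactly as claimed.

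The step I expect to be the main obstacle is showing that the distance does not drop, i.e. $\swt(D'^\sdual\setminus C')\ge d$. The key leverage is that $D=\langle z_1,\dots,z_s\rangle$ is now supported away from qudit $1$, so its symplectic dual $D^\sdual$ contains the entire plane $\ker\rho$ attached to that qudit; a dimension count then identifies $D'^\sdual$ with $\rho(D^\sdual)$. Given $v\in D'^\sdual\setminus C'$, I lift it to some $u\in D^\sdual$ and, using $\ker\rho\subseteq D^\sdual$, adjust the lift by an element of $\ker\rho$ so that $u$ has no support on qudit $1$; then $\swt(\rho(u))=\swt(u)$ exactly, with no loss of a coordinate. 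Because the only generators of $C$ meeting qudit $1$ are $w$ and $x_{s+1}$, the intersection of $C$ with the qudit-$\{2,\dots,n\}$ support is precisely the span defining $C'$, so $v\notin C'$ forces $u\notin C$, whence $u\in D^\sdual\setminus C$ and $\swt(v)=\swt(u)\ge d$. This establishes $d'\ge d$ and completes the construction; the only point demanding genuine care beyond routine verification is the case analysis guaranteeing that the weight-one codeword is the gauge element $w\notin C^\sdual$ underlying the whole argument.
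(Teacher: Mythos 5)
Your argument for the configuration in which the weight-one codeword $w$ satisfies $w\notin C^{\sdual}$ is essentially sound, and there it is a legitimate, slightly more direct alternative to the paper's route: you discard the entire hyperbolic pair $(w,x_{s+1})$ and puncture in one step, whereas the paper first demotes that pair to a stabilizer element and then punctures. One point of hygiene: after you subtract multiples of $w$ from $z_1,\dots,z_s$ to clear their components on the deleted qudit, the resulting vectors no longer span $D=C\cap C^{\sdual}$ (they now pair nontrivially with $x_{s+1}$); your distance argument still survives, because any lift $u$ supported away from the deleted qudit is automatically orthogonal to $w$ and hence orthogonal to the original $z_i$ as well as the modified ones, but this needs to be said rather than asserted via ``$D$ is now supported away from qudit $1$.''

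The genuine gap is the case you defer to ``case analysis'': exact purity to $1$ only says $\swt(C)=1$, and it can perfectly well happen that every weight-one codeword of $C$ lies in $D=C\cap C^{\sdual}$ --- for instance, append a qudit to a subsystem code pure to $2$ and adjoin the weight-one generator $(0\cdots 0\,\alpha\mid 0\cdots 0)$ to its stabilizer, as in Lemma~\ref{th:lengthening}; the unique (up to scalar) weight-one element of the resulting gauge code is a stabilizer element. In that situation there is no $x_{s+1}\in C$ with $\langle w\mid x_{s+1}\rangle_s\neq 0$, so $w$ cannot be completed to a hyperbolic pair and your construction does not even start; moreover, puncturing alone then kills a stabilizer generator and leaves $r$ unchanged, so the reduction $r\mapsto r-1$ does not come for free. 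This branch cannot be argued away and needs its own treatment: the paper punctures to obtain an $[[n-1,k,r,\geq d]]_q$ code and then converts one gauge pair into a stabilizer element (which does not decrease the distance) to reach $[[n-1,k,r-1,\geq d]]_q$. As written, your proof establishes the lemma only for codes whose weight-one element is a genuine gauge element.
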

\begin{proof}
As mentioned above, we can associate to the subsystem code two classical codes 
$C,D\subseteq \F_q^{2n}$. 
Two cases arise depending on $\swt(D)$. 
\begin{compactenum}
\item[a)] If $\swt(D)=1$, then without loss of generality we can assume that $\swt(z_1)=1$.
Further, $z_1$ can be taken to be of the form $(1,0,\ldots,0|a,0,\ldots,0)$. And 
for $i\neq 1$, because of $\F_q$-linearity of the codes we can 
pick 
all 
$x_i,z_i$ to be of the form 
$(0,a_2,\ldots,a_n|b_1,b_2,\ldots,b_n)$. 
Further, as $x_i,z_i$ must satisfy the orthogonality relations 
with $z_1$ {\em viz.,} $\langle z_1|z_i \rangle_s=0= \langle z_1|x_i \rangle_s$, for 
$i>1$ we can choose $x_i,z_i$ to be of the form $(0,a_2,\ldots,a_n|0,b_2,\ldots,b_n)$.
It follows that because of the form of $x_i$ and $z_i$ puncturing the first and $n+1^{th}$ coordinate
will not alter these orthogonality relations, in particular $\langle \rho(x_i)|\rho(z_i)\rangle_s\neq 0$
for ${s+1}\leq i\leq n$. 

Letting $\rho(x_i)=x_i'$, $\rho(z_i)=z_i'$ and observing that
$\rho(z_1)=(0,\ldots,0|0,\ldots,0)$, we see that the code 
$\rho(C) = \langle z_2',\ldots, z_s', z_{s+1}',x_{s+1}',\ldots,z_{s+r}',x_{s+r}'\rangle$.
Denoting by $D_p=\rho(C)\cap \rho(C)^\sdual$ it is immediate that $D_p$ is generated by
$\{ z_2',\ldots, z_s'\}$ while 
$D_p^\sdual= \langle z_2',\ldots, z_s', z_{s+1}',x_{s+1}',\ldots,z_{n}',x_{n}'\rangle$.
Hence $\rho(C)$ defines an $[[n-1,k,r,\swt(D_p^\sdual\setminus \rho(C))]]_q$ code. 

Next we show that $\swt(D_p^\sdual\setminus \rho(C))\geq d$. 
Let $u=(a_2,\ldots,a_n|b_2,\ldots, b_n)$ be in
$D_p^\sdual \setminus \rho(C)$, then we can easily verify that 
$(0,a_2,\ldots,a_n|0,b_2,\ldots, b_n)$ is orthogonal to all
$z_i$, $1\leq i\leq s$ and hence it is in $D^\sdual$. It cannot be in $C$ as that would imply
that $u$ is in $\rho(C)$. But $\swt(D^\sdual \setminus C)\geq d$. Therefore $\swt(u)\geq d$.
and $\rho(C)$ defines an $[[n-1,k,r,\geq d]]_q$ code.
By choosing $C'= \langle z_2',\ldots, z_s',z_{s+1}',z_{s+2}',x_{s+2}',\ldots, z_{s+r}',x_{s+r}'\rangle$
we can conclude that there exists an $[[n-,k,r-1,d]]_q$ code. Alternatively, apply 
Theorem~16 in 
\cite{aly06}.
\item[b)] If $\swt(D) >1$, then we can assume that $\swt(z_{s+1})=1$ and form the code 
$C'= \langle z_1,\ldots, z_s,z_{s+1},z_{s+2},x_{s+2},\ldots, z_{s+r},x_{s+r}\rangle$. It is clear that 
$C'$ defines an $[[n,k,r-1,d]]_q$ code that is pure to $1$ with $\swt(C'\cap C'^\sdual)=1$. But this is just
the previous case, from which we can conclude that there exists an $[[n-1,k,r-1,\geq d]]_q$ code.
\end{compactenum}
\end{proof}
Lemma~\ref{lm:punc1lin} allows us to establish a bound for distance 2 codes which can then be
used to prove the bound for arbitrary distances. 
\begin{lemma}\label{lm:boundlind2}
An impure $\F_{q}$-linear $[[n,k,r,d=2]]_q$ Clifford subsystem code satisfies
$$ 
k+r\leq n-2d+2.
$$
\end{lemma}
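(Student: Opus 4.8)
The plan is to reduce everything to two facts we may take for granted: the puncturing operation of Lemma~\ref{lm:punc1lin}, and the classical Singleton bound applied through the weight-preserving isometry $\phi$ of Lemma~\ref{th:isometry}. I keep the notation fixed just before the lemma, so that the code $Q$ corresponds to $\F_q$-linear codes $C,D\subseteq\F_q^{2n}$ with $D=C\cap C^\sdual$, $|C|=q^{n-k+r}$, $|D|=q^{n-k-r}$, and $\swt(D^\sdual\setminus C)=d=2$. Since $D\subseteq C\subseteq D^\sdual$ and the symplectic form is nondegenerate, one has $|D^\sdual|=q^{n+k+r}$; hence the desired inequality $k+r\le n-2=n-2d+2$ is exactly the statement that $\dim_{\F_q}D^\sdual\le 2n-2$.

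The first observation is a clean pure-type bound: whenever $\swt(C)\ge 2$, the code $D^\sdual$ has minimum symplectic weight $\swt(D^\sdual)=\min\{\swt(C),\swt(D^\sdual\setminus C)\}\ge 2$, because the nonzero vectors of $D^\sdual$ are the disjoint union of $C\setminus\{0\}$ and $D^\sdual\setminus C$, both of minimum weight at least $2$. Mapping $D^\sdual$ into $\F_{q^2}^{n}$ by the isometry $\phi$ converts symplectic weight into Hamming weight, so $\phi(D^\sdual)$ is a code over $\F_{q^2}$ of length $n$, size $q^{n+k+r}$, and minimum distance at least $2$. The classical Singleton bound then gives $q^{n+k+r}\le (q^2)^{n-1}=q^{2n-2}$, i.e.\ $k+r\le n-2$. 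This settles every case in which the gauge code $C$ has no vector of symplectic weight $1$.

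It remains to treat the case $\swt(C)=1$, which for $d=2$ is precisely the hypothesis that $Q$ is impure, i.e.\ exactly pure to $1$ in the sense of Definition~\ref{df:impureCodes}. Here I would induct on $r$. If $r=0$ then $Q$ is a stabilizer code; for $k\ge 1$, Corollary~\ref{th:singleton} (whose proof via the linear-programming bound does not assume purity) gives $k\le n-2d+2=n-2$, while a $[[n,0,\ge 2]]_q$ code is pure by convention and is therefore already covered by the pure-type bound of the previous paragraph. If $r>0$, then since $Q$ is $\F_q$-linear, exactly pure to $1$, with $r>0$ and $d\ge 2$, Lemma~\ref{lm:punc1lin} produces an $\F_q$-linear $[[n-1,k,r-1,\ge 2]]_q$ code $Q'$. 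Length and gauge dimension both drop by one while $k$ is unchanged, so the target inequality is invariant under this step. Applying the inductive claim to $Q'$—whether $Q'$ now falls into the branch $\swt(C')\ge 2$ treated above, or again into the impure branch with strictly smaller $r$—yields $k+(r-1)\le (n-1)-2$, and therefore $k+r\le n-2$, as claimed.

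The step I expect to require the most care is the bookkeeping across the recursion: one must verify that the two possible outcomes of a single puncture (the restricted code becoming pure-type with $\swt(C')\ge 2$, versus remaining exactly pure to $1$) are both captured, and that the parameter drop $(n,r)\mapsto(n-1,r-1)$ preserves the target inequality so that the bound transfers back to $Q$. The genuinely delicate combinatorics—checking that the punctured restriction still defines a self-orthogonal $\F_q$-linear code of the correct size and minimum distance—has already been isolated in Lemma~\ref{lm:punc1lin}; granting that, the remainder is just the Singleton estimate through $\phi$ together with the quantum Singleton bound for the stabilizer base case.
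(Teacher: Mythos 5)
Your proof is correct, and its engine is the same as the paper's: repeatedly apply Lemma~\ref{lm:punc1lin} to strip off gauge qudits, terminating either at a code whose gauge code has no weight-one vector or at a stabilizer code, and finish with a Singleton-type bound. The paper phrases this as a contradiction argument and, at each step, splits into ``pure'' versus ``impure and pure to $1$,'' citing the pure subsystem Singleton bound of \cite{aly06} for the first branch and feeding the second branch back into Lemma~\ref{lm:punc1lin}. Your version differs in two useful ways. First, your dichotomy is $\swt(C)\ge 2$ versus $\swt(C)=1$ rather than pure versus impure; since a punctured code of distance $d'>2$ can be impure yet exactly pure to some $d''$ with $2\le d''<d'$ (so that neither the pure bound nor Lemma~\ref{lm:punc1lin} applies verbatim), your split is the one that makes the case analysis airtight. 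Second, for the $\swt(C)\ge 2$ branch you do not invoke the pure subsystem bound at all: you observe that the nonzero vectors of $D^\sdual$ lie in $(C\setminus\{0\})\cup(D^\sdual\setminus C)$, both of minimum symplectic weight $\ge 2$, push $D^\sdual$ through the isometry $\phi$, and apply the classical Singleton bound to an additive $(n,q^{n+k+r},\ge 2)_{q^2}$ code. This only yields the weaker conclusion $k+r\le n-2$ rather than $k+r\le n-2d'+2$, but that is exactly what the recursion needs, and it requires only purity to $2$ rather than purity to the full distance. The net effect is an argument that is more self-contained (no external citation for the pure case) and slightly more watertight than the paper's, at the modest cost of re-deriving a special case of the pure bound.
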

\begin{proof}
Suppose that there exists an $\F_{q}$-linear $[[n,k,r,d=2]]_q$ impure
subsystem code such that $k+r> n-2d+2$; in particular, this code must
be pure to~$1$. By Lemma~\ref{lm:punc1lin} it can be punctured to give
an $[[n-1,k,r-1,\geq d]]_2$ subsystem code. If this code is pure, then
$k+r-1 \leq n-1-2d+2$ holds, contradicting our assumption $k+r>
n-2d+2$; hence, the resulting code is once again impure and pure to 1.

Now we repeatedly apply Lemma~\ref{lm:punc1lin} to puncture the
shortened codes until we get an $[[n-r,k,0,\geq d]]_q$ subsystem
code. But this is a stabilizer code which must obey the Singleton bound 
$k\leq n-r-2d+2$, contradicting our initial assumption 
$k+r>n-2d+2$. Therefore, we can conclude that $k+r\leq n-2d+2$.
\end{proof}

If the codes are of distance greater than 2, then we puncture the code
until it either has distance 2 or it is a pure code. The following
result tells us how the parameters of the subsystem codes vary on
puncturing.
\begin{lemma}\label{lm:punc1linp2}
An impure $\F_{q}$-linear $[[n,k,r,d\geq 3]]_q$  Clifford subsystem code exactly
pure to $d'\geq 2$ implies the existence of an $\F_{q}$-linear
$[[n-1,k,r+1,\geq d-1]]_q$ subsystem code.
\end{lemma}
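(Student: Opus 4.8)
The plan is to mirror the proof of Lemma~\ref{lm:punc1lin}, but to exploit the \emph{absence} of weight-one codewords: this is exactly what turns the decrease of $r$ in that lemma into an increase here. First I would translate the hypothesis via Theorem~\ref{th:four} into classical codes $C,D=C\cap C^\sdual\subseteq\F_q^{2n}$ with $|C|=q^{n-k+r}$, $|D|=q^{n-k-r}$, $\swt(D^\sdual\setminus C)=d\ge 3$, and $\swt(C)=d'\ge 2$ (exact purity to $d'$, with $d'<d$ by impurity). Because $\swt(C)\ge 2$, no nonzero codeword of $C$ is supported on a single quantum coordinate, so for every coordinate the puncturing map $\rho$ (deleting coordinates $1$ and $n+1$) is injective on $C$; hence $C':=\rho(C)$ again has size $q^{n-k+r}$ but length $n-1$. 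This already forces the relation $n'-k'+r'=n-k+r$, so the whole content of the lemma is to show that the radical $D'=C'\cap C'^\sdual$ shrinks by a factor $q^2$, which then yields $k'=k$ and $r'=r+1$.

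The central computation splits off the deleted coordinate: the symplectic form decomposes as an orthogonal sum $\langle\cdot|\cdot\rangle_s=\sigma\oplus\langle\cdot|\cdot\rangle'_s$, where $\sigma$ is the local symplectic form on the deleted $(X,Z)$-pair. Thus for $u,v\in C$ one has $\langle\rho(u)|\rho(v)\rangle'_s=\langle u|v\rangle_s-\sigma(u,v)$, where $\sigma(u,v)$ depends only on the components at the deleted coordinate. The effect of removing $\sigma$ is to convert two radical directions of $D$ into a hyperbolic pair, \emph{provided} the components of $D$ at that coordinate span all of $\F_q^2$. Concretely, I would choose a coordinate $i$ at which the projection of $D$ onto the $i$-th $(X,Z)$-pair is surjective onto $\F_q^2$; a short argument then identifies the radical of the punctured form on $C'\cong C$ with $D$ intersected with the subcode vanishing at coordinate $i$, which has dimension $(n-k-r)-2$. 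This gives rank $2r+2$, hence $r'=r+1$ and $k'=k$.

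For the distance I would use that $D'^\sdual=C'+C'^\sdual$ together with the fact that $C'^\sdual$ is the shortening of $C^\sdual$ at the deleted coordinate. Then any $w\in D'^\sdual\setminus C'$ can be written $w=\rho(c)+s'$ with $c\in C$ and $s'\in C'^\sdual$, and restoring the deleted coordinate produces $\tilde w=c+(0,s')\in C+C^\sdual=D^\sdual$ with $\rho(\tilde w)=w$ and $\swt(\tilde w)\le\swt(w)+1$. If $w\notin C'$ then $\tilde w\notin C$ (else $w=\rho(\tilde w)\in\rho(C)=C'$), so $\tilde w\in D^\sdual\setminus C$ has weight $\ge d$, giving $\swt(w)\ge d-1$. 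Hence $\swt(D'^\sdual\setminus C')\ge d-1$ and the punctured code is an $[[n-1,k,r+1,\ge d-1]]_q$ subsystem code.

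The main obstacle is the existence of a good coordinate $i$: it could happen that $D$ projects onto a line at every coordinate, and in that degenerate situation the construction provably fails to raise $r$ (the radical cannot drop by two). The key observation that rescues the argument is that this degeneracy is incompatible with $d\ge 3$. Applying local Clifford ($\mathrm{SL}_2(\F_q)$) transformations coordinatewise---which preserve all parameters and all weights---one can rotate each of these lines onto the $X$-axis, making $D$ a purely $X$-type code. But then every single-qubit $X$-error commutes with all of $D$, hence lies in $D^\sdual$; being of weight $1<d$ it must lie in $C$, forcing $C\supseteq\F_q^n\times\{0\}$ and therefore $C^\sdual\subseteq C$, i.e.\ $D^\sdual=C$. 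This leaves $D^\sdual\setminus C$ empty, contradicting $d\ge 3$. Consequently a coordinate at which $D$ projects onto all of $\F_q^2$ always exists. I expect that establishing this dichotomy cleanly, and verifying that the coordinatewise local Clifford reduction genuinely preserves $n,k,r,d$, will be the part that requires the most care.
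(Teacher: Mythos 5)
Your proposal is correct and follows essentially the same route as the paper: puncture a coordinate at which $D=C\cap C^\sdual$ projects surjectively onto $\F_q^2$, so that two radical directions become a hyperbolic pair (raising $r$ by one while $|C|$ is preserved because $\swt(C)\ge 2$), and then bound the distance by lifting elements of $D_p^\sdual\setminus C'$ back to $D^\sdual\setminus C$. The only real difference is cosmetic: where you establish the existence of a good coordinate by a local-Clifford reduction and a contradiction with $d\ge 3$, the paper argues directly that a degenerate coordinate would place a weight-one vector in $D^\sdual$, which is impossible since $\swt(D^\sdual)=\min\{\swt(C),\swt(D^\sdual\setminus C)\}\ge\min\{d',d\}\ge 2$ — a shortcut also available to you, since a weight-one vector of $D^\sdual$ must land in $C$ (as $d\ge 3$) and immediately violates exact purity to $d'\ge 2$.
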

\begin{proof}
Recall that the existence of an $[[n,k,r,d\ge 3]]_q$ subsystem code
implies the existence of $\F_q$-linear codes $C$ and $D$ such that 
$$C=\langle z_1,\ldots, z_s,z_{s+1},x_{s+1},\ldots,
z_{s+r},x_{s+r}\rangle,$$ with $s=n-k-r$, and $D=C\cap C^\sdual$, see above. 

The stabilizer code defined by $D$ satisfies $k+r=n-s\leq n-2d'+2$, or
equivalently $s\geq 2d-2$; it follows that $s\ge 2$, since $d'\ge
2$. Without loss of generality, we can take $z_1$ to be of the form
$(1,a_2,\ldots,a_n|b_1,b_2\ldots,b_n)$ for if no such codeword exists
in $D$, then $(0,0,\ldots,0|1,0,\ldots,0)$ is contained in $D^\sdual$,
contradicting the fact that $\swt(D^\sdual)\geq 2$.  Consequently, we
can choose $z_2$ in $D$ to be of the form
$(0,c_2,\ldots,c_n|1,d_2,\ldots,d_n)$, and we may further assume that
$b_1=0$ in $z_1$. The form of $z_1$ and $z_2$ allows us to assume
that any remaining generator of $C$ is of the form 
$(0,u_2,\ldots,u_n|0,v_2,\ldots,v_n)$.  

Let $\rho$ be the map defined by puncturing the first and $(n+1)^{th}$
coordinate of a vector in $C$. Define for all $i$ the punctured
vectors $x_i'=\rho(x_i)$ and $z_i'=\rho(z_i)$. Then one easily checks
that $\scal{\rho(x_i)}{\rho(x_j)}=0=\scal{\rho(z_i)}{\rho(z_j)}$ for
all indices $i$ and $j$, and
$\scal{\rho(x_i)}{\rho(z_j)}=\delta_{i,j}$ if $i\ge s+1$ or $j\ge 3$,
and that $\scal{\rho(z_1)}{\rho(z_2)}=-1$.

Let us look at the punctured code $\rho(C)$,
$$\rho(C)=\langle z_3',\ldots,z_s',z_{s+1}',x_{s+1}',\ldots,
z_{s+r}',x_{s+r}',z_{1}',z_{2}'\rangle.$$ Since
$\scal{\rho(z_1)}{\rho(z_2)}=-1$ we have $D_p=\rho(C)\cap
\rho(C)^\sdual =\langle z_3',\ldots,z_s'\rangle $, whence
$|D_p|=|D|/q^2$.  As $\swt(C)\geq 2$, it follows that $|\rho(C)|=|C|$.
Thus $\rho(C)$ defines an $[[n-1,k,r+1,\swt(D_p^\sdual\setminus
\rho(C))]]_q$ subsystem code.  

Recall that the code $D$ is generated by $s\ge 2$ vectors; we will
show next that our assumptions actually force $s\ge 3$.  Indeed, if
$s=2$, then $|D|=q^2$ and $|D^\sdual|=q^{2n-2}$. Under the assumption
$\swt(D^\sdual)\geq 2$, it follows that $|\rho(D^\sdual)|
=|D^\sdual|=q^{2n-2}$.  But as $\rho(D^\sdual) \subseteq \F_q^{2n-2}$
this implies that $\rho(D^\sdual)=\F_q^{2n-2}$.  Since $\F_{q}^{2n-2}$
has $2n-2$ independent codewords of symplectic weight one, $D^\sdual$
must have $2n-2$ independent codewords of symplectic weight
two. However, this contradicts our assumptions on the minimum distance
of the subsystem code: 
\begin{compactenum}[(a)]
\item If $C$ is a proper subspace of $D^\sdual$, then the minimum
distance $d$ is given by $d=\swt(D^\sdual\setminus C)\ge 3$; thus, the
weight 2 vectors must all be contained in $C$, which shows that
$|C|=q^{2n-2}=|D|$, contradicting $|C|<|D^\sdual|$.
\item If $C=D^\sdual$, then the minimum distance is given by
$d=\swt(D^\sdual)=2$, contradicting our assumption that $d\ge 3$. 
\end{compactenum}
Thus, from now on, we can assume that $s\geq 3$.

Before bounding the minimum distance of the punctured subsystem code,
we are going to show that $D_p^\sdual=\rho(D^\sdual)$.  Let
$w=(u_1,u_2,\ldots,u_n|v_1,v_2,\ldots,v_n)$ be a vector in
$D^\sdual$. For $3\leq i\leq s$, the vectors $z_i$ are of the
form $(0,a_2,\ldots,a_n|0,b_2,\ldots,b_n)$;
thus, it follows from $\langle w|z_i\rangle_s=0$ that $\langle
\rho(w)|z_i'\rangle_s=0$.  Hence $\rho(w)$ is in $D_p^\sdual$, which
implies $\rho(D^\sdual) \subseteq D_p^\sdual$. We have $|D_p^\sdual|
=q^{2n-2}/|D_p|=q^{2n}/|D|=|D^\sdual|$, and we note that
$|D^\sdual|=|\rho(D^\sdual)|$, because $\swt(D^\sdual)\geq 2$; hence,
$D_p^\sdual=\rho(D^\sdual)$.

Let $w'=(u_2,\ldots,u_n|v_2,\ldots,v_n)$ be an arbitrary vector in
$\rho(D^\sdual) \setminus \rho(C)$.  It follows that there exist some
$\alpha, \beta$ in $\F_q$ such that
$w=(\alpha,u_2,\ldots,u_n|\beta,v_2,\ldots,v_n)$ is in $D^\sdual;$ it is
clear that $w$ cannot be in $C$, since then $\rho(w)=w'$ would be in
$\rho(C)$; hence, $\swt(w)\geq d$. It immediately follows that 
$\swt(D_p^\sdual\setminus \rho(C))\geq d-1$.  Hence $\rho(C)$ defines
an $[[n-1,k,r+1,\geq d-1]]_q$ subsystem code.
\end{proof}

Now we are ready the prove the upper bound for an arbitrary subsystem code.
Essentially we reduce it to a pure code or distance two code by repeated puncturing
and bound the parameters by carefully tracing the changes.
\begin{theorem}\label{th:boundlin}
An $\F_{q}$-linear $[[n,k,r,d\geq 2]]_q$ Clifford subsystem code satisfies 
\begin{eqnarray}k+r\leq n-2d+2.\label{eq:singBound}\end{eqnarray}
\end{theorem}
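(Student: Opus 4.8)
The plan is to argue by strong induction on the length $n$, using the three puncturing lemmas proved just above to reduce an arbitrary impure code to a base case whose bound is already available. First I would fix the associated classical $\F_q$-linear codes $C$ and $D=C\cap C^\sdual$ in $\F_q^{2n}$ furnished by Theorem~\ref{th:four}, so that $|C|=q^{n-k+r}$, $|D|=q^{n-k-r}$, and $d=\swt(D^\sdual\setminus C)$. The whole argument then amounts to assembling the lemmas into a terminating recursion and checking that the bound $k+r\le n-2d+2$ is preserved at each reduction step.

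There are two base cases, both handled by the quantum Singleton bound for stabilizer codes (Corollary~\ref{th:singleton}). If $r=0$ the code is an ordinary $\F_q$-linear stabilizer code and Corollary~\ref{th:singleton} gives $k\le n-2d+2$ directly. If $r>0$ but the code is pure, I would note that $D\le D^\sdual$ is self-orthogonal and defines a stabilizer code of dimension $q^{k+r}$ with distance $\swt(D^\sdual\setminus D)$. Since $D^\sdual\setminus D$ is the disjoint union of $D^\sdual\setminus C$ and $C\setminus D$, and purity forces $\swt(C)\ge d$, we obtain $\swt(D^\sdual\setminus D)=\min\{d,\swt(C\setminus D)\}=d$; thus this stabilizer code has parameters $[[n,k+r,d]]_q$, and Corollary~\ref{th:singleton} again yields $k+r\le n-2d+2$.

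For the inductive step I would assume the code is impure with $r>0$, say exactly pure to $d'$ with $1\le d'<d$, and split on the distance. When $d=2$ (so $d'=1$) the claim is precisely Lemma~\ref{lm:boundlind2}. When $d\ge 3$ and $d'=1$, Lemma~\ref{lm:punc1lin} produces an $[[n-1,k,r-1,\ge d]]_q$ code; applying the induction hypothesis to its actual distance $\tilde d\ge d$ gives $k+(r-1)\le (n-1)-2\tilde d+2\le (n-1)-2d+2$, i.e.\ $k+r\le n-2d+2$. When $d\ge 3$ and $d'\ge 2$, Lemma~\ref{lm:punc1linp2} produces an $[[n-1,k,r+1,\ge d-1]]_q$ code, and the induction hypothesis gives $k+(r+1)\le (n-1)-2(d-1)+2=n-2d+3$, hence again $k+r\le n-2d+2$. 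In every recursive step $n$ strictly decreases while the punctured code retains distance $\ge 2$, so the induction is well-founded and eventually lands in one of the base cases.

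The genuinely substantive content has already been isolated in Lemmas~\ref{lm:punc1lin}, \ref{lm:boundlind2}, and \ref{lm:punc1linp2}, so the main obstacle here is not a hard estimate but careful bookkeeping: I must verify in each branch that the hypotheses of the lemma being invoked ($r>0$, the exact purity level $d'$, and the distance constraint $d\ge 2$ or $d\ge 3$) are actually satisfied, and that the lower-bound ``$\ge$'' on the distance of each punctured code propagates correctly. The key observation making the propagation work is that the target bound $n-2d+2$ is monotonically decreasing in $d$, so applying the induction hypothesis to the true (possibly larger) distance only strengthens the resulting inequality. The mild degenerate situation $k+r=0$ is absorbed by the $r=0$ case, falling back on the classical Singleton bound for $D^\sdual$ over $\F_{q^2}$ when $K=1$.
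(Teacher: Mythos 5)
Your proposal is correct and follows essentially the same route as the paper: reduce an impure code via Lemma~\ref{lm:punc1lin} or Lemma~\ref{lm:punc1linp2} according to its exact purity level until a pure or distance-two code is reached, then invoke Lemma~\ref{lm:boundlind2} and the Singleton bound for pure subsystem codes. The only differences are presentational --- you package the iterated puncturing as a strong induction on $n$ (where the paper tracks the puncture counts $i$ and $j$ explicitly), and you re-derive the pure base case from the $[[n,k+r,d]]_q$ stabilizer code defined by $D$ rather than citing \cite{aly06}; both of these are sound.
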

\begin{proof}
The bound holds for all pure codes, see  
\cite{aly06}. So
assume that the code is impure.  If $d=2$, then the relation holds by
Lemma~\ref{lm:boundlind2}; so let $d\geq 3$. If the code is exactly pure to $1$,
then it can be punctured using Lemma~\ref{lm:punc1lin} to give an
$[[n-1,k,r-1,d'=d]]_q$ code, otherwise it can be punctured using
Lemma~\ref{lm:punc1linp2} to obtain an $[[n-1,k,r+1,d'\geq d-1]]_q$
code. If the punctured code is pure, then it follows that either 
$k+r-1\leq n-1-2d+2$ or $k+r+1\leq n-1-2d'+2\leq n-1-2(d-1)+2$ holds; 
in both cases, these inequalities imply that $k+r\leq n-2d+2$.

If the resulting code is impure, then if it is exactly pure 
to $1$ we puncture the code again using Lemma~\ref{lm:punc1lin}, 
if not we puncture using Lemma~\ref{lm:punc1linp2}, until we get a
pure code or a code with distance two. Assume that we punctured $i$ times
using Lemma~\ref{lm:punc1lin} and $j$ times using
Lemma~\ref{lm:punc1linp2}, then the resulting code is an
$[[n-i-j,k,r+j-i,d'\geq d-j]]_q$ subsystem code. Since pure subsystem
codes and distance 2 subsystem codes satisfy $$k+r+j-i\leq
n-i-j-2d'+2\leq n-i-j-2(d-j)+2,$$ it follows that $k+r\leq n-2d+2$ holds. 
\end{proof}

\medskip
When the subsystem codes are over a prime alphabet, this bound holds
for all codes over that alphabet.  In the more general case where the
code is not linear, numerical evidence indicates that it is unlikely
that the additive subsystem codes have a different bound. We have
shown that a large class of impure codes already satisfy this bound.
This prompts the following conjecture.

\begin{conjecture}
Any $[[n,k,r,d]]_q$ Clifford subsystem code satisfies $k+r\leq n-2d+2$. 
\end{conjecture}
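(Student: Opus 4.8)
The plan is to upgrade the puncturing machinery of Lemmas~\ref{lm:punc1lin} and~\ref{lm:punc1linp2} from the $\F_q$-linear setting to arbitrary additive codes, since Theorem~\ref{th:boundlin} already reduces everything to two operations: trading one gauge qudit for a shorter length while fixing the distance, and trading one unit of distance for one gauge qudit. First I would replay the proof of Theorem~\ref{th:boundlin} verbatim, checking at each step which facts genuinely used $\F_q$-linearity. The only essential ingredient that fails is the hyperbolic basis of Lemma~\ref{lm:xzBasis}: for an additive code $C\subseteq\F_q^{2n}$ the trace-symplectic form is merely $\F_p$-bilinear, so $C$ is an $\F_p$-symplectic space of dimension $2nm$ (with $q=p^m$) rather than an $\F_q$-symplectic space of dimension $2n$.

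Accordingly, the first real step is to establish an additive analogue of Lemma~\ref{lm:xzBasis}: over $\F_p$ one can always split $C$ into its radical $D=C\cap C^\sdual$ and a nondegenerate part carrying an $\F_p$-hyperbolic basis $\{x_i,z_i\}$ with $\langle x_i|z_j\rangle_s=\delta_{ij}$. The difficulty, and the step I expect to be the main obstacle, is reconciling this $\F_p$-level basis with the $\F_q$-level notion of symplectic weight used to define the distance: deleting one $q$-ary coordinate (the map $\rho$) removes a block of $2m$ underlying $\F_p$-coordinates at once, so a single puncturing can disturb several $\F_p$-hyperbolic pairs simultaneously. I would therefore need to show that an additive code exactly pure to~$1$ still contains a weight-one vector whose $q$-ary support can be deleted so that the radical drops by exactly one $q$-ary unit ($|D|\mapsto|D|/q$ in the first lemma, $|D|\mapsto|D|/q^2$ in the second) while the relations among the surviving generators, and hence the distance, are preserved. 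Proving that this deletion can be arranged \emph{within a single $q$-ary coordinate}, rather than spread across the $2m$ underlying $\F_p$-coordinates, is where the additive structure must be exploited carefully.

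As an independent fallback I would pursue a weight-enumerator/linear-programming route, attractive precisely because it is insensitive to linearity: the Delsarte-style argument that yielded the stabilizer Singleton bound (Theorem~\ref{th:lp2} and Corollary~\ref{th:singleton}) applies to additive stabilizer codes without change. Here one works with the three symplectic weight enumerators of $D\subseteq C\subseteq D^\sdual$ tied together by the MacWilliams identity of Corollary~\ref{th:krawtchouk}, together with the defining constraints that every vector of $D^\sdual$ of weight below $d$ lies in $C$ (so the enumerators of $D^\sdual$ and $C$ agree below~$d$) and that $|D^\sdual|/|C|=q^{2k}$ while $|C|/|D|=q^{2r}$. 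The hard part of this route is choosing the test polynomial $f$ so that the resulting inequality collapses to exactly $k+r\le n-2d+2$ rather than the weaker $k+r\le n-2d_D+2$ obtained by applying Singleton to the stabilizer code of $D$ alone; the impure case $d>d_D$, where the subsystem distance strictly exceeds that of its stabilizer subcode, is precisely where a naive enumerator inequality loses the factor one needs.

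Finally, I would record why the tempting shortcut of expanding the additive $\F_q$-code to an $\F_p$-linear code and invoking the prime-field case of Theorem~\ref{th:boundlin} does not work: under the field expansion of Lemma~\ref{th:codeexpansion} the length and the quantities $k,r$ all scale by $m$ while the additive constant $2$ in the bound does not, so one only recovers $k+r\le n-(2d-2)/m$, which is strictly weaker for $m>1$. This degradation is the structural reason the conjecture resists a reduction to the already-proven cases and must instead be attacked intrinsically over $\F_q$, making the additive refinement of the hyperbolic basis the pivotal lemma to secure.
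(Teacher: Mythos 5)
The statement you are addressing is stated in the paper as an open \emph{conjecture}: the text proves $k+r\le n-2d+2$ only for pure subsystem codes and for $\F_q$-linear ones (Theorem~\ref{th:boundlin}), observes that this settles all codes over prime fields, and explicitly leaves the general additive case unproven. So there is no proof in the paper to compare against, and your proposal does not supply one either: it is a research plan in which both proposed routes stop exactly at the hard step. On the puncturing route, the pivotal claim --- that an additive code exactly pure to $1$ admits a weight-one vector whose deletion removes a single $q$-ary coordinate while dropping the radical by precisely $q$ (respectively $q^2$) and preserving the hyperbolic relations among the surviving generators --- is flagged as ``the main obstacle'' but never established; this is precisely the point where the $\F_q$-hyperbolic basis of Lemma~\ref{lm:xzBasis} is used irreplaceably in the paper's Lemmas~\ref{lm:punc1lin} and~\ref{lm:punc1linp2} (e.g.\ the ability to rescale $x_{k+1}$ by an element of $\F_q^\times$ so that $\langle z_{k+1}|x_{k+1}\rangle_s=1$, and the coordinate normalizations $z_1=(1,a_2,\dots|0,b_2,\dots)$, both of which exploit $\F_q$-linearity). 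On the linear-programming route you correctly note that a naive application of the Delsarte argument to the stabilizer code defined by $D=C\cap C^\sdual$ only yields $k+r\le n-2d_D+2$ with $d_D\le d$, and the impure case $d>d_D$ is exactly the case of interest; no candidate test polynomial exploiting the full chain $D\subseteq C\subseteq D^\sdual$ is exhibited.

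That said, your diagnostic content is accurate and matches the reasons the bound is genuinely open in the additive case: the failure of the field-expansion shortcut is computed correctly (Lemma~\ref{th:codeexpansion} scales $n$, $k$, $r$ by $m$ but not the additive constant $2$, giving only $k+r\le n-(2d-2)/m$), and the identification of the hyperbolic-basis lemma as the single point of failure is the right structural observation. To turn this into a proof you would need to actually prove the additive puncturing lemmas --- in particular, to control how deleting one $q$-ary coordinate (a block of $2m$ coordinates of the underlying $\F_p$-symplectic space) interacts with the radical $D$ and with the weight function $\swt$, which counts $q$-ary blocks rather than $\F_p$-coordinates. As written, the proposal should be read as a correct assessment of why the conjecture is hard, not as a proof of it.
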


\section{Comparing Subsystem Codes with Stabilizer Codes}\label{sec:mds}

In this section, we compare stabilizer codes with subsystem codes.
We first need to establish the criteria for the comparison, since
subsystem codes cannot be universally better than stabilizer codes.
For example, it is known that a subsystem code can be
converted to a stabilizer code \cite{kribs05b,poulin05}. See also 
Lemma~10 in \cite{aly06} for a simple
proof to convert an $[[n,k,r,d]]_q$ code to an $[[n,k,d]]_q$ code. This implies that no
$[[n,k,r,d]]_q$ subsystem code can beat an optimal $[[n,k,d']]_q$
stabilizer code in terms of minimum distance, as $d'\geq d$.  One of
the attractive features of subsystem codes is a potential reduction of
the number of syndrome measurements, and we use this criterion as the
basis for our comparison.

First, we must highlight a subtle point on the required number of
syndrome bits for an $\F_q$-linear $[n,k,d]_q$ code. A complete decoder,
will require $n-k$ syndrome bits. Complete decoders are also optimal
decoders. A bounded
distance decoder on the other hand can potentially decode with 
fewer syndrome bits. Bounded distance decoders typically 
decode  up to $\floor{(d-1)/2}$.
However, to the best of our knowledge, except for
the lookup table decoding method, all bounded distance decoders
also require $n-k$ syndrome bits.
As the complexity of decoding using a  lookup table increases exponentially 
in $n-k$ it is  highly impractical for long lengths. We therefore assume
that for practical purposes, that we need  $n-k$ syndrome bits.

Similarly, for an $\F_q$-linear $[[n,k,r,d]]_q$ subsystem code, a complete 
decoder will require $n-k-r$ syndrome measurements, as is shown 
in~\ref{sec:appendix}. We are not aware of any quantum code,
stabilizer or subsystem, for which there exists a bounded distance
decoder that uses less than $n-k-r$ syndrome measurements to
perform bounded distance decoding.
The work by Poulin \cite{poulin05} prompts the following question: Given an
optimal $[[k+2d-2,k,d]]_q$ MDS stabilizer code, is it possible to find
an $[[n,k,r,d]]_q$ subsystem code that uses fewer syndrome
measurements?

There exist numerous known examples of subsystem codes that improve
upon nonoptimal stabilizer codes. The fact that the stabilizer code is
assumed to be optimal makes this question interesting.  The Singleton
bound $k+r\le n-2d+2$ of an $\F_q$-linear $[[n,k,r,d]]_q$ subsystem
code implies that the number $n-k-r$ of syndrome measurements is
bounded by $n-k-r\geq 2d-2$; thus, for fixed minimum distance $d$,
there exists a trade off between the dimension $k$ and the difference
$n-r$ between length and number of gauge qudits.

\begin{corollary}\label{th:betterQMDS} Under complete decoding
an $\F_q$-linear $[[n,k,r,d\geq 2]]_q$ Clifford subsystem code cannot use
fewer syndrome measurements than an $\F_q$-linear $[[k+2d-2,k,d]]_q$
stabilizer code.
\end{corollary}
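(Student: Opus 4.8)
The plan is to reduce the statement directly to the quantum Singleton bound for subsystem codes established in Theorem~\ref{th:boundlin}. First I would pin down the two syndrome counts being compared. For an $\F_q$-linear $[[k+2d-2,k,d]]_q$ stabilizer code, complete decoding requires $(k+2d-2)-k=2d-2$ syndrome measurements, since a stabilizer code of length $N$ and dimension $k$ uses $N-k$ syndrome bits under complete decoding. For an $\F_q$-linear $[[n,k,r,d]]_q$ subsystem code, complete decoding requires $n-k-r$ syndrome measurements, as shown in Appendix~\ref{sec:appendix}.

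With these two counts in hand, the assertion that the subsystem code cannot use fewer syndrome measurements becomes precisely the inequality $n-k-r\ge 2d-2$. But Theorem~\ref{th:boundlin} gives exactly $k+r\le n-2d+2$ for any $\F_q$-linear $[[n,k,r,d\ge 2]]_q$ Clifford subsystem code, and this rearranges to $n-k-r\ge 2d-2$. Comparing the subsystem count $n-k-r$ with the stabilizer count $2d-2$ then finishes the argument.

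The only real content is the subsystem Singleton bound itself, which has already been proved by the puncturing arguments of Lemmas~\ref{lm:punc1lin}--\ref{lm:punc1linp2}; there is no further obstacle here, so this is genuinely a corollary rather than a new result. The two points that deserve a word of care are, first, matching the hypothesis $d\ge 2$ so that Theorem~\ref{th:boundlin} applies, and second, observing that the comparison is against the \emph{optimal} (MDS) stabilizer code $[[k+2d-2,k,d]]_q$: its length $k+2d-2$ is exactly the smallest permitted by the stabilizer quantum Singleton bound for dimension $k$ and distance $d$, so $2d-2$ is the least syndrome count achievable by any $\F_q$-linear stabilizer code encoding $k$ qudits with distance $d$. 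Hence the subsystem code, which requires at least this many measurements, can never strictly undercut it.
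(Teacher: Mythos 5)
Your proposal is correct and follows essentially the same route as the paper: both reduce the claim to the inequality $n-k-r\ge 2d-2$, which is just a rearrangement of the subsystem Singleton bound $k+r\le n-2d+2$ of Theorem~\ref{th:boundlin} (the paper merely phrases it as a proof by contradiction). Your added remarks on the $d\ge 2$ hypothesis and on $2d-2$ being the minimal stabilizer syndrome count are accurate but not essential to the argument.
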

\begin{proof}
Seeking a contradiction, we assume that there exists an
$[[n,k,r,d]]_q$ subsystem code that requires fewer syndrome
measurements that the optimal $[[k+2d-2,k,d]]_q$ MDS stabilizer code.
In other words, the number of syndrome measurement yield the
inequality $k+2d-2-k> n-k-r$, which is equivalent to $k+r > n-2d+2$,
but this contradicts the Singleton bound.
\end{proof}

Poulin \cite{poulin05} showed by exhaustive computer search that there does
not exist an $[[5,1,r>0,3]]_2$ subsystem code. The above result
confirms his computer search and shows further that not even allowing
longer lengths and more gauge qudits can help in reducing the number
of syndrome measurements. In fact,
we conjecture that corollary~\ref{th:betterQMDS} holds for bounded distance
decoders also.

We wish to caution the reader that gains in error recovery cannot be
quantified purely by the number of syndrome measurements.  In practice,
more complex measures such as the simplicity of the decoding algorithm
or the resulting threshold in fault-tolerant quantum computing are
more relevant. The drawback is that the comparison of large classes of
codes becomes unwieldy when such complex criteria are used. 

\section{Subsystem Codes on a Lattice}\label{sec:latticeCodes}
Bacon gave the first family of subsystem codes generalizing the ideas
of Shor's $[[9,1,3]]_2$ code \cite{bacon06a}.  Recently, he and
Casaccino gave another construction which generalizes this further by
considering a pair of classical codes \cite{bacon06b}.  We show that
this method is a special case of Theorem~\ref{th:cssoqec}. Since this
construction is not limited to binary codes and our proofs remain
essentially the same, we will immediately discuss a generalization to
nonbinary alphabets.

\begin{theorem}\label{th:latticeCodes}
For $i\in \{1,2\}$, let $C_i \subseteq \F_q^{n_i}$ be $\F_q$-linear
codes with the parameters $[n_i,k_i,d_i]_q$. Then there exists a
Clifford subsystem code with the parameters
$$[[n_1n_2, k_1k_2, (n_1-k_1)(n_2-k_2), \min\{d_1,d_2 \}]]_q$$
that is pure to $d_p=\min\{d_1^\perp,d_2^\perp\}$, where $d_i^\perp$ denotes the minimum distance of $C_i^\perp$. 
\end{theorem}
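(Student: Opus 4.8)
The plan is to realize the lattice code as an instance of the Euclidean subsystem construction of Corollary~\ref{th:cssoqec}, applied to a suitable pair of product codes. I would identify $\F_q^{n_1n_2}$ with the space of $n_1\times n_2$ matrices over $\F_q$, equivalently $\F_q^{n_1}\otimes\F_q^{n_2}$, and then set
$$ X_1 = C_1^\perp\otimes \F_q^{n_2}\qquad\text{and}\qquad X_2 = \F_q^{n_1}\otimes C_2^\perp, $$
the codes whose columns lie in $C_1^\perp$, respectively whose rows lie in $C_2^\perp$. Both are $\F_q$-linear, of dimensions $(n_1-k_1)n_2$ and $n_1(n_2-k_2)$, so Corollary~\ref{th:cssoqec} applies and yields a Clifford subsystem code of length $n_1n_2$; the whole argument then reduces to evaluating $\dim X_1$, $\dim X_2$, the quantity $k'$, and the two weights appearing in that corollary.

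First I would record the elementary product-code identities
$$ (A\otimes \F_q^{n_2})^\perp = A^\perp\otimes \F_q^{n_2},\qquad (A\otimes \F_q^{n_2})\cap(\F_q^{n_1}\otimes B) = A\otimes B, $$
together with $(A\otimes B)^\perp = A^\perp\otimes\F_q^{n_2}+\F_q^{n_1}\otimes B^\perp$, each of which follows from a one-line column/row argument plus a dimension count. Applying them gives $X_1\cap X_2^\perp = C_1^\perp\otimes C_2$ and $X_1^\perp\cap X_2 = C_1\otimes C_2^\perp$, hence $k'=\dim\big((C_1^\perp\otimes C_2)\times(C_1\otimes C_2^\perp)\big)=(n_1-k_1)k_2+k_1(n_2-k_2)$. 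Substituting into $k=n_1n_2-(\dim X_1+\dim X_2+k')/2$ and $r=(\dim X_1+\dim X_2-k')/2$ and simplifying (the two sums collapse to $2n_1n_2-2k_1k_2$ and $2(n_1-k_1)(n_2-k_2)$, respectively) produces exactly $k=k_1k_2$ and $r=(n_1-k_1)(n_2-k_2)$.

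The substantive step is the distance, and this is where I expect the real work to lie. By Corollary~\ref{th:cssoqec} the distance is $\min\{\wt((X_1^\perp\cap X_2)^\perp\setminus X_1),\,\wt((X_2^\perp\cap X_1)^\perp\setminus X_2)\}$, and via the identities above the first set is $(C_1^\perp\otimes\F_q^{n_2}+\F_q^{n_1}\otimes C_2)\setminus(C_1^\perp\otimes\F_q^{n_2})$. A representative of weight $d_2$ is $e_i\otimes c_2$ with $c_2$ a minimum-weight word of $C_2$ and $e_i\notin C_1^\perp$; for the matching lower bound I would reduce each column of an arbitrary element $M$ of the set modulo $C_1^\perp$, note that the reduction lands in $(\F_q^{n_1}/C_1^\perp)\otimes C_2$ and is nonzero, so one of its $C_2$-rows is a nonzero codeword and therefore at least $d_2$ columns of $M$ survive the reduction, forcing $\wt(M)\ge d_2$. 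The symmetric computation, reducing rows modulo $C_2^\perp$, shows the second weight equals $d_1$, so the distance is $\min\{d_1,d_2\}$. Finally, purity to $d_p$ is the statement $\swt(C)\ge d_p$ for $C=X_1\times X_2$; since the minimum symplectic weight of a direct product is $\min\{\wt(X_1),\wt(X_2)\}$ and every nonzero codeword of $C_1^\perp\otimes\F_q^{n_2}$ (resp. $\F_q^{n_1}\otimes C_2^\perp$) has a nonzero column (resp. row) of weight at least $d_1^\perp$ (resp. $d_2^\perp$), one gets $\swt(C)=\min\{d_1^\perp,d_2^\perp\}=d_p$. The main obstacle is the quotient argument for the distance lower bound; the dimension and purity bookkeeping is routine.
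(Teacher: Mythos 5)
Your proposal is correct, and its skeleton coincides with the paper's: both realize the code through Corollary~\ref{th:cssoqec} applied to the pair $C_1^\perp\otimes\F_q^{n_2}$ and $\F_q^{n_1}\otimes C_2^\perp$ (the paper writes the classical code as $C=(\F_q^{n_1}\otimes C_2^\perp)\times(C_1^\perp\otimes\F_q^{n_2})$, i.e.\ your choice with the two symplectic components swapped, which changes nothing), and the computations of $k$, $r$ and of the purity $\min\{d_1^\perp,d_2^\perp\}$ are the same routine tensor-product dimension counts. Where you genuinely diverge is the minimum-distance lower bound, which is indeed the only substantive step. The paper computes $D^\sdual$ as a sum of tensor products and then passes to explicit block generator matrices in standard form, $H_i=[\,I\mid P_i\,]$, $G_i=[\,-P_i^t\mid I\,]$, $H_i^c=[\,0\mid I\,]$, and argues case-by-case about how many rows of $H_1^c\otimes G_2$ enter a given linear combination and how much weight each case forces in which block of columns. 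Your argument replaces this with the observation that the columnwise quotient map $\F_q^{n_1}\otimes\F_q^{n_2}\to(\F_q^{n_1}/C_1^\perp)\otimes\F_q^{n_2}$ has kernel exactly $C_1^\perp\otimes\F_q^{n_2}$, carries the coset in question into $(\F_q^{n_1}/C_1^\perp)\otimes C_2$, and that a nonzero element of the latter has at least $d_2$ nonzero columns; together with the weight-$d_2$ representative $e_i\otimes c_2$ with $e_i\notin C_1^\perp$ this pins the weight down exactly. This is cleaner than the paper's computation: it avoids the standard-form normalization (which implicitly permutes coordinates), eliminates the case analysis, and makes the structural reason for the bound visible; it also adapts with no change to the additive (non-$\F_q$-linear) setting, since it never touches a generator matrix. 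Both arguments are valid and yield $d=\min\{d_1,d_2\}$.
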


\begin{proof}
Let $C$ be the classical linear code given by $C=(\F_q^{n_1}\otimes
C_2^\perp) \times (C_1^\perp \otimes \F_q^{n_2})$.  Then $\dim C =
n_1(n_2-k_2)+n_2(n_1-k_1)$ and $\swt(C\setminus \{0\})\ge
\min\{d_1^\perp,d_2^\perp\}$.  The symplectic dual of $C$ is given by
\begin{eqnarray*}
C^\sdual &=&  (C_1^\perp\otimes \F_q^{n_2})^\perp \times (\F_q^{n_1}\otimes C_2^\perp)^\perp\\
&=& (C_1\otimes \F_q^{n_2}) \times (\F_q^{n_1}\otimes C_2).
\end{eqnarray*}
We have $\dim C^\sdual= k_1n_2+n_1k_2$. 
The code $D=C\cap C^\sdual$ is given by  
\begin{eqnarray*}
\begin{split}
D &=\left((\F_q^{n_1}\otimes C_2^\perp) \times (C_1^\perp \otimes \F_q^{n_2}) \right)\cap
\left((C_1\otimes \F_q^{n_2}) \times (\F_q^{n_1}\otimes C_2) \right)\\
&=\left((\F_q^{n_1}\otimes C_2^\perp) \cap (C_1\otimes \F_q^{n_2}) \right)\times
\left(  (C_1^\perp \otimes \F_q^{n_2})\cap (\F_q^{n_1}\otimes C_2) \right)\\
&= (C_1\otimes C_2^\perp) \times (C_1^\perp \otimes C_2), 
\end{split}
\end{eqnarray*}
and $\dim D = k_1(n_2-k_2)+k_2(n_1-k_1)$.  It follows that $\dim C -
\dim D = 2(n_1-k_1)(n_2-k_2)$ and $\dim C^\sdual -\dim D =
2k_1k_2$. Using corollary~\ref{th:cssoqec}, we can get a subsystem
code with the parameters $$[[n_1n_2, k_1k_2, (n_1-k_1)(n_2-k_2),
d=\swt(D^\sdual\setminus C)]]_q $$ that is pure to $d_p=\min\{d_1^\perp,d_2^\perp\}$. It remains to show
that $d=\min\{d_1,d_2\}$. 

Since $D= (C_1\otimes
C_2^\perp) \times (C_1^\perp \otimes C_2)$, we have
\begin{eqnarray*}
D^\sdual &=&  (C_1^\perp \otimes C_2)^\perp \times (C_1\otimes C_2^\perp)^\perp\\
&=&  
\left((C_1 \otimes \F_q^{n_2})  + (\F_q^{n_1}\otimes C_2^\perp)\right)
\times 
\left((\F_q^{n_1}\otimes C_2) + 
(C_1^\perp\otimes \F_q^{n_2})\right).
\end{eqnarray*}
In the last equality, we used the fact that vectors $u_1\otimes
u_2$ and $v_1\otimes v_2$ are orthogonal if and only if $u_1\perp v_1$
or $u_2\perp v_2$.

For $i\in \{1,2\}$, let $G_i$ and $H_i$ respectively denote the
generator and parity check matrix of the code $C_i$. Without loss of
generality, we may assume that these matrices are in standard form
$$
H_i=\left[\begin{array}{cc}I_{n_i-k_i} & P_i\end{array}\right] \mbox{ and }  
G_i=\left[\begin{array}{cc}-P_i^t &I_{k_i} \end{array} \right],
$$
where $P_i^t$ is the transpose of $P_i$. Let $H_i^c=\left[\begin{array}{cc} 0& I_{k_i} \end{array}\right]$. Using these notations, the generator matrices of $C$ and $D^\sdual$ can be written as 
$$
G_C=\left[\begin{array}{cc}I_{n_1}\otimes H_2 &0\\ 0&H_1\otimes I_{n_2} \end{array}\right]\quad \text{and} \quad 
G_{D^\sdual} =  
\left[ \begin{array}{cc}
G_1 \otimes H_{2}^c &0 \\I_{n_1}\otimes H_2&0\\
0&H_{1}^c\otimes G_2\\0&H_1\otimes I_{n_2}
\end{array}\right].
$$ 
It follows that the minimum distance $d$ is given by 
\begin{eqnarray*}
\begin{aligned}
\swt(D^\sdual\setminus C)=\min&\left\{\wt\left(\left\langle  \begin{array}{c}
G_1 \otimes H_{2}^c \\I_{n_1}\otimes H_2
\end{array}\right\rangle \setminus \left\langle \begin{array}{c}
I_{n_1}\otimes H_2
\end{array} \right\rangle \right)\right.,\\
&\;\;\left. \wt\left(\left\langle  \begin{array}{c}
H_{1}^c\otimes G_2\\H_1\otimes I_{n_2}
\end{array} \right\rangle\setminus  \left\langle\begin{array}{c}
H_1\otimes I_{n_2}
\end{array} \right\rangle\right) \right\}.
\end{aligned}
\end{eqnarray*}
Let us compute 
$$\wt\left(\left\langle  \begin{array}{c}
H_{1}^c\otimes G_2\\H_1\otimes I_{n_2}
\end{array} \right\rangle\setminus  \left\langle\begin{array}{c}
H_1\otimes I_{n_2}
\end{array} \right\rangle\right).$$
If minimum weight codeword is present in $D^\sdual\setminus C$, it must be expressed as linear 
combination of at least one row from $\left[H_1^c\otimes G_2 \right]$ otherwise the 
codeword is entirely in $C$. Recall that $ H_1=[\begin{array}{cc}I_{n_1-k_1} & P_1\end{array}]$ and 
$H_1^c=[ \begin{array}{cc}0 &I_{k_1}\end{array}]$. Letting $P_1=(p_{ij})$,
we can write  
\begin{eqnarray*}
\left[  \begin{array}{c}
H_{1}^c\otimes G_2\\H_1\otimes I_{n_2}
\end{array} \right] = \left[\begin{array}{cccccccc}
0&0& \dots&0&G_2&0\\
0&0& \dots&0&0&G_2&0\\
\dots & \dots&\dots&\dots &\dots&\dots&\dots&\dots\\
0&0&\dots& 0&0&\dots&\dots&G_2\\\hline
I_{n_2}&0&\dots&0&p_{11}I_{n_2}&\dots&\dots&p_{1k_1}I_{n_2}\\
0&I_{n_2}&\dots&\dots&p_{21}I_{n_2}&\dots&\dots&p_{2k_1}I_{n_2}\\
\dots & \dots&\dots&\dots &\dots&\dots&\dots&\dots\\
0&0&\dots&I_{n_2}&p_{(n_1-k_1)1}I_{n_2}&\dots&\dots&p_{(n_1-k_1)k_1}I_{n_2}
\end{array} \right].
\end{eqnarray*}
Now observe that any row below the line in the above matrix can has a weight of only one in each of the last
$k_1$ blocks of size $n_2$. And any linear combination of them involving less than $d_2$ 
and at least one generator from the rows above must have a weight $\geq d_2$. If on the other hand there are more than
$d_2$ rows involved, then the first $n_2(n_1-k_1)$ columns will have a weight $\geq d_2$. Thus in either
case the weight of an element that involves a generator from $\left[ H_1^c\otimes G_2\right]$ must have a weight $ \geq d_2$. On the other hand, the minimum weight of the span of $\left[H_1^c\otimes G_2 \right]$ is 
$\wt(C_2)=d_2$, from which we can conclude that
$$\wt\left(\left\langle  \begin{array}{c}
H_{1}^c\otimes G_2\\H_1\otimes I_{n_2}
\end{array} \right\rangle\setminus  \left\langle\begin{array}{c}
H_1\otimes I_{n_2}
\end{array} \right\rangle\right)=d_2.$$
Because of the symmetry in the code we can argue that 
$$\wt\left(\left\langle  \begin{array}{c}
G_1 \otimes H_{2}^c \\I_{n_1}\otimes H_2
\end{array}\right\rangle \setminus \left\langle \begin{array}{c}
I_{n_1}\otimes H_2
\end{array} \right\rangle \right)=d_1$$
and consequently $d=\min\{ d_1,d_2\}$, which proves the theorem.
\end{proof}

\subsection{Bacon-Shor Codes}
Bacon \cite{bacon06a} proposed one of the first families of subsystem codes based on square
lattices.  A trivial modification using rectangular lattices instead of square
ones gives the following codes, see also \cite{bacon06b}. The relevance of these codes will be
seen later in Section~\ref{sec:packing}. Using the same notation as in 
Theorem~\ref{th:latticeCodes}, let $G_i=[1,\ldots,1]_{1\times i}$
and $H_i$ be the matrix defined as
$$
H_i=\left[ \begin{array}{ccccccc}
1&1&&&&&\\&1&1&&&&\\
&& &\ddots &&&\\
&&&&1&1&\\&&&&&1&1
 \end{array}\right]_{i-1\times i}
$$
and $C$, the additive code generated by the following matrix.
$$
G= \left[\begin{array}{cc}
I_{n_1}\otimes H_{n_2}&0\\
0&H_{n_1}\otimes I_{n_2}
 \end{array}\right].
$$
Observe that $G_i$ generates an $[i,1,i]_q$ code with distance $i$. 
By Theorem~\ref{th:latticeCodes}, $G_{n_1}$ and $G_{n_2}$ will give us the 
following family of codes 
\begin{corollary}\label{co:rectLattice}
There exist $[[n_1n_2,1,(n_1-1)(n_2-1),\min\{n_1,n_2\}]]_q$ Clifford subsystem codes.
\end{corollary}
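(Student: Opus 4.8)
The plan is to specialize the general lattice subsystem code construction of Theorem~\ref{th:latticeCodes} to the two classical codes generated by $G_{n_1}$ and $G_{n_2}$, where $G_i=[1,\dots,1]_{1\times i}$ is the generator matrix of the repetition code. The key observation is that $G_i$ generates an $[i,1,i]_q$ code: it has length $i$, dimension $1$ (a single nonzero row), and every nonzero codeword $(\alpha,\dots,\alpha)$ with $\alpha\in\F_q^\times$ has full Hamming weight $i$, so the minimum distance is exactly $i$. Thus for $i\in\{1,2\}$ I set $C_i$ to be the $[n_i,1,n_i]_q$ repetition code.

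Next I would simply substitute these parameters into Theorem~\ref{th:latticeCodes}. With $k_1=k_2=1$ and $d_1=n_1$, $d_2=n_2$, the theorem immediately yields a Clifford subsystem code with parameters
$$[[n_1n_2,\ k_1k_2,\ (n_1-k_1)(n_2-k_2),\ \min\{d_1,d_2\}]]_q
=[[n_1n_2,\ 1,\ (n_1-1)(n_2-1),\ \min\{n_1,n_2\}]]_q,$$
which is precisely the statement of Corollary~\ref{co:rectLattice}. The purity claim of Theorem~\ref{th:latticeCodes}, namely that the code is pure to $\min\{d_1^\perp,d_2^\perp\}$, is not asserted in the corollary and so need not be tracked, though one could note that the dual of a repetition code is the parity-check (sum-zero) code of minimum distance $2$.

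Since Theorem~\ref{th:latticeCodes} has already been proved in the excerpt, the only real content here is the verification that the repetition code has the advertised parameters $[i,1,i]_q$, which is routine, and then a direct appeal to the theorem. There is essentially no obstacle: the entire difficulty was absorbed into the distance computation $d=\min\{d_1,d_2\}$ carried out in the proof of Theorem~\ref{th:latticeCodes}. The one point warranting a sentence of care is confirming that $G_i$ really is a valid generator matrix for a code of dimension exactly $1$ (its single row is nonzero for $i\ge 1$) so that $k_i=1$ and the formulas apply verbatim. With that in hand, the corollary follows by specialization.
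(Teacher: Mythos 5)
Your proposal is correct and matches the paper's own derivation exactly: the paper likewise takes $C_i$ to be the $[n_i,1,n_i]_q$ repetition code generated by $G_{n_i}=[1,\dots,1]$ and reads the parameters off Theorem~\ref{th:latticeCodes}. Nothing further is needed.
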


\section{Subsystem Codes and Packing}\label{sec:packing}
We investigate whether subsystem codes lead to better codes because of the 
decomposition of the code space. Since the early days of quantum codes,
it has recognized that the degeneracy of quantum codes could lead to 
a more efficient quantum code and allow for a much more compact 
packing of the subspaces in the Hilbert space. But so far it has not
been shown for stabilizer codes. We can derive  similar bound for 
subsystem codes. 
\cite{aly06} 
showed the following theorem for pure subsystem codes.
\begin{theorem}
A pure $((n,K,R,d))_q$ Clifford subsystem code satisfies 
\begin{eqnarray} 
\sum_{j=0}^{\lfloor (d-1)/2\rfloor}\binom{n}{j} (q^2-1)^j \leq q^n/KR.
\end{eqnarray}
\end{theorem}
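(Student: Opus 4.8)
The plan is to read the asserted inequality as a \emph{packing} statement in the spirit of the stabilizer quantum Hamming bound (Corollary~\ref{th:hamming}): the code space $\mc{C}=A\otimes B$ has dimension $KR$, and one wants to show that $\mc{C}$ together with all of its low-weight error translates fit as mutually orthogonal subspaces inside $\Hi\cong\C^{q^n}$. First I would pass to the classical picture of Theorem~\ref{th:five}, associating with the subsystem code an additive code $X\le\F_{q^2}^n$ which is the image $\ol{N}$ of the gauge group $N$, with stabilizer $Z(N)$ satisfying $\ol{Z(N)}=Y=X\cap X^\adual$. A direct computation with $K=q^n/(xy)^{1/2}$ and $R=(x/y)^{1/2}$, where $x=|X|$ and $y=|Y|$, gives $KR=q^n/|Y|$, so that $q^n/(KR)=|Y|$; meanwhile purity (Definition~\ref{df:impureCodes}) means $\wt(X\setminus\{0\})\ge d$, and the minimum distance is $d=\wt(Y^\adual\setminus X)$ by Theorem~\ref{th:five}.

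Set $t=\lfloor(d-1)/2\rfloor$. I would index the relevant errors by the cosets $\ol{e}\in\F_{q^2}^n$ of Hamming weight at most $t$; there are exactly $\sum_{j=0}^{t}\binom{n}{j}(q^2-1)^j$ of these. For each such $\ol{e}$ I fix a representative error operator $e$ and form the translate $e\,\mc{C}$, a subspace of dimension $KR$. The heart of the argument is to show that distinct cosets give mutually orthogonal translates. For $u,v\in\mc{C}$ one has $\langle e u\mid f v\rangle=\bra{u}P\,g\,P\ket{v}$ with $g=e^\dagger f$, since $Pu=u$ and $Pv=v$; hence it suffices to prove $P g P=0$ whenever $\ol{e}\ne\ol{f}$.

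The key step, which I expect to be the main obstacle, is exactly this vanishing $PgP=0$. Here $\ol{g}=\ol{f}-\ol{e}\ne 0$ has weight at most $2t\le d-1<d$. Purity forbids $\ol{g}\in X$, and because $\wt(Y^\adual\setminus X)=d>\wt(\ol{g})$, the vector $\ol{g}$ cannot lie in $Y^\adual$ either; equivalently $g\notin C_{E}(Z(N))$, so $g$ fails to commute with some stabilizer element $s\in Z(N)$, say $gs=\lambda s g$ with $\lambda\ne 1$. Now every element of $Z(N)$ acts on $\mc{C}=\image P$ by a fixed linear character $\varphi$, because $P$ is the $\varphi$-eigenprojector $\frac{1}{|Z(N)|}\sum_{n\in Z(N)}\varphi(n^{-1})\rho(n)$ (see the proof of Theorem~\ref{th:second}). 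The standard character cancellation used in equation~(\ref{eq:noncommute}) then yields $\bra{u}g\ket{v}=\lambda\bra{u}g\ket{v}$ for all $u,v\in\mc{C}$, forcing $\bra{u}g\ket{v}=0$ and hence $PgP=0$ (the $\rho_g^B=0$ case of equation~(\ref{eq:oqecDetect})). With orthogonality established, the $\sum_{j=0}^{t}\binom{n}{j}(q^2-1)^j$ translates are pairwise orthogonal subspaces of $\C^{q^n}$, each of dimension $KR$, so $KR\sum_{j=0}^{t}\binom{n}{j}(q^2-1)^j\le q^n$, which is the claim.

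An equivalent but shorter route, should one prefer to invoke the pure stabilizer Hamming bound directly, is to note that $Y$ is self-orthogonal and that $Y\subseteq X\subseteq Y^\adual$ together with purity forces $\wt(Y\setminus\{0\})\ge d$ and $\wt(Y^\adual\setminus Y)=d$; thus $Y$ defines a \emph{pure} $((n,q^n/|Y|,d))_q$ stabilizer code of the same distance. Applying the known Hamming bound for pure stabilizer codes and substituting $q^n/(KR)=|Y|$ reproduces the inequality immediately. I would present the packing argument as the primary proof, since it is self-contained and makes the geometric content of the theorem transparent, and mention the reduction as a remark.
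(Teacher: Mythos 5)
Your argument is correct. Note that the dissertation itself does not prove this theorem --- it only states it and cites the companion paper on subsystem codes for the proof --- and the argument given there is precisely your ``shorter route'': the stabilizer $Y=X\cap X^{\adual}$ defines a stabilizer code of dimension $q^n/|Y|=KR$, purity of the subsystem code forces $\wt(Y\setminus\{0\})\ge d$ and $\wt(Y^{\adual}\setminus Y)\ge d$, so one gets a pure $((n,KR,\ge d))_q$ stabilizer code and invokes the pure stabilizer Hamming bound. Your primary packing argument is simply that stabilizer bound unwound: the identity $q^n/(KR)=|Y|$, the observation that any nonzero $\ol{g}$ of weight at most $d-1$ lies neither in $X$ (purity) nor in $Y^{\adual}\setminus X$ (distance), hence anticommutes with some element of $Z(N)$ and satisfies $PgP=0$, is exactly the mechanism behind that bound, so the two routes coincide in substance; presenting the self-contained version is a reasonable choice and all the steps check out.
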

It is natural to ask if impure subsystem codes also satisfy this bound. We show 
that they do not by giving an explicit counterexample. This counter example comes 
from the codes proposed by 
\cite{bacon06a}. 
Recall the Bacon-Shor codes are $[[n^2,1,(n-1)^2,n]]_2$ subsystem codes. The $[[9,1,4,3]]_2$ 
is an interesting code. 
We can check that it satisfies the Singleton bound for subsystem codes as
$$ k+r=1+4 = n-2d+2=9-6+2.$$ So it is an optimal code. 
More interestingly, substituting the 
parameters of the $[[9,1,4,3]]_2$ Bacon-Shor code in the above inequality we get
$$
\sum_{j=0}^{1}\binom{9}{j}3^j = 28 >  2^{9-5}=16.
$$
Therefore the $[[9,1,4,3]]_2$ Bacon-Shor code beats the quantum Hamming bound
for the pure subsystem codes proving the following result.

\begin{theorem}
There exist impure $((n,K,R,d))_q$ Clifford subsystem codes that do not satisfy 
$$ 
\sum_{j=0}^{\lfloor (d-1)/2\rfloor}\binom{n}{j} (q^2-1)^j \leq q^n/KR.
$$
\end{theorem}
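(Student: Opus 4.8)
The plan is to prove this existence statement by exhibiting a single explicit counterexample, namely a Bacon--Shor code whose parameters violate the pure-subsystem Hamming inequality established in the previous theorem. Since that bound is proved only for \emph{pure} subsystem codes, the essential point is to produce a genuinely \emph{impure} subsystem code for which the inequality fails, so that the example lies outside the scope of the earlier result rather than contradicting it.

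First I would invoke Corollary~\ref{co:rectLattice} with $n_1=n_2=3$ and $q=2$, which furnishes a $[[9,1,4,3]]_2$ Clifford subsystem code (the original Bacon--Shor code). In applying this corollary the two constituent codes $C_1=C_2$ are taken to be the $[3,1,3]_2$ repetition code, so that each dual $C_i^\perp$ is a $[3,2,2]_2$ code. By Theorem~\ref{th:latticeCodes} the resulting subsystem code is pure only to $d_p=\min\{d_1^\perp,d_2^\perp\}=2$, which is strictly less than its distance $d=3$; hence by Definition~\ref{df:impureCodes} this code is impure, and is therefore not covered by the Hamming bound for pure subsystem codes.

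Next I would substitute the parameters $n=9$, $K=q^k=2$, $R=q^r=16$, $d=3$, and $q=2$ into the two sides of the inequality. The left-hand side evaluates to
$$
\sum_{j=0}^{\lfloor (d-1)/2\rfloor}\binom{n}{j}(q^2-1)^j
=\sum_{j=0}^{1}\binom{9}{j}3^j = 1+27 = 28,
$$
while the right-hand side is $q^n/(KR)=2^9/(2\cdot 16)=16$. Since $28>16$, the inequality is violated, which exhibits the required impure code and completes the argument.

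Because the entire proof reduces to a verified numerical inequality for an already-constructed code, there is no serious technical obstacle; the computational machinery resides in the earlier lattice-code results. The only point that genuinely requires care is the impurity check: one must confirm through Theorem~\ref{th:latticeCodes} together with the conventions of Definition~\ref{df:impureCodes} that the $[[9,1,4,3]]_2$ code is exactly pure to $2<d$, so that the counterexample truly falls outside the hypotheses of the pure-code bound and thus constitutes a legitimate violation of it.
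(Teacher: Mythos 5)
Your proposal is correct and follows essentially the same route as the paper: both exhibit the $[[9,1,4,3]]_2$ Bacon--Shor code and verify the numerical violation $\sum_{j=0}^{1}\binom{9}{j}3^j = 28 > 2^9/(2\cdot 16) = 16$. Your additional care in confirming impurity (noting the gauge group contains weight-$2$ elements, so the code is exactly pure to $2 < d = 3$) is a welcome tightening of a point the paper merely asserts, but it does not change the substance of the argument.
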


An obvious question is why impure codes can potentially pack more efficiently
than the pure codes. Let us understand this by looking at the $[[9,1,4,3]]_2$
code a little more closely. This code encodes information into a subspace, $Q$
where  $\dim Q= 2^{k+r}=2^5$. As it is a subsystem code $Q$ can be decomposed as $Q=A\otimes B$,
with $\dim A =2^k=2$ and $\dim B= 2^r=2^4$. In a pure single error 
correcting code all single errors must take the code space into orthogonal 
subspaces. In an impure code this is not required two or more distinct errors
can take the code space to the same orthogonal space. In the Bacon-Shor code
a phase flip error on any of the first three qubits will take the 
code space to same orthogonal subspace and because of this we cannot distinguish
between these errors. However, it is not a problem because 
we can restore the code space with respect to $A$
even though we cannot restore $B$. Thus instead of requiring $9$ orthogonal subspaces
as in a pure code, we only require 3 orthogonal subspaces to correct for 
any single phase flip error. Considering the bit flip errors and the combinations we need only
$9$ orthogonal subspaces. Thus with the original code space this means we need to pack
ten $2^5$-dimensional subspaces in the $2^n=2^9$ dimensional ambient space, which is achievable
as $10\cdot 2^5< 2^9$. 

More generally, in a sense degeneracy allows distinct errors to share the same 
orthogonal subspace and thus pack more efficiently. 
It must be pointed out though that this better packing is attained at the cost of $r$
gauge qudits compared to a stabilizer code.

In fact there exists another code among the Bacon-Shor codes which also 
beats the Hamming bound for the subsystem codes. This is the $[[25,1,16,5]]_2$ code.
The family of codes given in corollary~\ref{co:rectLattice} provides us 
with $[[12,1,6,3]]_2$, yet another example of a code
that beats the quantum Hamming bound like the $[[9,1,4,3]]_2$ code. 
 We can check that 
 $$ 
 \sum_{j=0}^1 \binom{12}{j}3^j = 37 > 2^{12-1-6}=2^5=32.
 $$
But note that unlike $[[9,1,4,3]]_2$ this code does not meet the Singleton bound for 
pure subsystem codes  as $6+1< 12-6+2$.
Naturally we can ask if there is a systematic method to construct codes that
beat the quantum Hamming bound. 
Ashikhmin and Litsyn showed that all binary stabilizer codes -- pure
or impure -- of sufficiently large length obey the quantum Hamming
bound, ruling out the possibility that impure codes of large length
can outperform pure codes with respect to sphere packing. In contrast
we show that impure subsystem codes do not obey the quantum Hamming
bound for pure subsystem codes, not even asymptotically.  We show that
there exist arbitrarily long Bacon-Shor codes that violate the quantum
Hamming bound.


Degenerate quantum error-correcting codes pose many interesting
questions in the theory of quantum error-correction. The early
discovery of the phenomenon of degeneracy raised the question whether
degenerate quantum codes can perform better than nondegenerate quantum
codes. One of the unresolved questions to this day in the theory of
stabilizer codes is whether the bounds that hold for nondegenerate
codes also hold for degenerate codes. Some bounds like the quantum
Singleton bound do. But for others, like quantum Hamming bound, an
answer remains elusive. Partial answers were provided by Gottesman
\cite{gottesman97} for single error-correcting and double
error-correcting codes. Ashikhmin and Litsyn \cite{ashikhmin99} showed
that asymptotically degenerate codes cannot beat the quantum Hamming
bound. \textit{This leaves only a small range of degenerate binary stabilizer
codes of moderate length that can potentially beat the quantum Hamming
bound, but we conjecture that no such examples can be found. }

We show that the situation is markedly different in the case of
subsystem codes (also known as operator quantum error-correcting codes
\cite{kribs05,knill06,kribs05b}). The quantum Hamming for pure
subsystem codes was derived in \cite{aly06}. 
We have already shown that there exist impure subsystem codes that beat the quantum
Hamming bound for pure subsystem codes. Now we address the question 
whether impure subsystem codes asymptotically obey the quantum Hamming
bound, as in the case of binary stabilizer codes. 
We show that there exist impure subsystem codes of arbitrarily
large length that beat the quantum Hamming (or sphere-packing) bound.

For the binary cases the quantum Hamming bound for subsystem codes states that a
pure $[[n,k,r,d]]$ subsystem code satisfies
\begin{eqnarray}\label{hamming} 
2^{n-k-r}	\geq \sum_{j=0}^{\floor{(d-1)/2}}\binom{n}{j}3^j.\label{eq:qhb}
\end{eqnarray}
We claim that all the Bacon-Shor codes \cite{bacon06a,bacon06b} of odd lengths {\em i.e.}, 
$[[(2t+1)^2,1,4t^2,2t+1]]$ violate the quantum Hamming bound, namely that
\begin{eqnarray*}
2^{(2t+1)^2-1-4t^2}=	2^{4t}&\not\geq &\sum_{j=0}^{t}\binom{(2t+1)^2}{j}3^j
\end{eqnarray*}
holds for all positive integers $t$. It suffices to show that 
\begin{eqnarray}\label{ineq}
2^{4t}&< &\binom{(2t+1)^2}{t}3^t
\end{eqnarray}
holds for all positive integers $t$. Since $0<4(t-1/6)^2+8/9=4t^2-4t/3+1$, we have 
$$ \frac{16t}{3} <{4t^2+1+4t}$$ for all $t>0$. Multiplying both sides
by $3/t$ and raising to the $t^{th}$ power yields
$$2^{4t}< \frac{3^t(2t+1)^{2t}}{t^t}, $$ which proves the inequality
(\ref{ineq}), as $\binom{n}{k} \geq n^tk^{-t}$.  Thus, we can conclude
that the Bacon-Shor codes of odd length do not obey the quantum
Hamming bound.

\begin{theorem}
Asymptotically, the quantum Hamming bound~(\ref{hamming}) does not
hold for impure subsystem codes.
\end{theorem}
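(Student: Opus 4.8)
The plan is to establish the claim constructively, by exhibiting an infinite sequence of impure subsystem codes whose lengths tend to infinity, each of which violates the bound~(\ref{hamming}). The natural candidates are the odd-length Bacon--Shor codes $[[(2t+1)^2,1,4t^2,2t+1]]_2$ produced by Theorem~\ref{th:latticeCodes} (equivalently Corollary~\ref{co:rectLattice} with $n_1=n_2=2t+1$), which are genuinely impure: by Theorem~\ref{th:latticeCodes} they are pure only to $d_p=\min\{d_1^\perp,d_2^\perp\}=2$, strictly below their minimum distance $d=2t+1$, since the component codes are repetition codes whose duals have distance $2$. For each such code one has $n=(2t+1)^2$, $k=1$, $r=4t^2$, and $d=2t+1$, so that $\lfloor (d-1)/2\rfloor = t$ and $n-k-r = 4t$.

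First I would substitute these parameters into the pure-code Hamming bound~(\ref{hamming}), reducing the question to whether $2^{4t} \ge \sum_{j=0}^{t}\binom{(2t+1)^2}{j}3^j$ can hold. Dropping all but the $j=t$ term on the right-hand side, it suffices to contradict this by proving $2^{4t} < \binom{(2t+1)^2}{t}3^t$ for every positive integer $t$ --- precisely inequality~(\ref{ineq}), already verified above via the elementary estimate $16t/3 < (2t+1)^2$ together with $\binom{n}{k}\ge n^{k}k^{-k}$. Hence every member of the family fails~(\ref{hamming}).

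The only remaining observation --- and the entire content of the word \emph{asymptotically} --- is that the lengths $n=(2t+1)^2$ are unbounded as $t\to\infty$, so the bound is violated at arbitrarily large lengths rather than merely for a few sporadic short codes such as $[[9,1,4,3]]_2$. This is exactly what distinguishes the subsystem setting from the stabilizer setting, where Ashikhmin and Litsyn showed the quantum Hamming bound is obeyed for all sufficiently long codes. I do not anticipate any genuine obstacle: the substantive step is the elementary inequality~(\ref{ineq}), which is already in hand, and the asymptotic conclusion is then immediate. If one wished to strengthen the statement, I would remark that $\binom{(2t+1)^2}{t}3^t/2^{4t}\to\infty$, so the violation is not marginal but grows increasingly severe with the length.
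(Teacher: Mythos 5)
Your proof is correct and follows essentially the same route as the paper: it exhibits the odd-length Bacon--Shor codes $[[(2t+1)^2,1,4t^2,2t+1]]_2$, reduces the violation of~(\ref{hamming}) to the single-term inequality $2^{4t}<\binom{(2t+1)^2}{t}3^t$, and verifies it with the same elementary estimate $16t/3<(2t+1)^2$ combined with $\binom{n}{k}\ge n^k k^{-k}$. The added remarks on impurity (purity only to $d_p=2<2t+1$) and on the unboundedness of the lengths are correct and make explicit what the paper leaves implicit.
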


It is remarkable that there exist such families of subsystem codes
that can pack more densely than any pure subsystem code. Further
examples of such densely packing subsystem codes can be found among
the family with parameters 
$[[n_1n_2,1,(n_1-1)(n_2-1),\min\{n_1,n_2\}]]$, which contains for
instance a $[[12,1,6,3]]$ subsystem code.

\section{Conclusions}\label{sec:conc}
We have proved that any $\F_q$-linear $[[n,k,r,d]]_q$ Clifford
subsystem code obeys the Singleton bound $k+r\le n-2d+2$.
Furthermore, we have shown earlier that pure Clifford subsystem codes
satisfy this bound as well. Our results provide much evidence for the
conjecture that the Singleton bound holds for arbitrary subsystem
codes. Proving this for all additive subsystem codes will be an
interesting problem. 

Pure Clifford subsystem codes obey the Hamming (or sphere packing)
bound. In this chapter, we have shown the amazing fact that 
there exist impure Clifford subsystem codes beating
the Hamming bound. This is the first
illustration of a case when impure codes pack more efficiently than
their pure counterparts. One example of a code beating the Hamming
bound is provided by the $[[9,1,4,3]]_2$ Bacon-Shor code; this
remarkable example also illustrates the following noteworthy facts:
\begin{compactenum}[a)]
\item The $[[9,1,4,3]]_2$ code requires $9-1-4=4$ syndrome measurements just like
the perfect $[[5,1,3]]_2$ code. 
\item Since $k+r\leq n-2d+2$ for all prime alphabet codes, $[[9,1,4,3]]_2$ code is also 
an optimal  subsystem code. 
This is interesting because the underlying classical
codes are not MDS. In MDS stabilizer codes, the underlying classical codes are required
to be MDS codes. 
\item The Bacon-Shor code is also impure. So unlike MDS stabilizer codes which must be pure,
MDS subsystem codes can be impure.
\item The maximal length of a $q$-ary stabilizer MDS code is $2q^2-2$, see Theorem~\ref{th:mds_length}
whereas for subsystem codes it is larger as the $[[9,1,4,3]]_2$ code
indicates.
\end{compactenum}
The implication of b)--d) is that optimal subsystem codes can be
derived from suboptimal classical codes, unlike stabilizer
codes. It would be an interesting problem to determine what are the conditions 
under which a non-MDS classical code will lead to an MDS subsystem code.


\chapter{Encoding and Decoding of Subsystem Codes}\label{ch:subsysEnc}

\section{Introduction}
In this chapter we investigate encoding and to some extent decoding of subsystem codes. 
Our main result is that encoding of a subsystem code can 
be reduced to the encoding of a related stabilizer code, thereby making use of the 
previous theory on encoding stabilizer codes \cite{cleve97,gottesman97, grassl03}. 
We shall prove this in two steps. First,
we shall show that Clifford codes can be encoded using the same methods used for stabilizer codes.
Secondly, we shall show how these methods can be adapted to encode Clifford subsystem codes.
Since subsystem codes subsume  stabilizer codes, noiseless subsystems and decoherence free
subspaces, these results imply that we can essentially use the
same methods to encode all these codes. 
In fact, while the exact details were not provided it was suggested in \cite{poulin06} that encoding
of subsystem codes can be achieved by Clifford unitaries. Our treatment is comprehensive and gives 
 proofs for  all the claims. 

Subsystem codes can potentially lead to simpler  error recovery schemes. 
In a similar vein, they can also simplify 
the encoding process, though perhaps not as dramatically\footnote{In general, decoding  
is usually of greater complexity than encoding and for this reason it is often neglected
in comparison. This parallels the classical case where also the decoding is studied much
more extensively than encoding.}. 
These simplifications have not been investigated thoroughly, neither have 
the gains in encoding been fully characterized. Essentially, these gains are in 
two forms. In the encoded state there need not exist a one to one correspondence between
the gauge qubits and the physical qubits. However, prior to encoding such a
correspondence exists. 
We can exploit this identification between the virtual qubits and the physical
qubits before encoding to tolerate errors on the gauge qubits, a fact which
was recognized in \cite{poulin06}. Alternatively, 
we  can optimize the encoding circuits by eliminating certain encoding operations. 
The encoding operations that are saved correspond to the encoded operators on the gauge
qubits. This is a slightly subtle point and will be elaborated at length subsequently.  
We argue that optimizing the encoding circuit for the latter is much more beneficial than
simply allowing for random initialization of gauge qubits.

{\em Notation.} 
The inner product of two characters of 
a group $N$, say  $\chi$ and $\theta$, is defined as $(\chi,\theta)_N=1/{|N|} \sum_{ n\in N}\chi(n)\theta(n^{-1})$. We shall denote the center of a group $N$ by $Z(N)$. 
Given a subgroup  $N\le E$, we shall denote the centralizer of $N$ in $E$ by $C_E(N)$. 
Given a matrix $A$, we consider another matrix $B$ obtained from $A$ 
by column permutation $\pi$ as being equivalent and denote this by 
$B=_{\pi} A $. Often we shall represent the basis of a group by the
rows of a matrix. In this case we will regard another basis obtained
by any row operations or permutations as being equivalent and by a slight
abuse of notation continue to denote $B=_{\pi}A$. 
The commutator of two operators  $ A$, $B$ is defined as $[A,B] = AB-BA$.
This can potentially conflict with our definition of commutator in 
Chapter~\ref{ch:subsys1} as $[x,y]=xyx^{-1}y^{-1}$. However, in this chapter we will not have occasion 
to use this definition. 

\section{Encoding Stabilizer Codes -- A Review} 
Recall the Pauli matrix operators\footnote{We consider the real version 
of the Pauli group in this chapter.}, 
\begin{eqnarray}
X = \left[\begin{array}{cc}0 &1\\1&0\end{array} \right], \quad
Z = \left[\begin{array}{cc}1 &0\\0&-1\end{array} \right], \quad 
Y = \left[\begin{array}{cc}0 &-1\\1&0\end{array} \right] = XZ.
\end{eqnarray}
Let $\mc{P}_n$ be the Pauli group on $n$ qubits. An element 
element $e=  (-1)^cX^{a_1}Z^{b_1}\otimes \cdots \otimes X^{a_n}Z^{b_n}$
in $\mc{P}_n$, can be mapped to $\F_2^{2n}$ by 
 $\tau : \mc{P}_n \rightarrow \F_2^{2n}$  as
\begin{eqnarray}
\tau(e) =  (a_1,\ldots, a_n|b_1,\ldots, b_n).\label{eq:iso}
\end{eqnarray}

Given an $[[n,k,d]]_2$ code with stabilizer $S$, we can associate to $S$ (and therefore
the code), a matrix in $\F_2^{(n-k)\times 2n}$ obtained by taking the image of
any set of its generators under the mapping $\tau$. We shall refer to this 
matrix as the  \textsl{stabilizer matrix}. \index{stabilizer!matrix}
We shall refer to the stabilizer as well as any
set of generators as the stabilizer. Additionally, because of the mapping $\tau$,
we shall refer to the stabilizer matrix or any matrix obtained 
from it by row reduction or column permutations also as the stabilizer.
The stabilizer matrix can be put in the so-called ``standard form'', see 
\cite{cleve97, gottesman97}. This form also allows us to compute the encoded operators 
for the stabilizer code. 
Recall that the encoded operators allow us to 
perform computations on the encoded data without having to decode the data and then
compute.

\begin{defn}[Encoded operators]\index{encoded operators}\index{logical operators}
Given a $[[n,k,d]]_2$ stabilizer code with stabilizer $S$,
let $\ol{X}_i$, $\ol{Z}_i$ for $1\leq i\leq k$ be a set of $2k$ linearly 
independent operators in $C_{\mc{P}_n}(S)\setminus S Z(\mc{P}_n)$. The operators
$\ol{X}_i$, $\ol{Z}_i$ are said to be encoded operators for the code 
if they satisfy the following requirements. 
\begin{compactenum}[i)]
\item $[\ol{X}_i,\ol{X}_j] =0$
\item $[\ol{Z}_i,\ol{Z}_j]=0$
\item $[\ol{X}_i,\ol{Z}_j] = 2\delta_{ij}\ol{X}_i\ol{Z}_i$
\end{compactenum}
\end{defn}
The operators $\ol{X}_i$ and $\ol{Z}_j$ are referred to as encoded or logical $X$ and $Z$ operators
on the $i$th and $j$th logical qubits, respectively.
The choice of which of the $2k$ linearly independent elements of $C_{\mc{P}_n}(S)\setminus SZ(\mc{P}_n)$
we choose to call encoded $X$ operators and $Z$ operators is arbitrary;
as long as  the generators satisfy the conditions above, any choice is valid. 
Different choices lead to different sets of encoded logical states; alternatively,
a different orthonormal basis for the codespace.

\begin{lemma}[Standard form of stabilizer matrix \cite{cleve97,gottesman97}]\label{lm:stabStdForm}
\index{encoding!standard form}
Up to a permutation $\pi$, the 
stabilizer matrix of an $[[n,k,d]]_2$ code can be put in the 
following form, 
\begin{eqnarray}
S=_{\pi}\left[\begin{array}{ccc|ccc}
I_{s'}& A_1 & A_2 & B & 0 & C\\
0 & 0 & 0& D&I_{n-k-s'} & E
\end{array} \right],\label{eq:stdForm}
\end{eqnarray}
while the associated encoded operators can be derived as 
\begin{eqnarray}
\left[\begin{array}{c}
\ol{Z}\\
\ol{X} 
\end{array}\right]
=_{\pi}\left[\begin{array}{ccc|ccc}
0 & 0 &0 &A_2^t&0 &I_{k}\\
\hline
0 & E^t &I_{k}&C^t&0 &0 
\end{array}\right].\label{eq:encOps}
\end{eqnarray}
\end{lemma}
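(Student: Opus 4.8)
The plan is to prove both halves of the lemma by transporting everything through the isometry $\tau$ of equation~(\ref{eq:iso}) into linear algebra over $\F_2$, using the binary ($q=2$) case of Lemma~\ref{th:commute}: two Pauli operators $e,e'$ with $\tau(e)=(a\mid b)$ and $\tau(e')=(a'\mid b')$ commute if and only if the symplectic form $a\cdot b'+a'\cdot b$ vanishes. Under $\tau$ the stabilizer $S$ becomes an $(n-k)\times 2n$ matrix $[A\mid B]$ with $A,B\in\F_2^{(n-k)\times n}$; changing the generating set of $S$ corresponds to elementary row operations (which leave the group, hence the code, unchanged), and relabelling qubits corresponds to applying a single permutation $\pi$ simultaneously to the $X$-columns and the $Z$-columns.

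First I would reduce the matrix. Set $s'=\operatorname{rank}(A)$. By row operations and a qubit permutation I bring $A$ into reduced row echelon form, so the first $s'$ rows carry $I_{s'}$ in the leading $X$-columns and the remaining $n-k-s'$ rows have zero $X$-part. These bottom rows are pure-$Z$ generators; write their $Z$-part as $[D\mid D'\mid E']$, split according to the column partition $(s',\,n-k-s',\,k)$. The crucial step is to show that, permuting only the last $n-s'$ qubits (so that $I_{s'}$ is preserved), one can move an identity block $I_{n-k-s'}$ into the middle $Z$-columns; this reduces to showing $[D'\mid E']$ has full row rank $n-k-s'$. I would argue this by contradiction via Lemma~\ref{th:stab}: a rank deficiency would yield a nonzero combination of the bottom rows equal to a pure-$Z$ element $g=(0\mid d)$ with $d$ supported only on the first $s'$ qubits; since $g\in S$ it must commute with each top generator $h_i$, whose $X$-part is $e_i$ on those qubits, and the symplectic form of $g$ with $h_i$ equals $d_i$, forcing $d=0$, a contradiction. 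Having placed $I_{n-k-s'}$, I use the bottom rows (zero $X$-part) to clear the middle $Z$-columns of the top rows without disturbing any $X$-part, producing exactly the form~(\ref{eq:stdForm}). I expect this rank/commutativity argument to be the main obstacle, since it is what guarantees the placement is achievable by qubit permutations alone.

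For the second half I would exhibit the operators in~(\ref{eq:encOps}) and verify their properties by block multiplication of symplectic forms, reusing the $(s',\,n-k-s',\,k)$ partition. Writing the stabilizer row-blocks as $H_1=([I_{s'},A_1,A_2]\mid[B,0,C])$ and $H_2=([0,0,0]\mid[D,I,E])$, the symplectic pairing matrix of two row-blocks $P,Q$ is $X_PZ_Q^{t}+Z_PX_Q^{t}$. A direct computation shows $\ol Z$ and $\ol X$ pair to zero with both $H_1$ and $H_2$ (for instance $X_{H_1}Z_{\ol Z}^{t}=A_2+A_2=0$ over $\F_2$), so they lie in $C_{\mc P_n}(S)$; that $\ol X$ and $\ol Z$ each pair to zero with themselves, yielding relations i) and ii); and that $X_{\ol X}Z_{\ol Z}^{t}=I_k$, yielding the hyperbolic relation iii). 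Finally I would establish independence modulo $SZ(\mc P_n)$: any product $\prod\ol X_i^{\alpha_i}\ol Z_i^{\beta_i}$ lying in $SZ(\mc P_n)$ pairs trivially with every element of $C_{\mc P_n}(S)$, so pairing it against $\ol Z_j$ and $\ol X_j$ forces $\alpha_j=\beta_j=0$; since $C_{\mc P_n}(S)/SZ(\mc P_n)$ has dimension $2k$, the $2k$ operators form a symplectic basis, which completes the argument.
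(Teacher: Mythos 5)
Your proof is correct. The paper does not actually prove this lemma---it is quoted from \cite{cleve97,gottesman97}---and your argument is essentially the standard one from those references: Gaussian elimination on the binary symplectic representation, with the full-row-rank of $[D'\mid E']$ (established via commutativity of a putative pure-$Z$ combination with the top generators, whose $X$-parts restrict to $I_{s'}$) justifying that $I_{n-k-s'}$ can be placed by a qubit permutation alone, followed by block-wise verification of the symplectic pairings and the independence of the $2k$ encoded operators modulo $SZ(\mc{P}_n)$.
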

\begin{remark}Encoding 
using essentially same ideas is possible even if the  
identity matrices $I_{s'}$ in the stabilizer matrix or $I_{k}$ in the encoded operators 
are replaced by upper triangular matrices.
\end{remark}
The standard form of the stabilizer matrix prompts us to distinguish between
two types of the generators for the stabilizer as they affect the encoding
in different ways (although it can be shown that they are of equivalent complexity). 

\begin{defn}[Primary generators]
A generator $G_i=(a_1,\ldots,a_n|b_1,\ldots,b_n)$ with at least one nonzero 
$a_i$ is  called a primary generator.
\end{defn}
In other words, primary generators contain at least one $X$ or $Y$
operator on some qubit. 
The primary generators determine to a large extent the complexity of the 
encoding circuit along with
the encoded $X$ operators. 
The operators $\ol{X}$ are also called
seed generators and they also figure in the encoding circuit.
The encoded $Z$
operators do not.

\begin{defn}[Secondary generators]
A generator of the form $(0,\ldots,0|b_1,\ldots,b_n)$ is called
secondary generator.
\end{defn}
In the standard form encoding, the complexity of the encoded $X$ operators is determined 
by the secondary generators. 
Therefore they indirectly contribute\footnote{Indirect because the submatrix 
$E$,  figures in both 
the secondary generators, see equation~(\ref{eq:stdForm}), and also the encoded 
$X$ operators, see equation~(\ref{eq:encOps}).} to  the complexity of encoding. 
We mentioned earlier that different choices of the encoded operators amounts to 
choosing different orthonormal basis for the codespace. 
However, the choice in Lemma~\ref{lm:stabStdForm} is particularly 
suitable for encoding. We can represent our input in the form 
$\ket{0}^{\otimes^{n-k}}\ket{\alpha_1\ldots\alpha_k}$ which allows us to make
the identification that $\ket{0}^{\otimes^n}$ is mapped to $\ket{\ol{0}}$, 
the logical all zero code word. 
This state is precisely the state stabilized by the stabilizer generators and 
logical $Z$ operators, (which in Lemma~\ref{lm:stabStdForm} can be seen to be consisting
of only $Z$ operators).
Given the stabilizer matrix in the standard form and the encoded operators as in 
Lemma~\ref{lm:stabStdForm}, the encoding circuit is given as follows. 
\begin{lemma}[Standard form encoding stabilizer codes \cite{cleve97,gottesman97}]\label{lm:implement}
\index{encoding!standard form}
Let $S$ be the  stabilizer matrix of an $[[n,k,d]]$ stabilizer code 
in the standard form {\em i.e.}, as in equation~(\ref{eq:stdForm}).
Let $G_i$ denote the $i$th primary generator of $S$ and $\ol{X_j}$ denote the $jth$ encoded
$X$ operator as in equation~(\ref{eq:encOps}). Then these operators are in the form\footnote{We allow some
freedom in the primary generators, in that instead of $I_{s'}$ in equation~(\ref{eq:stdForm}), we
allow it be an upper triangular matrix also.}
\begin{eqnarray*}
G_i&=&(0,0,\ldots,1,a_{i+1},\ldots, a_n|b_1,\ldots,b_{s'},0,\ldots,0,b_{n-k+1},\ldots,b_n),\\
\ol{X}_j&=&(0,\ldots,0,c_{s'+1},\ldots,c_{n-k}0,\ldots,0,1=c_{n-k+j},0,\ldots,0|d_1,\ldots,d_{s'},0,\ldots,0).
\end{eqnarray*}
To encode the stabilizer code we implement the following circuits
corresponding to each of the primary generators
and the encoded operators.
The generator $G_i$ is implemented after $G_{i+1}$. 
The encoded operators precede the primary generators
in their implementation but we can implement $\ol{X}_j$ before or after $\ol{X}_{j+1}$.
\[
\Qcircuit @C=.7em @R=.3em @!R {
\lstick{\ket{0}_1} & \qw & \qw & &\dots & & \qw& \qw & \qw &\\
\lstick{\vdots} & \qw & \qw & & \dots& & \qw&\qw & \qw& \\
\lstick{\ket{0}_i} & \qw & \qw &  &\dots& &\gate{H}& \ctrl{1} &\qw& \\
\lstick{\ket{0}_{i+1}} & \qw & \qw & &\dots & & \qw&\gate{X^{a_{i+1}}Z^{b_{i+1}}} \qwx[1]& \qw & \\
\lstick{\vdots} & \qw & \qw &  &\dots&& \qw&\gate{ \vdots }\qwx[1] & \qw \\
\lstick{\ket{0}_{s'}} & \qw & \qw & &\dots && \qw& \gate{X^{a_{s'}}Z^{b_{s'}}} \qwx[1]& \qw \\
\lstick{\ket{0}_{s'+1}} &  \gate{X^{c_{s'+1}}}\qwx[1]& \qw & &\dots& &\qw&  \gate{X^{a_{s'+1}}Z^{b_{s'+1}}} \qwx[1]& \qw \\
\lstick{\vdots} & \gate{ \vdots }  \qwx[1] & \qw &  &\dots&& \qw&\gate{ \vdots }\qwx[1] & \qw \\
\lstick{\ket{0}_{n-k}} &  \gate{X^{c_{n-k}}} & \qw & & \dots & &\qw&\gate{X^{a_{n-k}}Z^{b_{n-k}}} \qwx[1]& \qw & \\
\lstick{\ket{\psi_1}}  & \qw & \qw &  &\dots & & \qw&\gate{X^{a_{n-k+1}}Z^{b_{n-k+1}}} \qwx[1]& \qw & \\
\lstick{\vdots} & \qw & \qw & & \dots&& \qw&\gate{ \vdots } \qwx[1]& \qw \\
\lstick{\ket{\psi_j}}  & \ctrl{-3}& \qw &  &\dots && \qw&\gate{X^{a_{n-k+j}}Z^{b_{n-k+j}}}\qwx[1] & \qw \\
\lstick{\vdots} & \qw & \qw & &\dots && \qw&\gate{ \vdots } \qwx[1]& \qw \gategroup{7}{2}{12}{2}{.7em}{--}\\
\lstick{\ket{\psi_k}}  & \qw & \qw&  &\dots&& \qw&\gate{X^{a_{n}}Z^{b_{n}}}& \qw \gategroup{3}{7}{15}{9}{.7em}{_\}}& \\
 & \ol{X}_j  &  & & && &G_i&&&&\\
}
\]
\end{lemma}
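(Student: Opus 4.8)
The plan is to verify that the circuit described in Lemma~\ref{lm:implement} correctly realizes the encoding map determined by the standard-form stabilizer matrix~(\ref{eq:stdForm}) and the encoded operators~(\ref{eq:encOps}). The strategy is to trace the action of the circuit on the computational-basis input $\ket{0}^{\otimes(n-k)}\ket{\psi_1\cdots\psi_k}$ and show that the output is the correctly encoded state. The key conceptual point, established in the review preceding the lemma, is that the encoded all-zero state $\ket{\ol{0}}$ is the unique (up to phase) state stabilized by the full stabilizer $S$ together with the logical $\ol{Z}_i$ operators, and that applying $\ol{X}_j$ flips the $j$th logical qubit. So it suffices to check two things: first, that on input $\ket{0}^{\otimes n}$ the primary-generator block of the circuit produces a $+1$ eigenstate of every stabilizer generator (and of every $\ol{Z}_i$), and second, that the seed-generator block correctly injects the logical information via the $\ol{X}_j$ operators.

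First I would analyze the primary-generator subcircuit in isolation. Each primary generator $G_i$ has a leading $1$ in its $X$-part at position $i$ (by the upper-triangular/identity structure of the $I_{s'}$ block in~(\ref{eq:stdForm})). The corresponding circuit fragment applies a Hadamard on qubit $i$ followed by controlled-$X^{a_\ell}Z^{b_\ell}$ gates targeting the remaining qubits, with qubit $i$ as control. I would show by the standard argument that, starting from $\ket{0}^{\otimes n}$ and applying these fragments in order of decreasing $i$ (so $G_i$ after $G_{i+1}$), the resulting state is $\frac{1}{\sqrt{2^{s'}}}\sum_{v}\prod_i G_i^{v_i}\ket{0}^{\otimes n}$, i.e. the uniform superposition $\frac{1}{|S|^{1/2}}\sum_{g\in S'} g\ket{0}^{\otimes n}$ over the subgroup $S'$ generated by the primary generators. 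The ordering matters precisely because the controlled operations must act on qubits whose state has already been fixed by later-indexed generators; the triangular structure guarantees this consistency. The secondary generators, being pure-$Z$ operators of the form $(0\mid b)$, stabilize $\ket{0}^{\otimes n}$ already and hence need no explicit circuit element — this is why only primary generators appear.

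Next I would incorporate the seed (encoded-$X$) block. Each $\ol{X}_j$ is applied controlled on the logical input qubit $\ket{\psi_j}$; by the form of $\ol{X}_j$ in the lemma its $X$-part has a leading $1$ at position $n-k+j$, together with $X$-components $c_\ell$ on the ancilla qubits in the range $s'+1,\dots,n-k$. Because the $\ol{X}_j$ precede the primary generators in execution, applying them to $\ket{0}^{\otimes(n-k)}\ket{\psi}$ first populates the ancillas with the logical content, and the subsequent primary-generator block symmetrizes over $S'$, yielding $\ket{\ol{\psi}}=\prod_j \ol{X}_j^{\psi_j}\ket{\ol{0}}$ by linearity. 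I would verify that the commutation relations i)--iii) in the definition of encoded operators, which hold by construction of~(\ref{eq:encOps}), ensure that this produces a genuine encoding rather than a scrambled state — specifically that $\ol{X}_j$ commutes with all stabilizer elements and with the other logicals up to the required anticommutation with $\ol{Z}_j$.

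I expect the main obstacle to be bookkeeping the ordering constraints and the interaction between the two blocks: one must check carefully that applying the $\ol{X}_j$ operators before the primary generators, combined with the decreasing-index ordering of the $G_i$, does not introduce spurious phases or fail to commute past the already-placed gates. The triangular structure of the leading blocks is exactly what makes this work, but making the induction on generator index fully rigorous — showing that each newly applied $G_i$ acts as intended on a state already constrained by $G_{i+1},\dots$ and by the seed operators — is the delicate part. The rest reduces to the routine verification, via the isomorphism $\tau$ of~(\ref{eq:iso}) and the multiplication rule for Pauli operators, that the symplectic relations among the rows of~(\ref{eq:stdForm}) and~(\ref{eq:encOps}) translate into the claimed operator identities.
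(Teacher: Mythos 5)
The paper does not actually prove this lemma: it is stated as a review of the standard-form encoding due to Cleve and Gottesman \cite{cleve97,gottesman97}, and the text that follows it only records gate counts and the $P$-gate correction needed for the complex-$Y$ convention. So there is no in-paper proof to compare against; what can be judged is whether your outline would yield a correct proof, and in substance it would --- it is the standard argument. Your reduction to two checks (the primary-generator block prepares $\ket{\ol{0}}\propto\sum_{v\in\F_2^{s'}}\prod_i G_i^{v_i}\ket{0}^{\otimes n}$, and the seed block injects $\prod_j\ol{X}_j^{\psi_j}$, which then commutes past the stabilizer block because the $\ol{X}_j$ lie in the centralizer) is the right decomposition, and your observation that the secondary generators already stabilize $\ket{0}^{\otimes n}$ is exactly why only primary generators appear in the circuit.

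One point needs sharpening, because it is where the ordering constraints actually come from. The circuit fragment labelled $G_i$ is not the operator $\tfrac{1}{\sqrt2}(I+G_i)$: it omits the $Z^{b_1},\dots,Z^{b_{i-1}}$ factors on qubits $1,\dots,i-1$ and the phase action of $G_i$ on qubit $i$ itself (hence the $P$-gate in the paper's $[[5,1,3]]$ example); likewise the fragment for $\ol{X}_j$ omits the $Z^{d_1},\dots,Z^{d_{s'}}$ components and the $X$ on the input qubit $n-k+j$. The decreasing-index ordering of the $G_i$, and the placement of the seed block first, are needed precisely so that every \emph{omitted} factor acts on a qubit that is still in $\ket{0}$ (or, for the omitted $X^{c_{n-k+j}}$, on the qubit that already carries $\ket{\psi_j}$) at the moment the fragment is executed, and therefore acts trivially. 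Your stated reason --- that ``the controlled operations must act on qubits whose state has already been fixed by later-indexed generators'' --- has this slightly backwards: the implemented controlled gates work on any target state; the constraint is on the gates that are \emph{not} implemented. With that correction, the induction on the generator index that you describe goes through, and the rest of your outline (commutation of the $\ol{X}_j$ with the $G_i$, linearity over basis inputs) is routine.
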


To encode a stabilizer code, we first put the stabilizer matrix in the standard form,
then implement the seed generators i.e., the encoded $X$ operators, followed by the 
primary generators $i=s'$ to $i=1$ as per Lemma~\ref{lm:implement}. The complexity of
encoding the $ith$ primary generator is at most $n-i$ two qubit gates and one $H$ gate.
The complexity of encoding an encoded operator is at most $n-k-s'$ CNOT gates. This means
the complexity of standard form encoding  is upper bounded by $(2n-1-k-s')s'/2$ two 
qubit gates and $s'$ Hadamard gates; $O(n(n-k))$ gates. A minor modification 
(\cite{grassl01}) must
be incorporated when $Y$ is defined as 
$\Y= \left[\begin{smallmatrix} 0&-i \\i&0\end{smallmatrix}\right]$ as the
following example illustrates. See \cite{grasslweb} for more examples.
\begin{example}
Consider the  $[[5,1,3]]$ code with following stabilizer, 
with $Y=\left[\begin{smallmatrix} 0&-i \\i&0\end{smallmatrix}\right]$.
\begin{eqnarray*}
S & =  &\left[ \begin{array}{ccccc}
	X&I&X&X&X\\
	I&X&Z&X&Y\\
	Z&I&Z&Z&Z\\
	I&Z&Y&Z&X
\end{array}\right]
\end{eqnarray*}
The associated stabilizer matrix is given by 
\begin{eqnarray*}
S & =  &\left[ \begin{array}{ccccc|ccccc}
	1&0&1&1&1&0&0&0&0&0\\
	0&1&0&1&1&0&0&1&0&1\\
	0&0&0&0&0&1&0&1&1&1\\
	0&0&1&0&1&0&1&1&1&0 
\end{array}\right]
\end{eqnarray*}
Writing $S$ in standard form we get 
\nix{
\begin{eqnarray*}
S&=&\left[ \begin{array}{ccccc|ccccc}
	1&0&0&1&0&0&1&1&1&0\\
	0&1&0&1&1&0&0&1&0&1\\
	0&0&1&0&1&0&1&1&1&0\\ 
	0&0&0&0&0&1&0&1&1&1
\end{array}\right].
\end{eqnarray*}
}
\begin{eqnarray*}
S&=&\left[ \begin{array}{ccccc|ccccc}
	1&0&0&1&0&1&1&0&0&1\\
	0&1&0&1&1&0&0&1&0&1\\
	0&0&1&0&1&1&1&0&0&1\\ 
	0&0&0&0&0&1&0&1&1&1
\end{array}\right] =
\left[ \begin{array}{ccccc}
	Y&Z&I&X&Z\\
	I&X&Z&X&Y\\
	Z&Z&X&I&Y\\
	Z&I&Z&Z&Z
\end{array}\right] =
\left[\begin{array}{c} G_1\\G_2\\G_3\\G_4\end{array} \right].
\end{eqnarray*}
The encoded operators for this code are 
\begin{eqnarray*}
\left[ \begin{array}{c}\ol{Z}\\\ol{X}
\end{array}\right]&=&\left[ \begin{array}{ccccc|ccccc}
	0&0&0&0&0&0&1&1&0&1\\
	0&0&0&1&1&1&1&1&0&0
\end{array}\right].
\end{eqnarray*}
In addition to following the procedure described in Lemma~\ref{lm:implement}, one must
throw in a $P$ gate, for every  $Y$ on the diagonal of 
the stabilizer (in standard form). The encoding circuit is given by 
\[
\Qcircuit @C=.7em @R=.4em @! {
\lstick{\ket{0}} &\qw&\qw& \qw &\qw &\qw&\gate{H}& \ctrl{1}  &\gate{P}& \qw\\
\lstick{\ket{0}} &\qw&\qw&\qw& \gate{H} & \ctrl{1}&\qw & \gate{Z}\qwx[2] & \qw & \qw\\
\lstick{\ket{0}} &\qw&\gate{H}& \ctrl{2} & \qw&\gate{Z}\qwx[1]&\qw & \qw& \qw & \qw\\
\lstick{\ket{0}}&\gate{X}& \qw& \qw &\qw& \gate{X}\qwx[1] &\qw& \gate{X}\qwx[1] &\qw &\qw\\
\lstick{\ket{\psi}}&\ctrl{-1}&\qw& \gate{Y}&\qw & \gate{Y} & \qw&\gate{Z} &\push{\rule{0em}{1.3em}} \qw &\qw\\
&\ol{X}  \gategroup{4}{2}{5}{2}{.7em}{--}&& G_3  \gategroup{3}{3}{5}{4}{.7em}{_\}}&& G_2  \gategroup{2}{5}{5}{6}{.7em}{_\}}& &G_1 \gategroup{1}{7}{5}{9}{.7em}{_\}}\\
}
\]
\end{example}

\section{Encoding Clifford Codes}
In this section, we show that a Clifford code can be encoded using its stabilizer
and therefore the methods used for encoding stabilizer codes are applicable. 
So that this chapter can be read independently of Chapter~\ref{ch:subsys1}, we
briefly recapitulate some facts about Clifford subsystem codes.
\index{Clifford code}
Let $E$ be an abstract error group {\em i.e.}, it is a finite group with a faithful 
irreducible unitary representation $\rho$ of degree $|E:Z(E)|^{1/2}$. Denote by $\phi$, 
the irreducible character afforded by $\rho$. Let $N$ be
a normal subgroup of $E$. Further, let $\chi$ be an irreducible character $\chi$ 
of $N$ such that $(\phi_N,\chi)_N >0$.
Then the Clifford code defined by $(E,\rho,N, \chi)$ is the image of the orthogonal 
projector \index{projector!Clifford code}
\begin{eqnarray}
P=\frac{\chi(1)}{|N|}\sum_{n\in N}\chi(n^{-1})\rho(n).\label{eq:cliffProj}
\end{eqnarray}

Under certain conditions we can construct a subsystem code from the Clifford code,
in particular when $E$ is the extraspecial $p$-group, the Clifford code $C$ has a
tensor product decomposition\footnote{Strictly speaking the equality should be
replaced by an isomorphism.} as $C=A\otimes B$, where $B$ is an irreducible 
$\C N$-module, $A$ is an irreducible $\C L$-module and $L=C_E(N)$.
In this case we can encode information only into the subsystem $A$, while the
co-subsystem $B$ provides additional protection. When encoded this
way we say $C$ is a Clifford subsystem code. The normal subgroup $N$ 
consists of all errors in $E$ that act trivially on $A$. It is also 
called the gauge group of the subsystem code. Our main goal will be to show how to encode into 
the subsystem $A$. Therefore,  our interest will center on the projectors
for the Clifford code and the subsystem code and not so much on the 
parameters of the codes themselves. 

An alternate projector for a Clifford code with data $(E,\rho,N,\chi)$ can be defined 
in terms of $Z(N)$, the center of $N$. The proof of this can be found in 
\cite[Theorem~6]{klappenecker033}. This projector is given as \index{projector!Clifford code}
\begin{eqnarray}
P'=\frac{1}{|Z(N)|}\sum_{n\in Z(N)}\varphi(n^{-1})\rho(n),\label{eq:cliffZProj}
\end{eqnarray}
where $\varphi$ is an irreducible character of $Z(N)$, that satisfies 
$(\chi \downarrow Z(N))(x)=\chi(1)\varphi(x)$.  
In this case $Q$ can be thought of as a stabilizer
code in the sense of \cite{calderbank98} i.e. 
\begin{eqnarray}
\rho(m) \ket{\psi} = \varphi(m)\ket{\psi} \mbox{ for any $m$ in } Z(N).
\end{eqnarray}
In addition to the assumption that the error group is an
extraspecial $p$-group we also assume that $Z(E) \le N$.  The inclusion of the center of 
$E$ does not change the code but helps in analysis. Thus we have the following lemma.

\begin{lemma}\label{lm:phivalues}
Let $(E,\rho,N,\chi)$ be the data of a Clifford code and $\varphi$ an irreducible character of $Z(N)$,
the center of $N$, satisfying $(\chi\downarrow Z(N))(x)=\chi(1)\varphi(x)$. If $E$ is an extraspecial $p$-group, then for all $n$ in $Z(N)$,  $\varphi(n) \in\{ \zeta^k \mid \zeta=e^{j2\pi k/p}, 0\leq k<p\}$. Further, 
if $Z(E)\le N$, then for any  $n\in Z(N)$, we have $\varphi(n^{-1}) \rho(n) \in \rho(Z(N))$. 
\end{lemma}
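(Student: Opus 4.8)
The plan is to read off the structure of $Z(N)$ first and then pin down the values of $\varphi$ before tackling the membership claim. I would begin with the standing observations that $Z(N)$ is abelian (being a center), so the irreducible character $\varphi$ is \emph{linear} and its values are roots of unity, and that $Z(E)\le N$: since $E$ is a nontrivial $p$-group and $N\neq 1$ is normal, $N\cap Z(E)\neq 1$, and as $|Z(E)|=p$ this forces $Z(E)\le N$, hence $Z(E)\le Z(N)$ because $Z(E)$ is central in all of $E$. Fixing a generator $\zeta$ of the cyclic group $Z(E)$, Schur's lemma applied to the faithful irreducible $\rho$ shows that $\rho(\zeta)=\omega\,\mathbf{1}$ for a \emph{primitive} $p$th root of unity $\omega=\exp(2\pi i/p)$, where primitivity uses faithfulness. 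Since by hypothesis $(\chi\!\downarrow\! Z(N))(x)=\chi(1)\varphi(x)$, every element of $Z(N)$ acts on the Clifford code $C$ by the scalar $\varphi(\,\cdot\,)$; comparing this with the global action $\rho(\zeta)=\omega\,\mathbf 1$ on $C$ yields the key normalization $\varphi(\zeta)=\omega$, and more generally $\varphi(\zeta^j)=\omega^{j}$.

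For the first assertion I would control the order of $\varphi(n)$ through the exponent structure of $E$. Because $E/Z(E)$ is elementary abelian, $n^{p}\in Z(E)$ for every $n\in E$, so $\varphi(n)^{p}=\varphi(n^{p})\in\langle\omega\rangle$. When $p$ is odd the extraspecial group has exponent $p$, hence $n^{p}=1$ and $\varphi(n)^{p}=\varphi(1)=1$, which shows directly that $\varphi(n)$ lies in $\{\omega^{k}\mid 0\le k<p\}$, the group of $p$th roots of unity, as claimed.

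The second assertion then follows by a central correction. Given $n\in Z(N)$, use the first part to write $\varphi(n)=\omega^{j}=\varphi(\zeta^{j})$ for some $j$, and set $m=n\zeta^{-j}$. Since $\zeta^{-j}\in Z(E)\le Z(N)$ and $n\in Z(N)$, we have $m\in Z(N)$, and the computation
\begin{equation*}
\rho(m)=\rho(n)\rho(\zeta)^{-j}=\rho(n)\,\omega^{-j}=\varphi(n)^{-1}\rho(n)=\varphi(n^{-1})\rho(n)
\end{equation*}
exhibits $\varphi(n^{-1})\rho(n)=\rho(m)\in\rho(Z(N))$, which is exactly what is needed to rewrite the projector \eqref{eq:cliffZProj} as an unweighted sum over $\rho(Z(N))$ in the stabilizer style. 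The main obstacle I anticipate is the characteristic-two case: an extraspecial $2$-group has exponent $4$, so $n^{2}\in Z(E)$ need not be trivial and $\varphi(n)$ can a priori be a genuine $4$th root of unity rather than a sign. I would address this by carrying the argument with $n^{2}\in Z(E)$ in place of $n^{p}=1$, absorbing the resulting phase into the central factor $\zeta^{-j}$ just as above, and noting that for the real Pauli error groups actually used in this chapter the relevant gauge groups $N=\tau^{-1}(X)$ arising from classical codes are generated by commuting (self-orthogonal) vectors, so the problematic order-four central squares do not intervene; this lets the same correction-by-$Z(E)$ step close the proof uniformly.
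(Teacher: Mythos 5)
Your argument follows the same route as the paper's own proof: Schur's lemma identifies $\rho$ on $Z(E)$ with scalar multiplication by $p$th roots of unity, the linear character $\varphi$ is shown to take values in $\langle\omega\rangle$, and then $\varphi(n^{-1})\rho(n)$ is rewritten as $\rho(n\zeta^{-j})$ with $\zeta^{-j}\in Z(E)\le Z(N)$. You are in fact more explicit than the paper at two points it glosses over: you note that $\omega$ is a \emph{primitive} $p$th root because $\rho$ is faithful (which is what guarantees every scalar $\zeta^{l}$ is realized as $\rho(z)$ for some $z\in Z(E)$), and you write out the central-correction computation that the paper compresses into the single line ``$\varphi(n^{-1})I=\rho(z)$, hence $\varphi(n^{-1})\rho(n)\in\rho(Z(N))$.''

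Two of your supporting claims do not hold up, however. First, it is false that every extraspecial $p$-group of odd order has exponent $p$: for each odd $p$ there are two isomorphism types, one of exponent $p$ and one of exponent $p^{2}$, and for the latter your step $n^{p}=1$ fails. What is true, and what you should use, is that the specific error group $E=\langle X(a),Z(b)\rangle$ has exponent $p$ for odd $p$, since $(X(a)Z(b))^{p}=\omega^{\binom{p}{2}\tr(ab)}X(pa)Z(pb)=I$. Second, your resolution of the $p=2$ case is not justified. The group $Z(N)$ is the preimage of $Y=X\cap X^{\perp_s}$, and isotropy with respect to the \emph{symplectic} form says nothing about $a\cdot b$: for example $X=\{(0|0),(1|1)\}\subseteq\F_2^{2}$ satisfies $X=X^{\perp_s}=Y$, yet its preimage contains $XZ$, which squares to $-I$; then $\varphi(XZ)=\pm i\notin\{\pm1\}$ and $\varphi(n^{-1})\rho(n)=\mp i\,\rho(n)$ is not of the form $\rho(m)$ for any $m$ in the real error group, because $\rho(Z(E))$ only supplies the scalars $\pm1$. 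So the order-four elements genuinely do intervene and cannot be absorbed into $Z(E)$. To be fair, the paper's own proof has the identical gap --- it asserts that $\varphi$ takes values in the $p$th roots of unity merely because $Z(N)$ is ``an abelian subgroup of an extraspecial $p$-group,'' which tacitly assumes $Z(N)$ is elementary abelian --- so this is a defect of the lemma as stated rather than something your argument introduces; but your proposed repair does not close it.
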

\begin{proof}
First we note that the irreducibilty of $\rho$ implies that for any $z$ in $Z(E)$ we have 
$\rho(z)=  \omega I$ for some $\omega \in \C$ by Schur's lemma. The 
assumption that $E$ is an extraspecial $p$-group
forces $\omega \in \{ \zeta^k \mid 0\leq k<p \}$ where $\zeta=e^{j2\pi/p}$. 
This is because $|Z(E)|=p$ for extraspecial $p$-groups.
Secondly, we observe that $\varphi$ is an irreducible additive character of $Z(N)$ 
(an abelian subgroup of an extraspecial $p$-group) which implies that we must have $\varphi(n) = \zeta^{l}$ for some $0\leq l<p$, \cite{lidl97}.
Together these observations imply that we can assume 
$\varphi(n^{-1}) I= \zeta^{l} I =\rho(z)$  for some $0\leq l\le p$ and
$z\in Z(E)$. Since $Z(E) \le N$, it follows that $Z(E)\le Z(N)$ and $\varphi(n^{-1})\rho(n)$ 
is in $\rho(Z(N))$.
\end{proof}

Our goal is to use the stabilizer of $Q$ for encoding and as a first step
we will show that it can be computed from $Z(N)$.
The usefulness of such a projector is
that it obviates the need to know the character $\varphi$. Let $S \le \rho(E)$ be the
stabilizer of $Q$. Then we claim that $S$ is given as 
$$
S= \{\varphi(n^{-1})\rho(n) \mid n \in Z(N) \}.
$$
We claim that $S$ can be used for encoding the 
associated Clifford code. Then we will show how the encoding circuit of the
Clifford code is to be modified so that
we can encode the subsystem code derived from the Clifford code. 
\begin{theorem}\label{th:projClifford}
Let $Q$ be a Clifford code with the data $(E,\rho,N,\chi)$ and
$\varphi$ a constituent of the restriction of $\chi$ to $Z=Z(N)$. 
Let $E$ be an extraspecial $p$-group and $Z(E)\le N$ and 
\begin{eqnarray}
S=\left\{ \varphi(n^{-1}) \rho(n) \mid n\in Z(N) \right\} \quad \mbox{and}\quad
P=\frac{1}{|S|}\sum_{s\in S} s .\label{eq:stabProj}
\end{eqnarray}
Then $S$ is the stabilizer of $Q$ and $\text{Im } P =Q$.
\end{theorem}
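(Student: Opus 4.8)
The plan is to prove the statement by identifying the operator $P$ of~(\ref{eq:stabProj}) with the alternate Clifford projector $P'$ of~(\ref{eq:cliffZProj}), whose image is already known to be the Clifford code $Q$, and to verify along the way that $S$ is a (finite, abelian) group of unitaries whose joint $+1$ eigenspace is exactly $Q$. The whole argument is then a matter of group-theoretic bookkeeping on the sum defining $P$.

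First I would establish that $S$ is a group. Consider the map $\psi\colon Z(N)\to U(d)$, with $d=|E:Z(E)|^{1/2}$, defined by $\psi(n)=\varphi(n^{-1})\rho(n)$, so that $S=\psi(Z(N))$. Since $Z(N)$ is abelian, the character $\varphi$ is linear, i.e.\ $\varphi(mn)=\varphi(m)\varphi(n)$ for all $m,n\in Z(N)$. Combining this with $\rho(m)\rho(n)=\rho(mn)$ and the commutativity of $Z(N)$, one computes $\psi(m)\psi(n)=\varphi(m^{-1})\varphi(n^{-1})\rho(mn)=\varphi((mn)^{-1})\rho(mn)=\psi(mn)$, so $\psi$ is a homomorphism and $S=\psi(Z(N))$ is an abelian subgroup of the unitaries. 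By Lemma~\ref{lm:phivalues} its elements lie in $\rho(Z(N))$, and the same lemma together with the hypothesis $Z(E)\le N$ forces every $n\in Z(E)$ into $\ker\psi$, so $S$ contains no scalar other than the identity.

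Next I would run a counting argument to show $P=P'$. Writing $K=\ker\psi$, partition $Z(N)$ into the fibres of $\psi$, each of cardinality $|K|$; summing over these fibres gives $\sum_{n\in Z(N)}\varphi(n^{-1})\rho(n)=|K|\sum_{s\in S}s$. By the first isomorphism theorem $|Z(N)|=|K|\,|S|$, so dividing by $|Z(N)|$ turns~(\ref{eq:cliffZProj}) into $P'=\tfrac{1}{|S|}\sum_{s\in S}s=P$. Since $P'$ is the orthogonal projector with image $Q$ by~\cite[Theorem~6]{klappenecker033} (it is exactly the $Z(N)$-projector recalled in~(\ref{eq:cliffZProj})), we conclude that the image of $P$ equals $Q$. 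For the stabilizer claim, note that $S$ is a finite group of unitaries closed under adjoints (as $s^{\dagger}=s^{-1}\in S$), so $P$ is a self-adjoint idempotent, i.e.\ an orthogonal projector, and a reindexing $s\mapsto st$ of the defining sum shows $sP=P$ for every $s\in S$. Hence every element of $S$ fixes each vector of the image of $P$, while conversely any vector fixed by all of $S$ satisfies $Pv=v$ and so lies in that image; therefore $Q=\Fix(S)$, which is precisely the assertion that $S$ is the stabilizer of $Q$.

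The one genuinely delicate point is the passage from $Z(N)$ to $S$: the homomorphism property of $\psi$ rests on the linearity of $\varphi$ on the abelian group $Z(N)$, and the step $P=P'$ depends on the normalization $|Z(N)|=|K|\,|S|$. The extraspecial hypothesis and $Z(E)\le N$ enter only through Lemma~\ref{lm:phivalues}, which confines $S$ to $\rho(Z(N))$ and drives $Z(E)$ into the kernel of $\psi$; this rules out $P=0$ and lets us borrow the prior identification of the image of $P'$ with $Q$. Everything else reduces to routine reindexing of sums over the abelian group $S$.
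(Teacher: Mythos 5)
Your proposal is correct and follows essentially the same route as the paper: both arguments identify the averaged sum $P$ with the central projector $P'$ of~(\ref{eq:cliffZProj}) by a coset/fibre count over $Z(N)$ (the paper uses a transversal of $Z(E)$ in $Z(N)$; you use the fibres of the homomorphism $\psi$, which is the same computation packaged via the first isomorphism theorem), and then read off the stabilizer property. The only point to tighten is your last sentence: $Q=\Fix(S)$ says that $Q$ is the stabilizer code fixed by $S$, but it does not by itself say that $S$ equals the full group $\Stab(Q)$, which could a priori be larger; to close this you must invoke Lemma~\ref{th:closedsubgroup} (as the paper does), whose hypotheses --- $S$ abelian and $S\cap Z(G_n)=\{1\}$ --- you have in fact already verified, so the fix is a one-line citation rather than a new idea.
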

\begin{proof}
We will show this in a series of steps.
\begin{enumerate}[1)]
\item
First we will show that $S\leq \rho(Z)$. By Lemma~\ref{lm:phivalues} we know that
$\varphi(n^{-1})\rho(n)$ is in $\rho(Z)$, therefore $S\subseteq \rho(Z)$. For any two
elements $n_1,n_2\in Z$, we have $s_1=\varphi(n_1^{-1})\rho(n_1), s_2=\varphi(n_2^{-1})\rho(n_2) \in S$ and we can easily verify that  $s_1^{-1}s_2 = \varphi(n_1)\rho(n_1^{-1})$ $\varphi(n_2^{-1})\rho(n_2)=
\varphi(n_2^{-1}n_1)\rho(n_1^{-1}n_2) \in S$, as  $\rho(n_1^{-1}n_2)$ is in $\rho(Z)$. 
Hence $S \leq \rho(Z)$.
\item Now we show that $S$ fixes $Q$. Let $s\in S$ and $\ket{\psi}\in Q$. Then
$s=\varphi(n^{-1})\rho(n)$ for some $n\in Z$. The action of $s$ on $\ket{\psi}$
is given as $s\ket{\psi} =\varphi(n^{-1})\rho(n)\ket{\psi}= \varphi(n^{-1})\varphi(n)\ket{\psi}=\ket{\psi}$, in other words $S$ fixes $Q$. 
\item  Next, we show that $|S|=|Z|/|Z(E)|$.
If two elements $n_1$ and $n_2$ in $Z$ map to the same element in $S$, then 
$\varphi(n_1^{-1})\rho(n_1) = \varphi(n_2^{-1})\rho(n_2)$, that is 
$\rho(n_2) = \varphi(n_1^{-1}n_2)\rho(n_1)$. From Lemma~\ref{lm:phivalues} it follows that
$\rho(n_2)=\zeta^l \rho(n_1)$ for some $0\leq l<p$. 
Since $\rho(Z(E))=\{e^{j2\pi k/p } I\mid 0\leq k<p \}$, we must have $n_2=zn_1$ for some $z\in Z(E)$. 
Thus, $|S|=|Z|/|Z(E)|$.
\item
Let $T$ be a traversal of  $Z(E)$ in $Z$, then every element in $Z$
can be written as $zt$ for some $z\in Z(E)$ and $t\in T$. From step 3)
we can see that all elements in a coset of $Z(E)$ in $Z$ map to the same element
in $S$, therefore, 
$$S=\{\varphi(t^{-1})\rho(t) \mid t\in T \}.$$
Recall that a projector for $Q$ is given by  
\begin{eqnarray*}
P' &=& \frac{1}{|Z|}\sum_{n\in Z}\varphi(n^{-1})\rho(n) ,\\
&=& \frac{1}{|Z|}\sum_{t\in T}\sum_{z\in  Z(E)}\varphi((zt)^{-1})\rho(zt).
\end{eqnarray*}
But we know from step 3) that if $z\in Z(E)$, then $\varphi(n^{-1})\rho(n)  = \varphi((zn)^{-1})\rho(z n)$. 
So we can simplify $P'$ as 
\begin{eqnarray*}
P'&=&\frac{1}{|Z|}\sum_{t\in T}\sum_{z\in  Z(E)}\varphi(t^{-1})\rho(t),\\
&=&\frac{|Z(E)|}{|Z|}\sum_{t\in T}\varphi(t^{-1})\rho(t)\\
&=&\frac{1}{|S|}\sum_{s\in S} s = P.
\end{eqnarray*}
Thus the projector defined by $S$ is precisely the same as $P'$
and $P$ is also a projector for $Q$. 
%
\end{enumerate}
From step 3) it is clear that $S\cap Z(E) =\{ \mathbf{1}\}$ and by 
Lemma~\ref{th:closedsubgroup},
$S$ is a closed subgroup of $E$. By
Lemma~\ref{th:projection},
$\text{Im }P =Q$ is a stabilizer code. 
Hence $S$ is the stabilizer of $Q$.
\end{proof}

\begin{corollary}\label{co:projSubsys}
Let $Q$ be an $[[n,k,r,d]]$ Clifford subsystem code and $S$ its stabilizer. Let
\begin{eqnarray}
P = \frac{1}{|S|}\sum_{s\in S} s.\label{eq:subsysProj}
\end{eqnarray}
Then $P$ is a projector for the subsystem code {\i.e.} $Q = \text{Im } P$. 
\end{corollary}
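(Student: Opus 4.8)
The plan is to reduce this corollary directly to Theorem~\ref{th:projClifford} by observing that a Clifford subsystem code shares its codespace with the underlying Clifford code from which it is built. First I would recall that an $[[n,k,r,d]]$ Clifford subsystem code $Q$ arises from a Clifford code $C$ with data $(E,\rho,N,\chi)$, where $E$ is an extraspecial $p$-group with $Z(E)\le N$, and that by Theorem~\ref{th:third} this code decomposes as $C=A\otimes B$. The information is encoded into the subsystem $A$ while $B$ serves as the gauge subsystem; crucially, the subspace into which one encodes is the \emph{entire} tensor product $A\otimes B$, which is exactly the Clifford code $C$. Thus the codespace of $Q$ coincides with $C$ as a subspace of $\Hi$.

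Next I would invoke the fact that the stabilizer $S$ of the subsystem code is, by definition, the stabilizer of this underlying Clifford code $C$. By Theorem~\ref{th:projClifford}, taking $\varphi$ to be a constituent of the restriction of $\chi$ to $Z(N)$, this stabilizer can be written as $S=\{\varphi(n^{-1})\rho(n)\mid n\in Z(N)\}$, and the operator $P=\frac{1}{|S|}\sum_{s\in S} s$ is an orthogonal projector whose image is precisely $C$.

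Combining these two facts finishes the argument: $\text{Im}\,P = C = A\otimes B = Q$, so $P$ is a projector for the subsystem code as claimed, with no further computation required.

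The one point that requires care — and the main obstacle — is the identification of the two notions of codespace. One must verify that passing from the Clifford code to the subsystem code does not shrink the projected subspace: although information is stored only in $A$, no additional projection is applied to the co-subsystem $B$, so the projector onto the subsystem codespace is still the projector onto all of $A\otimes B$. Once this is made explicit, by appealing to the construction in Theorem~\ref{th:third} and the observation that $S$ stabilizes the whole of $A\otimes B$ (the errors of $S$ act trivially on $A$ and by the fixed scalar $\varphi$ on $B$), the corollary follows immediately from Theorem~\ref{th:projClifford}.
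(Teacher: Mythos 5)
Your proposal is correct and follows essentially the same route as the paper: it identifies the subsystem code with the underlying Clifford code $C=A\otimes B$ as a subspace of $\Hi$ (noting the construction uses an extraspecial $p$-group with $Z(E)\le N$) and then applies Theorem~\ref{th:projClifford} to conclude that $P$ projects onto that subspace. Your additional remark that no further projection is applied to the co-subsystem $B$ is exactly the observation the paper compresses into the phrase ``as subspaces the Clifford code and subsystem code are identical.''
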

\begin{proof}
By \cite[Theorem~4]{pre06}, we know that an $[[n,k,r,d]]$ Clifford subsystem code is derived from a
Clifford code with data $(E,\rho,N,\chi)$. This construction assumes that 
$E$ is an extraspecial $p$-group and $Z(E)\le N  \trianglelefteq E$. 
Since as subspaces the Clifford code and subsystem code are identical, 
by Theorem~\ref{th:projClifford} we conclude that
the projector defined from the stabilizer of the subspace is also a projector for the 
subsystem code. 
\end{proof}

Theorem~\ref{th:projClifford} shows that any Clifford code can be encoded using its
stabilizer. 
As to a subsystem code, while Corollary~\ref{co:projSubsys} shows that 
there exists a projector that can be defined from its stabilizer, it is not clear 
how to use it so that one respects the subsystem structure during encoding.
More precisely, how do we use the projector defined in Corollary~\ref{co:projSubsys} to encode
into the information carrying subsystem $A$ and not the gauge subsystem. This will
be the focus of the next section.

\section{Encoding Subsystem Codes}

For ease of  presentation and clarity henceforth we will focus on binary codes, 
though the results can be  extended to nonbinary alphabet using  methods 
similar to stabilizer codes, see \cite{grassl03}. 
Theorem~\ref{th:projClifford} shows that in order to encode Clifford codes we can use 
a projector derived from the underlying stabilizer to project onto the codespace. 
But in case of Clifford subsystem codes 
we know that $Q=A\otimes B$ and the information is to be actually encoded in $A$.
Hence, it is not sufficient to merely project onto $Q$, we must also show that 
we encode into $A$ when we encode using the projector defined in 
Corollary~\ref{co:projSubsys}.

Let us clarify what we mean by encoding the information in $A$
and not in $B$. Suppose that $P$ maps $\ket{0}$  to $\ket{\psi}_A\otimes \ket{0}_B$
and $\ket{1}$ to $\ket{\psi}_A\otimes \ket{1}_B$. Then the
information is actually encoded into $B$. Since the gauge group acts nontrivially on $B$,
this particular encoding does not protect information. 
Of course a subsystem code should not encode (only) into $B$, but we have to show that the 
projector defined by $P$ in equation~\ref{eq:subsysProj} does not do that.

We need the following result on the structure of the gauge group and the encoded operators
of a subsystem code. Poulin \cite{poulin05} proved a useful result on the structure of the
gauge group and the encoded operators of the subsystem code. 
But first a little notation.
A basis for $\mathcal{P}_n$ is $X_i,Z_i$, $1\leq i\leq n$, where $X_i$ and $Z_i$
are given as 
$$
X_i =\bigotimes_{j=1}^n X^{\delta_{ij}} \quad \mbox{ and } \quad Z_i =\bigotimes_{j=1}^n Z^{\delta_{ij}}.
$$
They satisfy the relations  $[X_i,X_j]=0=[Z_i,Z_j]$; $[X_i,Z_j]=2\delta_{ij}X_iZ_j$. However,
we can choose other generating sets $\{x_i,z_i\mid 1\leq i \leq n \}$ for $\mathcal{P}_n$
that satisfy similar commutation relations {\em i.e.}, 
$[x_i,x_j]=0=[z_i,z_j]$ and $[x_i,z_j]=2\delta_{ij}x_iz_j$. These operators
may act nontrivially on many qubits. 
Given an $[[n,k,r,d]]$ code we could view the state space of the physical 
$n$ qubits as that of $n$ virtual qubits on which these $x_i,z_i$ act as 
$X$ and $Z$ operators. In particular $k$ of these virtual qubits are the 
logical qubits and $r$ of them  gauge qubits. The usefulness of these
operators is that we can specify the structure of the stabilizer, the
gauge group and the encoded operators. The following lemma makes
this specification precise. 
\begin{lemma}\label{lm:struct}
Let $Q$ be an $[[n,k,r,d]]_2$ subsystem code with gauge group, $G$ and stabilizer $S$.
Denote the encoded operators by $\ol{X}_i,\ol{Z}_i$, $1\leq i\leq k$, where 
$[\ol{X}_i,\ol{X}_j]=0=[\ol{Z}_i,\ol{Z}_j]; [\ol{X}_i,\ol{Z}_j ]=2  \delta_{ij}\ol{X}_i \ol{Z}_j$. Then 
there exist operators $\{ x_i,z_i \in \mathcal{P}_n \mid 1\leq i\leq n\}$ such that 
\begin{compactenum}[i)]
\item $S= \langle z_1, z_2, \ldots, z_s \rangle$, 
\item $G = \langle S, z_{s+1},x_{s+1},\ldots, z_{s+r},x_{s+r}, Z({\mathcal{P}_n}) \rangle$, 
\item $C_{\mathcal{P}_n}(S) = \langle G, \ol{X}_1,\ol{Z}_1,\ldots,  \ldots, \ol{X}_k, \ol{Z}_k \rangle$,
\item $\ol{X}_i =x_{s+r+i}$ and $\ol{Z}_i=z_{s+r+i}$, $1\leq i\leq k$,
\end{compactenum}
where 
$[z_i,z_j]=[x_i,x_j]=0; [x_i,z_i]=2\delta_{ij}x_iz_i$. Further, $S$ defines an $[[n,k+r]]$ stabilizer
code encoding into the same space as the subsystem code and its encoded operators are given
by $\{x_{s+1},z_{s+1},\ldots, x_{s+r},z_{s+r},  \ol{X}_1,\ol{Z}_1, \ldots, \ol{X}_k,\ol{Z}_k\}$
\end{lemma}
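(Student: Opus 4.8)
The plan is to build the generating set $\{x_i, z_i\}$ for $\mathcal{P}_n$ incrementally, starting from the stabilizer and working outward through the gauge group to the full centralizer, exactly mirroring the nested structure $S \le G \le C_{\mathcal{P}_n}(S)$. The key tool is the standard fact that the symplectic form $[x,y]$ on $\overline{\mathcal{P}_n} = \mathcal{P}_n/Z(\mathcal{P}_n) \cong \F_2^{2n}$ is nondegenerate (this is the symplectic structure already exploited in Lemma~\ref{l:inertia} and the extraspecial-group discussion of Chapter~\ref{ch:subsys1}). I would work throughout in $\overline{\mathcal{P}_n}$, where commuting operators correspond to symplectically orthogonal vectors, and then lift back to $\mathcal{P}_n$ at the end.

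First I would recall that $S$ is abelian with $S \cap Z(\mathcal{P}_n) = \{1\}$ (Lemma~\ref{th:stab}), so its image $\overline{S}$ is a totally isotropic subspace of $\F_2^{2n}$ of dimension $s = n-k-r$; choose symplectically independent generators and call their preimages $z_1,\dots,z_s$. Next, since the gauge group $G$ has $\overline{G} = \overline{S} + \overline{S}'$ where $\overline{S}' $ accounts for the $r$ hyperbolic pairs (by Theorem~\ref{th:four}, $|G/S Z(\mathcal{P}_n)| = q^{2r}$ in the binary case $q=2$), I would extend the isotropic set by choosing $r$ hyperbolic pairs $(z_{s+i}, x_{s+i})$, $1 \le i \le r$, lying in $\overline{G}$ and symplectically orthogonal to all earlier vectors except for the required $[x_{s+i}, z_{s+i}] = 2 x_{s+i}z_{s+i}$. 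This is precisely the iterative hyperbolic-pair extraction already carried out in Lemma~\ref{lm:xzBasis}, so I would invoke that construction rather than redo it. Finally, since $C_{\mathcal{P}_n}(S)$ corresponds to $\overline{S}^{\perp_s}$ of dimension $2n - s = n+k+r$, the remaining $k$ hyperbolic pairs $(z_{s+r+i}, x_{s+r+i})$ complete a symplectic basis of $\F_2^{2n}$; by definition these represent the logical operators, giving iv).

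With the symplectic basis in hand, the inclusions i)--iii) follow by matching dimensions: $\overline{S} = \langle \overline{z}_1,\dots,\overline{z}_s\rangle$, the gauge group adds the $r$ middle pairs, and the centralizer adds the $k$ outer pairs, each step verified by counting against the known cardinalities $|S| = q^s$, $|G| = p\,q^{s+2r}$, and $|C_{\mathcal{P}_n}(S)| = p\,q^{s+2r+2k}$. The commutation relations in each item are automatic from the symplectic orthogonality of the basis. The last sentence is then immediate: the subgroup $S = \langle z_1,\dots,z_s\rangle$ is abelian, trivially intersects the center, and fixes the same subspace $Q$ (since the projector of Corollary~\ref{co:projSubsys} is built from $S$); its centralizer modulo $S Z(\mathcal{P}_n)$ has dimension $2(r+k)$, and the $2(r+k)$ operators $x_{s+1},z_{s+1},\dots,x_{s+r},z_{s+r},\ol{X}_1,\ol{Z}_1,\dots,\ol{X}_k,\ol{Z}_k$ furnish a valid set of encoded operators for this $[[n,k+r]]$ stabilizer code.

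The main obstacle I anticipate is not any single calculation but rather the bookkeeping of \emph{ordering and orthogonality} during the extension: one must ensure that when a new pair is adjoined it is symplectically orthogonal to \emph{all} previously chosen basis vectors except its own partner, which requires the Gram--Schmidt-style symplectic reduction to be applied carefully (scaling each $x_{s+i}$ so that $[z_{s+i}, x_{s+i}] = 2 z_{s+i}x_{s+i}$ and subtracting off unwanted overlaps). Since Lemma~\ref{lm:xzBasis} already establishes that such a basis exists for any $\F_q$-linear code, the cleanest route is to apply that lemma to the code $C$ associated with the gauge group and then read off the three-tiered structure, so the genuinely new content here is merely the identification of which basis vectors play the roles of stabilizer, gauge, and logical generators.
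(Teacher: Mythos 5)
Your proof is essentially correct, but it is genuinely more self-contained than what the paper does: the paper simply cites Poulin for the structural items i)--iv) and only argues the final assertion (that $S$ defines an $[[n,k+r]]$ stabilizer code whose encoded operators are the gauge pairs together with the $\ol{X}_i,\ol{Z}_i$), invoking Corollary~\ref{co:projSubsys} exactly as you do. Your route---passing to $\overline{\mathcal{P}_n}\cong\F_2^{2n}$ with its nondegenerate symplectic form, taking $\overline{S}$ as a totally isotropic subspace of dimension $s=n-k-r$, and extracting hyperbolic pairs first inside $\overline{G}$ and then inside $\overline{S}^{\perp_s}=\overline{C_{\mathcal{P}_n}(S)}$---is the standard proof of Poulin's structure theorem, and it reuses machinery the dissertation already contains (Lemma~\ref{lm:xzBasis} and the symplectic-form discussion around Lemma~\ref{l:inertia}), so it buys a self-contained argument at the cost of a page of bookkeeping. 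One small correction: the $s+2r+2k=n+k+r$ vectors you construct span only $\overline{S}^{\perp_s}$, which has dimension $2n-s$, not all of $\F_2^{2n}$; since the lemma asserts the existence of a full set $\{x_i,z_i\mid 1\le i\le n\}$, you must at the end adjoin hyperbolic partners $x_1,\dots,x_s$ for $z_1,\dots,z_s$ (these necessarily lie outside $C_{\mathcal{P}_n}(S)$ and exist by nondegeneracy of the form). This omission does not affect items i)--iv), which never reference $x_1,\dots,x_s$, but it is needed to match the statement as written.
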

\begin{proof}
See \cite{poulin05} for proof on the structure of the groups.
Let $Q=A\otimes B$, then $\dim A =2^k$ and $\dim B= 2^r$.
From Corollary~\ref{co:projSubsys} we know that the projector defined by $S$  also 
projects onto $Q$ (which is $2^{k+r}$-dimensional) and therefore it defines an 
$[[n,k+r]]$ stabilizer code. 
From the definition of the operators $x_i,z_i$  and  $\ol{X}_{i}, \ol{Z}_i$ and the fact that
$$C_{\mc{P}_n}(S) = \langle S,x_{s+1},z_{s+1},\ldots, x_{s+r},z_{s+r} \ol{X}_1,\ol{Z}_1,\ldots, \ol{X}_k,\ol{Z}_k, Z({\mathcal{P}_n}) \rangle $$ 
we see that 
$x_i,z_i$, for $ s+1\leq i\leq r$ act like encoded operators on the gauge qubits, 
while  $\ol{X}_{i}, \ol{Z}_i$ continue to be the encoded operators on the information qubits.
Together they exhaust the set of $2(k+r)$ encoded operators of the $[[n,k+r]]$ stabilizer code. 
\end{proof}

We observe that the logical operators of the subsystem code are also logical operators for the
underlying stabilizer code. so if the 
stabilizer code and the subsystem code have the same logical all zero state, then 
Lemma~\ref{lm:struct} suggests that in order to encode the subsystem code, we can treat it as
stabilizer code and use the same techniques to encode. If the logical all zero code word was the same for
both the codes, then  because they have the same logical operators we can encode any given 
input to the same logical state in both cases. Using linearity we could then encode any
arbitrary state. Encoding  the all zero state seems to be the key. Now, even in the case
of the stabilizer codes, there is no unique all zero logical state. There are many possible
choices. The reader can refer to the appendix for examples. Given the 
encoded operators it is easy to define the logical all zero state as the following definition
shows: 

\begin{defn}\label{def:logZero}
A logical all zero state of an $[[n,k,r,d]]$ subsystem code is any state that is
fixed by its stabilizer and $k$ logical $Z$ operators. 
\end{defn}
This definition is valid in case of stabilizer codes also. This definition might appear a little circular. After all, we seem to have assumed the definition of the logical $Z$ operators. Actually, 
this is a legitimate definition because, depending on the choice
of our logical operators, we can have many choices of the  logical all zero state. In case of the
subsystem codes, this definition implies that the logical all zero state is fixed by $n-r$ operators,
consequently it can be any state in that $2^r$-dimensional subspace. If we consider the $[[n,k+r]]$
stabilizer code that is associated to the subsystem code, then its logical zero is additionally
fixed by $r$ more operators. So any logical zero of the stabilizer code is also a logical all zero  state
of the subsystem code. It follows that if we know how to encode the stabilizer code's logical all zero,
we know how to encode the subsystem code. We are
interested in more than merely encoding the subsystem code of course. We also want  to leverage
the gauge qubits to simplify and/or make the encoding process more robust. 
Perhaps a few examples will clarify the ideas.

\subsection{Illustrative Examples}
Consider the following $[[4,1,1,2]]_2$ subsystem code, with the gauge group $G$,
stabilizer $S$ and encoded operators given by $L$.
\begin{eqnarray*}
S&=&\left[ \begin{array}{cccc}X&X&X&X\\Z&Z&Z&Z 
\end{array}\right] =\left[ \begin{array}{c}z_1\\z_2\end{array}\right],\\
G&=&\left[ \begin{array}{cccc} 
X&X&X&X\\Z&Z&Z&Z \\  \hline
I&X&I&X \\I&I&Z&Z
\end{array}\right] = \left[ \begin{array}{c}  z_1\\z_2\\\hline x_3\\z_3\end{array}\right].
\end{eqnarray*} 
The encoded operators of this code are given by 
\begin{eqnarray*}
L&=&\left[ \begin{array}{cccc} I&I&X&X \\I&Z&I&Z\end{array}\right]=\left[ \begin{array}{c}\ol{X}_1\\\ol{Z}_1\end{array}\right].
\end{eqnarray*} 
The associated $[[4,2]]$ stabilizer code has the following encoded operators.
\begin{eqnarray*}
T&=&\left[ \begin{array}{cccc}  I&X&I&X \\ I&I&X&X\\
 I&I&Z&Z\\ I&Z&I&Z 
\end{array}\right]=\left[ \begin{array}{c}x_3\\\ol{X}_1 \\
z_3 \\\ol{Z}_1 \end{array}\right].
\end{eqnarray*} 
It will be observed that the encoded $X$ operators of $[[4,2]]$ are in a
form convenient for encoding. 
We treat the $[[4,1,1,2]]$ code as $[[4,2]]$ code and encode it as in 
Figure~\ref{fig:stdFormEncEx1}. The gauge qubits are permitted to be 
in any state. 
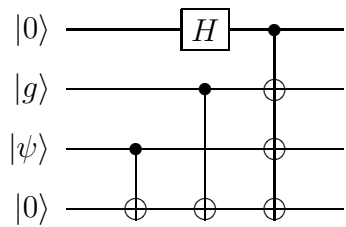
\begin{figure}[ht]
\[
\Qcircuit @C=.7em @R=.4em @! {
\lstick{\ket{0}} & \qw & \gate{H}& \ctrl{3} & \qw \\
\lstick{\ket{g}} & \qw & \ctrl{2} & \targ & \qw & \\
\lstick{\ket{\psi}} & \ctrl{1} & \qw & \targ & \qw & \\
\lstick{\ket{0}} & \targ & \targ & \targ & \qw & \\
}
\]
\caption{Encoding the $[[4,1,1,2]]$ code (Gauge qubits can be in any state)}\label{fig:stdFormEncEx1}
\end{figure}

Assuming $g=a\ket{0}+b\ket{1}$, the logical states up to a normalizing constant are
\begin{eqnarray*}
\ket{\ol{0}} &=&a(\ket{0000}+\ket{1111})+b(\ket{0101}+\ket{1010}),\\
\ket{\ol{1}} &=&a(\ket{0011}+\ket{1100})+b(\ket{0110}+\ket{1001}).
\end{eqnarray*}
It can be easily verified that $S$ stabilizes the above state and while 
the gauge group acts in a nontrivial fashion, the resulting states are still
orthogonal. 
In this example we have encoded as if we were encoding the $[[4,2]]$ code. Prior to 
encoding the gauge qubits can be identified with physical qubits. After the encoding
however such a correspondence between the physical qubits and gauge qubits does not 
necessarily exist in a nontrivial subsystem code. Since the encoded operators
of the subsystem code are also encoded operators for the stabilizer code, we are
guaranteed that the information is not encoded into the gauge subsystem.

As the state of gauge qubits is of no consequence, we can initialize them to any state.
Alternatively, if we initialized them to zero, we can simplify the circuit as shown in 
Figure~\ref{fig:stdFormEncEx1Opt}.
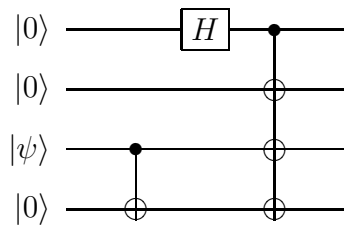
\begin{figure}[ht]
\[
\Qcircuit @C=.7em @R=.4em @! {
\lstick{\ket{0}} & \qw & \gate{H}& \ctrl{3} & \qw \\
\lstick{\ket{0}} & \qw & \qw & \targ & \qw & \\
\lstick{\ket{\psi}} & \ctrl{1} & \qw & \targ & \qw & \\
\lstick{\ket{0}} & \targ & \qw & \targ & \qw & \\
}
\]
\caption{Encoding the $[[4,1,1,2]]$ code (Gauge qubits initialized to zero)}\label{fig:stdFormEncEx1Opt}
\end{figure}

The encoded states in this case are (again, the normalization factors are ignored)
\begin{eqnarray*}
\ket{\ol{0}} &=&\ket{0000}+\ket{1111},\\
\ket{\ol{1}} &=&\ket{0011}+\ket{1100}.
\end{eqnarray*}
The benefit with respect to the previous version is that at the cost of initializing
the gauge qubits, we have been able to get rid of all the encoded operators associated
with them. This seems to be a better option than randomly initializing the gauge qubits.
Because it is certainly easier to prepare them in a known state like $\ket{0}$, rather
than implement a series of controlled gates depending on the encoded operators associated
with those qubits. 

At this point we might ask if it is possible to get both the benefits of random initialization
of the gauge qubits as well as avoid implementing the encoded operators associated with them.
To answer this question let us look a little more closely at the previous two encoding circuits for the
subsystem codes. We can see from them that it will not work in general. Let us see why. If 
we initialize the gauge qubit to $\ket{1}$ instead of $\ket{0}$ in the encoding given in 
Figure~\ref{fig:stdFormEncEx1Opt}, then  the  encoded state is 
\begin{eqnarray*}
\ket{\ol{0}} &=&\ket{0100}+\ket{1011},\\
\ket{\ol{1}} &=&\ket{0111}+\ket{1000}.
\end{eqnarray*}
Both these states are not stabilized by $S$, indicating that these states are not in the
code space. 

In general, an encoding circuit where it is simultaneously possible 
initialize the gauge qubits to random states and also avoid the encoded operators is likely
to be having more complex primary generators. For instance, let us consider the 
following $[[4,1,1,2]]$ subsystem code:
\begin{eqnarray*}
S&=&\left[ \begin{array}{cccc}X&Z&Z&X\\Z&X&X&Z 
\end{array}\right] =\left[ \begin{array}{c}z_1\\z_2\end{array}\right],\\
G&=&\left[ \begin{array}{cccc} 
X&Z&Z&X\\Z&X&X&Z \\  \hline
Z&I&X&I \\I&Z&Z&I
\end{array}\right] = \left[ \begin{array}{c}  z_1\\z_2\\\hline x_3\\z_3\end{array}\right].
\end{eqnarray*} 
The encoded operators of this code are given by 
\begin{eqnarray*}
L&=&\left[ \begin{array}{cccc} I&Z&I&X \\Z&I&I&Z\end{array}\right]=\left[ \begin{array}{c}\ol{X}_1\\\ol{Z}_1\end{array}\right].
\end{eqnarray*} 
The associated $[[4,2]]$ stabilizer code has the following encoded operators.
\begin{eqnarray*}
T&=&\left[ \begin{array}{cccc}  Z&I&X&I \\
I&Z&I&X \\
I&Z&Z&I\\
Z&I&I&Z
\end{array}\right]=\left[ \begin{array}{c}x_3\\\ol{X}_1 \\
z_3 \\\ol{Z}_1 \end{array}\right].
\end{eqnarray*} 
The encoding circuit for this code is given in Figure~\ref{fig:enc4112rand}.
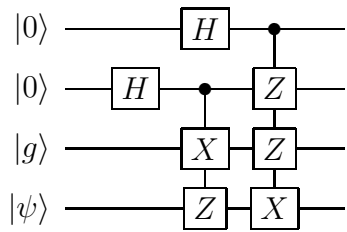
\begin{figure}[htb]
\[
\Qcircuit @C=.7em @R=.4em @! {
\lstick{\ket{0}} & \qw & \gate{H}& \ctrl{1} & \qw \\
\lstick{\ket{0}} & \gate{H} & \ctrl{1} & \gate{Z}\qwx[1] & \qw & \\
\lstick{\ket{g}} & \qw & \gate{X}\qwx[1] & \gate{Z} \qwx[1]& \qw & \\
\lstick{\ket{\psi}} & \qw & \gate{Z} & \gate{X} &\qw \\
}
\]
\caption{Encoding $[[4,1,1,2]]$ code (Encoded operators for the gauge qubits are trivial and 
gauge qubits can be initialized to random states)}\label{fig:enc4112rand}
\end{figure}


In this particular case, the gauge qubits (as well as the information qubits)
do not require any additional encoding circuitry. In this case we can 
initialize the gauge qubits to any state we want.
But, the reader would have observed we did not altogether end up with 
a simpler circuit. The primary generators are two as against one and the
complexity of the encoded operators has been shifted to them.
So even though we were able to  get rid of the encoded operator on the gauge qubit 
and also get the benefit of initializing it to a random state, this is still more 
complex compared to either of encoders in Figures~\ref{fig:stdFormEncEx1} and \ref{fig:stdFormEncEx1Opt}.
Our contention is that it is better to initialize
the gague qubits to zero state and not implement the encoded operators associated
to them. 

\subsection{Encoding Subsystem Codes by Standard Form Method}
The previous two examples might lead us to conclude that we can take the stabilizer
of the given subsystem code and form the encoded operators by reducing the stablizer
to its standard form and encode as if it were a stabilizer code. However, there are certain
subtle points to be kept in mind.
When we form the encoded operators we get $k+r$
encoded operators; we cannot from the stabilizer alone conclude which are the encoded
operators on the information qubits and which on the gauge qubits. Put differently,
these operators belong to the space $C_{\mc{P}_n}(S)\setminus S = G C_{\mc{P}_n}(G) \setminus S Z(\mc{P}_n)$. It is not guaranteed that they are entirely in $C_{\mc{P}_n}(G)$ {\em i.e.}, we cannot say if they act as encoded operators on the logical qubits. This implies that in general all these operators act nontrivially
on both $A$ and $B$. Consequently, we must be careful in choosing the encoded operators and
the gauge group must be taken into account. We give two slightly different methods for 
encoding subsystem codes. The difference between the two methods is subtle. Both methods
require the gauge qubits to be initialized to zero. In the second method (see Algorithm~\ref{alg:subsysEncOpt}) however,
we can avoid the encoded operators associated to them. Under certain circumstances, we can also 
permit initialization to random states.

\begin{algorithm}[h!]
\caption{{\ensuremath{\mbox{\scshape Encoding subsystem codes -- Standard form method 1}}}
}\label{alg:subsysEnc}
\begin{algorithmic}[1]
\REQUIRE Gauge group, $G=\langle S, x_{s+1},z_{s+1},\ldots, x_{s+r},z_{s+r}, \pm I \rangle$
 and stabilizer, $S =\langle z_1,\ldots, z_{n-k-r}\rangle$ of the $[[n,k,r,d]]$ subsystem code.

\ENSURE $[x_i,x_j] =[z_i,z_j] =0$; $[x_i,z_j ]=2x_iz_i \delta_{ij}$
\medskip
\STATE Form $S_A= \langle S, z_{s+1}, \ldots, z_{s+r}\rangle $, where $s=n-k-r$
\STATE Compute the standard form of $S_A$ as per Lemma~\ref{lm:stabStdForm}
$$S_A =_{\pi} \left[\begin{array}{ccc|ccc}
I_{s'}& A_1 & A_2 & B & 0 & C\\
0 & 0 & 0& D&I_{s+r-s'} & E
\end{array} \right] $$

\STATE Compute the encoded operators $\ol{X}_1,\ldots, \ol{X}_k$ as
$$
\left[\begin{array}{c}
\ol{Z}\\
\ol{X} 
\end{array}\right]
=_{\pi}\left[\begin{array}{ccc|ccc}
0 & 0 &0 &A_2^t&0 &I_{k}\\
\hline
0 & E^t &I_{k}&C^t&0 &0 
\end{array}\right]
$$
\STATE Encode using the primary generators of $S_A$ and $\ol{X}_i$ as encoded operators, 
see Lemma~\ref{lm:implement}; all the other $(n-k)$ qubits are initialized to $\ket{0}$.
\end{algorithmic}
\end{algorithm}

\textbf{Correctness of Algorithm~\ref{alg:subsysEnc}.}
Since stabilizer $S_A \ge S$, the space stabilized by $S_A$ is  a subspace of the $A\otimes B$,
the subspace stabilized by $S$. As $|S_A|/|S|=2^r$, the dimension of the subspace stabilized by $S_A$ is
$2^{k+r}/2^r=2^k$. 
Additionally, the generators $z_{s+1},\ldots, z_{s+r}$ act trivially on 
$A$. The encoded operators as computed in the algorithm act nontrivially on $A$ and give 
$2^k$ orthogonal states; thus we are assured that the information is encoded into $A$. 

Let us encode the $[[9,1,4,3]]$ Bacon-Shor code using the method just proposed. The stabilizer
and the gauge group are given by  
\begin{eqnarray*}
S &= &\left[ \begin{array}{ccc|ccc|ccc}
X&X&X&I&I&I&X&X&X\\
I&I&I&X&X&X&X&X&X\\
Z&I&Z&Z&I&Z&Z&I&Z\\
I&Z&Z&I&Z&Z&I&Z&Z
\end{array}\right],
\end{eqnarray*}
\begin{eqnarray*}
G &= &\left[ \begin{array}{ccc|ccc|ccc}
X&X&X&I&I&I&X&X&X\\
I&I&I&X&X&X&X&X&X\\
Z&I&Z&Z&I&Z&Z&I&Z\\
I&Z&Z&I&Z&Z&I&Z&Z\\ \hline 
I&X&I&I&X&I&I&I&I\\
I&I&X&I&I&X&I&I&I\\
I&I&I&I&I&X&I&I&X\\
X&X&X&X&X&X&I&I&I\\\hline 
Z&I&Z&I&I&I&I&I&I\\
I&I&I&Z&I&Z&I&I&I\\
I&Z&Z&I&I&I&I&I&I\\
I&I&I&I&Z&Z&I&I&I
 \end{array}\right]
 =\left[ \begin{array}{c}
S\\\hline G_x\\ \hline G_z\end{array}\right].
\end{eqnarray*}
Let us form $S_A$ by augmenting  $S$ with $G_z$. Then
\begin{eqnarray*}
S_A&=&\left[ \begin{array}{ccc|ccc|ccc}
X&X&X&I&I&I&X&X&X\\
I&I&I&X&X&X&X&X&X\\
Z&I&Z&Z&I&Z&Z&I&Z\\
I&Z&Z&I&Z&Z&I&Z&Z\\ \hline 
Z&I&Z&I&I&I&I&I&I\\
I&I&I&Z&I&Z&I&I&I\\
I&Z&Z&I&I&I&I&I&I\\
I&I&I&I&Z&Z&I&I&I
\end{array}\right].
\end{eqnarray*}
The encoded $X$ and $Z$ operators are $X_7X_8X_9$ and $Z_1Z_4Z_7$, respectively. 
After putting $S_A$ in the standard form, and encoder for this code is given 
in Figure~\ref{fig:shorCodeEnc}.
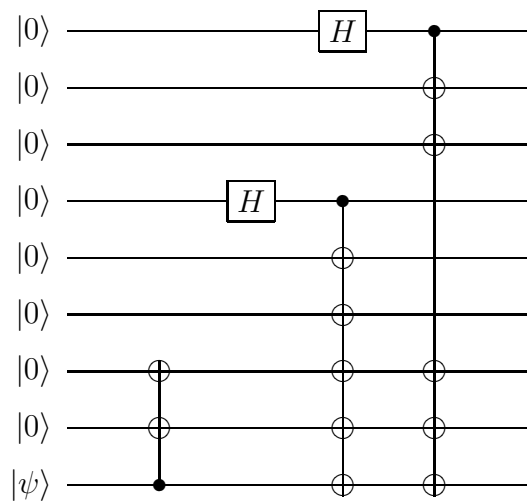
\begin{figure}[htb]
\[
\Qcircuit @C=1.4em @R=0.3em @!{
\lstick{\ket{0}}&\qw&\qw&\gate{H} &\ctrl{8}&\qw\\
\lstick{\ket{0}}&\qw&\qw&\qw &\targ&\qw\\
\lstick{\ket{0}}&\qw&\qw&\qw &\targ&\qw\\
\lstick{\ket{0}}&\qw&\gate{H} &\ctrl{5}&\qw&\qw\\
\lstick{\ket{0}}&\qw&\qw &\targ&\qw&\qw\\
\lstick{\ket{0}}&\qw&\qw &\targ&\qw&\qw\\
\lstick{\ket{0}}&\targ&\qw &\targ&\targ&\qw\\
\lstick{\ket{0}}&\targ&\qw&\targ&\targ&\qw\\
\lstick{\ket{\psi}}&\ctrl{-2}&\qw&\targ&\targ&\qw\\
}
\]
\caption{Encoder for the $[[9,1,4,3]]$ code. This is also an encoder for the
$[[9,1,3]]$ code.}\label{fig:shorCodeEnc}
\end{figure}

If on the other hand we had formed $S_A$ by adding $G_x$ instead, then 
$S_A$ would have been 
\begin{eqnarray*}
S_A=\left[ \begin{array}{ccc|ccc|ccc}
X&I&I&I&I&I&X&I&I\\
I&X&I&I&X&I&I&X&I\\
I&I&X&I&I&X&I&I&X\\
I&I&I&X&I&I&X&I&I\\
I&I&I&I&X&I&I&X&I\\
I&I&I&I&I&X&I&I&X\\
Z&I&Z&Z&I&Z&Z&I&Z\\
I&Z&Z&I&Z&Z&I&Z&Z\\ 
\end{array}\right].
\end{eqnarray*}
The encoded operators remain the same. In this case the encoding circuit is given 
in Figure~\ref{fig:shorCodeEncFewer}.
\begin{figure}[htb]
\[
\Qcircuit @C=1.4em @R=0.3em @!{
\lstick{\ket{0}}&\qw&\qw&\qw&\qw&\qw &\qw&\gate{H}&\ctrl{6}&\qw\\
\lstick{\ket{0}}&\qw&\qw&\qw&\qw&\qw &\gate{H}&\ctrl{6}&\qw&\qw\\
\lstick{\ket{0}}&\qw&\qw&\qw&\qw&\gate{H}&\ctrl{6}&\qw&\qw&\qw\\
\lstick{\ket{0}}&\qw&\qw&\qw&\gate{H} &\ctrl{3}&\qw&\qw&\qw&\qw\\
\lstick{\ket{0}}&\qw&\qw &\gate{H}&\ctrl{3}&\qw&\qw&\qw&\qw&\qw\\
\lstick{\ket{0}}&\qw&\gate{H} &\ctrl{3}&\qw&\qw&\qw&\qw&\qw&\qw\\
\lstick{\ket{0}}&\targ&\qw &\qw&\qw&\targ&\qw&\qw&\targ&\qw\\
\lstick{\ket{0}}&\targ&\qw&\qw&\targ&\qw&\qw&\targ&\qw&\qw\\
\lstick{\ket{\psi}}&\ctrl{-2}&\qw&\targ&\qw&\qw&\targ&\qw&\qw&\qw\\
}
\]
\caption{Encoder for the $[[9,1,4,3]]$ code with fewer CNOT gates.}\label{fig:shorCodeEncFewer}
\end{figure}
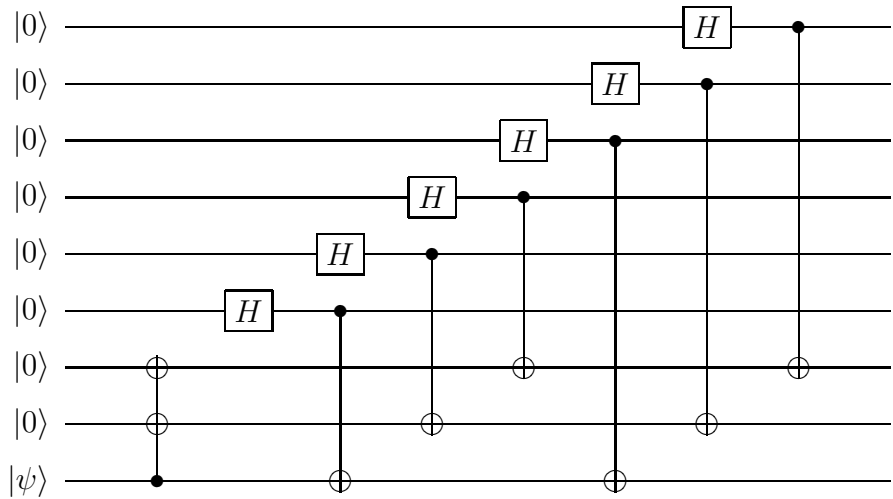
This circuit has fewer CNOT gates, though the number of single qubit gates has
increased. Since we expect the implementation of the CNOT gate to be
more complex than the $H$ gate, this might be a better choice. In any case,
this demonstrates that by exploiting the gauge qubits one can find ways
to reduce the complexity of encoding circuit.

The gauge qubits provide a great degree of freedom in encoding.  We consider the following
variant on standard form encoding, where we try to minimize the 
the number of primary generators. This is not guaranteed to reduce
the overall complexity, since that is determined by both the primary generators
and the encoded operators. Fewer primary generators might usually imply encoded operators 
with larger complexity. In fact we have already seen, that in the case of 
$[[9,1,4,3]]_2$ code that a larger number of primary generators does not 
necessarily imply higher complexity.
However, it has the potential for lower complexity.

\begin{algorithm}[h!]
\caption{{\ensuremath{\mbox{\scshape Encoding subsystem codes -- Standard form method 2}}}}\label{alg:subsysEncOpt}
\begin{algorithmic}[1]
\REQUIRE Gauge group, $G=\langle S, x_{s+1},z_{s+1},\ldots, x_{s+r},z_{s+r}, \pm I \rangle$
 and stabilizer, $S =\langle z_1,\ldots, z_{n-k-r}\rangle$ of the $[[n,k,r,d]]$ subsystem code.

\ENSURE $[x_i,x_j] =[z_i,z_j] =0$; $[x_i,z_j ]=2x_iz_i \delta_{ij}$
\medskip
\STATE Compute the standard form of $S$ as per Lemma~\ref{lm:stabStdForm}
$$S =_{\pi_1} \left[\begin{array}{ccc|ccc}
I_{s'}& A_1 & A_2 & B & 0 & C\\
0 & 0 & 0& D&I_{s-s'} & E
\end{array} \right] $$

\STATE Form $S_A= \langle S, z_{s+1}, \ldots, z_{s+r}\rangle $, where $s=n-k-r$
\STATE Compute the standard form of $S_A$ as per Lemma~\ref{lm:stabStdForm}
$$S_A =_{\pi_2} \left[\begin{array}{ccc|ccc}
I_{l}& F_1 & F_2 & G_1 & 0 &G_2 \\
0 & 0 & 0& D'&I_{s+r-l} & H
\end{array} \right] $$

\STATE Compute the encoded operators $\ol{X}_1,\ldots, \ol{X}_k$ as
$$
\left[\begin{array}{c}
\ol{Z}\\
\ol{X} 
\end{array}\right]
=_{\pi_2}\left[\begin{array}{ccc|ccc}
0 & 0 &0 &F_2^t&0 &I_{k}\\
\hline
0 & H^t &I_{k}&G_2^t&0 &0 
\end{array}\right]
$$

\STATE Encode using the primary generators of $S$ and $\ol{X}_i$ as encoded operators,
accounting for $\pi_1$ and $\pi_2$, see Lemma~\ref{lm:implement}; all the other $(n-k)$ qubits are initialized to $\ket{0}$.
\end{algorithmic}
\end{algorithm}

The main difference in the second method comes in lines 1 and 5. We encode using the 
primary generators of the stabilizer of the subsystem code instead of the augmented
stabilizer. The encoded operators however remain the same as before. 

\textbf{Correctness of Algorithm~\ref{alg:subsysEncOpt}.}
The correctness of this method lies in the observation we made earlier 
(see discussion following Definition~\ref{def:logZero}), that any logical all zero
state of the stabilizer code is also a logical all zero of the subsystem code and the fact that
both share the encoded operators on the encoded qubits.

The encoded operators are given modulo the elements of
the gauge group as in Algorithm~\ref{alg:subsysEnc}, which implies that the their action might be nontrivial on the gauge qubits. The benefit of the second method is when $S$ and $S_A$ have different number of 
primary generators. 
The following aspects of both the methods are worth highlighting. 
\begin{compactenum}[1)]
\item The gauge qubits must be initialized to $\ket{0}$ in both methods. 
\item In Algorithm~\ref{alg:subsysEnc},  the number of primary generators of $S$ 
and $S_A$ can be different leading to a potential increase in complexity compared to 
encoding with $S$.
\item In both methods, the encoded operators as computed are modulo $S_A$. Consequently, 
the  encoded operators might act nontrivially on the gauge qubits. 
\end{compactenum}

\subsection{Encoding Subsystem Codes by Conjugation Method} 
The other benefit of subsystem codes is
the random initialization of the gauge qubits.  We now give circuits where we can encode 
the subsystem codes to realize this benefit. But instead of using the standard form method
we will use the conjugation method proposed by Grassl {\em et al.}, \cite{grassl03}
for stabilizer codes. After
briefly reviewing this method we shall show how it can be modified for encoding subsystem
codes. 

The conjugation encoding method can be understood as follows.
It is based on the idea that the Clifford group acts transitively on the Pauli error group. 
It is possible to transform the stabilizer matrix of any $[[n,k,d]]$ stabilizer code 
into the matrix $(0 0 | I_{n-k} 0 )$. For a code with this stabilizer matrix
the encoding is trivial. We simply map $\ket{\psi} $ to $\ket{0}^{\otimes^{n-k}} \ket{\psi}$.
The associated encoded $\ol{X}$ and $\ol{Z}$ operators are given by $(0I_k| 0 0 ) $ and 
$(0 0 |0 I_k)$ respectively. Here we give a sketch of the method for the binary case,
the reader can refer to \cite{grassl03} for details. 
Assume that the stabilizer matrix is given by $S$. 
Then we shall transform it into $(00|I_{n-k}0)$ using the following 
sequence of operations. 
\begin{eqnarray}
(X|Z)  \mapsto (I_{n-k}0|0) \mapsto (00|I_{n-k}0).	
	\label{eq:conjSteps}
\end{eqnarray}
This can be accomplished through the action of $H=\left[\begin{smallmatrix}1&1\\1&-1\end{smallmatrix}\right]$, 
$P=\left[\begin{smallmatrix}1&0\\0&i\end{smallmatrix}\right]$  
and CNOT gates on the Pauli group under conjugation.
The action of $H$ on the $i$th qubit of
$(a_1,\ldots, a_n|b_1,\ldots,b_n)$
transforms it as
\begin{eqnarray}
(a_1,\ldots, a_n|b_1,\ldots,b_n)\stackrel{H_i}{\mapsto}
(a_1,\ldots,\mathbf{b_i},\ldots, a_n|b_1,\ldots,\mathbf{ a_i}, \ldots, b_n).
\end{eqnarray}
These modified entries have been highlighted for convenience.
The phase gate $P$ on the  $i$th qubit transforms $(a_1,\ldots, a_n|b_1,\ldots,b_n)$
as
\begin{eqnarray}
(a_1,\ldots, a_n|b_1,\ldots,b_n)\stackrel{P_i}{\mapsto}
(a_1,\ldots,\mathbf{a_i},\ldots, a_n|b_1,\ldots,\mathbf{ a_i+b_i}, \ldots, b_n).
\end{eqnarray}
We denote the CNOT gate  with the control on the
$i$th qubit and the target on the $j$th qubit by $\text{CNOT}^{i,j}$. The action of the 
$\textup{CNOT}^{i,j}$ gate on $(a_1,\ldots, a_n|b_1,\ldots,b_n)$ is to transform it to 
\begin{eqnarray}
(a_1,\ldots,a_{j-1},\mathbf{ a_j+a_i} ,a_{j+1}\ldots,a_n|b_1,\ldots,b_{i-1},\mathbf{b_{i}+b_{j}},b_{i+1},\ldots, a_n).
\end{eqnarray}
Note that the $j$th entry is changed in the $X$ part while the $i$th entry is changed in the
$Z$ part. For example, consider 
\begin{eqnarray*}
(1,0,0,1,0|0,1,1,0,0) \stackrel{\text{CNOT}^{1,4}}{\mapsto} (1,0,0,\mbf{0},0|0,1,1,0,0),\\
(1,0,0,1,0|0,1,1,1,0) \stackrel{\text{CNOT}^{1,4}}{\mapsto} 
(1,0,0,\mbf{0},0|\mbf{1},1,1,1,0).
\end{eqnarray*}
\nix{\begin{eqnarray*}
(1,0,0,\mathbf{1},0|0,1,1,0,0) \stackrel{\text{CNOT}^{1,4}}{\mapsto} (1,0,0,\mathbf{0},0|0,1,1,0,0),\\
(1,0,0,\mathbf{1},0|\mathbf{0},1,1,1,0) \stackrel{\text{CNOT}^{1,4}}{\mapsto} (1,0,0,\mathbf{0},0|\mathbf{1},1,1,1,0).
\end{eqnarray*}
}
Based on the action of these three gates we have the following lemmas to transform
error operators.
\begin{lemma}\label{lm:qubitConj}
Assume that we have a error operator of the form $(a_1,\ldots, a_n|b_1,\ldots, b_n)$. Then
we apply the following gates on the $i$th qubit to transform the stabilizer, transforming
$(a_i,b_i)$ to $(\alpha,\beta)$ as per the following table. 
\medskip
\begin{center}
\begin{tabular}{c|c|c}
	$(a_i,b_i)$ & Gate& $(\alpha,\beta)$ \\ \hline
	(0,0) & $I$ & (0,0)\\
	(0,1) & $H$ & (1,0)\\
	(1,0) & $I$ & (1,0)\\
	(1,1) & $P$ & (1,0)
\end{tabular}
\end{center}
Let $\bar{x}$ denote $1+x$, then the transformation to $(a_1,\ldots, a_n|0,\ldots,0)$ is achieved by 
$$
\bigotimes_{i=1}^n H^{\bar{a}_ib_i} P^{a_ib_i}.
$$
\end{lemma}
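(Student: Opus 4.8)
The plan is to verify the claim by tracking how each single-qubit Clifford gate acts on a single coordinate pair $(a_i, b_i)$ under conjugation, using the three transformation rules stated just above Lemma~\ref{lm:qubitConj}. Since the gates $H$, $P$, and $I$ each act on a single qubit and affect only the $i$th entry of the $X$ part and the $i$th entry of the $Z$ part, it suffices to check the action locally, coordinate by coordinate, and then tensor the results together. This reduces the whole problem to a finite case check over the four possible values of $(a_i, b_i) \in \F_2^2$.

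First I would go through the four cases in the table. For $(a_i,b_i)=(0,0)$, applying $I$ leaves $(0,0)$ unchanged, so $(\alpha,\beta)=(0,0)$. For $(a_i,b_i)=(1,0)$, applying $I$ again does nothing and we already have $\beta=0$, giving $(\alpha,\beta)=(1,0)$. For $(a_i,b_i)=(0,1)$, the Hadamard swaps the $X$ and $Z$ entries, sending $(a_i,b_i)=(0,1)$ to $(1,0)$, so $(\alpha,\beta)=(1,0)$. For $(a_i,b_i)=(1,1)$, the phase gate sends $(a_i,b_i)$ to $(a_i, a_i+b_i)=(1, 1+1)=(1,0)$, so again $(\alpha,\beta)=(1,0)$. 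In every case the resulting $\beta=0$, confirming that each chosen gate clears the $Z$ component at coordinate $i$ while leaving the $X$ component in the desired state.

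Next I would verify that the single formula $\bigotimes_{i=1}^n H^{\bar a_i b_i} P^{a_i b_i}$ reproduces exactly this case analysis at each coordinate. The exponent $a_i b_i$ equals $1$ precisely when $(a_i,b_i)=(1,1)$, so $P$ is applied in that case and only that case. The exponent $\bar a_i b_i = (1+a_i)b_i$ equals $1$ precisely when $a_i=0$ and $b_i=1$, i.e. when $(a_i,b_i)=(0,1)$, so $H$ is applied there and only there. In the remaining two cases $(0,0)$ and $(1,0)$ both exponents vanish and the identity is applied. Thus the tensor product formula applies precisely the gate prescribed by the table at each qubit, matching the case check.

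The main point to handle carefully is that the conjugation action is indeed coordinate-local for these gates, so that applying the tensor product simultaneously across all qubits produces the entrywise transformation without cross-terms; this follows directly from the three displayed transformation rules, each of which modifies only the entries indexed by the qubit on which the gate acts. Given that, the conclusion is that the resulting operator has the form $(a_1,\dots,a_n\,|\,0,\dots,0)$, as claimed. I expect no genuine obstacle here, as the entire argument is a routine finite verification combined with the locality of single-qubit gate conjugation; the only place demanding attention is the bookkeeping of the exponents $\bar a_i b_i$ and $a_i b_i$ to ensure they select the correct gate in each of the four cases.
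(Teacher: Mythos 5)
Your proof is correct and follows exactly the reasoning the paper leaves implicit: a four-case check of the table against the displayed conjugation rules for $H$ and $P$, plus the observation that the exponents $\bar a_i b_i$ and $a_i b_i$ select the tabulated gate at each coordinate (note they are never both $1$, so the order of $H$ and $P$ in the formula is immaterial). The paper gives no explicit proof of this lemma, only a worked example, so your verification is the intended argument.
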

For example, consider the following generator $(1,0,0,1,0|0,1,1,1,0)$. This can be transformed to
$(1,1,1,1,0|0,0,0,0,0)$ by the application of $I\otimes H \otimes H \otimes P \otimes I $. 

\begin{lemma}\label{lm:cnotConj}
Let $e$ be an error operator of the form $(a_1,\ldots,a_i=1,\ldots, a_n|0,\ldots, 0)$. 
Then $e$ can be transformed to $(0,\ldots,0, a_i=1,0,\ldots, 0 |0,\ldots, 0)$ by
$$\prod_{j=1, i\neq j}^n \left[\textup{CNOT}^{i,j}\right]^{a_j}.$$
\end{lemma}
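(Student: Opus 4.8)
The plan is to apply the product of gates coordinate by coordinate and track how each symplectic coordinate evolves, relying on the action of $\textup{CNOT}^{i,j}$ recorded just above the statement: conjugation by $\textup{CNOT}^{i,j}$ sends $a_j \mapsto a_j + a_i$ and $b_i \mapsto b_i + b_j$ while fixing every other coordinate. First I would observe that in the product $\prod_{j\neq i}[\textup{CNOT}^{i,j}]^{a_j}$ every factor has the same control qubit $i$, and qubit $i$ never occurs as a target; consequently the control value $a_i = 1$ is never altered and remains equal to $1$ throughout the whole sequence. This is what makes the argument clean, since it guarantees that each target update uses the constant control value $a_i = 1$.

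Next I would verify that the $Z$-part stays identically zero. The only $Z$-coordinate touched by any of these gates is $b_i$, which is updated as $b_i \mapsto b_i + b_j$; since the input satisfies $b_1 = \cdots = b_n = 0$ and no gate in the product ever modifies $b_j$ for $j \neq i$, every such update adds $0$, so $b_i$ (and every other $b_\ell$) remains $0$.

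The heart of the computation is the $X$-part. Working over $\F_2$, the exponent $a_j \in \{0,1\}$ means that $[\textup{CNOT}^{i,j}]^{a_j}$ is either the identity (when $a_j = 0$) or a single application (when $a_j = 1$). When $a_j = 0$ the coordinate $a_j$ is left untouched and stays $0$; when $a_j = 1$ the single application sends $a_j \mapsto a_j + a_i = 1 + 1 = 0$. In both cases the final value of $a_j$ is $0$ for every $j \neq i$, while $a_i$ retains its value $1$. I would also remark that the gates for distinct targets $j$ commute, since they modify disjoint $X$-coordinates and contribute only commutative additions to the single coordinate $b_i$; hence the order of the factors is immaterial. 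Assembling these observations yields $(0,\ldots,0,a_i = 1, 0,\ldots,0 \mid 0,\ldots,0)$, as claimed.

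The argument involves no genuine obstacle; the only point requiring care is the bookkeeping, namely confirming that the control coordinate $a_i$ is invariant under the entire product and that no spurious $Z$-entries are created. These are exactly what the two preliminary observations above secure, so the verification reduces to the single-coordinate computation in the previous paragraph.
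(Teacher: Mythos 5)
Your proof is correct and follows exactly the route the paper intends: the paper states the lemma without a written proof (offering only a worked example), and the evident argument is precisely your coordinate-by-coordinate computation using the conjugation action of $\textup{CNOT}^{i,j}$ recorded just before the statement. Your two bookkeeping observations (the control coordinate $a_i$ is never a target, and all $b_\ell$ remain zero) are the right points to secure, and the $\F_2$ computation $a_j\mapsto a_j+a_i$ finishes it.
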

As an example $(1,1,1,1,0|0,0,0,0,0)$ can be transformed to 
$(0,1,0,0,0|0,0,0,0,0)$ by 
$$
\textup{CNOT}^{2,1} \cdot  \textup{CNOT}^{2,3} \cdot \textup{CNOT}^{2,4}.
$$

The first step involves making the $Z$ portion of the stabilizer matrix all zeros. This
is achieved by single qubit operations consisting of $H$ and $P$
performed on each row one by one.

Note that we must also modify the other rows of the stabilizer matrix according to the action of the
gates applied. 

Once we have a row of stabilizer matrix in the form $(a|0)$, where $a$ is nonozero we can transform
it to the form $(0,\ldots,0,a_i=1,0,\ldots,0|0)$ by using CNOT gates. 
Thus it is easy to transform $(X|Z)$ to $(I_{n-k}0|0)$ using CNOT, $P$ and $H$ gates. 
The final transformation to $(0|I_{n-k}0)$ is achieved by using $H$ gates on the first $n-k$ qubits.
At this point the stabilizer matrix has been transformed to a trivial stabilizer matrix which 
stabilizes the state $\ket{0}^{\otimes^{n-k}}\ket{\psi}$. The encoded operators 
are $(0I_k|0)$ and $(0|0I_k)$. Let $T$ be the sequence of gates applied to transform the 
stabilizer matrix to the trivial stabilizer matrix. Then $T$  applied in the 
reverse order to $\ket{0}^{\otimes^{n-k}}\ket{\psi}$ gives the encoding circuit 
for the stabilizer code. 

Now we shall use this method to encode the subsystem codes. The main difference is 
that instead of considering just the stabilizer we need to consider the entire gauge group.
Let the gauge group be 
$G= \langle S, G_Z, G_X \rangle$,
where $G_Z=\langle  z_{s+1},\ldots,z_{s+r}\rangle$, and 
$G_X=\langle  x_{s+1},\ldots,x_{s+r}\rangle$.
The idea is to transform the gauge group as follows. 
\begin{eqnarray}
G= \left[\begin{array}{c}S \\ \hline G_Z \\ \hline G_X \end{array}\right]
\mapsto \left[ \begin{array}{ccc|ccc}
0&0&0&I_s&0&0\\ \hline
0&0&0&0&I_r&0\\ \hline
0&I_r&0&0&0&0 \end{array} \right].\label{eq:GnormForm}
\end{eqnarray}
At this point the gauge group has been transformed to a group with trivial stabilizer and
trivial encoded operators for the gauge qubits and the encoded qubits. The sequence of
gates required to achieve this transformation in the reverse order will encode the state
$\ket{0}^{\otimes^{s}}\ket{\phi}\ket{\psi}$. The state $\ket{\phi}$ corresponds to the
gauge qubits and it can be initialized to any state, while $\ket{\psi}$ corresponds to the
input. 
\begin{algorithm}[H]
\caption{{\ensuremath{\mbox{\scshape Encoding subsystem codes -- conjugation method}}}}\label{alg:subsysEncConj}
\begin{algorithmic}[1]
\REQUIRE Gauge group, $G= \langle S, G_Z, G_X \rangle$,
where $G_Z=\langle  z_{s+1},\ldots,z_{s+r}\rangle$, and 
$G_X=\langle  x_{s+1},\ldots,x_{s+r}\rangle$
 and stabilizer, $S =\langle z_1,\ldots, z_{n-k-r}\rangle$ of the $[[n,k,r,d]]$ subsystem code.

\ENSURE $[x_i,x_j] =[z_i,z_j] =0$; $[x_i,z_j ]=2x_iz_i \delta_{ij}$
\medskip
\STATE Assume that $G$ is the following form
$$G= \left[\begin{array}{c}S \\ \hline G_Z \\ \hline G_X \end{array}\right]
$$
\FORALL{$i=1$ to $s+r$}
\STATE  Transform $z_i$ to $z_i'=(a_1,\ldots,a_n|0,\ldots,0)$ using Lemma~\ref{lm:qubitConj}
\STATE Transform $z_i'$ to $(0,\ldots, a_i=1,\ldots,0|0)$ using Lemma~\ref{lm:cnotConj}
\STATE Perform Gaussian elimination on column $i$ for rows $j>i$
\ENDFOR
\STATE Apply $H$ gate on each qubit  $i=1$ to $i=s+r$
\FORALL{$i=s+1$ to $s+r$}
\STATE  Transform $x_i$ to $x_i'=(a_1,\ldots,a_n|0,\ldots,0)$ using Lemma~\ref{lm:qubitConj}
\STATE Transform $x_i'$ to $(0,\ldots, a_i=1,\ldots,0|0)$ using Lemma~\ref{lm:cnotConj}
\STATE Perform Gaussian elimination on column $i$ for rows $j>i$
\ENDFOR
\end{algorithmic}
\end{algorithm}

In the above algorithm, we assume that whenever a row is transformed according to Lemma~\ref{lm:qubitConj}
or \ref{lm:cnotConj}, all the other rows are also transformed according to the transformation applied.

\textbf{Correctness of Algorithm~\ref{alg:subsysEncConj}.} 
The correctness of the algorithm is 
straightforward. As $G$ has full rank of $n-k+r$, for each row of $G$, we will be able to find some
nonzero pair $(a,b)$ so that the the transformation in lines 2--6 can be achieved. When $S$ and $G_Z$
are in the form $(0|I_{s+r}0)$, the rows in $G_X$ are in the form 
$$
\left[\begin{array}{ccc|ccc}0&A&B&0&0&D\end{array} \right]. 
$$
The zero columns of $G_X$ are consequence of the requirement to satsify the commutation
relations with (transformed) $S$ and $G_Z$. For instance, 
The first $n-k-r$ are all zero because
they must commute with $(0|I_s 0)$, the elements of the transformed stabilizer.
The submatrix $A$ must have rank $r$, otherwise at this point one of the rows of $G_X$ commutes with
all the rows of $G_Z$ and the condition that we have there are $r$ hyperbolic pairs 
is violated. It is possible therefore to transform $A$  to the form $(0I_r0|0)$. It cannot be any other
form because then we would not have the $r$ hyperbolic pairs. 
The applied transformations transform $G$ to the form given in equation~(\ref{eq:GnormForm}).
The encoded operators for  this gauge group are clearly $(0I_k|0)$ and $(0|0I_k)$. 
We conclude with a simple example that illustrates the process. 
\begin{example}
To compare with the standard form method, we consider
the $[[4,1,1,2]]$ code again. Let the gauge group $G$,
stabilizer $S$ and encoded operators given by $L$.
\begin{eqnarray*}
S&=&\left[ \begin{array}{cccc}X&X&X&X\\Z&Z&Z&Z 
\end{array}\right] =\left[ \begin{array}{c}z_1\\z_2\end{array}\right],\\
G&=&\left[ \begin{array}{cccc} 
X&X&X&X\\Z&Z&Z&Z \\  \hline
I&I&Z&Z\\I&X&I&X 
\end{array}\right] = \left[ \begin{array}{c}  z_1\\z_2\\\hline x_3\\z_3\end{array}\right].
\end{eqnarray*} 
In matrix form $G$ can be written as 
\begin{eqnarray*}
G&=&\left[ \begin{array}{cccc|cccc} 
1&1&1&1&0&0&0&0\\0&0&0&0&1&1&1&1 \\  \hline
0&0&0&0&0&0&1&1\\
0&1&0&1&0&0&0&0
\end{array}\right].
\end{eqnarray*} 
The transformations consisting of $T_1=\textup{CNOT}^{1,2}\textup{CNOT}^{1,3}\textup{CNOT}^{1,4}$ followed by 
$T_2=I\otimes H\otimes H\otimes H$ maps $G$ to 
\begin{eqnarray*}
\stackrel{T_1}{\mapsto}\left[ \begin{array}{cccc|cccc} 
1&0&0&0&0&0&0&0\\0&0&0&0&0&1&1&1 \\  \hline
0&0&0&0&0&0&1&1\\
0&1&0&1&0&0&0&0
\end{array}\right] \stackrel{T_2}{\mapsto}
\left[ \begin{array}{cccc|cccc} 
1&0&0&0&0&0&0&0\\0&1&1&1&0&0&0&0 \\  \hline
0&0&1&1&0&0&0&0\\
0&0&0&0&0&1&0&1
\end{array}\right].
\end{eqnarray*} 
Now transform the second row using $T_3=\textup{CNOT}^{2,3}\textup{CNOT}^{2,4}$. Then 
transform using $T_4=\textup{CNOT}^{4,3}$. We get
\begin{eqnarray*}
\stackrel{T_3}{\mapsto}\left[ \begin{array}{cccc|cccc} 
1&0&0&0&0&0&0&0\\0&1&0&0&0&0&0&0 \\  \hline
0&0&1&1&0&0&0&0\\
0&0&0&0&0&0&0&1
\end{array}\right] \stackrel{T_4}{\mapsto}
\left[ \begin{array}{cccc|cccc} 
1&0&0&0&0&0&0&0\\0&1&0&0&0&0&0&0 \\  \hline
0&0&0&1&0&0&0&0\\
0&0&0&0&0&0&0&1
\end{array}\right].
\end{eqnarray*} 
Applying $T_5=H\otimes H \otimes I \otimes H $ gives us 
\begin{eqnarray*}
\stackrel{T_5}{\mapsto}\left[ \begin{array}{cccc|cccc} 
0&0&0&0&1&0&0&0\\0&0&0&0&0&1&0&0 \\  \hline
0&0&0&1&0&0&0&0\\
0&0&0&0&0&0&0&1
\end{array}\right].
\end{eqnarray*} 
We could have chosen $T_5= H\otimes H\otimes I \otimes I $, since the effect of 
$H$ on the fourth qubit is trivial. The complete circuit is given in 
Figure~\ref{fig:conjEnc4112a}.
\begin{figure}[htb]
\[
\Qcircuit @C=1.4em @R=1.2em {
\lstick{\ket{0}}&\gate{H}&\qw &\qw&\qw&\ctrl{3}&\qw\\
\lstick{\ket{0}}&\gate{H}&\qw&\ctrl{2}&\gate{H}&\targ&\qw\\
\lstick{\ket{\psi}}&\qw&\targ&\targ&\gate{H}&\targ&\qw\\
\lstick{\ket{g}}&\gate{H}&\ctrl{-1}&\targ&\gate{H}&\targ&\qw
}
\]
\caption{Encoding $[[4,1,1,2]]$ code by conjugation method}
\label{fig:conjEnc4112a}
\end{figure}

By switching the target and control qubits of the CNOT gates in $T_3$ and $T_4$
we can show that this circuit is equivalent to the circuit shown in 
Figure~\ref{fig:conjEnc4112b}.
\begin{figure}[htb]
\[
\Qcircuit @C=1.4em @R=1.2em {
\lstick{\ket{0}}&\gate{H}&\qw &\qw&\qw&\ctrl{3}&\qw\\
\lstick{\ket{0}}&\qw&\qw&\targ&\targ&\targ&\qw\\
\lstick{\ket{\psi}}&\gate{H}&\ctrl{1}&\ctrl{-1}&\qw&\targ&\qw\\
\lstick{\ket{g}}&\qw&\targ&\qw&\ctrl{-2}&\targ&\qw \gategroup{3}{2}{4}{3}{.7em}{--}
}
\]
\caption{Encoding $[[4,1,1,2]]$ code by conjugation method}\label{fig:conjEnc4112b}
\end{figure}

It is instructive to compare this circuit with the one given earlier
in Figure~\ref{fig:stdFormEncEx1}. The dotted lines 
show the additional circuitry.
Since the gauge qubit can be initialized to any state, we can initialize
$\ket{g}$ to $\ket{0}$, which then gives the following logical states for the code. 
\begin{eqnarray}
\ket{\ol{0}}&=&\ket{0000}+\ket{1111}+\ket{0011}+\ket{1100},\\
\ket{\ol{1}}&=&\ket{0000}+\ket{1111}-\ket{0011}-\ket{1100}.
\end{eqnarray}
It will be observed that $IIXX$ acts as the logical $Z$ operator while $IZIZ$ acts as the
logical $X$ operator. We could flip these logical operators by absorbing the $H$ gate into
$\ket{\psi}$. If we additionally initialize $\ket{g}$ to $\ket{0}$, we will see that
the two CNOT gates on the second qubit can be removed. The simplified
circuit is shown in Figure~\ref{fig:conjEnc4112c}.
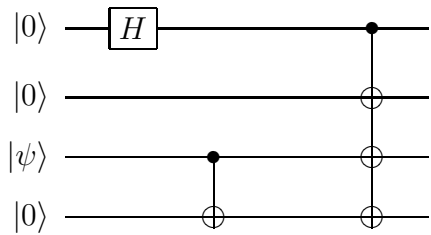
\begin{figure}[htb]
\[
\Qcircuit @C=1.4em @R=1.2em {
\lstick{\ket{0}}&\gate{H}&\qw &\qw&\qw&\ctrl{3}&\qw\\
\lstick{\ket{0}}&\qw&\qw&\qw&\qw&\targ&\qw\\
\lstick{\ket{\psi}}&\qw&\ctrl{1}&\qw&\qw&\targ&\qw\\
\lstick{\ket{0}}&\qw&\targ&\qw&\qw&\targ&\qw 
}
\]
\caption{Encoding $[[4,1,1,2]]$ code by conjugation method -- optimized}
\label{fig:conjEnc4112c}
\end{figure}

This is precisely, the same circuit that we had arrived earlier in Figure~\ref{fig:stdFormEncEx1Opt}
using the standard
form method. 
\end{example}
The preceding example provides additional evidence in the direction that 
it is better to initialize the gauge qubits to zero
and avoid the encoding operators on them. 

\section{Syndrome Measurement for Nonbinary $\F_q$-linear Codes}\label{sec:appendix} 

Decoding of nonbinary quantum codes has not been studied as well as
binary codes.  Encoding of $\F_q$-linear nonbinary quantum codes was
investigated in \cite{grassl03}.  The authors
suggest that the decoder is simply the encoder running backwards.
In this context one important task is that measuring the syndrome
so that appropriate error correction maybe performed. While binary 
codes have been well studied in this regard similar efforts have not
been invested in the nonbinary case. 
Here we give a method that allows us to measure the syndrome for
$\F_q$-linear nonbinary quantum codes.  We also show that an
$\F_q$-linear $[[n,k,r,d]]_q$ code requires $n-k-r$ syndrome
measurements.  But first we need the definition of the following
nonbinary gates, see \cite{grassl03}.
\begin{compactenum}[i)]
\item $X(a)\ket{x}=\ket{x+a}$ 
\item $Z(b)\ket{x}=\omega^{\tr_{q/p}(bx)} \ket{x}$, $\omega=e^{j2\pi/p}$
\item $M(c)\ket{x}=\ket{cx}, c\in \F_q^\times$
\item $F\ket{x}= \frac{1}{\sqrt{q}}\sum_{y\in \F_q} \omega^{\tr_{q/p}(xy)}\ket{y}$
\item $A\ket{x} \ket{y} = \ket{x}\ket{x+y}$
\end{compactenum}
Graphically, these gates are represented below.
\[
\Qcircuit @C=1em @R=.7em {
& \gate{X(a)}&\qw && \gate{Z(b)}&\qw & &\gate{c} &\qw& &\gate{F}&\qw & &\ctrl{1} &\qw \\
& &  & & & & & &&&&&&\targ &\qw \\
& &  & & & & & &&&&&& &\\
& \text{i)}&  & &\text{ii)} & & &\text{iii)} &&&\text{iv)}&&&\text{v)} &
}
\]
Consider the following circuit. 
\[
\Qcircuit @C=1em @R=.7em {
\lstick{\ket{a}}& \qw & \ctrl{1} & \qw  &\qw &\rstick{\ket{a}}\\
\lstick{\ket{y}}& \gate{g_x^{-1}} & \targ{Z} & \gate{g_x} & \qw&\rstick{\ket{y+a g_x}}
}
\]
Alternatively, this circuit maps $\ket{a}\ket{x}$ to $\ket{a}X({a
g_x})\ket{y}$. Observe that this circuit effectively applies $X({a
g_x})$ on the second qudit.  Using the linearity, we can analyze the
following circuit.
\[
\Qcircuit @C=1em @R=.7em {
\lstick{\ket{0}}& \gate{F}& \ctrl{1} & \qw  &\qw & \\ 
\lstick{\ket{y}}&\gate{g_x^{-1}} & \targ{Z} & \gate{g_x} & \qw&\rstick{\sum_{\alpha\in \F_q}\ket{\alpha}\ket{y+\alpha g_x}}
}
\]
The above circuit maps $\ket{0}\ket{y}$ to $\sum_{\alpha \in \F_q} \ket{\alpha} X({\alpha g_x})\ket{y}$. 
Using the fact that $F X(b) F^\dagger =Z(b)$, we can show that the following circuit maps 
$\ket{b}\ket{y}$ to $\ket{b} Z({bg_z}) \ket{y}$. 
\[
\Qcircuit @C=1em @R=.7em {
\lstick{\ket{b}}& \qw&\qw & \ctrl{1} & \qw  &\qw &\qw&\rstick{\ket{b}}\\
\lstick{\ket{y}}& \gate{F^\dagger}&\gate{g_z^{-1}} & \targ{Z} & \gate{g_z} & \gate{F} &\qw&\rstick{Z({bg_z})\ket{y}}
}
\]
If we wanted to apply a general operator $X({ag_x})Z({ag_z})$ to the
second qudit conditioned on the first one, then we can combine the
previous circuits as follows.
\[
\Qcircuit @C=1em @R=.7em {
\lstick{\ket{a}}& \qw&\qw & \ctrl{1} & \qw  &\qw  &\qw&\qw& \ctrl{1} & \qw  &\qw &\rstick{\ket{a}}\\
\lstick{\ket{y}}& \gate{F^\dagger}&\gate{g_z^{-1}} & \targ{Z} & \gate{g_z} & \gate{F} &\qw& \gate{g_x^{-1}} & \targ{Z} & \gate{g_x}&\qw &\rstick{X({ag_x})Z({ag_z})\ket{y}}
}
\]
The above implementation is not optimal in terms of gates, but it will
suffice for our purposes.  Consider an $[[n,k,r,d]]_q$ code. Let $E$
be an error in $G_n$, (see \ref{eq:genPauli}).  
If $E$ is detectable, then $E$ does not
commute with some element(s) in the stabilizer of the code. Let
$$g=(g_x|g_z)=(0,\ldots,0,a_j,\ldots,a_n|0,\ldots,0,b_j,\ldots,
b_n)\in \F_q^{2n},$$ where $(a_j,b_j)\neq (0,0)$, be a generator of
the stabilizer. Then for all detectable errors that do not commute
with a multiple of $g$, the following circuit gives a nonzero value on
measurement.
\[
\Qcircuit @C=1em @R=.7em {
\lstick{\ket{0}}& \gate{F}&\qw & \ctrl{9} & \qw  &\qw  &\qw&\qw& \ctrl{9} & \gate{F^\dagger}  &\qw &\meter\\
& & & &   &  &&& &  & &\\
\lstick{\ket{x_1}}& \qw&\qw & \qw & \qw  &\qw  &\qw&\qw& \qw & \qw  &\qw &\\
& \dots& & &   &\dots  &&& &  \dots &&\\
& \qw&\qw & \qw & \qw  &\qw  &\qw&\qw& \qw & \qw  &\qw &\\
\lstick{\ket{x_j}}& \gate{F^\dagger}&\gate{b_j^{-1}} & \targ{Z} & \gate{b_j} & \gate{F} &\qw& \gate{a_j^{-1}} & \targ{Z} & \gate{a_j}&\qw &\\
& \qw&\qw & \qw & \qw  &\qw  &\qw&\qw& \qw & \qw  &\qw &\\
& \dots& & &   &\dots  &&& & \dots & &\\
& \qw&\qw & \qw & \qw  &\qw  &\qw&\qw& \qw & \qw  &\qw &\\
\lstick{\ket{x_n}}& \gate{F^\dagger}&\gate{b_n^{-1}} & \targ{Z} & \gate{b_n} & \gate{F} &\qw& \gate{a_n^{-1}} & \targ{Z} & \gate{a_n}&\qw \gategroup{1}{2}{10}{10}{.7em}{--}
}
\]
Note that whenever $(a_i,b_i)=(0,0)$, then we leave that qudit alone. Similarly if $a_i$ or 
$b_i$ are zero, then we do not implement the corresponding portion. 
Let the input to the above circuit be $E\ket{\psi}$, where 
$\ket{\psi}$ is an encoded state. 
It can be easily verified that the above circuit maps the state $\ket{0}E\ket{\psi}$ to
$$ 
\sum_{\alpha \in \F_q}F^\dagger\ket{\alpha}X({\alpha g_x})Z({\alpha g_z})E\ket{\psi}.
$$
Let $X({g_x})Z({g_z})E=\omega^{\tr_{q/p}(t)}E X({g_x})Z({g_z})$, where $X({g_x})Z({g_z})$ 
is corresponding matrix representation of $g$. 
 By 
Lemma~\ref{th:commute}.
we have 
$X({\alpha g_x})Z({\alpha g_z})E=\omega^{\tr_{q/p}(\alpha t)}E X({g_x})Z({g_z})$.
Thus we can write 
\begin{eqnarray*}
\sum_{\alpha\in \F_q}\ket{\alpha}  X({\alpha g_x})Z({\alpha g_z})E\ket{\psi} &= &\sum_{\alpha \in \F_q}\ket{\alpha}\omega^{\tr_{q/p}(\alpha t)} E  X({\alpha g_x})Z({\alpha g_z})  \ket{\psi},\\
&=&\left(\sum_{\alpha\in \F_q}\ket{\alpha} \omega^{\tr_{q/p}(\alpha t)} \right)E\ket{\psi},
\end{eqnarray*}
where we have made use of the fact that $X({\alpha g_x})Z({\alpha g_z}) \ket{\psi}=\ket{\psi}$
as  $X({\alpha g_x})Z({\alpha g_z})$ is in the stabilizer.
The final state is  given by 
\begin{eqnarray*}
\sum_{\alpha\in \F_q}F^\dagger\ket{\alpha}  X({\alpha g_x})Z({\alpha g_z})E\ket{\psi} &= &
\sum_{\alpha\in \F_q}F^\dagger\ket{\alpha} \omega^{\tr_{q/p}(\alpha t)} E\ket{\psi},\\
&=&\sum_{\alpha\in \F_q}\sum_{\beta\in \F_q}\omega^{-\tr_{q/p}(\alpha\beta)}\ket{\beta} \omega^{\tr_{q/p}(\alpha t)} E\ket{\psi},\\
&=&\sum_{\beta\in \F_q}\ket{\beta}\sum_{\alpha\in \F_q} \omega^{\tr_{q/p}(\alpha t -\alpha\beta )} E\ket{\psi},\\
&=&\sum_{\beta\in \F_q}\ket{\beta}\sum_{\alpha\in \F_q} \omega^{\tr_{q/p}(\alpha t -\alpha\beta )} E\ket{\psi},\\
&=&\ket{t}E\ket{\psi},
\end{eqnarray*}
where the last equality follows from the property of the characters of
$\F_q$.  Next we observe that the error $\alpha E$, where $\alpha\in
\F_q$ gives $\ket{\alpha t}$ on measurement. Strictly speaking we refer to
the preimage of $\alpha\ol{E}$ in $G_n$.
Hence the syndrome
qudit can take $q$ different values. Since every detectable error
does not commute with some $\F_q$-multiple of a stabilizer generator,
we have the following lemma on the necessary and sufficient number of syndrome 
measurements.
\begin{lemma}\label{lm:suffMeas}
Given an $\F_q$-linear $[[n,k,r,d]]_q$ Clifford subsystem code, 
$n-k-r$ syndrome measurements are required for decoding it completely.
\end{lemma}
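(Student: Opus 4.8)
The plan is to establish the claim in two directions: that $n-k-r$ syndrome measurements suffice for complete decoding, and that no fewer can. Throughout I would work with the syndrome map that the circuit of this section realizes. By Lemma~\ref{lm:struct} the stabilizer of the $[[n,k,r,d]]_q$ subsystem code is $S=\langle z_1,\dots,z_s\rangle$ with $s=n-k-r$, and, identifying an error $E=\omega^c X(\mathbf a)Z(\mathbf b)$ in $G_n$ with the vector $\ol E=(\mathbf a\mid \mathbf b)\in\F_q^{2n}$, I would define the syndrome map
\begin{equation*}
\sigma\colon \F_q^{2n}\to \F_q^{s},\qquad \sigma(\ol E)=\big(\langle \ol E\mid z_1\rangle_s,\dots,\langle \ol E\mid z_s\rangle_s\big).
\end{equation*}
By Lemma~\ref{th:commute} the $i$th coordinate of $\sigma(\ol E)$ vanishes exactly when $E$ commutes with the matrix representation of $z_i$, so $\sigma(\ol E)=0$ iff $E$ commutes with the whole stabilizer, i.e.\ $E\in C_E(Z(N))$. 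Since the $z_i$ are $\F_q$-linearly independent and the trace-symplectic form is nondegenerate, $\sigma$ is surjective; hence its image has exactly $q^{s}=q^{n-k-r}$ elements, one for each coset of $C_E(Z(N))=\ker\sigma$ in $E$.

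For sufficiency I would appeal directly to the measurement circuit constructed above. Applied to the $i$th generator $z_i$, that circuit returns the value $\langle \ol E\mid z_i\rangle_s\in\F_q$ on a single syndrome qudit, as the displayed computation shows (the final state being $\ket{t}E\ket\psi$ with $t$ the relevant trace-symplectic inner product). Running it once for each of the $s=n-k-r$ generators therefore recovers the full vector $\sigma(\ol E)$, which by the previous paragraph pins down the coset $E\,C_E(Z(N))$. Because every detectable error anticommutes with some $\F_q$-multiple of a generator, this coset datum is exactly what complete decoding of the information subsystem $A$ requires: errors lying in the gauge group $N$ act trivially on $A$ (see Theorem~\ref{th:third}) and so need not be resolved further, and the complete decoder then selects the most likely coset representative. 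Thus $n-k-r$ measurements suffice.

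For necessity I would use a counting argument. A complete decoder must, by definition, be able to distinguish every distinct syndrome, and by the first paragraph there are precisely $q^{n-k-r}$ of them. Each of the $\F_q$-valued syndrome measurements described here produces one of only $q$ outcomes, so $m$ such measurements can separate at most $q^{m}$ classes. Forcing $q^{m}\ge q^{n-k-r}$ yields $m\ge n-k-r$, and combined with sufficiency we obtain $m=n-k-r$.

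The main obstacle I anticipate is making the notion of \emph{complete} decoding precise enough that the counting argument is airtight for subsystem rather than stabilizer codes: one must argue that the relevant equivalence on errors is exactly the partition into cosets of $C_E(Z(N))$ (so that gauge-equivalent errors are genuinely indistinguishable and need not be distinguished), and that the measurements under consideration are the $\F_q$-valued syndrome qudits, so that each contributes a factor of $q$ to the count. Once the image of $\sigma$ is correctly identified with $\F_q^{n-k-r}$ and the harmlessness of $N$ is invoked as in Theorem~\ref{th:third}, both bounds follow cleanly.
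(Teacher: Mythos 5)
Your proposal is correct, and its skeleton (count the syndromes, note each measurement has $q$ outcomes, quotient out the gauge group) matches the paper's, but the two halves are argued by different means. The paper proves \emph{necessity} concretely: invoking Theorem~\ref{th:four} and Lemma~\ref{lm:xzBasis}, for each stabilizer generator $g$ it exhibits a detectable error that anticommutes with $g$ but commutes with every other generator, so that particular error is invisible unless $g$ itself is measured; this forces all $n-k-r$ generators to be measured. It then gets \emph{sufficiency} by the counting argument you use for necessity: correctable errors fall into $q^{n-k-r}$ orthogonal-subspace classes, and $n-k-r$ measurements with $q$ outcomes each provide exactly that many distinguishable syndromes. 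You invert the roles: sufficiency comes from the explicit circuit returning $\langle \ol{E}\mid z_i\rangle_s$ for each generator, and necessity from the information-theoretic bound $q^{m}\ge q^{n-k-r}$. Your version is fine, but note that the counting bound for necessity needs the extra observation (which you flag but do not fully close) that every pair of distinct syndromes really does demand distinct corrections; this follows because errors with different syndromes differ by an element outside $C_E(Z(N))=NC_E(N)\supseteq N$, hence act differently on the subsystem $A$. The paper's per-generator witness sidesteps this and is slightly stronger, since it shows each individual generator measurement is indispensable rather than only bounding the total count.
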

\begin{proof}
Let $g$ be a generator of the stabilizer of the subsystem code. 
By Theorem~\ref{th:four} and 
Lemma~\ref{lm:xzBasis}, for every generator $g$ there 
exists at least one detectable error that does not commute with $g$ but commutes 
with all the other generators. This error can be detected only by 
measuring $g$. Thus we need to measure all the
generators of the stabilizer, equivalently $n-k-r$ syndrome
measurements must be performed.

Every correctable error takes the code space into a $q^{k+r}$-dimensional orthogonal subspace in 
the $q^n$-dimensional ambient space. 
Each of these errors will give a distinct syndrome. 
This implies that we can have $q^{n-k-r}$ distinct
syndromes. Since each  syndrome measurement can have $q$ possible outcomes and there are 
$n-k-r$ generators, these measurements are sufficient for performing error correction.
\end{proof}

This parallels the classical case where
an $[n,k,d]_q$ code requires $n-k$ syndrome bits. 
A subtle caveat must be issued to the reader. If we choose to perform 
bounded distance decoding, then it maybe possible that the set of
correctable errors can be distinguished by a smaller number of syndrome
measurements. But even in the case of (classical) bounded distance decoding it is
often the case that we need to measure all the syndrome bits. 


\begin{thebibliography}{25}

\bibitem{aly06a}
S.~A. Aly, A.~Klappenecker, and P.~K.~Sarvepalli.
\newblock Primitive quantum {BCH} codes over finite fields.
\newblock In {\em Proc. IEEE International Symposium on Information Theory,} Seattle,
  USA, pages 1105-1108, 2006.

\bibitem{berlekamp68}
E.R. Berlekamp.
\newblock {\em Algebraic Coding Theory}.
\newblock McGraw-Hill, New York, 1968.

\bibitem{Bose60a}
R.~C. Bose and D.~K. Ray-Chaudhuri.
\newblock Further results on error correcting binary group codes.
\newblock {\em Information and Control}, 3:279--290, 1960.

\bibitem{Bose60b}
R.~C. Bose and D.~K. Ray-Chaudhuri.
\newblock On a class of error correcting binary group codes.
\newblock {\em Information and Control}, 3:68--79, 1960.

\bibitem{calderbank98}
A.R. Calderbank, E.M. Rains, P.W. Shor, and N.J.A. Sloane.
\newblock Quantum error correction via codes over {GF}(4).
\newblock {\em IEEE Trans. Inform. Theory}, 44:1369--1387, 1998.

\bibitem{charpin98}
P.~Charpin.
\newblock Open problems on cyclic codes.
\newblock In {\em Handbook of coding theory, Vol. I, II}, pages 963--1063.
  North-Holland, Amsterdam, 1998.

\bibitem{Gorenstein61}
D.~Gorenstein and N.~Zierler.
\newblock A class of error-correcting codes in {$p\sp{m}$} symbols.
\newblock {\em J. Soc. Indust. Appl. Math.}, 9:207--214, 1961.

\bibitem{grassl99b}
M.~Grassl and T.~Beth.
\newblock Quantum {BCH} codes.
\newblock In {\em Proc. X. Int'l. Symp. Theoretical Electrical Engineering,
  Magdeburg}, pages 207--212, 1999.

\bibitem{grassl97}
M.~Grassl, T.~Beth, and T.~Pellizzari.
\newblock Codes for the quantum erasure channel.
\newblock {\em Phys. Rev. Lett.~A}, 56(1):33--38, 1997.

\bibitem{grassl00}
M.~Grassl and Th. Beth.
\newblock Cyclic quantum error-correcting codes and quantum shift registers.
\newblock {\em Proc. Royal Soc. London Series A}, 456(2003):2689--2706, 2000.

\bibitem{Hocquenghem59}
A.~Hocquenghem.
\newblock Codes correcteurs d'erreurs.
\newblock {\em Chiffres}, 2:147--156, 1959.

\bibitem{huffman03}
W.~C. Huffman and V.~Pless.
\newblock {\em Fundamentals of Error-Correcting Codes}.
\newblock University Press, Cambridge, 2003.

\bibitem{ketkar06}
A.~Ketkar, A.~Klappenecker, S.~Kumar, and P.~K.~Sarvepalli.
\newblock Nonbinary stabilizer codes over finite fields.
\newblock {\em IEEE Trans. Inform. Theory}, 52(11):4892--4914, 2006.

\bibitem{konyagin99}
S.~Konyagin and I.~Shparlinksi.
\newblock {\em Character Sums with Exponential Functions and their
  Applications}.
\newblock Cambridge University Press, Cambridge, 1999.

\bibitem{macwilliams77}
F.J. MacWilliams and N.J.A. Sloane.
\newblock {\em The Theory of Error-Correcting Codes}.
\newblock North-Holland, 1977.

\bibitem{peterson72}
W.~W. Peterson and W.~J. Weldon~Jr.
\newblock {\em Error-correcting Codes}.
\newblock MIT Press, Cambridge, 1972.

\bibitem{shparlinski88}
I.E. Shparlinski.
\newblock On the dimension of {BCH} codes.
\newblock {\em Problemy Peredachi Informatsii}, 25(1):77--80, 1988.
\newblock (In Russian).

\bibitem{steane99}
A.~Steane.
\newblock Enlargement of {C}alderbank-{S}hor-{S}teane codes.
\newblock {\em IEEE Trans. Inform. Theory}, 45(7):2492--2495, 1999.

\bibitem{steane96}
A.M. Steane.
\newblock Simple quantum error correcting codes.
\newblock {\em Phys. Rev. Lett.}, 77:793--797, 1996.

\bibitem{stichtenoth94}
H.~Stichtenoth and C.~Vo{\ss}.
\newblock Generalized {H}amming weights of trace codes.
\newblock {\em IEEE Trans. Inform. Theory}, 40(2):554--558, 1994.

\bibitem{yue00}
D.-W. Yue and G.-Z. Feng.
\newblock Minimum cyclotomic coset representatives and their applications to
  {BCH} codes and {G}oppa codes.
\newblock {\em IEEE Trans. Inform. Theory}, 46(7):2625--2628, 2000.

\bibitem{yue96}
D.-W. Yue and Z.-M. Hu.
\newblock On the dimension and minimum distance of {BCH} codes over {GF}(q).
\newblock {\em J. Electron.}, 18:263--269, 1996.
\newblock (In Chinese).

\end{thebibliography}

\begin{thebibliography}{100}
\providecommand{\url}[1]{#1}
\csname url@samestyle\endcsname
\providecommand{\newblock}{\relax}
\providecommand{\bibinfo}[2]{#2}
\providecommand{\BIBentrySTDinterwordspacing}{\spaceskip=0pt\relax}
\providecommand{\BIBentryALTinterwordstretchfactor}{4}
\providecommand{\BIBentryALTinterwordspacing}{\spaceskip=\fontdimen2\font plus
\BIBentryALTinterwordstretchfactor\fontdimen3\font minus
  \fontdimen4\font\relax}
\providecommand{\BIBforeignlanguage}[2]{{%
\expandafter\ifx\csname l@#1\endcsname\relax
\typeout{** WARNING: IEEEtranS.bst: No hyphenation pattern has been}%
\typeout{** loaded for the language `#1'. Using the pattern for}%
\typeout{** the default language instead.}%
\else
\language=\csname l@#1\endcsname
\fi
#2}}
\providecommand{\BIBdecl}{\relax}
\BIBdecl

\bibitem{aharonov97}
D.~Aharonov and M.~Ben-Or, ``Fault-tolerant quantum computation with constant
  error,'' in \emph{Proc. of the 29th Annual ACM Symposium on Theory of
  Computation (STOC)}.\hskip 1em plus 0.5em minus 0.4em\relax New York: ACM,
  1997, pp. 176--188.

\bibitem{aliferis07}
P.~Aliferis and J.~Preskill, ``Fault-tolerant quantum computation against
  biased noise,'' 2007, arXiv.org:quant-ph/0710.1301.

\bibitem{aly07c}
S.~A. Aly, M.~Grassl, M.~R\"{o}tteler, A.~Klappenecker, and P.~K. Sarvepalli,
  ``Quantum convolutional {BCH} codes,'' in \emph{Proc. 10th Canadian Workshop
  on Inform. Theory, Edmonton, Canada}, 2007, pp. 180--183.

\bibitem{aly06a}
S.~A. Aly, A.~Klappenecker, and P.~K. Sarvepalli, ``Primitive quantum {BCH}
  codes over finite fields,'' in \emph{Proc. 2006 IEEE Intl. Symposium on
  Information Theory, Seattle}, 2006, pp. 1105--1108.

\bibitem{aly06b}
------, ``Remarkable degenerate quantum stabilizer codes derived from {D}uadic
  codes,'' in \emph{Proc. 2006 IEEE Intl. Symposium on Information Theory,
  Seattle}, 2006, pp. 1114--1118.

\bibitem{aly06}
------, ``Subsystem codes,'' in \emph{Forty-fourth {A}nnual {A}llerton
  {C}onference on {C}ommunication, {C}ontrol, and {C}omputing, Monticello,
  {I}llinois}, 2006, pp. 528--535.

\bibitem{aly07a}
------, ``Duadic group algebra codes,'' in \emph{Proc. 2007 IEEE Intl.
  Symposium on Information Theory, Nice, France}, 2007, pp. 2096--2100.

\bibitem{aly07}
------, ``On quantum and classical {BCH} codes,'' \emph{IEEE Trans. on
  Information Theory}, vol.~53, no.~3, pp. 1183--1188, 2007.

\bibitem{aly07b}
------, ``Quantum convolutional codes from generalized {R}eed-{S}olomon
  codes,'' in \emph{Proc. 2007 IEEE Intl. Symposium on Information Theory,
  Nice, France}, 2007, pp. 821--825.

\bibitem{arvind03}
V.~Arvind and K.~Parthasarathy, ``A family of quantum stabilizer codes based on
  the {W}eyl commutation relations over a finite field,'' in \emph{A {T}ribute
  to C. S. Seshadri (Chennai, 2002)}, ser. Trends Math.\hskip 1em plus 0.5em
  minus 0.4em\relax Boston: Birkh{\"a}user, 2003, pp. 133--153.

\bibitem{ashikhmin01}
A.~Ashikhmin and E.~Knill, ``Nonbinary quantum stabilizer codes,'' \emph{IEEE
  Trans. Inform. Theory}, vol.~47, no.~7, pp. 3065--3072, 2001.

\bibitem{ashikhmin99}
A.~Ashikhmin and S.~Litsyn, ``Upper bounds on the size of quantum codes,''
  \emph{IEEE Trans. Inform. Theory}, vol.~45, no.~4, pp. 1206--1215, 1999.

\bibitem{ashikhmin07}
------, ``Foundations of quantum error correction,'' in \emph{Recent Trends in
  Coding Theory and Its Applications}, W.~C.~W. Li, Ed.\hskip 1em plus 0.5em
  minus 0.4em\relax Providence, Rhode Island: American Mathematical Society and
  International Press, 2007, pp. 151--185.

\bibitem{ashikhmin00a}
A.~Ashikhmin, A.~Barg, E.~Knill, and S.~Litsyn, ``Quantum error detection {I}:
  Statement of the problem,'' \emph{IEEE Trans. on Information Theory},
  vol.~46, no.~3, pp. 778--788, 2000.

\bibitem{ashikhmin00b}
------, ``Quantum error detection {II}: Bounds,'' \emph{IEEE Trans. on
  Information Theory}, vol.~46, no.~3, pp. 789--800, 2000.

\bibitem{assmus92}
A.~Assmus, Jr. and J.~Key, \emph{Designs and their codes}.\hskip 1em plus 0.5em
  minus 0.4em\relax Cambridge: Cambridge University Press, 1992.

\bibitem{assmus98}
E.~Assmus, Jr. and J.~Key, ``Polynomial codes and finite geometries,'' in
  \emph{Handbook of Coding Theory}, V.~Pless and W.~Huffman, Eds.,
  vol.~II.\hskip 1em plus 0.5em minus 0.4em\relax Amsterdam: Elsevier, 1998,
  pp. 1269--1343.

\bibitem{bacon06a}
D.~Bacon, ``Operator quantum error correcting subsystems for self-correcting
  quantum memories,'' \emph{Phys. Rev.~A}, vol.~73, no. 012340, 2006.

\bibitem{bacon06b}
D.~Bacon and A.~Casaccino, ``Quantum error correcting subsystem codes from two
  classical linear codes,'' in \emph{Forty-fourth {A}nnual {A}llerton
  {C}onference on {C}ommunication, {C}ontrol, and {C}omputing, Monticello,
  {I}llinois}, 2006, pp. 520--527.

\bibitem{barg00}
A.~Barg, S.~Guritman, and J.~Simonis, ``Strengthening the {G}ilbert-{V}arshamov
  bound,'' \emph{Linear Algebra and its Application}, vol. 307, pp. 119--129,
  2000.

\bibitem{barnum00}
H.~Barnum, ``Quantum message authentication codes,'' 2001, quant-ph/0103123
  version 2.

\bibitem{barnum02}
H.~Barnum, C.~Crepeau, D.~Gottesman, A.~Smith, and A.~Tapp, ``Authentication of
  quantum messages,'' in \emph{Proc. 43rdAnnual IEEE Symposium on the
  Foundations of Computer Science (FOCS '02)}.\hskip 1em plus 0.5em minus
  0.4em\relax IEEE Press, 2002, pp. 449--458.

\bibitem{bartlett02}
S.~Bartlett, H.~de~Guise, and B.~Sanders, ``{Quantum encodings in spin systems
  and harmonic oscillators},'' \emph{Physical Review A}, vol.~65, no. 052316,
  2002.

\bibitem{berlekamp68}
E.~Berlekamp, \emph{Algebraic Coding Theory}.\hskip 1em plus 0.5em minus
  0.4em\relax New York: McGraw-Hill, 1968.

\bibitem{bernstein93}
E.~Bernstein and U.~Vazirani, ``Quantum complexity theory,'' in \emph{Proc.
  25th Annual ACM Symposium on the Theory of Computing (STOC '93)}.\hskip 1em
  plus 0.5em minus 0.4em\relax New York: ACM, 1993, pp. 11--20.

\bibitem{beth98}
T.~Beth and M.~Grassl, ``The quantum {H}amming and hexacodes,'' \emph{Fortschr.
  Phys.}, vol.~46, no. 4-5, pp. 459--491, 1998.

\bibitem{betten98}
A.~Betten, H.~Fripertinger, A.~Kerber, A.~Wassermann, and K.-H. Zimmermann,
  \emph{Codierungstheorie -- Konstruktion und Anwendung linearer Codes}.\hskip
  1em plus 0.5em minus 0.4em\relax Berlin: Springer-Verlag, 1998.

\bibitem{bierbrauer00}
J.~Bierbrauer and Y.~Edel, ``Quantum twisted codes,'' \emph{J. Comb. Designs},
  vol.~8, pp. 174--188, 2000.

\bibitem{birkhoff61}
G.~Birkhoff, \emph{Lattice Theory}, 2nd~ed.\hskip 1em plus 0.5em minus
  0.4em\relax Providence, Rhode Island: AMS, 1961.

\bibitem{bishop07}
C.~A. Bishop and M.~S. Byrd, ``Methods for producing decoherence-free states
  and noiseless subsystems using photonic qutrits,'' 2007, eprint:
  quant-ph/0709.0021.

\bibitem{blahut03}
R.~E. Blahut, \emph{Algebraic Codes for Data Transmission}.\hskip 1em plus
  0.5em minus 0.4em\relax Cambridge: University Press, 2003.

\bibitem{Bose60a}
R.~C. Bose and D.~K. Ray-Chaudhuri, ``Further results on error correcting
  binary group codes,'' \emph{Information and Control}, vol.~3, pp. 279--290,
  1960.

\bibitem{Bose60b}
------, ``On a class of error correcting binary group codes,''
  \emph{Information and Control}, vol.~3, pp. 68--79, 1960.

\bibitem{calderbank97}
A.~Calderbank, E.~Rains, P.~Shor, and N.~Sloane, ``Quantum error correction and
  orthogonal geometry,'' \emph{Phys. Rev. Lett.}, vol.~76, pp. 405--409, 1997.

\bibitem{calderbank98}
------, ``Quantum error correction via codes over {GF}(4),'' \emph{IEEE Trans.
  Inform. Theory}, vol.~44, pp. 1369--1387, 1998.

\bibitem{calderbank96}
A.~Calderbank and P.~Shor, ``Good quantum error-correcting codes exist,''
  \emph{Phys. Rev. A}, vol.~54, pp. 1098--1105, 1996.

\bibitem{camara05}
T.~Camara, H.~Ollivier, and J.-P. Tillich, ``Constructions and performance of
  classes of quantum {LDPC} codes,'' 2005, eprint: quant-ph/0502086.

\bibitem{charpin98}
P.~Charpin, ``Open problems on cyclic codes,'' in \emph{Handbook of Coding
  Theory}, V.~Pless and C.~Huffman, Eds.\hskip 1em plus 0.5em minus 0.4em\relax
  Amsterdam: Elsevier, 1998, pp. 963--1063.

\bibitem{chau97}
H.~Chau, ``Correcting quantum errors in higher spin systems,'' \emph{Phys. Rev.
  A}, vol.~55, pp. R839--R841, 1997.

\bibitem{chau97b}
------, ``Five quantum register error correction code for higher spin
  systems,'' \emph{Phys. Rev. A}, vol.~56, pp. R1--R4, 1997.

\bibitem{cleve97b}
R.~Cleve, ``Quantum stabilizer codes and classical linear codes,'' \emph{Phys.
  Rev. A}, vol.~55, no.~6, pp. 4054--4059, 1997.

\bibitem{cleve97}
R.~Cleve and D.~Gottesman, ``Efficient computations of encodings for quantum
  error correction,'' \emph{Phys. Rev. A}, vol.~56, no.~1, pp. 76--82, 1997.

\bibitem{cohen99}
G.~Cohen, S.~Encheva, and S.~Litsyn, ``On binary constructions of quantum
  codes,'' \emph{IEEE Trans. Inform. Theory}, vol.~45, no.~7, pp. 2495--2498,
  1999.

\bibitem{danielsen05}
L.~Danielsen and M.~Parker, ``On the classification of all self-dual additive
  codes over {GF(4)} of length up to 12,'' 2005, arXiv:math.CO/0504522.

\bibitem{dankert06}
C.~Dankert, R.~Cleve, J.~Emerson, and E.~Livine, ``Exact and approximate
  unitary 2-designs: constructions and applications,'' 2006,
  arXiv.org:quant-ph/0606161.

\bibitem{ariano05}
G.~M. D'{A}riano, P.~Mataloni, and M.~F. SAcchi, ``Generating qudits with
  $d=3,4$ encoded on two-photon states,'' 2005, eprint: quant-ph/0503227v2.

\bibitem{delsarte72}
P.~Delsarte, ``Bounds for unrestricted codes by linear programming,''
  \emph{Philips Res. Reports}, vol.~27, pp. 272--289, 1972.

\bibitem{delsarte75}
------, ``On subfield subcodes of {R}eed-{S}olomon codes,'' \emph{IEEE Trans.
  Inform. Theory}, vol.~21, no.~5, pp. 575--576, 1975.

\bibitem{ekert96}
A.~Ekert and C.~Macchiavello, ``Error correction in quantum communication,''
  \emph{Phys. Rev. Lett.}, vol.~76, pp. 2585--2588, 1996.

\bibitem{ESR+:2007}
{Emerson, J. and Silva, M. and Ryan, C. and Laforest, M. and Baugh, J. and
  Cory, D.~G. and Laflamme, R.}, ``Symmetrized characterization of noise
  quantum processes,'' \emph{Science}, vol. 317, pp. 1893--1896, 2007.

\bibitem{lanyon08}
B.~P.~L. {\em et al.}, ``Quantum computing using shortcuts through higher
  dimensions,'' 2008, eprint: quant-ph/0804.0272.

\bibitem{evans07}
Z.~W.~E. Evans, A.~M. Stephens, J.~H. Cole, and L.~C.~L. Hollenberg, ``Error
  correction optimisation in the presence of {X}/{Z} asymmetry,'' 2007,
  arXiv.org:quant-ph/0709.3875.

\bibitem{feng02}
K.~Feng, ``Quantum codes $[[6,2,3]]_p$, $[[7,3,3]]_p$ ($p\ge 3$) exist,''
  \emph{IEEE Trans. Inform. Theory}, vol.~48, no.~8, pp. 2384--2391, 2002.

\bibitem{feng02b}
------, ``Quantum error-correcting codes,'' in \emph{Coding Theory and
  Cryptology}.\hskip 1em plus 0.5em minus 0.4em\relax Singapore: World
  Scientific, 2002, pp. 91--142.

\bibitem{feng04}
K.~Feng and Z.~Ma, ``A finite {G}ilbert-{V}arshamov bound for pure stabilizer
  quantum codes,'' \emph{IEEE Trans. Inform. Theory}, vol.~50, no.~12, pp.
  3323--3325, 2004.

\bibitem{freedman01}
M.~Freedman and D.~Meyer, ``Projective plane and planar quantum codes,''
  \emph{Found. Comput. Math.}, vol.~1, no.~3, pp. 325--332, 2001.

\bibitem{GLS2}
D.~Gorenstein, R.~Lyons, and R.~Solomon, \emph{The Classification of the Finite
  Simple Groups, Number 2}, ser. Mathematical Surveys and Monographs, vol.
  40.\hskip 1em plus 0.5em minus 0.4em\relax Providence, Rhode Island: AMS,
  1994.

\bibitem{Gorenstein61}
D.~Gorenstein and N.~Zierler, ``A class of error-correcting codes in
  {$p\sp{m}$} symbols,'' \emph{J. Soc. Indust. Appl. Math.}, vol.~9, pp.
  207--214, 1961.

\bibitem{gottesman96}
D.~Gottesman, ``A class of quantum error-correcting codes saturating the
  quantum {H}amming bound,'' \emph{Phys. Rev. A}, vol.~54, pp. 1862--1868,
  1996.

\bibitem{gottesman96b}
------, ``Pasting quantum codes,'' 1996, eprint: quant-ph/9607027.

\bibitem{gottesman97}
------, ``Stabilizer codes and quantum error correction,'' 1997, {C}altech Ph.
  D. Thesis, eprint: quant-ph/9705052.

\bibitem{gottesman99}
------, ``Fault-tolerant quantum computation with higher-dimensional systems,''
  \emph{Chaos, Solitons, Fractals}, vol.~10, no.~10, pp. 1749--1758, 1999.

\bibitem{gottesman02}
------, ``An introduction to quantum error correction,'' in \emph{Quantum
  Computation: A Grand Mathematical Challenge for the Twenty-First Century and
  the Millennium}, S.~J. Lomonaco, Jr., Ed.\hskip 1em plus 0.5em minus
  0.4em\relax Rhode Island: American Mathematical Society, 2002, pp. 221--235,
  eprint: quant-ph/0004072.

\bibitem{gottesman05}
------, ``Quantum error correction and fault-tolerance,'' 2005, eprint:
  quant-ph/0507174.

\bibitem{gottesman01}
D.~Gottesman, K.~Alexei, and A.~Preskill, ``{Encoding a qubit in an
  oscillator},'' \emph{Physical Review A}, vol.~64, no. 012310, 2001.

\bibitem{grassl01}
M.~Grassl, ``Algorithmic aspects of quantum error-correcting codes,'' in
  \emph{The Mathematics of Quantum Computation}, R.~Brylinski and G.~Chen,
  Eds.\hskip 1em plus 0.5em minus 0.4em\relax Boca Raton: Chapman \& Hall/CRC
  Press, 2002, pp. 223--252.

\bibitem{grasslweb}
------, ``Encoding circuits for quantum error-correcting codes,'' 2003,
  available at http://iaks-www.ira.uka.de/grassl/QECC/circuits/index.html.

\bibitem{grassl99b}
M.~Grassl and T.~Beth, ``Quantum {BCH} codes,'' in \emph{Proc. X. Int'l. Symp.
  Theoretical Electrical Engineering, \textup{Magdeburg}}, 1999, pp. 207--212.

\bibitem{grassl00}
------, ``Cyclic quantum error-correcting codes and quantum shift registers,''
  \emph{Proc. Royal Soc. London Series A}, vol. 456, no. 2003, pp. 2689--2706,
  2000.

\bibitem{grassl97}
M.~Grassl, T.~Beth, and T.~Pellizzari, ``Codes for the quantum erasure
  channel,'' \emph{Phys. Rev. Lett. A}, vol.~56, no.~1, pp. 33--38, 1997.

\bibitem{grassl04}
M.~Grassl, T.~Beth, and M.~R{\"o}tteler, ``On optimal quantum codes,''
  \emph{Internat. J. Quantum Information}, vol.~2, no.~1, pp. 757--775, 2004.

\bibitem{grassl99}
M.~Grassl, W.~Geiselmann, and T.~Beth, ``Quantum {R}eed-{S}olomon codes,'' in
  \emph{{A}pplied {A}lgebra, {A}lgebraic {A}lgorithms and {E}rror-correcting
  {C}odes (Honolulu, HI, 1999)}, ser. Lecture Notes in Comput. Sci., vol.
  1719.\hskip 1em plus 0.5em minus 0.4em\relax Berlin: Springer, 1999, pp.
  231--244.

\bibitem{grassl03}
M.~Grassl, M.~R{\"o}tteler, and T.~Beth, ``Efficient quantum circuits for
  non-qubit quantum error-correcting codes,'' \emph{Internat. J. Found. Comput.
  Sci.}, vol.~14, no.~5, pp. 757--775, 2003.

\bibitem{grove01}
L.~Grove, \emph{Classical Groups and Geometric Algebra}.\hskip 1em plus 0.5em
  minus 0.4em\relax Providence, Rhode Island: American Mathematical Society,
  2001.

\bibitem{Hocquenghem59}
A.~Hocquenghem, ``Codes correcteurs d'erreurs,'' \emph{Chiffres}, vol.~2, pp.
  147--156, 1959.

\bibitem{huffman03}
W.~C. Huffman and V.~Pless, \emph{Fundamentals of Error-Correcting
  Codes}.\hskip 1em plus 0.5em minus 0.4em\relax Cambridge: University Press,
  2003.

\bibitem{ioffe07}
L.~Ioffe and M.~M\'{e}zard, ``Asymmetric quantum error-correcting codes,''
  \emph{Phys. Rev. Lett. A}, vol.~75, p. 032345, 2007.

\bibitem{isaacs94}
I.~Isaacs, \emph{Character Theory of Finite Groups}.\hskip 1em plus 0.5em minus
  0.4em\relax New York: Dover, 1994.

\bibitem{kasami71}
T.~Kasami and S.~Lin, ``On majority-logic decoding for duals of primitive
  polynomial codes,'' \emph{IEEE Trans. Inform. Theory}, vol.~17, no.~3, pp.
  322--331, 1971.

\bibitem{kasami68}
T.~Kasami, S.~Lin, and W.~W. Peterson, ``New generalizations of the
  {R}eed-{M}uller codes {P}art {I} : {P}rimitive codes,'' \emph{IEEE Trans.
  Inform. Theory}, vol.~14, no.~2, pp. 189--199, 1968.

\bibitem{kasami68b}
------, ``Polynomial codes,'' \emph{IEEE Trans. Inform. Theory}, vol.~14,
  no.~6, pp. 807--814, 1968.

\bibitem{kempe06}
J.~Kempe, ``Approaches to quantum error correction,'' 2006, quant-ph/0612185.

\bibitem{ketkar06}
A.~Ketkar, A.~Klappenecker, S.~Kumar, and P.~K. Sarvepalli, ``Nonbinary
  stabilizer codes over finite fields,'' \emph{IEEE Trans. Inform. Theory},
  vol.~52, no.~11, pp. 4892--4914, 2006.

\bibitem{kim02}
J.-L. Kim, ``New quantum-error-correcting codes from {H}ermitian
  self-orthogonal codes over {GF}(4),'' in \emph{Proc. of the Sixth Intl.
  Conference on Finite Fields and Applications, Oaxaca, Mexico, May
  21-25}.\hskip 1em plus 0.5em minus 0.4em\relax Springer-Verlag, 2002, pp.
  209--213.

\bibitem{kim03}
J.-L. Kim and V.~Pless, ``Designs in additive codes over {GF}(4),''
  \emph{Designs, Codes and Cryptography}, vol.~30, pp. 187--199, 2003.

\bibitem{kim04}
J.-L. Kim and J.~Walker, ``Nonbinary quantum error-correcting codes from
  algebraic curves,'' 2004, submitted to a special issue of Com$^2$MaC
  Conference on Association Schemes, Codes and Designs in Discrete Math.

\bibitem{kitaev97}
A.~Kitaev, ``Quantum computations: algorithms and error correction,''
  \emph{Russian Math. Surveys}, vol.~52, no.~6, pp. 1191--1249, 1997.

\bibitem{klappenecker033}
A.~Klappenecker and M.~R{\"otteler}, ``Beyond stabilizer codes {II}: {C}lifford
  codes,'' \emph{IEEE Transaction on Information Theory}, vol.~48, no.~8, pp.
  2396--2399, 2002.

\bibitem{klappenecker031}
A.~Klappenecker and M.~R{\"o}tteler, ``Clifford codes,'' in \emph{Mathematics
  of Quantum Computation}, R.~Brylinski and G.~Chen, Eds.\hskip 1em plus 0.5em
  minus 0.4em\relax Boca Raton: Chapman \& Hall/CRC Press, 2002, pp. 253--273.

\bibitem{pre06}
A.~Klappenecker and P.~K. Sarvepalli, ``Clifford code constructions of operator
  quantum error-correcting codes,'' 2006, accepted subjected to revision, {\em
  IEEE Trans. Inform. Theory}.

\bibitem{pre07}
------, ``On subsystem codes beating the quantum {H}amming or {S}ingleton
  bound,'' \emph{Proc. Roy. Soc. London A}, vol. 463, pp. 2887--2905, 2007.

\bibitem{knill96b}
E.~Knill, ``Group representations, error bases and quantum codes,'' 1996, {L}os
  Alamos National Laboratory Report LAUR-96-2807.

\bibitem{knill96a}
------, ``Non-binary unitary error bases and quantum codes,'' 1996, {L}os
  Alamos National Laboratory Report LAUR-96-2717.

\bibitem{knill06}
------, ``On protected realizations of quantum information,'' 2006, eprint:
  quant-ph/0603252.

\bibitem{KnLa97}
E.~Knill and R.~Laflamme, ``{A theory of quantum error--correcting codes},''
  \emph{Physical Review~A}, vol.~55, no.~2, pp. 900--911, 1997.

\bibitem{knill02}
E.~Knill, R.~Laflamme, A.~Ashikhmin, H.~Barnum, L.~Viola, and W.~H. Zurek,
  ``Introduction to quantum error correction,'' 2002, quant-ph/0207170.

\bibitem{konyagin99}
S.~Konyagin and I.~Shparlinksi, \emph{Character Sums with Exponential Functions
  and their Applications}.\hskip 1em plus 0.5em minus 0.4em\relax Cambridge:
  Cambridge University Press, 1999.

\bibitem{kou01}
Y.~Kou, S.~Lin, and M.~P.~C. Fossorier, ``Low-density parity check codes based
  on finite geometries: {A} rediscovery and new results,'' \emph{IEEE Trans.
  Inform. Theory}, vol.~47, no.~7, pp. 2711--2736, 2001.

\bibitem{kribs05}
D.~W. Kribs, R.~Laflamme, and D.~Poulin, ``Unified and generalized approach to
  quantum error correction,'' \emph{Phys. Rev. Lett.}, vol.~94, no. 180501,
  2005.

\bibitem{kribs05b}
D.~W. Kribs, R.~Laflamme, D.~Poulin, and M.~Lesosky, ``Operator quantum error
  correction,'' 2005, eprint: quant-ph/0504189.

\bibitem{levenshtein95}
V.~Levenshtein, ``Krawtchouk polynomials and universal bounds for codes and
  designs in {H}amming spaces,'' \emph{IEEE Trans. Inform. Theory}, vol.~41,
  no.~5, pp. 1303--1321, 1995.

\bibitem{li04}
R.~Li and X.~Li, ``Binary construction of quantum codes of minimum distance
  three and four,'' \emph{IEEE Trans. Inform. Theory}, vol.~50, no.~6, pp.
  1331--1336, 2004.

\bibitem{lidl97}
R.~Lidl and H.~Niederreiter, \emph{Finite Fields}, 2nd~ed.\hskip 1em plus 0.5em
  minus 0.4em\relax Cambridge: Cambridge University Press, 1997.

\bibitem{lin04}
S.~Lin and D.~J. Costello, Jr., \emph{Error Control Coding}, 2nd~ed.\hskip 1em
  plus 0.5em minus 0.4em\relax Upper Saddle River, New Jersey: Prentice Hall,
  2004.

\bibitem{mackay04}
D.~J.~C. MacKay, G.~Mitchison, and P.~L. McFadden, ``Sparse-graph codes for
  quantum error correction,'' \emph{IEEE Trans. Inform. Theory}, vol.~50,
  no.~10, pp. 2315--2330, 2004.

\bibitem{macwilliams63}
F.~MacWilliams, ``A theorem on the distribution of weights in a systematic
  code,'' \emph{Bell Syst. Tech. J.}, vol.~42, pp. 79--94, 1963.

\bibitem{macwilliams77}
F.~MacWilliams and N.~Sloane, \emph{The Theory of Error-Correcting
  Codes}.\hskip 1em plus 0.5em minus 0.4em\relax Amsterdam: North-Holland,
  1977.

\bibitem{martin04}
W.~Martin, ``A physics-free introduction to quantum error correcting codes,''
  \emph{Util. Math.}, vol.~65, 2004.

\bibitem{matsumoto00}
R.~Matsumoto and T.~Uyematsu, ``Constructing quantum error correcting codes for
  $p^m$-state systems from classical error correcting codes,'' \emph{IEICE
  Trans. Fundamentals}, vol. E83-A, no.~10, pp. 1878--1883, 2000.

\bibitem{mceliece77}
R.~McEliece, E.~Rodemich, j.~H. Rumsey, and L.~Welch, ``New upper bounds on the
  rate of a code via the {D}elsarte-{M}ac{W}illiams inequalities,'' \emph{IEEE
  Trans. Inform. Theory}, vol.~23, no.~2, p. 157, 1977.

\bibitem{moreno94}
O.~Moreno and C.~J. Moreno, ``The {M}acwilliams-{S}loane conjecture on the
  tightness of the {C}arlitz-{U}chiyama bound and the weights of the duals of
  {BCH} codes,'' \emph{IEEE Trans. Inform. Theory}, vol.~40, no.~2, pp.
  1894--1907, 1994.

\bibitem{moreno98}
O.~Moreno, J.~P. Pederson, and D.~Polemi, ``An improved {S}erre bound for
  elementary abelian extensions of $\mathbf{F}_q(x)$ and the generalized
  {H}amming weights of duals of {BCH} codes,'' \emph{IEEE Trans. Inform.
  Theory}, vol.~44, no.~3, pp. 1291--1293, 1998.

\bibitem{moreva06}
E.~V. Moreva, G.~A. Maslennikov, S.~S. Straupe, and S.~P. Kilik, ``Realization
  of four-state qudits using biphotons,'' 2006, eprint: quant-ph/0601091v2.

\bibitem{nielsen00}
M.~A. Nielsen and I.~L. Chuang, \emph{Quantum Computation and Quantum
  Information}.\hskip 1em plus 0.5em minus 0.4em\relax Cambridge, UK: Cambridge
  University Press, 2000.

\bibitem{nielsen05}
M.~A. Nielsen and D.~Poulin, ``Algebraic and information-theoretic conditions
  for operator quantum error-correction,'' 2005, eprint: quant-ph/0506069.

\bibitem{ore44}
O.~Ore, ``Galois connexions,'' \emph{Trans. Am. Math. Soc.}, vol.~55, pp.
  493--513, 1944.

\bibitem{pellikaan04}
R.~Pellikaan and X.-W. Wu, ``List decoding of $q-$ary {R}eed-{M}uller codes,''
  \emph{IEEE Trans. Inform. Theory}, vol.~50, no.~4, pp. 679--682, 2004.

\bibitem{peterson72}
W.~Peterson and W.~Weldon~Jr., \emph{Error-correcting Codes}.\hskip 1em plus
  0.5em minus 0.4em\relax Cambridge: MIT Press, 1972.

\bibitem{postol01}
M.~S. Postol, ``A proposed quantum low density parity check code,'' 2001,
  eprint: quant-ph/0108131.

\bibitem{poulin05}
D.~Poulin, ``Stabilizer formalism for operator quantum error correction,''
  \emph{Phys. Rev. Lett.}, vol.~95, no. 230504, 2005.

\bibitem{poulin06}
------, ``Operator quantum error correction: An overview,'' 2006, available at
  http://www.ist.caltech.edu/\~ dpoulin/publications/OQEC.pdf.

\bibitem{poulin08}
D.~Poulin and Y.~Chung, ``On the iterative decoding of sparse quantum codes,''
  2008, eprint: quant-ph/0801.1241.

\bibitem{preskill}
J.~Preskill, ``Quantum computation,'' 2006,
  http://www.theory.caltech.edu/people/pre\-skill/ph229/.

\bibitem{rains98}
E.~Rains, ``Quantum weight enumerators,'' \emph{IEEE Trans. Inform. Theory},
  vol.~44, no.~4, pp. 1388--1394, 1998.

\bibitem{rains99d}
------, ``Monotonicity of the quantum linear programming bound,'' \emph{IEEE
  Trans. Inform. Theory}, vol.~45, no.~7, pp. 2489--2492, 1999.

\bibitem{rains99}
------, ``Nonbinary quantum codes,'' \emph{IEEE Trans. Inform. Theory},
  vol.~45, pp. 1827--1832, 1999.

\bibitem{rains99c}
------, ``Quantum codes of minimum distance two,'' \emph{IEEE Trans. Inform.
  Theory}, vol.~45, no.~1, pp. 266--271, 1999.

\bibitem{rains99b}
------, ``Quantum shadow enumerators,'' \emph{IEEE Trans. Inform. Theory},
  vol.~45, no.~7, pp. 2361--2366, 1999.

\bibitem{rains00}
------, ``Polynomial invariants of quantum codes,'' \emph{IEEE Trans. Inform.
  Theory}, vol.~46, no.~1, pp. 54--59, 2000.

\bibitem{ralph07}
T.~C. Ralph, K.~J. Resch, and A.~Gilchrist, ``Efficient {T}offoli gates using
  qudits,'' \emph{Phys. Rev. A}, vol.~75, p. 022313, 2007.

\bibitem{robinson95}
D.~Robinson, \emph{A Course in the Theory of Groups}, 2nd~ed.\hskip 1em plus
  0.5em minus 0.4em\relax New York: Springer, 1995.

\bibitem{roetteler04}
M.~R{\"o}tteler, M.~Grassl, and T.~Beth, ``On quantum {MDS} codes,'' in
  \emph{Proc. 2004 IEEE Intl. Symposium on Information Theory, Chicago}, 2004,
  p. 355.

\bibitem{pre05}
P.~K. Sarvepalli and A.~Klappenecker, ``Nonbinary quantum codes from
  {H}ermitian curves,'' in \emph{Applied Algebra, Algebraic Algorithms, and
  Error Correcting Codes -- 16th International Symposium, AAECC-16, Las Vegas,
  Nevada, Feb 20-24, 2006, Proceedings}, ser. LNCS, M.~Fossorier and et~al.,
  Eds.\hskip 1em plus 0.5em minus 0.4em\relax Springer-Verlag, pp. 136--143.

\bibitem{klappenecker05p1}
------, ``Nonbinary quantum {R}eed-{M}uller codes,'' in \emph{IEEE
  International Symposium on Information Theory, Adelaide, Australia}, 2005,
  pp. 1023--1027.

\bibitem{pre08b}
------, ``Encoding subsystem codes,'' 2008, eprint: arXiv:0806.4954.

\bibitem{pre08}
P.~K. Sarvepalli, M.~R\"{o}tteler, and A.~Klappenecker, ``Asymmetric quantum
  {LDPC} codes,'' in \emph{Proc. 2008 IEEE Intl. Symposium on Information
  Theory, Toronto, Canada}, 2008, pp. 305--309.

\bibitem{schlingemann02}
D.~Schlingemann, ``Stabilizer codes can be realized as graph codes,''
  \emph{Quantum Inf. Comput.}, vol.~2, no.~4, pp. 307--323, 2002.

\bibitem{schlingemann00}
D.~Schlingemann and R.~Werner, ``Quantum error-correcting codes associated with
  graphs,'' 2000, eprint: quant-ph/00012111.

\bibitem{seroussi80}
G.~Seroussi and A.~Lempel, ``Factorization of symmetric matrices and
  trace-orthogonal bases in finite fields,'' \emph{SIAM J. Computing}, vol.~9,
  pp. 758--767, 1980.

\bibitem{shor96}
P.~Shor, ``Fault-tolerant quantum computation,'' in \emph{Proceedings of the
  37th Symposium on the Foundations of Computer Science}.\hskip 1em plus 0.5em
  minus 0.4em\relax Los Alamitos: IEEE Computer Society Press, 1996, pp.
  56--65.

\bibitem{shor97}
P.~Shor and R.~Laflamme, ``Quantum analog of the {MacW}illiams identities in
  classical coding theory,'' \emph{Phys. Rev. Lett.}, vol.~78, pp. 1600--1603,
  1997.

\bibitem{shor95}
P.~W. Shor, ``Scheme for reducing decoherence in quantum memory,'' \emph{Phys.
  Rev. A}, vol.~2, pp. 2493--2496, 1995.

\bibitem{shparlinski88}
I.~Shparlinski, ``On the dimension of {BCH} codes,'' \emph{Problemy Peredachi
  Informatsii}, vol.~25, no.~1, pp. 77--80, 1988, (In Russian).

\bibitem{steane96b}
A.~Steane, ``Multiple-particle interference and quantum error correction,''
  \emph{Proc. Roy. Soc. London A}, vol. 452, pp. 2551--2577, 1996.

\bibitem{steane96}
------, ``Simple quantum error correcting codes,'' \emph{Phys. Rev. Lett.},
  vol.~77, pp. 793--797, 1996.

\bibitem{steane99}
------, ``Enlargement of {C}alderbank-{S}hor-{S}teane quantum codes,''
  \emph{IEEE Trans. Inform. Theory}, vol.~45, no.~7, pp. 2492--2495, 1999.

\bibitem{steane99b}
------, ``Quantum {R}eed-{M}uller codes,'' \emph{IEEE Trans. Inform. Theory},
  vol.~45, no.~5, pp. 1701--1703, 1999.

\bibitem{stephens07}
A.~M. Stephens, Z.~W.~E. Evans, S.~J. Devitt, and L.~C.~L. Hollenberg,
  ``Universal quantum computation under asymmetric quantum error correction,''
  2007, arXiv.org:quant-ph/0708.3969.

\bibitem{stichtenoth94}
H.~Stichtenoth and C.~Vo{\ss}, ``Generalized {H}amming weights of trace
  codes,'' \emph{IEEE Trans. Inform. Theory}, vol.~40, no.~2, pp. 554--558,
  1994.

\bibitem{tang05}
H.~Tang, S.~Lin, and K.~A.~S. Abdel-Ghaffar, ``Codes over finite geometries,''
  \emph{IEEE Trans. Inform. Theory}, vol.~51, no.~2, pp. 572--596, 2005.

\bibitem{thangaraj01}
A.~Thangaraj and S.~McLaughlin, ``Quantum codes from cyclic codes over
  {${\textup{GF}}(4\sp m)$},'' \emph{IEEE Trans. Inform. Theory}, vol.~47,
  no.~3, pp. 1176--1178, 2001.

\bibitem{vatan99}
F.~Vatan, V.~Roychowdhury, and M.~Anantram, ``Spatially correlated qubit errors
  and burst-correcting quantum codes,'' \emph{IEEE Trans. Inform. Theory},
  vol.~45, no.~5, pp. 1703--1708, 1999.

\bibitem{vazirani98}
U.~Vazirani, ``On the power of quantum computation,'' \emph{Phil. Tran. Royal
  Soc. London A}, vol. 356, pp. 1759--1768, 1998.

\bibitem{xiaoyan04}
L.~Xiaoyan, ``Quantum cyclic and constacyclic codes,'' \emph{IEEE Trans.
  Inform. Theory}, vol.~50, no.~3, pp. 547--549, 2004.

\bibitem{yue00}
D.-W. Yue and G.-Z. Feng, ``Minimum cyclotomic coset representatives and their
  applications to {BCH} codes and {G}oppa codes,'' \emph{IEEE Trans. Inform.
  Theory}, vol.~46, no.~7, pp. 2625--2628, 2000.

\bibitem{yue96}
D.-W. Yue and Z.-M. Hu, ``On the dimension and minimum distance of {BCH} codes
  over {GF}(q),'' \emph{J. Electron.}, vol.~18, pp. 263--269, 1996, (In
  Chinese).

\end{thebibliography}
\section{Conclusions}
In this paper, we have demonstrated that the subsystem codes can be encoded 
using the techniques used for stabilizer codes. In particular, we have
considered two methods for encoding stabilizer codes -- the standard form
method and the conjugation method. While the standard form method explored
here required us to initialize the gauge qubits to zero, 
it admits two two variants and seems to have
the potential for lower complexity; the exact gains being determined by
the actual codes under consideration. The conjugation method allows us
to initialize the gauge qubits to any state. The disadvantage seems to be
the increased complexity of encoding. It must be emphasized that the standard
form method is equivalent to the conjugation method and it is certainly
possible to use this method to encode subsystem codes so that the gauge
qubits can be initialized to arbitrary states. However, it appears to 
be a little more cumbersome and for this reason we have not investigated
this in this chapter. There is yet another method for encoding stabilizer codes
based on the teleportation due to Knill. We expect that gauge qubits can 
be exploited even in this method to reduce its complexity. It would be
interesting to investigate fault tolerant 
encoding schemes for subsystem codes and how gauge qubits can be 
used to improve fault tolerant thresholds. Finally, we mention that it is
still open how to leverage the subsystem coding in the one way quantum
computer model. 

\section{Appendix}
\paragraph{The logical states of a stabilizer code.}
We assume that our basis input states are of the 
form $\ket{0}^{\otimes^{n-k}}\ket{\alpha_1\ldots\alpha_k}$, where $\alpha_i\in \{ 0, 1 \}$. 
Clearly, we have freedom in the choice of the states into which each of these
states are encoded to. Additionally, we have freedom in the choice of
the encoded operators though they are not entirely unrelated. 
Perhaps, this is best illustrated through an example.
Let us consider Shor's $[[9,1,3]]_2$ code. 
A choice of the logical states for this code is
\begin{eqnarray*}
\ket{\ol{0}}&=&(\ket{000}+\ket{111})(\ket{000}+\ket{111})(\ket{000}+\ket{111}),\\
\ket{\ol{1}}&=&(\ket{000}-\ket{111})(\ket{000}-\ket{111})(\ket{000}-\ket{111}).
\end{eqnarray*}
For this choice of the encoded states the logical $Z$ operator is 
$X^{\otimes^9}$ and the logical $X$ operator is $Z^{\otimes^9}$. 
On the other hand, let us see what happens if we choose the logical states as follows
\begin{eqnarray*}
\ket{\ol{0}}&=&\ket{000000000}+\ket{000111111}+\ket{111000111}+\ket{111111000},\\
\ket{\ol{1}}&=&\ket{111111111}+\ket{111000000}+\ket{000111000}+\ket{000000111}.
\end{eqnarray*}
In this case the encoded $X$ operator is $X^{\otimes^9}$ and encoded $Z$ operator is
$Z^{\otimes^9}$; they are flipped with respect to the previous choice!

So it becomes apparent that the assignment of the encoded operators as logical 
$Z$ or $X$ is flexible and it seems to depend on the choice of the logical states.
But are we free to choose any basis of the codespace as the encoded 
logical states. We can show that this cannot be. 
For instance let us choose the logical zero state to be a superposition of the 
previous two assignments. Then we have 
\begin{eqnarray*}
\ket{\ol{0}}&=&(\ket{000}+\ket{111})(\ket{000}+\ket{111})(\ket{000}+\ket{111})\\
&+&\ket{000000000}+\ket{000111111}+\ket{111000111}\\&+&\ket{111111000}.
\end{eqnarray*}

The possibilities for the logical $Z$ operator\footnote{Including scalar multiples of $i$
will not change our conclusions.} are $\pm X^{\otimes^9}$, $\pm Z^{\otimes^9}$,
$\pm X^{\otimes^9} Z^{\otimes^9}$. But for none of these operators we have 
$\ol{Z}\ket{\ol{0}}=\ket{\ol{0}}$. As these are the only possible encoded operators (modulo the stabilizer which acts trivially in any case), this is not a valid choice for $\ket{\ol{0}}$.
This raises the question what are all the possible valid choices for the logical states. 
Let us look at yet another choice of logical states. 
\begin{eqnarray*}
\ket{\ol{0}}&=&(\ket{000}-\ket{111})(\ket{000}-\ket{111})(\ket{000}-\ket{111}),\\
\ket{\ol{1}}&=&(\ket{000}+\ket{111})(\ket{000}+\ket{111})(\ket{000}+\ket{111}).
\end{eqnarray*}
In this case, the encoded $Z$ and $X$ operators 
are $-X^{\otimes^9}$ and $Z^{\otimes^9}$ respectively. 
This gives us a clue as to the possible logical all zero states
for a given stabilizer code. The all zero logical state is the state in the 
code space that is fixed by the stabilizer and the logical $Z$ operators. 
Assuming that $S$ is the stabilizer and $C_{\mc{P}_n}(S)$, its centralizer, we can 
can pick any $k$ independent commuting generators in $C_{\mc{P}_n}(S)\setminus S Z({\mc{P}_n})$ as $Z$ operators. 
Hence, we have the following lemma.

\begin{lemma}\label{lm:logicalZero}
Let $S$ be the stabilizer of an $[[n,k,d]]_2$ stabilizer code.
If $L \le C_{\mc{P}_n}(S)$ is any subgroup generated by 
$n$ commuting generators such that $L\cap Z({\mc{P}_n})=I$ and $S\le L$, then 
the state stabilized by $L$ is a valid logical all zero state for the stabilizer 
code defined by $S$.
\end{lemma}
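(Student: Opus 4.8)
The plan is to translate the hypotheses into the symplectic language of $\F_2^{2n}$ via the map $\tau$ of equation~(\ref{eq:iso}), and then recognize $L$ as the stabilizer of a unique state that lies in the codespace and is fixed by a genuine set of logical $Z$ operators. First I would pass to images under $\tau$. Since the paper works with the real Pauli group, $Z(\mc{P}_n)=\{\pm I\}=\ker\tau$, so the hypothesis $L\cap Z(\mc{P}_n)=I$ makes $\tau$ injective on $L$; writing $\ol{L}=\tau(L)$ and $\ol{S}=\tau(S)$, the $n$ commuting generators of $L$ map to $n$ linearly independent vectors (independence is forced by $L\cap Z(\mc{P}_n)=I$), whence $\dim_{\F_2}\ol{L}=n$ and $|L|=2^n$. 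Because the generators commute, $\ol{L}$ is totally isotropic, $\ol{L}\le \ol{L}^{\sdual}$; as $\dim\ol{L}^{\sdual}=2n-n=n$, this forces $\ol{L}=\ol{L}^{\sdual}$, i.e.\ $\ol{L}$ is a Lagrangian (maximal isotropic) subspace. From $S\le L\le C_{\mc{P}_n}(S)$ and Lemma~\ref{th:commute} I also obtain $\ol{S}\le \ol{L}\le \ol{S}^{\sdual}$, where $\dim\ol{S}=n-k$.

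Next I would show that $L$ stabilizes a single state which belongs to $Q$. The group $L$ is abelian (commuting generators) and satisfies $L\cap Z(\mc{P}_n)=I$, so by Lemma~\ref{th:closedsubgroup} it is a closed subgroup, and by Lemma~\ref{th:projection} the operator $P_L=\tfrac{1}{|L|}\sum_{E\in L}E$ is an orthogonal projector onto $\Fix(L)$ with $\dim\Fix(L)=\Tr(P_L)=2^n/|L|=2^n/2^n=1$. Thus $L$ fixes a unique unit vector $\ket{\psi_L}$ up to scalar, and since $S\le L$ we have $\Fix(L)\subseteq\Fix(S)=Q$, so $\ket{\psi_L}$ lies in the codespace.

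It then remains to exhibit $k$ logical $Z$ operators contained in $L$ that, together with $S$, fix $\ket{\psi_L}$. Here I would use the induced nondegenerate symplectic form on the quotient $\ol{S}^{\sdual}/\ol{S}\cong\F_2^{2k}$. In this quotient $\ol{L}/\ol{S}$ is a $k$-dimensional Lagrangian subspace, so by the symplectic basis theorem I can choose representatives $\ol{z}_1,\dots,\ol{z}_k\in\ol{L}$ spanning $\ol{L}/\ol{S}$ together with a complementary set $\ol{x}_1,\dots,\ol{x}_k$ satisfying $\langle \ol{z}_i|\ol{x}_j\rangle_s=\delta_{ij}$ and all remaining symplectic products zero. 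Lifting these through $\tau$ back into $C_{\mc{P}_n}(S)\setminus SZ(\mc{P}_n)$, the operators $x_i,z_i$ satisfy exactly the commutation relations of the definition of encoded operators, so the $z_i$ are a valid choice of logical $Z$ operators; choosing the lifts $z_i$ to lie in $L$ (possible since $\tau$ is injective on $L$) gives $L=\langle S,z_1,\dots,z_k\rangle$. Consequently $\ket{\psi_L}$ is fixed by $S$ and by the $k$ logical $Z$ operators $z_1,\dots,z_k$, which is precisely the requirement of Definition~\ref{def:logZero}, so $\ket{\psi_L}$ is a valid logical all-zero state.

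The main obstacle I anticipate is this last step: justifying that the generators of $L$ beyond $S$ genuinely qualify as logical $Z$ operators rather than merely commuting elements of the centralizer. This requires the symplectic basis argument on $\ol{S}^{\sdual}/\ol{S}$ to produce the complementary partners $\ol{x}_i$, and careful sign bookkeeping when lifting from $\F_2^{2n}$ back to $\mc{P}_n$, so that the chosen $z_i$ actually lie in $L$ (and hence fix $\ket{\psi_L}$) while still admitting anticommuting partners $x_i$. The examples preceding the lemma show that different sign choices yield different---but equally valid---logical all-zero states, which is exactly the freedom that this bookkeeping must respect.
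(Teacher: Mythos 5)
Your proposal is correct, and it is in fact more than the paper offers: the paper states this lemma without proof, relying on the informal discussion and the $[[9,1,3]]_2$ examples that precede it. The paper's implicit argument is the one you make precise --- $L=\langle S,z_1,\dots,z_k\rangle$ with the $z_i$ taken as logical $Z$ operators, so the state fixed by $L$ meets Definition~\ref{def:logZero} --- but it never verifies the two points you supply: that $L$ fixes a \emph{unique} state (your trace computation $\Tr(P_L)=2^n/|L|=1$ via Lemma~\ref{th:projection}), and that the extra generators of $L$ genuinely qualify as logical $Z$ operators in the sense of the paper's definition, which requires exhibiting the anticommuting partners $\ol{X}_i$. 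Your symplectic-basis argument on $\ol{S}^{\sdual}/\ol{S}\cong\F_2^{2k}$ is exactly the right tool for that second point, and the observation that signs on the lifted $x_i$ are irrelevant to the commutation relations (Lemma~\ref{th:commute}) disposes of the bookkeeping worry.

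One small caveat on your first step: $L\cap Z(\mc{P}_n)=I$ does not by itself force the $n$ generators to be independent --- a relation among them whose product is $+I$ rather than $-I$ is still consistent with that hypothesis, in which case $|L|<2^n$ and the fixed space has dimension greater than one, so ``the state stabilized by $L$'' is not well defined. The lemma clearly intends $n$ \emph{independent} commuting generators (as in the standard stabilizer formalism), and under that reading your argument goes through without change; it is worth flagging that you are supplying this reading rather than deriving it.
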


The implicit choice of $\ket{\ol{0}}$ made in Lemma~\ref{lm:stabStdForm} 
(by picking the encoded $Z$ operators, at least the representatives) is convenient in
the sense it allows us to speak of a canonical $\ket{\ol{0}}$
without ambiguity. This $\ket{\ol{0}}$ can be conveniently identified
with the state $P\ket{0}^{\otimes^n}$, where it will be recalled that $P$
is the projector for the stabilizer code given as
\begin{eqnarray}
P &=& \frac{1}{|S|}\sum_{M\in S}  M.
\end{eqnarray}

\chapter{Quantum LDPC Codes for Asymmetric Channels\footnotemark}\label{ch:aqecc}
\footnotetext{\copyright 2008 IEEE. Reprinted from, P. K. Sarvepalli, M. R\"{o}tteler, and A. Klappenecker. ``Asymmetric quantum LDPC codes''. In {\em Proc. 2008 IEEE  Intl. Symposium on Inform. Theory, Toronto, Canada},
Jul 6--11, pp.~, 2008.}
  Recently, quantum error-correcting codes were proposed that
  capitalize on the fact that many physical error models lead to a
  significant asymmetry between the probabilities for bit flip and
  phase flip errors. An example for a channel which exhibits such
  asymmetry is the combined amplitude damping and dephasing channel,
  where the probabilities of bit flips and phase flips can be related
  to relaxation and dephasing time, respectively. We give systematic
  constructions of asymmetric quantum stabilizer codes that exploit
  this asymmetry. Our approach is based on a CSS construction that
  combines BCH and finite geometry LDPC codes.


In many quantum mechanical systems the mechanisms for the occurrence
of bit flip and phase flip errors are quite different. In a recent
paper Ioffe and M\'{e}zard \cite{ioffe07} postulated that quantum
error-correction should take into account this asymmetry. The
main argument given in \cite{ioffe07} is that most of the known
quantum computing devices have relaxation times ($T_1$) that are
around $1-2$ orders of magnitude larger than the corresponding
dephasing times $(T_2)$. In general, relaxation leads to both bit flip
and phase flip errors, whereas dephasing only leads to phase flip
errors.  This large asymmetry between $T_1$ and $T_2$ suggests that
bit flip errors occur less frequently than phase flip errors and a
well designed quantum code would exploit this asymmetry of errors to
provide better performance.  In fact, this observation and its
consequences for quantum error correction, especially quantum fault
tolerance, have prompted investigations from various other researchers
\cite{aliferis07,evans07,stephens07}.

Our goal will be as in \cite{ioffe07} to construct asymmetric quantum
codes for quantum memories and at present we do not consider the issue of
fault tolerance. We first quantitatively justify how noise processes,
characterized in terms of $T_1$ and $T_2$, lead to an asymmetry in the
bit flip and phase flip errors. As a concrete illustration of this we
consider the amplitude damping and dephasing channel. For this
channel we can compute the probabilities of bit flip and phase flips
in closed form. In particular, by giving explicit expressions for the
ratio of these probabilities in terms of the ratio $T_1/T_2$, we show
how the channel asymmetry arises.

After providing the necessary background,
we give two systematic constructions of asymmetric quantum codes
based on BCH and LDPC codes, as an alternative to the randomized 
construction of \cite{ioffe07}. 

\section{Background}
\noindent
Recall that a quantum channel that maps a state $\rho$ to 
\begin{eqnarray}
(1-p_x-p_y-p_z)\rho+ p_x\X\rho\X+p_y\Y\rho\Y + p_zZ\rho Z,\label{pauli-channel}
\end{eqnarray}
with 
$  \one= \left[\begin{smallmatrix} 1&0\\0&1\end{smallmatrix}\right]$,
$\X= \left[\begin{smallmatrix} 0&1 \\1&0\end{smallmatrix}\right]$,
$\Y= \left[\begin{smallmatrix} 0&-i \\i&0\end{smallmatrix}\right]$,
$Z= \left[\begin{smallmatrix} 1&0 \\0&-1\end{smallmatrix}\right]$
is called a \textit{Pauli channel}. For a Pauli channel, one can
respectively determine the probabilities $p_x, p_y, p_z$ that an input
qubit in state $\rho$ is subjected to a Pauli $X$, $Y$, or $Z$ error.

A combined \textit{amplitude damping and dephasing channel}
$\mathcal{E}$ with relaxation time $T_1$ and dephasing time $T_2$ that
acts on a qubit with density matrix $\rho=(\rho_{ij})_{i,j\in
\{0,1\}}$ for a time $t$ yields the density matrix
$$ \mathcal{E}(\rho) = 
\left[
\begin{array}{cc}
1-\rho_{11}e^{-t/T_1} & \rho_{01} e^{-t/T_2} \\
\rho_{10} e^{-t/T_2} 
& \rho_{11}e^{-t/T_1} 
\end{array}
\right]. 
$$ This channel is interesting as it models common decoherence
processes fairly well. We would like to determine the probability
$p_x$, $p_y$, and $p_z$ such that an $X$, $Y$, or $Z$ error occurs in
a combined amplitude damping and dephasing channel. However, it turns
out that this question is not well-posed, since $\mathcal{E}$ is not a
Pauli channel, that is, it cannot be written in the form
(\ref{pauli-channel}). However, we can obtain a Pauli channel
$\mathcal{E}_T$ by a technique called twirling
\cite{ESR+:2007,dankert06}. In our case, the twirling consists of
conjugating the channel $\mathcal{E}$ by Pauli matrices and averaging
over the results. The resulting channel $\mathcal{E}_T$ is called the
Pauli-twirl of $\mathcal{E}$ and is explicitly given by
$$ \mathcal{E}_T(\rho) = \frac{1}{4} \sum_{A \in \{ \one, \X,\Y,Z\}}
A^\dagger \mathcal{E}(A\rho A^\dagger )A.$$ 

\begin{theorem} Given a combined amplitude damping and dephasing 
channel $\mathcal{E}$ as above, the associated Pauli-twirled channel
is of the form
$$\mathcal{E}_T(\rho) = (1-p_x-p_y-p_z)\rho+ p_x\X\rho\X+p_y\Y\rho\Y +
p_z Z\rho Z,$$
where $p_x=p_y=(1-e^{-t/T_1})/4$ and $p_z=1/2-p_x-\frac{1}{2}e^{-t/T_2}$. 
In particular,
$$ \frac{p_z}{p_x} =1+2 \frac{1-e^{t/T_1(1-T_1/T_2)}}{e^{t/T_1}-1}.
$$ 
If $t\ll T_1$, then we can approximate this ratio as $2T_1/T_2-1$.
\end{theorem}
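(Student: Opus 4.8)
The plan is to compute the Pauli-twirled channel $\mathcal{E}_T$ explicitly and then read off $p_x,p_y,p_z$, from which the ratio $p_z/p_x$ and its approximation follow by elementary algebra. First I would record the action of $\mathcal{E}$ on a general density matrix $\rho=(\rho_{ij})$, namely
$$\mathcal{E}(\rho)=\begin{bmatrix} 1-\rho_{11}e^{-t/T_1} & \rho_{01}e^{-t/T_2}\\ \rho_{10}e^{-t/T_2} & \rho_{11}e^{-t/T_1}\end{bmatrix},$$
and observe that the twirl $\mathcal{E}_T(\rho)=\tfrac14\sum_{A\in\{\one,\X,\Y,Z\}}A^\dagger\mathcal{E}(A\rho A^\dagger)A$ is by construction a Pauli channel, so it suffices to determine its three error probabilities. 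The cleanest route is to exploit linearity: evaluate $\mathcal{E}_T$ on the Pauli operator basis $\{\one,\X,\Y,Z\}$ itself, since a Pauli channel acts diagonally on this basis. Concretely, a Pauli channel with probabilities $(p_x,p_y,p_z)$ fixes $\one$ and scales $\X\mapsto(1-2p_y-2p_z)\X$, $\Y\mapsto(1-2p_x-2p_z)\Y$, $Z\mapsto(1-2p_x-2p_y)Z$. Thus I would compute the three eigenvalues $\lambda_X,\lambda_Y,\lambda_Z$ of $\mathcal{E}_T$ on $\X,\Y,Z$ and invert the linear relations to extract $p_x,p_y,p_z$.

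Next I would carry out the eigenvalue computation. For each basis element $B\in\{\X,\Y,Z\}$ one needs $\tfrac14\sum_A A^\dagger\mathcal{E}(ABA^\dagger)A$. Since $\mathcal{E}$ is affine-linear in $\rho$ and the conjugations $A(\cdot)A^\dagger$ permute the Paulis up to sign, each term is straightforward to evaluate from the matrix form above. I expect to find $\lambda_X=\lambda_Y=e^{-t/T_2}$ coming from the off-diagonal dephasing factor together with the symmetric contribution of amplitude damping, and $\lambda_Z=e^{-t/T_1}$ reflecting the population decay on the diagonal; the $\one$ component simply enforces trace preservation. Inverting then gives $p_x=p_y=(1-e^{-t/T_1})/4$ and $p_z=\tfrac12-p_x-\tfrac12 e^{-t/T_2}$, matching the claimed values. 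The symmetry $p_x=p_y$ is the expected signature of the fact that amplitude damping contributes equally to $\X$ and $\Y$ after twirling, while the excess phase error $p_z$ absorbs the additional dephasing.

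Having the closed forms, the ratio is pure algebra: substituting into $p_z/p_x$ and clearing the common denominator $1-e^{-t/T_1}$ yields
$$\frac{p_z}{p_x}=1+2\,\frac{1-e^{t/T_1(1-T_1/T_2)}}{e^{t/T_1}-1},$$
after rewriting $\tfrac12-\tfrac12 e^{-t/T_2}$ in terms of $e^{-t/T_1}$ and simplifying the exponentials. For the approximation in the regime $t\ll T_1$, I would Taylor-expand the exponentials to first order, using $e^{x}\approx 1+x$ for the small arguments $t/T_1$ and $t/T_1(1-T_1/T_2)$, so that the fraction reduces to $(t/T_1)(T_1/T_2-1)\big/(t/T_1)=T_1/T_2-1$, giving $p_z/p_x\approx 1+2(T_1/T_2-1)=2T_1/T_2-1$. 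The main obstacle I anticipate is bookkeeping accuracy rather than conceptual difficulty: the twirl involves sixteen conjugation terms and it is easy to drop signs or mis-track the affine (identity-valued) part of $\mathcal{E}$ that enforces trace preservation, so I would organize the computation through the Pauli-eigenvalue picture above to minimize the chance of such errors and to make the symmetry $p_x=p_y$ manifest from the outset.
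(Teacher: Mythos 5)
Your proposal is correct, and it reaches the stated values of $p_x,p_y,p_z$ by a genuinely different route from the paper. The paper works at the level of Kraus operators: it rewrites $A_0,A_1,A_2$ as linear combinations of $\one,\X,\Y,Z$, expands $\mathcal{E}(\rho)$ as a sum of terms $P_i\rho P_j$, and then invokes the twirling lemma of Dankert et al.\ to delete the cross terms $i\neq j$, leaving exactly the diagonal Pauli-channel part. You instead work at the level of the Pauli transfer matrix: you evaluate $\mathcal{E}_T$ on the operator basis $\{\one,\X,\Y,Z\}$, use the fact that a Pauli channel is diagonal there with eigenvalues $\lambda_X=1-2p_y-2p_z$ (and cyclic permutations), and invert. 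Your claimed eigenvalues check out against the Kraus form: $\mathcal{E}(\X)=e^{-t/T_2}\X$, $\mathcal{E}(\Y)=e^{-t/T_2}\Y$, $\mathcal{E}(Z)=e^{-t/T_1}Z$, and the non-unital part $\mathcal{E}(\one)=\one+\gamma Z$ is exactly what the twirl annihilates, so $\lambda_X=\lambda_Y=e^{-t/T_2}$, $\lambda_Z=e^{-t/T_1}$, whence $p_x=p_y=(1-e^{-t/T_1})/4$ and $p_z=\frac12-p_x-\frac12 e^{-t/T_2}$, in agreement with the paper's $\frac{p_z}{p_x}=\frac{2-\gamma-2\sqrt{1-\lambda-\gamma}}{\gamma}$. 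Each approach buys something: yours is shorter, makes $p_x=p_y$ and trace preservation manifest, and needs only the linearity of $\mathcal{E}$ (via its Kraus form, since the displayed matrix formula is only affine on trace-one inputs, a point you rightly flag) plus the sign bookkeeping $\frac14\sum_A s_A(i)s_A(j)=\delta_{ij}$; the paper's computation is heavier but produces the full Pauli ($\chi$-matrix) expansion of the untwirled channel, including the off-diagonal terms such as $\one\rho Z$ and $\X\rho\Y$, which is information your eigenvalue method discards. The final algebra for $p_z/p_x$ and the first-order expansion giving $2T_1/T_2-1$ coincide with the paper's.
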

\begin{proof}
  The Kraus operator decomposition \cite{nielsen00} of $\mathcal{E}$ is
\begin{eqnarray}
\mathcal{E}(\rho) = \sum_{k=0}^2A_k\rho A_k^\dagger,\label{eq:krausDecomp}
\end{eqnarray}
where  $A_0= \left[\begin{smallmatrix} 1&0
\\0&\sqrt{1-\lambda-\gamma}\end{smallmatrix}\right]; A_1 = \left[\begin{smallmatrix} 0&
0\\0&\sqrt{\lambda}\end{smallmatrix}\right];A_2 = \left[\begin{smallmatrix} 0&
\sqrt{\gamma}\\0& 0\end{smallmatrix}\right],$
and $\sqrt{1-\gamma-\lambda}= e^{-t/T_2}$,  $1-\gamma=e^{-t/T_1}$. We can rewrite the Kraus
operators $A_i$  as 
\begin{eqnarray*}
A_0=\frac{1+\sqrt{1-\lambda-\gamma}}{2}\one +\frac{1-\sqrt{1-\lambda-\gamma}}{2} Z,\\
A_1=\frac{\sqrt{\lambda}}{2}\one -\frac{\sqrt{\lambda}}{2} Z, \quad A_2=\frac{\sqrt{\gamma}}{2}\X -\frac{\sqrt{\gamma}}{2i}\Y.
\end{eqnarray*}
Rewriting $\E(\rho)$ in terms of Pauli matrices 
yields 
\begin{eqnarray}
\E(\rho) &=& \frac{2-\gamma+2\sqrt{1-\lambda-\gamma}}{4} \rho + \frac{\gamma}{4}\X\rho\X +\frac{\gamma}{4}\Y\rho\Y \nonumber \\ 
&+& \frac{2-\gamma-2\sqrt{1-\lambda-\gamma}}{4} Z\rho Z \nonumber \\
&-& \frac{\gamma}{4}\one\rho Z -\frac{\gamma}{4} Z\rho\one +\frac{\gamma}{4i} \X\rho\Y
-\frac{\gamma}{4i}\Y\rho\X.
\end{eqnarray}
It follows that the Pauli-twirl channel $\mathcal{E}_T$ is of the
claimed form, see~\cite[Lemma 2]{dankert06}.
Computing the ratio $p_z/p_x$ we get
\begin{eqnarray*}
	\frac{p_z}{p_x} &=&\frac{2-\gamma-2\sqrt{1-\lambda-\gamma}}{\gamma} =\frac{1+e^{-t/T_1}-2e^{-t/T_2}}{1-e^{-t/T_1}},\\
	&=&1+2\frac{e^{-t/T_1}-e^{-t/T_2}}{1-e^{-t/T_1}} = 1+2 \frac{1-e^{t/T_1-t/T_2}}{e^{t/T_1}-1} \\
	&=&1+2 \frac{1-e^{t/T_1(1-T_1/T_2)}}{e^{t/T_1}-1}.
\end{eqnarray*}
If $t\ll T_1$, then we can approximate the ratio as $2T_1/T_2-1$, as claimed. 
\end{proof}

Thus, an asymmetry in the $T_1$ and $T_2$ times does translate to an
asymmetry in the occurrence of bit flip and phase flip errors.  Note
that $p_x=p_y$ indicating that the $\Y$ errors are as unlikely as the
$\X$ errors.  We shall refer to the ratio $p_z/p_x$ as the channel
asymmetry and denote this parameter by $A$.

Asymmetric quantum codes use the fact that the phase flip errors are much
more likely than the bit flip errors or the combined bit-phase flip
errors.  Therefore the code has different error correcting capability
for handling different type of errors. We require the code to correct
many phase flip errors but it is not required to handle the same
number of bit flip errors. If we assume a CSS code
\cite{calderbank98}, then we can meaningfully speak of $X$-distance
and $Z$-distance. A CSS stabilizer code that can detect all $X$ errors
up to weight $d_x-1$ is said to have an $X$-distance of $d_x$.
Similarly if it can detect all $Z$ errors upto weight $d_z-1$, then it
is said to have a $Z$-distance of $d_z$. We shall denote such a code
by $[[n,k,d_x/d_z]]_q$ to indicate it is an asymmetric code, see also
\cite{steane96} who was the first to use a notation that allowed to
distinguish between $X$- and $Z$-distances.  We could also view this
code as an $[[n,k,\min\{ d_x,d_z\}]]_q$ stabilizer code.  Further
extension of these metrics to an additive non-CSS code is an
interesting problem, but we will not go into the details here.

Recall that in the CSS construction a pair of codes are used, one for
correcting the bit flip errors and the other for correcting the phase
flip errors. Our choice of these codes will be such that the code for
correcting the phase flip errors has a larger distance than the code
for correcting the bit flip errors. We restate the CSS construction in
a form convenient for asymmetric stabilizer codes.

\begin{lemma}[CSS Construction \cite{calderbank98}]\label{lm:css}
  Let $C_x, C_z$ be linear codes over $\F_q^n$ with the parameters
  $[n,k_x]_q$, and $[n,k_z]_q$ respectively. Let $C_x^\perp\subseteq
  C_z$.  Then there exists an $[[n,k_x+k_z-n,d_x/d_z]]_q$ asymmetric
  quantum code, where $d_x=\wt(C_x\setminus C_z^\perp)$ and
  $d_z=\wt(C_z\setminus C_x^\perp)$.
\end{lemma}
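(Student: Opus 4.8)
The plan is to prove the asymmetric CSS construction (Lemma~\ref{lm:css}) by reducing it to the standard (symmetric) CSS code construction already established in Lemma~\ref{th:css}, and then refining the distance analysis so that the $X$-distance and $Z$-distance are tracked separately. First I would invoke Lemma~\ref{th:css} with the two classical codes $C_1 = C_x$ and $C_2 = C_z$. The hypothesis $C_x^\perp \subseteq C_z$ is exactly the nesting condition $C_2^\perp \le C_1$ required there (since $C_x^\perp \subseteq C_z$ is equivalent to $C_z^\perp \subseteq C_x$, i.e. $C_2^\perp \le C_1$), so Lemma~\ref{th:css} immediately yields a stabilizer code of dimension $k_x + k_z - n$, establishing the claimed parameters $[[n,k_x+k_z-n,\cdot]]_q$. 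This disposes of the existence and the dimension count.

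The substance of the proof is the distance claim, and here I would work directly with the CSS stabilizer structure rather than the single combined minimum distance reported in Lemma~\ref{th:css}. Recall that the stabilizer associated with the CSS code splits into a purely $Z$-type part coming from $C_x^\perp$ (parity checks that detect bit-flip, i.e. $X$, errors) and a purely $X$-type part coming from $C_z^\perp$ (parity checks that detect phase-flip, i.e. $Z$, errors). A bit-flip error pattern $e \in \F_q^n$ is undetectable precisely when it lies in the centralizer of the $Z$-type stabilizer but acts nontrivially on the code, which in classical terms means $e \in C_x$ but $e \notin C_z^\perp$; hence the smallest undetectable $X$-error has weight $d_x = \wt(C_x \setminus C_z^\perp)$. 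By the symmetric argument for phase-flip errors, exchanging the roles of the two codes, the smallest undetectable $Z$-error has weight $d_z = \wt(C_z \setminus C_x^\perp)$. I would phrase both of these using the detectability criterion for stabilizer codes (Lemma~\ref{th:detectable}): an $X$-type error is detectable iff either it is in the stabilizer up to phase or it fails to commute with some stabilizer generator, and translating the non-commutation condition through the symplectic/trace-symplectic form recovers exactly the set-difference conditions above.

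The key bookkeeping step is to verify that the nesting hypothesis guarantees these two distance quantities are well-defined and that the $X$- and $Z$-error analyses genuinely decouple. Because the code is CSS, the $X$-part and $Z$-part of the stabilizer are separately closed and mutually commuting (this is what $C_x^\perp \subseteq C_z$ buys us, mirroring the computation $\tr(c_2 \cdot c_1' - c_2' \cdot c_1) = 0$ in the proof of Lemma~\ref{th:css}), so an arbitrary error factors into its $X$- and $Z$-components and each is analyzed against the corresponding stabilizer part independently. I would then simply record that detecting all $X$-errors of weight $< d_x$ and all $Z$-errors of weight $< d_z$ is, by definition, what it means for the code to have $X$-distance $d_x$ and $Z$-distance $d_z$.

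The main obstacle, I expect, is not any deep technical difficulty but rather making the definitions of $X$-distance and $Z$-distance precise enough that the set-difference expressions $\wt(C_x \setminus C_z^\perp)$ and $\wt(C_z \setminus C_x^\perp)$ fall out cleanly, and confirming that the decoupling of $X$- and $Z$-errors is complete (in particular that a combined $Y$-type error cannot evade detection in a way that undercuts either separate distance). Since $Y = XZ$ acts as a simultaneous bit and phase flip, an undetectable combined error would require its $X$-component to be undetectable as a bit flip \emph{and} its $Z$-component to be undetectable as a phase flip; but the asymmetric distance guarantee only promises correction of the two error types separately, so I would be careful to state the conclusion exactly as the lemma does --- namely in terms of $d_x$ and $d_z$ individually --- and note that viewing the code as an ordinary $[[n,k,\min\{d_x,d_z\}]]_q$ stabilizer code recovers the symmetric guarantee as a consistency check.
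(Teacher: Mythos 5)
Your proof is correct and follows essentially the route the paper itself takes: the paper states this lemma without a separate proof, as a restatement of the CSS construction of Lemma~\ref{th:css}, and your argument is exactly that lemma's proof (a self-orthogonal product code under the trace-symplectic form, dimension from Theorem~\ref{th:stabilizer}) with the distance claim refined via the detectability criterion of Lemma~\ref{th:detectable} so that $X$- and $Z$-errors are tracked separately. One cosmetic point: instantiating Lemma~\ref{th:css} literally with $C_1=C_x$, $C_2=C_z$ places $C_x^\perp$ in the $X$-part of the stabilizer, whereas your (correct) distance analysis needs $C_x^\perp$ to generate the $Z$-type checks, so you should take $C_1=C_z$, $C_2=C_x$ (equivalently swap the two tensor factors); the dimension count $k_x+k_z-n$ is unaffected since it is symmetric in the two codes.
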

If in the above construction $d_x=\wt(C_x)$ and $d_z=\wt(C_z)$, then
we say that the code is pure. 

In the theorem above and elsewhere in this paper  $\F_q$ denotes a finite field with $q$ elements.
We also denote a $q$-ary narrow-sense primitive BCH code of length $n=q^m-1$
and design distance $\delta$ as $\bch(\delta)$. 

\section{Asymmetric Quantum Codes from LDPC Codes}

In \cite{ioffe07}, Ioffe and M\'{e}zard used a combination of BCH and
LDPC codes to construct asymmetric codes. The intuition being that the
stronger LDPC code should be used for correcting the phase flip errors
and the BCH code can be used for the infrequent bit flips. This
essentially reduces to finding a good LDPC code such that the dual of
the LDPC code is contained in the BCH code. They solve this problem by
randomly choosing codewords in the BCH code which are of low weight
(so that they can be used for the parity check matrix of the LDPC
code).  However, this method leaves open how good the resulting LDPC
code is.  For instance, the degree profiles of the resulting code are
not regular and there is little control over the final degree profiles
of the code.  Furthermore, it is not apparent what ensemble or degree
profiles one will use to analyze the code.

We propose an alternate scheme that uses LDPC codes to construct asymmetric
stabilizer codes. We propose two families of quantum codes based on LDPC codes.  In the
first case we use LDPC codes for both the $X$ and $Z$ channel while in the
second construction we will use a combination of BCH and LDPC codes. But 
first, we will need the following facts about generalized Reed-Muller 
codes, (\cite{kasami68}) and finite geometry LDPC codes, (\cite{kou01,tang05}).

\subsection{Finite Geometry LDPC Codes } 
Let us denote by $\eg(m,p^s)$ the Euclidean finite geometry over
$\F_{p^s}$ consisting of $p^{ms}$ points. For our purposes it suffices
to use the fact that this geometry is equivalent to the vector space
$\F_{p^s}^m$.  A $\mu$-dimensional subspace of $\F_{p^s}^m$ or its
coset is called a \textsl{$\mu$-flat}. Assume that $0\leq \mu_1<
\mu_2\leq m$.  Then we denote by $N_{\eg}(\mu_2,\mu_1,s,p)$ the number
of $\mu_1$-flats in a $\mu_2$-flat and by $A_{\eg}(m,\mu_2,\mu_1,
s,p)$, the number of $\mu_2$-flats that contain a given $\mu_1$-flat.
These are given by (see \cite{tang05})
\begin{eqnarray}
N_{\eg}(\mu_2,\mu_1,s,p)&=& q^{(\mu_2-\mu_1)}\prod_{i=1}^{\mu_1}\frac{q^{\mu_2-i+1}-1}{q^{\mu_1-i+1}-1},\label{eq:nEG}\\
A_{\eg}(m,\mu_2,\mu_1,s,p)&=& \prod_{i=\mu_1+1}^{\mu_2}\frac{q^{m-i+1}-1}{q^{\mu_2-i+1}-1}, \label{eq:aEG}
\end{eqnarray}
where $q=p^s$.  Index all the $\mu_1$-flats from $i=1$ to
$n=N_{\eg}(m,\mu_1,s,p)$ as $F_i$.  Let $F$ be a $\mu_2$-flat in
$\eg(m,p^s)$.  Then we can associate an incidence vector to $F$ with
respect to the $\mu_1$ flats as follows.
$$
\mathbf{i}_F = \left\{i_j\mid \begin{array}{cl}i_j =1 &\mbox{ if $F_j$ is contained in } F \\
i_j=0& \text{otherwise.}\end{array}
\right\}.
$$
Index the $\mu_2$-flats from $j=1$ to $J=N_{\eg}(m,\mu_2,s,p)$.
Construct the $J\times n $ matrix $H_{\eg}^{(1)}(m,\mu_2,\mu_1,s,p)$
whose rows are the incidence vectors of all the $\mu_2$-flats with
respect to the $\mu_1$-flats. This matrix is also referred to as the
incidence matrix.  Then the type-I Euclidean geometry code from
$\mu_2$-flats and $\mu_1$-flats is defined to be the null space,
i.\,e., Euclidean dual code) of the $\F_p$-linear span of
$H_{\eg}^{(1)}(m,\mu_2,\mu_1,s,p)$. This is denoted as
$C_{\eg}^{(1)}(m,\mu_2,\mu_1,s,p)$.  Let
$H_{\eg}^{(2)}(m,\mu_2,\mu_1,s,p) =
H_{\eg}^{(1)}(m,\mu_2,\mu_1,s,p)^t.$ 
The type-II Euclidean
geometry code $C_{\eg}^{(2)}(m,\mu_2,\mu_1,s,p)$ is defined 
as the
null space of $H_{\eg}^{(2)}(m,\mu_2,\mu_1,s,p)$. Let us now consider
the $\mu_2$-flats and $\mu_1$-flats that do not contain the origin of
$\eg(m,p^s)$. Now form the incidence matrix of the $\mu_2$-flats with
respect to the $\mu_1$-flats not containing the origin.  The null
space of this incidence matrix gives us a quasi-cyclic code in
general, which we denote by $C_{\eg,c}^{(1)}(m,\mu_2,\mu_1,s,p)$, see
\cite{tang05}.

\subsection{Generalized Reed-Muller Codes} 
Let $\alpha$ be a primitive element in $\F_{q^{m}}$.  The cyclic
generalized Reed-Muller code of length $q^m-1$ and order $\nu$ is
defined as the cyclic code with the generator polynomial whose roots
$\alpha^j$ satisfy $0<j\leq m(q-1)-\nu-1$. The generalized Reed-Muller
code is the singly extended code of length $q^m$.  It is denoted as
$\Rm_q(\nu,m)$. The dual of a GRM code is also a GRM code
\cite{assmus98,blahut03,kasami68}.  It is known that
\begin{eqnarray}
\Rm_q(\nu,m)^\perp = \Rm_q(\nu^\perp,m),
\end{eqnarray}
where $\nu^\perp = m(q-1)-1-\nu$.

Let $C$ be a linear code over $\F_{q^s}^n$. Then we define
$C|_{\F_q}$, the \textsl{subfield subcode} of $C$ over $\F_q^n$ as the
codewords of $C$ which are entirely in $\F_q^n$, (see
\cite[pages~116-120]{huffman03}). Formally this can be expressed as
\begin{eqnarray}
C|_{\F_q} = \{c\in C\mid c\in \F_q^n \}.
 \end{eqnarray}
Let $C\subseteq \F_{q^l}^n$. The the \textsl{trace code} of $C$ over $\F_q$ is defined  as
\begin{eqnarray}
\tr_{q^l/q}(C) = \{\tr_{q^l/q}(c) \mid c\in C\}.
\end{eqnarray}
There are interesting relations between the trace code and the
subfield subcode. One of which is the following result which we will
need later.
\begin{lemma}\label{lm:inclusion}
Let $C\subseteq \F_{q^l}^n$. Then $C|_{\F_q}$, the subfield subcode of $C$
is contained in $\tr_{q^l/q}(C)$, the trace code of $C$. In other words
$$ C|_{\F_q}\subseteq \tr_{q^l/q}(C).$$
\end{lemma}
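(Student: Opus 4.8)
The statement to prove is Lemma~\ref{lm:inclusion}: for a code $C\subseteq \F_{q^l}^n$, the subfield subcode $C|_{\F_q}$ is contained in the trace code $\tr_{q^l/q}(C)$.

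The plan is to show that every codeword lying in $C|_{\F_q}$ can be realized as the trace of some codeword of $C$. First I would take an arbitrary $c\in C|_{\F_q}$. By definition this means $c\in C$ and simultaneously $c\in \F_q^n$, so every coordinate $c_i$ of $c$ already lies in the base field $\F_q$. The key observation is the behaviour of the trace map on elements of the base field: for any $x\in \F_q$ we have $\tr_{q^l/q}(x)=\sum_{i=0}^{l-1} x^{q^i}=lx$, since $x^{q^i}=x$ for all $i$ when $x\in \F_q$. Thus the trace acts on base-field elements as multiplication by the integer $l$ (reduced modulo the characteristic $p$).

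The remaining work is to produce a preimage in $C$ whose trace equals $c$. The clean approach is to exhibit a scalar multiple of $c$ itself. If $l\not\equiv 0 \bmod p$, then $l$ is invertible in $\F_q$, and setting $c'=l^{-1}c$ gives a codeword $c'\in C$ (by $\F_q$-linearity of $C$, since $l^{-1}\in\F_q\subseteq\F_{q^l}$) with $\tr_{q^l/q}(c')=l\cdot l^{-1}c=c$, so $c\in\tr_{q^l/q}(C)$. The genuinely delicate case is when $p\mid l$, where the trace annihilates all of $\F_q^n$ and the scalar-multiple trick collapses; here I would instead invoke surjectivity of the trace map $\F_{q^l}\to\F_q$ as an $\F_q$-linear map, choosing for each coordinate an element $\beta\in\F_{q^l}$ with $\tr_{q^l/q}(\beta)=1$, and then use linearity of $C$ together with the coordinatewise structure to assemble a preimage $\beta c\in C$ satisfying $\tr_{q^l/q}(\beta c_i)=c_i\tr_{q^l/q}(\beta)=c_i$ for each $i$, again using $c_i\in\F_q$ to pull it out of the trace.

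I expect the main obstacle to be handling the characteristic-dividing case cleanly, because there the naive identification of the trace with multiplication by $l$ gives zero; the resolution is to remember that the trace $\F_{q^l}/\F_q$ is always a nonzero (hence surjective, by $\F_q$-linearity onto the one-dimensional $\F_q$) functional, so a single element $\beta$ of trace $1$ exists and lets us scale any base-field coordinate back out. The fact that $c$ has all coordinates in $\F_q$ is exactly what makes $\tr_{q^l/q}(\beta c_i)=c_i$ work, and the $\F_{q^l}$-linearity of $C$ guarantees $\beta c\in C$, completing the containment $C|_{\F_q}\subseteq\tr_{q^l/q}(C)$.
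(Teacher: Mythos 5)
Your proof is correct and, in its second case, is exactly the paper's argument: pick $\beta\in\F_{q^l}$ with $\tr_{q^l/q}(\beta)=1$ (which exists by surjectivity of the trace), note $\tr_{q^l/q}(\beta c)=c\,\tr_{q^l/q}(\beta)=c$ since $c$ has coordinates in $\F_q$, and use $\F_{q^l}$-linearity of $C$ to get $\beta c\in C$. The initial case split on whether $p\mid l$ is unnecessary, since the trace-one argument handles every $l$ uniformly.
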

\begin{proof}
  Let $c\in C|_{\F_q} \subseteq \F_q^n$ and $\alpha \in \F_{q^l}$.
  Then $\tr_{q^l/q}(\alpha c)= c\tr_{q^l/q}(\alpha)$ as $c\in \F_q^n$.
  Since trace is a surjective form, there exists some $\alpha\in
  \F_{q^l}$, such that $\tr_{q^l/q}(\alpha)=1$. This implies that
  $c\in \tr_{q^l/q}(C)$. Since $c$ is an arbitrary element in
  $C|_{\F_q}$ it follows that $ C|_{\F_q}\subseteq \tr_{q^l/q}(C)$.
\end{proof}

Let $q=p^s$, then the Euclidean geometry code of order $r$ over
$\eg(m,p^s)$ is defined as the dual of the subfield subcode of
$\Rm_{q}((q-1)(m-r-1),m)$, \cite[page~448]{blahut03}. The type-I LDPC
code $C_{\eg}^{(1)}(m,\mu,0,s,p)$ code is an Euclidean geometry code
of order $\mu-1$ over $\eg(m,p^s)$, see \cite{tang05}.
Hence its dual is the subfield subcode of $\Rm_q((q-1)(m-\mu),m)$
code.  In other words,
\begin{eqnarray}
C_{\eg}^{(1)}(m,\mu,0,s,p)^\perp = 
\Rm_q((q-1)(m-\mu),m)|_{\F_p}.\label{eq:fgSubfiledCode}
\end{eqnarray}
Further, Delsarte's theorem \cite{delsarte75} tells us that 
\begin{eqnarray*}
C_{\eg}^{(1)}(m,\mu,0,s,p)&=&\Rm_q((q-1)(m-\mu),m)|_{\F_p}^\perp ,\\
&=&\tr_{q/p}\left( \Rm_q((q-1)(m-\mu),m)^\perp\right)\\
&=&  \tr_{q/p}(\Rm_q(\mu(q-1)-1,m)).
\end{eqnarray*}
Hence,  $C_{\eg}^{(1)}(m,\mu,0,s,p)$ code can also be  related to $\Rm_q(\mu(q-1)-1,m)$ as 
\begin{eqnarray}
  C_{\eg}^{(1)}(m,\mu,0,s,p)=\tr_{q/p}(\Rm_q(\mu(q-1)-1),m).\label{eq:fgTraceCode}
\end{eqnarray}

\subsection{New Families of Asymmetric Quantum Codes}
With the previous preparation we are now ready to construct asymmetric quantum
codes from finite geometry LDPC codes. 

\begin{theorem}[Asymmetric EG LDPC Codes]\label{th:asymQldpc}
  Let $p$ be a prime, with $q=p^s$ and $s\geq 1,m\geq 2$. Let $1<
  \mu_z <m$ and $m-\mu_z+1 \leq \mu_x<m$.  Then there exists an
$$
[[p^{ms},k_x+k_z-p^{ms} , d_x/d_z]]_p
$$
asymmetric EG LDPC code, where 
$$k_x=\dim C_{\eg}^{(1)}(m,\mu_x,0,s,p); \quad  k_z=\dim C_{\eg}^{(1)}(m,\mu_z,0,s,p).$$ For the distances 
$d_x\geq A_{\eg}(m,\mu_x,\mu_x-1,s,p)+1$ and $d_z\geq
A_{\eg}(m,\mu_z,\mu_z-1,s,p)+1$ hold.
\end{theorem}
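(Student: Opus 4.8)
The plan is to realize the claimed family as a CSS code (Lemma~\ref{lm:css}) built from two type-I Euclidean geometry LDPC codes of the same length. Concretely, I would set $C_x = C_{\eg}^{(1)}(m,\mu_x,0,s,p)$ and $C_z = C_{\eg}^{(1)}(m,\mu_z,0,s,p)$. Both are $\F_p$-linear codes whose block length is the number of points of $\eg(m,p^s)$, namely $N_{\eg}(m,0,s,p)=p^{ms}$, and their dimensions are $k_x$ and $k_z$ by definition. The two things left to verify are the CSS compatibility condition $C_x^\perp \subseteq C_z$ and the two distance bounds; once these are in hand, Lemma~\ref{lm:css} delivers an $[[p^{ms},k_x+k_z-p^{ms},d_x/d_z]]_p$ asymmetric code, and reading off $n=p^{ms}$ and $k=k_x+k_z-p^{ms}$ completes the parameter count.

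For the containment, I would translate both codes into generalized Reed-Muller language. By equation~(\ref{eq:fgSubfiledCode}), $C_x^\perp = \Rm_q((q-1)(m-\mu_x),m)|_{\F_p}$, the subfield subcode, while by equation~(\ref{eq:fgTraceCode}), $C_z = \tr_{q/p}(\Rm_q(\mu_z(q-1)-1,m))$, a trace code. Lemma~\ref{lm:inclusion} (subfield subcode $\subseteq$ trace code) then gives
\begin{equation*}
C_x^\perp \;\subseteq\; \tr_{q/p}\big(\Rm_q((q-1)(m-\mu_x),m)\big).
\end{equation*}
Since the trace operator is monotone under inclusion and GRM codes are nested ($\Rm_q(\nu,m)\subseteq \Rm_q(\nu',m)$ whenever $\nu\le \nu'$), it remains only to check the order inequality $(q-1)(m-\mu_x)\le \mu_z(q-1)-1$. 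This is equivalent to $(q-1)(m-\mu_x-\mu_z)\le -1$, which holds precisely because the hypothesis $m-\mu_z+1\le \mu_x$ forces $\mu_x+\mu_z\ge m+1$, hence $m-\mu_x-\mu_z\le -1$ and $(q-1)(m-\mu_x-\mu_z)\le -(q-1)\le -1$. Therefore $C_x^\perp\subseteq C_z$, as required.

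With the CSS code in place, its asymmetric distances are $d_x=\wt(C_x\setminus C_z^\perp)$ and $d_z=\wt(C_z\setminus C_x^\perp)$. Because $0$ lies in $C_z^\perp$ and in $C_x^\perp$, the set differences satisfy $C_x\setminus C_z^\perp\subseteq C_x\setminus\{0\}$ and $C_z\setminus C_x^\perp\subseteq C_z\setminus\{0\}$, so $d_x\ge \wt(C_x)$ and $d_z\ge \wt(C_z)$. I would then invoke the standard minimum-distance bound for type-I EG-LDPC codes from the finite-geometry literature (\cite{kou01,tang05}), namely $\wt\big(C_{\eg}^{(1)}(m,\mu,0,s,p)\big)\ge A_{\eg}(m,\mu,\mu-1,s,p)+1$, which bounds any nonzero codeword weight below by one more than the number of $\mu$-flats through a fixed $(\mu-1)$-flat. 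Applying this with $\mu=\mu_x$ and $\mu=\mu_z$ yields $d_x\ge A_{\eg}(m,\mu_x,\mu_x-1,s,p)+1$ and $d_z\ge A_{\eg}(m,\mu_z,\mu_z-1,s,p)+1$.

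The main obstacle I anticipate is the dual-containment step: it requires chaining the Delsarte-type subfield-subcode/trace-code identities (\ref{eq:fgSubfiledCode})--(\ref{eq:fgTraceCode}), Lemma~\ref{lm:inclusion}, and GRM nesting while keeping the direction of every inclusion and inequality straight, and it is exactly here that the hypothesis $m-\mu_z+1\le \mu_x$ is consumed, being the sharp condition that makes the GRM orders nest the right way. By contrast the distance estimates are routine once the EG-LDPC minimum-distance bound is cited, since $d_x$ and $d_z$ only need to dominate the underlying classical minimum distances.
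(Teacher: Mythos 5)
Your proposal is correct and follows essentially the same route as the paper's proof: both identify $C_x^\perp$ as the subfield subcode $\Rm_q((q-1)(m-\mu_x),m)|_{\F_p}$ and $C_z$ as the trace code $\tr_{q/p}(\Rm_q(\mu_z(q-1)-1,m))$, apply Lemma~\ref{lm:inclusion} together with GRM nesting (consuming $m-\mu_z+1\le\mu_x$ in exactly the same order inequality), and then quote the EG-LDPC distance bound from \cite{tang05}. The only cosmetic difference is that you chain the inclusions upward from $C_x^\perp$ through trace codes while the paper chains them downward from $C_z$ through subfield subcodes; your explicit remark that $d_x\ge\wt(C_x)$ and $d_z\ge\wt(C_z)$ is a small extra courtesy the paper leaves implicit.
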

\begin{proof}
  Let $C_z=C_{\eg}^{(1)}(m,\mu_z,0,s,p)$.  Then from
  equation~(\ref{eq:fgTraceCode}) we have
\begin{eqnarray*}
C_z &=& \tr_{q/p}(\Rm_q(\mu_z(q-1)-1,m).
\end{eqnarray*}
By Lemma~\ref{lm:inclusion} we know that 
\begin{eqnarray*}
C_z &\supseteq&  \Rm_q(\mu_z(q-1)-1,m)|_{\F_p},\\
C_z&\supseteq & \Rm_q((q-1)(m-(m-\mu_z+1)),m)|_{\F_p},
\end{eqnarray*}
where the last inclusion follows from the nesting property of the
generalized Reed-Muller codes. For any order $\mu_x$ such that
$m-\mu_z+1\leq \mu_x <m$, let $C_x=C_{\eg}^{(1)}(m,\mu_x,0,s,p)$. Then
$C_x$ is an LDPC code whose dual $C_x^\perp =
\Rm_q((q-1)(m-\mu_x),m)|_{\F_p}$ is contained in $C_z$. Thus we can
use Lemma~\ref{lm:css} to form an asymmetric code with the parameters
$$ 
[[p^{ms},k_x+k_z-p^{ms} , d_x/d_z]]_p
$$
The distance of $C_z$ and $C_x$ are at lower bounded as 
$d_x \geq A_{\eg}(m,\mu_x,\mu_x-1,s,p)+1$ 
and $d_z \geq A_{\eg}(m,\mu_z,\mu_z-1,s,p)+1$  (see \cite{tang05}).
\end{proof}

In the construction just proposed, we should choose $C_z$ to be a
stronger code compared to $C_x$. We have given the construction over 
a nonbinary alphabet even though the case $p=2$ might be of particular 
interest.

We briefly turn our attention back to the depolarizing 
channel. The LDPC codes designed for the asymmetric channels 
will not in general perform well on the depolarizing channel. 
In fact constructing good quantum LDPC codes for the depolarizing
channel remains a difficult problem and a satisfactory solution 
is yet to be advanced. We contribute to the ongoing discussion
in this topic by drawing upon the finite geometry LDPC codes
as we did for the asymmetric codes. The codes presented in 
Theorem~\ref{th:asymQldpc} can under certain conditions 
lead to LDPC codes that are suitable for use on the depolarizing
channel. 

\begin{corollary}[EG LDPC Codes for Depolaring Channel]\label{co:symQldpc}
 Let $p$ be a prime, with $q=p^s$ and $s\geq 1,m\geq 2$. Let
   $\ceil{ (m+1)/2} \leq \mu <m$.  Then there exists an
$
[[p^{ms},2k-p^{ms}, d]]_p
$
symmetric EG LDPC code, where 
$k=\dim C_{\eg}^{(1)}(m,\mu,0,s,p)$. For the distance
$d \geq A_{\eg}(m,\mu,\mu-1,s,p)+1$  holds.
\end{corollary}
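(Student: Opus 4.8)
The plan is to obtain this corollary as a direct specialization of Theorem~\ref{th:asymQldpc}, the asymmetric EG LDPC construction. The key observation is that the depolarizing channel treats $X$ and $Z$ errors symmetrically, so I want a code with equal $X$- and $Z$-distance. This corresponds to choosing a single order $\mu$ and setting both codes in the CSS pair to be the same finite geometry LDPC code, i.e., taking $C_x = C_z = C_{\eg}^{(1)}(m,\mu,0,s,p)$ in the underlying construction. With $k_x = k_z = k = \dim C_{\eg}^{(1)}(m,\mu,0,s,p)$, the parameters $[[p^{ms}, k_x+k_z-p^{ms}, d_x/d_z]]_p$ of Theorem~\ref{th:asymQldpc} collapse to $[[p^{ms}, 2k-p^{ms}, d]]_p$ with $d_x = d_z = d$, which is exactly the claimed symmetric code.

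First I would verify that the hypotheses of Theorem~\ref{th:asymQldpc} are met when $\mu_x = \mu_z = \mu$. The theorem requires $1 < \mu_z < m$ and $m - \mu_z + 1 \le \mu_x < m$. Setting $\mu_x = \mu_z = \mu$, the first condition reads $1 < \mu < m$, and the second condition becomes $m - \mu + 1 \le \mu$, i.e., $m + 1 \le 2\mu$, which is equivalent to $\mu \ge (m+1)/2$, or $\mu \ge \lceil (m+1)/2 \rceil$ since $\mu$ is an integer. This is precisely the hypothesis $\lceil (m+1)/2 \rceil \le \mu < m$ stated in the corollary. (I should check the boundary: $\lceil(m+1)/2\rceil \ge 2$ for all $m \ge 2$ except possibly small $m$, so the condition $\mu > 1$ is automatically subsumed; one verifies $\lceil(m+1)/2\rceil \ge 2$ holds for $m \ge 2$, so no extra constraint is lost.)

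With the hypotheses confirmed, the dimension count and distance bound transfer verbatim. The crucial point that makes $C_x^\perp \subseteq C_z$ hold in this symmetric case is exactly the chain of inclusions established inside the proof of Theorem~\ref{th:asymQldpc}: when $m - \mu + 1 \le \mu$, the nesting property of generalized Reed-Muller codes together with Delsarte's theorem and Lemma~\ref{lm:inclusion} guarantees that $C_x^\perp = \Rm_q((q-1)(m-\mu),m)|_{\F_p}$ is contained in $C_z = \tr_{q/p}(\Rm_q(\mu(q-1)-1,m))$. The self-orthogonality condition of the CSS construction (Lemma~\ref{lm:css}) is therefore satisfied, yielding a genuine quantum code. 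The distance bound $d \ge A_{\eg}(m,\mu,\mu-1,s,p)+1$ follows immediately from the two asymmetric bounds $d_x, d_z \ge A_{\eg}(m,\mu,\mu-1,s,p)+1$, which now coincide.

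The main obstacle, such as it is, is not a deep one but rather bookkeeping: confirming that the inequality $\lceil(m+1)/2\rceil \le \mu$ is genuinely equivalent to the dual-containment requirement $m-\mu+1 \le \mu$ for all relevant $m$, and checking that no degenerate case (e.g.\ the code becoming trivial or $2k - p^{ms} \le 0$) slips through. Since the corollary only claims existence of the code with the stated parameters and does not assert nontriviality of the dimension, the proof reduces entirely to invoking Theorem~\ref{th:asymQldpc} with the diagonal choice $\mu_x = \mu_z = \mu$ and translating the parameter constraints. I would therefore present the proof in a single short paragraph: set the two orders equal, verify the two inequalities reduce to the stated hypothesis, and read off the parameters.
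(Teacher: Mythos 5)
Your proposal is correct and is exactly the intended argument: the paper states Corollary~\ref{co:symQldpc} without proof as an immediate specialization of Theorem~\ref{th:asymQldpc} obtained by taking $\mu_x=\mu_z=\mu$, and your translation of the constraint $m-\mu+1\le\mu$ into $\mu\ge\lceil(m+1)/2\rceil$ (with the observation that $\lceil(m+1)/2\rceil\ge 2$ subsumes $\mu>1$ for $m\ge 2$) is precisely the required bookkeeping. Nothing further is needed.
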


Our next construction makes use of the cyclic finite geometry
codes.  Our goal will be to find a small BCH code whose dual is
contained in a cyclic Euclidean geometry LDPC code. For solving this
problem we need to know the cyclic structure of
$C_{\eg,c}^{(1)}(m,\mu,0,s,p)$. Let $\alpha$ be a primitive element in
$\F_{p^{ms}}$. Then the roots of the generator polynomial of
$C_{\eg,c}^{(1)}(m,\mu,0,s,p)$ are given by
\cite[Theorem~6]{kasami71}, see also \cite{kasami68b,lin04}. Now, 
$$
Z= \{\alpha^h\mid 0< \max_{0\leq l<s} W_{p^s}(h p^l) \leq
(p^s-1)(m-\mu) \},
$$
where $W_{q}(h)$ is the $q$-ary weight of $h = h_0+h_1q+\cdots +
h_kq^{k-1}$, i.\,e., $W_q(h)=\sum h_i$.  The finite geometry code
$C_{\eg,c}^{(1)}(m,\mu,0,s,p)$ is actually an $(\mu-1,p^s)$ Euclidean
geometry code.  The roots of the generator polynomial of the dual code
are given by
$$
Z^{\perp}= \{\alpha^h \mid \min_{0\leq l<s} W_{p^s}(hp^l) < \mu(p^s-1) \}.
$$
In fact, the dual code is the even-like subcode of a primitive
polynomial code of length $p^{ms}-1$ over $\F_p$ and order $m-\mu$,
whose generator polynomial, by \cite[Theorem~6]{kasami68b}, has the
roots
$$
Z_p= \{\alpha^h \mid 0< \min_{0\leq l<s} W_{p^s}(hp^l) < \mu(p^s-1) \}.
$$
Thus $Z^\perp = Z_p\cup \{ 0\}$.  Now by \cite[Theorem~11]{kasami68b},
$Z_p$ and therefore $Z^\perp$ contain the sequence of consecutive
roots, $\alpha,\alpha^2,\ldots, \alpha^{\delta_0-1}$, where
$\delta_0=(R+1)p^{Qs}-1$ and $m(p^s-1)-(m-\mu)(p^s-1) = Q(p^s-1)+R$.
Simplifying, we see that $R=0$ and $Q=\mu$ giving $\delta_0 = p^{\mu s}-1$.
It follows that 
\begin{eqnarray*}
C_{\eg,c}^{(1)}(m,\mu,0,s,p)^\perp &=& \Rm_q(m,(q-1)(m-\mu))|_{\F_p} \\
&\subseteq& \bch(\delta_0).
\end{eqnarray*}

Thus we have solved the problem of construction of the asymmetric
stabilizer codes in a dual fashion to that of \cite{ioffe07}. Instead of
finding an LDPC code whose parity check matrix is contained in a given
BCH code, we have found a BCH code whose parity check matrix is
contained in a given finite geometry LDPC code. This gives us the
following result.
\begin{theorem}[Asymmetric BCH-LDPC stabilizer codes]
  Let $C_z=C_{\eg,c}^{(1)}(m,\mu,0,s,p)$ and $\delta\leq \delta_0=p^{\mu s}-1$.  
  Let $n=p^{ms}-1$ and $C_x=\bch(\delta)\subseteq \F_p^n$. Then there exists an
$$[[n,k_x+k_z-n,d_x/d_z]]_p$$
asymmetric stabilizer code where $d_z\geq A_{\eg}(m,\mu,\mu-1,s,p)$,
$d_x\geq \delta$  and $k_x=\dim C_x$, $k_z=\dim C_z$.
\end{theorem}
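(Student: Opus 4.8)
**The plan is to apply the CSS construction (Lemma~\ref{lm:css}) to a BCH code $C_x = \bch(\delta)$ and a cyclic finite geometry LDPC code $C_z = C_{\eg,c}^{(1)}(m,\mu,0,s,p)$, where the key containment $C_x^\perp \subseteq C_z$ has already been established in the analysis preceding the theorem statement.** First I would collect the three structural facts derived in the discussion above: the root set $Z^\perp$ of the dual of $C_{\eg,c}^{(1)}(m,\mu,0,s,p)$ contains the consecutive roots $\alpha,\alpha^2,\dots,\alpha^{\delta_0-1}$ with $\delta_0 = p^{\mu s}-1$ (via~\cite[Theorem~11]{kasami68b}), the consequent inclusion $C_{\eg,c}^{(1)}(m,\mu,0,s,p)^\perp \subseteq \bch(\delta_0)$, and the distance bound $d_z \geq A_{\eg}(m,\mu,\mu-1,s,p)$ for the LDPC code from~\cite{tang05}.

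The core step is to verify the CSS hypothesis $C_x^\perp \subseteq C_z$. Since $C_x = \bch(\delta)$ with $\delta \leq \delta_0 = p^{\mu s}-1$, and since BCH codes are nested (a smaller design distance gives a larger code), we have $\bch(\delta) \supseteq \bch(\delta_0)$, hence $C_x^\perp = \bch(\delta)^\perp \subseteq \bch(\delta_0)^\perp$. Dualizing the established inclusion $C_z^\perp \subseteq \bch(\delta_0)$ gives $\bch(\delta_0)^\perp \subseteq C_z$, so chaining these yields $C_x^\perp \subseteq C_z$, exactly as required. I would then invoke Lemma~\ref{lm:css} with $C_x, C_z$ of dimensions $k_x = \dim C_x$ and $k_z = \dim C_z$ over $\F_p^n$ with $n = p^{ms}-1$, producing an asymmetric code with parameters $[[n, k_x+k_z-n, d_x/d_z]]_p$.

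It remains to justify the two distance bounds. For the $X$-distance, $d_x = \wt(C_x\setminus C_z^\perp) \geq \wt(C_x) \geq \delta$, since $C_x = \bch(\delta)$ has minimum distance at least its design distance $\delta$ (the BCH bound), and removing a subset can only increase the minimum weight. For the $Z$-distance, $d_z = \wt(C_z\setminus C_x^\perp) \geq \wt(C_z) \geq A_{\eg}(m,\mu,\mu-1,s,p)$, using the finite geometry distance bound quoted above. Both inequalities follow the same pattern used in the proof of Theorem~\ref{th:asymQldpc}.

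The main obstacle is not the CSS bookkeeping, which is routine once the containment is in place, but rather ensuring that the cyclic structure translates cleanly: specifically, that the root-set computation giving $\delta_0 = p^{\mu s}-1$ is correct and that the even-like/odd-like subcode subtleties in relating $Z^\perp$ to $Z_p$ (via $Z^\perp = Z_p \cup \{0\}$) do not disturb the consecutive-root count. Since that computation has already been carried out in the text preceding the statement and reduces to verifying $R=0$, $Q=\mu$ in the expression $m(p^s-1)-(m-\mu)(p^s-1) = Q(p^s-1)+R$, I would simply cite it. The proof is therefore essentially a one-line assembly: the technical work lives entirely in the finite-geometry root analysis, and the theorem itself is an immediate corollary of that analysis together with Lemma~\ref{lm:css}.
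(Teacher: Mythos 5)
Your proposal is correct and follows essentially the same route as the paper: the paper presents this theorem as an immediate consequence of the preceding root-set computation showing $C_{\eg,c}^{(1)}(m,\mu,0,s,p)^\perp \subseteq \B(\delta_0)$, combined with the nesting $\B(\delta_0)\subseteq\B(\delta)$ for $\delta\le\delta_0$ and the CSS construction of Lemma~\ref{lm:css}. Your dualized phrasing of the containment chain and the two distance bounds match the intended argument exactly.
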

Perhaps an example will be helpful at this juncture.
\begin{example}
  Let $m=s=p=2$ and $\mu=1$. Then $C_{\eg,c}^{(1)}(2,1,0,2,2)$ is a
  cyclic code whose generator polynomial has roots given by
\begin{eqnarray*}
  Z&=&\{\alpha^h|0<\max_{0\leq l<2 }W_{2^2}(2^lh) \leq (m-\mu)(p^s-1)=3 \}\\
  &=&\{  \alpha^1, \alpha^2, \alpha^3, \alpha^4, \alpha^6, \alpha^8, \alpha^9, \alpha^{12} \}.
\end{eqnarray*}
As there are 4 consecutive roots and $|Z|=8$, it defines a $[15,7,\geq
5]$ code.  The roots of the generator polynomial of the dual code are
given by
\begin{eqnarray*}
Z^\perp&=& \{\alpha^h|0<\min_{0\leq l<2 }W_{2^2}(2^lh) \leq \mu(p^s-1)=(2^2-1) \}\\
&=&\{ \alpha^0, \alpha^1, \alpha^2, \alpha^4, \alpha^5, \alpha^8, \alpha^{10} \}.
\end{eqnarray*}
We see that $Z^\perp$ has two consecutive roots excluding $1$,
therefore the dual code is contained in a narrowsense BCH code with
design distance 3. Note that $p^{\mu s}-1=3$.  Thus we can choose
$C_x=\bch(3)$ and $C_z=C_{\eg,c}^{(1)}(2,1,0,2,2)$ and apply
Lemma~\ref{lm:css} to construct a $[[15,3,3/5]]_2$ asymmetric code.
\end{example}

We can also state the above construction as in \cite{ioffe07}, that is
given a primitive BCH code of design distance $\delta$, find an
LDPC code whose dual is contained in it. It must be pointed out that
in case of asymmetric codes derived from LDPC codes, the asymmetry
factor $d_x/d_z$ is not as indicative of the code performance as in
the case of bounded distance decoders.  For $m=p=2$, we can derive
explicit relations for the parameters of the codes.
\begin{corollary}\label{co:2dCyclic}
  Let $C=C_{\eg,c}^{(1)}(2,1,0,s,2)$ and $\delta =2t+1\leq 2^{s}-1$.
  Then there exists an
$$[[2^{2s}-1,2^{2s}-3^s -s(\delta-1),\delta/2^s+1]]_2 $$ asymmetric stabilizer code. 
\end{corollary}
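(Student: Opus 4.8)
The plan is to obtain the statement as a direct specialization of the preceding Asymmetric BCH--LDPC stabilizer code theorem, setting $m=2$, $p=2$, and $\mu=1$, and then to carry out the arithmetic on the four parameters. First I would verify the hypotheses: the length is $n=p^{ms}-1=2^{2s}-1$, and the constraint on the designed distance reads $\delta=2t+1\le \delta_0=p^{\mu s}-1=2^{s}-1$, which is exactly the assumption of the corollary. Writing $q=p^{s}=2^{s}$, the distance contributed by the LDPC component is governed by $A_{\eg}(2,1,0,s,2)$, which by (\ref{eq:aEG}) equals $\prod_{i=1}^{1}(q^{2-i+1}-1)/(q^{1-i+1}-1)=(q^{2}-1)/(q-1)=q+1=2^{s}+1$. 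Combined with $d_x\ge\delta$ coming from the BCH side, this produces the distance pair $\delta/2^{s}+1$.

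Next I would compute the two classical dimensions. For $C_x=\bch(\delta)$ I regard it as a narrow-sense primitive BCH code over $\F_2$ of length $2^{2s}-1$, that is, with extension degree $2s$. Since $\delta=2t+1\le 2^{s}-1\le 2^{\lceil 2s/2\rceil}+1$, Lemma~\ref{th:bchdim} applies with base field $\F_2$ and yields
\[
k_x=2^{2s}-1-2s\left\lceil(\delta-1)(1-1/2)\right\rceil=2^{2s}-1-2s\lceil t\rceil=2^{2s}-1-2st,
\]
using that $(\delta-1)/2=t$ is already an integer. For $C_z=C_{\eg,c}^{(1)}(2,1,0,s,2)$ I would establish $k_z=\dim C_z=2^{2s}-3^{s}$; equivalently, the defining set $Z$ of its generator polynomial has cardinality $3^{s}-1$.

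Assembling the pieces through the CSS construction (Lemma~\ref{lm:css}), exactly as in the parent theorem, the quantum dimension is
\[
k=k_x+k_z-n=(2^{2s}-1-2st)+(2^{2s}-3^{s})-(2^{2s}-1)=2^{2s}-3^{s}-2st,
\]
and since $s(\delta-1)=2st$ this is precisely $2^{2s}-3^{s}-s(\delta-1)$. Together with the distances established above, this gives the claimed $[[2^{2s}-1,\,2^{2s}-3^{s}-s(\delta-1),\,\delta/2^{s}+1]]_2$ code.

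The main obstacle is the dimension count $\dim C_z=2^{2s}-3^{s}$. This amounts to showing that the defining set $Z=\{h:0<\max_{0\le l<s}W_{2^{s}}(2^{l}h)\le 2^{s}-1\}$, with exponents reduced modulo $2^{2s}-1$, has exactly $3^{s}-1$ elements. I would argue this by expressing each nonzero $h$ through its two base-$2^{s}$ digits, tracking how multiplication by $2^{l}$ cyclically permutes the $2s$ underlying binary digits, and counting the residues whose maximal $2^{s}$-ary weight over all shifts does not exceed $2^{s}-1$; this is the standard redundancy computation for the two-dimensional type-I cyclic EG-LDPC code. The worked example with $s=2$ above, where $|Z|=8=3^{2}-1$, already serves as a sanity check. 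The remaining steps are only bookkeeping.
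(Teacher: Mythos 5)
Your proposal is correct and takes essentially the same route as the paper: specialize the parent BCH--LDPC theorem to $m=2$, $p=2$, $\mu=1$ and read off the quantum dimension from the CSS construction. The only difference is bookkeeping --- the paper simply cites Lin--Costello for the parameters $[2^{2s}-1,\,2^{2s}-3^{s},\,2^{s}+1]_2$ of $C_{\eg,c}^{(1)}(2,1,0,s,2)$ and MacWilliams--Sloane for the BCH dimension $2^{2s}-1-s(\delta-1)$, whereas you obtain the latter from Lemma~\ref{th:bchdim} (with extension degree $2s$ over $\F_2$, giving the same value) and sketch the defining-set count $|Z|=3^{s}-1$ for the former.
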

\begin{proof}
  The parameters of $C$ are $[2^{2s}-1,2^{2s}-3^s,2^{s}+1]_2$, see
  \cite{lin04}. Since $C^\perp $ is contained in a BCH code of length
  $2^{2s}-1$ whose design distance $\delta\leq 2^s-1$, we can compute
  the dimension of the BCH code as $2^{2s}-1 -s(\delta-1)$, see
  \cite[Corollary~8]{macwilliams77}. By Lemma~\ref{lm:css} the quantum
  code has the dimension $2^{2s}-3^s -s(\delta-1)$.
\end{proof}

\begin{example}
  For $m=p=2$ and $s=4$ we can obtain a $[255,175,17]$ LDPC code. We
  can choose any BCH code with design distance $\delta \leq 2^4-1=15$
  to construct an asymmetric code. Table~\ref{tab:bchLdpcAqecc} lists possible codes.
\end{example}

\begin{table}[htb]
\caption{Asymmetric BCH-LDPC stabilizer codes}
\label{tab:bchLdpcAqecc}
\begin{center}
\begin{tabular}{c|c|c|l|c}
$s$&$\delta$  & Code &   Asymmetry &Rate\\
&  & $[[n,k,d_x/d_z]]_2$ &    $d_z/d_x$ &\\\hline
4&  15 & $[[255,119,15/17]]_2$   &   $\approx 1$&0.467\\
4&13  & $[[255,127,13/17]]_2 $&  $\approx 1.25$&0.498\\
4&11  & $[[255,135,11/17]]_2 $&  $\approx 1.5$&0.529\\
4&9  & $[[255,143,9/17]]_2 $&  $\approx 2$ &0.561\\
4&7  & $[[255,151,7/17]]_2 $&  $\approx 2.5$&0.592\\
4&5  & $[[255,159,5/17]]_2 $&  $\approx 3$&0.624\\
4&3  & $[[255,167,3/17]]_2 $&  $\approx 6$&0.655\\
\end{tabular}
\end{center}
\end{table}

\section{Performance Results}
We now study the  performance of the codes constructed
in the previous section. 
\nix{
Due to space constraints the discussion will 
be rather brief, but more details will be supplied in a forthcoming
paper.
} 
We assume that the overall probability of error in the channel is
given by $p$, while the individual probabilities of $X$, $Y$, and $Z$
errors are $p_x=p/(A+2)$, $p_y=p/(A+2)$ and $p_z=pA/(A+2)$
respectively. The exact performance would require us to simulate a
$4$-ary channel and also account for the fact that some errors can be
estimated modulo the stabilizer. However, we do not account for this
and in that sense these results provide an upper bound on the actual
error rates.  
The 4-ary channel can be modeled as two
binary symmetric channels -- one modeling the bit flip channel and the
other the phase flip channel. For exact performance, these two
channels should be dependent, however, a good approximation is to
model the channel as two independent BSCs with cross over
probabilities $p_x+p_y=2p/(A+2)$ and $p_y+p_z=p(A+1)/(A+2)$. In this
case the overall error rate in the quantum channel is the sum of the
error rates in the two BSCs. While this approach is going to slightly
overestimate the error rates, nonetheless it is useful and has been
used before \cite{mackay04}.  Since the $X$-channel uses a BCH code
and decoded using a bounded distance decoder, we can just compute
$P_e^x$ the $X$ error rate, in closed form. 
The error rate in the Z channel, $P_e^z$  is obtained through simulations. The overall
error rate is 
$$P_e=1-(1-P_e^x)(1-P_e^z)=P_e^x+P_e^z-P_e^xP_e^z\approx P_e^x+P_e^z.$$ 

\paragraph{Decoding LDPC Codes.} The LDPC code was decoded using the
an algorithm similar to the hard decision bit flipping algorithm given
in \cite{kou01}. This is an instance of the bit flipping algorithm
originally given by Gallager.  The maximum number of iterations for
decoding is set to 50. A small modification had to be made to
accommodate the special situation of quantum syndrome decoding. By
measuring the generators of the stabilizer group, we obtain a
classical syndrome, which due to the fact that only $\pm 1$
eigenspaces occur in all of the generators, is hard information. We
use the syndrome as shown in Figure \ref{fig:bp} and initialize all
the bit nodes with $0$ at the start of the algorithm. Then the
algorithm proceeds in the usual fashion as in \cite{kou01}. We
implemented this algorithm and ran several simulations which are
described next.

\begin{figure}[htb]
\begin{center}
\epsfig{file=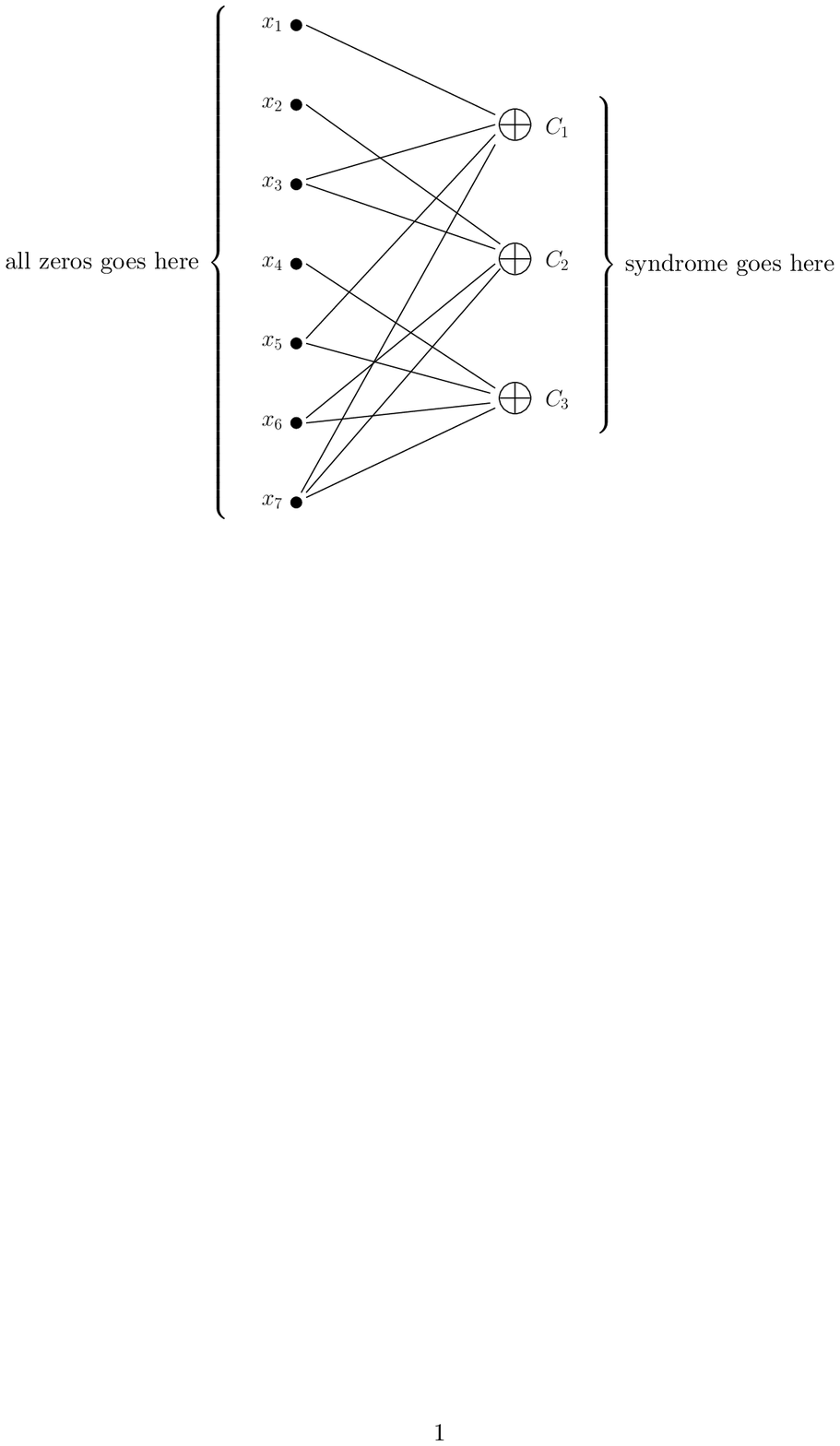,scale=1.0} 
\caption{Modification of the iterative message passing algorithm to
  the quantum case. The initialization step is different from the
  classical case as no soft information from the channel is available
  but rather only hard information about the measured syndrome is
  available. The algorithms begins with initializing all bit nodes to
  $0$ and the check nodes with the syndrome. From then on, any
  classically known method for iterative decoding can be applied. In
  the figure this principle is shown for the example of a classical
  [7,4,3] Hamming code. Application to the quantum case is
  straightforward as the decoding algorithm only works with classical
  information to compute the most likely error.}\label{fig:bp}
\end{center}
\end{figure}

In figure~\ref{fig:fg255d5} we see the performance of
$[[255,159,5/17]]$ as the channel asymmetry is varied from 1 to 100.
We see that as we increase the
asymmetry the code starts to perform better.  As the asymmetry is
increased eventually the performance of the quantum code approaches
the performance of the classical LDPC code.

\begin{figure}
\begin{minipage}[b]{1\linewidth}
\centering
\epsfig{file=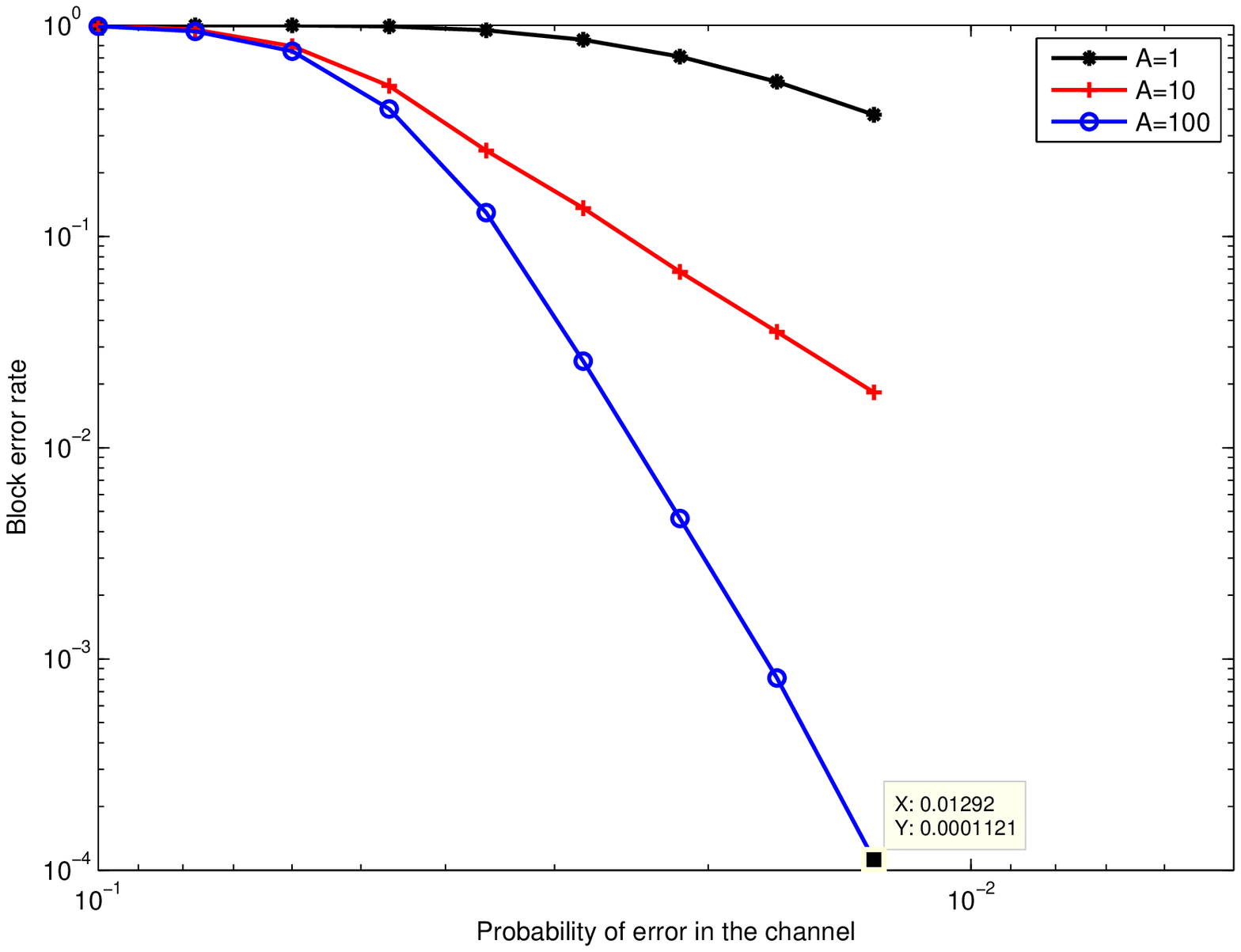,scale=0.6} 
\caption{Performance of a $[[255,159,5/17]]$ code described in the
  text for choices $A=1,10,100$ of the channel asymmetry.}\label{fig:fg255d5}
\end{minipage}
\hspace{0.5cm} 
\begin{minipage}[b]{1\linewidth}
\centering
\epsfig{file=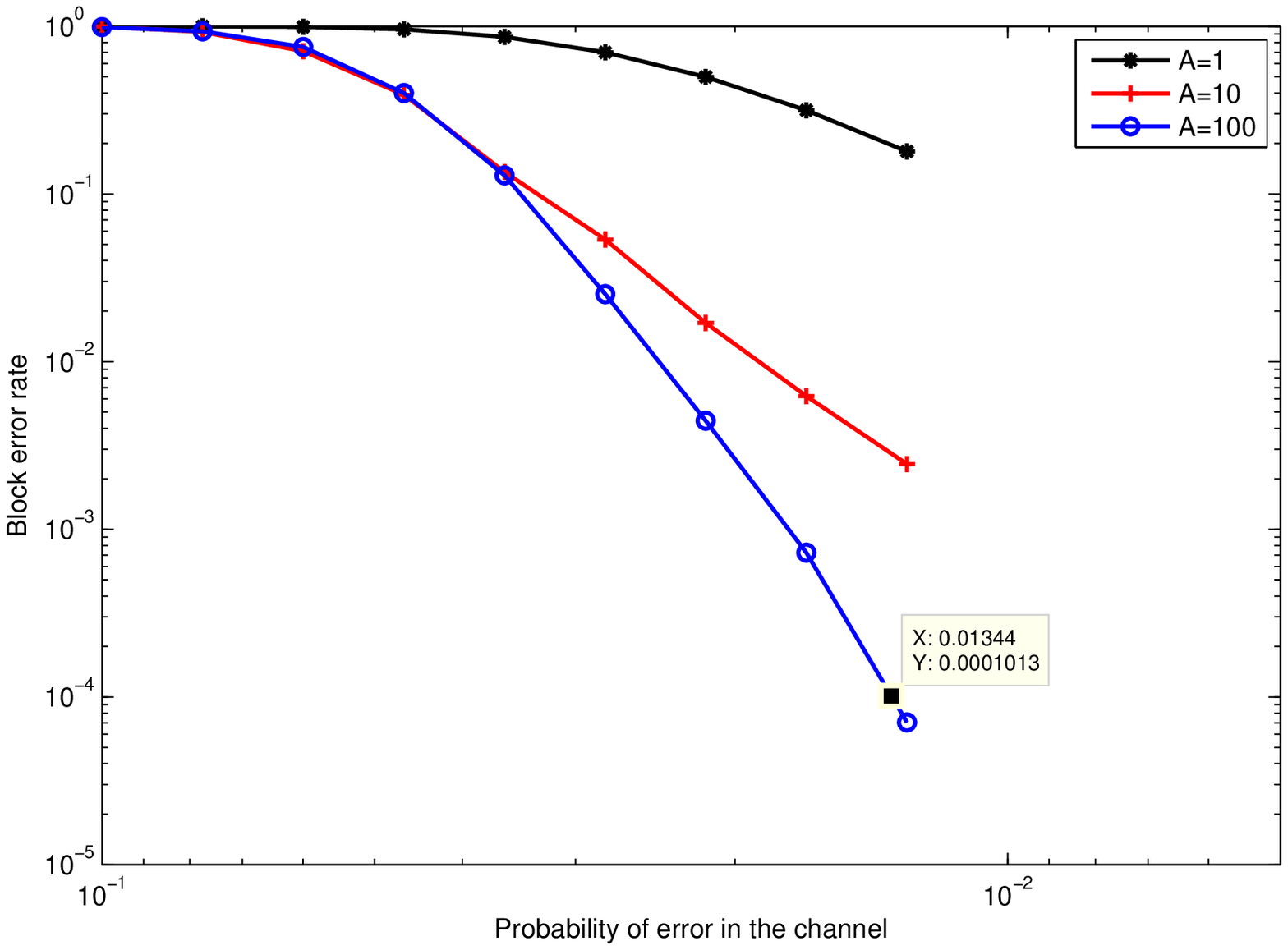,scale=0.6}
\caption{Performance of a $[[255,151,7/17]]$ code described in the
  text for choices $A=1,10,100$ of the channel asymmetry.}
\label{fig:fg255d7}
\end{minipage}
\end{figure}

Tolerating a little rate loss improves the performance as can be seen
from figure~\ref{fig:fg255d7}.  If we increase the distance of the BCH
code the code becomes more tolerant to variations in channel asymmetry
as can be seen by the performance of $[[255,143,9/17]]$ in
figure~\ref{fig:fg255d9}.  This plot also illustrates an important
point. Our channel model assumes that as we vary the channel asymmetry
we keep the total probability of error in the channel fixed.  This
implies that while the probability of $X$ errors goes down, the
probability of $Z$ errors tends to $p$, the total probability of
error. Hence, the reduction in error rate in the $X$ channel must more
than compensate for the increase in $Z$ error rate.  If on the other
hand, we had fixed the probability of error in the $Z$ channel and
varied the channel asymmetry then we would observe a monotonic
improvement in the error rate because on one hand the $Z$ error rate
does not change but the $X$ error rate does.
We note that with larger lengths we can get an even steep drop in the error rate as is
apparent from the performance of 
$[[1023,731,11/33]]$ code shown in 
Figure~\ref{fig:n1023A100}.

\begin{figure}    
\begin{minipage}[b]{1\linewidth}  
\centering
\epsfig{file=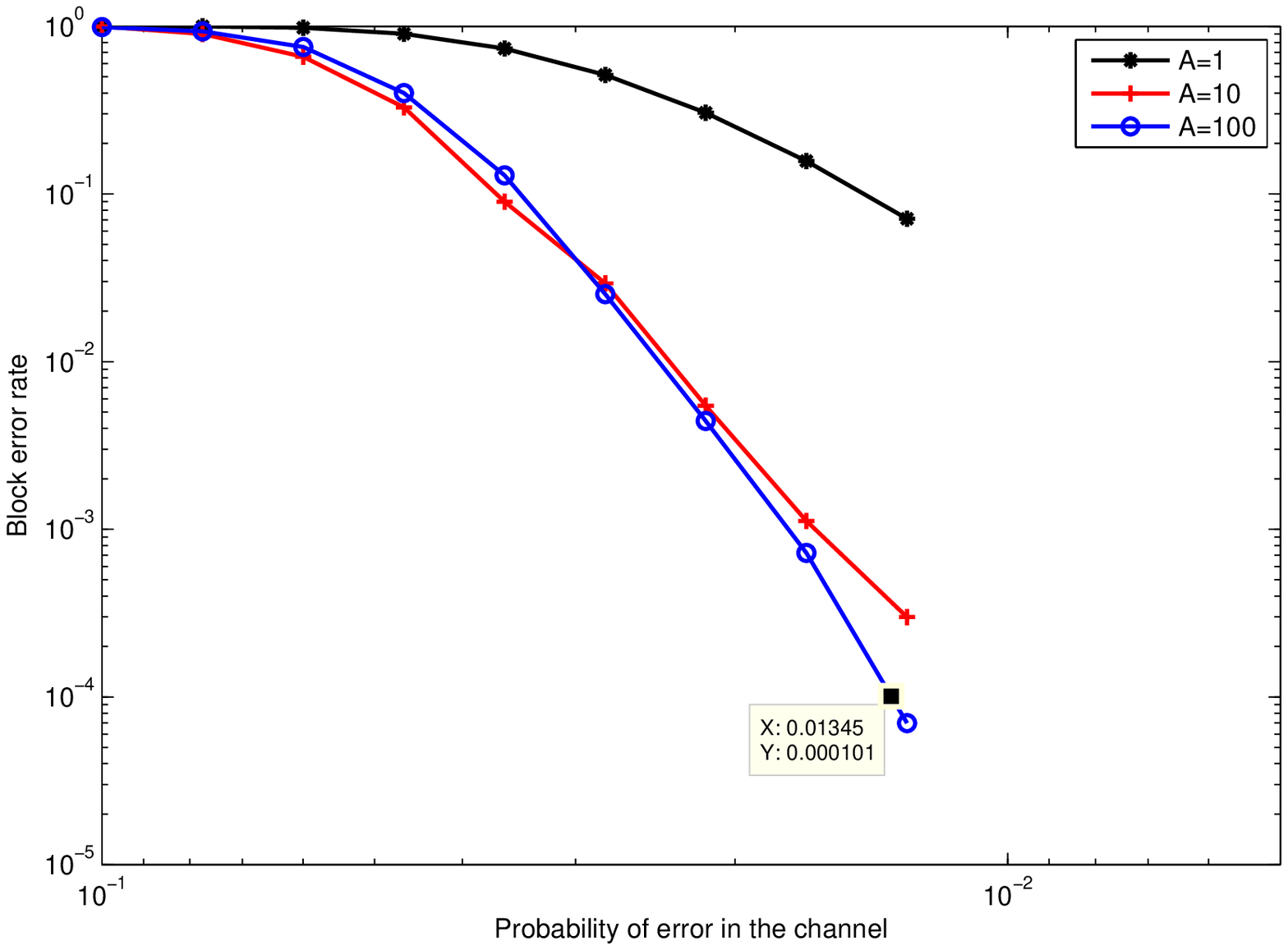,scale=0.6}
\caption{Performance of a $[[255,143,9/17]]$ code described in the
  text for choices $A=1,10,100$ of the channel asymmetry.}\label{fig:fg255d9}
\end{minipage}
\hspace{0.5cm} 
\begin{minipage}[b]{1\linewidth}
\centering
\epsfig{file=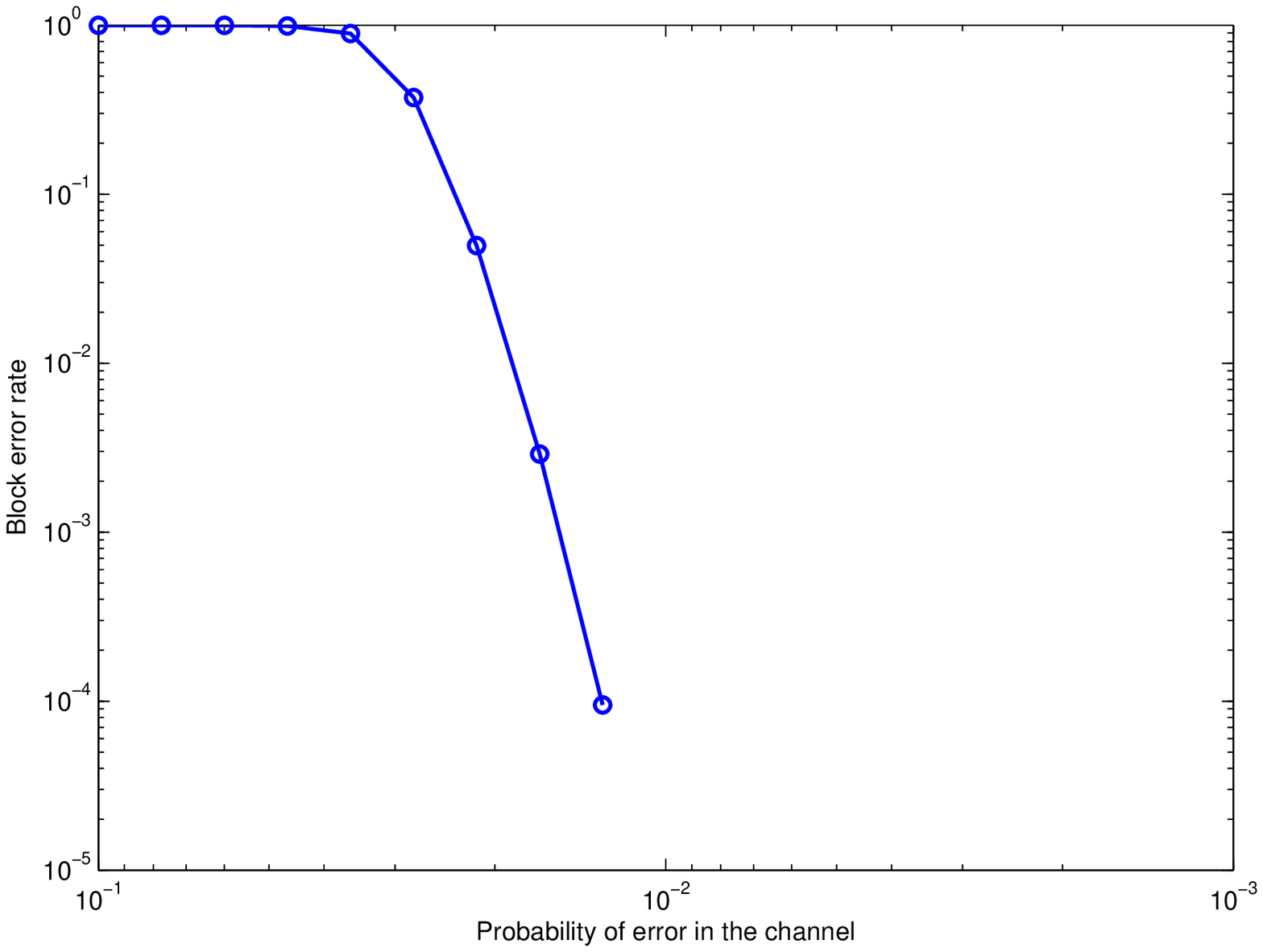,scale=0.6}
\caption{Performance of  $[[1023,731,11/33]]$ code for $A=100$.}\label{fig:n1023A100}
\end{minipage}
\end{figure}

The question naturally raises how do these codes compare with the
codes proposed in \cite{ioffe07}. Strictly speaking both constructions
have regimes where they can perform better than the other. But it
appears that the algebraically constructed asymmetric codes have the
following benefits with respect to the randomly constructed ones of
\cite{ioffe07}.
\begin{itemize}
\item They give comparable performance and higher data rates with
  shorter lengths.
\item The benefits of classical algebraic LDPC codes are inherited, giving for instance
lower error floors compared to the random constructions.
\item The code construction is systematic.
\end{itemize}
Our codes also offer flexibility in the rate and performance of the
code because we can choose many possible BCH codes for a given finite
geometry LDPC code or vice versa.  The flip side however is that the
codes given here have higher complexity of decoding.

\chapter{New Results on BCH Codes\footnotemark}\label{ch:bch}
\footnotetext{\copyright 2007 IEEE. Reprinted with permission from
S. A. Aly, A. Klappenecker and P. K. Sarvepalli, ``On quantum and classical BCH
codes''. {\em IEEE Trans. Inform. Theory}, vol 53, no. 3, pp.~1183--1188, 2007. }
\nix{
Classical BCH codes that contain their (Euclidean or Hermitian) dual codes can be
used to construct quantum stabilizer codes; this chapter studies the
properties of such codes.  It is shown that a BCH code of length $n$ can contain its
dual code only if its designed distance $\delta=O(\sqrt{n})$, and the converse is
proved in the case of narrow-sense codes. Furthermore, the dimension of narrow-sense
BCH codes with small design distance is completely determined, and -- consequently
-- the bounds on their minimum distance are improved.  These results make it
possible to determine the parameters of quantum BCH codes in terms of their design
parameters.
}

The Bose-Chaudhuri-Hocquenghem (BCH)
codes~\cite{Bose60a,Bose60b,Gorenstein61,Hocquenghem59} are a well-studied class of
cyclic codes that have found numerous applications in classical and more recently in
quantum information processing. Recall that a cyclic code of length $n$ over a
finite field $\F_q$ with $q$ elements, and $\gcd(n,q)=1$, is called a \textsl{BCH code with
designed distance $\delta$} if its generator polynomial is of the form
$$g(x)=\prod_{z\in Z} (x-\alpha^z), \qquad Z=C_b\cup \cdots \cup
C_{b+\delta-2},$$ where $C_x=\{ xq^k\bmod n \,|\, k\in \Z, k\ge 0\,\}$ denotes the
$q$-ary cyclotomic coset of $x$ modulo~$n$, $\alpha$ is a primitive element of
$\F_{q^m}$, and $m=\ord_n(q)$ is the multiplicative order of $q$ modulo $n$.  Such a
code is called primitive if $n=q^m-1$, and narrow-sense if $b=1$.

An attractive feature of a (narrow-sense) BCH code is that one can derive many
structural properties of the code from the knowledge of the parameters $n$, $q$, and
$\delta$ alone.  Perhaps the most well-known facts are that such a code has minimum
distance $d\ge \delta$ and dimension $k\ge n-(\delta-1)\ord_n(q)$.  In this
chapter, we will show that a necessary condition for a narrow-sense BCH code
which contains its Euclidean dual code is that its designed distance
$\delta=O(qn^{1/2})$. We also derive a sufficient condition for dual containing BCH
codes. Moreover, if the codes are primitive, these conditions are same. These
results allow us to derive families of quantum stabilizer codes. Along the way, we
find new results concerning the minimum distance and dimension of classical BCH
codes.

To put our results into context, we give a brief overview of related work.  This
chapter was motivated by problems concerning quantum BCH codes; specifically,
our goal was to derive the parameters of the quantum codes as a function of the
design parameters.  Examples of certain binary quantum BCH codes have been given by
many authors, see, for example,~\cite{calderbank98,grassl99b,grassl00,steane96}.
Steane \cite{steane99} gave a simple criterion to decide when a binary narrow-sense
primitive BCH code contains its dual, given the design distance and the length of
the code.  We generalize Steane's result in various ways, in particular, to
narrow-sense (not necessarily primitive) BCH codes over arbitrary finite fields with
respect to Euclidean and Hermitian duality. These results allow one to derive
quantum BCH codes; however, it remains to determine the dimension, purity, and
minimum distance of such quantum codes.

The dimension of a classical BCH code can be bounded by many different standard
methods, see~\cite{berlekamp68,huffman03,macwilliams77} and the references therein.
An upper bound on the dimension was given by Shparlinski~\cite{shparlinski88}, see
also~\cite[Chapter~17]{konyagin99}. More recently, the dimension of primitive
narrow-sense BCH codes of designed distance $\delta< q^{\lceil m/2\rceil}+1$ was
apparently determined by Yue and Hu~\cite{yue96}, according to
reference~\cite{yue00}.  We generalize their result and determine the dimension of
narrow-sense BCH codes that are not necessarily primitive for a certain range of
designed distances. As desired, this result allows us to explicitly obtain the
dimension of the quantum codes without computation of cyclotomic cosets.

The purity and minimum distance of a quantum BCH code depend on the minimum distance
and dual distance of the associated classical code. In general, it is a difficult
problem to determine the true minimum distance of BCH codes, see~\cite{charpin98}.
A lower bound on the dual distance can be given by the Carlitz-Uchiyama-type bounds
when the number of field elements is prime, see, for
example,~\cite[page~280]{macwilliams77} and \cite{stichtenoth94}. Many authors have
determined the true minimum distance of BCH codes in special cases, see, for
instance,~\cite{peterson72},\cite{yue00}.

This chapter also extends our previous work on \textit{primitive} narrow-sense BCH
codes \cite{aly06a}, simplifies some of the proofs and generalizes many of the
results to the nonprimitive case.

\medskip
\textit{Notation.} We denote the ring of integers by $\mathbf{Z}$ and the finite field
with $q$ elements by $\mathbf{F}_q$. We use the bracket notation of Iverson and
Knuth that associates to $[\textit{statement}\,]$ the value 1 if \textit{statement}
is true, and 0 otherwise. For instance, we have $[k \text{ even}]= k-1\bmod 2$ and
$[k \text{ odd}]= k\bmod 2$ for an integer $k$. The Euclidean dual code $C^\perp$ of
a code $C\subseteq \F_q^n$ is given by $C^\perp = \{ y\in \F_q^n \,|\, x\cdot y =0
\mbox{ for all } x \in C \},$ while the Hermitian dual of $C\subseteq \F_{q^2}^n$ is
defined as $C^{\perp_h}=\{ y\in \F_{q^2}^n\,|\, y^q\cdot x = 0 \mbox{ for all } x
\in C\}$. We denote a narrow-sense BCH code of length $n$ over $\F_q$ with designed
distance $\delta$ by $\B(n,q;\delta)$, and we omit the parameter $q$ if the finite
field is clear from the context.

\section{Euclidean Dual Codes}
Recall that one can construct quantum stabilizer codes using classical codes that
contain their duals. In this section, our goal is to find such classical codes.
Steane showed that a primitive, narrow-sense, binary BCH code of length $2^m-1$
contains its dual if and only if its designed distance $\delta$ satisfies $\delta
\leq 2^{\lceil m/2\rceil}-1$, see~\cite{steane99}. We generalize this result in
various ways.

\begin{lemma}\label{th:selforthogonal}
Let $C$ be a cyclic code of length $n$ over the finite field $\F_q$ such that
$\gcd(n,q)=1$, and let $Z$ be the defining set of $C$.  The code $C$ contains its
Euclidean dual code if and only if $Z\cap Z^{-1} = \emptyset$, where $Z^{-1}$
denotes the set $Z^{-1}=\{-z\bmod n\mid z \in Z \}$.
\end{lemma}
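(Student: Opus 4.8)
The plan is to recall the standard characterization of the dual of a cyclic code in terms of defining sets, and then translate the containment $C^\perp \subseteq C$ into a purely combinatorial condition on these sets. First I would fix notation: if $C$ is a cyclic code of length $n$ over $\F_q$ with $\gcd(n,q)=1$, let $\alpha$ be a primitive $n$th root of unity in some extension field, and let $Z$ denote the defining set of $C$, so that the generator polynomial is $g(x)=\prod_{z\in Z}(x-\alpha^z)$ and $N=\{0,1,\dots,n-1\}$. The key fact I would invoke is that the Euclidean dual code $C^\perp$ is itself cyclic with defining set $Z^\perp = \{-z \bmod n \mid z \in N\setminus Z\}$. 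This is a well-known consequence of the fact that the check polynomial $h(x)=(x^n-1)/g(x)$ has the complementary set of roots, and that the generator polynomial of $C^\perp$ is the reciprocal of $h(x)$ (suitably normalized), which flips each exponent $z$ to $-z \bmod n$. This computation already appears implicitly in the proof of Lemma~\ref{th:csscyclic} earlier in the excerpt, so I would reuse that derivation.

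Next I would use the basic correspondence between inclusion of cyclic codes and reverse inclusion of their defining sets: for cyclic codes $C_1, C_2$ of the same length, $C_1 \subseteq C_2$ holds if and only if the defining set of $C_1$ contains the defining set of $C_2$, i.e. $Z_{C_2} \subseteq Z_{C_1}$. Applying this to $C^\perp \subseteq C$, the condition becomes $Z \subseteq Z^\perp = \{-z \bmod n \mid z \in N\setminus Z\}$. Here I would be careful to record that $Z$ is a union of cyclotomic cosets and that $Z^{-1} = \{-z \bmod n \mid z \in Z\}$ and $(N\setminus Z)^{-1}$ are likewise unions of cosets, which makes the set manipulations clean.

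The final step is to show that the containment $Z \subseteq \{-z \bmod n \mid z \in N\setminus Z\}$ is equivalent to the disjointness statement $Z \cap Z^{-1} = \emptyset$. I would argue as follows: the condition $Z \subseteq (N\setminus Z)^{-1}$ says precisely that every element of $Z$ is of the form $-z$ for some $z \notin Z$, equivalently $Z^{-1} \subseteq N\setminus Z$ (applying the involution $x \mapsto -x \bmod n$, which is a bijection of $N$). But $Z^{-1} \subseteq N\setminus Z$ is exactly the statement that $Z^{-1}$ and $Z$ share no common element, i.e. $Z \cap Z^{-1} = \emptyset$. Conversely, if $Z \cap Z^{-1}=\emptyset$, then $Z^{-1}\subseteq N\setminus Z$, and taking inverses again gives $Z \subseteq (N\setminus Z)^{-1} = Z^\perp$, so $C^\perp \subseteq C$.

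I do not anticipate a serious obstacle here, as each step is a routine translation; the one point requiring genuine care is the correct identification of the defining set of the Euclidean dual as $\{-z \bmod n \mid z \in N\setminus Z\}$ rather than simply $-Z$, and making sure the involution $x\mapsto -x \bmod n$ is applied consistently so that the two reverse-inclusion manipulations cancel correctly. I would therefore spend the bulk of the written proof carefully verifying the dual defining set formula and then present the equivalence of the two set conditions as a short chain of bijective rewritings.
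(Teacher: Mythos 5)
Your proof is correct. The paper does not actually prove this lemma itself (it simply cites \cite[Theorem~2]{grassl97} and \cite[Theorem~4.4.11]{huffman03}), but your argument — identify the defining set of $C^\perp$ as $\{-z \bmod n \mid z \in N\setminus Z\}$ via the reciprocal of the check polynomial, translate $C^\perp \subseteq C$ into the reverse inclusion of defining sets, and apply the involution $x \mapsto -x \bmod n$ to convert $Z \subseteq (N\setminus Z)^{-1}$ into $Z \cap Z^{-1} = \emptyset$ — is exactly the Euclidean analogue of the paper's own self-contained proof of the Hermitian version in Lemma~\ref{th:hermitianBch}, and of the derivation given in Lemma~\ref{th:csscyclic}, so no further comparison is needed.
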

\begin{proof}
See \cite[Theorem~2]{grassl97}. See also~\cite[Theorem 4.4.11]{huffman03}.
\end{proof}

Let us first consider narrow-sense BCH codes of length $n$ such that the
multiplicative order of $q$ modulo $n$ equals 1; for example, Reed-Solomon codes
belong to this class of codes. We can avoid some special cases in our subsequent
arguments by treating this case separately. Furthermore, the next lemma nicely
illustrates the proof technique that will be used throughout this section, so it can
serve as a warm-up exercise.

\begin{lemma}
Suppose that $q$ is a power of a prime and $n$ is a positive integer such that
$q\equiv 1\bmod n$. We have $\B(n,q;\delta)^\perp\subseteq \B(n,q;\delta)$ if and
only if the designed distance $\delta$ is in the range $2\le \delta\le
\delta_{\max}=\lfloor (n+1)/2 \rfloor$.
\end{lemma}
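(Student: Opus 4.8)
The plan is to reduce the statement to an elementary combinatorial fact about the defining set, via the Euclidean self-orthogonality criterion recorded in Lemma~\ref{th:selforthogonal}. The crucial first observation is that the hypothesis $q\equiv 1\bmod n$ forces $\ord_n(q)=1$, so every $q$-ary cyclotomic coset modulo $n$ collapses to a singleton, $C_x=\{x\}$. Consequently the defining set of the narrow-sense BCH code $\B(n,q;\delta)$ is simply the interval $Z=\{1,2,\dots,\delta-1\}$, with no additional elements contributed by nontrivial cosets. This is the one place where the special hypothesis on $q$ does all the work; everything afterwards is arithmetic on intervals.

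Next I would invoke Lemma~\ref{th:selforthogonal}, which asserts that $\B(n,q;\delta)^\perp\subseteq \B(n,q;\delta)$ holds precisely when $Z\cap Z^{-1}=\emptyset$, where $Z^{-1}=\{-z\bmod n\mid z\in Z\}$. Since $Z$ is the interval $\{1,\dots,\delta-1\}$ and the admissible range $\delta\le\lfloor (n+1)/2\rfloor\le n$ guarantees no wraparound past $n$, one has $Z^{-1}=\{n-\delta+1,\dots,n-1\}$, again an interval contained in $\{1,\dots,n-1\}$. Disjointness of these two arcs is then equivalent to the largest element of $Z$ being strictly below the smallest element of $Z^{-1}$, that is, $\delta-1<n-\delta+1$.

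Solving this inequality over the integers gives $2\delta\le n+1$, equivalently $\delta\le\lfloor (n+1)/2\rfloor=\delta_{\max}$, which is exactly the asserted upper bound (the lower bound $\delta\ge 2$ being merely the standing convention that we are dealing with a genuine BCH code of design distance at least two). For the converse direction I would check that whenever $\delta_{\max}<\delta\le n$, the element $n-\delta+1$ lies simultaneously in $Z$ and in $Z^{-1}$, so the intersection is nonempty and dual containment fails.

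I do not expect a genuine obstacle here; the only points needing a little care are confirming that the bound $\delta\le\delta_{\max}$ indeed prevents the interval $Z$ from wrapping around $\Z/n\Z$ (so that $Z^{-1}$ really is the stated interval and the disjointness test reduces to comparing endpoints), and verifying that the floor yields the correct integer threshold in both the even and odd cases of $n$. These are routine. The entire conceptual content is the singleton-coset reduction in the first step, after which the result becomes a statement about two non-overlapping arcs on the cycle $\Z/n\Z$.
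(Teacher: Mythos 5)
Your proposal is correct and follows essentially the same route as the paper's proof: reduce to singleton cyclotomic cosets so that $Z=\{1,\dots,\delta-1\}$, apply the criterion $Z\cap Z^{-1}=\emptyset$ from Lemma~\ref{th:selforthogonal}, and compare the endpoints of the two intervals. The interval arithmetic (including the witness $n-\delta+1\in Z\cap Z^{-1}$ when $\delta>\delta_{\max}$) checks out.
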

\begin{proof}
The defining set $Z$ of $\B(n,q;\delta)$ is given by $Z=\{1, \ldots,\delta-1\}$,
since $q$ has multiplicative order 1 modulo $n$, and therefore all cyclotomic cosets
are singleton sets.  If $\B(n,q;\delta)^\perp\subseteq \B(n,q;\delta)$, then by
Lemma~\ref{th:selforthogonal}, $Z\cap Z^{-1}=\emptyset$. If $x\in Z$, then
$n-x\not\in Z$ and $n-x>x$; hence, $\delta_{\max}\leq \lfloor (n+1)/2 \rfloor$.
Conversely, if $\delta \leq\lfloor (n+1)/2\rfloor$, then $\min Z^{-1} =\min
\{n-1,\ldots, n-\delta+1\} = n-\delta+1 \geq n-\lfloor (n+1)/2\rfloor +1 =\lceil
(n+1)/2
 \rceil \geq \delta_{\max}$; hence, $Z\cap Z^{-1}=\emptyset$ and  Lemma~\ref{th:selforthogonal} implies that $\B(n,q;\delta)^\perp\subseteq \B(n,q;\delta)$.
\end{proof}

If the multiplicative order $m$ of $q$ modulo $n$ is larger than 1, then the
defining set of the code has a more intricate structure, so proofs become more
involved. The next theorem gives a sufficient condition on the designed distances
for which the dual code of a narrow-sense BCH code is self-orthogonal.

\begin{theorem}\label {th:sufficientEdual}
Suppose that $m=\ord_n(q)$. If the designed distance $\delta$ is in the range $2\leq
\delta\leq \delta_{\max}=\floor{\kappa}$, where
\begin{equation}\label{ddistbound}
\kappa = \frac{n}{q^{m}-1} (q^{\lceil m/2\rceil}-1-(q-2)[m \textup{ odd}]),
\end{equation}
then $\B(n,q;\delta)^\perp \subseteq \B(n,q;\delta)$.
\end{theorem}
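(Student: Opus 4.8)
The plan is to invoke Lemma~\ref{th:selforthogonal}, which reduces the claim $\B(n,q;\delta)^\perp\subseteq \B(n,q;\delta)$ to the purely combinatorial statement that the defining set $Z=C_1\cup\cdots\cup C_{\delta-1}$ satisfies $Z\cap Z^{-1}=\emptyset$. Since each $C_x=\{xq^k\bmod n\}$, a common element of $Z$ and $Z^{-1}$ exists precisely when there are integers $1\le a,b\le \delta-1$ and exponents $i,j\ge 0$ with $aq^{i}+bq^{j}\equiv 0\pmod n$. Multiplying by a suitable power of $q$ (legitimate because $\gcd(q,n)=1$, so this only permutes cyclotomic cosets), I would normalize one exponent to $0$ and reduce to the single congruence $aq^{s}+b\equiv 0\pmod n$ with $1\le a,b\le\delta-1$ and $0\le s\le m-1$. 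The whole theorem then amounts to showing that the hypothesis $\delta\le\lfloor\kappa\rfloor$ forbids any such congruence.

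Next I would use $q^{m}\equiv 1\pmod n$ (valid since $m=\ord_n(q)$): multiplying $aq^{s}+b\equiv 0$ by $q^{m-s}$ turns it into $bq^{m-s}+a\equiv 0$, so the triples $(a,b,s)$ and $(b,a,m-s)$ yield equivalent obstructions and, after swapping, I may assume $0\le s\le\lfloor m/2\rfloor$. Because $aq^{s}+b$ is then a strictly positive multiple of $n$, it satisfies $aq^{s}+b\ge n$, while a size estimate gives $aq^{s}+b\le(\delta-1)(q^{s}+1)$. For $m$ even this closes the argument at once: here $q^{m}-1=(q^{m/2}-1)(q^{m/2}+1)$ gives $\kappa=n/(q^{m/2}+1)$, so $\delta-1\le\kappa-1$ forces $(\delta-1)(q^{m/2}+1)\le n-(q^{m/2}+1)<n$, a contradiction. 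The same estimate disposes of $m$ odd whenever $s\le(m-3)/2$, since then $q^{s}$ is even smaller.

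The single remaining case, $m$ odd with $s=(m-1)/2$, is where the correction term $-(q-2)$ in $\kappa$ is born, and I expect it to be the \emph{main obstacle}: the naive magnitude bound is just barely too weak, so I would instead exploit the exact arithmetic. Writing $r=(q^{m}-1)/n$ and $w=q^{(m-1)/2}$, so that $n=(qw^{2}-1)/r$ and $r\kappa=qw-q+1$, the size estimate still rules out $aq^{s}+b=cn$ with $c\ge 2$ (under $\delta\le\lfloor\kappa\rfloor$ the left side stays below $2n$). For $c=1$ the relation $aw+b=n$ rearranges to $w(qw-ra)=rb+1$, whence $w\mid rb+1$; putting $k=qw-ra\ge 1$ gives $ra=qw-k$ and $rb=kw-1$. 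Imposing $a\ge 1$, $b\ge 1$ together with $a,b\le\delta-1\le\kappa-1$ then produces the two inequalities $k\ge q+r-1$ and $k\le q-(q+r-2)/w$, which are incompatible because $(r-1)w+(q+r-2)>0$ for all $q\ge 2$, $r\ge 1$, $w\ge 1$. This contradiction eliminates the last case and yields $Z\cap Z^{-1}=\emptyset$, establishing the theorem; I would verify the constants against a small instance such as $q=3$, $m=3$, $n=26$, where the bound gives the sharp value $\delta_{\max}=\lfloor\kappa\rfloor=7$.
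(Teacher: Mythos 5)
Your proof is correct, and its skeleton coincides with the paper's: both reduce via Lemma~\ref{th:selforthogonal} to showing $Z\cap Z^{-1}=\emptyset$, both use $q^{m}\equiv 1\pmod{n}$ to push the exponent into $\{0,\dots,\floor{m/2}\}$, and both finish by a magnitude estimate on the resulting relation modulo $n$. The one place you genuinely diverge is the case $m$ odd, $s=(m-1)/2$, which you flag as the main obstacle and attack with exact arithmetic ($ra=qw-k$, $rb=kw-1$, and the incompatible bounds $k\ge q+r-1$ and $k\le q-(q+r-2)/w$); that argument checks out. However, the premise that the naive bound is ``just barely too weak'' there is mistaken. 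Writing $w=q^{(m-1)/2}$, one has
\[
\kappa(w+1)=\frac{n\,(qw^{2}+w+1-q)}{qw^{2}-1}=n+\frac{n(w+2-q)}{qw^{2}-1}\le n+w<n+w+1,
\]
since $n\le q^{m}-1=qw^{2}-1$ and $w+2-q\le w$ for $q\ge 2$ (the middle term is simply negative when $q>w+2$). Hence $(\kappa-1)(w+1)<n$, and your own estimate $aq^{s}+b\le(\delta-1)(w+1)<n$ already contradicts $aq^{s}+b\ge n$ in this last case as well. This is exactly what the paper's single uniform computation establishes: it writes $y=n-xq^{j}$ and shows $y\ge n-(\delta_{\max}-1)q^{\floor{m/2}}\ge\delta_{\max}$, which is the inequality $n>(\delta_{\max}-1)(q^{\floor{m/2}}+1)$ in disguise. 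So the correction term $-(q-2)$ in $\kappa$ is precisely what lets the crude estimate survive the worst exponent; your divisibility argument is a valid but unnecessary detour, and your proof is a correct, slightly longer rendering of the paper's.
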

\begin{proof}
It suffices to show that $\B(n,q;\delta_{\max})^\perp\subseteq
\B(n,q;\delta_{\max})$ holds, since $\B(n,q;\delta)$ contains
$\B(n,q;\delta_{\max})$, and the claim follows from these two facts.

Seeking a contradiction, we assume that $\B(n,q;\delta_{\max})$ does not contain its
dual.  Let $Z=C_1\cup \cdots \cup C_{\delta_{\max}-1}$ be the defining set of
$\B(n,q;\delta_{\max})$.  By Lemma~\ref{th:selforthogonal}, $Z\cap Z^{-1}\neq
\emptyset$, which means that there exist two elements $x,y\in
\{1,\ldots,\delta_{\max}-1\}$ such that $y\equiv -xq^j\bmod n$ for some $j\in
\{0,1,\ldots, m-1 \}$, where $m$ is the multiplicative order of $q$ modulo $n$.
Since $\gcd(q,n)=1$ and $q^m\equiv 1 \bmod n$, we also have $x\equiv -yq^{m-j}\bmod
n$.  Thus, exchanging $x$ and $y$ if necessary, we can even assume that $j$ is in
the range $0 \leq j\leq \lfloor m/2 \rfloor$.  It follows from (\ref{ddistbound})
that
\begin{eqnarray*}
1&\leq&  xq^j \le (\delta_{\max}-1)q^j \\ &\le&
\frac{n}{q^m-1}(q^m-q^j-q^j(q-2)[m\text{ odd}])-q^j\\ &<&n,
\end{eqnarray*}
 for all $j$ in the range $0\le j\le \floor{m/2}$. Since $1\le xq^j<n$
and $1\le y<n$, we can infer from $y\equiv -xq^j\bmod n$ that $y=n-xq^j$. But this
implies
\begin{eqnarray*}
\begin{array}{lcl}
y &\ge&\ds n-xq^{\floor{m/2}} \\
&\ge& \ds n-\frac{n}{q^m-1}(q^m-q^{\floor{m/2}}
 - q^{\floor{m/2}}(q-2)[m\text{ odd}])+q^{\floor{m/2}}\\
&=& \ds \frac{n}{q^m-1}( q^{\floor{m/2}}-1+q^{\floor{m/2}}(q-2)[m \text{ odd}]  )\\ && +q^{\floor{m/2}}\\
&\ge& \delta_{\max}\, ,
\end{array}
\end{eqnarray*}
contradicting the fact that $y<\delta_{\max}$.
\end{proof}

Now we will derive a necessary condition on the design distance of narrow-sense,
nonprimitive BCH codes that contain their duals.

\begin{theorem}\label{th:necessaryEdual}
Suppose that $m=\ord_n(q)$. If the designed distance $\delta$ exceeds
$\delta_{\max}=\floor{qn^{1/2}}$, then $\B(n,q;\delta)^\perp \not\subseteq
\B(n,q;\delta)$.
\end{theorem}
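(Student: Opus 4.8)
The plan is to prove the statement directly, by exhibiting a forbidden self-overlap in the defining set. Writing $Z=C_1\cup\cdots\cup C_{\delta-1}$ for the defining set of $\B(n,q;\delta)$, Lemma~\ref{th:selforthogonal} says that $\B(n,q;\delta)^\perp\subseteq\B(n,q;\delta)$ holds if and only if $Z\cap Z^{-1}=\emptyset$, where $Z^{-1}=-Z$. Hence it suffices to produce an element of $Z\cap Z^{-1}$ under the hypothesis $\delta>\floor{qn^{1/2}}$. Because the code is narrow-sense, $Z\supseteq\{1,\dots,\delta-1\}$, so the window $W=\{n-(\delta-1),\dots,n-1\}=-\{1,\dots,\delta-1\}$ satisfies $W\subseteq Z^{-1}$; therefore it is enough to show that $Z$ meets $W$. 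Concretely, I would aim to find $i,i'\in\{1,\dots,\delta-1\}$ and an exponent $j\ge0$ with $iq^{j}+i'\equiv0\pmod n$, for then $iq^{j}\bmod n=n-i'\in C_i\subseteq Z$ lies in $W\subseteq Z^{-1}$ as well.

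First I would attempt the case $j=1$, reducing the problem to finding a small solution of $iq+i'\equiv0\pmod n$ by Diophantine approximation of $q/n$. Applying Dirichlet's pigeonhole principle to the fractional parts $\{\,iq/n\,\}$ for $0\le i\le\floor{n^{1/2}}$ yields an integer $c$ with $1\le c\le n^{1/2}$ and $cq\equiv e\pmod n$, where the (signed) representative satisfies $0<|e|\le n^{1/2}$. If $e$ may be taken negative, then $cq+(-e)\equiv0\pmod n$ with both $c$ and $-e$ bounded by $n^{1/2}\le qn^{1/2}<\delta$, so $cq\bmod n=n-(-e)$ already lies in $Z\cap Z^{-1}$, finishing this case with $i=c$, $i'=-e$.

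The main obstacle is the positive-sign case $cq\equiv e$ with $e>0$: this only says that $c$ and $e$ lie in a common cyclotomic coset and produces no overlap with $Z^{-1}$. To force the needed sign I would pass to the continued fraction expansion of $q/n$ and use that consecutive convergents approximate $q/n$ from opposite sides, so the neighbouring convergent (or the intervening semiconvergents) delivers an approximation from the other side. Here the slack of a factor $q$ in the bound $\floor{qn^{1/2}}$ is exactly what is needed: even when the convergent denominators jump past $n^{1/2}$, the semiconvergent denominators grow in steps of size below $n^{1/2}$, so one of them lands in $(n^{1/2},qn^{1/2}]$ on the correct side, giving $c'q\equiv -e'\pmod n$ with both $c'$ and $e'$ at most $qn^{1/2}<\delta$. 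Verifying that the resulting $i,i'$ genuinely lie in $\{1,\dots,\delta-1\}$ \emph{uniformly in $n$} — and not merely for primitive moduli or for the range $n=O(q^{4})$, where the one-step window estimate $iq\bmod n\in W$ already succeeds — is the delicate point of the argument.

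Finally I would record the structural remark that this necessary bound $\delta=O(qn^{1/2})$ sits roughly a factor $q$ above the sufficient bound $\kappa$ of Theorem~\ref{th:sufficientEdual}; closing the gap is not required for the application, since all that matters is that dual containment must fail once $\delta$ exceeds $\floor{qn^{1/2}}$. I would also emphasize that the narrow-sense hypothesis $b=1$ enters precisely through the containment $\{1,\dots,\delta-1\}\subseteq Z$, which is what places the window $W$ inside $Z^{-1}$ and makes the reduction to a single Diophantine inequality possible.
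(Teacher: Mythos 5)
Your reduction is the right one and matches the paper's: by Lemma~\ref{th:selforthogonal} it suffices to exhibit $i,i'\in\{1,\dots,\delta-1\}$ and an exponent $j$ with $iq^{j}+i'\equiv 0\pmod n$, since then $iq^j\bmod n = n-i'$ lies in $C_i\subseteq Z$ and also in $-\{1,\dots,\delta-1\}\subseteq Z^{-1}$. The gap is in how you propose to find such a relation. You commit to the exponent $j=1$ and try to extract the relation from rational approximations of $q/n$ (Dirichlet, then convergents and semiconvergents to fix the sign). This cannot work once $n$ is large relative to $q$: if $1\le i,i'\le\floor{qn^{1/2}}$ then $0<iq+i'\le (q+1)q\,n^{1/2}$, which is strictly less than $n$ whenever $n>q^{2}(q+1)^{2}$, so $iq+i'\equiv 0\pmod n$ has \emph{no} solution at all in that range, regardless of sign tricks. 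Concretely, for $q=2$ and $n=2^{10}-1=1023$ one has $\floor{qn^{1/2}}=63$ and $iq+i'\le 189<1023$, so the $j=1$ window is empty. Semiconvergents do not help because they still only produce relations of the form $cq\equiv\pm e\pmod n$, i.e.\ they never leave the case $j=1$. You half-acknowledge this by noting the one-step estimate only succeeds for $n=O(q^4)$, but your proposed repair stays inside the same obstruction.

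The missing idea is to let the exponent grow with $n$: take $j=\floor{d/2}$, where $d$ is the number of $q$-ary digits of $n$, so that $q^{j}\approx n^{1/2}$. Then no Diophantine approximation is needed — one simply splits the $q$-ary expansion $n=\sum_{i=0}^{d-1}n_iq^i$ into its top and bottom halves, setting $s=\sum_{i\ge\floor{d/2}}n_iq^{i-\floor{d/2}}$ and $s'=n-sq^{\floor{d/2}}=\sum_{i<\floor{d/2}}n_iq^i$. Both satisfy $s,s'<q^{\lceil d/2\rceil}\le q^{(d+1)/2}<qn^{1/2}$, so both lie in $Z$, and by construction $s'\equiv -sq^{\floor{d/2}}\pmod n$, so $s'\in Z\cap Z^{-1}$. (In the example above this gives $s=s'=31$ and $31\cdot 2^{5}\equiv-31\pmod{1023}$.) This is exactly the paper's argument; your window/sign framing is fine, but the proof does not go through without replacing the fixed exponent $j=1$ by $j=\floor{d/2}$.
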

\begin{proof}
Let $n=n_0+n_1q+\cdots+n_{d-1}q^{d-1}$, where $0\leq n_i\leq q-1$ and $\delta \geq
\delta_{\max}+1$. Then the defining set $Z\supseteq \{1,\ldots, \lfloor
qn^{1/2}\rfloor \}$. We will show that $Z\cap Z^{-1}\neq \emptyset$. Let,
\begin{eqnarray*}
s&=&\sum_{i=\lfloor d/2 \rfloor}^{ d-1}n_{i}q^{i-\lfloor d/2\rfloor},\\
s& \leq & (q-1)\sum_{i=\lfloor d/2\rfloor}^{ d-1} q^{i-\lfloor d/2\rfloor}
=q^{\lceil d/2\rceil}-1 < q^{\lceil d/2\rceil}.
\end{eqnarray*}
Since $q^{d-1}<n< q^d$, we have $q^{(d+1)/2}< qn^{1/2}< q^{(d+2)/2}$. If $d$ is even
then $\lceil d/2\rceil < (d+1)/2$ and if $d$ is odd, then $\lceil d/2 \rceil \leq
(d+1)/2$. Hence we have $s< q^{\lceil d/2\rceil} \leq q^{(d+1)/2}< qn^{1/2}$.
Therefore $s\in Z$. Now consider,
\begin{eqnarray*}
s'=n-sq^{\lfloor d/2 \rfloor} &=& \sum_{i=0}^{d-1}n_iq^i-q^{\floor{ d/2}}
\sum_{i=\lfloor d/2\rfloor}^{ d-1}n_{i}q^{i-\lfloor d/2\rfloor},\\
&=&\sum_{i=0}^{\lfloor d/2 -1\rfloor}n_i q^i < q^{\floor{ d/2}} \\&<&
q^{(d+1)/2}<qn^{1/2}.
\end{eqnarray*}
Hence $s'\in Z$ and by definition $s'\in Z^{-1}$, which implies $Z\cap
Z^{-1}\neq\emptyset$; by Lemma~\ref{th:selforthogonal} it follows that
$\B(n,q;\delta)^\perp\not\subseteq \B(n,q;\delta)$.
\end{proof}

The condition we just derived can be strengthened under some restrictions.
Especially, if the constant $\kappa$ in equation~(\ref{ddistbound}) is integral,
then we can derive a necessary and sufficient condition as shown below:

\begin{theorem}\label{th:nonprimitiveduals2}
We keep the notation of Theorem~\ref{th:necessaryEdual}. Suppose that $\kappa$ is
integral, and that $m\ge 2$. We have $\B(n,q;\delta)^\perp \subseteq \B(n,q;\delta)$
if and only if the designed distance $\delta$ is in the range $2\leq \delta\leq
\delta_{\max}=\kappa$.
\end{theorem}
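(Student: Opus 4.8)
The plan is to combine the two estimates already in hand. The backward implication is immediate: when $\kappa$ is integral we have $\floor{\kappa}=\kappa$, so Theorem~\ref{th:sufficientEdual} directly yields $\B(n,q;\delta)^\perp\subseteq\B(n,q;\delta)$ for every designed distance in the range $2\le\delta\le\kappa$. All the work lies in the forward implication, which I would establish in contrapositive form: I would show that for every $\delta\ge\kappa+1$ one has $\B(n,q;\delta)^\perp\not\subseteq\B(n,q;\delta)$. By Lemma~\ref{th:selforthogonal} this amounts to exhibiting a single element lying in both the defining set $Z=C_1\cup\cdots\cup C_{\delta-1}\supseteq\{1,\dots,\kappa\}$ and in $Z^{-1}=\{-z\bmod n\mid z\in Z\}$.

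The construction I have in mind is to track the cyclotomic coset of the boundary element $x=\kappa$ under a single multiplication by $q^{\floor{m/2}}$. Writing $t=(q^{m}-1)/n$ (a positive integer, since $\ord_n(q)=m$ forces $n\mid q^{m}-1$) and $\lambda=q^{\lceil m/2\rceil}-1-(q-2)[m\textup{ odd}]$, so that $\kappa=\lambda/t$, a short expansion using $q^{m}=(q^{m}-1)+1=tn+1$ gives the exact integer identity $\kappa\,q^{\floor{m/2}}=n-y$, where $y=\kappa$ when $m$ is even and $y=(q^{(m+1)/2}-q^{(m-1)/2}-1)/t$ when $m$ is odd. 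Since $y\ge 1$, the raw product $\kappa q^{\floor{m/2}}=n-y$ is already reduced modulo $n$, so $n-y\in C_\kappa\subseteq Z$; consequently $y\equiv-(n-y)\pmod n$ lies in $Z^{-1}$. As $y$ will also lie in $\{1,\dots,\kappa\}\subseteq Z$, the intersection $Z\cap Z^{-1}$ is nonempty, and Lemma~\ref{th:selforthogonal} forces the containment to fail. Because defining sets are nested, the same collision works for every $\delta\ge\kappa+1$, so it suffices to treat $\delta=\kappa+1$.

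The point that needs care — and where the hypothesis that $\kappa$ is integral is essential — is verifying that $y$ is a genuine positive integer with $1\le y\le\kappa$. Integrality comes for free: since $\kappa$ is an integer, $\kappa q^{\floor{m/2}}$ is an integer, hence so is $y=n-\kappa q^{\floor{m/2}}$. Positivity in the odd case reduces to $q^{(m-1)/2}(q-1)>1$, which holds for all $q\ge 2$ and odd $m\ge 3$, while the even case has $y=\kappa\ge 1$ trivially. The inequality $y\le\kappa$ is automatic when $m$ is even, and for $m$ odd it reduces, after clearing the factor $t$, to $q-q^{(m-1)/2}-2\le 0$, again clear for $q\ge 2$ and $m\ge 3$. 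Assembling these inequalities with the coset computation completes the forward direction and hence the stated equivalence. I expect the only delicate bookkeeping to be the separate even/odd treatment of $\lambda$ and the bound $\kappa q^{\floor{m/2}}<n$, which guarantees that a single reduction modulo $n$ suffices and no iterated application of the Frobenius action is required.
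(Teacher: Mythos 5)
Your proposal is correct and follows essentially the same route as the paper's own proof: the backward direction is cited from Theorem~\ref{th:sufficientEdual}, and the forward direction multiplies the boundary element $\kappa=\delta_{\max}$ by $q^{\floor{m/2}}$, splits into the even/odd cases, and exhibits the resulting element $y$ (the paper's $s$) in $Z\cap Z^{-1}$ so that Lemma~\ref{th:selforthogonal} applies. Your extra bookkeeping (integrality of $y$, positivity, and $y\le\kappa$) just makes explicit the inequalities the paper states without elaboration.
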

\begin{proof}
Suppose that $\B(n,q;\delta)^\perp \subseteq \B(n,q;\delta)$. Seeking a
contradiction, we assume that $\delta>\delta_{\max}$; thus, $\delta_{\max}$ is
contained in the defining set $Z$ of $\B(n,q;\delta)$. If $m$ is even, then
\begin{eqnarray*} -\delta_{\max}q^{\floor{m/2}} &\equiv&
-\frac{nq^{\floor{m/2}}}{q^{\floor{m/2}}+1} \equiv -n+\frac{n}{q^{\floor{m/2}}+1}\\
&\equiv& \delta_{\max}\pmod{n},
\end{eqnarray*} hence, $\delta_{\max}\in Z\cap Z^{-1}\neq \emptyset$. If $m$ is
odd, then
$$\begin{array}{lcl}
-\delta_{\max}q^{\floor{m/2}}&\equiv&
-n(q^{m}-q^{\ceil{m/2}}+q^{\floor{m/2}})/(q^{m}-1)\\
&\equiv& n(q^{\ceil{m/2}}-q^{\floor{m/2}}-1)/(q^m-1) \\ &\equiv& s \pmod{n}.
  \end{array}
$$
By definition, $s\in Z^{-1}$; furthermore,  $s<\delta_{\max}$, so $s\in Z\cap
Z^{-1}\neq \emptyset$.
In both cases, $m$ even and odd, we found that $Z\cap Z^{-1}$ is not empty, so
$\B(n,q;\delta)$ cannot contain its Euclidean dual code, contradiction.
The converse follows from Theorem~\ref{th:sufficientEdual}.
\end{proof}
As a consequence of Theorem~\ref{th:nonprimitiveduals2} we have the following test
for primitive narrow-sense BCH codes that contain their duals.
\begin{corollary}
A primitive narrow-sense BCH code of length $n=q^m-1$, $m\ge 2$, over the finite
field\/ $\F_q$ contains its Euclidean dual code if and only if its designed distance
$\delta$ satisfies
$$2\le \delta \leq
\delta_{\max}=q^{\lceil m/2\rceil}-1-(q-2)[m \textup{ odd}].$$
\end{corollary}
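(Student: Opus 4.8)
The corollary is the special case of Theorem~\ref{th:nonprimitiveduals2} in which the code is primitive, i.e. $n=q^m-1$. My plan is to verify that the general machinery of that theorem applies, and in particular that the quantity $\kappa$ becomes integral in the primitive case, so that the ``if and only if'' statement carries over directly.

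First I would substitute $n=q^m-1$ into the definition of $\kappa$ given in equation~(\ref{ddistbound}). The prefactor $n/(q^m-1)$ collapses to $1$, leaving
\[
\kappa = q^{\lceil m/2\rceil}-1-(q-2)[m \textup{ odd}].
\]
Since this is manifestly an integer, the integrality hypothesis of Theorem~\ref{th:nonprimitiveduals2} is satisfied. Next I would observe that $\ord_n(q)=m$ holds for $n=q^m-1$, since $q^m\equiv 1 \pmod{q^m-1}$ and no smaller power of $q$ is congruent to $1$ modulo $q^m-1$ (the powers $q^0,\dots,q^{m-1}$ are all strictly between $1$ and $q^m-1$). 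This confirms that the parameter $m$ appearing in the corollary is precisely the multiplicative order used in the theorem. With the hypotheses $m\ge 2$ and $\kappa$ integral both verified, Theorem~\ref{th:nonprimitiveduals2} applies verbatim and gives $\B(n,q;\delta)^\perp\subseteq\B(n,q;\delta)$ if and only if $2\le \delta \le \delta_{\max}=\kappa$, which is exactly the claimed range.

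There is essentially no obstacle here: the corollary is a direct specialization, and the only thing to check is the bookkeeping that $\kappa$ simplifies to the stated closed form and is integral. The one point worth stating carefully is the verification of $\ord_n(q)=m$, which is what allows us to identify the $m$ in the corollary's distance bound with the order-theoretic $m$ of the theorem; this is the kind of detail that should be made explicit rather than assumed. Everything else follows immediately by invoking the previously established theorem, so the proof is short and amounts to little more than the substitution $n=q^m-1$ followed by a citation of Theorem~\ref{th:nonprimitiveduals2}.
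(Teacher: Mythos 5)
Your proof is correct and matches the paper's intent exactly: the paper presents this corollary as an immediate consequence of Theorem~\ref{th:nonprimitiveduals2}, obtained by setting $n=q^m-1$ so that the prefactor $n/(q^m-1)$ in $\kappa$ collapses to $1$ and $\kappa$ is manifestly integral. Your explicit check that $\ord_{q^m-1}(q)=m$ is a reasonable detail to spell out, but otherwise there is nothing to add.
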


We observe that a narrow-sense BCH code containing its Euclidean dual code must have
a small designed distance ($\delta=O(\sqrt{n}))$, when the multiplicative order of
$q$ modulo $n$ is greater than one.  This raises the question whether one can allow
larger designed distances by considering non-narrow-sense BCH codes. Our next result
shows that this is not possible, at least in the case of primitive codes.

\begin{theorem}\label {th:nonnarrowEdual}
Let $C$ be a primitive (not necessarily narrow-sense) BCH code of length $n=q^m-1$
over $\F_q$ with designed distance $\delta$. If $m>1$ and $\delta$ exceeds
$$
\delta_{\max}=\left\{ \begin{array}{ll}
q^{m/2}-1,& m \equiv 0\bmod 2,\\
2(q^{(m+1)/2}-q+1), & m \equiv 1 \bmod 2,
\end{array} \right. $$ then $C$
cannot contain its Euclidean dual.
\end{theorem}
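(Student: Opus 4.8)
By Lemma~\ref{th:selforthogonal}, the code $C$ fails to contain its Euclidean dual exactly when its defining set $Z$ satisfies $Z\cap Z^{-1}\neq\emptyset$, so the whole statement reduces to proving this non‑emptiness whenever $\delta>\delta_{\max}$. Since $C$ is a (possibly non‑narrow‑sense) primitive BCH code, its defining set contains the window of consecutive integers $I=\{b,b+1,\dots,b+\delta-2\}$ and is closed under multiplication by $q$ modulo $n$. Unwinding these two properties, $w\in Z\cap Z^{-1}$ means $w\equiv x_1q^{k_1}$ and $-w\equiv x_2q^{k_2}$ with $x_1,x_2\in I$, so after cancelling a common power of $q$ (legitimate since $\gcd(q,n)=1$ and $q^m\equiv1$) one sees that $Z\cap Z^{-1}\neq\emptyset$ is \emph{equivalent} to the existence of $x,y\in I$ and an exponent $0\le j<m$ with $y\equiv -xq^{j}\pmod n$. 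The plan is therefore to show that once $\delta>\delta_{\max}$ the window $I$ is long enough that such a collision is forced, for every admissible offset $b$. First I would dispose of the trivial case $b=0$ (then $0\in Z\cap Z^{-1}$) and, since enlarging $\delta$ only enlarges $Z$, reduce to the extremal designed distance $\delta=\delta_{\max}+1$.

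The essential tool is the digit calculus available for $n=q^{m}-1$. Writing $x=\sum_{i=0}^{m-1}x_iq^i$ with $0\le x_i\le q-1$ (digits not all equal to $q-1$), one has the two clean rules
$$
-x\equiv n-x\equiv\sum_{i=0}^{m-1}(q-1-x_i)q^i\pmod n,\qquad
xq^{j}\equiv\sum_{i=0}^{m-1}x_i\,q^{\,i+j\bmod m}\pmod n .
$$
In words: negation complements the base‑$q$ digits, while multiplication by $q^{j}$ cyclically rotates them. Hence $-xq^{j}\bmod n$ is obtained from $x$ by complementing and cyclically shifting its digit string, and the problem becomes a purely combinatorial one about a window of consecutive digit strings.

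For even $m=2t$ I would take the shift $j=t$. Writing $x=A+Bq^{t}$ with $0\le A,B<q^{t}$, the rules above give $-xq^{t}\equiv n-(B+Aq^{t})\pmod n$, i.e. $y=n-\tau(x)$ where $\tau$ is the involution swapping the low and high halves. Thus I need $x\in I$ with $\tau(x)\in n-I$, and since $\tau$ is a bijection this amounts to $\tau(I)\cap(n-I)\neq\emptyset$. Because here $\delta>\delta_{\max}=q^{t}-1$ forces the window length $L=\delta-1\ge q^{t}-1$, the window covers (up to a boundary block) a full residue system of low digits; a short case split according to whether $I$ stays within one base‑$q^{t}$ block or straddles a block boundary then produces an explicit pair $(A,B)$ with $x=A+Bq^{t}\in I$ and $\tau(x)\in n-I$. (This recovers the same bound $q^{m/2}-1$ as in the narrow‑sense corollary of Theorem~\ref{th:nonprimitiveduals2}, which is consistent since narrow‑sense is the case $b=1$.)

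The odd case $m=2t+1$, where $\delta_{\max}=2(q^{(m+1)/2}-q+1)$, is the main obstacle. Now no single rotation realises a half‑swap involution, so the digit string splits into unequal parts of lengths $t$ and $t+1$, and I would instead use the two candidate shifts $j=t$ and $j=t+1$ together with the roughly doubled window length to guarantee that the complemented‑and‑rotated image of some $x\in I$ lands back in $I$. The delicate point is exactly the bookkeeping that produces the constant $q^{t+1}-q+1$ and the factor $2$: controlling the carry behaviour at the boundary between the two unequal halves and verifying that the overlap of the window with its reflected/rotated copies is nonempty precisely when $\delta>\delta_{\max}$. I expect this windowed pigeonhole argument, with its asymmetric split, to be where essentially all of the real work lies, the even case serving as the clean prototype.
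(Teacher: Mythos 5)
Your reduction is sound: by Lemma~\ref{th:selforthogonal} everything comes down to forcing $Z\cap Z^{-1}\neq\emptyset$, and your reformulation (find $x,y$ in the window $I$ and $0\le j<m$ with $y\equiv -xq^{j}\bmod n$) together with the digit-complement/digit-rotation calculus is a legitimate framework. But the proposal is a plan rather than a proof, and the gap sits exactly where you say the ``real work'' lies. For even $m=2t$ you assert that a ``short case split'' on whether $I$ straddles a $q^{t}$-block boundary ``produces an explicit pair,'' but you never produce it; the clean way to close this case is to note that a window of $\delta-1\ge q^{t}-1$ consecutive integers contains a multiple $s=\alpha(q^{t}-1)$, and that $-sq^{t}\equiv s\pmod{q^{2t}-1}$, so $s$ itself lies in $Z\cap Z^{-1}$ --- a fixed point of the complement-and-rotate map, with no block bookkeeping needed.

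For odd $m$ you give no argument at all: you name two candidate shifts and say you ``expect'' a windowed pigeonhole to work, while conceding that the carry bookkeeping producing the constant $q^{(m+1)/2}-q+1$ and the factor $2$ is unresolved. That is precisely the content of the theorem in the odd case, and it does not follow from a pigeonhole on rotated digit strings in any obvious way. The paper's proof instead works arithmetically: since $\delta-1\ge 2\delta'$ with $\delta'=q^{(m+1)/2}-q+1$, the window contains two consecutive multiples $(\alpha-1)\delta'$ and $\alpha\delta'$; setting $s=\alpha\delta'$ one computes $-sq^{(m-1)/2}\equiv s'=\alpha(q^{(m+1)/2}-q^{(m-1)/2}-1)\pmod n$ with $0\le s'\le s$. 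If $s'\ge b$ one is done; otherwise $s'<b\le(\alpha-1)\delta'$ forces $\alpha\ge q$, and one exhibits a second explicit pair $t=s-(\alpha-q+1)$, $t'=(\alpha-1)(q^{(m+1)/2}-1)+q^{(m-1)/2}-1$, both shown to lie in the window, with $-tq^{(m-1)/2}\equiv t'$. The factor $2$ in $\delta_{\max}$ exists precisely to guarantee the two multiples needed for this fallback, not to enlarge an overlap of digit windows. Until you supply an argument of comparable precision for the odd case (and finish the even one), the theorem is not proved.
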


\begin{proof}
Let the defining set of $C$ be $Z=C_{b}\cup C_{b+1}\cup \cdots \cup C_{b+\delta-2}$.
We will show that if $\delta > \delta_{\max}$ then $Z\cap Z^{-1}\neq \emptyset$. If
$0\in Z$, then $0\in Z^{-1}$, so $Z\cap Z^{-1}\neq \emptyset$. Therefore, we can
henceforth assume that $0\not\in Z$, which implies $b\ge 1$ and $b+\delta-2<n$.
\begin{enumerate}
\item Suppose that $m$ is even; thus, $\delta_{\max}=q^{m/2}-1$.  If
$\delta > \delta_{\max}$ then the defining set $Z$ contains an element of the form
$s=\alpha \delta_{\max}$ for some integer $\alpha$. However,
\begin{eqnarray*}
-sq^{m/2} &\equiv& -\alpha(q^{m/2}-1)q^{m/2} \equiv
 \alpha(q^{m/2}-1)\\ &\equiv& s \pmod n.
\end{eqnarray*}
Hence, $s\in Z\cap Z^{-1}\neq \emptyset$.
\item
Suppose that $m >1$ is odd; thus, $\delta_{\max}=2q^{(m+1)/2}-2q+2$. If $\delta>
\delta_{\max}$ then there exists an integer $\alpha$ such that two multiples of
$\delta'=\delta_{\max}/2$ are contained in the range $b\leq (\alpha-1) \delta' <
\alpha\delta' \leq b+\delta-2$. Since $b\ge 1$ and $\alpha \delta'<n$, it follows
that $2 \leq \alpha\leq q^{(m-1)/2}$.

The defining set $Z$ of the code contains the element $s=\alpha\delta'$. The number
$s'=\alpha(q^{(m+1)/2}-q^{(m-1)/2}-1)$ lies in the range $0\le s' \leq s$ and
satisfies $-s q^{(m-1)/2}\equiv s' \bmod n$, so $s'\in Z^{-1}$.

Suppose that $b \leq s'$. Then $s'\in Z$, which implies $Z\cap Z^{-1}\neq
\emptyset$.

Suppose that $s'< b$. Since $b\leq (\alpha-1)\delta'$, we obtain the inequality
$s'<(\alpha-1)\delta'$; solving for $\alpha$ shows that $\alpha\geq q$; thus, $q\leq
\alpha\leq q^{(m-1)/2}$.  Let $t'=(\alpha-1)(q^{(m+1)/2}-1)+q^{(m-1)/2}-1$; it is
easy to check that $t'$ is in the range $(\alpha-1)\delta'\le t'\le \alpha \delta'$
when $\alpha\geq q$; thus, $t'\in Z$.  Further, let $t=s-(\alpha-q+1)$; since $t\geq
s-\delta'$, we have $t\in Z$ as well. Since $-tq^{(m-1)/2}\equiv t'\bmod n$, we can
conclude that $t'\in Z\cap Z^{-1}\neq \emptyset$.
\end{enumerate}
Therefore, we can conclude that if the designed distance of $C$ is greater than
$\delta_{\max}$, then $Z\cap Z^{-1}\neq \emptyset$, which proves the claim thanks to
Lemma~\ref{th:selforthogonal}.
\end{proof}

\section{Dimension and Minimum Distance}
While the results in the previous section are sufficient to tell us when we can
construct quantum BCH codes, they are still unsatisfactory because we do not know
the dimension of these codes. To this end, we determine the dimension of
narrow-sense BCH codes of length $n$ with minimum distance $d=O(n^{1/2})$. It turns
out that these results on dimension also allow us to sharpen the estimates of the
true distance of some BCH codes.

First, we make some simple observations about cyclotomic cosets that are essential
in our proof.
\begin{lemma}\label{th:bchnpcosetsize}
Let $n$ be a positive integer and $q$ be a power of a prime such that $\gcd(n,q)=1$
and $q^{\lfloor m/2\rfloor} <n \leq q^m-1$, where $m=\ord_n(q)$. The cyclotomic
coset $C_x=\{ xq^j\bmod n \mid 0\le j<m\}$ has cardinality $m$ for all $x$ in the
range $1\leq x\leq nq^{\lceil m/2\rceil}/(q^m-1).$
\end{lemma}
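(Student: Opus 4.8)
The plan is to argue by contradiction, using the standard fact that the cardinality $|C_x|$ is the least positive integer $\ell$ with $xq^{\ell}\equiv x \pmod n$, equivalently the least $\ell$ with $n\mid x(q^{\ell}-1)$. Since $q^{m}\equiv 1\pmod n$ we have $xq^{m}\equiv x\pmod n$, so this period $\ell$ divides $m=\ord_n(q)$; in particular $|C_x|\le m$ always, and it suffices to rule out $|C_x|<m$. First I would dispose of the trivial case $m=1$, where every coset is a singleton and $|C_x|=1=m$ automatically. For $m\ge 2$, suppose toward a contradiction that $\ell:=|C_x|<m$. Being a proper divisor of $m$, we have $\ell\le m/2$, and since $\ell$ is an integer this gives $\ell\le\floor{m/2}$.

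Next I would extract a lower bound on $x$ from the divisibility $n\mid x(q^{\ell}-1)$. Because $x\ge 1$ and $q^{\ell}-1\ge q-1>0$, the product $x(q^{\ell}-1)$ is a positive multiple of $n$, hence $x(q^{\ell}-1)\ge n$, which yields $x\ge n/(q^{\ell}-1)$. Using $\ell\le\floor{m/2}$ and monotonicity of $q^{\ell}-1$, this strengthens to $x\ge n/(q^{\floor{m/2}}-1)$ (here the denominator is positive precisely because $m\ge 2$, so that $\floor{m/2}\ge 1$).

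The final step is the inequality that collides with the hypothesized range. I would show $n/(q^{\floor{m/2}}-1) > nq^{\ceil{m/2}}/(q^{m}-1)$, which after clearing the positive denominators and dividing by $n$ is equivalent to $q^{m}-1 > q^{\ceil{m/2}}(q^{\floor{m/2}}-1)$. Since $\ceil{m/2}+\floor{m/2}=m$, the right-hand side equals $q^{m}-q^{\ceil{m/2}}$, so the inequality reduces to $q^{\ceil{m/2}}>1$, which holds as $m\ge 1$. Chaining the bounds gives $x\ge n/(q^{\floor{m/2}}-1) > nq^{\ceil{m/2}}/(q^{m}-1)$, contradicting the assumption $x\le nq^{\ceil{m/2}}/(q^{m}-1)$. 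Hence no proper period $\ell$ can occur and $|C_x|=m$. There is no genuinely hard step here; the only points requiring care are the edge case $m=1$, the positivity of $q^{\floor{m/2}}-1$, and the strictness of the central inequality. I would also remark in passing that the standing hypothesis $q^{\floor{m/2}}<n$ guarantees $nq^{\ceil{m/2}}/(q^{m}-1)>1$, so the asserted range of $x$ is nonempty and the statement is not vacuous.
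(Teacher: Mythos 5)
Your proof is correct and follows essentially the same route as the paper's: both assume a proper period $\ell\mid m$, note $n\mid x(q^{\ell}-1)$, and derive a contradiction from the size restriction on $x$ (the paper phrases it as $0<x(q^{\ell}-1)<n$, you as $x>nq^{\lceil m/2\rceil}/(q^m-1)$, which is the same inequality rearranged). Your version is in fact slightly cleaner, since using $\ell\le\lfloor m/2\rfloor$ uniformly avoids the paper's separate treatment of odd $m$ via the bound $\ell\le m/3$.
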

\begin{proof}
If $m=1$, then $|C_x|=1$ for all $x$ and the statement is trivially true. Therefore,
we can assume that $m>1$.  Seeking a contradiction, we suppose that $|C_x|<m$,
meaning that there exists a divisor $j$ of $m$ such that $xq^{j}\equiv x \bmod n$,
or, equivalently, that $x(q^{j}-1)\equiv 0\bmod n$ holds.

Suppose that $m$ is even. The divisor $j$ of $m$ must be in the range $1\le j\le
m/2$. However, $x(q^{j}-1) \leq nq^{m/2}(q^{m/2}-1)/(q^{m}-1) <n $; hence
$x(q^j-1)\not\equiv 0 \bmod n$, contradicting the assumption $|C_x|<m$.

Suppose that $m$ is odd. The divisor $j$ of $m$ must be in the range $1\le j\le
m/3$. Since $q^{(m+1)/2}\le q^{2m/3}$ for $m\ge 3$, we have $x(q^j-1)\le
nq^{(m+1)/2}(q^{m/3}-1)/(q^m-1)\le nq^{2m/3}(q^{m/3}-1)/(q^m-1)<n$. Therefore,
$x(q^j-1)\not\equiv 0\bmod n$, contradicting the assumption $|C_x|<m$.
\end{proof}

The following observation tells us when some cyclotomic cosets are disjoint.
\begin{lemma}\label{th:npdisjointcosets}
Let $n\ge 1$ be an integer and $q$ be a power of a prime such that $\gcd(n,q)=1$ and
$q^{\lfloor m/2\rfloor} <n \leq q^m-1$, where $m=\ord_n(q)$. If $x$ and $y$ are
distinct integers in the range $1\leq x,\, y\leq \min\{ \lfloor nq^{\lceil
m/2\rceil}/(q^m-1)-1\rfloor, n-1\}$ such that $x,y\not\equiv 0 \bmod q$, then the
$q$-ary cyclotomic cosets of $x$ and $y$ modulo $n$ are distinct.
\end{lemma}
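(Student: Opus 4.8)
I would argue by contradiction: suppose that $x\neq y$ lie in the stated range with $x,y\not\equiv 0\bmod q$, yet the cyclotomic cosets coincide, $C_x=C_y$. Since $y\in C_x$, there is an index $j$ with $0\le j<m$ such that $y\equiv xq^j\bmod n$. If $j=0$ then $y\equiv x\bmod n$, and because both $x$ and $y$ lie in $\{1,\dots,n-1\}$ this forces $x=y$, contradicting our assumption; hence $1\le j\le m-1$. Now I would exploit the symmetry between $x$ and $y$: from $C_x=C_y$ we also get $x\equiv yq^{\,m-j}\bmod n$, and one of the two exponents $j$ or $m-j$ is at most $\lfloor m/2\rfloor$ (for $m$ even they sum to $m$, for $m$ odd they cannot both exceed $\lfloor m/2\rfloor$ since that would sum to more than $m$). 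After possibly interchanging the names of $x$ and $y$, I may therefore assume $y\equiv xq^j\bmod n$ with $1\le j\le \lfloor m/2\rfloor$.

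The heart of the argument is a size estimate showing that the reduction modulo $n$ is in fact trivial, i.e. $y=xq^j$ as integers. Writing $B=nq^{\lceil m/2\rceil}/(q^m-1)-1$ for the upper end of the allowed range, I would compute
$$
Bq^{\lfloor m/2\rfloor}=\frac{nq^{\lceil m/2\rceil+\lfloor m/2\rfloor}}{q^m-1}-q^{\lfloor m/2\rfloor}
=\frac{nq^{m}}{q^m-1}-q^{\lfloor m/2\rfloor},
$$
using $\lceil m/2\rceil+\lfloor m/2\rfloor=m$. Since $n\le q^m-1$, the first term satisfies $nq^m/(q^m-1)=n+n/(q^m-1)\le n+1$, so that $Bq^{\lfloor m/2\rfloor}\le n+1-q^{\lfloor m/2\rfloor}$. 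Because $x\le B$ and $j\le\lfloor m/2\rfloor$, this gives $xq^j\le n+1-q^{\lfloor m/2\rfloor}$. For $m\ge 2$ we have $q^{\lfloor m/2\rfloor}\ge q\ge 2$, whence $xq^j\le n-1<n$; and clearly $xq^j\ge q\ge 1$. Thus $xq^j$ already lies in $\{1,\dots,n-1\}$, so $xq^j\bmod n=xq^j$, and since $y$ also lies in $\{1,\dots,n-1\}$ and is congruent to it, I conclude $y=xq^j$. (The case $m=1$ is immediate, as then no exponent $j\ge 1$ with $j\le\lfloor m/2\rfloor=0$ exists, and the $j=0$ reduction above already handles it.)

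Finally, $y=xq^j$ with $j\ge 1$ forces $q\mid y$, contradicting the hypothesis $y\not\equiv 0\bmod q$. This completes the proof. I expect the only delicate point to be the symmetry/\emph{WLOG} reduction to $1\le j\le\lfloor m/2\rfloor$ together with the careful tracking of the floor and ceiling exponents so that $\lceil m/2\rceil+\lfloor m/2\rfloor=m$ can be used cleanly; the size estimate itself is then a short computation essentially identical in spirit to the one in Lemma~\ref{th:bchnpcosetsize}, and indeed the same hypothesis $q^{\lfloor m/2\rfloor}<n\le q^m-1$ is what makes both arguments go through.
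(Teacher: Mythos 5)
Your proof is correct, but it reaches the conclusion by a different route than the paper. The paper's proof is a counting argument: it forms the set $S=\{xq^j\bmod n,\ yq^j\bmod n \mid 0\le j\le \lfloor m/2\rfloor\}$, uses the bound $q^{\lfloor m/2\rfloor}\cdot\lfloor nq^{\lceil m/2\rceil}/(q^m-1)-1\rfloor<n$ (exactly the estimate you establish) together with $x,y\not\equiv 0\bmod q$ to conclude that these $2(\lfloor m/2\rfloor+1)\ge m+1$ elements are pairwise distinct, and then notes that if $C_x=C_y$ they would all lie in a single cyclotomic coset, whose cardinality can be at most $m$. You instead argue directly: after the symmetry reduction to $y\equiv xq^j\bmod n$ with $1\le j\le\lfloor m/2\rfloor$, the same size estimate shows the congruence lifts to an integer equality $y=xq^j$, which contradicts $y\not\equiv 0\bmod q$. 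Both arguments hinge on the identical inequality; yours trades the pigeonhole step for a divisibility contradiction and, as a side benefit, makes explicit where the hypothesis $x,y\not\equiv 0\bmod q$ enters (in the paper's proof it is needed, somewhat tacitly, to certify that the elements of $S$ are pairwise distinct as integers, e.g.\ to rule out $xq^{j}=yq^{j'}$ with $j\neq j'$). Your WLOG reduction and your handling of the $m=1$ case are both sound.
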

\begin{proof}
If $m=1$, then clearly $C_x=\{ x\}$, $C_y=\{y \}$ and distinct $x,y$ implies that
$C_x$ and $C_y$ are disjoint. If $m>1$, then $x,y\leq \lfloor nq^{\lceil
m/2\rceil}/(q^m-1)-1 \rfloor <n-1$. The set $S=\{ xq^j\bmod n, yq^j \bmod n\,|\,
0\leq j\leq \lfloor m/2\rfloor\}$ contains $2(\lfloor m/2\rfloor+1) \geq m+1$
elements, since $q^{\floor{m/2}}\times \lfloor nq^{\lceil m/2\rceil}
/(q^m-1)-1\rfloor < n$ and, thus, no two elements are identified modulo $n$.  If we
assume that $C_x=C_y$, then the preceding observation would imply that
$|C_x|=|C_y|\geq |S| \geq m+1$, which is impossible since the maximal size of a
cyclotomic coset is $m$. Hence, the cyclotomic cosets $C_x$ and $C_y$ must be
disjoint.
\end{proof}

With these results in hand, we can now derive the dimension of narrow-sense BCH
codes.
\begin{theorem}\label{th:bchnpdimension}
Let $q$ be a prime power and $\gcd(n,q)=1$ with $\ord_n(q)=m$. Then a narrow-sense
BCH code of length $q^{\lfloor m/2\rfloor} <n \leq q^m-1$ over $\F_q$ with designed
distance $\delta$ in the range $2 \leq \delta \le \min\{ \lfloor nq^{\lceil m/2
\rceil}/(q^m-1)\rfloor,n\} $ has dimension
\begin{equation}\label{eq:npdimension}
k=n-m\lceil (\delta-1)(1-1/q)\rceil.
\end{equation}
\end{theorem}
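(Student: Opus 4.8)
The dimension of a cyclic code equals $k = n - |Z|$, where $Z = C_1 \cup C_2 \cup \cdots \cup C_{\delta-1}$ is the defining set of the narrow-sense BCH code of designed distance $\delta$. Thus the entire task reduces to evaluating $|Z|$, and I would do this by counting the \emph{distinct} cyclotomic cosets appearing in $Z$ together with their common size. The first step is the elementary observation that $C_{qx} = C_x$ for every $x$, since $C_{qx} = \{ qx\,q^{j} \bmod n \} = \{ x\,q^{j+1} \bmod n\} = C_x$; multiplication by $q$ merely permutes a coset. Writing any $x$ as $x = q^{a}x'$ with $q \nmid x'$ and applying this repeatedly gives $C_x = C_{x'}$, where $1 \le x' \le x \le \delta-1$. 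Hence every coset occurring in $Z$ is represented by a $q$-free integer in the range $[1,\delta-1]$, and the family of distinct cosets among $C_1,\dots,C_{\delta-1}$ is exactly $\{\, C_{x'} : 1 \le x' \le \delta-1,\ q \nmid x' \,\}$.

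The second step is to apply the two preparatory lemmas to these $q$-free representatives, and here I must check the range hypotheses carefully. Every representative satisfies $1 \le x' \le \delta-1$, and the hypothesis on $\delta$ gives $\delta - 1 < \delta \le \lfloor nq^{\lceil m/2\rceil}/(q^m-1)\rfloor \le nq^{\lceil m/2\rceil}/(q^m-1)$, so Lemma~\ref{th:bchnpcosetsize} applies and each such coset has full cardinality $m$. For distinctness I would invoke Lemma~\ref{th:npdisjointcosets}: its bound requires $x',y' \le \min\{\lfloor nq^{\lceil m/2\rceil}/(q^m-1)-1\rfloor, n-1\}$, and indeed $\delta - 1 \le \lfloor nq^{\lceil m/2\rceil}/(q^m-1)\rfloor - 1 = \lfloor nq^{\lceil m/2\rceil}/(q^m-1)-1\rfloor$ together with $\delta \le n$ giving $\delta-1 \le n-1$; since the representatives are $q$-free, distinct $q$-free representatives yield disjoint cosets. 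Writing $A = \{\, x : 1 \le x \le \delta-1,\ q \nmid x \,\}$, these two facts combine to give $|Z| = m\,|A|$.

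The final step is purely combinatorial. The number of multiples of $q$ in $[1,\delta-1]$ is $\lfloor (\delta-1)/q \rfloor$, so $|A| = (\delta-1) - \lfloor (\delta-1)/q \rfloor$. Since $\delta-1$ is an integer, $\lceil -( \delta-1)/q \rceil = -\lfloor (\delta-1)/q\rfloor$, whence $(\delta-1) - \lfloor (\delta-1)/q\rfloor = \lceil (\delta-1) - (\delta-1)/q \rceil = \lceil (\delta-1)(1-1/q) \rceil$. Therefore $|Z| = m\lceil (\delta-1)(1-1/q)\rceil$ and $k = n - m\lceil (\delta-1)(1-1/q)\rceil$, as claimed. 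I expect the only real subtlety to be the bookkeeping of the range inequalities in the second step---in particular verifying that $\delta-1$ meets the slightly stronger ``$-1$'' bound demanded by the disjointness lemma---while the reduction to $q$-free representatives and the floor/ceiling identity are routine once set up correctly.
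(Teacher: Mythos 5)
Your proof is correct and follows essentially the same route as the paper's: reduce to $q$-free coset representatives, invoke Lemma~\ref{th:npdisjointcosets} for disjointness and Lemma~\ref{th:bchnpcosetsize} for the coset size $m$, then count. Your explicit verification that $\delta-1$ satisfies the slightly stronger range bounds required by the two lemmas is a welcome bit of care that the paper's proof leaves implicit, but the argument is the same.
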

\begin{proof}
Let the defining set of $\B(n,q;\delta)$ be $Z=C_1\cup C_2\cdots \cup C_{\delta-1}$;
a union of at most $\delta -1$ consecutive cyclotomic cosets. However, when $1\leq
x\leq \delta-1$ is a multiple of $q$, then $C_{x/q}=C_x$. Therefore, the number of
cosets is reduced by $\lfloor(\delta-1)/q \rfloor$. By
Lemma~\ref{th:npdisjointcosets}, if $x, y\not\equiv 0 \bmod q$ and $x\neq y$, then
the cosets $C_x$ and $C_y$ are disjoint. Thus, $Z$ is the union of
$(\delta-1)-\lfloor (\delta-1)/q\rfloor= \lceil (\delta-1)(1-1/q)\rceil$ distinct
cyclotomic cosets. By Lemma~\ref{th:bchnpcosetsize}, all these cosets have
cardinality~$m$. Therefore, the degree of the generator polynomial is $m\lceil
(\delta-1)(1-1/q)\rceil$, which proves our claim about the dimension of the code.
\end{proof}

As a consequence of the dimension result, we can tighten the bounds on the minimum
distance of narrow-sense BCH codes generalizing a result due to Farr,
see~\cite[p.~259]{macwilliams77}.

\begin{corollary}
A $\B(n,q;\delta)$ code
\begin{compactenum}[i)]
\item with length in the range $q^{\lfloor m/2\rfloor} <n \leq q^m-1$,
$m=\ord_n(q)$,
\item and designed distance in the range
$2 \leq \delta \le \min\{ \lfloor nq^{\lceil m/2 \rceil}/(q^m-1)\rfloor,n\} $
\item such that
\begin{eqnarray}\label{eqa3}
\sum_{i=0}^{\lfloor (\delta+1)/2\rfloor} \binom{n}{i} (q-1)^i
>q^{m\lceil (\delta-1)(1-1/q)\rceil},
\end{eqnarray}
\end{compactenum}
has minimum distance $d= \delta$ or $\delta+1$; if $\delta\equiv 0\bmod q$, then
$d=\delta+1$.
\end{corollary}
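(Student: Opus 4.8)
The plan is to sandwich the true minimum distance $d$ between $\delta$ and $\delta+1$ by combining the BCH bound with a sphere-packing argument, and then to sharpen this in the divisible case by a cyclotomic coset computation. First I would record the elementary lower bound: by the BCH bound every $\B(n,q;\delta)$ code has $d\ge \delta$. The whole content of the statement is therefore the upper bound $d\le \delta+1$, together with the refinement when $\delta\equiv 0\bmod q$.

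For the upper bound, the key observation is that hypothesis~(iii) is precisely the \emph{negation} of the Hamming (sphere-packing) bound at packing radius $t=\lfloor(\delta+1)/2\rfloor$. Under hypotheses (i) and (ii) the designed distance lies in the range where Theorem~\ref{th:bchnpdimension} applies, so the code has dimension $k=n-m\lceil(\delta-1)(1-1/q)\rceil$, whence the exponent $n-k=m\lceil(\delta-1)(1-1/q)\rceil$ is exactly the one appearing on the right-hand side of~(\ref{eqa3}). I would then argue by contradiction: if $d\ge \delta+2$, the code corrects $t'=\lfloor(d-1)/2\rfloor\ge \lfloor(\delta+1)/2\rfloor$ errors, and disjointness of the Hamming balls of radius $t'$ gives $q^k\sum_{i=0}^{\lfloor(\delta+1)/2\rfloor}\binom{n}{i}(q-1)^i\le q^k\sum_{i=0}^{t'}\binom{n}{i}(q-1)^i\le q^n$, that is $\sum_{i=0}^{\lfloor(\delta+1)/2\rfloor}\binom{n}{i}(q-1)^i\le q^{n-k}=q^{m\lceil(\delta-1)(1-1/q)\rceil}$. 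This contradicts~(\ref{eqa3}), so $d\le \delta+1$, and hence $d\in\{\delta,\delta+1\}$.

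For the final refinement, suppose $\delta\equiv 0\bmod q$. The plan is to show that the defining set $Z=\bigcup_{x=1}^{\delta-1}C_x$ in fact already contains $\delta$, so that $Z\supseteq\{1,2,\dots,\delta\}$ is a block of $\delta$ consecutive integers. Writing $\delta=q\cdot(\delta/q)$ with $\delta/q$ an integer, the Frobenius action $x\mapsto xq$ on cyclotomic cosets shows $\delta\in C_{\delta/q}$; since $q\ge 2$ and $\delta\ge 2$ force $1\le \delta/q\le \delta-1$, the coset $C_{\delta/q}$ is one of those defining $Z$, so $\delta\in Z$. Applying the BCH bound to the $\delta$ consecutive roots $1,\dots,\delta$ then yields $d\ge \delta+1$, which combined with the already established $d\le \delta+1$ forces $d=\delta+1$.

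I expect the only delicate point to be verifying that the range hypotheses (i) and (ii) genuinely place $\delta$ inside the interval of validity of Theorem~\ref{th:bchnpdimension}, so that the dimension formula --- and hence the identification of the exponent in~(\ref{eqa3}) with $n-k$ --- is legitimate; once this is checked, the sphere-packing contradiction and the coset bookkeeping in the divisibility case are both routine.
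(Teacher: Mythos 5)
Your proposal is correct and follows essentially the same route as the paper: assume $d\ge\delta+2$, substitute the dimension $k=n-m\lceil(\delta-1)(1-1/q)\rceil$ from the dimension theorem into the sphere-packing bound to contradict hypothesis~(iii), and in the case $\delta\equiv 0\bmod q$ observe that $C_\delta=C_{\delta/q}$ lies in the defining set so the BCH bound forces $d\ge\delta+1$. The range hypotheses (i) and (ii) are exactly those of the dimension theorem, so the step you flagged as delicate is immediate.
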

\begin{proof}
Seeking a contradiction, we assume that the minimum distance~$d$ of the code
satisfies $d \geq \delta+2$. We know from Theorem~\ref{th:bchnpdimension} that the
dimension of the code is $k=n-m\lceil (\delta-1)(1-1/q)\rceil.$ If we substitute
this value of $k$ into the sphere-packing bound $q^{k}   \sum_{i=0}^{\lfloor
(d-1)/2\rfloor} \binom{n}{i} (q-1)^i \leq q^n$, then we obtain

\begin{eqnarray*}
\sum_{i=0}^{\lfloor (\delta+1)/2\rfloor}\binom{n}{i}(q-1)^i &\le&
\sum_{i=0}^{\lfloor (d-1)/2\rfloor}\binom{n}{i}(q-1)^i \\&\le& q^{m\lceil
(\delta-1)(1-1/q)\rceil},
\end{eqnarray*}
but this contradicts condition~(\ref{eqa3}); hence, $\delta\le d\le \delta+1$.

If $\delta\equiv 0\bmod q$, then the cyclotomic coset $C_\delta$ is contained in the
defining set $Z$ of the code because $C_\delta=C_{\delta/q}$. Thus, the BCH bound
implies that the minimum distance must be at least $\delta+1$.
\end{proof}

We conclude this section with  a minor result on the dual distance of BCH codes
which will be needed later for determining the purity of quantum codes.
\begin{lemma}\label{th:Edualdist}
Suppose that $C$ is a narrow-sense BCH code of length $n$ over $\F_q$ with designed
distance $2\leq \delta\le \delta_{\max}=\lfloor n(q^{\lceil m/2\rceil}-1-(q-2)[m
\textup{ odd}])/(q^m-1) \rfloor$, then the dual distance $d^\perp \geq \delta_{\max}
+ 1$.
\end{lemma}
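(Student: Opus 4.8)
The plan is to reduce the statement to the extremal case $\delta=\delta_{\max}$ and then exhibit an explicit run of consecutive zeros in the defining set of the dual code, from which the BCH bound delivers the claim.

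First I would exploit the monotonicity of BCH codes in the designed distance. Since $\delta\le\delta_{\max}$, the defining set of $\B(n,q;\delta_{\max})$ contains that of $C=\B(n,q;\delta)$, so $\B(n,q;\delta_{\max})\subseteq C$; taking Euclidean duals reverses the inclusion, giving $C^{\perp}\subseteq\B(n,q;\delta_{\max})^{\perp}$. A codeword of $C^{\perp}$ is therefore a codeword of $\B(n,q;\delta_{\max})^{\perp}$, whence $d^{\perp}=d(C^{\perp})\ge d(\B(n,q;\delta_{\max})^{\perp})$. Thus it suffices to bound the dual distance in the single case $\delta=\delta_{\max}$, where the defining set is guaranteed to contain the full initial segment $\{1,2,\dots,\delta_{\max}-1\}$ (a fact that is \emph{not} available for general $\delta<\delta_{\max}$, which is precisely why the reduction is needed).

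Next I would describe the defining set of the dual explicitly. Write $C'=\B(n,q;\delta_{\max})$ with defining set $Z'$, and recall that the defining set of a cyclic dual code is $Z'^{\perp}=\{0,1,\dots,n-1\}\setminus Z'^{-1}$, where $Z'^{-1}=\{-z\bmod n\mid z\in Z'\}$. Applying Theorem~\ref{th:sufficientEdual} with designed distance $\delta_{\max}=\lfloor\kappa\rfloor$ gives $C'^{\perp}\subseteq C'$, which by Lemma~\ref{th:selforthogonal} is equivalent to $Z'\cap Z'^{-1}=\emptyset$. Because $C'$ is narrow-sense we have $\{1,\dots,\delta_{\max}-1\}\subseteq Z'$ and $0\notin Z'$. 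The disjointness then forces $\{1,\dots,\delta_{\max}-1\}\cap Z'^{-1}=\emptyset$, while $0\notin Z'$ yields $0\notin Z'^{-1}$ (since $-0\equiv 0\bmod n$). Hence the $\delta_{\max}$ consecutive integers $0,1,\dots,\delta_{\max}-1$ all lie in $Z'^{\perp}$.

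Finally I would invoke the BCH bound: a run of $\delta_{\max}$ consecutive elements in the defining set forces minimum distance at least $\delta_{\max}+1$, so $d(C'^{\perp})\ge\delta_{\max}+1$, and combining with the reduction above gives $d^{\perp}\ge\delta_{\max}+1$. I do not anticipate a serious obstacle; the only points needing care are the bookkeeping for the dual defining set (orienting the sign convention so that the run starts at $0$) and checking that $\delta_{\max}\ge 2$ so the initial segment is nonempty, which follows from the hypothesis $2\le\delta\le\delta_{\max}$. The conceptual content is simply that the Euclidean self-orthogonality established in Theorem~\ref{th:sufficientEdual} is exactly the assertion that the block $\{0,1,\dots,\delta_{\max}-1\}$ survives intact in the dual defining set.
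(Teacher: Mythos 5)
Your proof is correct and follows essentially the same route as the paper: both arguments rest on Theorem~\ref{th:sufficientEdual} and Lemma~\ref{th:selforthogonal} giving $Z_{\delta_{\max}}\cap Z_{\delta_{\max}}^{-1}=\emptyset$, then locate the run $\{0,1,\dots,\delta_{\max}-1\}$ inside the dual defining set and invoke the BCH bound. The only difference is cosmetic: you first pass to the extremal code $\B(n,q;\delta_{\max})$ and use monotonicity of duals, whereas the paper shows directly that $Z_{\delta_{\max}}^{-1}\subseteq N\setminus Z_{\delta}$ and hence $Z_{\delta_{\max}}\cup\{0\}$ lies in the dual defining set of $C$ itself.
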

\begin{proof}
Let $N=\{0,1,\ldots,n-1 \}$ and $Z_{\delta}$ be the defining set of $C$. We know
that $Z_{\delta_{\max}}\supseteq Z_{\delta}\supset \{1,\ldots,\delta-1 \}$.
Therefore $N\setminus Z_{\delta_{\max}} \subseteq N\setminus Z_{\delta}$.  Further,
we know that $Z\cap Z^{-1}=\emptyset$ if $2\leq \delta\leq \delta_{\max }$ from
Lemma~\ref{th:selforthogonal} and Theorem~\ref{th:sufficientEdual}. Therefore,
$Z^{-1}_{\delta_{\max}}\subseteq N\setminus Z_{\delta_{\max}}\subseteq N\setminus
Z_{\delta}$.

Let $T_{\delta}$ be the defining set of the dual code. Then $T_{\delta}=(N\setminus
Z_{\delta})^{-1} \supseteq Z_{\delta_{\max}}$. Moreover $\{0\}\in N\setminus
Z_{\delta}$ and therefore $T_{\delta}$. Thus there are at least $\delta_{\max}$
consecutive roots in $T_{\delta}$. Thus the dual distance $d^\perp \geq
\delta_{\max}+1$.
\end{proof}

\section{Hermitian Dual Codes}
Suppose that $C$ is a linear code of length $n$ over $\F_{q^2}$. Recall that its
Hermitian dual code is defined by $C^{\perp_h}=\{ y\in \F_{q^2}^n\,|\, y^q\cdot x =
0 \mbox{ for all } x \in C\}$, where $y^q=(y_1^q,\dots,y_n^q)$ denotes the conjugate
of the vector $y=(y_1,\dots,y_n)$.

\begin{lemma}\label{th:hermitianBch}
Assume that $\gcd(n,q)=1$. A cyclic code of length $n$ over $\F_{q^2}$ with defining
set $Z$ contains its Hermitian dual code if and only if $Z\cap Z^{-q} = \emptyset$,
where $Z^{-q}=\{-qz \bmod n \mid z \in Z \}$.
\end{lemma}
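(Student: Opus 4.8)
The plan is to reduce the Hermitian criterion to the Euclidean criterion of Lemma~\ref{th:selforthogonal} by passing through the conjugation map $x\mapsto x^q$ on $\F_{q^2}^n$ and then translating every operation into the language of defining sets. Write $N=\{0,1,\ldots,n-1\}$ and let $\alpha$ be a primitive $n$-th root of unity in an extension of $\F_{q^2}$; since $\gcd(n,q)=1$ such an $\alpha$ exists, so all the $q^2$-ary cyclotomic machinery applies.

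First I would record the elementary identity $C^{\perp_h}=\ol{C^\perp}$, where $\ol{D}=\{\,\ol{y}\mid y\in D\,\}$ denotes the conjugate code and $\ol{y}=(y_1^q,\ldots,y_n^q)$. Indeed, by definition $y\in C^{\perp_h}$ iff $y^q\cdot x=0$ for all $x\in C$, i.e.\ iff $\ol{y}\in C^\perp$; since $y\mapsto\ol{y}$ is an involution of $\F_{q^2}^n$, the set of all such $y$ is exactly $\ol{C^\perp}$. Next I would compute defining sets. For any cyclic code $D$ over $\F_{q^2}$ with defining set $Z_D$, the evaluation identity $\bigl(d(\alpha^s)\bigr)^q=\ol{d}(\alpha^{qs})$ shows that $\ol{d}(\alpha^{qs})=0\iff d(\alpha^s)=0$; since $q$ is invertible modulo $n$, every exponent has the form $qs$, and hence $\ol{D}$ has defining set $qZ_D=\{qz\bmod n\mid z\in Z_D\}$. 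Likewise, the Euclidean dual $C^\perp$ has defining set $-(N\setminus Z)=\{-z\bmod n\mid z\in N\setminus Z\}$, the standard reciprocal-of-check-polynomial fact underlying Lemma~\ref{th:selforthogonal}. Combining the two computations, the defining set of $C^{\perp_h}=\ol{C^\perp}$ is
\[
T=\{\,-qz\bmod n\mid z\in N\setminus Z\,\}.
\]

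Finally I would invoke the containment criterion for cyclic codes: a subcode has the larger defining set, so $C^{\perp_h}\subseteq C$ holds iff $Z\subseteq T$. Because $\gcd(q,n)=1$, the map $\phi\colon x\mapsto -qx\bmod n$ is a bijection of $N$, whence $T=\phi(N\setminus Z)=N\setminus\phi(Z)=N\setminus Z^{-q}$. Therefore $Z\subseteq T\iff Z\cap Z^{-q}=\emptyset$, which is exactly the claimed equivalence.

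No genuine estimate is involved here, so I do not expect any deep difficulty; the only real care needed is bookkeeping. The step most prone to error is keeping the directions of inclusion straight (smaller code $\leftrightarrow$ larger defining set) while chaining the conjugate and Euclidean-dual defining-set formulas, and then using bijectivity of $\phi$ to convert the complement $N\setminus Z^{-q}$ back into the disjointness statement. Thus the main obstacle will be presenting these defining-set manipulations cleanly rather than overcoming any substantive mathematical hurdle.
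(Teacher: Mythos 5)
Your proof is correct and follows essentially the same route as the paper's: identify the defining set of $C^{\perp_h}$ as $\{-qz \bmod n \mid z\in N\setminus Z\}$ and translate the containment $C^{\perp_h}\subseteq C$ into $Z\subseteq T$, hence $Z\cap Z^{-q}=\emptyset$. The only difference is that the paper simply asserts the form of the generator polynomial of $C^{\perp_h}$, whereas you derive it by factoring the Hermitian dual as the conjugate of the Euclidean dual and tracking defining sets; this is a useful elaboration but not a different argument.
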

\begin{proof}
Let $N=\{0,1,\dots,n-1\}$. If $g(x)=\prod_{z\in Z} (x-\alpha^z)$ is the generator
polynomial of a cyclic code $C$, then $h^\dagger(x)=\prod_{z\in N\setminus Z}
(x-\alpha^{-qz})$ is the generator polynomial of $C^{\perp_h}$.  Thus,
$C^{\perp_h}\subseteq C$ if and only if $g(x)$ divides $h^\dagger(x)$. The latter
condition is equivalent to $Z\subseteq \{ -qz\,|\, z\in N\setminus Z\}$, which can
also be expressed as $Z\cap Z^{-q}=\emptyset$.
\end{proof}

Now similar to Theorem~\ref {th:sufficientEdual} we will derive a sufficient
condition for BCH codes that contain their Hermitian duals.

\begin{theorem}\label {th:sufficientHdual}
Suppose that $m=\ord_n(q^2)$. If the designed distance $\delta$ satisfies
$2\leq \delta  \leq \delta_{\max}$, where 
\begin{eqnarray*}
\delta_{\max} &=& \left\lfloor \frac{n}{q^{2m}-1} (q^{
m+[\textup{m even}] }-1-(q^2-2)[m \textup{ even}])\right\rfloor,
\end{eqnarray*}
 then
$\B(n,q^2;\delta)^{\perp_h} \subseteq \B(n,q^2;\delta).$
\end{theorem}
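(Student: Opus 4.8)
The plan is to mirror the proof of Theorem~\ref{th:sufficientEdual}, replacing the Euclidean self-orthogonality test by the Hermitian one of Lemma~\ref{th:hermitianBch}. First I would observe that it suffices to establish $Z\cap Z^{-q}=\emptyset$ for the defining set $Z$ of $\B(n,q^2;\delta_{\max})$: since the defining set of $\B(n,q^2;\delta)$ is contained in that of $\B(n,q^2;\delta_{\max})$ whenever $\delta\le\delta_{\max}$, the inclusion $\B(n,q^2;\delta)^{\perp_h}\subseteq\B(n,q^2;\delta)$ then follows for every admissible designed distance. So I fix $Z=C_1\cup\cdots\cup C_{\delta_{\max}-1}$, where now $C_x=\{xq^{2k}\bmod n\mid k\ge 0\}$ is the $q^2$-ary cyclotomic coset of $x$ modulo $n$, as befits a code over $\F_{q^2}$.

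Next, seeking a contradiction, suppose $Z\cap Z^{-q}\neq\emptyset$, where $Z^{-q}=\{-qz\bmod n\mid z\in Z\}$. Unwinding the definitions and using that the coset exponents are \emph{even} powers of $q$ while the map $z\mapsto -qz$ contributes one further factor of $q$, I would extract integers $x,y\in\{1,\dots,\delta_{\max}-1\}$ together with an \emph{odd} exponent $s$ such that $y\equiv -xq^{s}\bmod n$. Because $m=\ord_n(q^2)$ gives $q^{2m}\equiv 1\bmod n$, I can reduce $s$ modulo $2m$; and since the relation can be rewritten as $x\equiv -yq^{\,2m-s}\bmod n$ with $2m-s$ again odd, exchanging the roles of $x$ and $y$ if necessary lets me assume $1\le s\le s_{\max}$, where $s_{\max}=m-[m\textup{ even}]$ is the largest odd integer not exceeding $m$.

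The decisive point, which dictates the exact shape of $\delta_{\max}$, is that the exponent $m+[m\textup{ even}]$ occurring in $\delta_{\max}$ is precisely the complement of $s_{\max}$, so that $q^{\,m+[m\textup{ even}]}\,q^{s_{\max}}=q^{2m}$ and the denominator $q^{2m}-1$ telescopes. Concretely, I would first bound $xq^{s}\le xq^{s_{\max}}<n$ by a short estimate using $x\le\delta_{\max}-1$ and the value of $\delta_{\max}$; this forces $y=n-xq^{s}$, the unique residue in $\{1,\dots,n-1\}$. Substituting $x\le\delta_{\max}-1$ and the bound on $\delta_{\max}$ then yields
\[
y\ge n-(\delta_{\max}-1)q^{s_{\max}}
=\frac{n}{q^{2m}-1}\bigl(q^{s_{\max}}-1+q^{s_{\max}}(q^2-2)[m\textup{ even}]\bigr)+q^{s_{\max}}\ge\delta_{\max},
\]
contradicting $y<\delta_{\max}$ and completing the argument.

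I expect the main obstacle to be the final inequality $y\ge\delta_{\max}$ in the case $m$ even. There the leading fractional term alone falls just short of $\delta_{\max}$ (owing to the $(q^2-2)$ correction and the floor function in $\delta_{\max}$), so one must retain the additive slack $+q^{s_{\max}}$ that comes from using $x\le\delta_{\max}-1$ rather than $x\le\delta_{\max}$, and then verify a clean polynomial inequality in $q$ and $m$ that invokes $n\le q^{2m}-1$. The case $m$ odd is the direct analogue of the Euclidean computation in Theorem~\ref{th:sufficientEdual}, with $q$ and $\lceil m/2\rceil$ replaced throughout by $q^2$ and $m$, and should present no real difficulty.
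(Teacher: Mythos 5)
Your proposal is correct and follows essentially the same route as the paper's proof: reduce to $\delta_{\max}$, invoke the Hermitian analogue of the cyclic self-orthogonality criterion to extract $x,y<\delta_{\max}$ with $y\equiv -xq^{2j+1}\bmod n$, restrict the odd exponent to $0\le j\le\lfloor (m-1)/2\rfloor$ (your $s\le m-[m\ \textup{even}]$), force $y=n-xq^{2j+1}$ by a size estimate, and derive $y\ge\delta_{\max}$, with the additive $+q^{s_{\max}}$ slack carrying the even case exactly as you anticipate. No gaps worth noting.
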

\begin{proof}
Since $\B(n,q^2;\delta)$ contains $\B(n,q^2;\delta_{\max})$, it suffices to show
that the relation
$\B(n,q^2;\delta_{\max})^{\perp_h} \subseteq \B(n,q^2;\delta_{\max})$ holds.

Seeking a contradiction, we assume that $\B(n,q^2;\delta_{\max})$ does not contain
its dual. Let $Z=C_1 \cup C_2\cup \dots \cup C_{\delta_{\max}-1}$ be the defining
set of $\B(n,q^2;\delta_{\max} )$. By Lemma~\ref{th:hermitianBch}, $Z\cap Z^{-q}\neq
\emptyset$, which means that there exist two elements $x,y \in
\{1,...,\delta_{\max}-1\}$ such that $y=-xq^{2j+1} \bmod n$ for some $j \in
\{0,1,...,m-1\}$, where $m=\ord_n(q)$. Since $\gcd(q,n)=1$ and $q^{2m} \equiv 1
\bmod n$, we also have $y \equiv - xq^{2m-2j-1} \bmod n$, so we can assume without
loss of generality that $j$ lies in the range $0 \leq j \leq \lfloor (m-1)/2
\rfloor$. It follows that
\begin{eqnarray*}
xq^{2j+1} &\leq& (\delta_{\max}-1)q^{2j+1} \\ &=& \frac{nq^{2j+1}}{q^{2m}-1} (q^{
m+[\textup{m  even}] }-1 
-(q^2-2)[m \textup{ even}]) -q^{2j+1} \\&<& n
\end{eqnarray*}
holds  for all $j$ in the range $0 \leq j \leq \lfloor (m-1)/2\rfloor$.

Since $1 \leq xq^{2j+1} < n$, the congruence $y \equiv -xq^{2j+1} \bmod n$ implies
that $y=n-xq^{2j+1}$. Therefore, $y\ge n-(\delta_{\max}-1)q^{2\floor{(m-1)/2}+1}$,
which is equivalent to
\begin{eqnarray*} y&\geq& n -  \frac{n q^{2\lfloor
(m-1)/2\rfloor+1}}{q^{2m}-1} (q^{ m+[\textup{m even}]}-1\\&&-(q^2-2)[m \textup{
even}]) +q^{2\floor{(m-1)/2}+1}.
\end{eqnarray*}
If $m$ is odd, this yields
\begin{eqnarray*}
y&\geq& n - \frac{n q^m}{q^{2m}-1} (q^{ m }-1)+q^m \\&=&
\frac{n}{q^{2m-1}}(q^{m}-1)+q^m\ge \delta_{\max}\, .
\end{eqnarray*}
Similarly, if $m$ is even, then
\begin{eqnarray*}
   y&\geq&\frac{n}{q^{2m}-1}(q^{m+1}-q^{m-1}-1)+q^{m-1} \\ &\ge& \delta_{\max}.
\end{eqnarray*}
Both cases contradict the assumption $0\le y<\delta_{\max}$. Therefore, we can
conclude that $\B(n,q;\delta_{\max})$ contains its Hermitian dual code.
\end{proof}
Arguing as in Theorem~\ref{th:necessaryEdual} we can show that a BCH code must have
its designed distance $\delta=O(q^2n^{1/2})$ if it contains its Hermitian dual. As
the arguments are very similar we illustrate it for a simpler case as shown below:

\begin{lemma}
Let $C\subseteq \F_{q^2}^n$ be a nonnarrow-sense, nonprimitive BCH code of length
$n\equiv 0\bmod q^{m}+1$, where $m=\ord_n(q^2)$. If its design distance $\delta \geq
\delta_{\max}=n/(q^{m}+1)$, then $C$ cannot contain its Hermitian dual.
\end{lemma}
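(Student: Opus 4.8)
The plan is to mirror the Euclidean argument of Theorem~\ref{th:nonnarrowEdual}, but now working with the Hermitian coset condition of Lemma~\ref{th:hermitianBch}. Writing $Z$ for the defining set of $C$, it suffices to exhibit an element lying in both $Z$ and $Z^{-q}=\{-qz \bmod n \mid z\in Z\}$, since $Z\cap Z^{-q}\neq\emptyset$ forces $C^{\perp_h}\not\subseteq C$. If $0\in Z$ then $0\in Z^{-q}$ as well and we are immediately done, so I would assume $0\notin Z$; this pins the defining set into $\{1,\dots,n-1\}$ and guarantees that $Z$ contains the run of consecutive integers $\{b,b+1,\dots,b+\delta-2\}$ of length $\delta-1$, together with all of their $q^2$-conjugates.

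The arithmetic heart of the proof is the congruence attached to $s:=\delta_{\max}=n/(q^m+1)$. Since $q^m+1$ divides $n$ we have $q^m=(n/s)-1$, and hence for every integer $\alpha$
\begin{equation*}
\alpha s\, q^{m}=\alpha s\Big(\frac{n}{s}-1\Big)=\alpha n-\alpha s\equiv -\alpha s \pmod n,
\end{equation*}
so that $-\alpha s\,q^{m}\equiv \alpha s \pmod n$ whenever $\alpha s\not\equiv 0$. This is exactly the relation needed to place a nonzero multiple of $s$ simultaneously in $Z$ and in $Z^{-q}$: if such a multiple $\alpha s$ belongs to $Z$ and the exponent $m$ is odd, then $q^{m}$ is an \emph{odd} power of $q$, so $-\alpha s\,q^{m}$ is one of the elements of the Hermitian-negated cyclotomic coset of $\alpha s$, whence $\alpha s\in Z\cap Z^{-q}$ and Lemma~\ref{th:hermitianBch} finishes the argument.

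The two remaining tasks are (i) to produce a nonzero multiple of $s$ inside $Z$, and (ii) to reconcile the exponent $m$ appearing in the congruence with the odd exponents $2j+1$ that govern Hermitian duality. For (i) I would use the length of the consecutive run: once the block of $\delta-1$ consecutive integers contained in $\{1,\dots,n-1\}$ is long enough it must contain a multiple $\alpha s$ of $s=\delta_{\max}$ with $1\le \alpha\le q^m$, giving $\alpha s\in Z$ and $0<\alpha s<n$. This is the direct analogue of the step in Theorem~\ref{th:nonnarrowEdual} where one extracts an element of the form $\alpha\delta_{\max}$ from the defining set; combining it with the displayed congruence yields $\alpha s\in Z\cap Z^{-q}$ and the conclusion follows.

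The main obstacle is exactly point (ii), together with the borderline of point (i). Modulo $q^m+1$ one checks that $q^e\equiv-1$ holds precisely for $e\equiv m \pmod{2m}$, so the congruence $-\alpha s\,q^{m}\equiv\alpha s$ only becomes a genuine Hermitian relation when the factor $q^m+1$ of $n$ and the order $m=\ord_n(q^2)$ cooperate to make $m$ an admissible odd exponent; the delicate content of the lemma is precisely this interplay between the divisor $q^m+1$ and $m$. The companion subtlety is that when $\delta$ is at the very bottom of the allowed range the consecutive run may just fail to capture a multiple of $\delta_{\max}$, and one must then fall back on the full cyclotomic-coset structure of $Z$ to locate a suitable $\alpha s$. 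Carrying out this analysis carefully, in complete parallel with the even/odd case split performed for the Euclidean non-narrow-sense codes in Theorem~\ref{th:nonnarrowEdual}, is where the real work lies.
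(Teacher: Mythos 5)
Your proposal is, in its worked-out portion, exactly the paper's proof. The published argument takes the same three steps: it extracts an element $s=\alpha\delta_{\max}$ of the defining set from the consecutive stretch $\{b,\dots,b+\delta-2\}\subseteq Z$, computes $-qs(q^2)^{(m-1)/2}\equiv -\alpha n q^{m}/(q^m+1)\equiv \alpha n/(q^m+1)\equiv s\pmod{n}$, and concludes $s\in Z\cap Z^{-q}$ via Lemma~\ref{th:hermitianBch}. That displayed computation is precisely your congruence $-\alpha s\,q^m\equiv\alpha s\pmod n$, with the exponent $m$ written as $1+2\cdot\frac{m-1}{2}$ so that membership in $Z^{-q}$ becomes visible.

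The two difficulties you defer to ``the real work'' are genuine, but you will not find them resolved in the paper, so do not measure your proposal against an imagined complete argument. First, the factor $(q^2)^{(m-1)/2}$ is an integer power of $q^2$ only when $m$ is odd; your observation that the relation $-\alpha s\,q^{e}\equiv\alpha s$ needs an odd exponent $e$ is exactly the issue, and for $m$ even one finds $s\in Z^{-q}$ if and only if $qs$ (rather than $s$) lies in $Z$, a case on which the published proof is silent. Second, the paper's proof opens with ``if $\delta>\delta_{\max}$,'' which is what guarantees that a run of $\delta-1$ consecutive integers contains a multiple of $\delta_{\max}$; the boundary case $\delta=\delta_{\max}$ permitted by the statement is likewise not treated there (and, as you note, a run of length $\delta_{\max}-1$ can miss every multiple). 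In short, your write-up contains everything the paper's proof contains, follows the same route as the Euclidean argument of Theorem~\ref{th:nonnarrowEdual} specialized through Lemma~\ref{th:hermitianBch}, and your closing paragraph is an accurate diagnosis of what the published proof leaves open rather than a defect peculiar to your approach.
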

\begin{proof}
The defining set $Z=C_b\cup \ldots \cup C_{b+\delta-2}$ contains $\{
b,\ldots,b+\delta-2\}$. If $\delta>\delta_{\max}=n/(q^{m}+1)$, then there exists an
element $s=\alpha\delta_{\max}\in Z$ for some positive integer $\alpha$.  Then
$-qs(q^2)^{(m-1)/2}\equiv-\alpha nq^{m}/(q^m+1) \equiv \alpha n/(q^m+1)\equiv s \mod
{n}$. Therefore, $Z\cap Z^{-q}\neq \emptyset$; hence, $C$ cannot contain its
Hermitian dual code.
\end{proof}
Finally, we conclude this section on Hermitian duals by proving as in the Euclidean
case nonnarrow-sense BCH codes that contain their Hermitian duals cannot have too
large design distances.

\begin{theorem}\label{th:nonnarrowhdual}
Let $C \subseteq \F_{q^2}^n$ be a primitive (not necessarily narrow-sense) BCH code
of length $n=q^{2m}-1$, $m=\ord_n(q)$, and designed distance $\delta$. If $\delta$
exceeds
$$\delta_{\max}=\left\{ \begin{array}{ll}
q^{m}-1 & \text{if\/ $m$ is odd},\\
2(q^{m+1}-q^2+1) & \text{if\/ $m\neq 2$ is even},
\end{array} \right. $$ then $C$ cannot contain its Hermitian dual code.
\end{theorem}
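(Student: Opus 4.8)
The plan is to argue exactly as in the Euclidean counterpart, Theorem~\ref{th:nonnarrowEdual}, but with the pairing $z \mapsto -z$ replaced throughout by $z \mapsto -qz$, since by Lemma~\ref{th:hermitianBch} the code $C$ contains its Hermitian dual if and only if $Z \cap Z^{-q} = \emptyset$, where $Z$ is the defining set of $C$ and $Z^{-q} = \{-qz \bmod n \mid z \in Z\}$. Thus it suffices to show that $\delta > \delta_{\max}$ forces $Z \cap Z^{-q} \neq \emptyset$. Writing $Z = C_b \cup \cdots \cup C_{b+\delta-2}$ as a union of consecutive $q^2$-ary cyclotomic cosets, I will use only the two structural facts that $Z$ is closed under multiplication by $q^2$ modulo $n$ and that $q^{2m} \equiv 1 \bmod n$ (recall $n = q^{2m}-1$). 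If $0 \in Z$ then $0 = -q\cdot 0 \in Z \cap Z^{-q}$ and we are done, so I may assume $0 \notin Z$; this forces $b \geq 1$ and $b+\delta-2 < n$, whence $Z \supseteq \{b, b+1, \ldots, b+\delta-2\}$.

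For odd $m$ the argument is short and mirrors the even-$m$ Euclidean case. Here $\delta_{\max} = q^m - 1$, so the run of $\delta - 1 \geq q^m - 1$ consecutive integers in $Z$ contains a positive multiple $s = \alpha(q^m-1)$ with $1 \leq s < n$, hence $s \in Z$. Using $q^{2m} \equiv 1 \bmod n$ one checks $-q\,(s\,q^{m-1}) = -s\,q^m \equiv s \bmod n$; since $m-1$ is even, $q^{m-1} = (q^2)^{(m-1)/2}$, so $s\,q^{m-1} \in Z$, and therefore $s \in Z^{-q}$. Thus $s \in Z \cap Z^{-q}$.

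The even case ($m \neq 2$) is where the real work lies, and it parallels the odd-$m$ Euclidean argument. Set $\delta' = \delta_{\max}/2 = q^{m+1}-q^2+1$. Since $\delta - 1 \geq 2\delta'$, the interval $\{b, \ldots, b+\delta-2\}$ contains two consecutive multiples $(\alpha-1)\delta'$ and $\alpha\delta'$ of $\delta'$, and the constraints $b \geq 1$ and $\alpha\delta' < n$ pin $\alpha$ to the range $2 \leq \alpha \leq q^{m-1}$. Put $s = \alpha\delta' \in Z$ and $s' = \alpha(q^{m+1}-q^{m-1}-1)$; because $m-1$ is odd, $q^{m-1} = q\cdot(q^2)^{(m-2)/2}$, so $-q\,(s\,q^{m-2}) = -s\,q^{m-1} \equiv s' \bmod n$, giving $s' \in Z^{-q}$, and one verifies $0 \leq s' \leq s$. (It is precisely the inequality $s' \leq s$, equivalent to $q^{m-1} \geq q^2 - 2$, that fails for $m=2$, which is why that case is excluded.) If $b \leq s'$, then $s' \in Z$ and $s' \in Z \cap Z^{-q}$. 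Otherwise $s' < b \leq (\alpha-1)\delta'$, and solving this inequality forces $\alpha \geq q^2$; in that regime I set $t' = (\alpha-1)(q^{m+1}-1)+q^{m-1}-1$ and $t = s-(\alpha-q^2+1)$, check that both lie in the block $[(\alpha-1)\delta', \alpha\delta'] \subseteq Z$, and compute $-q\,(t\,q^{m-2}) = -t\,q^{m-1} \equiv t' \bmod n$, so that $t' \in Z \cap Z^{-q}$.

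The hard part will not be the congruences, which all collapse to $q^{2m} \equiv 1 \bmod n$, but the bookkeeping of the floor and interval inequalities: one must confirm that $\alpha$ lands in $[2, q^{m-1}]$, that $s', t, t'$ genuinely sit inside subintervals contained in $Z$, and that the case split $b \leq s'$ versus $s' < b$ exhausts all possibilities while yielding $\alpha \geq q^2$ in the second branch. These are the same delicate estimates as in Theorem~\ref{th:nonnarrowEdual} under the substitution $q \mapsto q^2$ in the block size $\delta'$ and $-1 \mapsto -q$ in the pairing, and the exclusion $m \neq 2$ is exactly the boundary where the governing inequality $q^{m-1} \geq q^2-2$ first holds.
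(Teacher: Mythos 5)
Your proposal is correct and follows essentially the same route as the paper's proof: reduce to $Z\cap Z^{-q}\neq\emptyset$ via Lemma~\ref{th:hermitianBch}, pair a multiple of $\delta_{\max}$ with itself using $-sq^{m}\equiv s$ when $m$ is odd, and for even $m>2$ run the identical $s,s',t,t'$ case analysis on two consecutive multiples of $\delta'=\delta_{\max}/2$. Your formula $t'=(\alpha-1)(q^{m+1}-1)+q^{m-1}-1$ in fact corrects what is evidently a typo in the paper's proof (which writes $q^{(m-1)/2}$, a leftover from the Euclidean case), and your observation that $s'\le s$ is exactly what fails at $m=2$ is a valid justification for the exclusion.
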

\begin{proof}
Suppose that the defining set of $C$ is given by $Z=C_b\cup \cdots \cup
C_{b+\delta-2}$, where $C_x=\{ xq^{2j}\bmod n\,|\, j\in \Z \}$, and that
$\delta>\delta_{\max}$. Seeking a contradiction, we assume that $C^\hdual\subseteq
C$, which means that $Z\cap Z^{-q}=\emptyset$. It follows that $0\not\in Z$, for
otherwise $0\in Z\cap Z^{-q}$; therefore, $b\ge 1$ and $b+\delta-2<n$.

If $m$ is odd, then there exists an integer $\alpha$ such that $b\le
\alpha\delta_{\max}\le b+\delta-2$. We have $-q\alpha\delta_{\max}q^{m-1} \equiv
\alpha (1-q^m)q^m \equiv \alpha (q^m-1) \equiv \alpha\delta_{\max}\bmod n$; thus,
$\alpha\delta_{\max}\in Z\cap Z^{-q}\neq \emptyset.$

If $m>2$ is even and $\delta> \delta_{\max}=2q^{m+1}-2q^2+2$, then there exists an
integer $\alpha$ such that two multiples of $\delta'=\delta_{\max}/2$ are contained
in the range $b\leq (\alpha-1) \delta' < \alpha\delta' \leq b+\delta-2$. Since $b\ge
1$ and $\alpha \delta'<n$, it follows that $2 \leq \alpha\leq q^{m-1}$ (which holds
only if $m>2$).

Clearly $s=\alpha\delta'\in Z$. Let $s'\equiv -qsq^{m-2}\bmod n$, so $s'\in Z^{-q}$,
then $1\leq s'=\alpha(q^{m+1}-q^{m-1}-1) \leq s$ for $m>2$.

Suppose that $b \leq s'$. Then $s'\in Z$, which implies $Z\cap Z^{-q}\neq
\emptyset$.

Suppose that $s'< b$. Since $b\leq (\alpha-1)\delta'$, we obtain the inequality
$s'<(\alpha-1)\delta'$; solving for $\alpha$ shows that $\alpha\geq q^2$; thus,
$q^2\leq \alpha\leq q^{m-1}$.  Let $t'=(\alpha-1)(q^{m+1}-1)+q^{(m-1)/2}-1$; it is
easy to check that $t'$ is in the range $(\alpha-1)\delta'\le t'\le \alpha \delta'$
when $\alpha\geq q^2$; thus, $t'\in Z$.  Further, let $t=s-(\alpha-q^2+1)$; since
$t\geq s-\delta'$, we have $t\in Z$ as well. Since $-qtq^{m-2}\equiv t'\bmod n$, we
can conclude that $t'\in Z\cap Z^{-q}\neq \emptyset$. Hence, by
Lemma~\ref{th:hermitianBch} we conclude that $C$ cannot contain its Hermitian dual if
its design distance exceeds $\delta_{\max}$
\end{proof}

\section{Families of Quantum BCH Codes}
In this section we shall study the construction of (nonbinary) quantum BCH codes.
Calderbank, Shor, Rains and Sloane outlined  the construction of binary quantum BCH
codes  in \cite{calderbank98}. Grassl, Beth and Pellizari developed the theory
further by  formulating a nice condition for determining which BCH codes can be used
for constructing quantum codes \cite{grassl97,grassl99b}. The dimension and the
purity of the quantum codes constructed were determined by numerical computations.
Steane simplified it further for the special case of binary narrow-sense primitive
BCH codes \cite{steane99} and gave a very simple criterion based on the design
distance alone. Very little was done with respect to the nonprimitive and nonbinary
quantum BCH codes.

In this section we show how the results we have developed in the previous sections
help us to generalize the previous work on quantum codes and give very simple
conditions based on design distance alone. Further, we give precisely the dimension
and tighten results on the purity of the quantum codes.  The reader can refer
to Chapters~\ref{ch:stabq1} and \ref{ch:stabq2} for constructions on
stabilizer codes.
\nix{
But, first we review the
methods of constructing quantum codes from classical codes.

\begin{lemma}[Quantum Code Constructions] \label{th:qconst} \ \par
\begin{compactenum}[a)]
\item If there exists classical linear codes $C_1\subseteq C_2 \subseteq \F_q^n$,
then there exists an $[[n,k_2-k_1,d]]_q$ quantum code where $d=\min \wt\{
(C_2\setminus C_1)\cup (C_1^\perp\setminus C_2^\perp)\}$.

\item If there exists a classical linear $[n,k,d]_q$ code $C$ such that
$C^\perp\subseteq C$, then there exists an $[[n,2k-n,\ge d]]_q$ stabilizer code that
is pure to $d$. If the minimum distance of $C^\perp$ exceeds $d$, then the
stabilizer code is pure and has minimum distance $d$.

\item If there exists a classical linear $[n,k,d]_{q^2}$ code $D$ such
that $D^\hdual\subseteq D$, then there exists an $[[n,2k-n,\ge d]]_{q}$ stabilizer
code that is pure to $d$. If the minimum distance $d^\hdual$ of $D^\hdual$ exceeds
$d$, then the stabilizer code is pure and has minimum distance $d$.
\end{compactenum}
\end{lemma}
\begin{proof}
See, for instance,~\cite{ketkar06} for the proofs. Part a) is commonly referred to
as the CSS construction \cite[Lemma~20]{ketkar06} and b) is a special case of a);
part c) is the Hermitian code construction \cite[Corollary~19]{ketkar06}.
\end{proof}
}

\begin{theorem}\label{sh:nested}
Let $m=\ord_n(q)\geq 2$, where $q$ is a power of a prime and $\delta_1,\delta_2$ are
integers such that $2\le \delta_1 < \delta_2\le \delta_{\max}$ where
$$\delta_{\max} = \frac{n}{q^m-1}(q^{\lceil m/2\rceil}-1-(q-2)[m \textup{ odd}]),$$
then there exists a quantum code with parameters
$$[[n,m(\delta_2-\delta_1-\lfloor(\delta_2-1)/q\rfloor+\lfloor(\delta_1-1)/q \rfloor),\geq \delta_1]]_q$$
pure to $\delta_2$.
\end{theorem}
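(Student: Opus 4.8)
The plan is to realize the asserted code as a CSS code built from the nested pair of narrow-sense BCH codes $C_1=\B(n,q;\delta_1)$ and $C_2=\B(n,q;\delta_2)$. Since $\delta_1<\delta_2$, the defining set of $C_1$ is contained in that of $C_2$, so $C_2\subseteq C_1$; equivalently $C_1^\perp\subseteq C_2^\perp$. First I would apply Lemma~\ref{th:css} to the pair $(C_1,C_2^\perp)$: its orthogonality hypothesis $(C_2^\perp)^\perp\le C_1$ reads $C_2\subseteq C_1$, which holds, and it produces a stabilizer code of length $n$ and dimension $\dim C_1+\dim C_2^\perp-n=k_1-k_2$, where $k_i=\dim C_i$. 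Thus the role of the condition $\delta_2\le\delta_{\max}$ is not the validity of the construction (mere nesting suffices for that) but rather the distance, purity, and dimension estimates below.

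Next I would pin down the dimension using Theorem~\ref{th:bchnpdimension}. The inequality $q^{\lceil m/2\rceil}-1-(q-2)[m\textup{ odd}]\le q^{\lceil m/2\rceil}$ shows $\delta_{\max}\le nq^{\lceil m/2\rceil}/(q^m-1)$, so $\delta_2\le\delta_{\max}\le\lfloor nq^{\lceil m/2\rceil}/(q^m-1)\rfloor$; moreover a short estimate shows that $\delta_{\max}\ge 2$ forces $q^{\lfloor m/2\rfloor}<n$ (checked separately for $m$ even and odd). Hence the hypotheses of Theorem~\ref{th:bchnpdimension} hold for both $\delta_1$ and $\delta_2$. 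Using $\lceil(\delta-1)(1-1/q)\rceil=(\delta-1)-\lfloor(\delta-1)/q\rfloor$, this gives $k_i=n-m[(\delta_i-1)-\lfloor(\delta_i-1)/q\rfloor]$, and subtracting yields $k_1-k_2=m(\delta_2-\delta_1-\lfloor(\delta_2-1)/q\rfloor+\lfloor(\delta_1-1)/q\rfloor)$, exactly the claimed dimension.

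For the minimum distance and purity I would exploit that $\delta_1,\delta_2\le\delta_{\max}$, so Lemma~\ref{th:Edualdist} bounds the Euclidean dual distances of both codes from below by $\delta_{\max}+1$. The CSS minimum distance from Lemma~\ref{th:css} is the least weight occurring in $(C_1\setminus C_2)\cup(C_2^\perp\setminus C_1^\perp)$: nonzero words of $C_1\setminus C_2$ have weight $\ge\delta_1$ by the BCH bound, while words of $C_2^\perp\setminus C_1^\perp$ are nonzero elements of $C_2^\perp$ and so have weight $\ge\delta_{\max}+1>\delta_1$, giving $d\ge\delta_1$. For purity I would read off the stabilizer, which corresponds to the symplectic code $C_1^\perp\times C_2$; its minimum symplectic weight is $\min\{\wt(C_1^\perp),\wt(C_2)\}\ge\min\{\delta_{\max}+1,\delta_2\}=\delta_2$, so the code contains no non-scalar stabilizer element of weight below $\delta_2$, i.e.\ it is pure to $\delta_2$.

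The main obstacle I anticipate is the bookkeeping around purity: the purity clause of Lemma~\ref{th:css} must be traced back to the underlying symplectic stabilizer $C_1^\perp\times C_2$, and one must recognize that the quantity controlling purity is the \emph{dual} distance of $C_1$ (bounded by Lemma~\ref{th:Edualdist}) together with the ordinary distance of $C_2$, whose minimum is precisely $\delta_2$ because $\delta_2\le\delta_{\max}<\delta_{\max}+1$. Verifying the range conditions of Theorem~\ref{th:bchnpdimension}—in particular that $\delta_{\max}\ge2$ already implies $n>q^{\lfloor m/2\rfloor}$ in both the even and odd cases—is the only genuinely computational step, and I expect it to be routine.
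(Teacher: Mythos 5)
Your proposal is correct and follows essentially the same route as the paper's proof: apply the CSS construction (Lemma~\ref{th:css}) to the nested pair $\B(n,q;\delta_2)\subset\B(n,q;\delta_1)$, compute the dimension from Theorem~\ref{th:bchnpdimension}, and obtain the distance and purity to $\delta_2$ from the BCH bound together with the dual-distance bound of Lemma~\ref{th:Edualdist}. The only difference is that you spell out details the paper leaves implicit (verifying the range hypotheses of Theorem~\ref{th:bchnpdimension} and tracing purity back to the stabilizer $C_1^\perp\times C_2$), which is a welcome but not substantively different elaboration.
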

\begin{proof}
By Theorem~\ref{th:bchnpdimension}, there exist BCH codes $\B(n,q;\delta_i)$ with
the parameters $[n,n-m(\delta_i-1)+m\lfloor(\delta_i-1)/q\rfloor,\geq \delta_i]_q$
for $i\in \{ 1,2\}$. Further, $\B(n,q;\delta_2)\subset \B(n,q;\delta_1)$. Hence by
the CSS construction there exists a quantum code with the parameters
 $$[[n,m(\delta_2-\delta_1
-\lfloor(\delta_2-1)/q\rfloor +\lfloor(\delta_1-1)/q\rfloor),\geq \delta_1]]_q.$$
The purity follows due to the fact that $\delta_2>\delta_1$ and
Lemma~\ref{th:Edualdist} by which the dual distance of either BCH code is $\geq
\delta_{\max}+1 >\delta_2$.
\end{proof}
When the BCH codes contain their duals, then we can derive the following codes. Note
that these cannot be obtained as a consequence of Theorem~\ref{sh:nested}.

\begin{theorem}\label{sh:euclid}
Let $m=\ord_n(q)$  where $q$ is a power of a prime and $2\le \delta\le
\delta_{\max},$ with
$$\delta_{\max}=\frac{n}{q^m-1}(q^{\lceil m/2\rceil}-1-(q-2)[m  \textup{ odd}]),$$
then there exists a quantum code with parameters
$$[[n,n-2m\lceil(\delta-1)(1-1/q)\rceil,\ge \delta]]_q$$ pure to $\delta_{\max}+1$
\end{theorem}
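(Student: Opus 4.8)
The plan is to realize the claimed quantum code through the Euclidean (dual-containing) construction applied to a single narrow-sense BCH code, assembling three ingredients already established: the dual-containment criterion of Theorem~\ref{th:sufficientEdual}, the exact dimension formula of Theorem~\ref{th:bchnpdimension}, and the dual-distance bound of Lemma~\ref{th:Edualdist} that will govern purity.

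First I would set $C=\B(n,q;\delta)$ and invoke Theorem~\ref{th:sufficientEdual}: since $2\le \delta\le \delta_{\max}$ with $\delta_{\max}$ the quantity $\frac{n}{q^m-1}(q^{\lceil m/2\rceil}-1-(q-2)[m\textup{ odd}])$, the code satisfies $C^\perp\subseteq C$. (As written, $\delta_{\max}$ here is the unfloored $\kappa$ of Theorem~\ref{th:sufficientEdual}; since $\delta$ is an integer, $\delta\le\delta_{\max}$ is equivalent to $\delta\le\lfloor\kappa\rfloor$, so the criterion applies verbatim.) To pin down the dimension I would then apply Theorem~\ref{th:bchnpdimension}, giving $k=n-m\lceil(\delta-1)(1-1/q)\rceil$. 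This step requires verifying the hypotheses $q^{\lfloor m/2\rfloor}<n\le q^m-1$ and $2\le\delta\le\lfloor nq^{\lceil m/2\rceil}/(q^m-1)\rfloor$. The distance-range bound holds because $q^{\lceil m/2\rceil}-1-(q-2)[m\textup{ odd}]\le q^{\lceil m/2\rceil}$, so $\delta_{\max}\le nq^{\lceil m/2\rceil}/(q^m-1)$; the bounds on $n$ follow from $m=\ord_n(q)$ (which forces $n\mid q^m-1$) together with the nonvacuousness requirement $\delta_{\max}\ge 2$, which is what pushes $n$ past $q^{\lfloor m/2\rfloor}$.

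With dual containment and the dimension established, I would feed $C$ into the Euclidean construction of Corollary~\ref{th:css2}, obtaining an $[[n,2k-n,\ge d]]_q$ stabilizer code. Substituting $k=n-m\lceil(\delta-1)(1-1/q)\rceil$ gives $2k-n=n-2m\lceil(\delta-1)(1-1/q)\rceil$, and since $C$ has minimum distance $d\ge\delta$ by the BCH bound, the distance parameter is $\ge\delta$, yielding precisely the stated parameters.

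The delicate point, and the one I expect to be the main obstacle, is the purity claim to $\delta_{\max}+1$, which is strictly stronger than the purity to $d\,(\ge\delta)$ advertised in Corollary~\ref{th:css2}. Here I would look past the black-box statement and back at the underlying construction in Lemma~\ref{th:css} and Theorem~\ref{th:stabilizer}: the stabilizer of the resulting code corresponds to $C^\perp\times C^\perp\le\F_q^{2n}$, and its minimum symplectic weight equals $\min_{0\ne(a|b)}\swt(a|b)$, which is exactly the dual distance $d^\perp=\wt(C^\perp)$ (attained by $(a|0)$ with $a$ a minimum-weight dual codeword, and bounded below since $\swt(a|b)\ge\max\{\wt(a),\wt(b)\}$). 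Hence the code is pure to $d^\perp$, and Lemma~\ref{th:Edualdist} supplies $d^\perp\ge\delta_{\max}+1$, upgrading the purity to $\delta_{\max}+1$ as claimed.
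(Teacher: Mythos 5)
Your proposal is correct and follows essentially the same route as the paper's proof: dual containment from Theorem~\ref{th:sufficientEdual}, the dimension from Theorem~\ref{th:bchnpdimension}, the quantum code from Corollary~\ref{th:css2}, and purity from Lemma~\ref{th:Edualdist}. The only difference is that you spell out the hypothesis checks for Theorem~\ref{th:bchnpdimension} and the argument that the CSS stabilizer's minimum symplectic weight equals $d^\perp$, both of which the paper leaves implicit.
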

\begin{proof}
Theorems~\ref {th:sufficientEdual} and \ref{th:bchnpdimension} imply that there
exists a classical BCH code with parameters $[n,n-m\lceil(\delta-1)(1-1/q)\rceil,\ge
\delta]_q$ which contains its dual code. 
By Corollary~\ref{th:css2}
an $[n,k,d]_q$
code that contains its dual code implies the existence of the quantum code with
parameters $[[n,2k-n,\ge d]]_q$. The purity follows from Lemma~\ref{th:Edualdist} by
which the dual distance $\geq \delta_{\max}+1 > \delta$.
\end{proof}

Before we can construct quantum codes via the Hermitian construction, we will need
the following lemma.
\begin{lemma}\label{th:Hdualdist}
Suppose that $C$ is a primitive, narrow-sense BCH code of length $n=q^{2m}-1$ over
$\F_{q^2}$ with designed distance 
$2\le \delta \le
\delta_{\max}=\lfloor n(q^{m}-1)/(q^{2m}-1)\rfloor$,
then the dual distance $d^\perp \geq
\delta_{\max} + 1$.
\end{lemma}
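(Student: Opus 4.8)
The plan is to mirror the Euclidean argument of Lemma~\ref{th:Edualdist}, replacing ordinary inversion $z\mapsto -z$ by the twisted inversion $z\mapsto -qz\bmod n$ that governs Hermitian duality. First I would fix notation: set $N=\{0,1,\dots,n-1\}$, let $Z_\delta$ denote the defining set of $C=\B(n,q^2;\delta)$, so that $\{1,\dots,\delta-1\}\subseteq Z_\delta\subseteq Z_{\delta_{\max}}$, where $\delta_{\max}=\floor{n(q^m-1)/(q^{2m}-1)}=q^m-1$ because $n=q^{2m}-1$. Recall from the proof of Lemma~\ref{th:hermitianBch} that the defining set of the Hermitian dual $C^{\perp_h}$ is $T_\delta=(N\setminus Z_\delta)^{-q}$, where I write $S^{-q}=\{-qz\bmod n\mid z\in S\}$. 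The goal is then to exhibit $\delta_{\max}$ consecutive integers inside $T_\delta$, whence the BCH bound yields $d^\perp\ge \delta_{\max}+1$.

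The first substantive step is to check that $\delta_{\max}=q^m-1$ lies in the range covered by Theorem~\ref{th:sufficientHdual}. Evaluating the bound there for $n=q^{2m}-1$ gives $q^m-1$ when $m$ is odd and $q^{m+1}-q^2+1$ when $m$ is even; since $q^m-1\le q^{m+1}-q^2+1$ for $m\ge 2$, the hypothesis $2\le\delta\le\delta_{\max}$ guarantees $Z_{\delta_{\max}}\cap Z_{\delta_{\max}}^{-q}=\emptyset$. Consequently $Z_{\delta_{\max}}^{-q}\subseteq N\setminus Z_{\delta_{\max}}\subseteq N\setminus Z_\delta$, the last inclusion using $Z_\delta\subseteq Z_{\delta_{\max}}$.

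Now I would apply the operation $(\cdot)^{-q}$ a second time to this inclusion, obtaining $\bigl(Z_{\delta_{\max}}^{-q}\bigr)^{-q}\subseteq (N\setminus Z_\delta)^{-q}=T_\delta$. The point is that the twice-iterated map sends $z$ to $(-q)(-q)z=q^2z\bmod n$, so $\bigl(Z_{\delta_{\max}}^{-q}\bigr)^{-q}=Z_{\delta_{\max}}^{q^2}$; and because the code lives over $\F_{q^2}$, the set $Z_{\delta_{\max}}$ is a union of $q^2$-ary cyclotomic cosets and is therefore fixed by multiplication by $q^2$. Hence $Z_{\delta_{\max}}=Z_{\delta_{\max}}^{q^2}\subseteq T_\delta$. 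Since $\{1,\dots,\delta_{\max}-1\}\subseteq Z_{\delta_{\max}}$, and since $0^{-q}=0$ together with $0\in N\setminus Z_\delta$ forces $0\in T_\delta$, the full block $\{0,1,\dots,\delta_{\max}-1\}$ lies in the defining set of $C^{\perp_h}$. These $\delta_{\max}$ consecutive roots give $d^\perp\ge\delta_{\max}+1$ by the BCH bound, as desired.

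I expect the main obstacle to be bookkeeping rather than conceptual. The two places demanding care are verifying that $\delta_{\max}=q^m-1$ genuinely falls inside the applicability window of Theorem~\ref{th:sufficientHdual} for both parities of $m$, and arguing cleanly that the doubly-applied twisted inversion fixes $Z_{\delta_{\max}}$. This last fact is precisely where the $\F_{q^2}$-structure enters — cyclotomic cosets are taken with respect to $q^2$, so $q^2$-multiplication is the identity on the defining set — and it is the step with no analogue in the Euclidean proof, where a single inversion already closes the argument.
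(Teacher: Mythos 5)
Your proposal is correct and is essentially the paper's own argument: the paper proves this lemma by saying it is analogous to Lemma~\ref{th:Edualdist} with the one extra observation that the defining set is invariant under multiplication by $q^2$ modulo $n$, which is precisely the step you isolate to show $\bigl(Z_{\delta_{\max}}^{-q}\bigr)^{-q}=Z_{\delta_{\max}}^{q^2}=Z_{\delta_{\max}}\subseteq T_\delta$. Your verification that $\delta_{\max}=q^m-1$ lies in the range of Theorem~\ref{th:sufficientHdual} for both parities of $m$ is a detail the paper leaves implicit, and it checks out.
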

\begin{proof}
The proof is analogous to the one of Lemma~\ref{th:Edualdist}; just keep in mind
that the defining set $Z_\delta$ is invariant under multiplication by $q^2$ modulo
$n$.
\end{proof}

\begin{theorem}\label{sh:hermite}
Let $m=\ord_n(q^2) \geq 2$ where $q$ is a power of a prime and  $2\le \delta \le
\delta_{\max}=\lfloor n(q^{m}-1)/(q^{2m}-1)\rfloor$, then there exists a quantum
code with parameters
$$ [[n, n-2m\lceil(\delta-1)(1-1/q^2)\rceil ,\ge \delta]]_q$$
that is pure up to $\delta_{\max}+1$.
\end{theorem}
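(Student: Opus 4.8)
The plan is to mirror the Euclidean construction of Theorem~\ref{sh:euclid}, replacing Euclidean duality by Hermitian duality throughout and working over $\F_{q^2}$. First I would set $C=\B(n,q^2;\delta)$, the narrow-sense BCH code of length $n$ over $\F_{q^2}$ with designed distance $\delta$, and verify that $C$ contains its Hermitian dual. This is where Theorem~\ref{th:sufficientHdual} enters: its hypothesis permits designed distances up to $\lfloor n(q^{m+[m\textup{ even}]}-1-(q^2-2)[m\textup{ even}])/(q^{2m}-1)\rfloor$, and a short estimate shows that the uniform bound $\delta_{\max}=\lfloor n(q^m-1)/(q^{2m}-1)\rfloor$ used here is no larger than that threshold for both parities of $m$: there is equality when $m$ is odd, and $q^m-1\le q^{m+1}-q^2+1$ when $m$ is even, since $q^m(q-1)\ge q^2-2$ for $m\ge 2$. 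Hence $C^{\perp_h}\subseteq C$.

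Next I would compute $\dim C$ by applying Theorem~\ref{th:bchnpdimension} with alphabet $\F_{q^2}$, that is, with $Q=q^2$ and $m=\ord_n(Q)$. The admissible range there is $2\le\delta\le\min\{\lfloor nQ^{\lceil m/2\rceil}/(Q^m-1)\rfloor,n\}$, and since $q^m-1\le q^{2\lceil m/2\rceil}$ we have $\delta_{\max}\le\lfloor n q^{2\lceil m/2\rceil}/(q^{2m}-1)\rfloor$, so every $\delta$ in our range is admissible and the theorem yields $\dim C=n-m\lceil(\delta-1)(1-1/q^2)\rceil$. Feeding $C$ into the Hermitian construction (Corollary~\ref{co:classical}) then produces an $[[n,2\dim C-n,\ge\delta]]_q$ stabilizer code, and substituting the dimension gives exactly the claimed parameters $[[n,n-2m\lceil(\delta-1)(1-1/q^2)\rceil,\ge\delta]]_q$.

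Finally, for the purity statement I would argue that the stabilizer of the resulting code corresponds to $C^{\perp_h}$, so its minimum nonscalar weight equals the Hermitian dual distance $d^{\perp_h}$ of $C$. Invoking Lemma~\ref{th:Hdualdist}, whose defining-set argument is the Hermitian analogue of Lemma~\ref{th:Edualdist}, gives $d^{\perp_h}\ge\delta_{\max}+1$, which is precisely the asserted purity, improving on the generic ``pure to $\delta$'' that Corollary~\ref{co:classical} provides on its own.

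The main obstacle will be the purity step rather than the parameter computation. Lemma~\ref{th:Hdualdist} is stated for primitive codes ($n=q^{2m}-1$), whereas the theorem allows an arbitrary length $n$ with $m=\ord_n(q^2)$, so I would need to re-run the dual-distance argument in the nonprimitive setting: show that the defining set $T_\delta=(N\setminus Z_\delta)^{-q}$ of $C^{\perp_h}$ contains a run of $\delta_{\max}$ consecutive residues together with $0$, using the relation $Z_\delta\cap Z_\delta^{-q}=\emptyset$ valid on $2\le\delta\le\delta_{\max}$ (from Theorem~\ref{th:sufficientHdual}) to locate $Z_{\delta_{\max}}^{-q}$ inside $N\setminus Z_\delta$. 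A secondary technical point is to record that the length condition $q^{2\lfloor m/2\rfloor}<n$ required by Theorem~\ref{th:bchnpdimension} is compatible with $\delta_{\max}\ge 2$; this holds whenever $\delta_{\max}$ is nontrivial but should be stated explicitly.
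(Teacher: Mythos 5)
Your proposal is correct and follows essentially the same route as the paper's proof: Theorem~\ref{th:sufficientHdual} for Hermitian dual-containment, Theorem~\ref{th:bchnpdimension} applied over $\F_{q^2}$ for the dimension, Corollary~\ref{co:classical} for the quantum code, and Lemma~\ref{th:Hdualdist} for the purity. The one place you go beyond the paper is worth keeping: the paper invokes Lemma~\ref{th:Hdualdist} as stated (for primitive length $n=q^{2m}-1$ only) without comment, whereas you correctly flag that the general-$n$ case requires the nonprimitive rerun of the defining-set argument you sketch; likewise your explicit verifications that $\delta_{\max}$ lies within the admissible ranges of Theorems~\ref{th:sufficientHdual} and~\ref{th:bchnpdimension} for both parities of $m$ are details the paper leaves implicit.
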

\begin{proof}
It follows from Theorems~\ref{th:bchnpdimension} and~\ref {th:sufficientHdual} that
there exists a primitive, narrow-sense $[n,n-1-m\lceil
(\delta-1)(1-1/q^2)\rceil,\ge\delta]_{q^2}$ BCH code that contains its Hermitian
dual code.  
By Corollary~\ref{co:classical} 
a classical $[n,k,d]_{q^2}$ code that
contains its Hermitian dual code implies the existence of an $[[n,2k-n,\ge d]]_q$
quantum code. By Lemma~\ref{th:Hdualdist} the quantum code is pure to
$\delta_{\max}+1$.
\end{proof}
In the above theorem, quantum codes can also be constructed when the design distance
exceeds the given value of $\delta_{\max}$, however we do not have exact knowledge
of the dimension in all those cases, hence we have not included them to keep the
theorem precise.

These are not the only possible families of quantum codes that can be derived from
BCH codes. As pointed out in \cite{grassl99b}, we can expand BCH codes over
$\F_{q^l}$ to get codes over $\F_q$. Once again the dimension and duality results of
BCH codes makes it very easy to specify such codes. We will just give one example in
the Euclidean case. Similar results can be derived for the Hermitian case.

\begin{theorem}\label{sh:euclidexpansion}
Let $m=\ord_n(q^l)$  where $q$ is a power of a prime and $2\le \delta\le
\delta_{\max},$ with
$$\delta_{\max}=\frac{n}{q^{lm}-1}(q^{l\lceil m/2\rceil}-1-(q^l-2)[m
\textup{ odd}]),$$ then there exists a quantum code with parameters
$$[[ln,ln-2lm\lceil(\delta-1)(1-1/q^l)\rceil,\ge \delta]]_q$$
that is pure up to $\delta$.
\end{theorem}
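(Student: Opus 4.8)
The plan is to build the quantum code first over the larger field $\F_{q^l}$ and then expand it down to $\F_q$ using the symbol-expansion technique of Lemma~\ref{th:codeexpansion}, exactly as suggested for the Euclidean case following \cite{grassl99b}. First I would apply Theorem~\ref{sh:euclid} verbatim, but with the base field $\F_{q^l}$ in place of $\F_q$. Setting $m=\ord_n(q^l)$, the quantity $\delta_{\max}$ appearing in Theorem~\ref{sh:euclid} becomes $\frac{n}{q^{lm}-1}(q^{l\lceil m/2\rceil}-1-(q^l-2)[m\textup{ odd}])$, which is precisely the bound in the statement. Hence for every $\delta$ in the range $2\le\delta\le\delta_{\max}$ there is a narrow-sense BCH code over $\F_{q^l}$ containing its Euclidean dual (Theorem~\ref{th:sufficientEdual}) whose dimension is $n-m\lceil(\delta-1)(1-1/q^l)\rceil$ (Theorem~\ref{th:bchnpdimension}); the Euclidean construction (Corollary~\ref{th:css2}) then yields a quantum code $[[n,\,n-2m\lceil(\delta-1)(1-1/q^l)\rceil,\,\ge\delta]]_{q^l}$ that is pure to $\delta_{\max}+1$, the purity coming from Lemma~\ref{th:Edualdist}.

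Next I would record this code in the alphabet-size notation as an $((n,K,\ge\delta))_{q^l}$ code with $K=(q^l)^{\,n-2m\lceil(\delta-1)(1-1/q^l)\rceil}$, and then invoke the forward direction of Lemma~\ref{th:codeexpansion} with its ambient prime power taken to be $q$ and its extension degree taken to be $l$. That lemma turns any $((n,K,\ge\delta))_{q^l}$ stabilizer code into an $((nl,K,\ge\delta))_q$ stabilizer code, multiplying the length by $l$ while keeping $K$ fixed. Substituting the value of $K$ gives $K=q^{\,ln-2lm\lceil(\delta-1)(1-1/q^l)\rceil}$, so the expanded code has parameters $[[ln,\,ln-2lm\lceil(\delta-1)(1-1/q^l)\rceil,\,\ge\delta]]_q$, matching the stated claim for length, dimension, and minimum distance.

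It remains to transfer the purity statement through the expansion, and this is the step I expect to require the most care, since Lemma~\ref{th:codeexpansion} is phrased only in terms of the minimum distance. The argument I would use rests on the fact that the isometry $\varphi_{\mbox{\tiny B}}$ underlying that lemma sends each nonzero coordinate of $\F_{q^l}^{2n}$ to a block of $l$ coordinates over $\F_q$ that is itself nonzero; consequently the symplectic weight of $\varphi_{\mbox{\tiny B}}(c)$ is at least the symplectic weight of $c$ for every codeword $c$. Applying this to the additive code $C$ that represents the stabilizer group of the $\F_{q^l}$-code, whose nonzero elements all have weight at least $\delta_{\max}+1$ by the purity established above, shows that every nonzero element of the expanded stabilizer $\varphi_{\mbox{\tiny B}}(C)$ has weight at least $\delta_{\max}+1\ge\delta$. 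Thus the expanded stabilizer contains no non-scalar matrix of weight less than $\delta$, which is exactly the assertion that the $((ln,K,\ge\delta))_q$ code is pure to $\delta$, completing the proof.
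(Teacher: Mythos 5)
Your proof is correct and follows essentially the same route as the paper: apply Theorem~\ref{sh:euclid} over the base field $\F_{q^l}$ to obtain an $[[n,\,n-2m\lceil(\delta-1)(1-1/q^l)\rceil,\,\ge\delta]]_{q^l}$ code, then expand via Lemma~\ref{th:codeexpansion} to length $ln$ over $\F_q$. Your added argument that the isometry $\varphi_{\mbox{\tiny B}}$ cannot decrease symplectic weight, so purity to $\delta_{\max}+1\ge\delta$ survives the expansion, correctly fills in a detail the paper's proof leaves implicit.
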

\begin{proof}
By Theorem~\ref{sh:euclid} there exists a quantum BCH code with parameters
$[[n,n-2m\lceil(\delta-1)(1-1/q^l)\rceil,\ge \delta]]_{q^l}$. An $[[n,k,d]]_{q^l}$
quantum code implies the existence of the quantum code with parameters $[[ln,lk,\ge
d]]_q$ by 
Lemma~\ref{th:codeexpansion}
and the code follows.
\end{proof}

\section{Conclusions}
In this chapter  we have identified the classes of BCH codes that contain their
Euclidean (Hermitian) duals by a careful analysis of the cyclotomic cosets. In the
process we have been able to shed more light on the structure of dual containing BCH
codes. We were able to derive a formula for the dimension of narrow-sense BCH codes
when the designed distance is small. These results allowed us to identify easily
which classical BCH codes can be used for construct quantum codes. Further, the
parameters of these quantum codes are easily specified in terms of the design
distance.

\nix{
\IEEEtriggeratref{15}

}

\def\cprime{$'$}

\vita{Pradeep Kiran Sarvepalli received his Ph.D. in Computer Science from Texas A\&M
University (2008). His dissertation was in quantum error correction under the supervision
of Dr. Andreas Klappenecker. His primary research interests are quantum computing,
coding theory, and circuit design. He received his M.S. in Electrical Engineering
from Texas A\&M University (2003) and his B.Tech in Electrical Engineering from 
Indian Institute of Technology, Madras (1997). Before he enrolled at Texas A\&M
University to pursue his graduate studies, he was an IC Design Engineer at 
Texas Instruments India, Bangalore, where among other things he designed 
analog and mixed signal circuits.  His permanent address is 
23/1235, Sodhan Nagar, Nellore, Andhra Pradesh, India 524 003.
He can be reached at {sarvepalli.pradeep@gmail.com}.

}

\end{document}